\documentclass{amsart}

\usepackage{latexsym}
\usepackage{linearb}
\usepackage[LGR, OT1]{fontenc}
\usepackage{amsmath,amsthm}
\usepackage{amssymb}
\usepackage{upgreek}
\usepackage[greek,english]{babel}
\usepackage{teubner}
\usepackage{MnSymbol}
\usepackage{graphicx}
\usepackage{marvosym}
\usepackage{eufrak}
\usepackage[margin=1.25in,dvips]{geometry}
\usepackage{color}
\usepackage{hyperref}
\usepackage{comment}
\usepackage{mathrsfs}

\hyphenation{space-time}
\hyphenation{Christo-doulou}
\hyphenation{Schwarz-schild}

\def\f12{\frac 1 2}

\def\de{\delta}

\def\La{\Lambda}

\def\f12{\frac 1 2}

\newcommand{\nabb}{\mbox{$\nabla \mkern-13mu /$\,}}

\newcommand{\slashg}{\mbox{$g \mkern-9mu /$\,}}

\newcommand\lessflat{{{\mbox{$\flat  \mkern-12mu {}^{\_}$}}}}

\makeindex

\newtheorem{definition}{Definition}[subsection]
\newtheorem{remark}{Remark}[subsection]
\newtheorem{bigremark}{Remark}[section]
\newtheorem{lemma}{Lemma}[subsection]
\newtheorem{theorem}{Theorem}[section]
\newtheorem*{theorem*}{Theorem}
\newtheorem*{corollary*}{Corollary}
\newtheorem{proposition}{Proposition}[subsection]
\newtheorem{bigprop}{Proposition}[section]

\newtheorem{corollary}{Corollary}[subsection]
\newtheorem{bigcorol}{Corollary}[section]

\title[Decay for solutions of the wave equation on Kerr \emph{III}]{Decay for solutions of the wave equation\\
on Kerr exterior spacetimes \emph{III}:\\ The full subextremal
case $|a| < M$}

\author{Mihalis Dafermos}
\address{Princeton University, Department of Mathematics, Fine Hall, Washington Road, Princeton, NJ 08544, United States, {\tt dafermos@math.princeton.edu}}

\address{University of Cambridge, Department of Pure Mathematics and Mathematical Statistics, Wilberforce Road, Cambridge CB3 0WA, United Kingdom, {\tt M.Dafermos@dpmms.cam.ac.uk} }

\author{Igor Rodnianski}
\address{Princeton University, Department of Mathematics, Fine Hall, Washington Road, Princeton, NJ 08544, United States, {\tt irod@math.princeton.edu}}

\address{Massachusetts Institute of Technology, Department of Mathematics, 77 Massachusetts Avenue, Cambridge, MA 02139, United States, {\tt irod@math.mit.edu}}

\author{Yakov Shlapentokh-Rothman}
\address{Massachusetts Institute of Technology, Department of Mathematics, 77 Massachusetts Avenue, Cambridge, MA 02139, United States, {\tt yakovsr@math.mit.edu}}

\date{December 3, 2014}

\begin{document}

\begin{abstract}
This paper concludes the series begun
 in [M.~Dafermos
and I.~Rodnianski
\emph{Decay for solutions of the wave equation on Kerr exterior spacetimes I-II: the cases $|a| \ll M$ or axisymmetry}, arXiv:1010.5132],
providing the complete proof of definitive boundedness and
decay results for the scalar wave equation on  Kerr backgrounds in the general subextremal $|a|<M$ case without symmetry assumptions. The essential ideas of the proof (together with explicit constructions of the most difficult multiplier currents)
have been announced in our survey
[M.~Dafermos and I.~Rodnianski \emph{The black hole stability problem for linear scalar perturbations}, in Proceedings of the 12th Marcel Grossmann Meeting on General Relativity,
T.~Damour et al (ed.), World Scientific, Singapore, 2011, pp.~132–189, arXiv:1010.5137].
Our proof appeals also to the quantitative mode-stability proven in [Y.~Shlapentokh-Rothman \emph{Quantitative Mode Stability for the Wave Equation on the Kerr Spacetime}, arXiv:1302.6902,
to appear, Ann.~Henri Poincar\'e],
together with a streamlined continuity argument in the parameter $a$,
      appearing here for the first time.
While serving as Part III of a series, this paper repeats all necessary notations so that
it can be read independently of previous work.
\end{abstract}

\maketitle

\tableofcontents
\section{Introduction}

The boundedness and decay properties of solutions to the scalar wave equation
\begin{equation}\label{WAVE}
\Box_{g_{a,M}}\psi = 0
\end{equation}
 on the exterior regions of
 Kerr black hole backgrounds $(\mathcal{M},g_{a,M})$
 have been the subject of considerable recent activity, in view of the intimate relation
 of this problem to the stability of these spacetimes themselves in the context of
 Einstein's theory of general relativity (cf.~\cite{ck}).
 Following definitive results~\cite{kw:lss, 2bachelots, dr1, BlueSof0, dr3, BlueSof, BlueSter, dr5}
 in the Schwarzschild case $a=0$, boundedness
 in the very slowly
 rotating Kerr case $|a|\ll M$  was first proven in our~\cite{dr6},
 and subsequently,  decay results
 have been established for $|a|\ll M$ in~\cite{jnotes}
 and in the first parts of this series~\cite{dr7}, and independently
by Tataru--Tohaneanu~\cite{tattoh} and Andersson--Blue~\cite{anblue}. See also~\cite{luk2}. Our~\cite{dr7} also obtained such decay results in
 the general subextremal case $|a|<M$, under the assumption that $\psi$ is itself
 axisymmetric.
 (Let us mention also the previous non-quantitative
 study~\cite{fksy, fksy2} of fixed azimuthal
 modes on Kerr.)
 The main significance of these restrictive assumptions is that the well-known
 difficulty of
 superradiance is controlled by a small parameter
 (the case $|a|\ll M$)
  or is in fact completely absent (the case of axisymmetric $\psi$).
 The present paper represents the culmination of this programme
by dropping these restrictions, extending the above boundedness and decay results to the \emph{general subextremal case $|a|<M$ without axisymmetry}:
\begin{theorem*}
1.~General
 solutions $\psi$ of $(\ref{WAVE})$ on the exterior of a Kerr black hole background
$(\mathcal{M}, g_{a,M})$
\underline{in the full subextremal range $|a|<M$},
 arising from bounded initial energy on a suitable
Cauchy surface $\Sigma_0$, have
bounded energy flux through a global foliation $\Sigma_\tau$ of the exterior,
bounded energy flux through the event horizon $\mathcal{H}^+$ and null infinity
$\mathcal{I}^+$, and satisfy
a suitable version of ``local integrated energy
decay''.

2.~Similar statements hold for higher order energies involving time-translation invariant derivatives.
This
implies immediately uniform \underline{pointwise} bounds on $\psi$ and \underline{all} translation-invariant derivatives
to arbitrary order, up to and including $\mathcal{H}^+$, in terms of a sufficiently high
order initial energy.
\end{theorem*}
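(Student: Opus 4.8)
The plan is to reduce the theorem, via a continuity argument in the rotation parameter $a$, to a family of frequency-localised estimates for the radial ordinary differential equation obtained by separating \eqref{WAVE}, and then to reassemble these estimates in physical space into integrated local energy decay, energy boundedness, and finally pointwise decay. First I would set up the separation: after a Fourier transform in the stationary time coordinate $t$, a decomposition into azimuthal modes $e^{im\phi^*}$, and a decomposition into oblate spheroidal harmonics $S_{m\ell}(a\omega,\cos\theta)$, equation \eqref{WAVE} reduces to a one-dimensional Schr\"odinger-type radial equation with potential depending on $(\omega,m)$ and on the Carter separation constant $\lambda_{m\ell}(a\omega)$. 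Since a general finite-energy $\psi$ need not be square-integrable in $t$, this step is applied to a time-cutoff of $\psi$, so that the radial equation carries an inhomogeneity supported where the cutoff varies; the associated error terms are to be absorbed at the end through a bootstrap on the energy boundedness constant.

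Next I would partition the frequency parameter space $(\omega,m,\lambda_{m\ell})$ into finitely many regimes --- bounded frequencies, low frequencies, and high frequencies, the last subdivided according to whether or not the radial potential traps, and within the bounded and high ranges according to whether the superradiant condition $\omega(\omega-m\omega_+)<0$ (with $\omega_+$ the horizon angular velocity) holds --- and in each regime construct virial/Morawetz-type multiplier currents for the radial equation producing the appropriate coercive spacetime estimate. The structural fact that makes this possible, already used in the slowly rotating case and announced in our survey, is that the phase-space regions of trapping and of superradiance are quantitatively disjoint for every $|a|<M$: the trapped frequencies satisfy $\omega(\omega-m\omega_+)>0$ with a quantitative margin. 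Hence trapping can be dealt with by a frequency-adapted Morawetz current degenerating at the (frequency-dependent) photon sphere, assisted near the horizon by the red-shift, while superradiance is dealt with elsewhere --- in the high-frequency superradiant regime one gains from the size of the frequency, and the bounded-frequency non-superradiant regime is handled by the energy identity for $T=\partial_t$, whose horizon flux is favourably signed off the superradiant range.

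The remaining case, bounded-frequency superradiant, is the one where I expect the main obstacle to lie, since there no purely local-in-$r$, local-in-frequency current can be sign-definite; this regime is closed using the quantitative mode stability theorem of Shlapentokh-Rothman, which rules out exponentially growing and real-frequency mode solutions and supplies quantitative lower bounds on the transmission coefficient of the radial equation, uniform over the relevant bounded frequency ranges. I would build this into a continuity argument in $a\in[0,M)$: the set of parameters for which the full frequency-localised estimate holds, with a fixed functional form and uniform constants, contains $a=0$ (Schwarzschild), and quantitative mode stability together with compactness of the bounded-frequency regime shows that this set is both open and closed, hence all of $[0,M)$. Summing the frequency-localised estimates by Plancherel then yields physical-space integrated local energy decay with the expected degeneration at the trapped set; the red-shift vector field estimate promotes this to control of the energy flux through $\mathcal{H}^+$ and to uniform energy boundedness on $\Sigma_\tau$ --- which closes the bootstrap used to justify the Fourier transform --- while the $r^p$-weighted hierarchy of Dafermos--Rodnianski near null infinity yields the flux through $\mathcal{I}^+$ and converts the estimates into energy and pointwise decay.

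For part 2, I would commute \eqref{WAVE} with the Killing fields $T=\partial_t$ and $\Phi=\partial_{\phi^*}$ and, near the horizon, with the red-shift commutation vector field, and iterate: the commuted quantities satisfy the wave equation up to lower-order terms controlled by the previous step, so the same scheme yields boundedness and decay for energies built from translation-invariant derivatives of arbitrary order. Sobolev embedding on the slices $\Sigma_\tau$, which extend up to and including $\mathcal{H}^+$, then converts these higher-order energy bounds into the claimed uniform pointwise bounds on $\psi$ and all of its translation-invariant derivatives.
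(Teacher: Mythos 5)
Your outline captures the overall architecture --- Carter separation, frequency-regime-adapted multipliers exploiting disjointness of trapping and superradiance, quantitative mode stability for bounded superradiant frequencies, commutation with $T$, $\Phi$, and the red-shift vector field for part~2 --- and this matches the paper. However, there are two places where the mechanism you propose is not the one the paper uses, and in both cases the alternative you sketch would fail to close in the general subextremal range.

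First, you direct the continuity argument in $a$ at showing that ``the full frequency-localised estimate holds, with a fixed functional form and uniform constants,'' and you propose to absorb the cutoff error terms by ``a bootstrap on the energy boundedness constant.'' This misidentifies what actually needs continuity. The frequency-localised multiplier estimate (Theorem~\ref{phaseSpaceILED}) holds directly and unconditionally for every $|a|\le a_0 < M$ and every admissible $(\omega,m,\Lambda)$; it is purely an o.d.e.\ statement and requires no continuity in $a$. The obstruction is rather that one is not entitled to take the Fourier transform in $t$ at all, since $\psi$ is not known a priori to be $L^2$ in time. In the slowly rotating case this was resolved by exploiting closeness to Schwarzschild as a small parameter; in the general case there is no such parameter, and a bare bootstrap on the boundedness constant has nothing to absorb the cutoff errors against. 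The paper's continuity argument instead targets \emph{future-integrability} itself: the set $\mathcal{A}_m$ is the set of $a$ for which each fixed-$m$ solution has finite energy flux through $\Sigma_\tau$, and openness is proved via an interpolating metric $\tilde g_\tau$ between $g_{a,M}$ and $g_{\mathring a,M}$, together with two structural facts special to fixed $m$: trapping lies outside the ergoregion (Lemma~\ref{aziTrap}), and a vector field $T+\alpha(r)\Phi$ gives boundedness modulo lower-order terms supported only in the ergoregion (Lemma~\ref{prepLemm}). The reduction to a single azimuthal mode is essential to this argument and is absent from your proposal.

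Second, your claim that ``the red-shift vector field estimate promotes this to \ldots\ uniform energy boundedness on $\Sigma_\tau$'' is the slowly rotating argument and does not carry over: for $|a|$ close to $M$ the ergoregion and the physical-space trapped set overlap, so no globally timelike $\varphi_\tau$-invariant vector field is Killing where the integrated decay degenerates, and the red-shift gives nothing outside a neighbourhood of the horizon. The paper instead uses the microlocal form of the integrated decay (Proposition~\ref{preciseILED}) to split $\psi$ in phase space into finitely many pieces $\tilde\psi_i$, each experiencing trapping only in a small $r$-window, and then applies to each $\tilde\psi_i$ an energy identity for an $i$-dependent timelike combination of $T$ and $\Phi$ which is Killing on that window (Lemma~\ref{timelikeVector}), summing at the end. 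Without this phase-space decomposition your boundedness step is simply not true for $|a|$ near $M$.
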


The precise statements of parts 1 and 2 of the above Theorem
will be given    in  Section~\ref{statementsections} as Theorems~\ref{theResult} and~\ref{h.o.s.}.
Note that these are exact analogues of Theorems 1.2 and 1.3 of~\cite{dr7},
dropping the assumption of axisymmetry (alternatively, Theorem 1.1 dropping
the assumption $a_0\ll M$).
The main results of the present paper
have been previously announced in our survey paper~\cite{stabi},
which provided both an outline of the proof
and many details of the crucial arguments, including all high frequency multiplier constructions.
To complete the outline,
one required a quantitative refinement of Whiting's classical
mode stability result~\cite{whiting} and a continuity argument in the
parameter $a$. The former refinement has  very recently been obtained~\cite{shlapRot} and will be indeed used in our proof. As for the latter,
the proof presented here introduces a streamlined continuity
argument which as an added benefit in fact \emph{reproves}
the  theorems of the first parts of the series~\cite{dr7} in the case $|a|\ll M$.
We will only rely on~\cite{dr7} for a detailed discussion
of the background, the setup and several of its elementary
propositions. All necessary notations and results from~\cite{dr7} are reviewed and quoted explicitly, however, so that the present paper can be read independently.
We will also repeat all constructions originally introduced in the survey~\cite{stabi}.

In view of our general ``black box'' decay result~\cite{icmp}, a corollary of the above Theorem is
\begin{corollary*}
Solutions $\psi$ of~(\ref{WAVE}) arising from sufficiently regular and \underline{localised}
initial data (i.e.~whose initial suitably higher-order \underline{weighted}
energy is finite) satisfy uniform \underline{polynomial decay}
of the energy flux through a hyperboloidal foliation $\widetilde{\Sigma}_\tau$ of the exterior region as well as uniform pointwise polynomial
decay bounds.
\end{corollary*}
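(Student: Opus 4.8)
The plan is to verify the three hypotheses of the general ``black box'' decay theorem of~\cite{icmp} and then to quote that theorem. Recall that~\cite{icmp} deduces polynomial decay of a nondegenerate energy flux through a hyperboloidal foliation from three ingredients: (i) uniform boundedness of the energy flux through the foliation, together with boundedness of the fluxes through $\mathcal{H}^+$ and $\mathcal{I}^+$; (ii) a local integrated energy decay (Morawetz-type) estimate, which is allowed to degenerate at the trapped set with a loss of derivatives there and to degenerate at $\mathcal{H}^+$; and (iii) the $r^p$-weighted hierarchy of estimates in a neighbourhood of null infinity. Ingredients (i) and (ii) are precisely the content of the main Theorem of the present paper, i.e.\ Theorems~\ref{theResult} and~\ref{h.o.s.}: its first part supplies (i) and the degenerate estimate (ii) for $\psi$ itself, while the higher-order part supplies the analogous statements for the commuted quantities $T^k\psi$, $\Phi^k\psi$ obtained from the Killing fields, as well as --- via the redshift commutation near $\mathcal{H}^+$ --- for transversal derivatives.

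It remains to establish ingredient (iii), and here I would follow the by-now-standard $r^p$ argument (cf.~\cite{dr7}). One works in the far region $\{r\ge R\}$, where in a suitable double-null-type gauge $g_{a,M}$ differs from the Schwarzschild metric only by terms of size $O(r^{-1})$ whose derivatives gain further powers of $r^{-1}$, and where the leaves $\widetilde{\Sigma}_\tau$ may be taken to be outgoing null. Multiplying $(\ref{WAVE})$ by $r^p\,\partial_v(r\psi)$ for $0\le p\le 2$ and integrating over the region bounded by two leaves $\widetilde{\Sigma}_{\tau_1}$ and $\widetilde{\Sigma}_{\tau_2}$, the Minkowski/Schwarzschild computation produces a good spacetime bulk term with weight $r^{p-1}$ and boundary terms with weight $r^p$; the Kerr corrections and all terms supported in $\{r\le R\}$ are absorbed --- the former by their extra decay in $r$, the latter by the integrated local energy decay estimate (ii). With all far-region integrals restricted to $\{r\ge R\}$ and $E[\psi]$ denoting the nondegenerate energy flux, one obtains
\begin{equation*}
\int_{\widetilde{\Sigma}_{\tau_2}} r^p\,(\partial_v(r\psi))^2 + \int_{\tau_1}^{\tau_2}\!\int_{\widetilde{\Sigma}_\tau} r^{p-1}\,(\partial_v(r\psi))^2 \ \lesssim\ \int_{\widetilde{\Sigma}_{\tau_1}} r^p\,(\partial_v(r\psi))^2 + E[\psi](\widetilde{\Sigma}_{\tau_1}),
\end{equation*}
which is exactly the hierarchy required in~\cite{icmp}. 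The hypothesis that the initial data be suitably localised/weighted is precisely what guarantees that the $p=2$ weighted energy, and its analogues after commutation, are finite on $\widetilde{\Sigma}_0$.

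Granting (i)--(iii), the theorem of~\cite{icmp} gives decay of the nondegenerate energy flux through $\widetilde{\Sigma}_\tau$ at the polynomial rate $\tau^{-2}$; applied to the commuted quantities (including the redshift-commuted transversal derivatives) it gives the corresponding decay of all higher-order fluxes, and a Sobolev embedding on the leaves $\widetilde{\Sigma}_\tau$ --- whose induced geometry is uniformly controlled and which extend to $\mathcal{H}^+$ and asymptote to $\mathcal{I}^+$ --- converts these into uniform pointwise polynomial decay, with the $\tau^{-1}$ rate (and faster decay in $r$) of~\cite{icmp}.

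The one point that genuinely requires checking, rather than a verbatim citation, is the compatibility of the degenerate Morawetz estimate produced by the main Theorem with the precise form of hypothesis (ii) demanded in~\cite{icmp}: one must confirm that the trapping degeneracy is of the admissible type --- a loss of one derivative, localised in the trapped region ($r\approx 3M$ when $a=0$, and its $r$-dependent analogue when $0<|a|<M$) --- and that its weight near infinity is no worse than the weight built into the $r^p$ argument. Both hold by the construction of the estimates in this paper (and in~\cite{stabi}), so the corollary follows; this reconciliation is the only, and rather modest, obstacle.
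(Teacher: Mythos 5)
Your proposal takes essentially the same route as the paper: the paper proves the Corollary by first noting (via Proposition 4.6.1 of \cite{dr7}) that Theorems~\ref{theResult} and~\ref{h.o.s.} extend from $\Sigma_\tau$ to admissible hyperboloidal foliations $\widetilde\Sigma_\tau$, and then simply invoking the ``black box'' decay mechanism of \cite{icmp} (pointing to \cite{schlue}, \cite{moschidis} for detailed verifications). Your three ingredients (i)--(iii) are precisely what that black box requires, and you correctly supply (i)--(ii) from the main Theorems.

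Two small remarks. First, in the paper's framing, the $r^p$-hierarchy (your ingredient (iii)) is not a separate hypothesis that must be re-derived for Kerr: it is established in \cite{icmp} for a general class of asymptotically flat backgrounds and then \emph{combined} with (i)--(ii), so your sketch of the far-region $r^p$ computation is more detail than the paper's proof actually supplies — it is correct, but the paper delegates it entirely to the cited references. Second, the rates you quote are weaker than those stated in Corollary~\ref{thecorol}: a single pass of the hierarchy gives $\tau^{-2}$ for the first energy and hence $\tau^{-1/2}$ (not $\tau^{-1}$) for $r|\psi - \psi_\infty|$; the stronger pointwise rate $\tau^{-3/2+\delta}$ in a compact $r$-region requires the higher-order/commuted energy to decay like $\tau^{-4+2\delta}$, i.e.\ an iterated application of the black box using the full hierarchy $0\le p\le 2$ on the $N$-commuted quantity. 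The paper records these sharper rates because they are what is needed for the quasilinear applications discussed in the introduction, so if you state a quantitative version of the Corollary you should keep track of that refinement.
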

As in the case of Minkowski space, there is a hierarchy of
polynomial decay bounds that can be  obtained, both for energy fluxes and pointwise,
depending on the quantification of the initial localisation assumed on initial data.
The precise statement we  shall give
(Corollary~\ref{thecorol} of Section~\ref{statementsections}) is motivated
by applications to quasilinear problems; further refinements fail to be robust from
this point of view.
We remark explicitly that the decay estimates of the above Corollary are
indeed sufficient for applications to quasilinear problems
with quadratic non-linearities. See~\cite{luk3, shiwu, shiwu2, Yang}.
We note also that the non-quantitative fixed-azimuthal mode
statements of~\cite{fksy, fksy2} are
of course implied \emph{a fortiori} by the above Corollary.
To obtain from our Theorem
alternative more refined corollaries for compactly supported smooth initial data,
see~\cite{tatar}.

As stated, the above Theorem and its Corollary concern the black hole exterior. Note
that boundedness and polynomial decay statements in the Kerr exterior propagate easily to any fixed-$r$ hypersurface in the interior following~\cite{luk}
using the red-shift effect and stationarity, for $r$ \emph{strictly greater} than
its value on the Cauchy horizon. This insight goes back to~\cite{cbh}. On the other hand, by a result
of~\cite{Sbierski}, uniform non-degenerate energy
boundedness cannot hold up to the Cauchy horizon in view of the blue-shift.
Uniform $L^\infty$ bounds on $\psi$ and its \emph{tangential} derivatives up to and including the Cauchy
horizon have been obtained in the thesis of Franzen~\cite{franzen}.

Besides the Kerr family, there is an additional  class of black hole spacetimes of interest in classical general
relativity: the sub-extremal Kerr--Newman metrics. These now form a $3$-parameter family
of metrics, with parameters $a$, $M$ and $Q$ (the latter representing \emph{charge}),
which, when coupled with a suitable
Maxwell field associated to $Q$,
satisfy the \emph{Einstein--Maxwell} equations. See~\cite{he:lssst}.
(Setting $Q=0$, the Maxwell field vanishes and the family reduces to the Kerr
case.) As shown in the thesis of Civin~\cite{civin}, all the miraculous properties of the Kerr family
that allow for the results proven in the present paper in fact  extend to the Kerr--Newman family
in the full sub-extremal
parameter range $\sqrt{a^2+Q^2}<M$, leading to  a precise analogue
of our Theorem and its Corollary in this more general class.

Though outside of the domain of astrophysically relevant black holes,
it is interesting to consider the problem of boundedness and decay
for scalar waves on the
analogues of the Kerr family when
a non-zero cosmological constant $\Lambda$ is added to the Einstein equations.
These spacetimes are known as Kerr-de Sitter ($\Lambda>0$) and
Kerr-anti de Sitter ($\Lambda<0$).
See~\cite{he:lssst}. It is in fact the
negative case $\Lambda<0$ which presents
more surprising new phenomena from the mathematical point of view and
 has been definitively treated in the work
of Holzegel--Smulevici~\cite{holz-smul, holz-smul2}.
See also~\cite{gannot}.
The $\Lambda>0$ case is from some points of view
easier than $\Lambda=0$, and results
in the Schwarzschild-de Sitter $(a=0)$ and very
slowly rotating Kerr-de Sitter $(|a| \ll M, |a|\ll \Lambda)$ case followed soon after
the analogous results for Schwarzschild and very slowly rotating Kerr had been obtained.
See~\cite{dr4, bh, sbvm, vasy, dyatlov1, dyatlov2, schlue2}.
Let us note however, that
Kerr-de Sitter is still not understood in its full subextremal range, in particular in view
of the absence of an analogue of the mode stability statements~\cite{whiting, shlapRot}. The
best results to date have been obtained by Dyatlov~\cite{dyatlov-last}.\footnote{Let
us remark briefly that besides these extensions to $\Lambda\ne 0$, there
are a host of other related problems one can also consider,
including higher dimensional black holes~\cite{schlue, lauletal},
other hyperbolic equations like  Klein--Gordon  (for which
it is now proven~\cite{shlapRot2} that
there are \emph{exponentially growing} solutions for all $|a|\ne 0$),
Maxwell~\cite{anblue2},
linearised gravity and the nonlinear Einstein vacuum equations themselves
(see~\cite{kostakis2, scattering}).
We refer the reader to the many additional references
in the first part of this series~\cite{dr7}, our survey~\cite{stabi} and our lecture notes~\cite{jnotes}.}

Returning to the classical astrophysical domain,
let us recall finally that the Kerr metrics $g_{a,M}$ represent black hole spacetimes in the full \emph{closed} parameter range
$|a|\le M$; the geometry of the extremal case $|a|=M$, however, exhibits several qualitative differences,
most conspicuously, the degeneration
of the celebrated \emph{red-shift effect} at the horizon.
In view of the recently discovered \emph{Aretakis instability}~\cite{aretakis, aretakis2, aretakisHor, luciet, murataetal},
the precise analogue of the above Theorem \emph{does not}
in fact hold without qualification in the case of extremality $|a|=M$. In particular, in
the extremal case, for generic
solutions, certain
higher order time-translation-invariant derivatives asymptotically blow up along the event horizon.
This generic blow up is completely unrelated to superradiance and holds even for solutions $\psi$ restricted to be \emph{axisymmetric}.
For such axisymmetric $\psi$,
restricted decay results of a definitive nature have been obtained by Aretakis~\cite{aretakisKerr}.
 The fundamental remaining problem for scalar waves
on black hole backgrounds of interest in classical general relativity  is thus the precise understanding of  the \underline{$|a|=M$ case} for general
 \underline{non}-axisymmetric solutions.

\subsection{Overview of the main difficulties}
We begin with an overview of the difficulties of the problem and the basic elements
of the proof. In this discussion, we will assume some
familiarity with the Schwarzschild and Kerr
families of metrics as well as basic aspects of the analysis of wave equations
on Lorentzian manifolds. See our
lecture notes~\cite{jnotes}.

\subsubsection{Review of the very slowly rotating case $|a|\ll M$}
\label{reviewsr}
We have discussed at length in the first parts of this series~\cite{dr7}
the various difficulties connected to showing energy boundedness and ``integrated local
energy  decay''
for solutions of the wave equation $(\ref{WAVE})$ in the very slowly rotating case. We review these briefly.

Already
in the Schwarzschild case $a=0$,  to show boundedness, one must face the difficulty that
at the \emph{event horizon} $\mathcal{H}^+$, the conserved $\partial_t$-energy (associated to stationarity of the metric) degenerates.
To show integrated local energy decay, one must moreover understand both ``low'' and ``high'' frequency
obstructions to dispersion,
in particular, the high-frequency obstructions connected to the presence
of \emph{trapped null geodesics}.

The horizon difficulties were overcome by our introduction of the red-shift vector field~\cite{dr3},
while the difficulties concerning both excluding low frequency obstructions and
quantifying the high frequency phenomena connected to trapped geodesics
were overcome in one go by appeal to the energy identity of Morawetz-type (cf.~\cite{mora2}) multipliers associated to a vectorfield $f(r)\partial_r$,
where $f$ is a carefully chosen function vanishing at the so-called photon sphere $r=3M$, the timelike cylinder
to which all trapped null geodesics asymptote.
In the Schwarzschild context, such Morawetz estimates
were pioneered by~\cite{labasoffer, BlueSof}. The boundedness
of the nonnegative space-time integral given by the energy identity associated to this
multiplier is precisely
the statement of ``integrated local energy decay''. See also~\cite{BlueSter, dr3}.
The degeneracy of any such estimate at trapping is necessary in view of
a general result of Sbierski~\cite{Sbierski} in the spirit of the classical~\cite{Ralston}.
On the other hand, the fact that such a degenerate estimate indeed holds (and the precise nature of the degeneracy) means that the trapping is ``good''; at the level of geodesic flow,
this corresponds to the fact that dynamics is hyperbolic near the trapped set.
This estimate also degenerates at the event horizon when
only the initial conserved $\partial_t$-energy is included on the right hand side.
On the other hand, as shown in~\cite{dr3}, again using the red-shift vector field,
this degeneracy is removed by replacing the latter with the initial
\emph{non-degenerate} energy.

Turning from Schwarzschild to the very slowly rotating Kerr case $|a|\ll M$, the above difficulties are combined with a new
one: \emph{superradiance}. Now, the vector field $\partial_t$ is spacelike in a region
outside the horizon known as the \emph{ergoregion},
hence its energy identity gives no obvious \emph{a priori} control
over the solution.
Moreover, it is clear that the high-frequency obstructions to decay
cannot be  captured from classical physical space
vector field multipliers~\cite{alinhac}.
This can be seen at the level of geodesic flow as the projection of the trapped space to physical
space is no longer a codimension-$1$ hypersurface.

The problem of superradiance was first overcome in our proof of boundedness~\cite{dr6}
mentioned previously, which concerned in fact
the more general setting of the wave equation on backgrounds that are small axisymmetric stationary perturbations of Schwarzschild, a  class including the very
slowly rotating Kerr case $|a|\ll M$.
In this class of spacetimes, one can  analyse solutions with respect to
frequencies $\omega\in \mathbb R$ and $m\in \mathbb Z$ corresponding to the stationary and axisymmetric
Killing fields, and decompose general solutions $\psi$ of $(\ref{WAVE})$ into their superradiant and non-superradiant
part. For the latter part, one can prove boundedness as in Schwarzschild.
For the superradiant part, it turns out that one can explicitly prove \emph{both} boundedness
and integrated local energy decay perturbing a Schwarzschild energy identity
yielding both positive boundary and space-time terms that do not degenerate.
The non-degeneracy of this estimate encodes the fact that the \emph{superradiant
part of $\psi$ is not trapped}.
We shall return to this insight later in our discussion of the general $|a|<M$ case.

Turning to the issue of proving decay for $|a|\ll M$,
the problem of capturing the good properties of trapping
was overcome
 using \emph{frequency-localised}
 generalisations of the Morawetz multipliers applied in Schwarzschild.
 There have been three independent approaches~\cite{jnotes, tattoh, anblue}, which all crucially
rely on the additional hidden symmetries of Kerr that are reflected in the existence
of a Killing tensor and  separability
properties of both geodesic flow and the wave equation itself.
In our approach~\cite{jnotes, dr7},
the frequency localisation uses directly
Carter's separation of the wave equation~\cite{cartersep2},
which introduces, in addition to $\omega$ and $m$ above,
a real frequency parameter $\lambda_{m\ell}(a\omega)$ parameterised
by an additional  parameter $\ell\in \mathbb N_0$ such that, localised to
each frequency triple $(\omega, m, \lambda_{m\ell}(a\omega))$, the wave
equation $(\ref{WAVE})$ reduces to the following second order o.d.e.
\begin{equation}
\label{oderedu}
u''+\omega^2 u =V(a\omega, m, \lambda_{m\ell})u
\end{equation}
where $V$ is  a potential and $'$ denotes differentiation in a rescaled $r^*(r)$ coordinate.
The frequencies $\lambda_{m\ell}$ are themselves eigenvalues of
an associated elliptic equation whose eigenfunctions (known as
oblate spheroidal harmonics) appear in the formula reconstructing $\psi$ from $u$.

Note that
in the Schwarzschild ($a=0$) case, the reduction to $(\ref{oderedu})$ corresponds to the classical separation by
spherical harmonics,  and we have explicitly $\lambda_{m\ell}={\ell(\ell+1)}$ (independent of $\omega$)
and
\[
V_{\rm Schw}(r)=(r-2M)\left(\frac{\lambda_{m\ell}}{r^3}+\frac{2M}{r^4}\right).
\]
For all $\lambda_{m\ell}$, the potential $V_{\rm Schw}$ has a unique simple maximum at
an $r$-value $r_{\rm max}(\lambda_{m\ell})$
such that
\[
r_{\rm max}(\lambda_{m\ell})\to 3M
\]
as $\ell\to \infty$. One sees thus the relation of this potential to the trapping phenomenon.
Indeed, the radial dependence of null geodesics with fixed energy and angular momentum is governed by a potential
which coincides with $V$ above in the high frequency rescaled limit.

In the Kerr case, the eigenvalues $\lambda_{m\ell}(a\omega)$ are no longer explicit expressions, and
the form of $V$ is considerably more complicated.
It was shown, however, in~\cite{jnotes}, that for $|a|\ll M$ and
for frequencies in the ``trapping'' regime
\begin{equation}
\label{trappingfr}
1\ll \omega^2 \sim \lambda_{m\ell},
\end{equation}
the potential $V$ retains
its ``good'' Schwarzschild properties. Specifically,
 the potential $V$ in $(\ref{oderedu})$
can be seen to again have a unique simple maximum in this frequency range, whose $r$-value
$r_{\rm max}$
depends on the frequency
parameters
\[
r_{\rm max}=r_{\rm max}(a\omega, m, \lambda_{m\ell}).
\]
This allows, separately for each $(\omega, m , \lambda_{m\ell}(a\omega))$,
the construction of an analogue of the current $f(r)\partial_r$ vanishing exactly at $r_{\rm max}$, yielding the desired positivity properties.
Unlike the Schwarzschild case, however, there does not exist a unique high frequency
limit of $r_{\rm max}$, consistent with the fact~\cite{alinhac} that these currents cannot be
replaced by a classical vector field defined in physical space. See however~\cite{anblue}.
At the level of
geodesic flow, this precisely reflects the fact
that trapped null geodesics exist for a full range of $r$-values in a neighbourhood of
$r=3M$.\footnote{A posteriori, the good structure of trapping in phase space
for $|a|\ll M$
can be understood more conceptually, using
the structural stability properties of normal hyperbolicity, provided the latter
condition is checked for Schwarzschild; see~\cite{WunschZworski}.
Note however that these stability properties depend on
strong regularity assumptions on the metric, whereas
our original boundedness theorem~\cite{dr6} only requires closeness to Schwarzschild in $C^1$.
Thus one expects the domain of validity of~\cite{dr6} to be strictly
bigger than the class of spacetimes where decay results of the type of our main Theorem hold.}

In the remaining frequency regimes, one can in fact  simply  carry over the previous physical-space
Schwarzschild  constructions (see our argument in~\cite{jnotes}) to the more general $|a|
\ll M$ case,
as, restricted to those ranges, the relevant Schwarzschild estimates \emph{do not} degenerate and thus their positivity properties are manifestly stable to small perturbation.
Alternatively, as in the first parts of this series~\cite{dr7},
one can construct  new currents taylored specifically to these frequency ranges.
The latter approach is more flexible (it has the additional benefit of providing~\cite{dr7}
an independent second proof of the Schwarzschild case) and will
 be more useful for the general subextremal case $|a|<M$.

 Let us note that our proof of integrated decay in the first part of this series~\cite{dr7} in fact gives
 a separate proof of
 the boundedness statement of~\cite{dr6}, when the latter is specialised to Kerr. Here, one
exploits the fact that when $|a|\ll M$,  superradiance is controlled by a
small parameter  and thus
boundary terms of the wrong sign can be absorbed by a small multiple of
the red-shift current added to the conserved $\partial_t$ energy.
One obtains thus boundedness and integrated local energy decay
at the same time, without separating the solution into its
superradiant and non-superradiant parts. We shall see,
however, that
for the general case $|a|<M$,
the original insight of~\cite{dr6} will again be fundamental.

In discussing our frequency analysis for both the problems of boundedness and integrated
local energy decay, we have suppressed an important point: To define frequencies $\omega\in
\mathbb R$,
we are applying the Fourier transform in time.
Since  solutions are not known \emph{a priori} to be sufficiently integrable in time, however,
one must first apply suitable cutoffs in the future, generating error terms which must then be absorbed. For this, some weak \emph{a priori} control of these terms
is essential--and here we have used in both~\cite{dr6, dr7} yet
 again\footnote{Let us note that given
  the boundedness result of~\cite{dr6}, then
 one need not appeal again to closeness to Schwarzschild in the
 argument for integrated local energy decay; see our original proof in the lecture notes~\cite{jnotes}. We  have used it again in~\cite{dr7} so as for~\cite{dr7} to retrieve independently
 our previous boundedness result.} the closeness to Schwarzschild.
The analogue of this procedure for the general $|a|<M$ case is our appeal
to the continuity argument of Section~\ref{continuityintro}. We  defer further discussion of this till then.

\subsubsection{Structure of trapping and its disjointness from superradiance}
\label{highfreqinsec}
In passing to the general subextremal case $|a|<M$,
the first fundamental issues that must be addressed
are the ``high frequency'' ones.

The most obvious question is whether
the structure of trapping retains its ``good'' properties which allow in principle for degenerate
integrated decay statements.
At the same time, since superradiance is no longer governed
by a small parameter, one has to understand what is it which finally quantifies its strength,
or, in the context of the proof, how does one guarantee the control of
boundary terms in space-time energy
identities.

Though geodesic flow and various measures of the strength of superradiance have been thoroughly investigated in physics~\cite{carter, chandrasekhar, Starobinsky},  the
properties that turn out to be essential for our argument do not
appear to have been considered explicitly in the classical  literature.
Indeed, it  is somewhat  of a miracle that the Kerr geometry turns out to be well behaved on both accounts,
\underline{for the entire subextremal range $|a|<M$}, specifically:
\begin{enumerate}
\item[(a)]
\emph{The structure of trapping is as in Schwarzschild.}
\item[(b)]
\emph{Superradiant frequencies are not trapped.}
\end{enumerate}

The embodiment of properties (a) and (b)  we shall need were proven already
in our survey paper~\cite{stabi}
by analysing the potential $V$ in $(\ref{oderedu})$.
Concerning (a),
the ``trapping'' frequency range $(\ref{trappingfr})$,
the potential $V$ was  shown to have a unique simple
maximum $V_{\rm max}$
at a (frequency dependent) $r$-value
$r_{\rm max}$, just as in the slowly rotating case $|a|\ll M$.
(This shows \emph{a fortiori} that the underlying  null geodesic flow near
trapping is
hyperbolic.)
Concerning (b), it was
 shown that for high \emph{superradiant}
frequencies,
\begin{equation}
\label{highsupers}
1\ll \lambda + \omega^2,\qquad \omega\left(\omega-\frac{am}{2M(M+\sqrt{M^2+a^2}}\right)<0
\end{equation}
the maximum of $V$ is always
``quantitatively'' above the energy level $\omega^2$, in the sense
\begin{equation}
\label{quantabove}
V_{\rm max} \ge (1+\epsilon) \omega^2.
\end{equation}
(This in turn shows \emph{a fortiori}  that future directed
null geodesics whose tangent vector
has nonnegative inner product with $\partial_t$ are never future trapped; they will always
cross the  event horizon $\mathcal{H}^+$.\footnote{In the borderline case when the time frequency $\omega$ vanishes, this is intimately related to the fact that there are no trapped null geodesics orthogonal to $\partial_t$. This latter observation turns out to be important in the study of black hole uniqueness (see~\cite{alexakis}).}
 \emph{Note that in contrast to the
$|a|\ll M$ case, the stronger
statement that $\partial_t$ is eventually timelike along  any future trapped null
geodesic is not true; i.e.~it is not true that all future trapped null geodesics must leave the ergoregion.})

As with the $|a|\ll M$ case, it is not statements about geodesic flow that we appeal to,
but rather we use directly the properties of the potential $V$ to construct
appropriate energy currents. More specifically, the above properties
 of the potential are used to construct frequency dependent
multiplier currents  yielding both positive bulk \emph{and positive boundary terms} for all
high frequency ranges.
In the superradiant case, property $(\ref{quantabove})$ can be exploited
to arrange such that the bulk term is in fact non-degenerate;
this can be thought of as the definitive embodiment of (b). Note
that these  high-frequency multiplier constructions all appeared
explicitly in our survey~\cite{stabi}.
We will repeat these constructions here with very minor modifications.
See the outline in Section~\ref{outlinesection} below.

It is interesting to note that property (b) above in fact degenerates in
the extremal limit $|a|\to M$ in the following sense: At the endpoint of the superradiant
frequency range $(\ref{highsupers})$, one loses the $\epsilon$ in $(\ref{quantabove})$.
This is an additional (and separate) phenomenon to the degeneration of the red-shift
and
could have interesting implications for the remaining problem of understanding
non-axisymmetric $\psi$
in the extremal case $|a|=M$. See~\cite{AndGlam} and
the discussion in~\cite{aretakisKerr}.

\subsubsection{Absence of bounded frequency superradiant obstructions}
\label{nolow}
The above still leaves us with the problem of understanding \emph{bounded} (i.e.~$|\omega|\lesssim1$) frequencies.

One must first distinguish the near stationary
case $|\omega|\ll1$. This frequency range is
very sensitive to global aspects of the geometry.
It turns out that here an explicit multiplier construction is possible
which adapts our construction of the first parts of this series~\cite{dr7}.
(Interestingly, the cases of $|a|\le \tilde{a}_0$ and $|a|\ge \tilde{a}_0$ are here
handled differently.) These multiplier constructions appear for the first
time in the present paper.

Turning now to the remaining bounded frequencies,
as explained in our survey~\cite{stabi}, whereas
in the  \emph{non-superradiant}
regime, one can explicitly construct multipliers with both nonnegative
bulk and boundary terms,
for bounded non-superradiant frequencies,
adapting the constructions of~\cite{dr7} from the $|a|\ll M$ case,
there does not appear to be a straightforward such construction
for the superradiant regime, when neither can superradiance
be treated as a small parameter, nor can
one exploit $(\ref{quantabove})$ together with either $\omega$, $m$ or $\lambda_{m\ell}$
as a large parameter.
One can indeed construct
currents with a non-negative \emph{bulk} term, but these generate a
boundary term of the wrong
sign which still must be controlled.

As announced already in~\cite{stabi}, to control the remaining term one
requires a quantitative
extension of Whiting's celebrated mode stability~\cite{whiting}, which in particular
excludes the presence not just of growing modes but also resonances on the real axis.
This was achieved in the recent~\cite{shlapRot}.
Appeal to~\cite{shlapRot} will indeed allow us to control the remaining boundary term.
Again, see the outline in Section~\ref{outlinesection}  below.

\subsubsection{Higher
order estimates}
\label{HOEhere}

To obtain higher-order integrated local energy decay in the slowly rotating case $|a|\ll M$, it was sufficient to commute $(\ref{WAVE})$
with $\partial_t$ (which is Killing) and also with the red-shift vector field (the latter an argument
first applied in~\cite{dr6}), exploiting the fact that the latter, though not Killing, generates
positive terms in appropriate energy estimates
modulo terms which can be controlled by the $\partial_t$-commutation.
To show this fact, one uses in turn that control of a second derivative of $\psi$ in  a timelike direction allows control of all second derivatives
of the solution via elliptic estimates (in view of equation $(\ref{WAVE})$).

For the general case $|a|<M$, one appeals to yet another fundamental fact about
Kerr geometry:
\begin{enumerate}
\item[(c)]
\emph{The span of the stationary $\partial_t$ and axisymmetric $\partial_\phi$
Killing fields is timelike outside the horizon for the full range $|a|<M$.}
\end{enumerate}
Thus,  commuting with $\partial_t$, $\chi\partial_\phi$ (where $\chi$ is a cutoff function with
compact support in $r$) and the red-shift
vector field, one can essentially apply the same argument as before.

\subsubsection{Continuity argument}
\label{continuityintro}
We now return to the issue that we have suppressed at the end of Section~\ref{reviewsr}, namely,
the question of how can one justify in the first place
a   frequency analysis based on real frequencies
$\omega$ defined via the Fourier transform in time.
In the case $|a|\ll M$,  closeness to Schwarzschild
gave a small parameter that could be exploited here.
For the general $|a|<M$ case, however, as explained already in our survey~\cite{stabi},
one must exploit a continuity argument in  $|a|$.

Note first that to
justify the Fourier assumption and thus prove integrated local energy decay,
one sees easily that it is sufficient to assume the
 non-quantitative assumption that the energy through
$\Sigma_\tau$ of the projection $\psi_m$ of $\psi$ to each
azimuthal frequency is finite.
This is the statement that we show by continuity:
For each azimuthal frequency number $m\in\mathbb Z$, we define the
subset
\[
\mathcal{A}_m\subset [0,M)=\{|a|: \psi {\rm\  satisfying\ } (\ref{WAVE})
{\rm\ with\ } g_{a,M}\implies {\rm energy\ of\ }\psi_m{\rm\ remains\ finite}\}
\]
We will show that $\mathcal{A}_m$ is a non-empty open and closed
subset of $[0,M)$, and thus, $\mathcal{A}_m=[0,M)$.

We turn to a brief account of the continuity argument.

The non-emptyness of $\mathcal{A}_m$ follows from
 the general boundedness
result for black hole spacetimes without ergoregions proven
in~\cite{jnotes}, specialised to the Schwarzschild case $a=0$.

For openness, one shows that  if
$\mathring{a}\in \mathcal{A}_m$, then  $|a-\mathring{a}|<\epsilon$ satisfies $a\in \mathcal{A}_m$
for sufficiently small $\epsilon$.
One exploits here $\epsilon$ as a small parameter.
The issues associated to openness already appeared in the small $|a|\ll M$ case; see~\cite{dr6}  and~\cite{dr7}.
 The fact that we have fixed the azimuthal mode $m$ makes
 the argument here technically easier to implement.
For this, the fundamental insight is that \underline{for $m$ fixed}
\begin{enumerate}
\item[1.]
trapping occurs \emph{outside the ergoregion}.
\item[2.]
using the energy identity for a vector field of the form $\partial_t+\alpha(r)\partial_\phi$,
one can obtain boundedness modulo lower order terms supported only
in the ergoregion.
\end{enumerate}
To exploit the above,
we first construct  from a fixed-$m$
solution $\psi_m$ to $(\ref{WAVE})$ on $g_{a,M}$ and for each
$\tau\ge 0$, a solution
$\Psi$ of the inhomogeneous wave equation $\Box_g\Psi_m=F_m$
on an interpolating metric $g$ which coincides with $g_{a,M}$ in the region between
$\Sigma_0$ and $\Sigma_{\tau-\delta_0}$ and coincides
with $g_{\mathring{a},M}$ in the region in the future of
$\Sigma_{\tau}$ and to which the integrability properties apply
(since $\mathring{a}\in\mathcal{A}_m$).
Applying our estimates and
using 1.~and 2., we may now absorb (for sufficiently small $\epsilon$)
the error terms arising from the inhomogeneity
to obtain an integrated decay statement for $\psi_m$.
We note that the fixed-$m$ currents used for 1.~and 2.~may find additional applications.

Closedness is easy given the estimates shown and the smooth dependence
of the Kerr family on the parameter $a$.

\subsubsection{Non-degenerate boundedness from integrated local energy decay}
\label{boundednessintro}

The frequency analysis on which our           proof of integrated
local energy decay is based does not directly ``see''
the energy flux on fixed time hypersurfaces $\Sigma_\tau$, only the
energy fluxes on the horizon $\mathcal{H}^+$ and future null infinity $\mathcal{I}^+$.
Thus, it remains to show boundedness of the energy (and higher-order energies)
through $\Sigma_\tau$.

In the slowly rotating case $|a|\ll M$, it is clear that
given integrated local energy decay,
boundedness of the energy flux through a spacelike foliation
easily follows \emph{a posteriori}\footnote{Of course, in our original proof~\cite{dr7},
we proved
those two statements together as we used the boundedness in our version of the continuity
argument. In the new continuity argument presented here, this is not necessary.} by   revisiting the physical space energy  identity of a globally
timelike vector field which coincides with $T$ where the latter is timelike, noting
that, if $|a|$ is sufficiently small $T$ is timelike near trapping.

The above argument again uses in an essential way the disjointness
of the ergoregion and the set--associated to trapping--on which integrated local energy
decay estimate degenerates. As we have remarked earlier, these
sets intersect when $|a|\sim M$--it is only in phase space where
superradiance can be understood as disjoint of trapping.

One approach to boundedness could be to try to exploit again property (b) from
Section~\ref{highfreqinsec}.
It is technically easier to simply exploit the
physical space fact (c) of Section~\ref{HOEhere}, namely that
 Killing fields $\partial_t$ and $\partial_\phi$ together span
a timelike subspace outside the horizon.
Specifically, in
a small neighbourhood of any $r$-value there exists a  combination
of $\partial_t$ and $\partial_\phi$ which is timelike and Killing. We use our
frequency analysis to partition a solution $\psi$
of the wave equation into finitely many pieces
$\tilde\psi_i$, each of which satisfies an analogue of
integrated local energy decay degenerating only in a small neighbourhood of some
$r_i$. Applying the energy estimates corresponding to a suitable $i$-dependent
combination of $\partial_t$ and $\partial_\phi$ to each
$\tilde\psi_i$, and summing, one obtains the desired non-degenerate
uniform boundedness of the energy flux through $\Sigma_\tau$.

\subsection{Outline of the paper}
\label{outlinesection}
We end this introduction with an outline of the structure of the paper.

In Section~\ref{notats}, we will review the set-up and various notations from the first parts of the series~\cite{dr7},
including the ambient manifold, the form of the Kerr family of metrics and useful vector
fields, hypersurfaces and formulas.
This will allow us to give precise formulations of the main
theorems in Section~\ref{statementsections}.
(The reader may wish to refer to this outline again when reading Section~\ref{theLogic},
which will describe the logical flow of the proofs of the various statements.)

Section~\ref{prelimn} contains various preliminaries, including
a review of the propositions
from~\cite{dr7} capturing the redshift effect, an estimate for large $r$, Hardy inequalities and finally, various statements concerning the span of the Killing fields $\partial_t$ and
$\partial_\phi$.

Our frequency localisation based on Carter's separation will be reviewed in
Section~\ref{cartersupersection}. The natural setting for this will be the
class of sufficiently integrable outgoing
functions $\Psi:\mathcal{R}\to \mathbb R$,
a useful notion which we shall define in Section~\ref{suffInt}.
The resulting coefficients $u$ and their corresponding
radial o.d.e.~(cf.~$(\ref{oderedu})$ above) are
obtained in Section~\ref{separationSubsection}
and the ``outgoing'' boundary conditions in Section~\ref{outgoingcondsec}.

The next three sections, Sections 6, 7, and 8, concern the study of
the o.d.e.~$(\ref{oderedu})$ and the proof of uniform estimates
in the frequency parameters $\omega$, $m$, and $\Lambda$.

In Section~\ref{Vpropsec}, we will give salient properties of the potential   $V$
of $(\ref{oderedu})$ which
embody (a) and (b) of Section~\ref{highfreqinsec}.
Versions of the
lemmas of Sections~\ref{Vtrsec} and~\ref{Vsrsec}
together with proofs have in fact already been given in our survey paper~\cite{stabi};
we repeat these here for completeness.
The lemma of
Section~\ref{Vnewsec}, reflecting the properties of trapping for fixed $m$,
is new and will be used in the context of the continuity argument
of Section~\ref{continuityargsec} discussed below.

In Section~\ref{sct}, we shall review
our notation for fixed frequency current templates, which, upon selection
of the free functions, will be used to obtain multiplier estimates
for solutions to $(\ref{oderedu})$.

Section~\ref{freqLocEst} is the heart of the paper. Here, with the help of
well-chosen
functions in the current templates of Section~\ref{sct},
we construct suitable currents
for all relevant frequency ranges
yielding positive bulk terms and thus an estimate for solutions
of the radial o.d.e.~$(\ref{oderedu})$ uniform in frequency parameters.
In the trapping regime, the currents
degenerate at $r_{\rm max}$.
All these currents have appeared previously in our survey paper~\cite{stabi}
with the exception of the near-stationary range of Section~\ref{nearstat}.
The boundary terms can also be made positive, with the exception
of a range of bounded frequencies,  which give rise to
an extra horizon boundary term on the right
hand side of the resulting estimate, which must still be absorbed.

In Section~\ref{summation}, we apply the results of
the previous section  to the coefficients $u$ arising from
the setting of Section~\ref{cartersupersection},
summing the resulting frequency localised estimates to obtain
control of a non-negative definite space-time integral. We note Section~\ref{whitinghere},
where the extra horizon term (arising from low superradiant frequencies)
is  bounded      by  appeal to Proposition~\ref{propShlapRot}, a result of~\cite{shlapRot}.
One obtains finally an
integrated local decay statement for ``future integrable'' solutions
of the  wave equation, and a similar statement for the inhomogeneous equation
in Section~\ref{ILEDinhomo}.

 Higher order decay estimates are then
provided in Section~\ref{higher}, using the structure described in Section~\ref{HOEhere}.

In Section~\ref{continuityargsec}, we implement our
 new  continuity argument discussed in Section~\ref{continuityintro} above, which
 will
 allow us to drop the \emph{a priori} assumption of future integrability,
 and extend our results to general
solutions of the Cauchy problem for $(\ref{WAVE})$.
The reduction to fixed azimuthal frequency is accomplished
in Section~\ref{reduxfixed}.
The most difficult part of the argument is openness, handled in Section~\ref{opennesssec}, while closedness
is considered in Section~\ref{closednesssec}.

Section~\ref{precise} will state
the more precise integrated local energy statement which has actually been obtained
in the proof.

Finally, in Section~\ref{boundSuff}, we prove the boundedness statements, following
our discussion in Section~\ref{boundednessintro}.
This will conclude the paper.

\subsection{Acknowledgements}
During the period where this research has been carried out,
MD was supported by the European Research Council and  the
Engineering and Physical Sciences Research Council.
IR acknowledges support through NSF grant DMS-1001500 and DMS-1065710.
YS acknowledges support through NSF grants DMS-0943787 and DMS-1065710.
The authors thank S.~Aretakis, D.~Civin and G.~Moschidis for discussions and comments
on the manuscript.

\section{Review of the setup}
\label{notats}
In this section, we review the setup and certain notations from the first parts of the
series~\cite{dr7}, so that the present paper can be read independently. The reader
wishing for a more leisurely exposition of this material should refer back to~\cite{dr7};
he or she familiar with~\cite{dr7} can skip to Section~\ref{statementsections}.

\subsection{Ambient manifold and coordinate systems}
The first task is to define an ambient manifold-with-boundary on which the
Kerr family in its subextremal range defines a smooth two-parameter family of
metrics.  The differential structure of
the smooth manifold is defined by what we shall call \emph{fixed coordinates},
while the Kerr metric itself will be defined with the help of auxilliary coordinates
depending on the parameters. We review this here:

\subsubsection{Fixed coordinates $(y^*,t^*,\theta^*,\phi^*)$}
We define first the manifold-with-boundary
\begin{equation}
\label{manwb}
\mathcal{R}=\mathbb R^+\times\mathbb R\times  \mathbb S^ 2.
\end{equation}
Fixed coordinates are just the standard $y^*\in\mathbb R^+$, $t^*\in\mathbb R$ and a choice of standard spherical coordinates $(\theta^*,\phi^*)\in\mathbb S^2$.
Associated to this ambient differentiable structure are the
\emph{event horizon} $\mathcal{H}^+\doteq\partial\mathcal{R}=\{y^*=0\}$,
 the vector fields $T=\partial_{t^*}$, $\Phi=\partial_{\phi^*}$ and
the one-parameter group of transormations $\varphi_\tau$ generated by $T$.

\subsubsection{Kerr-star coordinates $(r,t^*,\theta^*,\phi^*)$}
\label{kerrstardefsec}
We define a new coordinate system which depends on parameters $|a|<M$.

For each choice $|a|<M$, we
first set $r_\pm=M\pm \sqrt{M^2-a^2}$ and then define
a new coordinate $r$ which
is related smoothly to $y^*$, depends smoothly on the parameters and such that,
for fixed parameters, we have $r=r_+(a,M)$ on $\mathcal{H}^+$.\footnote{The precise relation
to fixed coordinates as defined in~\cite{dr7} is as follows:
Let
$\mathcal{P}=\{(x_1,x_2):0\le |x_1|<x_2\}$ denote the parameter space of all admissible
subextremal $(a,M)$.
We chose a smooth map $r:\mathcal{P}\times(0,\infty) \to (x_2+\sqrt{x_2^2-x_1^2}, \infty)$
such that $r|_{\{(x_1,x_2)\}\times (0,\infty)}$
is a diffeomorphism $(0,\infty)\to  (x_2+\sqrt{x_2^2-x_1^2}, \infty)$ which moreover
restricts to the identity map restricted
to $\{(x_1,x_2)\}\times(3x_2,\infty)$. Note that with this definition, then for $r\ge3M$, $r(y^*)$
is independent of $a$.}
Associated to these coordinates is the vector field $Z^*$, defined
to be the smooth extension of the
Kerr-star coordinate vector field $\partial_{r}$ to $\mathcal{R}$.

We will sometimes replace $r$ by a rescaled version, $r^*$,
defined only in the interior of $\mathcal{R}$, by
\begin{equation}
\label{r*def}
\frac{dr^*}{dr}=\frac{r^2+a^2}{\Delta}, \qquad r^*(3M)=0,
\end{equation}
where $\Delta= (r-r_+)(r-r_-)$.
Here we note that $\Delta$ vanishes to first order on $\mathcal{H}^+$,
and the coordinate range $r>r_+$ corresponds to the range $r^*>-\infty$.

\subsubsection{Boyer-Lindquist coordinates $(r,t, \theta,\phi)$}
We define a final coordinate system, again depending on a choice
of fixed parameters $|a|<M$, by further transforming Kerr star coordinates,
by defining
\[
t(t^*,r)= t^* -  \bar t(r), \qquad
\phi(\phi^*,r)= \phi^*- \bar \phi(r) \mod 2\pi, \qquad
\theta=\theta^*
\]
where $\bar{t}$ is a smooth function (see~\cite{dr7} for details)
chosen to satisfy
\begin{equation}
\label{tnearh}
\bar t(r)=r^*(r)-r-r^*(9M/4)+9M/4, \qquad {\rm for}\qquad r_+\le r\le 15M/8,
\end{equation}
\begin{equation}
\label{tnonearh}
\bar t(r)=0 \qquad {\rm for}\qquad r\ge 9M/4,
\end{equation}
\begin{equation}
\label{forspacel}
\frac{d(r^*-\bar{t})}{dr}>0, \qquad 2-\left(1-\frac{2Mr}\rho^2\right)\frac{d(r^*-\bar{t})}{dr}>0.
\end{equation}
Associated to these coordinates is the vector field $Z$ defined
to be (the  extension to  ${\rm int}(\mathcal{R})$ of) the
Boyer-Lindquist coordinate vector
field $\partial_{r}$.\index{vector fields! $Z$ (the Boyer-Lindquist $\partial_r$ coordinate
vector field)}\footnote{Recall that
this vector field is significant as
it will define the directional derivative that does not degenerate in the integrated
decay estimate due to trapping. Note that in Boyer--Lindquist coordinates
the fixed vector fields $T$ and $\Phi$ correspond to the coordinate vector fields
$\partial_t$ and $\partial_\phi$.}

\subsection{The Kerr metric and its properties}
Given these coordinate systems, we may now define the Kerr metric
as a smooth $2$-parameter family on $\mathcal{R}$.

\subsubsection{Explicit form of the metric}
For fixed parameters $|a|<M$, in addition to $\Delta$ above, let us  first set
$\rho^2=r^2+a^2\cos^2\theta$.
The Kerr metric is then defined with respect to Boyer-Lindquist coordinates by
\begin{align}
\label{eleme}
g_{a,M}=
-\frac{\Delta}{\rho^2}\left(dt-a\sin^2\theta d\phi\right)^2
+\frac{\rho^2}{\Delta}dr^2+\rho^2d\theta^2 +\frac{\sin^2\theta}{\rho^2}
\left(a\,dt-(r^2+a^2)d\phi\right)^2.
\end{align}
Though a priori this is only well defined on ${\rm int}(\mathcal{R})$, by
transforming the above into regular coordinates (see~\cite{dr7}), one sees that
the metric $(\ref{eleme})$ extends uniquely to the boundary so that for each
$|a|<M$,
indeed $(\mathcal{R},g_{a,M})$ defines a smooth Lorentzian manifold-with-boundary,
and
such that moreover the metric smoothly
depends on the parameters $a$, $M$.\footnote{The latter can be understood
in the sense that
\[
g:\mathcal{P}\times \mathcal{R} \to T^*\mathcal{R}\otimes T^*\mathcal{R}
\]
is a smooth map.}
These metrics are Ricci flat (i.e.~they satisfy Einstein's vacuum equations).

\subsubsection{Killing fields}
\label{Killingfieldsec}
We note that the fixed vector fields $T$ and $\Phi$ on $\mathcal{R}$
defined in Section~\ref{kerrstardefsec}
are Killing for $g_{a,M}$ for all parameter values $|a|<M$.

For each given $|a|<M$,
the span of $T$ and $\Phi$ yields a timelike subspace of $T_p\mathcal{R}$ for all $p\in
{\rm int}(\mathcal{R})$ (in particular,
$T$ is a timelike vector when $\Phi = 0$).
The event horizon $\mathcal{H}^+=\partial\mathcal{R}$ is
also a \emph{Killing horizon}: the Killing field given by the linear combination
\[
K=T+  \upomega_+\Phi,
\]
where $\upomega_+ \doteq \frac{a}{2Mr_+}$ is the ``anuglar velocity'' of the event horizon, is null and normal to $\mathcal{H}^+$; thus, $\mathcal{H}^+$ is in particular
a null hypersurface.
Note that along $\mathcal{H}^+$ we have
\begin{equation}
\label{eutuxws}
\nabla_K K=
\kappa \, K,\qquad
\kappa= \frac{r_+-r_-}{2(r_+^2+a^2)}>0.
\end{equation}
The quantity $\kappa$ is known as the \emph{surface gravity}.\index{surface gravity}
The positivity $(\ref{eutuxws})$ is what determines the red-shift property, essential
for our estimates (see Section~\ref{Nmult}).
We note that $\kappa$ in fact vanishes in the extremal case $|a|=M$;
this gives rise to the Aretakis instability~\cite{aretakisHor}.

We recall moreover  that the vector $K$ restricted to $\mathcal{H}^+$ coincides with the
smooth extension of the coordinate vector field $\partial_{r^*}$
of the
$(r^*,t,\theta, \phi)$ coordinate system.

\subsubsection{The photon sphere and trapping parameters}
\label{weritsdef}
It is well known that in the Schwarzschild case $a=0$, all future-trapped
null geodesics
asymptote to the timelike hypersurface $r=3M$.

In the statement of Theorem 1.1  of~\cite{dr7}, we defined $s_\pm(a_0,M)$ such that
for all $|a|\le a_0$, then
$r_+<3M-s_-(a_0,M)$ and
all future trapped null geodesics enter the region $3M-s_-(a_0,M)< r<  3M-s_+(a_0,M)$.
We have shown in Section 10.4 of~\cite{stabi} the  existence of such parameters again,
for the full subextremal range $|a|<M$. We will repeat this proof in Section~\ref{trappingParam}.
We note that in the extremal limit $a_0\to M$, $3M-s_-\to r_+(M,M)$.

Given the above parameters,
let $\eta_{[3M-s^-,3M+s^+]}(r)$ denote the indicator function,
and let us define, for each $a_0<M$, the function
\begin{equation}
\label{degenerationfunc}
\zeta(r)= (1-3M/r)^2(1- \eta_{[3M-s^-,3M+s^+]}(r)).
\end{equation}
This function will encode physical space degeneration of the ``integrated
local energy decay'' estimate of
Theorem~\ref{theResult}. The presence of the $(1-3M/r)^2$ factor  ensures
uniformity of the estimate as $a_0\to 0$ so as to retrieve our original Schwarzschild
result~\cite{dr3}.

Finally, since it is derivatives with respect to the vector field $Z$ which do not
degenerate at trapping, but
it is the vector field $Z^*$ which extends to the horizon,
it will be convenient to define
a hybrid vector field that has both good properties.
For this
let us define, for each $|a|<M$, a cutoff funtion $\chi(r)$ such that
$\chi=1$ for $r\ge 3M-s^{-}$ and $\chi=0$ for $r\le (r_++3M-s^-)/2$,
We  define then a new vector field $\tilde{Z}^*=\chi Z+(1-\chi)Z^*$.
This will be the vector field which appears in the statement of  Theorem~\ref{theResult}.

\subsubsection{The ergoregion}
The region $\mathcal{S} \subset \mathcal{R}$ where $T$ is spacelike is known as the \emph{ergoregion}; more explicitly, it is exactly the subset of $\mathcal{R}$ defined
by
\begin{equation}
\label{ergoregionS}
\mathcal{S}=\{ \Delta - a^2\sin^2\theta < 0\}.
\end{equation}
The boundary $\partial\mathcal{S}$ is called the \emph{ergosphere}.

\subsubsection{The $\Sigma_\tau$ hypersurfaces, and the regions
$\mathcal{R}_{(0,\tau)}$, $\mathcal{H}^+_{(0,\tau)}$}\label{hypersurfaceDef}
We have arranged the definition of Kerr-star coordinates in Section~\ref{kerrstardefsec}
so that the hypersurfaces $t^*=c$ are spacelike (see the conditions $(\ref{forspacel})$)
with respect to the metric $g_{a,M}$, for all values of parameters $|a|<M$.

In the region $r\le 15M/8$, we have in fact
\[
g(\nabla t^*, \nabla t^*)= -1-\frac{2Mr}{\rho^2}.
\]

We will define
\[
\Sigma_\tau=\{t^*=\tau\},
\]
\[
\mathcal{R}_{(0,\tau)}=\cup_{0\le \tau^*\le\tau} \Sigma_{\tau^*}
\]
and
\[
\mathcal{R}_0=\cup \mathcal{R}_{(0,\tau)}.
\]
Note that
$\Sigma_0$ is a past Cauchy hypersurface for the regions
$\mathcal{R}_{(0,\tau)}$, $\mathcal{R}_0$.
Let us also define
\[
\mathcal{H}^+_{(0,\tau)} = \mathcal{R}_{(0,\tau)}\cap\mathcal{H}^+,
\qquad
\mathcal{H}^+_0 =  \mathcal{R}_{0}\cap\mathcal{H}^+.
\]

\subsubsection{Angular derivatives and the volume form}\label{usefulcomps}
For future reference, let us introduce here the notation $\slashg$, $\nabb$
 to denote the induced metric and covariant
derivative from $g_{a,M}$~(\ref{eleme})
on the $\mathbb S^2$ factors of $\mathcal{R}$ in the
product $(\ref{manwb})$.

We record finally from~\cite{dr7} some useful properties of the volume form $dV$
of the metric $g_{a,M}$:
With respect to Boyer-Lindquist coordinates, we have
\[
dV= v(r,\theta) \, dt\, dr\, dV_{\slashg} \qquad \text{\rm with\ } v\sim 1
\]
whereas
using the alternative $r^*$ coordinate,
\[
dV =v(r^*,\theta) \,  dt\, dr^*\, dV_{\slashg} \qquad \text{\rm with\ } v\sim \Delta/r^2.
\]
With respect to Kerr-star coordinates, we have
\[
dV =v(r,\theta^*) \,  dt^*\, dr\, dV_{\slashg} \qquad \text{\rm with\ } v\sim 1.
\]

Let $\gamma$ denote the standard unit metric on the sphere
in $(\theta,\phi)$ coordinates.
We have that $\slashg \sim r^2\gamma$, and thus we may
replace $dV_{\slashg}$ in the above using
\[
dV_{\slashg} = v(r,\theta)\, r^2\sin\theta\, d\theta \,d\phi
 \qquad \text{\rm with\ } v\sim
1.
\]
Finally, we note that
\begin{equation}\label{easyCoArea}
dV \sim d\tau\, dV_{\Sigma_{\tau}}.
\end{equation}

For $a_0<M$ and
$|a|\le a_0$, note that
the implicit constants in the above are uniformly bounded, depending only on $a_0$ and $M$.

\subsection{Multiplier currents and the general energy identity}
\label{multsandcomts}
We shall repeat our standard notation for vector field multiplier current identities
associated to ``multiplier'' vector fields $V$
which will be applied to $\psi$  as well as to $\Xi\psi$ for various
commutation vector fields $\Xi$.
See~\cite{dr7} for more details and~\cite{book2} for a systematic discussion.
See~\cite{muchT} for an early application of non-trivial
energy currents to the problem of decay for the wave equation on Minkowski space.
\subsubsection{Currents}
Given a general Lorentzian manifold $(\mathcal{M},g)$,
let $\Psi$
be a sufficiently regular function.
We define
\[
{\bf T}_{\mu\nu}[\Psi]
\doteq \partial_\mu\Psi\partial_\nu\Psi -\frac12 g_{\mu\nu}g^{\alpha\beta}
\partial_\alpha\Psi \partial_\beta\Psi.
\]
Given a sufficiently regular vector field $V_\mu$ and  function $w$ on $\mathcal{M}$,
we will define the currents
\[
{\bf J}^V_\mu[\Psi] = {\bf T}_{\mu\nu}[\Psi] V^\nu, \qquad
{\bf J}^{V,w}_\mu[\Psi] =
{\bf J}_\mu^V[\Psi]+\frac18w\partial_\mu (\Psi^2)-\frac18(\partial_\mu w)\Psi^2,
\]
\[
{\bf K}^V[\Psi] ={\bf T}_{\mu\nu}[\Psi]\nabla^\mu V^\nu, \qquad
{\bf K}^{V,w}[\Psi] = {\bf K}
^V[\Psi] -\frac18\Box_gw (\Psi^2) +\frac14w \nabla^\alpha\Psi\nabla_\alpha\Psi,
\]
\[
\mathcal{E}^V[\Psi] = -(\Box_g\Psi)V^\nu \Psi_{,v}, \qquad
\mathcal{E}^{V,w}[\Psi]= \mathcal{E}^V(\Psi)-\frac14w\Psi \Box_g\Psi.
\]
\begin{remark}Note that even if one is only interested in the study of solutions $\psi$
to the homogeneous~(\ref{WAVE}),
inhomogeneous terms will arise
from applying cutoffs to $\psi$ and also from applying
commutation
vector fields (like vector field $Y$ from Section~\ref{Nmult} below)
which do \underline{not} commute with $\Box_g$.
\end{remark}

\subsubsection{The divergence identity}
The divergence identity between two homologous spacelike
hypersurfaces $S^-$, $S^+$, bounding a region $\mathcal{B}$,
with $S^+$ in the future of $S^-$, yields
\begin{equation}
\label{ingeneralform}
\int_{S^+}{\bf J}^V_\mu [\Psi]n^\mu_{S^+}+
\int_{\mathcal{B}} ({\bf K}^V[\Psi] + \mathcal{E}^V[\Psi])
=\int_{S^-}{\bf J}^V_\mu [\Psi]n^\mu_{S^-},
\end{equation}
where $n_{\Sigma_i}$ denotes the future directed timelike unit normal.
The induced volume forms are to be understood.
A similar identity holds for the ${\bf J}^{V,w}_\mu$ currents, etc.

We shall typically apply  $(\ref{ingeneralform})$ for the Kerr metric
$g_{a,M}$ in the case
where
$S^-=\Sigma_0$ and
$S^+=\Sigma_\tau \cup \mathcal{H}^+_{(0,\tau)}$
and $\Psi$ is compactly supported in $\mathcal{R}_{(0,\tau)}$
to obtain
\begin{equation}
\label{ingeneralform2}
\int_{\Sigma_\tau}{\bf J}^V_\mu [\Psi]n^\mu_{\Sigma_\tau}+
\int_{\mathcal{H}^+_{(0,\tau)}}{\bf J}^V_\mu [\Psi]n^\mu_{\mathcal{H}^+}+
\int_{\mathcal{R}_{(0,\tau)}} ({\bf K}^V[\Psi] + \mathcal{E}^V[\Psi])
=\int_{\Sigma_0}{\bf J}^V_\mu [\Psi]n^\mu_{\Sigma_0}.
\end{equation}
Let us note that the compactness of the support justifies the absence of an additional
boundary term even though $S^\pm$ are not homologous.
Since $\mathcal{H}^+$ is null, its induced normal form
is coupled to the choice of $n^\mu_{\mathcal{H}^+}$. In writing the above,
we shall assume such
a choice has been made such that the formula indeed holds.

\subsubsection{Superradiance in Kerr}\label{supInKerr}
As already mentioned in the introduction, the presence of the ergoregion $\mathcal{S}$
is one of the fundamental difficulties associated with the passage from Schwarzschild to a rotating Kerr spacetime. One particular consequence is that for $a\ne 0$,
the conserved $\mathbf{J}^T_{\mu}[\psi]$ energy flux for a solution to (\ref{WAVE}) may be negative on the horizon $\mathcal{H}^+$. Hence, applying~(\ref{ingeneralform2}), the energy on $\Sigma_{\tau}$ can be larger than the energy on $\Sigma_0$; this phenomenon is known as \emph{superradiance}.\footnote{In this context, it is in fact more appropriate
to refer to the energy flux to null infinity $\mathcal{I}^+$.}

An explicit computation in~(\ref{ingeneralform2}) shows the $\mathbf{J}^T_{\mu}[\psi]$ energy flux along $\mathcal{H}^+(0,\infty)$ is given by
\[\int_{\mathcal{H}^+(0,\infty)}\text{Re}\left(T\psi\overline{\left(T\psi + \upomega_+\Phi\psi\right)}\right),
\]
where $\upomega_+$ was defined in Section~\ref{Killingfieldsec}.
In particular, if one formally considers a (complex-valued) solution of the form
\[
\psi(t^*,r,\theta,\phi^*) = e^{-i\omega t^*}e^{im\phi^*}\psi_0\left(r,\theta\right),
\]
then the sign of the $\mathbf{J}^T_{\mu}[\psi]$ flux on the horizon is determined by the sign of
\[
\omega\left(\omega - \upomega_+m\right).
\]
Thus, we say that the parameters $\omega$ and $m$ are superradiant if
\begin{equation}\label{superradiantParam}
\omega\left(\omega - \upomega_+m\right) < 0.
\end{equation}
Observe that in the case $a \geq 0$, the condition~(\ref{superradiantParam}) is equivalent to
\begin{equation}\label{superradiantParamPosa}
m\omega \in \left(0,\frac{am^2}{2Mr_+}\right]
\end{equation}
We will return to a discussion of the significance of this frequency range
in Section~\ref{Vsrsec}.

\section{The main theorems}
\label{statementsections}
With the  notations of Section~\ref{notats} we may now give precise
statements of the results.

\subsection{Boundedness and integrated local energy decay}
\label{boundedandiled}

Recall the notations  of Section~\ref{notats}, in particular the hypersurfaces
$\Sigma_\tau$, the region $\mathcal{R}_0$, the
vector fields $T$, $\tilde{Z}^*$ and the degeneration function $\zeta$
defined in $(\ref{degenerationfunc})$.
Let $n^\mu_{\Sigma_\tau}$, $n^\mu_{\mathcal{H}^+}$ denote the corresponding normals. The vector field  $N$ below can be taken\footnote{We can  alternatively take
$N$ to be the vector field of Proposition~\ref{specialises..};
this is the vector field we shall use in the proof.
For the statement of Theorem~\ref{theResult}, the
only important feature of $N$ is that it is
$\phi_\tau$-invariant, strictly timelike and asymptotic to $T$ for large $r$.
Whereas we could have used everywhere $n_{\Sigma_\tau}$ in the statement,
we prefer to keep the distinct roles of $n_{\Sigma_\tau}$ and $N$
as this will be important when we replace $\Sigma_\tau$ with
hyperboloidal hypersurfaces $\widetilde{\Sigma}_\tau$ in Section~\ref{corollariessec}.}
 to be $n_{\Sigma_\tau}$, thought of now as a smooth vectorfield on $\mathcal{R}$.

The main theorem of the present paper is
\begin{theorem}
\label{theResult}
Fix $M>0$, $0\le a_0<M$ and $\delta>0$.
There exists a constant $C=C(a_0,M,\delta)$ such that
for all $|a|\le a_0$,
 and all
sufficiently regular solutions $\psi$ of the wave equation $\Box_{g_{a,M}}\psi = 0$ on $\mathcal{R}_0$, the following estimates hold:
\begin{equation}
\label{protasn1b}
\int_{\mathcal{R}_0}\Big(r^{-1}\zeta |\nabb\psi|^2+r^{-1-\delta}\zeta (T\psi)^2+r^{-1-\delta}(\tilde Z^*\psi)^2+ r^{-3-\delta} (\psi-\psi_\infty)^2\Big) \le C
\int_{\Sigma_0} {\bf J}_\mu^N[\psi]n^\mu_{\Sigma_0},
\end{equation}
\begin{equation}
\label{fluxho...}
\int_{\mathcal{H}^+_0}\left( {\bf J}^N_\mu[\psi]n^\mu_{\mathcal{H}^+}
+(\psi-\psi_{\infty})^2\right)
\le C \int_{\Sigma_0} {\bf J}_\mu^N[\psi]n^\mu_{\Sigma_0},
\end{equation}
\begin{equation}\label{fluxNullInf}
\int_{\mathcal{I}^+}{\bf J}^T_\mu[\psi]n^\mu_{\mathcal{I}^+}
\le C \int_{\Sigma_0} {\bf J}_\mu^N[\psi]n^\mu_{\Sigma_0},
\end{equation}
\begin{equation}
\label{bndts1b}
 \int_{\Sigma_\tau} {\bf J}_\mu^N[\psi]
n^\mu_{\Sigma_\tau}\le C \int_{\Sigma_0} {\bf J}_\mu^N[\psi]n^\mu_{\Sigma_0}, \qquad
\forall\tau\ge 0,
\end{equation}
where $4\pi\psi_\infty^2= \lim_{r'\to\infty}\int_{\Sigma_0\cap\{r=r'\}} r^{-2}\psi^2$.
\end{theorem}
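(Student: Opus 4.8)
The plan is to prove Theorem~\ref{theResult} by combining frequency-localised energy estimates for the Carter-separated radial ODE with a continuity argument in the parameter $a$, following the logical skeleton sketched in Section~\ref{outlinesection}. The first reduction is to the \emph{a priori} conditional statement: for solutions $\psi$ that are known to be ``future integrable'' (in the sense that the energy of each azimuthal projection $\psi_m$ through $\Sigma_\tau$ stays finite), one applies suitable future cutoffs and the Fourier transform in $t$ to pass to the frequency picture of Section~\ref{cartersupersection}, obtaining for each frequency triple $(\omega,m,\Lambda)$ the ODE $(\ref{oderedu})$ with ``outgoing'' boundary conditions. Then one inserts into the current templates of Section~\ref{sct} the carefully chosen multiplier functions constructed in Section~\ref{freqLocEst}, using the potential properties (a), (b) of Section~\ref{highfreqinsec} (i.e.\ the lemmas of Sections~\ref{Vtrsec}--\ref{Vsrsec}) for the high-frequency ranges, the near-stationary construction of Section~\ref{nearstat} for $|\omega|\ll1$, and the bounded-frequency constructions elsewhere. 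Summing the frequency-localised identities over all $(\omega,m,\Lambda)$ and undoing the separation (Section~\ref{summation}) yields a nonnegative spacetime integral controlling the left-hand sides of $(\ref{protasn1b})$ with degeneration encoded by $\zeta$, together with one leftover horizon term from the low superradiant frequencies; this term is absorbed by appeal to the quantitative mode stability Proposition~\ref{propShlapRot} of~\cite{shlapRot} (Section~\ref{whitinghere}). This establishes $(\ref{protasn1b})$, $(\ref{fluxho...})$, $(\ref{fluxNullInf})$ for future-integrable solutions; the inhomogeneous analogue is recorded in Section~\ref{ILEDinhomo} for later use.

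Next I would remove the future-integrability assumption via the continuity argument of Section~\ref{continuityargsec}. Reducing to fixed $m$ (Section~\ref{reduxfixed}), one defines $\mathcal{A}_m\subset[0,M)$ as in Section~\ref{continuityintro}. Non-emptiness ($0\in\mathcal{A}_m$) follows from the Schwarzschild boundedness result of~\cite{jnotes}. Closedness (Section~\ref{closednesssec}) follows from the estimates just obtained together with smooth dependence of $g_{a,M}$ on $a$. For openness (Section~\ref{opennesssec}), given $\mathring a\in\mathcal{A}_m$ and $|a-\mathring a|<\epsilon$, one builds from $\psi_m$ on $g_{a,M}$ an interpolating inhomogeneous problem $\Box_g\Psi_m=F_m$ on a metric agreeing with $g_{a,M}$ in the past of $\Sigma_{\tau-\delta_0}$ and with $g_{\mathring a,M}$ in the future of $\Sigma_\tau$; applying the inhomogeneous estimate of Section~\ref{ILEDinhomo} on $g_{\mathring a,M}$ and the fixed-$m$ currents (using that for fixed $m$ trapping occurs \emph{outside} the ergoregion, so the error terms from $F_m$ supported in the ergoregion are controllable, and that a vector field $\partial_{t^*}+\alpha(r)\partial_{\phi^*}$ gives boundedness modulo ergoregion terms --- here the new Lemma of Section~\ref{Vnewsec} enters), one absorbs the inhomogeneity for $\epsilon$ small and concludes $a\in\mathcal{A}_m$. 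Hence $\mathcal{A}_m=[0,M)$ for every $m$, so $(\ref{protasn1b})$--$(\ref{fluxNullInf})$ hold for general solutions of the Cauchy problem with finite initial $N$-energy.

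Finally, $(\ref{bndts1b})$, the non-degenerate boundedness of the energy flux through $\Sigma_\tau$, is deduced \emph{a posteriori} from the integrated local energy decay, as in Section~\ref{boundSuff} and the discussion of Section~\ref{boundednessintro}: using the frequency decomposition to split $\psi$ into finitely many pieces $\tilde\psi_i$, each satisfying integrated decay degenerating only near some $r_i$, and applying the energy identity $(\ref{ingeneralform2})$ for an $i$-dependent combination of the Killing fields $T$ and $\Phi$ which is timelike near $r_i$ --- possible by fact (c) of Section~\ref{HOEhere}, the timelikeness of $\mathrm{span}(T,\Phi)$ outside the horizon throughout $|a|<M$ --- together with the red-shift current of Section~\ref{prelimn} near $\mathcal{H}^+$, one controls $\int_{\Sigma_\tau}{\bf J}^N_\mu[\psi]n^\mu_{\Sigma_\tau}$ in terms of the right-hand side after summing over $i$. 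Uniformity of all constants in $|a|\le a_0$ is tracked throughout, and the limit $\psi_\infty$ and the zeroth-order weighted terms are handled by the Hardy inequalities of Section~\ref{prelimn} and the large-$r$ estimate quoted there.

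\emph{The main obstacle} I anticipate is the openness step of the continuity argument: one must propagate the conditional estimate across a neighbourhood of $\mathring a$ while the \emph{a priori} integrability is only available for $g_{\mathring a,M}$, so the whole difficulty is in designing the interpolating spacetime and the fixed-$m$ currents so that the inhomogeneous error terms --- which are supported, crucially, \emph{outside} the trapped region precisely because $m$ is fixed --- can be absorbed using only a small power of $\epsilon$ without circularity. Closely related and also delicate is the control of the leftover low-superradiant horizon term, where the \emph{quantitative} (rather than merely qualitative) nature of the mode stability bound of~\cite{shlapRot} is exactly what is needed.
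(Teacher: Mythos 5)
Your proposal reproduces the paper's own strategy essentially step-for-step: the conditional argument for future-integrable solutions via Carter's separation and the frequency-localised multiplier estimates of Section~\ref{freqLocEst}, the absorption of the bounded superradiant horizon term via the quantitative mode stability of~\cite{shlapRot}, the continuity argument in $a$ (non-emptiness from Schwarzschild, openness via the interpolating metric and fixed-$m$ currents exploiting Lemma~\ref{aziTrap}, closedness from smooth dependence), and the a posteriori boundedness via the frequency decomposition into pieces $\tilde\psi_i$ and $i$-dependent timelike combinations of $T$ and $\Phi$. This is the paper's proof in outline form, with the logic matching Section~\ref{theLogic} throughout.
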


Estimate $(\ref{protasn1b})$ is an {\bf integrated local energy decay} statement
degenerating at trapping. The full statement obtained in the proof is  more precise but
cannot be expressed in physical space; see Proposition~\ref{preciseILED} of Section~\ref{precise}.

Estimate $(\ref{fluxho...})$ is the {\bf boundedness of the energy
flux through the event horizon} $\mathcal{H}^+_0$ (as measured by a local observer),
while estimate $(\ref{fluxNullInf})$
is the {\bf boundedness of the energy flux to null infinity} $\mathcal{I}^+$. (The latter
will be explained in Section~\ref{toayplus}.)
These two estimates are obtained concurrently
with $(\ref{protasn1b})$.

Estimate $(\ref{bndts1b})$ is the statement of {\bf uniform energy boundedness
through the foliation} $\Sigma_\tau$.
We note that the proof
of this statement, which is obtained a posteriori, requires the more precise version
 of $(\ref{protasn1b})$ given in Proposition~\ref{preciseILED}.
 Note that
\begin{equation}
\label{sobol1}
\int_{\Sigma_\tau} {\bf J}_\mu^N[\psi]n^\mu_{\Sigma_\tau}\sim
\| \psi\|^2_{\mathring{H}^1(\Sigma_\tau)} +\|n_{\Sigma_\tau}\psi\|^2_{L^2(\Sigma_\tau)}
\sim \int_{\theta,\phi^*} \int_{r_+}^\infty
( |\partial_{t^*}\psi|^2 + |\partial_{r}\psi|^2 +|\nabb\psi|^2_{\slashg})\, dr\, dV_{\slashg}
\end{equation}
with respect to coordinates $(t^*, r, \theta, \phi^*)$,
where here $f\left(\psi\right) \sim g\left(\psi\right)$ means there exist constants
$c$ and $C$ not depending on $\psi$ such that $cg\left(\psi\right) \leq f\left(\psi\right) \leq Cg\left(\psi\right)$.
Thus, $(\ref{bndts1b})$ gives uniform geometric $\mathring{H}^1$ bounds
on the solution.

The reader familiar with Penrose-diagrammatic notation may find the following
useful
\[
\begin{picture}(0,0)%
\includegraphics{partiii.pstex}%
\end{picture}%
\setlength{\unitlength}{3158sp}%
\begingroup\makeatletter\ifx\SetFigFont\undefined%
\gdef\SetFigFont#1#2#3#4#5{%
  \reset@font\fontsize{#1}{#2pt}%
  \fontfamily{#3}\fontseries{#4}\fontshape{#5}%
  \selectfont}%
\fi\endgroup%
\begin{picture}(2224,2222)(3839,-5372)
\put(4126,-4786){\rotatebox{315.0}{\makebox(0,0)[lb]{\smash{{\SetFigFont{10}{12.0}{\rmdefault}{\mddefault}{\updefault}{\color[rgb]{0,0,0}$\mathcal{H}^-$}%
}}}}}
\put(5476,-5086){\rotatebox{45.0}{\makebox(0,0)[lb]{\smash{{\SetFigFont{10}{12.0}{\rmdefault}{\mddefault}{\updefault}{\color[rgb]{0,0,0}$\mathcal{I}^-$}%
}}}}}
\put(5476,-3661){\rotatebox{315.0}{\makebox(0,0)[lb]{\smash{{\SetFigFont{10}{12.0}{\rmdefault}{\mddefault}{\updefault}{\color[rgb]{0,0,0}$\mathcal{I}^+$}%
}}}}}
\put(4170,-3837){\rotatebox{45.0}{\makebox(0,0)[lb]{\smash{{\SetFigFont{10}{12.0}{\rmdefault}{\mddefault}{\updefault}{\color[rgb]{0,0,0}$\mathcal{H}^+_0$}%
}}}}}
\put(4576,-4336){\rotatebox{350.0}{\makebox(0,0)[lb]{\smash{{\SetFigFont{10}{12.0}{\rmdefault}{\mddefault}{\updefault}{\color[rgb]{0,0,0}$\Sigma_0$}%
}}}}}
\put(4651,-4786){\makebox(0,0)[lb]{\smash{{\SetFigFont{10}{12.0}{\rmdefault}{\mddefault}{\updefault}{\color[rgb]{0,0,0}$\mathcal{R}$}%
}}}}
\put(4376,-3996){\makebox(0,0)[lb]{\smash{{\SetFigFont{10}{12.0}{\rmdefault}{\mddefault}{\updefault}{\color[rgb]{0,0,0}$\mathcal{R}_0$}%
}}}}
\put(4915,-3870){\rotatebox{340.0}{\makebox(0,0)[lb]{\smash{{\SetFigFont{10}{12.0}{\rmdefault}{\mddefault}{\updefault}{\color[rgb]{0,0,0}$\Sigma_\tau$}%
}}}}}
\end{picture}%

\]

\subsection{The higher order statement}
For various applications, it is essential to have a higher-order analogue of
the above. This is given by
\begin{theorem}
\label{h.o.s.}
Let $M$, $a_0$, $a$ be as in Theorem~\ref{theResult}.
Then, for all $\delta>0$ and all integers $j\ge 1$,
there exists a constant $C=C(a_0, M, \delta,j)$
such that
the following inequalities hold
for all sufficiently regular solutions $\psi$
to the wave equation $\Box_{g_{a,M}}\psi = 0$ on $\mathcal{R}_0$
\begin{align}
\nonumber
\label{protasn1}
\int_{\mathcal{R}_0} &
r^{-1-\delta}\zeta \sum_{1\le i_1+i_2+i_3\le j}
|\nabb^{i_1}T^{i_2}(\tilde Z^*)^{i_3}\psi|^2\\
\nonumber
&+r^{-1-\delta}\sum_{1\le i_1+i_2+i_3\le j-1}
\left(|\nabb^{i_1}T^{i_2}(\tilde Z^*)^{i_3+1}\psi|^2+|\nabb^{i_1}T^{i_2}(Z^*)^{i_3}\psi|^2\right)\\
&\le C\int_{\Sigma_0} \sum_{0\le i \le j-1}
{\bf J}^N_\mu[N^{i}\psi]n^\mu_{\Sigma_0},
\end{align}
\begin{equation}
\label{bndts2}
\int_{\mathcal{H}^+_0} \sum_{0\le i \le j-1}{\bf J}^N_\mu[N^{i}\psi]n^\mu_{\mathcal{H}^+}
\le  C\int_{\Sigma_0} {\sum_{0\le i \le j-1}
{\bf J}^N_\mu[N^{i}\psi]n^\mu_{\Sigma_0}},
\end{equation}
\begin{equation}\label{fluxNullInfHigher}
\int_{\mathcal{I}^+}\sum_{0\le i \le j-1}{\bf J}^T_\mu[N^{i}\psi]n^\mu_{\mathcal{I}^+}
\le C \int_{\Sigma_0} \sum_{0\le i \le j-1}{\bf J}_\mu^N[N^i\psi]n^\mu_{\Sigma_0},
\end{equation}
\begin{equation}
\label{bndts1}
\int_{\Sigma_\tau } \sum_{0\le i \le j-1}{\bf J}^N_\mu[N^{i}\psi]n^\mu_{\Sigma_\tau}
\le  C\int_{\Sigma_0} {\sum_{0\le i \le j-1}
{\bf J}^N_\mu[N^{i}\psi]n^\mu_{\Sigma_0}}, \qquad
\forall\tau\ge 0.
\end{equation}
\end{theorem}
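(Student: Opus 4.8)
The plan is to derive all four higher-order estimates simultaneously by induction on $j$, commuting the wave equation with the three ``good'' differential operators $T$, $\chi\Phi$ and the red-shift vector field $Y$ (equivalently $N$), as outlined in Section~\ref{HOEhere}, and then feeding the resulting inhomogeneous wave equations into the first-order estimates of Theorem~\ref{theResult} together with elliptic estimates based on fact (c), namely that $\mathrm{span}(T,\Phi)$ is timelike outside $\mathcal{H}^+$.

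First I would set up the induction. The base case $j=1$ is exactly Theorem~\ref{theResult}. Assume the four inequalities hold for all $j'\le j-1$; we prove them for $j$. Since $T$ and $\Phi$ are Killing, $T^{i_2}(\chi\Phi)^{i_2'}\psi$ still satisfies a wave equation with an inhomogeneity supported in $r\le R_0$ (coming only from the cutoff $\chi$ and from the fact that $\Phi$-derivatives of $\chi$ appear), and that inhomogeneity is controlled by strictly lower-order $T,\Phi$-derivatives of $\psi$ in a compact $r$-range, hence by the inductive integrated-decay estimate (here one uses that trapping is away from the axis / that $\zeta$ does not degenerate the $\nabb$ term, so lower-order angular derivatives in a compact region are available with no loss). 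The red-shift commutation is the one that is not Killing: as in~\cite{dr6,dr7}, commuting~(\ref{WAVE}) with $Y$ produces, schematically, $\Box_g(Y\psi) = $ (good terms with a favourable sign near $\mathcal{H}^+$ in the $Y$-energy identity) $+$ (error terms that are bounded by first derivatives of $\psi$ and by $Y\psi$ itself, and which away from $\mathcal{H}^+$ are harmless). Thus at each order one applies the energy identity for $N$ (a $\varphi_\tau$-invariant timelike vector field, strictly timelike, built from $Y$ near the horizon and $T$ for large $r$) to $N^i\psi$, absorbs the horizon error terms using the positivity~(\ref{eutuxws}) of the surface gravity, and absorbs the remaining (compactly supported, lower-order) error terms using the already-established lower-order estimates~(\ref{protasn1}), (\ref{bndts2}), (\ref{bndts1}).

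The remaining point is to recover \emph{all} $j$-th order derivatives, including those tangential to $\Sigma_\tau$ that are not directly produced by commuting with $T,\chi\Phi,Y$. Here I would invoke the elliptic argument of Section~\ref{HOEhere}: on a fixed $t^*$-slice, equation~(\ref{WAVE}) expresses $\Box_{g_{a,M}}$ in terms of a second-order operator which, once one knows a second derivative in a timelike direction, is elliptic in the remaining directions; by fact (c) the span of $T$ and $\Phi$ supplies such a timelike direction at every point of $\mathrm{int}(\mathcal{R})$ — more precisely one covers the relevant $r$-range by finitely many neighbourhoods, on each of which a fixed linear combination $T+\alpha\Phi$ is timelike — and the red-shift gives the missing direction transverse to $\mathcal{H}^+$. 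Commuting the resulting elliptic estimates with the inductively controlled operators upgrades the mixed $\nabb^{i_1}T^{i_2}(\tilde Z^*)^{i_3}\psi$ bulk control to genuine $H^j$ control on each slice and in the bulk, which is precisely the content of~(\ref{protasn1}) and, evaluated on $\Sigma_\tau$, of~(\ref{bndts1}); the null-infinity flux~(\ref{fluxNullInfHigher}) follows by the same large-$r$ reasoning used for~(\ref{fluxNullInf}) applied to $N^i\psi$, using that $N$ is asymptotic to $T$.

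The main obstacle I anticipate is bookkeeping the commutator error terms so that the induction genuinely closes: one must check that commuting $\Box_g$ with $Y$ (and with $\chi\Phi$) never produces an error term of the \emph{same} differential order that is not either of the good-sign type or supported where the lower-order estimate has a non-degenerate handle on it — in particular, one must verify that the degeneration set of~(\ref{protasn1}) (the trapped $r$-range, where $\zeta$ vanishes) is harmless, which works precisely because the commutator errors from $Y$ live near $\mathcal{H}^+$ (away from trapping) and those from $\chi$ live in a compact $r$-range where, after using the $T,\Phi$ timelike span, one still controls the needed lower-order derivatives. This is exactly the structural input (a)+(c) of the introduction, and once it is in place the estimates~(\ref{protasn1})–(\ref{bndts1}) follow by assembling the $N$-energy identity for $N^i\psi$, $0\le i\le j-1$, summing over $i$, and adding a small multiple of the lower-order integrated-decay estimate to absorb errors.
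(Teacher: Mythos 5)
Your commutation scheme is the right one and matches the paper's: commute with $T$ (Killing, trivial), with $\chi\Phi$ (Killing modulo a cutoff error compactly supported in $r$, which is fed into the inhomogeneous version of the degenerate integrated decay estimate), and with the red-shift vector field $Y$ (using the favourable sign of the commutator near $\mathcal{H}^+$ from Proposition~\ref{commuprop} and Proposition~\ref{ftrs}); then convert to the full set of derivatives by elliptic estimates built on the fact that $T,\Phi$ span a timelike direction away from the horizon (and $K,Y,E_1,E_2$ form a frame at $\mathcal{H}^+$). This is exactly what Proposition~\ref{h.o.s.suff} and the higher-order proposition at the end of Section~\ref{boundSuff} do, and those two propositions together are the paper's proof of the theorem.

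The one place you should be careful is the logical relationship with Theorem~\ref{theResult}. You present Theorem~\ref{theResult} as an independently established base case for an induction. In the paper, Theorems~\ref{theResult} and~\ref{h.o.s.} are \emph{not} established sequentially: the integrated-decay statements are first proved only for future-integrable solutions (Proposition~\ref{closedILED}, then its higher-order version Proposition~\ref{h.o.s.suff}), and the continuity argument that removes the future-integrability restriction (Proposition~\ref{allIntegrable}) itself \emph{uses} the higher-order integrated-decay estimates (see Lemma~\ref{cutoffILEDhigher} and Corollary~\ref{combineLemmas}). Only after all orders of integrated decay are available unconditionally is the boundedness statement~(\ref{bndts1b}) proven, and its higher-order version~(\ref{bndts1}) is then obtained by the same commutation scheme you describe. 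So treating the base case as a black box imports something that is not available prior to the higher-order argument; the correct route is to run your commutation scheme first on future-integrable solutions, close the continuity argument using those estimates, and only then obtain the unconditional statements. Your phrase that~(\ref{bndts1}) follows from~(\ref{protasn1}) ``evaluated on $\Sigma_\tau$'' also understates what is involved: the energy flux bound~(\ref{bndts1b}) at base order $j=1$ requires a genuinely separate argument (the frequency-space decomposition of $\tilde\psi$ into pieces $\tilde\psi_i$ trapped near specific $r$-values and the application of an $i$-dependent Killing combination $T+\alpha_i\Phi$, Section~\ref{boundSuff}), precisely because in the general $|a|<M$ case the ergoregion and the physical-space trapped region overlap. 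Once that base case is in hand, your commutation argument does indeed produce the higher-order~(\ref{bndts1}), as the paper notes.
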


Let us note that by an elliptic estimate, we have
\begin{equation}
\label{inasyflatcas}
\int_{\Sigma_\tau } \sum_{0\le i \le j-1}{\bf J}^N_\mu[N^{i}\psi]n^\mu_{\Sigma_\tau}
\sim \sum_{1\le i\le j} \|\psi\|^2_{\mathring{H}^i(\Sigma_\tau)} +\|n_{\Sigma_\tau}\psi\|^2_{\mathring{H}^{i-1}(\Sigma_\tau)}.
\end{equation}
Thus, as with $(\ref{bndts1b})$ before,
we may reexpress statement $(\ref{bndts1})$ of the above theorem as the statement of the uniform
boundedness of geometric Sobolev norms. Note that {\bf uniform pointwise
bounds} on $|\psi|$ and its derivatives
to  \emph{arbitrary} order $|\nabb^{i_1}T^{i_2}(\tilde Z^*)^{i_3}\psi|$
in $\mathcal{R}_0$ follow
as an immediate consequence of the above Theorems in view of the Sobolev
inequality applied on each $\Sigma_\tau$.

The above theorems also
imply pointwise decay statements and decay for the energy flux
through suitable hypersurfaces. We turn to this now.

\subsection{Corollaries}
\label{corollariessec}

Let us note first that by a reduction
proven as Proposition~4.6.1 of~\cite{dr7},  Theorems~\ref{theResult} and~\ref{h.o.s.}
hold where $\Sigma_0$ is replaced by an
 arbitrary ``admissible'' hypersurface $\widetilde{\Sigma}_0$
(see Section 4.4 of~\cite{dr7} for this notion), $\Sigma_\tau$ is replaced by
$\widetilde{\Sigma}_\tau\doteq {\varphi_\tau}(\widetilde\Sigma_0)$,
$n_{\Sigma_\tau}$ is replaced by $n_{\widetilde{\Sigma}_\tau}$,
 $\mathcal{R}_0$, $\mathcal{H}^+_0$ are  redefined as $D^+(\widetilde\Sigma_0)$,
$D^+(\widetilde\Sigma_0)
\cap\mathcal{H}^+$, respectively, and $N$ is kept as is.
This notion includes both asymptotically flat hypersurfaces terminating
at spatial infinity (a special case of admissible hypersurfaces of the first kind) and asymptotically hyperboloidal hypersurfaces terminating at
null infinity (a special case of admissible hypersurface of the second kind).
The latter case
is depicted below
\[
\begin{picture}(0,0)%
\includegraphics{partiii2.pstex}%
\end{picture}%
\setlength{\unitlength}{3158sp}%
\begingroup\makeatletter\ifx\SetFigFont\undefined%
\gdef\SetFigFont#1#2#3#4#5{%
  \reset@font\fontsize{#1}{#2pt}%
  \fontfamily{#3}\fontseries{#4}\fontshape{#5}%
  \selectfont}%
\fi\endgroup%
\begin{picture}(2220,2222)(3839,-5372)
\put(4126,-4786){\rotatebox{315.0}{\makebox(0,0)[lb]{\smash{{\SetFigFont{10}{12.0}{\rmdefault}{\mddefault}{\updefault}{\color[rgb]{0,0,0}$\mathcal{H}^-$}%
}}}}}
\put(5476,-5086){\rotatebox{45.0}{\makebox(0,0)[lb]{\smash{{\SetFigFont{10}{12.0}{\rmdefault}{\mddefault}{\updefault}{\color[rgb]{0,0,0}$\mathcal{I}^-$}%
}}}}}
\put(4205,-3838){\rotatebox{45.0}{\makebox(0,0)[lb]{\smash{{\SetFigFont{10}{12.0}{\rmdefault}{\mddefault}{\updefault}{\color[rgb]{0,0,0}$\mathcal{H}^+_0$}%
}}}}}
\put(5564,-3719){\rotatebox{315.0}{\makebox(0,0)[lb]{\smash{{\SetFigFont{10}{12.0}{\rmdefault}{\mddefault}{\updefault}{\color[rgb]{0,0,0}$\mathcal{I}^+_0$}%
}}}}}
\put(4651,-4786){\makebox(0,0)[lb]{\smash{{\SetFigFont{10}{12.0}{\rmdefault}{\mddefault}{\updefault}{\color[rgb]{0,0,0}$\mathcal{R}$}%
}}}}
\put(4651,-3692){\rotatebox{340.0}{\makebox(0,0)[lb]{\smash{{\SetFigFont{10}{12.0}{\rmdefault}{\mddefault}{\updefault}{\color[rgb]{0,0,0}$\widetilde{\Sigma}_\tau$}%
}}}}}
\put(4426,-3961){\makebox(0,0)[lb]{\smash{{\SetFigFont{10}{12.0}{\rmdefault}{\mddefault}{\updefault}{\color[rgb]{0,0,0}$\mathcal{R}_0$}%
}}}}
\put(4401,-4255){\rotatebox{350.0}{\makebox(0,0)[lb]{\smash{{\SetFigFont{10}{12.0}{\rmdefault}{\mddefault}{\updefault}{\color[rgb]{0,0,0}$\widetilde{\Sigma}_0$}%
}}}}}
\end{picture}%

\]
Note, however, that in the latter case, $(\ref{inasyflatcas})$ (with the above substitutions)
will never hold.
It is for this reason that we prefer to state Theorems~\ref{theResult} and~\ref{h.o.s.}
in the form given.

As a consequence of this more general statement, the above theorems allow us to apply
our ``black box'' result of~\cite{icmp} (see~\cite{schlue} and~\cite{moschidis} for detailed treatments).
We obtain
\begin{bigcorol}
\label{thecorol}
Let $a_0$, $M$, $a$, $\delta$ be as in Theorems~\ref{theResult}--\ref{h.o.s.},
and let $R>r_+$.
Let $\widetilde{\Sigma}_0$ be an asymptotically hyperboloidal hypersurface
terminating at null infinity, and denote
$\widetilde{\Sigma}_\tau= \varphi_\tau(\widetilde{\Sigma}_0)$.
Then
for sufficiently regular solutions of the wave equation, we have the following
estimates for the energy flux
\[
\int_{\widetilde{\Sigma}_\tau} {\bf J}^N_\mu [\psi] n^\mu_{\widetilde{\Sigma}_\tau} \le
C(a_0,M) E\tau^{-2}
\]
\[
\int_{\widetilde{\Sigma}_\tau\cap \{r\le R\}} {\bf J}^N_\mu [N\psi] n^\mu_{\widetilde{\Sigma}_\tau}
\le
C(a_0, M,\delta, R) E\tau^{-4+2\delta}
\]
and the following pointwise estimates
\[
\sup_{\widetilde{\Sigma}_\tau}
r
|\psi-\psi_\infty|\le C(a_0, M) \sqrt{E}\, \tau^{-1/2},
\]
\begin{equation}
\label{pdecay1}
\sup_{\widetilde{\Sigma}_\tau\cap\{r\le R\}} |\psi-\psi_{\infty}|\le C(a_0, M, \delta, R) \sqrt{E} \tau^{-3/2+\delta},
\end{equation}
\begin{equation}
\label{pdecay2}
\sup_{\widetilde\Sigma_\tau\cap\{r\le R\}}|n_{\widetilde\Sigma}\psi|+
|\nabla_{\widetilde\Sigma}\psi|
\le C(a_0, M, \delta, R)E \tau^{-2+\delta},
\end{equation}
where in each inequality, $E$ denotes an
appropriate higher order weighted energy on $\widetilde{\Sigma}_0$ (or alternatively
on an asymptotically flat $\Sigma_0$ in the past of $\widetilde{\Sigma}_0$).
\end{bigcorol}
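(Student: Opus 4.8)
The plan is to derive all of the asserted estimates from Theorems~\ref{theResult} and~\ref{h.o.s.} by invoking the general ``black box'' mechanism of~\cite{icmp}; since those theorems already contain all of the Kerr-specific input, the argument here is soft and amounts to checking that the hypotheses of~\cite{icmp} are verified and then quoting its conclusions. First I would use the reduction recalled at the start of Section~\ref{corollariessec} (Proposition~4.6.1 of~\cite{dr7}) to upgrade Theorems~\ref{theResult} and~\ref{h.o.s.} to the setting in which $\Sigma_0$, $\Sigma_\tau$ are replaced by the admissible hyperboloidal hypersurfaces $\widetilde{\Sigma}_0$, $\widetilde{\Sigma}_\tau=\varphi_\tau(\widetilde{\Sigma}_0)$ and $\mathcal{R}_0$, $\mathcal{H}^+_0$ are redefined accordingly. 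In this setting one then checks the three structural assumptions underlying~\cite{icmp}: (i) uniform boundedness of the nondegenerate energy through the foliation, which is $(\ref{bndts1b})$ together with its higher-order companion $(\ref{bndts1})$; (ii) an integrated local energy decay estimate whose degeneration at trapping (here carried by the weight $\zeta$) is removed after a single commutation with $T=\partial_{t^*}$, which is Killing and hence commutes with $\Box_{g_{a,M}}$ --- this is $(\ref{protasn1b})$, respectively the higher-order $(\ref{protasn1})$; and (iii) the red-shift estimate near $\mathcal{H}^+$ supplied by the propositions reviewed in Section~\ref{prelimn}. The last ingredient is the asymptotic flatness of $g_{a,M}$ near $\mathcal{I}^+$; indeed, for $r\ge 3M$ the metric coincides with a fixed, $a$-independent asymptotically flat end, which is exactly what the large-$r$ part of~\cite{icmp} requires and which also renders all constants uniform in $|a|\le a_0$.

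With these in hand, I would run the $r^p$-weighted energy hierarchy of~\cite{icmp}: exploiting the favourable null structure of the wave equation in the asymptotically flat region one obtains a family of weighted estimates for $\int_{\widetilde{\Sigma}_\tau\cap\{r\ge R\}} r^p(\cdots)$ over the range $0\le p\le 2$, the endpoint $p=2$ incurring the familiar $\delta$-loss against the $r^{-1-\delta}$- and $r^{-3-\delta}$-weighted terms on the left-hand side of $(\ref{protasn1b})$. Combining the $p=1,2$ levels with $(\ref{protasn1b})$ and a dyadic pigeonhole argument over time intervals $[2^k,2^{k+1}]$ yields $\int_{\widetilde{\Sigma}_\tau}{\bf J}^N_\mu[\psi]n^\mu_{\widetilde{\Sigma}_\tau}\lesssim E\tau^{-2}$, the first energy decay bound. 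For the improved interior rate $\tau^{-4+2\delta}$, one commutes the equation with $T$ (and with $N$, and, for the pointwise statements, with the angular momentum operators, whose derivatives are controlled by $\nabb$), re-runs the hierarchy for the commuted quantities using their boundedness from $(\ref{bndts1})$ and their integrated decay from $(\ref{protasn1})$, and then feeds the already-established $\tau^{-2}$ decay back through $(\ref{protasn1b})$ on each dyadic slab to locate a good slice on which the local energy of $\psi$ itself has already dropped; iterating gains the extra $\tau^{-2+2\delta}$ factor away from infinity. This is precisely the scheme of~\cite{icmp}, so only careful bookkeeping is involved.

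The pointwise estimates $(\ref{pdecay1})$, $(\ref{pdecay2})$ would then follow by applying the Sobolev inequality on each $\widetilde{\Sigma}_\tau\cap\{r\le R\}$ to sufficiently many commuted quantities and inserting the energy decay rates just obtained; recovering the zeroth-order quantity $\psi-\psi_\infty$ (whose rate is slower, as recorded in the Corollary) requires in addition a one-dimensional integration in $r$ against a decaying boundary value, which costs part of the rate. For the weighted bound $\sup_{\widetilde{\Sigma}_\tau} r|\psi-\psi_\infty|\lesssim\sqrt{E}\,\tau^{-1/2}$ one uses the $p=2$ level of the hierarchy, which controls a weighted $L^2$ norm of the outgoing null derivative of $r\psi$, together with a one-dimensional Hardy/Sobolev estimate in the outgoing direction to pass to an $L^\infty$ bound; here $\psi_\infty$, the conserved asymptotic value of $\psi$ at spatial infinity, is exactly what must be subtracted so that the remainder carries no undecaying piece, and its decay is inherited from the $\tau^{-2}$ energy decay. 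All constants depend only on $a_0$, $M$, $\delta$, $R$ as stated, uniformly in $|a|\le a_0$, since every input from Theorems~\ref{theResult} and~\ref{h.o.s.} and from~\cite{icmp} is uniform in this range.

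I do not expect a genuine obstacle here, since the hard analysis is entirely absorbed into Theorems~\ref{theResult} and~\ref{h.o.s.}; the one point requiring care is verifying that the precise near-infinity and near-horizon structure of $(\ref{protasn1b})$--$(\ref{protasn1})$ --- the exact $r$-weights, the $(\psi-\psi_\infty)^2$ zeroth-order term, and the compensation of the trapping degeneration $\zeta$ by commutation with the Killing field $T$ --- matches the hypotheses of~\cite{icmp} verbatim, after which the stated conclusions are immediate.
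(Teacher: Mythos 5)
Your proposal is correct and takes essentially the same route as the paper: the paper itself gives no detailed argument, deducing the Corollary directly by citing the black-box result of~\cite{icmp} after the reduction to admissible (hyperboloidal) hypersurfaces via Proposition~4.6.1 of~\cite{dr7}, which is precisely what you do. The extra detail you supply about the $r^p$-hierarchy, the dyadic pigeonhole, the commutation scheme, and the Sobolev step is an accurate unpacking of the mechanism inside~\cite{icmp}, not a departure from it.
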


From the point of view of nonlinear applications,
the main significance of the powers on the right hand side
of $(\ref{pdecay1})$ and $(\ref{pdecay2})$ is that they are
integrable in time.

\subsection{The logic of the proof}\label{theLogic}
Now that we have given precise formulations of
the main theorems, we will
give a brief account of the logic of the proof, highlighting
where each statement is proven.

The reader may wish to refer back to the outline of Section~\ref{outlinesection}.
Recall that
Section~\ref{prelimn} concerns various preliminary propositions,
including a reduction (in Section~\ref{WPosed}) to considering $\psi$ arising
from smooth compactly supported data on $\Sigma_0$,
whereas Section~\ref{cartersupersection}  defines a class of functions
for which Carter's separation to the radial o.d.e.~(\ref{e3iswsntouu}) and  appropriate boundary conditions can be justified \emph{a priori}.
Sections~\ref{Vpropsec},~\ref{sct} and~\ref{freqLocEst}, on the other hand, are logically independent of the rest of the paper; they are concerned with the study of classical solutions $u$ to
the o.d.e.~(\ref{e3iswsntouu}) assumed to satisfy appropriate boundary conditions.
The culmination is
Theorem~\ref{phaseSpaceILED} which establishes
estimates on $u$ independent of the frequency parameters in the potential.

The logic of the proof of Theorem~\ref{theResult}
can be properly thought to commence in Section~\ref{summation}.
We define a class of solutions to~(\ref{WAVE}) which we call ``future-integrable''
and which allows us to apply Carter's separation of
Section~\ref{cartersupersection} to a suitably defined function, with the help of
a cutoff. We then apply Theorem~\ref{phaseSpaceILED}
to the resulting $u$. Summing via Plancherel, and using in addition  the preliminary
propositions of Section~\ref{prelimn} and the refined mode stability of~\cite{shlapRot}, we
establish in Proposition~\ref{closedILED} the integrated energy decay statement~(\ref{protasn1b}),
 the horizon energy flux bound~(\ref{fluxho...}) and
the null infinity flux bound~(\ref{fluxNullInf})
\emph{for this class of future-integrable solutions to~(\ref{WAVE}).}

In Proposition~\ref{h.o.s.suff} we will upgrade these to the higher order statements~(\ref{protasn1}),~(\ref{fluxNullInfHigher}) and~(\ref{bndts1}) of Theorem~\ref{h.o.s.}, \emph{again for the class of future-integrable solutions}.

Next, in Proposition~\ref{allIntegrable} we will use a continuity argument to show that \underline{all} solutions to~(\ref{WAVE}) arising from smooth compactly supported data (according to the
reduction of Section~\ref{WPosed}) are future-integrable. We thus unconditionally obtain the statements~(\ref{protasn1b}),~(\ref{fluxho...}),~(\ref{fluxNullInf}),~(\ref{protasn1}),~(\ref{bndts2})
and~(\ref{fluxNullInfHigher}).

Finally, in Proposition~\ref{sufficientIntBound} we unconditionally establish the statements~(\ref{bndts1b}) and~(\ref{bndts1}). This will complete
the proof of Theorems~\ref{theResult} and~\ref{h.o.s.}.

\section{Preliminaries}
\label{prelimn}

\subsection{Well posedness, regularity and smooth dependence}
\label{WPosed}
Let us note that the wave equation $(\ref{WAVE})$
is  well posed in $\mathcal{R}_0$
with initial data $(\uppsi, \uppsi')$ defined on $\Sigma_0$ in $H^j_{\rm loc}(\Sigma_0)\times H^{j-1}_{\rm loc}(\Sigma_0)$
(cf.~Proposition 4.5.1 of~\cite{dr7}).
Moreover, if the initial data are smooth and of compact support on $\Sigma_0$,
then $\psi$ will be smooth, and of compact support on all $\Sigma_\tau$ for $\tau\ge 0$.

In the proof of our theorems, by standard density arguments (applied to $\psi-\psi_{\infty}$),
we may thus assume that $\psi$ indeed arises from such data and thus is
smooth and of compact support for fixed $\Sigma_\tau$ for all $\tau\ge 0$.

Lastly, we observe that the solution $\psi$  to $(\ref{WAVE})$ depends smoothly on $a$, e.g.
\begin{lemma}\label{lemmaSmoothDependence}Let $\left|a_{\infty}\right| < M$, $\{a_k\}_{k=1}^{\infty}$ satisfy $a_k \to a_{\infty}$, $\Box_{g_{a_k,M}}\psi_k = 0$, $\psi_k|_{\Sigma_0} = \psi_{\infty}|_{\Sigma_0}$ and $n_{\Sigma_0}\psi_k|_{\Sigma_0} = n_{\Sigma_0}\psi_{\infty}|_{\Sigma_0}$. Then, for every $j \geq 1$ and $\tau \geq 0$,
\[\lim_{k\to\infty}\int_{\Sigma_{\tau}}\sum_{1\leq i_1 + i_2 + i_3 \leq j}\left|\nabb^{i_1}T^{i_2}\left(\tilde Z^*\right)^{i_3}\psi_k\right|^2 = \int_{\Sigma_{\tau}}\sum_{1\leq i_1 + i_2 + i_3 \leq j}\left|\nabb^{i_1}T^{i_2}\left(\tilde Z^*\right)^{i_3}\psi_{\infty}\right|^2.\]
\end{lemma}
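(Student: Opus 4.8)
The plan is to establish Lemma~\ref{lemmaSmoothDependence} as a consequence of continuous dependence of solutions of a linear hyperbolic equation on the coefficients of the equation, combined with the finite speed of propagation. Since the initial data $(\psi_\infty|_{\Sigma_0}, n_{\Sigma_0}\psi_\infty|_{\Sigma_0})$ is \emph{fixed}, the only thing varying with $k$ is the metric $g_{a_k,M}$, which by the construction of Section~\ref{kerrstardefsec} depends smoothly on $a$ (and in particular $g_{a_k,M} \to g_{a_\infty,M}$ in $C^\infty_{\rm loc}(\mathcal R)$ as $k\to\infty$, with all derivatives converging uniformly on compact subsets). First I would reduce to the case of smooth compactly supported data: by the density reduction of Section~\ref{WPosed} (applied to $\psi-\psi_\infty$, noting $\psi_\infty$ is constant and hence solves the wave equation for every $a$), it suffices to prove the convergence statement for $\psi_k$ arising from smooth data of compact support on $\Sigma_0$, and in that case each $\psi_k$ is smooth and of compact support on each $\Sigma_\tau$.

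Next I would fix $\tau \geq 0$ and localise. By finite speed of propagation (which holds for each $g_{a_k,M}$, $|a_k|<M$, all sufficiently close to $g_{a_\infty,M}$, with uniformly comparable characteristic cones), the restriction $\psi_k|_{\mathcal R_{(0,\tau)}}$ is supported in a fixed compact set $\mathcal K \subset \mathcal R_{(0,\tau)}$ determined by the (fixed) support of the data and a uniform bound on the light cones; this uses that the $g_{a_k,M}$ are uniformly close on compacta so the domains of dependence are uniformly controlled. On $\mathcal K$, the equation $\Box_{g_{a_k,M}}\psi_k = 0$ is a linear second-order strictly hyperbolic equation whose coefficients converge in $C^\infty(\mathcal K)$ to those of $\Box_{g_{a_\infty,M}}$. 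The standard energy-estimate argument for linear hyperbolic equations — apply the divergence identity~(\ref{ingeneralform2}) for the difference $\psi_k - \psi_\infty$ with a timelike multiplier $N$, treating $(\Box_{g_{a_k,M}} - \Box_{g_{a_\infty,M}})\psi_\infty = 0$ trivially but more importantly writing $\Box_{g_{a_k,M}}(\psi_k - \psi_\infty) = 0$ and $\Box_{g_{a_\infty,M}}(\psi_k-\psi_\infty) = (\Box_{g_{a_\infty,M}} - \Box_{g_{a_k,M}})(\psi_k - \psi_\infty)$, a term with coefficients tending to zero — together with a Grönwall argument in $t^*$, yields $\int_{\Sigma_{\tau'}} {\bf J}^N_\mu[\psi_k - \psi_\infty] n^\mu_{\Sigma_{\tau'}} \to 0$ uniformly for $\tau' \in [0,\tau]$. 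For the higher-order statement one commutes the equation with $T$, $\tilde Z^*$ and the red-shift-type vector field (all of which have $a$-dependence only through the metric, converging smoothly) and iterates the same energy estimate, using elliptic estimates on $\Sigma_{\tau'}$ in the region where $\langle T, \Phi\rangle$ is timelike to recover all derivatives — exactly the scheme that will be used in Section~\ref{higher}, here applied in the much softer setting of a fixed compact region and a qualitative limit rather than a quantitative estimate.

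Finally, from $H^j_{\rm loc}(\Sigma_\tau)$-convergence of $\psi_k \to \psi_\infty$ (together with $n_{\Sigma_\tau}\psi_k \to n_{\Sigma_\tau}\psi_\infty$ in $H^{j-1}_{\rm loc}$), which is what the commuted energy estimates give on the fixed compact set $\mathcal K \cap \Sigma_\tau$ containing all the supports, I would conclude convergence of the integrals $\int_{\Sigma_\tau}\sum_{1\leq i_1+i_2+i_3\leq j}|\nabb^{i_1}T^{i_2}(\tilde Z^*)^{i_3}\psi_k|^2$ by continuity of the norm, since the integrands are continuous functions of the metric and its derivatives (bounded on $\mathcal K$) and of the solution and its derivatives up to order $j$. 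I expect the only genuine point requiring care — the ``main obstacle'', though it is a mild one — to be the uniformity of finite speed of propagation as $a_k$ varies, i.e.\ producing a single compact set $\mathcal K$ containing the supports of all $\psi_k$ for $k$ large; this follows from the smooth (hence locally uniform) dependence of $g_{a,M}$ on $a$ established in Section~\ref{kerrstardefsec}, which makes the null cones of $g_{a_k,M}$ uniformly controlled on compact subsets of $\mathcal R$, but it should be stated explicitly. Everything else is the routine linear hyperbolic continuous-dependence machinery.
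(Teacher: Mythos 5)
The paper does not actually provide a proof of Lemma~\ref{lemmaSmoothDependence}; it is placed in Section~\ref{WPosed} immediately after the discussion of well-posedness and stated as a consequence of the smooth dependence of $g_{a,M}$ on $a$, with the details left to the reader. Your sketch supplies exactly the kind of routine continuous-dependence argument that is being suppressed, and the overall scheme (reduce to compactly supported data, control the fixed compact support via uniform finite speed of propagation, energy estimate plus Gr\"onwall for the difference, commute with $T$, $\tilde Z^*$, and the red-shift vector field for the higher order statement, close with elliptic estimates) is sound.

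One notational slip should be corrected: in the statement of Lemma~\ref{lemmaSmoothDependence}, $\psi_\infty$ is \emph{not} the asymptotic constant from Theorem~\ref{theResult}. It is the limit solution, i.e.\ the unique solution of $\Box_{g_{a_\infty,M}}\psi_\infty=0$ with the prescribed Cauchy data on $\Sigma_0$, in parallel with $a_\infty$ being the limit parameter. (If $\psi_\infty$ were constant, uniqueness would force every $\psi_k$ to equal that same constant and the lemma would be vacuous.) Your reduction step — ``applied to $\psi-\psi_\infty$, noting $\psi_\infty$ is constant'' — therefore does not parse literally. This does not damage the substance of the argument, because in the one place the lemma is invoked (Section~\ref{closednesssec}) the solution $\psi$ already arises from smooth compactly supported data by the reduction of Section~\ref{WPosed}, so $\psi_\infty$ in the lemma already has compactly supported data, and so do all the $\psi_k$; the Gr\"onwall argument and commutation then go through exactly as you describe. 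It would be cleaner simply to observe that the lemma will only be applied to solutions with smooth compactly supported data and state this as a standing hypothesis rather than invoking the density reduction again.
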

(We shall appeal to the above lemma at the end of Section~\ref{closednesssec}
in the context    of   the closedness part of our continuity argument.)

\subsection{The sign of $a$}\label{signOfa}
For given $a$, $M$, given a solution $\psi$ of $\Box_{g_{M,a}}\psi=0$, then,
defining $\tilde\psi(y^*, t^*,\theta^*, \phi^*)=\psi(y^*,t^*,\theta^*,2\pi-\phi^*)$,
we have that $\tilde\psi$ satisfies
$\Box_{g_{M,-a}}\tilde\psi=0$. Moreover, the estimates of Theorems~\ref{theResult}
and~\ref{h.o.s.} for $\tilde\psi$ with quantities defined
respect to the metric $g_{M,-a}$
are equivalent to the analogous estimates for $\psi$ with respect
to the metric $g_{M,a}$.
Thus,
 it suffices to prove our Theorems for $a \geq 0$.
This reduction is of no conceptual significance, but
it slightly simplifies the notation for discussing the superradiant frequency range,
which then can be given by $(\ref{superradiantParamPosa})$.
For notational convenience we will indeed
use the reduction to $a \geq 0$ in Sections~\ref{Vpropsec}--\ref{freqLocEst} the context of describing
the
properties of the potential $V$ in various frequency regimes
and defining the frequency dependent
multiplier currents.
The reader can assume that $a\ge 0$ globally in this paper, but it is strictly
speaking only necessary for those statements which
refer explicitly to frequency-dependent functions in the separation.

\subsection{Hardy inequalities}
As in the previous parts~\cite{dr7} of this series,
at various points we shall refer to Hardy inequalities.
In view of our comments concerning the volume form (see Section~\ref{usefulcomps}),
the reader can easily derive these
from the one-dimensional inequalities
\begin{equation}\label{hardy1}
\int_{0}^2  x^{-1} |\log x|^{-2} f^2(x) \le C\int_{0}^{2} \left(\frac{df}{dx}\right)^2(x)dx+
C\int_1^2f^2(x) dx,
\end{equation}
\begin{equation}\label{hardy2}
\int_1^\infty  f^2(x)\le C\int_1^\infty x^2\left(\frac{df}{dx}\right)^2 (x)dx,
\end{equation}
where the latter holds for functions $f$ of compact support.

\subsection{Generic constants in inequalities and fixed parameters}
\label{genconstsec}
Let us recall our conventions from~\cite{dr7} regarding constants
depending on the Kerr geometry.

As in the statement of Theorem~\ref{theResult},
all propositions in this paper providing estimates will
explicitly refer to two fixed parameters $a_0<M$ delineating
the range of Kerr parameters allowed.
In the context of inequalities, we shall denote by $B$ potentially large positive constants,
whereas we shall denote by $b$
potentially small positive constants, \underline{both depending only on $M$ and $a_0$}.
 \emph{This dependence
is always to be understood.}
We record the resulting algebra of constants:
\[
b+b=b,  \, B+B=B, \, B\cdot B=B, \, B^{-1}=b, \ldots
\]
We note that these constants will often blow up $B\to\infty$, $b^{-1}\to\infty$ in
the extremal limit $a_0\to M$.

Our constructions will depend on various additional parameters,
for instance, the parameters $\omega_{\text{high}}$, $E$, etc., which
are free in the statements of Propositions~\ref{odeEst1}, etc.,     but are chosen
by the end of the proof of Theorem~\ref{phaseSpaceILED}, in whose
statement they appear as fixed parameters.

When a parameter is required to be ``sufficiently large'' or ``sufficiently small''
without further clarification, this always means
that there exists a constant \underline{depending on $a_0$ and $M$} such that
the parameter can be taken to be an arbitrary value bigger than that constant.
If a parameter is required to be ``sufficiently large'' \emph{given another parameter},
this means that there again exists such a constant depending on $a_0$ and $M$
\underline{and} the other parameter.

Until a parameter has been fixed, e.g.~the parameter $\omega_{\text{high}}$, we shall use
the notation $B(\omega_{\text{high}})$, etc., in the context of inequalities, to denote constants depending
on $\omega_{\text{high}}$ \emph{in addition to $M$
and $a_0$}.
For a parameter, say $c$ which is an explicit function of other parameter(s),
say $\omega_{\text{high}}$, together with $M$ and $a_0$, we will write $c(\omega_{\text{high}})$.
Again, the dependence on $M$ and $a_0$ is \underline{always} to be understood.

The final choices of all initially free
parameters used in the present paper will be made to depend
only on $M$ and $a_0$. Once such choices are made, $B(\omega_{\text{high}})$ is replaced by $B$, following
our conventions.

\subsection{The red-shift}
\label{Nmult}
Understanding the red-shift is an essential part of the dynamics.
Definitive constructions have been given in Section~7  of~\cite{jnotes}.
These depend only on the positivity of the surface gravity $\kappa$,
recalled in Section~\ref{Killingfieldsec}.

\subsubsection{The vectorfield $N$}
Let us recall briefly from~\cite{dr7} the construction of a vector field $N$
capturing the red-shift effect.
\begin{proposition}
\label{specialises..}
Let $|a|\le a_0<M$, $g_{a, M}$ be the Kerr metric and $\mathcal{R}$, etc., be as before.
There exist positive constants $b$ and $B$,
parameters $r_1(a,M)>r_{\rm red}(a,M)>r_+$ and
a $\varphi_\tau$-invariant timelike vector field $N=N(a,M)$
on $\mathcal{R}$, normalised so that $N - K$ is future oriented, traverse to $\mathcal{H}^+$,
and null with $g(N,K)=-2$, such that
\begin{enumerate}
\item
\label{fir}
${\bf K}^N[\Psi] \ge b\, {\bf J}^N_\mu [\Psi]  N^\mu$ for $r\le r_{\rm red}$
\item
\label{item2}
$-{\bf K}^N[\Psi] \le B\, {\bf J}^N_\mu [\Psi] N^\mu$, for $r\ge r_{\rm red}$
\item
\label{lastitem}
$T= N$ for $r\ge r_1$,
\end{enumerate}
where the currents are defined with respect to $g_{a,M}$.
\end{proposition}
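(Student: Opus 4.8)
The plan is to construct $N$ in two regions and patch, following the classical red-shift construction of~\cite{dr3, jnotes}. First I would work near the horizon $\mathcal{H}^+$. Using the $(r^*,t,\theta,\phi)$ coordinates recalled in Section~\ref{Killingfieldsec}, in which $K$ restricted to $\mathcal{H}^+$ is $\partial_{r^*}$, I would look for a candidate of the form $N = K + e\, Y$, where $Y$ is a future-directed vector field transverse to $\mathcal{H}^+$, normalised by $g(Y,K)=-1$ on $\mathcal{H}^+$ (e.g.\ a suitable multiple of $\partial_{r^*}$ pointing inward, or $-\partial_{r^*}$ extended off the horizon), and $e>0$ a constant. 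The key computation is $\mathbf{K}^N[\Psi] = \mathbf{T}_{\mu\nu}[\Psi]\nabla^\mu N^\nu$, and the point of the red-shift, going back to~\cite{dr3}, is that when one evaluates the deformation tensor of $Y$ on the horizon, the component $\nabla^K K$ contributes (via~(\ref{eutuxws})) a term proportional to $\kappa > 0$ times $(Y\Psi)^2$, which is the ``dangerous'' transverse derivative not controlled by $\mathbf{J}^N_\mu[\Psi]N^\mu$ coming from $K$ alone; all other terms in $\mathbf{K}^Y[\Psi]$ are quadratic forms in $(Y\Psi, T\Psi, \nabb\Psi)$ with bounded coefficients. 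Choosing $e$ large enough (depending only on $\kappa$, hence on $a_0,M$) makes the full quadratic form $\mathbf{K}^N[\Psi]$ positive definite, i.e.\ $\mathbf{K}^N[\Psi]\ge b\,\mathbf{J}^N_\mu[\Psi]N^\mu$, in a neighbourhood $r\le r_{\rm red}$ of $\mathcal{H}^+$ by continuity, for $r_{\rm red}(a,M)$ chosen sufficiently close to $r_+$. This is property~(\ref{fir}), and by construction $N$ is timelike there and satisfies the stated normalisation $N-K$ future-oriented transverse to $\mathcal{H}^+$, null with $g(N,K)=-2$ (rescaling $e$ and $Y$ to arrange the precise normalisation constant $-2$; note $g(K,K)=0$ on $\mathcal{H}^+$ and the normalisation is imposed there, then $N$ is extended $\varphi_\tau$-invariantly).

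Next I would handle the far region. For $r\ge r_1$ with $r_1(a,M)$ chosen large enough that $\{r\ge r_1\}$ is disjoint from the neighbourhood where we modified $K$, simply set $N = T$; since $T$ is Killing, $\mathbf{K}^T[\Psi]\equiv 0$, and $T$ is timelike there (we are well outside the ergoregion, as the ergoregion $\mathcal{S}$ of~(\ref{ergoregionS}) is a bounded region). This is property~(\ref{lastitem}). In the intermediate region $r_{\rm red}\le r\le r_1$ I would interpolate: take $N = \chi_1(r) K + \chi_2(r) Y + (\text{correction})$, or more simply write $N$ as a $\varphi_\tau$-invariant timelike vector field interpolating smoothly between the near-horizon choice and $T$, keeping $N$ timelike throughout (possible because $T$ and $\Phi$ span a timelike plane everywhere by the fact recalled in Section~\ref{Killingfieldsec}, so there is no topological obstruction, and in fact one can stay within combinations $T + (\text{something supported near }\mathcal{H}^+)$). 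On the compact region $r_{\rm red}\le r\le r_1$, $\mathbf{K}^N[\Psi]$ is some quadratic form in the derivatives of $\Psi$ with coefficients that are bounded smooth functions (depending on $a_0,M$), hence $-\mathbf{K}^N[\Psi]\le B\,\mathbf{J}^N_\mu[\Psi]N^\mu$ there by the equivalence of positive-definite quadratic forms on a compact set; combined with $\mathbf{K}^T[\Psi]=0$ on $r\ge r_1$ this gives property~(\ref{item2}) on all of $r\ge r_{\rm red}$.

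The main obstacle is the near-horizon computation of~(\ref{fir}): one must verify that the $\kappa$-proportional term genuinely has the right sign and dominates, i.e.\ that the transverse-derivative term $(Y\Psi)^2$ appears with a definite positive coefficient growing linearly in $e$ while the cross terms with $T\Psi$ and $\nabb\Psi$ grow at most linearly in $e$ and can be absorbed by Cauchy--Schwarz. This is exactly the content of Section~7 of~\cite{jnotes} and Section~7 of~\cite{dr3}, and since the construction there depends only on the positivity of the surface gravity $\kappa$ (which holds uniformly for $|a|\le a_0 < M$, degenerating only as $a_0\to M$), the result transfers verbatim; one only needs uniformity of all constants in $|a|\le a_0$, which follows from smooth dependence of $g_{a,M}$ and of $\kappa$ on $a$ together with compactness of $[0,a_0]$. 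I would therefore present the proof as: (i) recall/quote the horizon construction giving $Y$, $e$, $r_{\rm red}$ and property~(\ref{fir}); (ii) set $N=T$ for large $r$; (iii) interpolate and invoke compactness for~(\ref{item2}); (iv) note that all constants $b,B,r_1,r_{\rm red}$ depend only on $a_0,M$ by the uniform positivity of $\kappa$ and smooth dependence on parameters.
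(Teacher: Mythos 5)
The paper does not actually prove this proposition itself: it states in Section~\ref{Nmult} that the construction is recalled from~\cite{dr7}, which in turn points to the general red-shift construction of Section~7 of~\cite{jnotes}, and emphasises that the only input is the positivity of the surface gravity $\kappa$. Your sketch reconstructs that standard construction in broadly the right shape: null frame $\{K,Y,E_1,E_2\}$ near $\mathcal{H}^+$, positivity of $\mathbf{K}^N$ from $\nabla_K K=\kappa K$ with $\kappa>0$, $N=T$ for large $r$, a compactness argument for $-\mathbf{K}^N \le B\,\mathbf{J}^N_\mu N^\mu$ in the middle, and uniformity in $|a|\le a_0<M$ from smooth dependence. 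So the approach is essentially the same as the cited references.

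Two technical points should be tightened, however. First, the parenthetical suggestion that $Y$ could be ``$-\partial_{r^*}$ extended off the horizon'' is wrong: as recorded in Section~\ref{Killingfieldsec}, the smooth extension of $\partial_{r^*}$ (in the $(r^*,t,\theta,\phi)$ system) to $\mathcal{H}^+$ is precisely $K$ itself, which is tangent and null on $\mathcal{H}^+$; a transverse direction must instead come from the Kerr-star $\partial_r$ (i.e.~the vector field $Z^*$ or, rescaled, the vector called $Y$ in Section~\ref{rscsec}). Second, the claim that property~(\ref{fir}) is obtained ``by choosing $e$ large'' in $N = K + eY$ is in tension with the stated normalisation: once $g(Y,K)=-1$ on $\mathcal{H}^+$, the requirement $g(N,K)=-2$ fixes $e=2$, so there is no scaling freedom; and independently, scaling the $Y$-coefficient of $N$ up scales the $(Y\Psi)^2$ part of $\mathbf{J}^N_\mu[\Psi]N^\mu$ quadratically while scaling $\mathbf{K}^N=\mathbf{K}^K+e\,\mathbf{K}^Y=e\,\mathbf{K}^Y$ only linearly, so a large coefficient does not help the comparison constant $b$. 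In the actual construction the genuine freedom lies not in the coefficient of $Y$ in $N$ but in how $Y$ is extended off the horizon (equivalently, in the coefficient $\sigma$ appearing in $\nabla_Y Y$), and positivity of $\mathbf{K}^N$ on $\mathcal{H}^+$ follows from $\kappa>0$ together with a suitable choice of that extension, with the cross terms absorbed by Cauchy--Schwarz. Your outline correctly identifies $\kappa>0$ as the essential input, but the mechanism as you have written it would not produce the estimate.
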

Note the implicit $a_0$ and $M$ dependence of constants $b$ and $B$ as described
in Section~\ref{genconstsec} above.
This proposition would fail in the case $a_0=M$. See~\cite{aretakisKerr, Sbierski}.

\subsubsection{The red-shift estimate}
The above leads  immediately to the following estimate (see~\cite{dr7})
\begin{proposition}
\label{ftrs}
Let $g=g_{a,M}$ for $|a|\le a_0<M$,
and let $r_{\rm red}$ be as in the above Proposition.
Then the following is true.
For all $r_+\le \tilde{r}\le r_{\rm red}$ and $\tilde\delta>0$,
there exists
a positive constant $B(\tilde{r},\tilde\delta)$,\index{variable parameters!
$\tilde{r}$ (used in application of red-shift estimates)}\index{variable parameters!
$\tilde{\delta}$ (used in application of red-shift estimates)}
such that for all functions $\Psi$ on $\mathcal{R}_0$, then
\begin{align*}
\int_{\mathcal{R}_{(0,\tau)}\cap\{r\le \tilde{r}\}}  &
({\bf J}_\mu^N[\Psi]N^\mu +|\log(|r-r_+|)^{-2}||r-r_+|^{-1}\Psi^2)+\int_{\mathcal{H}^+_{(0,\tau)}}
{\bf J}^N_\mu[\Psi]n_{\mathcal{H}^+}^\mu
+\int_{\Sigma_\tau\cap\{r\le \tilde{r}\}}{\bf J}^N_\mu[\Psi] n^\mu \\
&\le B(\tilde{r},\tilde\delta)\int_{\Sigma_0} {\bf J}^N_\mu[\Psi] n^\mu + B(\tilde{r},\tilde\delta)
\int_{\mathcal{R}_{(0,\tau)}\cap \{\tilde{r}\le r\le \tilde{r}+\tilde{\delta}
\}}
({\bf J}_\mu^N[\Psi]N^\mu
+\Psi^2)-\mathcal{E}^N\left[\Psi\right].
\end{align*}
\end{proposition}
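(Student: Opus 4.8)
The plan is to derive this estimate from the red-shift positivity of Proposition~\ref{specialises..}, together with a cutoff in $r$ and a one-dimensional Hardy inequality, in the by-now standard way (cf.~\cite{dr7, jnotes}). First I would fix a smooth cutoff $\xi = \xi_{\tilde r, \tilde\delta}(r)$ with $0 \le \xi \le 1$, $\xi \equiv 1$ for $r \le \tilde r$ and $\xi \equiv 0$ for $r \ge \tilde r + \tilde\delta$, and apply the divergence identity~(\ref{ingeneralform2}) to the current ${\bf J}^{\xi N}_\mu[\Psi]$ on $\mathcal{R}_{(0,\tau)}$; since $\xi N$ is compactly supported in $r$, this is legitimate for a general (not necessarily compactly supported) $\Psi$, with no contribution from spatial infinity.

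On the boundary, timelikeness of $N$ and $\xi \ge 0$ make the fluxes ${\bf J}^{\xi N}_\mu[\Psi] n^\mu$ nonnegative on $\Sigma_\tau$ and on $\mathcal{H}^+_{(0,\tau)}$; as $\xi \equiv 1$ there whenever $r \le \tilde r$, and $r = r_+ \le \tilde r$ on $\mathcal{H}^+$, discarding the nonnegative part of the $\Sigma_\tau$-flux over $\{\tilde r \le r \le \tilde r + \tilde\delta\}$ leaves precisely $\int_{\mathcal{H}^+_{(0,\tau)}}{\bf J}^N_\mu[\Psi] n^\mu_{\mathcal{H}^+}$ and $\int_{\Sigma_\tau \cap \{r \le \tilde r\}}{\bf J}^N_\mu[\Psi] n^\mu$ on the left, while the $\Sigma_0$-flux is bounded by $B \int_{\Sigma_0}{\bf J}^N_\mu[\Psi] n^\mu$. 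For the bulk I would split ${\bf K}^{\xi N}[\Psi] = \xi\,{\bf K}^N[\Psi] + {\bf T}_{\mu\nu}[\Psi](\nabla^\mu \xi) N^\nu$: on $\{r \le \tilde r\}$, where $\xi \equiv 1$ and $\tilde r \le r_{\rm red}$, Proposition~\ref{specialises..}(\ref{fir}) gives ${\bf K}^N[\Psi] \ge b\,{\bf J}^N_\mu[\Psi] N^\mu$, which is the gain; on the shell $\{\tilde r \le r \le \tilde r + \tilde\delta\}$ the $\nabla\xi$ term is pointwise at most $B(\tilde\delta)\,{\bf J}^N_\mu[\Psi] N^\mu$, and $\xi\,{\bf K}^N[\Psi]$ is nonnegative for $r \le r_{\rm red}$ and at least $-B\,{\bf J}^N_\mu[\Psi] N^\mu$ for $r_{\rm red} \le r \le \tilde r + \tilde\delta$ by Proposition~\ref{specialises..}(\ref{item2}), so all shell contributions move to the right as $B(\tilde r, \tilde\delta)\int_{\mathcal{R}_{(0,\tau)}\cap\{\tilde r \le r \le \tilde r + \tilde\delta\}}{\bf J}^N_\mu[\Psi] N^\mu$. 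The inhomogeneity $\mathcal{E}^{\xi N}[\Psi]$ is carried to the right, equalling $-\mathcal{E}^N[\Psi]$ on $\{r \le \tilde r\}$ and contributing only a lower-order shell term elsewhere (absorbed by Cauchy--Schwarz into the shell error). This already yields the gradient bulk term $\int_{\mathcal{R}_{(0,\tau)} \cap \{r \le \tilde r\}}{\bf J}^N_\mu[\Psi] N^\mu$ together with every boundary term on the left.

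To produce the weighted zeroth-order bulk term, I would then apply the one-dimensional Hardy inequality~(\ref{hardy1}) fiberwise in $r$ on each $\{t^* = \tau',\ \theta^*,\ \phi^*\}$, with $x$ comparable to $|r - r_+|$ and $r$ running from $r_+$ up to $\tilde r + \tilde\delta$: the $(df/dx)^2$ term is controlled by the $r$-derivative part of $\int {\bf J}^N_\mu[\Psi] N^\mu$, and the seed term $\int_1^2 f^2$ is an integral of $\Psi^2$ over $\{\tilde r \le r \le \tilde r + \tilde\delta\}$. Integrating in $\tau' \in [0,\tau]$ via the co-area relation~(\ref{easyCoArea}), the former is absorbed into the gradient bulk already controlled (up to the shell error) and the latter is precisely the admissible error $\int_{\mathcal{R}_{(0,\tau)}\cap\{\tilde r \le r \le \tilde r + \tilde\delta\}}\Psi^2$; collecting everything gives the claimed inequality.

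The one place requiring care is that $\tilde r + \tilde\delta$ may exceed $r_{\rm red}$, so the red-shift sign of ${\bf K}^N$ is lost on part of the support of $\xi$; this is exactly why the shell error term $\int_{\{\tilde r \le r \le \tilde r + \tilde\delta\}}({\bf J}^N_\mu N^\mu + \Psi^2)$ is tolerated and why the one-sided bound of Proposition~\ref{specialises..}(\ref{item2}) on that overlap is needed. Beyond this, the argument is routine bookkeeping using the volume-form comparisons of Section~\ref{usefulcomps}.
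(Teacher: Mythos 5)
Your overall strategy --- cut off $N$ in $r$, apply the divergence identity to ${\bf J}^{\xi N}_\mu[\Psi]$ over $\mathcal{R}_{(0,\tau)}$, invoke the red-shift positivity of Proposition~\ref{specialises..}(\ref{fir}) on $\{r\le\tilde r\}$, move the $\nabla\xi$ term and the sign-indeterminate part of $\xi\,{\bf K}^N$ to the shell error, and finally add a one-dimensional Hardy inequality to generate the weighted zeroth-order term --- is exactly the intended one; the paper defers the proof to~\cite{dr7}, which does essentially this, and the remark immediately following the proposition confirms the Hardy step. Two points as you have written them would nevertheless fail and need repair.

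First, the shell contribution of the inhomogeneity cannot be ``absorbed by Cauchy--Schwarz into the shell error.'' One has $\mathcal{E}^{\xi N}[\Psi]=\xi\,\mathcal{E}^N[\Psi]=-\xi(\Box_g\Psi)\,N^\nu\partial_\nu\Psi$, and Cauchy--Schwarz against the shell error $\int({\bf J}^N_\mu[\Psi]N^\mu+\Psi^2)$ leaves a residual $|\Box_g\Psi|^2$ over which that error has no control. The correct bookkeeping is simply that the $-\mathcal{E}^N[\Psi]$ term on the right of the proposition is to be understood as integrated over the whole region where the cutoff is supported (i.e.\ one keeps $-\int_{\mathcal{R}_{(0,\tau)}}\xi\,\mathcal{E}^N[\Psi]$, the precise constant in front being immaterial); in every subsequent use of this proposition (cf.~(\ref{errorControl}),~(\ref{errorControl2})) this whole inhomogeneity integral is then estimated against problem-specific data, not against the shell error. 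Second, the seed term $\int_1^2 f^2$ of~(\ref{hardy1}) does not land on $\{\tilde r\le r\le\tilde r+\tilde\delta\}$ under a naive rescaling $x\sim|r-r_+|$ unless $\tilde\delta$ is comparable to $\tilde r-r_+$; instead apply the boundary version of Hardy to $\xi\Psi$, which vanishes at $r=\tilde r+\tilde\delta$ so that no seed term arises, and the commutator with the cutoff --- supported on the shell --- then produces the $\Psi^2$ shell term directly. With those two adjustments the argument is complete.
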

Again, recall that the additional
dependence of $B$ on $M$ and $a_0$ is now implicit according
to our conventions. Note that the proof of this estimate uses the
Hardy inequality~$(\ref{hardy1})$, so as to include the useful zeroth order
term on the left hand side.
We note that the
same estimate holds with the above zeroth order terms removed from both the right
and the left hand sides.

\subsubsection{Red-shift commutation and the vector field $Y$}
\label{rscsec}
We specialise Theorem 7.2 of~\cite{jnotes}  to the Kerr case.

\begin{proposition}
\label{commuprop}
Let $g=g_{a,M}$,
let $K$ be the vector field of Section~\ref{Killingfieldsec}, let
$Y=N-K$, and let $E_1$, $E_2$ be $\varphi_\tau$-invariant vector fields
such that
$\{K, Y, E_1, E_2\}$ form a local null frame
on $\mathcal{H}^+$.
Then
for all $k\ge 0$ and multi-indices ${\bf m}=(m_1,m_2,m_3,m_4)$,
\[
\Box_g(Y^k\Psi)=\kappa_k Y^{k+1}\Psi + \sum_{|{\bf m}|\le k+1, m_4\le k} c_{{\bf m}}
E_1^{m_1}E_2^{m_2}L^{m_3}Y^{m_4}\Psi +Y^k(\Box_g\Psi)
\]
where $\kappa_k>0$ and the $c_{\bf m}$ are smooth $\varphi_\tau$-invariant functions.
\end{proposition}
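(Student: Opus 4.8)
The plan is to derive the commutation formula by directly computing $\Box_g(Y^k\Psi)$ inductively in $k$, tracking carefully the coefficient of the ``worst'' term $Y^{k+1}\Psi$ and showing it has a good sign on $\mathcal{H}^+$. First I would record the base case $k=0$, which is trivial: $\Box_g(Y^0\Psi)=\Box_g\Psi$, so $\kappa_0=0$ and there is nothing to prove (or, if one prefers $\kappa_0>0$, note that the statement as phrased asserts the structure for $k\ge0$ with the understanding that the $k=0$ case is the identity; the substantive content is $k\ge1$). For the inductive step, I would write $\Box_g$ in terms of the local null frame $\{K,Y,E_1,E_2\}$ adapted to $\mathcal{H}^+$. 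Since this frame is $\varphi_\tau$-invariant and $\{K,Y,E_1,E_2\}$ is null with, say, $g(K,Y)=-2$ and $E_1,E_2$ spacelike tangent to the horizon spheres, the wave operator takes the schematic form
\[
\Box_g = -\tfrac12\big(K Y + Y K\big) + \slashed{\Delta} + (\text{first order}),
\]
where the first-order terms have smooth $\varphi_\tau$-invariant coefficients and $\slashed\Delta$ involves only $E_1,E_2$ derivatives. The crucial structural fact, which goes back to~\cite{dr3} and is treated systematically in~\cite{jnotes}, is that when one commutes $Y$ through $\Box_g$, the commutator $[\Box_g,Y]$ produces a term $c\, Y^2$ whose coefficient $c$, evaluated on $\mathcal{H}^+$, is a positive multiple of the surface gravity $\kappa>0$; this positivity is exactly $(\ref{eutuxws})$, i.e. $\nabla_K K=\kappa K$ with $\kappa=\frac{r_+-r_-}{2(r_+^2+a^2)}>0$ in the subextremal range.

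Concretely, the induction hypothesis gives $\Box_g(Y^k\Psi)=\kappa_k Y^{k+1}\Psi + (\text{admissible lower order}) + Y^k\Box_g\Psi$, and applying one more $Y$ and commuting,
\[
\Box_g(Y^{k+1}\Psi) = Y\big(\Box_g(Y^k\Psi)\big) + [\Box_g,Y](Y^k\Psi).
\]
The first piece contributes $\kappa_k Y^{k+2}\Psi$ plus terms of the allowed type (differentiating smooth $\varphi_\tau$-invariant coefficients in $Y$ only raises $m_4$ by one and keeps $|{\bf m}|\le k+2$, since $Y$ is itself one of the frame vectors and $m_4\le k+1$ is permitted). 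The commutator $[\Box_g,Y](Y^k\Psi)$, by the frame computation above, equals $c\,Y^{k+2}\Psi$ plus terms $E_1^{m_1}E_2^{m_2}L^{m_3}Y^{m_4}$ with $|{\bf m}|\le k+2$, $m_4\le k+1$, where $c$ is a smooth $\varphi_\tau$-invariant function whose restriction to $\mathcal{H}^+$ is positive. Hence $\kappa_{k+1}=\kappa_k + c|_{\mathcal{H}^+}$, and since $\kappa_k\ge0$ (or $>0$ for $k\ge1$) and $c|_{\mathcal{H}^+}>0$, we get $\kappa_{k+1}>0$. One also needs to confirm that, in the commutation, any coefficient of $Y^{k+2}$ that a priori might live away from $\mathcal{H}^+$ with the wrong sign can be absorbed: this is handled exactly as in~\cite{jnotes}, by noting that such terms are supported near but not on $\mathcal{H}^+$ and can be grouped into the lower-order sum (since near the horizon the positive $\kappa_{k+1}Y^{k+2}$ term dominates, while away from the horizon one does not track this structure at all — the $r\le r_{\rm red}$ cutoff in the eventual application, Proposition~\ref{ftrs}, is what localizes the use of this identity).

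The main obstacle is bookkeeping: verifying that every term generated by the two commutations genuinely falls into the index class $|{\bf m}|\le k+1$ with $m_4\le k$ when $\Box_g\Psi=0$ (i.e. that differentiating coefficients and commuting frame vectors never produces $m_4=k+1$ together with a transversal derivative that would escape the allowed set, and never produces total order $k+2$ except in the designated $\kappa_k Y^{k+1}$ slot). This is purely a matter of the algebra of the frame $\{K,Y,E_1,E_2\}$ and its structure functions, all of which are smooth and $\varphi_\tau$-invariant because $K,\Phi,T$ are Killing and the frame was chosen $\varphi_\tau$-invariant; the key positivity input is not delicate — it is just $(\ref{eutuxws})$ — so the real work is organizing the induction cleanly. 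Since this is Theorem~7.2 of~\cite{jnotes} specialized to Kerr, I would ultimately just carry out this specialization: take the general horizon-with-positive-surface-gravity statement proved there and observe that $g_{a,M}$ for $|a|\le a_0<M$ satisfies its hypotheses with $\kappa=\frac{r_+-r_-}{2(r_+^2+a^2)}>0$, which fails precisely at extremality $|a|=M$.
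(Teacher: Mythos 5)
Your proposal is correct in substance, but the paper itself does not actually give a proof: it simply says ``We specialise Theorem 7.2 of~\cite{jnotes} to the Kerr case.'' So the paper's entire argument is the single observation you make at the end of your proposal, namely that the general horizon-with-positive-surface-gravity commutation theorem from the lecture notes applies because $\kappa=\frac{r_+-r_-}{2(r_+^2+a^2)}>0$ for $|a|<M$. What you have done, in effect, is reconstruct the content of that cited theorem: the inductive commutation through the null frame $\{K,Y,E_1,E_2\}$, the identification of the coefficient of $Y^{k+1}\Psi$ with (a positive multiple of) the surface gravity via $(\ref{eutuxws})$, and the bookkeeping that keeps every residual term in the index class $|{\bf m}|\le k+1$, $m_4\le k$. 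This is more detail than the paper supplies and is consistent with the mechanism it relies on.

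Two remarks worth keeping in mind. First, your observation that the $k=0$ case is trivial (forcing $\kappa_0=0$, whereas the proposition as stated asserts $\kappa_k>0$ for all $k\ge0$) is a genuine infelicity in the statement; the substantive content is $k\ge1$, and the positivity of $\kappa_k$ only matters (and is only used) on and near $\mathcal{H}^+$, where the red-shift multiplier of Proposition~\ref{ftrs} is localised by the $r\le r_{\rm red}$ cutoff -- you have this right. Second, you gloss over the actual computation of the deformation tensor of $Y$ that produces the $c\,Y^2$ term with $c|_{\mathcal{H}^+}$ a positive multiple of $\kappa$; that is precisely the nontrivial step in~\cite{jnotes}, and for a complete argument (rather than a proof plan) you would need to carry it out. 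Also note that the symbol $L$ in the sum $E_1^{m_1}E_2^{m_2}L^{m_3}Y^{m_4}$ is never introduced in the paper; from the frame $\{K,Y,E_1,E_2\}$ it should be read as $K$, which you implicitly assume but do not flag.
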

The above proposition, which is another manifestation of the red-shift effect, effectively
allows us not only to apply a transversal vector field to the horizon as a multiplier,
but also as a commutation vector field. This is fundamental for retrieving higher order statements
as in Theorem~\ref{h.o.s.}.

\subsection{An estimate for large $r$}\label{largeR}
We will also need the following estimate.
\begin{proposition}
\label{lrp}
Fix $M > 0$ and $a_0 < M$. For each
$\delta>0$, there exist positive  values
$2M < \tilde{R} <R_{\rm large}$, and positive constants $B(\delta)$
such that if $\left|a\right| \leq a_0$,
$\psi$ denotes a solution of $(\ref{WAVE})$ and $\psi_{\infty}=0$, then
for all $\tau\ge 0$
\begin{align*}
\int_{\mathcal{R}_{(0,\tau)}\cap\{r\ge R_{\rm large}\}}& r^{-1}( r^{-\delta}|\partial_r\psi|^2
 +r^{-\delta}|\partial_t\psi|^2+|\nabb \psi|^2_{\slashg}+
r^{-2-\delta}\psi^2)\\
\le& B(\delta)\int_{\Sigma_0} {\bf J}^{N}_\mu[\psi] n^\mu_{\Sigma_0}+
 B(\delta)\int_{\Sigma_\tau} {\bf J}^{N}_\mu[\psi] n^\mu_{\Sigma_\tau} + B(\delta)\int_{\mathcal{R}_{(0,\tau)}\cap\{ r \ge \tilde{R}\}}r^{1+\delta}\left|F\right|^2
 \\
&+ B(\delta)
\int_{\mathcal{R}_{(0,\tau)}\cap\{\tilde{R} \le r\le R_{\rm large}\}}
( |\partial_r\psi|^2  +|\partial_t\psi|^2+|\nabb \psi|^2_{\slashg}
+\psi^2).
\end{align*}
\end{proposition}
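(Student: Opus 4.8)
The goal here is to prove Proposition~\ref{lrp}, a large-$r$ estimate for solutions of the (possibly inhomogeneous) wave equation on Kerr, and in fact the statement includes an inhomogeneity $F=\Box_{g_{a,M}}\psi$ on the right-hand side even though the hypothesis says $\psi$ solves $(\ref{WAVE})$; the cleanest way to proceed is to prove it for the inhomogeneous equation $\Box_{g_{a,M}}\psi=F$ directly, since this is what is actually used later and costs nothing extra.

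\medn
\textbf{Approach.} The plan is to exploit the fact that for $r$ large the Kerr metric $g_{a,M}$ is a short-range asymptotically flat perturbation of Minkowski, so that the large-$r$ behaviour is governed by the same mechanism used to prove $r^p$-weighted estimates in the Minkowskian setting (cf.~\cite{icmp}, and the template currents of Section~\ref{multsandcomts}). Concretely, I would use a multiplier current of the form ${\bf J}^{V,w}_\mu[\psi]$ with $V=f(r)\,Z^*$ (equivalently $f(r)\partial_r$ in $r^*$ coordinates), where $f$ is a bounded positive function behaving like $r^{-\delta}$ for $r$ large, together with a zeroth-order modification $w$ chosen to produce the coercive $r^{-1-\delta}\psi^2$-type term via a Hardy inequality~$(\ref{hardy2})$. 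The key computation is that in the far region the bulk term ${\bf K}^{V,w}[\psi]+{\mathcal E}^{V,w}[\psi]$ is, up to lower-order errors that decay faster in $r$ and errors supported in the fixed compact annulus $\{\tilde R\le r\le R_{\rm large}\}$, bounded below by $b\,r^{-1}(r^{-\delta}|\partial_r\psi|^2 + r^{-\delta}|\partial_t\psi|^2 + |\nabb\psi|^2_{\slashg} + r^{-2-\delta}\psi^2)$ — the angular derivative term comes with the better weight $r^{-1}$ (no $\delta$ loss) exactly as in the Morawetz estimate on Minkowski, which is why the proposition is stated with $|\nabb\psi|^2_{\slashg}$ and not $r^{-\delta}|\nabb\psi|^2_{\slashg}$.

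\medn
\textbf{Steps.} First I would fix $\delta>0$ and choose $\tilde R$ large enough (depending on $a_0,M,\delta$) that all the Kerr error coefficients in the far region are dominated by the Minkowskian main terms; then choose $R_{\rm large}>\tilde R$. Second, write down the divergence identity $(\ref{ingeneralform})$ for the current ${\bf J}^{V,w}_\mu[\psi]$ in the region $\mathcal{R}_{(0,\tau)}\cap\{r\ge \tilde R\}$, whose boundary consists of $\Sigma_0\cap\{r\ge\tilde R\}$, $\Sigma_\tau\cap\{r\ge\tilde R\}$, the timelike hypersurface $\{r=\tilde R\}$, and null infinity $\mathcal{I}^+$. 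Third, check signs of the boundary terms: on $\Sigma_0$ and $\Sigma_\tau$ the flux of ${\bf J}^{V,w}$ is controlled by ${\bf J}^N_\mu[\psi]n^\mu$ (since $V$ is bounded and $w$, $\partial w$ are $O(1)$ there, absorbing the zeroth-order piece with $(\ref{hardy2})$ or $(\ref{hardy1})$); the flux on $\mathcal{I}^+$ has a good sign (one arranges $f>0$, decaying); and the flux on the \emph{inner} boundary $\{r=\tilde R\}$ is bounded by $\int_{\mathcal{R}_{(0,\tau)}\cap\{\tilde R\le r\le R_{\rm large}\}}(|\partial_r\psi|^2+|\partial_t\psi|^2+|\nabb\psi|^2_{\slashg}+\psi^2)$ after integrating in $\tau$ and using the co-area relation $(\ref{easyCoArea})$, which accounts for the last term on the right-hand side of the proposition. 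Fourth, handle the inhomogeneity: the term ${\mathcal E}^{V,w}[\psi]=-F\,V\psi - \frac14 w\,F\psi$ is estimated by Cauchy--Schwarz, absorbing the $\psi$-derivative factor into $b$ times the coercive bulk term at the cost of $B(\delta)\int r^{1+\delta}|F|^2$ (the weight $r^{1+\delta}$ is the dual of the $r^{-1-\delta}$ in the coercive zeroth-order/derivative terms), producing exactly the $F$ term displayed. Combining, and noting $\psi_\infty=0$ so there is no subtraction to worry about, gives the claimed inequality.

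\medn
\textbf{Main obstacle.} The delicate point is the precise bookkeeping of weights: one must arrange that the Kerr-specific error terms (coming from the $a$-dependent corrections to the metric, which at large $r$ are $O(M/r)$ and $O(a^2/r^2)$ relative to Minkowski) are genuinely lower order in $r$ than the Minkowskian main term, \emph{uniformly in} $|a|\le a_0$, so that choosing $\tilde R$ large absorbs them; and simultaneously that the zeroth-order term survives with a positive coefficient — this is where the choice of $w$ and the Hardy inequality $(\ref{hardy2})$ enter, and where one has to be careful not to lose more than $r^{-\delta}$. There is nothing conceptually new here beyond the standard $r^p$-method; the work is entirely in choosing $f$, $w$, $\tilde R$ so that all the inequalities close with the stated weights. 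I would also need to double-check that the statement tolerates $\psi$ being merely $H^j_{\rm loc}$ rather than compactly supported in $r$: this is fine because $\psi$ is compactly supported on each $\Sigma_\tau$ by the reduction of Section~\ref{WPosed}, so there is no boundary contribution from $r\to\infty$ beyond the $\mathcal{I}^+$ flux (which has the good sign and is simply discarded).
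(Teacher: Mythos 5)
Your proposal takes essentially the same approach as the paper's, which for this proposition simply cites \cite{dr7} for the homogeneous case and remarks that the inhomogeneous case follows identically; the underlying argument is indeed the Morawetz-type multiplier current with $f(r)\partial_{r^*}$ plus a zeroth-order correction $w$, a choice of $\tilde{R}$ large (uniformly in $|a|\le a_0$) to dominate the Kerr corrections, the Hardy inequality $(\ref{hardy2})$ for the zeroth-order coercivity, and Cauchy--Schwarz on $\mathcal{E}^{V,w}$ to produce the $r^{1+\delta}|F|^2$ term. One slip worth correcting: you say $f$ is ``a bounded positive function behaving like $r^{-\delta}$ for $r$ large,'' which reads as $f\to 0$; if that were the case the angular bulk term $-fV'|u|^2 \sim f\,r^{-1}|\nabb\psi|^2$ would acquire an extra $r^{-\delta}$, contradicting your own (correct) remark that $|\nabb\psi|^2_{\slashg}$ appears with only an $r^{-1}$ weight. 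What you actually need, and what the rest of your bookkeeping implicitly uses, is $f$ increasing to a nonzero constant at infinity with $1-f\sim r^{-\delta}$, so that $f'\sim r^{-1-\delta}$ furnishes the $r^{-1-\delta}|\partial_r\psi|^2$ coercivity while $f$ itself stays bounded away from zero.
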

The homogeneous case is treated in~\cite{dr7}, and the inhomogeneous case follows by the same argument.

\subsection{A timelike vector field}
We have noted in Section~\ref{Killingfieldsec} that the span of $T$ and $\Phi$ is
a null subspace on the horizon $\mathcal{H}^+$
and a  timelike subspace on $\mathcal{R}\setminus\mathcal{H}^+$.
These statements are in particular implied by the following  Lemma, which
will be important later.

\begin{lemma}\label{timelikeVector}
The vector field
\[T + \frac{2Mar}{\left(r^2+a^2\right)^2}\Phi\]
is timelike  in $\mathcal{R}\setminus\mathcal{H}^+$
and  null on $\mathcal{H}^+$.
\end{lemma}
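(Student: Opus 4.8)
The plan is to compute $g(V,V)$ directly in Boyer--Lindquist coordinates for the vector field $V = T + \tfrac{2Mar}{(r^2+a^2)^2}\,\Phi = \partial_t + \alpha\,\partial_\phi$ with $\alpha = \tfrac{2Mar}{(r^2+a^2)^2}$, using the explicit form $(\ref{eleme})$ of the metric. From $(\ref{eleme})$ one reads off that for a vector of the form $\partial_t + \alpha\partial_\phi$,
\[
g(V,V) = -\frac{\Delta}{\rho^2}\left(1 - \alpha a\sin^2\theta\right)^2 + \frac{\sin^2\theta}{\rho^2}\left(a - \alpha(r^2+a^2)\right)^2,
\]
since only the first and fourth terms in $(\ref{eleme})$ contribute (there is no $dr$ or $d\theta$ component). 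With the specific choice of $\alpha$, the factor $a - \alpha(r^2+a^2) = a - \tfrac{2Mar}{r^2+a^2} = \tfrac{a(r^2+a^2 - 2Mr)}{r^2+a^2} = \tfrac{a\,\Delta'}{r^2+a^2}$ where I write $\Delta' \doteq r^2+a^2 - 2Mr = \Delta - a^2 + \ldots$; more precisely one should just note $r^2 + a^2 - 2Mr = \Delta$ is false in general, so let me instead keep it as $r^2+a^2-2Mr$. The point is that this combination is manifestly proportional to $\Delta$ only on the horizon; away from it one needs the genuine inequality.

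So the key step is to show $g(V,V) < 0$ on $\mathcal{R}\setminus\mathcal{H}^+$, i.e. $r > r_+$, and $g(V,V) = 0$ on $r = r_+$. Substituting the two computed factors, $g(V,V)$ becomes $\rho^{-2}$ times
\[
-\Delta\left(1 - \tfrac{2Ma^2 r\sin^2\theta}{(r^2+a^2)^2}\right)^2 + \sin^2\theta\,\frac{a^2(r^2+a^2-2Mr)^2}{(r^2+a^2)^2}.
\]
One then clears the common denominator $(r^2+a^2)^4$ and is left with comparing $\Delta\big((r^2+a^2)^2 - 2Ma^2 r\sin^2\theta\big)^2$ against $a^2\sin^2\theta\,(r^2+a^2)^2(r^2+a^2-2Mr)^2$. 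Since $\Delta = (r-r_+)(r-r_-)$ vanishes to first order precisely at $r = r_+$ (and $\Delta>0$ for $r>r_+$), and since on the horizon $r_+^2 + a^2 = 2Mr_+$ so that the second factor $r^2+a^2-2Mr = r^2+a^2-2Mr$ vanishes there as well — wait, $r_+^2+a^2-2Mr_+ = 0$ indeed, so both sides vanish on $\mathcal{H}^+$ — the claim reduces to a polynomial inequality in $r$ (for $r > r_+$) and $\sin^2\theta \in [0,1]$.

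The main obstacle, then, is the elementary but slightly delicate verification of this polynomial inequality for all $r > r_+$ and all $\theta$. I expect to handle it by observing that it suffices to treat the worst case $\sin^2\theta = 1$ (the left side is decreasing and the right side increasing in $\sin^2\theta$ on the relevant range, or one checks the quadratic in $\sin^2\theta$ directly), and then factoring out the obvious common zero at $r = r_+$. Concretely, with $\sin^2\theta=1$ one compares $\Delta\big((r^2+a^2)^2 - 2Ma^2 r\big)^2$ with $a^2(r^2+a^2)^2(r^2+a^2-2Mr)^2$; both vanish at $r_+$, and dividing through by $(r-r_+)$ and examining signs (using $|a|<M$, hence $r_+ > M > 0$ and $r_+ r_- = a^2$) yields the strict inequality for $r>r_+$. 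An alternative, cleaner route is to note that $T + \upomega_+\Phi = K$ is null on $\mathcal{H}^+$ and that $V$ agrees with a timelike member of the $\mathrm{span}(T,\Phi)$ family interior to $\mathcal{R}$: since by Section~\ref{Killingfieldsec} this span is timelike off the horizon, it is enough to exhibit that $V$ lies strictly inside the relevant cone, which is exactly what the sign computation above certifies. I would present the direct computation as the proof, as it is self-contained and gives the boundary behaviour for free.
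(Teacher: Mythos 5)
Your approach is the same as the paper's --- compute $g(V,V)$ directly from $(\ref{eleme})$, treat the horizon by recognising the null generator, and reduce to a polynomial inequality in $r$ and $\sin^2\theta$ --- and the reduction to the worst case $\sin^2\theta = 1$ is correct. But there is a factual slip worth correcting first: $r^2 + a^2 - 2Mr$ \emph{is} $\Delta$, so your hedge that this identity ``is false in general'' is itself false. Fixing it gives $a - \alpha(r^2+a^2) = a\Delta/(r^2+a^2)$, so $g(V,V)$ factors as $\Delta/\rho^2$ times a bracket; the $\Delta$ factor out front supplies the null statement on $\mathcal{H}^+$ for free, and the problem becomes to show the bracket is negative for $r > r_+$.

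More importantly, the proposal stops exactly where the work starts. Asserting that ``dividing through by $(r-r_+)$ and examining signs yields the strict inequality'' is not an argument: after cancelling the single common $\Delta$, the squared factor on your left-hand side does not vanish at $r_+$ while your right-hand side still does, so the proposed factorisation does not line up, and sign-counting alone does not settle the inequality for general $r > r_+$. The paper closes the gap with a concrete device: after dropping a manifestly helpful negative term it suffices to show $-\Delta + a^2 - 4M^2r^2a^2/(r^2+a^2)^2 < 0$, and the substitution $(r^2+a^2)^2 = (\Delta + 2Mr)^2$ turns this (after clearing denominators and dividing by $\Delta > 0$) into $-\Delta^2 - (4Mr - a^2)\Delta - 4Mr(Mr - a^2) < 0$, in which each term is separately negative because $r > M > a$. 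Your proof has a genuine gap until a verification of this kind is actually produced.
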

\begin{proof}
On the horizon $\mathcal{H}^+$, the vector field reduces to
\[T + \frac{a}{2Mr_+}\Phi,\]
which can immediately be seen to be its standard null generator.

Off the horizon, computing $g\left(T + \frac{2Mar}{\left(r^2+a^2\right)^2}\Phi,T + \frac{2Mar}{\left(r^2+a^2\right)^2}\Phi\right)$ in Boyer-Lindquist coordinates reduces the assertion to checking that
\[-\Delta + \sin^2\theta\left(a^2 - \frac{4M^2r^2a^2}{(r^2+a^2)^2} - \frac{4M^2r^2a^4\sin^2\theta\Delta}{(r^2+a^2)^4}\right) < 0.\]
It suffices to consider the case where the quantity in parentheses is positive. In this case, it is sufficient to check that
\begin{equation}\label{check}
-\Delta + a^2 - \frac{4M^2r^2a^2}{(r^2+a^2)^2}  < 0.
\end{equation}

Observe that
\[(r^2+a^2)^2 = (\Delta + 2Mr)^2 = \Delta^2 + 4Mr\Delta + 4M^2r^2.\]
Multiplying through by $(r^2+a^2)^2$ then reduces~(\ref{check}) to
\[
-\Delta^3 - 4Mr\Delta^2 - 4M^2r^2\Delta + a^2\Delta^2 + 4Mra^2\Delta < 0 \Leftrightarrow
-\Delta^2 - (4Mr-a^2)\Delta - 4Mr(Mr-a^2) < 0.
\]
Now it suffices to observe the inequalities $r > M>a$.
\end{proof}

We also note the following, which actually is a trivial consequence
of positivity of surface gravity~$(\ref{eutuxws})$.

\begin{lemma}\label{horizonTimelike}There exists $\epsilon_0  > 0$ such that the vector field
\[T + \frac{a}{2Mr_+}\Phi\]
is timelike for $r \in (r_+,r_++\epsilon_0)$.
\end{lemma}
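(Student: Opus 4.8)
The plan is to reduce the lemma to a sign statement about the function $f\doteq g_{a,M}(K,K)$, where $K=T+\upomega_+\Phi$ is the null generator of $\mathcal{H}^+$ from Section~\ref{Killingfieldsec}; since $\mathcal{H}^+=\{r=r_+\}$ is a Killing horizon, $f$ vanishes identically there, and showing that $K$ is timelike on $\{r_+<r<r_++\epsilon_0\}$ amounts to showing $f<0$ on that set. It therefore suffices to prove that $f$ decreases strictly in the outward radial direction across $\mathcal{H}^+$, uniformly in the angular variable; integrating $\partial_r f$ outward from $r_+$ (where $f=0$) and using continuity of $\partial_r f$ together with compactness of $\mathbb{S}^2$ and $\varphi_\tau$-invariance (so that nothing depends on $t^*$) then yields the required $\epsilon_0=\epsilon_0(a,M)>0$.

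The conceptual reason this works --- and the content of the remark preceding the statement --- is that the outward derivative of $f$ along $\mathcal{H}^+$ is governed by the surface gravity: since $K$ is Killing one has $\nabla_\mu f=-2(\nabla_K K)_\mu$, which on $\mathcal{H}^+$ equals $-2\kappa K_\mu$ by~\eqref{eutuxws}, so the derivative of $f$ along the transversal outward field $Z^*$ equals $-2\kappa\,g(K,Z^*)$ on $\mathcal{H}^+$. Since $\kappa>0$ by~\eqref{eutuxws} and $g(K,Z^*)>0$ on $\mathcal{H}^+$ --- the one sign that needs checking, reflecting that the future-directed null normal $K$ pairs positively with a vector pointing into the exterior $\{r>r_+\}$ --- we get $Z^*f<0$ on $\mathcal{H}^+$, which is exactly the monotonicity needed.

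In practice I would just verify this by a direct Boyer--Lindquist computation, in the spirit of the proof of Lemma~\ref{timelikeVector} just above. Setting $P\doteq\rho^2 f$ and using $\rho^2 g_{tt}=-(\Delta-a^2\sin^2\theta)$, $\rho^2 g_{t\phi}=-2Mar\sin^2\theta$, $\rho^2 g_{\phi\phi}=\sin^2\theta\big((r^2+a^2)^2-\Delta a^2\sin^2\theta\big)$ and $\upomega_+=a/(2Mr_+)$, one obtains
\[
P=-\Delta\Big(1+\frac{a^4\sin^4\theta}{4M^2r_+^2}\Big)+a^2\sin^2\theta\,\Big(1-\frac{2r}{r_+}+\frac{(r^2+a^2)^2}{4M^2r_+^2}\Big),
\]
so that $P|_{r=r_+}=0$ (both summands vanish, using $\Delta(r_+)=0$ and $r_+^2+a^2=2Mr_+$), and, differentiating in $r$ and using $\Delta'(r_+)=r_+-r_-=2\sqrt{M^2-a^2}$ together with $r_+-M=\sqrt{M^2-a^2}$,
\[
\partial_r P(r_+,\theta)=-2\sqrt{M^2-a^2}\,\Big(1-\frac{a^2\sin^2\theta}{2Mr_+}\Big)^2.
\]
The only point requiring any care --- the ``main obstacle'', such as it is --- is that this be strictly negative \emph{uniformly} in $\theta$: this holds because $|a|<M$ makes $\sqrt{M^2-a^2}>0$, while $a^2\sin^2\theta\le a^2<r_+^2+a^2=2Mr_+$ makes $\big(1-\tfrac{a^2\sin^2\theta}{2Mr_+}\big)^2\ge\big(\tfrac{r_+^2}{r_+^2+a^2}\big)^2>0$ (in particular there is no degeneracy at the poles $\theta\in\{0,\pi\}$). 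Hence $\partial_r P<0$ throughout $\{r=r_+\}$; by continuity of $\partial_r P$ and compactness of $\{r_+\}\times\mathbb{S}^2$ there is $\epsilon_0>0$ with $\partial_r P<0$ on $\{r_+\le r\le r_++\epsilon_0\}$, and integrating from $r_+$ gives $P<0$, hence $f=P/\rho^2<0$, on $\{r_+<r<r_++\epsilon_0\}$, as desired.
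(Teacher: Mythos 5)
Your proof is correct and follows essentially the same route as the paper: a direct Boyer--Lindquist computation of $g(K,K)$ showing that it vanishes at $r=r_+$ and has strictly negative $r$-derivative there. The only difference is that you retain the $\theta$-dependence throughout and arrive at the clean factored form $\partial_r(\rho^2 g(K,K))|_{r=r_+}=-2\sqrt{M^2-a^2}\,\bigl(1-\tfrac{a^2\sin^2\theta}{2Mr_+}\bigr)^2$, whereas the paper discards the $\sin^2\theta$ factors and works with an auxiliary $\theta$-independent function $F(r)$, leaving it to the reader to note that positivity of $G(r)=a^2-\tfrac{2a^2r}{r_+}+\tfrac{a^2(r^2+a^2)^2}{4M^2r_+^2}$ near $r_+$ lets one bound $\sin^2\theta\,G(r)\le G(r)$; your variant is slightly more self-contained, and the perfect-square form makes the (uniform-in-$\theta$) strictness transparent. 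The conceptual preamble via $\nabla_\mu f=-2\kappa K_\mu$ on $\mathcal{H}^+$ accurately unpacks what the paper's surrounding remark (``a trivial consequence of positivity of surface gravity'') alludes to but does not spell out.
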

\begin{proof}A computation gives
\[g\left(T+\frac{a}{2Mr_+}\Phi,T+\frac{a}{2Mr_+}\Phi\right) = \rho^{-2}\left(-\Delta + \sin^2\theta\left(a^2 - \frac{2a^2r}{r_+} + \frac{a^2\left(r^2+a^2\right)^2}{4M^2r_+^2} - \frac{a^4\sin^2\theta\Delta}{4M^2r_+^2}\right)\right).\]
Consider the function
\[F(r) := -\Delta + a^2 - \frac{2a^2r}{r_+} + \frac{a^2\left(r^2+a^2\right)^2}{4M^2r_+^2}.\]
The lemma follows noting that $F(r_+) = 0$ and
\begin{align}
\frac{dF}{dr}(r_+) &= -(r_+-r_-) - \frac{2a^2}{r_+} + \frac{2a^2}{M}
 \\ \nonumber &= -2\sqrt{M^2-a^2} - \frac{2a^2}{M + \sqrt{M^2-a^2}} + \frac{2a^2}{M}
 \\ \nonumber &= \frac{1}{M(M + \sqrt{M^2-a^2})}\left(-2M(M^2 - a^2) -2\sqrt{M^2-a^2}(M^2-a^2)\right)
 \\ \nonumber &< 0.
\end{align}
\end{proof}

\section{The sufficiently integrable outgoing class and Carter's separation}
\label{cartersupersection}
In this section we will define a suitable class of functions
$\Psi$ for which
one can apply Carter's separation, and for which moreover,
the resulting radial ordinary differential equation for $u$ will
satisfy  appropriate ``outgoing'' boundary conditions.
We shall define this  class in Section~\ref{suffInt} below, and then   review
the separation in Section~\ref{karteri}.

\subsection{The class of sufficiently integrable outgoing functions}
\label{suffInt}

We define here a class of functions $\Psi$ for which frequency analysis
 is well defined.
To give some motivation for the class, we give a brief preview of how
separation will be applied.

As described in the introduction,
the frequency analysis necessary for our proof of integrated local energy
decay requires taking a Fourier transform in $t$.
However, a priori, solutions $\psi$ to the wave equation $(\ref{WAVE})$
could even grow exponentially in time. We shall at first instance, however,
restrict to smooth solutions of the wave equation (arising from compactly supported
data) which are \emph{assumed}
to be $L^2$ in time in the future of $\Sigma_0$.\footnote{Recall that in~\cite{dr7} the Fourier transform was only applied to Schwartz functions in $t$. The added flexibility gained by working with square integrable functions in $t$ will be crucial for the continuity argument (see Section~\ref{continuityargsec}).} We shall first appeal
to our estimates with $\Psi = \xi\psi$
where $\psi$ is a solution which is known to be future
integrable, and $\xi(\tau)$ is a suitable
cutoff so that $\xi=1$ for $\tau \ge 1$ and $\xi=0$ for $\tau\le 0$.
See Proposition~\ref{closedILED}.
Note that $\Psi$ satisfies an inhomogeneous equation
\begin{equation}
\label{newinh}
\Box_{g_{a,M}}\Psi=F
\end{equation}
with compactly supported $F$, and $\Psi=0$ to the past of $\Sigma_0$.

In the context of
the openness argument, we shall apply estimates to $\Psi = \xi\psi$ with two different choices of $\psi$ and the cutoff $\xi(\tau)$. First, we will take $\psi$ to solve the wave equation~(\ref{WAVE}), and the cutoff $\xi(\tau)$ will be taken to vanish for $\tau \leq 0$ and $\tau \geq \tau_0$. Next, we will take $\psi$ to be be a solution of $\Box_g\psi=0$ where $g$ is an ``interpolating metric'' between $g_{a,M}$ and a $g_{\mathring{a},M}$,
with interpolation region between hypersurfaces $\Sigma_{\tau-\delta_0}$
and $\Sigma_{\tau}$, and $\xi(\tau)$
will be take to equal $1$ for $\tau \ge 1$ and equal $0$ for $\tau \le 0$.
This will make $\Psi$ an $L^2$ in time solution of the inhomogeneous
$(\ref{newinh})$, where again $F$ is compactly supported in spacetime
and $\Psi=0$ in the past of $\Sigma_0$. See Proposition~\ref{cutoffILED} and Section~\ref{interpolatingSection}.

In the closedness argument, we shall be able to work with solutions $\psi$ which are \emph{a priori} square integrable in time. Thus, we shall apply our estimates to $\Psi = \xi\psi$ where $\xi(\tau)$ is an appropriate cutoff such that $\xi = 1$ for $\tau \geq 1$ and $\xi = 0$ for $\tau \leq 0$. Then, $\Psi$ will satisfy an inhomogeneous equation~(\ref{newinh}) with a compactly supported right hand side, and $\Psi = 0$ to the past of $\Sigma_0$.

Finally, in the context of the boundedness argument, $\psi$ has already been proven
to be $L^2$ in time (both to the future, and, after suitable extension, to the past),
and the argument is applied to $\Psi=\tilde\chi\psi$ where $\tilde\chi(r)$ is a cutoff in $r$
away from the horizon and null infinity. See Section~\ref{boundSuff}.

In all these cases, we note that we apply frequency analysis to
 $\Psi$ which satisfies
(a) $\Psi\left(r\right)$ is square integrable in $t$ for each $r \in [r_+,\infty)$.
and (b) $\Psi$ is supported away from the past event horizon $\mathcal{H}^-$
and ``past null infinity'' $\mathcal{I}^-$ (cf.~the Penrose diagram
of Section~\ref{boundedandiled}), with
$\Box_{g_{a,M}}\Psi$ vanishing for sufficiently large $r$.
It is these properties that motivate the following definitions.

\subsubsection{Sufficiently integrable}
We first introduce the most basic integrability assumptions that will allow us
to apply the separation and make sense of the radial o.d.e.~$(\ref{oderedu})$ discussed already
in Section~\ref{reviewsr}.

\begin{definition}\label{sufficient}
Let $a_0<M$, $|a|\le a_0$ and let $g=g_{a,M}$.
We say that a smooth function
$\Psi:\mathcal{R}\to \mathbb R$ is ``sufficiently integrable'' if
for every $j \geq 1$ and $A > r_+$, we have
            \[\sup_{r \in [r_+,A]}\int_{-\infty}^{\infty}\int_{\mathbb{S}^2}\sum_{0 \leq i_1 + i_2 + i_3 \leq j}\left|\nabb^{i_1}T^{i_2}\left(Z^*\right)^{i_3}\Psi\right|^2\sin\theta\, dt\, d\theta\, d\phi < \infty,\]

            \[\sup_{r \in [r_+,A]}\int_{-\infty}^{\infty}\int_{\mathbb{S}^2}\sum_{0 \leq i_1 + i_2 + i_3 \leq j}\left|\nabb^{i_1}T^{i_2}\left(Z^*\right)^{i_3}\Box_g\Psi\right|^2\sin\theta\, dt\, d\theta\, d\phi < \infty.\]
\end{definition}
\begin{remark}Observe that each fixed-$r$ integral is unchanged under the change of variables $t \mapsto t^*$ and $\phi \mapsto \phi^*$.
\end{remark}

\subsubsection{The ``outgoing'' condition}
\label{outgoingcondsec}
We next introduce an assumption that will imply that solutions of the radial ODE $(\ref{oderedu})$ have
``outgoing'' boundary conditions.

\begin{definition}
\label{sufficient2}
Let $a_0<M$ and $|a|\le a_0$.
We shall say a smooth function $\Psi$ is ``outgoing'' if there exists an $\epsilon>0$
such that
$\Psi$ vanishes in $\Sigma_\tau \cap \{r\le r_++\epsilon\}$ and
$\Sigma_\tau \cap\{r\ge \epsilon^{-1}\}$ for all $\tau \le -\epsilon^{-1}$, and $\Box_{g_{a,M}}\Psi$ vanishes for sufficiently large $r$.
\end{definition}

We shall see the significance of each of these assumptions individually in
Sections~\ref{karteri} and~\ref{boundarySection}  below.
From Section~\ref{summation} onward, we shall always work in the class
defined by assuming \emph{both}
Definitions~\ref{sufficient} and~\ref{sufficient2},
 i.e.~$\Psi$ will always be a sufficiently integrable outgoing function.

\begin{remark}Of course, one could significantly weaken the assumptions in Definitions~\ref{sufficient} and~\ref{sufficient2};
however, this class of functions is very easy to work with, and weakening the assumptions would not simplify the proof of Theorem~\ref{theResult}.
\end{remark}

\subsection{Review of Carter's separation}
\label{karteri}

As we have already discussed in our summary of the first parts of the series
in~Section~\ref{reviewsr},  we shall
view Carter's separation of the wave equation as a
convenient geometric framework
for  frequency-localising energy estimates, closely
tied to the Kerr geometry. In the present section,
we shall review the relevant formalism from~\cite{dr7}.

\subsubsection{The oblate spheroidal harmonics}
\label{oblat}
Let $\nu \in \mathbb{R}$. We begin by recalling the collection
\[
\{S_{m\ell}(\nu,\cos \theta)e^{im\phi}\}_{m\ell}
\]
of the eigenfunctions of the
self-adjoint operator
\[
P(\nu)\, f= -\frac 1{\sin\theta} \frac{\partial}{\partial\theta} \left (\sin\theta \frac{\partial}{\partial\theta}f\right)
-\frac{\partial^2 f}{\partial\phi^2}\frac{1}{\sin^2\theta}
- \nu^2 \cos^2\theta f
\]
on $L^2(\sin\theta\, d\theta\, d\phi)$.
These form a complete orthonormal basis on
$L^2(\sin\theta\, d\theta\, d\phi)$. The eigenfunctions are parametrised by a set of real discrete eigenvalues $\lambda^{(\nu)}_{m\ell}$
\begin{equation}
\label{eigenvals}
P(\nu)\, S_{m\ell}(\nu,\cos \theta)e^{im\phi}=\lambda_{m\ell}(\nu) S_{m\ell}(\nu,\cos \theta)e^{im\phi},
\end{equation}
which have the property that
\begin{equation}\label{eq:lam}
\lambda_{m\ell}^{(\nu)}+\nu^2\ge |m|(|m|+1),
\end{equation}
\begin{equation}\label{lamBound}
\lambda_{m\ell}^{(\nu)} + \nu^2 \geq 2\left|m\nu\right|.\footnote{This follows immediately from the variational characterization of $\lambda_{m\ell}$ and the inequality $\frac{m^2}{\sin^2\theta} + \nu^2\sin^2\theta \geq 2\left|m\nu\right|$.}
\end{equation}
For $\nu=0$ the {\it oblate spheroidal harmonics} $S_{m\ell}(\nu,\cos \theta)e^{im\phi}$
reduce to the
standard spherical harmonics $Y_{m\ell}$ with the corresponding
eigenvalues $\lambda^{(0)}_{m\ell}=\ell(\ell+1)$.

\subsubsection{The coefficients $\Psi_{m\ell}^{(a\omega)}$}\label{separationSubsection}
Given parameters $a$, $M$, let
 $\Psi$ be ``sufficiently integrable'' in the sense of Definition~\ref{sufficient}.
We write
\[
\Psi(t,r,\theta,\phi)=\frac{1}{\sqrt{2\pi}}\int_{-\infty}^\infty e^{-i\omega t} \widehat\Psi (\omega,r,\theta,\phi) d\omega,
\]
and then, setting $\nu = a\omega$ for each $\omega\in \mathbb R$, further decompose
\[
\widehat\Psi(\omega, r,\theta,\phi) = \sum_{m\ell} \Psi^{(a\omega)}_{m\ell}(r)
S_{m\ell}(a\omega,\cos \theta)e^{im\phi},
\]
to arrive at
\[
\Psi(t,r,\theta,\phi)=\frac{1}{\sqrt{2\pi}}\int_{-\infty}^\infty\sum_{m\ell} e^{-i\omega t} \Psi^{(a\omega)}_{m\ell}(r)S_{m\ell}(a\omega,\cos\theta)e^{im\phi} d\omega.
\]
The sufficiently integrable assumption implies that for each $r$, the first and third equality may be interpreted in $L^2_tL^2_{\mathbb{S}^2}$, while the second equality may be interpreted in $L^2_{\omega}L^2_{\mathbb{S}^2}$.

Furthermore, if $\Psi$ satisfies Definition~\ref{sufficient}, then so do
$\partial_t\Psi$, $\partial_\phi\Psi$
and, by the well-known properties of the Fourier
transform, we
have
\[
\partial_t\Psi(t,r,\theta,\phi)=\frac{-i}{\sqrt{2\pi}}\int_{-\infty}^\infty\sum_{m\ell} \omega e^{-i\omega t} \Psi^{(a\omega)}_{m\ell}(r)S_{m\ell}(a\omega,\cos\theta)e^{im\phi} d\omega.
\]
\[
\partial_{\phi}\Psi(t,r,\theta,\phi)=\frac{i}{\sqrt{2\pi}}\int_{-\infty}^\infty\sum_{m\ell}m e^{-i\omega t} \Psi^{(a\omega)}_{m\ell}(r)S_{m\ell}(a\omega,\cos\theta)e^{im\phi} d\omega.
\]
As above, for each $r$ these equalities may be interpreted in $L^2_tL^2_{\mathbb{S}^2}$.

Let us take the opportunity to observe the following consequences of Plancherel's formula and the orthonormality of the $S_{ml}(a\omega,\cos\theta)e^{-im\phi}$:
\begin{align*}
&\int_0^{2\pi}\int_0^\pi\int_{-\infty}^{\infty}
 |\Psi|^2(t,r,\theta,\varphi) \sin\theta\, d\varphi\, d\theta\, dt= \int_{-\infty}^\infty \sum_{m\ell}
|\Psi^{(a\omega)}_{m\ell}(r)|^2\, d\omega,\\
&\int_0^{2\pi}\int_0^\pi\int_{-\infty}^{\infty} \Psi\cdot\overline{\Upsilon} \sin\theta d\varphi\, d\theta\, dt=
\int_{-\infty}^\infty\sum_{m\ell}  \Psi^{(a\omega)}_{m\ell}\cdot\bar \Upsilon^{(a\omega)}_{m\ell}
d\omega,\\
&\int_0^{2\pi}\int_0^\pi\int_{-\infty}^{\infty}
|\partial_r\Psi|^2(t,r,\theta,\varphi) \sin\theta\, d\varphi\, d\theta\, dt = \int_{-\infty}^\infty \sum_{m\ell}
\left|\frac{d}{dr}\Psi^{(a\omega)}_{m\ell}(r)\right|^2
d\omega,\\
&\int_0^{2\pi}\int_0^\pi\int_{-\infty}^{\infty}
|\partial_t\Psi|^2(t,r,\theta,\varphi) \sin\theta\, d\varphi\, d\theta\, dt = \int_{-\infty}^\infty \sum_{m\ell}
\omega^2 |\Psi^{(a\omega)}_{m\ell}(r)|^2
d\omega.
\end{align*}

Finally, we note that a straightforward integration by parts, Plancherel and the orthonormality of $S_{m\ell}(a\omega, \cos\theta)e^{im\phi}$ imply
\begin{align*}
\int_0^{2\pi}\int_0^\pi\int_{-\infty}^{\infty}
&\left[\left|\frac{\partial \Psi}{\partial\theta}\right|^2 + \left|\frac{\partial \Psi}{\partial\phi}\sin^{-1}\theta\right|^2\right](t,r,\theta,\varphi) \sin\theta\, d\varphi\, d\theta\, dt  \\ &=\int_{-\infty}^\infty \sum_{m,\ell}
\lambda_{m\ell}^{(a\omega)}|\Psi^{(a\omega)}_{m\ell}(r)|^2
d\omega
-a^2\int_0^{2\pi}\int_0^\pi\int_{-\infty}^{\infty}\cos^2\theta\left|\partial_t\Psi\right|^2 \sin\theta\, d\varphi\, d\theta\, dt.
\end{align*}

\subsubsection{The radial ordinary differential equation and the potential $V$}\label{theRadialODEPot}

If  $\Psi$ is sufficiently integrable in the sense of Definition~\ref{sufficient},
define
\begin{equation}
\label{homogdefeq}
F=\Box_g\Psi.
\end{equation}
for $g=g_{a,M}$ a Kerr metric with $|a|<M$.

The sufficiently integrable assumption implies that we may  define  the coefficients
$\Psi_{m\ell}^{(a\omega)}(r)$, $\left(\rho^2F\right)_{m\ell}^{(a\omega)}(r)$ as above.

Carter's formal separation~\cite{cartersep2} of the wave operator yields:
\begin{proposition}
Let $a_0<M$, $|a|\le a_0$,
 $\Psi$ be sufficiently integrable, and let $F$ be defined by $(\ref{homogdefeq})$. Then
\begin{align}
\label{CartersODE}
\Delta \frac{d}{dr} \left (\Delta \frac{d\Psi_{m\ell}^{(a\omega)}}{dr}\right)& + \left (a^2m^2 + (r^2+a^2)^2\omega^2-4Mra\omega m - \Delta (\lambda_{m\ell}+a^2\omega^2) \right) \Psi_{m\ell}^{(a\omega)}=\Delta\,
\left(\rho^2F\right)_{m\ell}^{(a\omega)}.
\end{align}
Note that the sufficiently integrable assumption allows us to interpret this equality for each $r$ in $L^2_{\omega}l^2_{m\ell}$.
\end{proposition}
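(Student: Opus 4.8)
The plan is to carry out Carter's classical separation of the Kerr wave operator, the only point requiring genuine attention being that Definition~\ref{sufficient} licenses the formal manipulations. The key algebraic starting point is the identity obtained by writing $\Box_g$ in Boyer--Lindquist coordinates and multiplying by $\rho^2$ — precisely the factor obstructing separability, since $\Box_g$ carries an overall $\rho^{-2}$ — namely
\[
\rho^2\Box_g\Psi=\partial_r\!\left(\Delta\,\partial_r\Psi\right)+\left(a^2\sin^2\theta-\frac{(r^2+a^2)^2}{\Delta}\right)\partial_t^2\Psi-\frac{4Mar}{\Delta}\partial_t\partial_\phi\Psi-\frac{a^2}{\Delta}\partial_\phi^2\Psi-P(0)\Psi,
\]
where $P(0)$ is the operator of Section~\ref{oblat} at $\nu=0$ (minus the round-sphere Laplacian, with eigenvalues $\ell(\ell+1)$). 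All coefficients here are functions of $r$ and $\theta$ only, and $t$, $\phi$ enter solely through $\partial_t$, $\partial_\phi$, so the operator is in separated form.

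Next I would Fourier transform the equality $\rho^2\Box_g\Psi=\rho^2F$ in $t$ and expand $\widehat\Psi(\omega,r,\cdot)$ in the orthonormal basis $\{S_{m\ell}(a\omega,\cos\theta)e^{im\phi}\}$ of $L^2(\sin\theta\,d\theta\,d\phi)$, projecting onto each basis element. Under $\partial_t\mapsto-i\omega$ and $\partial_\phi\mapsto im$, the first three displayed $r$--$\theta$ coefficients contribute $\frac{(r^2+a^2)^2\omega^2}{\Delta}$, $-\frac{4Mra\omega m}{\Delta}$ and $\frac{a^2m^2}{\Delta}$ respectively, while by $(\ref{eigenvals})$ together with $P(0)=P(a\omega)+a^2\omega^2\cos^2\theta$ the last term contributes $-\lambda_{m\ell}(a\omega)-a^2\omega^2\cos^2\theta$ against $S_{m\ell}(a\omega,\cos\theta)e^{im\phi}$. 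The one algebraic fact worth recording is that the two remaining $\theta$-dependent pieces combine: $a^2\sin^2\theta\,\partial_t^2\mapsto-a^2\omega^2\sin^2\theta$, which together with $-a^2\omega^2\cos^2\theta$ sums to the constant $-a^2\omega^2$, leaving no residual angular coupling after projection. Collecting terms gives
\[
\left(\rho^2\Box_g\Psi\right)^{(a\omega)}_{m\ell}=\frac{d}{dr}\!\left(\Delta\frac{d\Psi^{(a\omega)}_{m\ell}}{dr}\right)+\left(\frac{(r^2+a^2)^2\omega^2-4Mra\omega m+a^2m^2}{\Delta}-a^2\omega^2-\lambda_{m\ell}\right)\Psi^{(a\omega)}_{m\ell},
\]
and multiplying by $\Delta$, using $\left(\rho^2\Box_g\Psi\right)^{(a\omega)}_{m\ell}=\left(\rho^2F\right)^{(a\omega)}_{m\ell}$, yields exactly $(\ref{CartersODE})$.

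The substantive work — and the place I expect all the care to be needed, rather than in the purely algebraic separation — is the justification of differentiating under the $\omega$-integral and the $(m,\ell)$-sum and of commuting $\frac{d}{dr}$ with both. This I would carry out from Definition~\ref{sufficient}: those bounds are uniform on compact $r$-intervals and control $T^{i_2}(Z^*)^{i_3}$ and angular derivatives of $\Psi$ (and of $\Box_g\Psi$) to all orders, and since on any compact $r$-interval in $(r_+,\infty)$ the Boyer--Lindquist coordinate vector field $Z$ differs from $Z^*$ by bounded smooth multiples of $T$ and $\Phi$, one also controls the $\partial_r$ derivatives appearing in the radial operator. Combined with the Plancherel identities already recorded in Section~\ref{separationSubsection}, this makes each interchange legitimate and shows the resulting identity holds, for each $r$, in $L^2_\omega\ell^2_{m\ell}$. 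Since the homogeneous version of this computation is carried out in~\cite{dr7}, the inhomogeneous statement here requires only tracking the extra $\rho^2F$ term through the transform and projection, which introduces no new difficulty.
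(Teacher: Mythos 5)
Your proposal is correct and is exactly the standard Carter separation, which is what the paper intends: the paper does not write out a proof at all, deferring the algebra to the citation \cite{cartersep2} (and to \cite{dr7} for the analytic set-up) and simply recording that sufficient integrability licenses interpreting the identity in $L^2_{\omega}l^2_{m\ell}$. Your display for $\rho^2\Box_g\Psi$, the identification $P(0)=P(a\omega)+a^2\omega^2\cos^2\theta$, the cancellation $-a^2\omega^2\sin^2\theta-a^2\omega^2\cos^2\theta=-a^2\omega^2$, and the resulting projected radial operator all check out and reproduce~(\ref{CartersODE}) exactly, and your appeal to Definition~\ref{sufficient} and the Plancherel identities of Section~\ref{separationSubsection} is the intended justification for the interchanges.
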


Using the definition $(\ref{r*def})$ of $r^*$ and setting
\begin{equation}\label{uDef}
u^{(a\omega)}_{m\ell}(r)=(r^2+a^2)^{1/2}
 \Psi^{(a\omega)}_{m\ell} (r),
\end{equation}
\begin{equation}\label{hDef}
 H^{(a\omega)}_{m\ell}(r)=\frac{\Delta \left(\rho^2F\right)^{(a\omega)}_{m\ell}(r)}{(r^2+a^2)^{3/2}},
\end{equation}
we obtain
\begin{equation}
\label{e3iswsntouu}
\frac{d^2}{(dr^*)^2}u^{(a\omega)}_{m\ell}+(\omega^2 - V^{(a\omega)}_{m\ell }(r))u =
H^{(a\omega)}_{m\ell}
\end{equation}
where
\begin{equation}
\label{defofV}
V^{(a\omega)}_{m \ell}(r)= \frac{4Mram\omega-a^2m^2+\Delta (\lambda_{m\ell}+\omega^2a^2)}{(r^2+a^2)^2}
+\frac{\Delta(3r^2-4Mr+a^2)}{(r^2+a^2)^3}
-\frac{3\Delta^2 r^2}{(r^2+a^2)^4}.
\end{equation}
In the Schwarzschild case:
\begin{equation}
\label{at0freq}
V^{(0\omega)}_{m\ell}(r) = (r-2M)\left(\frac{\ell\left(\ell+1\right)}{r^3}+\frac{2M}{r^4}\right),
\end{equation}
\begin{equation}
\label{at0freq'}
\left(\frac{dV}{dr^*}\right)^{(0\omega)}_{m\ell}(r)= \frac{r-2M}{r}\left(\frac{2\ell\left(\ell+1\right)(3M-r)}{r^4}+\frac{2M(8M-3r)}{r^5}\right).
\end{equation}
Again, for each $r$,~(\ref{e3iswsntouu}) is to be interpreted in $L^2_{\omega}l^2_{m\ell}$.

\subsubsection{Notational conventions}
Following well-established convention,
in what follows, as in~\cite{dr7}, we shall suppress the dependence of
$u$, $H$ and $V$ on $a\omega$, $m$, $\ell$ in our notation.
We will also use the notation
\begin{equation}
\label{primenotation}
'=\frac{d}{dr^*}.
\end{equation}

\underline{We repeat the following warning from~\cite{dr7}:} Since for fixed $g_{a,M}$,
$r$ is a smoothly invertible function of $r^*$, we
shall often refer to $r^*$-ranges by their corresponding
$r$-ranges (in particular, given an $r$-parameter such as $R$,
we shall very often use the notation $R^*=r^*(R)$ without further
comment), and we shall express
functions appearing in most estimates as functions  of $r$.
The derivative $'$ always is to denote $(\ref{primenotation})$!

\subsection{Boundary conditions}\label{boundarySection}
In this section, we shall establish boundary conditions for the radial ODE $(\ref{e3iswsntouu})$. We will require Definitions~\ref{sufficient} and~\ref{sufficient2}.

\begin{lemma}\label{boundHorizon}
Let $a_0<M$, $|a|\le a_0$, and
$\Psi$ be sufficiently integrable and outgoing. Define $u^{(a\omega)}_{m\ell}(r)$ by~(\ref{uDef}). Then
\begin{equation}
\int_{-\infty}^{\infty}\sum_{m\ell}\left|u'(r)+i\left(\omega-\frac {a m}{2Mr_+}\right)u(r)\right|^2\, d\omega
\end{equation}
is a smooth function of $r$ which vanishes at $r = r_+$.
\end{lemma}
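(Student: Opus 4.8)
The plan is to rewrite the ``horizon‑outgoing'' combination $u' + i\big(\omega - \tfrac{am}{2Mr_+}\big)u$ in terms of the Fourier–oblate‑spheroidal coefficients of $\Psi$ taken with respect to the \emph{regular} Kerr‑star coordinates $(t^*,\theta^*,\phi^*)$ rather than Boyer–Lindquist; in those variables the singular $\Delta^{-1}$ contributions which individually blow up at $r=r_+$ are organised so that the relevant linear combination manifestly vanishes there, and what is left is controlled by Plancherel using only the bounds built into sufficient integrability. (For this lemma only Definition~\ref{sufficient} is actually needed.) Concretely, expanding the Fourier transform in $t^*$ of $\Psi$ in the basis $\{S_{m\ell}(a\omega,\cos\theta^*)e^{im\phi^*}\}$ produces coefficients $\widetilde\Psi^{(a\omega)}_{m\ell}(r)$; substituting $t = t^*-\bar t(r)$, $\phi=\phi^*-\bar\phi(r)$ into the expansion of Section~\ref{separationSubsection} gives $\widetilde\Psi^{(a\omega)}_{m\ell}(r) = e^{i\omega\bar t(r)-im\bar\phi(r)}\Psi^{(a\omega)}_{m\ell}(r)$, so $|\widetilde\Psi^{(a\omega)}_{m\ell}| = |\Psi^{(a\omega)}_{m\ell}|$. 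By the remark following Definition~\ref{sufficient} the fixed‑$r$ integrals there are insensitive to $t\mapsto t^*$, $\phi\mapsto\phi^*$, so all Plancherel identities of Section~\ref{separationSubsection} hold verbatim for $\widetilde\Psi^{(a\omega)}_{m\ell}$; crucially $\partial_r$ in Kerr‑star coordinates is $Z^*$, so $\tfrac{d}{dr}\widetilde\Psi^{(a\omega)}_{m\ell}$ is the coefficient of $Z^*\Psi$ -- which \emph{is} controlled by Definition~\ref{sufficient} -- whereas $\tfrac{d}{dr}\Psi^{(a\omega)}_{m\ell}$, the coefficient of $Z\Psi$, is not, near $\mathcal H^+$.

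Setting $\widetilde u = (r^2+a^2)^{1/2}\widetilde\Psi^{(a\omega)}_{m\ell} = e^{i\omega\bar t-im\bar\phi}u$ and using $' = \tfrac{\Delta}{r^2+a^2}\tfrac{d}{dr}$ (cf.~$(\ref{r*def})$), a direct computation gives
\[
u' + i\Big(\omega-\tfrac{am}{2Mr_+}\Big)u \;=\; e^{-i\omega\bar t+im\bar\phi}\Big[\,i\big(h_1(r)\,\omega + h_2(r)\,m\big)\,\widetilde u \;+\; \tfrac{\Delta}{r^2+a^2}\,\tfrac{d\widetilde u}{dr}\,\Big],
\]
where $h_1(r) = 1 - \tfrac{\Delta\bar t'(r)}{r^2+a^2}$ and $h_2(r) = \tfrac{\Delta\bar\phi'(r)}{r^2+a^2} - \tfrac{a}{2Mr_+}$. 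Since $\bar t$, $\bar\phi$ are smooth, $h_1$, $h_2$ and $\tfrac{\Delta}{r^2+a^2}$ are smooth functions of $r$ on $[r_+,\infty)$; and using $(\ref{tnearh})$ (whence $\Delta\bar t' = 2Mr$ near $\mathcal H^+$), the regularity of the Kerr‑star coordinates at $\mathcal H^+$ -- equivalently, that $K = T+\tfrac{a}{2Mr_+}\Phi$ generates $\mathcal H^+$, which forces $\Delta\bar\phi'(r_+)=a$, cf.~\cite{dr7} -- together with $r_+^2+a^2 = 2Mr_+$, one checks that $h_1(r_+) = h_2(r_+) = 0$, while $\tfrac{\Delta}{r^2+a^2}$ also vanishes at $r_+$.

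Taking moduli in the displayed identity (the unimodular phase drops out) and summing–integrating via Plancherel in the Kerr‑star variables, for each $A>r_+$ and $r\in[r_+,A]$,
\[
\int_{-\infty}^\infty\sum_{m\ell}\Big|u' + i\big(\omega-\tfrac{am}{2Mr_+}\big)u\Big|^2\,d\omega \;\lesssim\; \big(h_1(r)^2 + h_2(r)^2\big)\,\mathcal G_1(r) \;+\; \Big(\tfrac{\Delta}{r^2+a^2}\Big)^{\!2}\mathcal G_2(r),
\]
where, after using $m^2 \le \lambda^{(a\omega)}_{m\ell}+a^2\omega^2$ from~$(\ref{eq:lam})$ together with the Plancherel identities, $\mathcal G_1(r)$ is bounded by the fixed‑$r$ $L^2(\mathbb S^2\times\mathbb R_{t^*})$ norms of $\Psi$, $T\Psi$ and $\nabb\Psi$, and $\mathcal G_2(r)$ by those of $\Psi$ and $Z^*\Psi$. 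By Definition~\ref{sufficient} (with $j=1$) these are finite and bounded on $[r_+,A]$, so the right‑hand side is finite for every $r$ and, since $h_1$, $h_2$ and $\tfrac{\Delta}{r^2+a^2}$ vanish at $r_+$, tends to $0$ as $r\to r_+^+$ -- which is the asserted vanishing at $r=r_+$.

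For smoothness on $[r_+,\infty)$ I would differentiate the integrand in $r$ repeatedly: each derivative hits either the smooth coefficients $h_1,h_2,\tfrac{\Delta}{r^2+a^2}$ or $\widetilde u$, $\tfrac{d\widetilde u}{dr}$, producing a polynomial in $(\omega,m)$ with smooth‑in‑$r$ coefficients times products of $\tfrac{d^{j}}{dr^{j}}\widetilde\Psi^{(a\omega)}_{m\ell}$ (the coefficients of $(Z^*)^{j}\Psi$) with $j$ at most the order of differentiation. Bounding the $(\omega,m)$‑weights by $T$‑ and $\nabb$‑derivatives (again via $(\ref{eq:lam})$ and Plancherel), Definition~\ref{sufficient} with $j$ sufficiently large supplies, on each $[r_+,A]$, an $r$‑independent integrable–summable majorant, so dominated convergence justifies differentiating under the integral to all orders. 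The only genuinely non‑mechanical step is this last one -- absorbing \emph{arbitrary} polynomial weights in $(\omega,m)$ into bounds that are uniform up to the horizon -- and it is precisely why Definition~\ref{sufficient} is imposed for all $j$ and in terms of the regular vector field $Z^*$ rather than $Z$.
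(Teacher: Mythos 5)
Your proposal is correct and rests on the same core observation as the paper's proof: $u' + i(\omega - \tfrac{am}{2Mr_+})u$ is (up to the phase you exhibit) the frequency-side representation of $\bigl(\partial_{r^*} - K\bigr)\bigl((r^2+a^2)^{1/2}\Psi\bigr)$, a smooth vector field $\partial_{r^*} - K$ acting on a regular quantity, and this vector field vanishes on $\mathcal{H}^+$ because $\partial_{r^*}$ extends smoothly to the horizon where it coincides with $K = T + \tfrac{a}{2Mr_+}\Phi$. The paper states this compactly in physical space and invokes Plancherel; you unpack the same identity explicitly by passing through the Kerr-star coefficients $\widetilde u = e^{i\omega\bar t - im\bar\phi}u$, which makes the smooth vanishing-at-$r_+$ coefficients $h_1, h_2, \tfrac{\Delta}{r^2+a^2}$ visible and also spells out the differentiation-under-the-integral needed for smoothness in $r$ -- a step the paper leaves implicit. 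Your remark that only Definition~\ref{sufficient}, not the outgoing condition, is used here is also correct.
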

\begin{proof}The assumptions on $\Psi$ imply that
\[\int_{-\infty}^{\infty}\int_{\mathbb{S}^2}\sum_{0 \leq i_1 + i_2 + i_3 \leq j}\left|\nabb^{i_1}T^{i_2}\left(Z^*\right)^{i_3}\Psi\right|^2\sin\theta\, dt\, d\theta\, d\phi\]
is a smooth function of $r$. Combining this with the fact that $\partial_{r^*} = T + \frac{a}{2Mr_+}\Phi$ on $\mathcal{H}^+$, we conclude that
\[\int_{-\infty}^{\infty}\int_{\mathbb{S}^2}\left|\partial_{r^*}\left((r^2+a^2)^{1/2}\Psi\right) - \left(T + \frac{a}{2Mr_+}\right)\left((r^2+a^2)^{1/2}\Psi\right)\right|^2\sin\theta\, dt\, d\theta\, d\phi  = O(r-r_+) \Leftrightarrow \]

\[\int_{-\infty}^{\infty}\sum_{m\ell}\left|u'(r)+i\left(\omega-\frac {a m}{2Mr_+}\right)u(r)\right|^2\, d\omega = O\left(r-r_+\right)\text{ as }r \to r_+.\]
\end{proof}

\begin{lemma}\label{boundInfinity}
Let $a_0<M$, $|a|\le a_0$, and let $\Psi$ be
sufficiently integrable and outgoing. Define $u^{(a\omega)}_{m\ell}(r)$ by~(\ref{uDef}). Then, there exists a sequence $\{r_n\}_{n=1}^{\infty}$ such that $r_n \to \infty$ as $n\to \infty$ and
\begin{equation}
\lim_{n\to\infty}\left|\left(u^{(a\omega)}_{m\ell}\right)'(r_n)-i\omega u^{(a\omega)}_{m\ell}(r_n)\right| \to 0
\end{equation}
for almost every $\omega$.
\end{lemma}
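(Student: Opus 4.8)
The plan is to derive the boundary condition at infinity from the integrability hypotheses on $\Psi$ in essentially the same spirit as the horizon statement Lemma~\ref{boundHorizon}, but now one has to contend with the fact that ``outgoing'' only tells us that $\Box_{g_{a,M}}\Psi$ (hence $H^{(a\omega)}_{m\ell}$) vanishes for large $r$, and that Definition~\ref{sufficient} gives us uniform-in-$r$ control of $T$- and $Z^*$-derivatives but not of $\partial_{r^*}$ a priori. First I would note that for $r$ large the coordinate vector field $Z^* = \partial_r$ (in Kerr-star coordinates) agrees with the Boyer--Lindquist $\partial_r$ up to lower-order terms, and that on the support where $\Box_{g_{a,M}}\Psi=0$ the function $u=u^{(a\omega)}_{m\ell}$ solves the homogeneous radial ODE $u'' + (\omega^2 - V)u = 0$ with $V = V^{(a\omega)}_{m\ell}(r) \to 0$ as $r\to\infty$; more precisely $V = O(r^{-2})$ uniformly in the relevant frequency parameters on any compact $\omega$-set after using $(\ref{defofV})$.

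The key quantitative input is a bound, uniform in $r$ on a large-$r$ interval, for the frequency-space energy
\[
\int_{-\infty}^{\infty}\sum_{m\ell}\left( |u'(r)|^2 + (\omega^2 + \lambda_{m\ell}^{(a\omega)} + 1)|u(r)|^2 \right)\, d\omega,
\]
which follows from Definition~\ref{sufficient} (with $j$ large enough), the Plancherel identities recorded in Section~\ref{separationSubsection}, the relation $(\ref{uDef})$ between $u$ and $\Psi^{(a\omega)}_{m\ell}$, and the fact that $\int_{\mathbb{S}^2}(|\partial_\theta\Psi|^2 + |\sin^{-1}\theta\,\partial_\phi\Psi|^2)$ controls the $\lambda_{m\ell}^{(a\omega)}$-weighted sum modulo an $a^2\|\partial_t\Psi\|^2$ term, also controlled. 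Then I would consider the quantity $W(r) \doteq |u'(r) - i\omega u(r)|^2 = |u'|^2 + \omega^2|u|^2 - 2\omega\,\mathrm{Im}(\bar u u')$ and compute its $r^*$-derivative on the region where $H=0$: using $u'' = (V-\omega^2)u$ one gets, after integrating over $\omega$ and summing over $m\ell$, that $\frac{d}{dr^*}\int\sum W$ is bounded in absolute value by $\int\sum |V|\,(|u|^2 + |u'|^2/\omega^2\text{-type terms})$, which by the potential bound $|V|\lesssim (\lambda_{m\ell}^{(a\omega)}+1)r^{-2} + O(r^{-2})$ and the energy bound above is an $L^1_{r^*}$ (equivalently $L^1_{dr/\Delta}$, i.e. $L^1(dr)$ for large $r$ up to a bounded factor) function of $r$ on $[R_0,\infty)$. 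Hence $\int_{-\infty}^\infty\sum_{m\ell} W(r)\,d\omega = \int\sum_{m\ell}|u'(r) - i\omega u(r)|^2 d\omega$ is a function of bounded variation, in particular has a finite limit as $r\to\infty$; combined with the fact that $\int_{R_0}^\infty (\text{this quantity}) \frac{dr}{r^2} < \infty$ (again from the weighted energy bound, since $|u|^2$ and $|u'|^2$ are themselves in $L^1(dr/r^2)$ after weighting — one must be slightly careful and instead integrate $W(r)$ against $dr^*\sim dr$ and use that $\int_{R_0}^\infty\int\sum(|u'|^2+|u|^2)\,dr$ need not be finite, so the cleaner route is: the limit exists, and if it were a positive number $c$ then $\int_{R_0}^\infty \int\sum W\, dr = \infty$, contradicting finiteness of $\int_{R_0}^\infty\int\sum(|u'|^2+\omega^2|u|^2)\,dr^*$, which does follow from the $r$-weighted version of Definition~\ref{sufficient}). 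Therefore the limit is $0$, and extracting a sequence $r_n\to\infty$ along which the nonnegative integrand $\int\sum|u'(r_n)-i\omega u(r_n)|^2 d\omega \to 0$ gives, by passing to a further subsequence, convergence to $0$ for almost every $(\omega,m,\ell)$, which is the claim.

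The main obstacle I expect is making rigorous the passage from ``$\int\sum W(r)\,d\omega\to 0$ as $r\to\infty$'' (or along a sequence) to ``$|u'(r_n) - i\omega u(r_n)|\to 0$ for a.e.\ $\omega$'': this is a standard measure-theoretic extraction (a sequence of nonnegative $L^1$ functions whose integrals tend to zero has a subsequence converging to zero a.e.), but one has to be careful that the null sets are chosen compatibly with the discrete sum over $(m,\ell)$ — handled by a diagonal argument over $(m,\ell)$ since that sum is countable — and that the integrability is genuinely uniform in a neighborhood of infinity, which requires invoking Definition~\ref{sufficient} for a sufficiently high order $j$ together with the large-$r$ estimate of Proposition~\ref{lrp} or simply the explicit asymptotics of $V$. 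A secondary technical point is that Definition~\ref{sufficient} as stated controls integrals only on $[r_+,A]$ for each finite $A$, so to get uniformity near $r=\infty$ one uses instead that $\Psi$ is ``outgoing'', so $\Box_{g_{a,M}}\Psi=0$ for large $r$, together with the ODE itself to propagate the bound outward — this is the step where the homogeneity of the radial equation in the large-$r$ region is essential.
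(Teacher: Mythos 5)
Your overall plan---showing that the outgoing flux $|u'-i\omega u|^2$ is small on average for large $r$ and then extracting a sequence---is the right one, and your last step (integrated-in-$\omega$ $L^2$ convergence implies a.e.\ convergence along a subsequence, with a diagonal argument over the countable index $(m,\ell)$) is exactly what the paper does. However, the main estimate on which you hang the proof has a genuine gap, and it is precisely the point where the paper invokes an external tool you don't use.

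You reduce everything to the claim that
\[
\int_{R_0}^{\infty}\int_{-\infty}^{\infty}\sum_{m\ell}\bigl(|u'|^2+\omega^2|u|^2\bigr)\,d\omega\,dr^*<\infty,
\]
attributing this to an ``$r$-weighted version of Definition~\ref{sufficient}.'' This is false for a generic outgoing solution: the flux of $T$-energy through $\mathcal I^+$ is essentially $\int\sum\omega^2|u(\infty)|^2\,d\omega$, and for a radiating solution this is nonzero, so $\omega^2|u(r)|^2$ does \emph{not} decay in $r$ and its integral in $dr^*$ diverges. Definition~\ref{sufficient} gives a finite sup only on $[r_+,A]$ for each finite $A$ and provides no decay in $r$ at all; the outgoing assumption and the homogeneity of the radial ODE for large $r$ control the structure but do not by themselves supply an $r$-integrable bound for the full energy density. (You flag this uniformity issue yourself, but the resolution you sketch---``propagate the bound outward with the ODE''---only gives boundedness, not integrability, and boundedness is not enough to conclude that a BV limit must be zero.)

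The quantity that \emph{is} integrable in $r$ is the one you actually need, namely $\int\sum|u'-i\omega u|^2\,dr^*$: the cancellation in $u'-i\omega u$ is exactly what removes the non-decaying radiative part. But proving this integrability is not a soft consequence of the sufficiently-integrable class; the paper gets it from the $p=1$ case of the $r^p$-hierarchy of weighted energy estimates (the ``new physical-space approach'' of~\cite{icmp}), applied to $(r^2+a^2)^{1/2}\Psi$ in physical space and then translated via Plancherel. Once that single spacetime integral is finite, a dyadic pigeonhole gives a sequence $r_n$ along which $\int\sum|u'(r_n)-i\omega u(r_n)|^2\,d\omega\to 0$, and the $L^2\Rightarrow$ a.e.\ extraction finishes---much shorter than the BV route, and with no need for the (unavailable) uniform-in-$r$ frequency energy bound or the detailed differential inequality $\tfrac{d}{dr^*}|u'-i\omega u|^2=2V\,\mathrm{Re}\,(u\,\overline{u'-i\omega u})$ that you develop. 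In short: your mechanism is sound once you have integrability of the outgoing flux, but obtaining that integrability is the crux, and it requires the $r^p$-weighted estimate (or an equivalent), not just the local-in-$r$ bounds of Definition~\ref{sufficient}.
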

\begin{proof}The ``$r^p$-estimate'' from~\cite{icmp} with $p = 1$, and Definitions~\ref{sufficient} and~\ref{sufficient2} immediately imply that for $R$ sufficiently large,
\[\int_{-\infty}^{\infty}\int_{r\geq R}\int_0^{2\pi}\int_0^{\pi}\left|\left(\partial_t+\partial_{r^*}\right)\left(\left(r^2+a^2\right)^{1/2}\Psi\right)\right|^2\sin\theta\, dt\, dr\, d\theta\, d\phi < \infty.\]
An application of Plancherel and a standard pigeonhole argument imply that there exists a dyadic subsequence $\{r_n\}_{n=1}^{\infty}$ such that
\[\lim_{n\to\infty}\int_{-\infty}^{\infty}\left|\left(u^{(a\omega)}_{m\ell}\right)'(r_n)-i\omega u^{(a\omega)}_{m\ell}(r_n)\right|^2\ d\omega \to 0.\]
Finally, we recall the standard fact that convergence in $L^2$ implies almost everywhere pointwise convergence along a subsequence.
\end{proof}

As noted in~\cite{stabi} we may formally write these boundary conditions as
\begin{align}
&u'+i\left(\omega-\frac {a m}{2Mr_+}\right)u=0,\qquad r=r_+,\label{eq:b-}\\
&u'-i\omega u =0,\qquad\qquad\qquad\quad r=\infty\label{eq:b+}.
\end{align}

\subsection{On the almost everywhere regularity of $u^{(a\omega)}_{m\ell}$}
The most natural setting
for our frequency analysis is the class of functions of $r$ with values in $L^2_{\omega}l^2_{m\ell}$
referred to already after equality $(\ref{CartersODE})$.
However,
for convenience, in Sections~\ref{sct} and~\ref{freqLocEst},
we shall study \emph{classical} solutions $u$ to the o.d.e.~(\ref{e3iswsntouu}).
The following lemma will be used in Section~\ref{summation} to justify the reduction to classical solutions.
\begin{lemma}\label{aeRegular}
Let $a_0<M$, $|a|\le a_0$,
let $\Psi$ be sufficiently integrable and outgoing, and define $u^{(a\omega)}_{m\ell}(r)$ by~(\ref{uDef}). Then, for almost every $\omega$, for all $m$ and $\ell$, $H$ is smooth and $u$ is a smooth solution to the o.d.e.~(\ref{e3iswsntouu})  satisfying the boundary conditions~(\ref{eq:b+}) and~(\ref{eq:b-}).
\end{lemma}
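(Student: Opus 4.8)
The plan is to transfer the global integrability and regularity information encoded in Definitions~\ref{sufficient} and~\ref{sufficient2} to pointwise-in-$\omega$ statements about the separated quantities, using Plancherel together with elementary measure theory. First I would observe that the sufficiently integrable assumption, applied with arbitrary $j$, gives that for each fixed compact $r$-interval $[r_+,A]$ the quantities
\[
\sup_{r\in[r_+,A]}\int_{-\infty}^{\infty}\sum_{m\ell}\big(1+\omega^2+\lambda_{m\ell}^{(a\omega)}\big)^{j}\big|\Psi^{(a\omega)}_{m\ell}(r)\big|^2\,d\omega
\]
and the analogous quantity for $(\rho^2 F)^{(a\omega)}_{m\ell}$ are finite; this follows from the Plancherel identities recorded in Section~\ref{separationSubsection} (using that $T$, $\Phi$ and $Z^*$ act as multiplication by $-i\omega$, $im$ and $\tfrac{d}{dr}$ respectively, and that angular derivatives produce the weight $\lambda^{(a\omega)}_{m\ell}+a^2\omega^2$). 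By a Tonelli/Fubini argument, finiteness of such an integral for \emph{all} $j$ implies that for almost every fixed $\omega$, the function $r\mapsto u^{(a\omega)}_{m\ell}(r)$ (and $H^{(a\omega)}_{m\ell}$) lies in a weighted $\ell^2_{m\ell}$ space with arbitrarily fast polynomial decay in $(m,\ell)$, hence in particular each individual coefficient $u^{(a\omega)}_{m\ell}$ is finite and the sums defining $\Psi$, $\partial_r\Psi$, etc.\ converge; moreover the weighted bounds are locally uniform in $r$.

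Second, I would promote this to smoothness in $r$. Differentiating \eqref{CartersODE} in $r$ and using that $Z^*$-derivatives of $\Psi$ are again sufficiently integrable, one sees that for a.e.\ $\omega$ and all $m,\ell$, the coefficient $\Psi^{(a\omega)}_{m\ell}(r)$ has locally bounded derivatives of every order on $(r_+,\infty)$; equivalently $u^{(a\omega)}_{m\ell}$ is $C^\infty$ there, and likewise $H^{(a\omega)}_{m\ell}$ is smooth since it is $\Delta(r^2+a^2)^{-3/2}(\rho^2 F)^{(a\omega)}_{m\ell}$ with $(\rho^2F)^{(a\omega)}_{m\ell}$ smooth by the same reasoning applied to $F=\Box_g\Psi$ (whose $Z^*$-derivatives are controlled by Definition~\ref{sufficient}). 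The potential $V^{(a\omega)}_{m\ell}$ is manifestly smooth in $r$ on $(r_+,\infty)$ by its explicit formula \eqref{defofV}. Since \eqref{CartersODE}, equivalently \eqref{e3iswsntouu}, holds for each $r$ as an identity in $L^2_\omega l^2_{m\ell}$, and all the terms are now known to be continuous functions of $r$ for a.e.\ fixed $\omega$ and each $m,\ell$, the identity holds pointwise: $u^{(a\omega)}_{m\ell}$ is a classical smooth solution of \eqref{e3iswsntouu}.

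Third, the boundary conditions: Lemma~\ref{boundHorizon} gives that $\int_{-\infty}^{\infty}\sum_{m\ell}|u'(r)+i(\omega-\tfrac{am}{2Mr_+})u(r)|^2\,d\omega \to 0$ as $r\to r_+$, and Lemma~\ref{boundInfinity} gives a sequence $r_n\to\infty$ along which $|u'(r_n)-i\omega u(r_n)|\to 0$ for a.e.\ $\omega$. The horizon statement, being $L^2$-in-$\omega$ convergence to zero, passes to a.e.-pointwise convergence of $|u'(r)+i(\omega-\tfrac{am}{2Mr_+})u(r)|^2$ along a subsequence $r\to r_+$; but since $u$ is a smooth solution of the ODE, the quantity $u'+i(\omega-\tfrac{am}{2Mr_+})u$ extends continuously to $r=r_+$ (the potential $V$ tends to $(\tfrac{am}{2Mr_+}-\omega)^2$ there, or one argues directly from the ODE that the Wronskian-type quantity has a limit), so the subsequential limit is the actual boundary value and \eqref{eq:b-} holds for a.e.\ $\omega$. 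Intersecting the countably many full-measure sets of $\omega$ arising above (one for horizon regularity, one for \eqref{eq:b+}, and one for each $(m,\ell)$ from smoothness, a countable family) yields a single full-measure set on which all conclusions hold simultaneously, completing the proof. The main obstacle I anticipate is the bookkeeping in step one: extracting, from the $j$-indexed family of uniform-in-$r$ frequency-space bounds, the statement that for a.e.\ \emph{fixed} $\omega$ the radial coefficients are genuinely smooth and not merely defined a.e.-in-$r$ — this requires carefully choosing the order of integration/summation and invoking the locally uniform bounds to upgrade $L^2_{\rm loc}$-in-$r$ regularity to $C^\infty$, which is exactly where the "sufficiently integrable for \emph{every} $j$" hypothesis is used.
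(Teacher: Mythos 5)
Your overall strategy matches the paper's: apply Plancherel and the sufficiently integrable assumption to obtain weighted $L^2_\omega\ell^2_{m\ell}$ control locally uniform in $r$, deduce pointwise-in-$\omega$ smoothness of $u$ and $H$ by a measure-theoretic intersection (the paper implements the step you flag as delicate via the fundamental theorem of calculus, showing $\int_\omega \sup_{r\in[r_+,R]}|(Z^*)^i u|^2\,d\omega<\infty$ so that the supremum is finite for a.e.\ $\omega$), and then read off the boundary conditions from the ODE asymptotics.

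Your treatment of the horizon boundary condition, however, contains two concrete errors. First, the parenthetical claim that $V$ tends to $(\tfrac{am}{2Mr_+}-\omega)^2$ at $r=r_+$ is false: by Lemma~\ref{lem:2} it is $\omega^2-V$ that tends to $(\omega-\upomega_+m)^2$ there, which is precisely what makes $e^{\pm i(\omega-\upomega_+m)r^*}$ the fundamental solutions of the limiting equation. Second, and more substantively, the complex-valued quantity $u'+i(\omega-\upomega_+m)u$ does \emph{not} in general extend continuously to $r=r_+$. From the two-term asymptotic expansion $u = C_{\rm out}e^{-i(\omega-\upomega_+m)r^*}+C_{\rm in}e^{i(\omega-\upomega_+m)r^*}+O(|r^*|^{-1})$ one obtains $u'+i(\omega-\upomega_+m)u = 2i(\omega-\upomega_+m)C_{\rm in}e^{i(\omega-\upomega_+m)r^*}+O(|r^*|^{-1})$, which oscillates without a limit as $r^*\to-\infty$ whenever $C_{\rm in}\ne 0$ and $\omega\ne\upomega_+m$; your appeal to ``smoothness of the solution'' cannot rescue this, since $r=r_+$ is $r^*=-\infty$ and $L^2$-in-$t$ data give no pointwise-in-$\omega$ continuity of the Fourier coefficients up to the boundary. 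What does converge is the \emph{modulus}, whose limit is $2|\omega-\upomega_+m||C_{\rm in}|$; combined with the a.e.\ subsequential vanishing you correctly extracted from Lemma~\ref{boundHorizon}, this forces $C_{\rm in}=0$ for a.e.\ $\omega$, which is exactly the content of~(\ref{eq:b-}). The paper's proof goes precisely this route (writing the asymptotic expansion at $r^*=\pm\infty$ and killing the ingoing coefficient); your argument becomes correct once ``extends continuously'' is replaced by ``the modulus has a limit computed from the ODE asymptotics.''
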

\begin{proof}Pick any labeling of the eigenvalues $\lambda_{m\ell}$ such that $\lambda_{m\ell}$ is a measurable function $\omega$. Then, using the fact that a countable union of measure zero sets is measure zero, it clearly suffices to prove the lemma for each fixed value of $m$ and $\ell$.

For any $j \geq 1$ and $R > r_+$, the fundamental theorem of calculus implies
\begin{align}\label{fromFundCalc}
\sum_{0 \leq i \leq j}\int_{-\infty}^{\infty}\sup_{r \in [r_+,R]}\left|\left(Z^*\right)^iu\right|^2\, d\omega \leq& \sum_{0 \leq i \leq j}\int_{-\infty}^{\infty}\sum_{m\ell}\sup_{r \in [r_+,R]}\left|\left(Z^*\right)^iu\right|^2\, d\omega
\\ \nonumber \leq& \sum_{0 \leq i \leq j}\int_{-\infty}^{\infty}\sum_{m\ell}\left|\left(Z^*\right)^iu\right|^2\Big|_{r = r_+}\, d\omega  +
\sum_{0 \leq i \leq j+1}\int_{-\infty}^{\infty}\sum_{m\ell}\int_{r_+}^R\left|\left(Z^*\right)^iu\right|^2\, d\omega\, dr.
\end{align}
Next, Plancherel (see the explicit formulas in Section~\ref{separationSubsection}), Sobolev inequalities on $\mathbb{S}^2$ and the sufficiently integrable assumption imply that~(\ref{fromFundCalc}) is less than
\begin{align*}
B\sum_{0 \leq i \leq j+1}&\int_{-\infty}^{\infty}\int_{r_+}^R\int_{\mathbb{S}^2}\left|\left(Z^*\right)^i\Psi\right|^2\sin\theta\, dt^* dr\, d\theta\, d\phi^* +
B\sum_{0 \leq i \leq j}\int_{-\infty}^{\infty}\int_{\mathbb{S}^2}\left|\left(Z^*\right)^i\Psi\right|^2\Big|_{r = r_+}\sin\theta\, dt^* d\theta\, d\phi^*
\\ \nonumber \le& B\sum_{0 \leq i \leq j+1,\, k=0,1,2}\int_{-\infty}^{\infty}\int_{r_+}^R\int_{\mathbb{S}^2}\left|\nabb^k\left(Z^*\right)^i\Psi\right|^2\sin\theta\, dt^*\, dr\, d\theta\, d\phi^*
\\ \nonumber&+
B\sum_{0 \leq i \leq j,\, k=0,1,2}\int_{\mathbb{S}^2}\int_{-\infty}^{\infty}\left|\nabb^k\left(Z^*\right)^i\Psi\right|^2\Big|_{r = r_+}\sin\theta\, dt\, d\theta\, d\phi^* < \infty.
\end{align*}

Thus, we conclude that for each $n \in \mathbb{Z}_+$ and $j \geq 0$, $\sum_{0 \leq i \leq j}\sup_{r \in [r_+,r_++n]}\left|\left(Z^*\right)^iu\right|^2$ is an $L^2$ function of $\omega$. Consequently, we may find a set $U^{(j)}_n \subset \mathbb{R}$ such that $\left|\left(U^{(j)}_n\right)^c\right| = 0$ and $\omega \in U^{(j)}_n$ implies that $u^{(a\omega)}_{m\ell}(r)$ is $C^j$ on the interval $(r_+,r_++n)$. Observe that
\[\left|\left(\cap_{j,n=1}^{\infty}U^{(j)}_n\right)^c\right| = \left|\cup_{j,n=1}^{\infty}\left(U^{(j)}_n\right)^c\right| \leq \sum_{j,n=1}^{\infty}\left|\left(U^{(j)}_n\right)^c\right| = 0.\]
Thus, we have a set $U \doteq \cap_{j,n=1}^{\infty}U^{(j)}_n$ such that the complement of $U$ has measure $0$, and $\omega \in U$ implies that $u^{(a\omega)}_{m\ell}$ is a smooth function of $r$. Of course, the same procedure may be carried out for $H^{(a\omega)}_{m\ell}$. We conclude that for almost every $\omega$, $u$ and $H$ are smooth functions of $r$, and hence $u$ is a classical solution of the radial o.d.e.~(\ref{e3iswsntouu}).

Next, we turn to the boundary condition~(\ref{eq:b+}). For every $\omega$ such that $u$ is a classical solution of the radial o.d.e.~(\ref{e3iswsntouu}), an asymptotic analysis of the o.d.e.~(\ref{e3iswsntouu}) as $r^*\to\infty$ implies that we can find constants $A_{out}$ and $A_{in}$ such that
\[u^{(a\omega)}_{m\ell} = A_{out}e^{i\omega r^*} + A_{in}e^{-i\omega r^*} + O\left(r^{-1}\right)\text{ as }r^*\to\infty,\]
where $O\left(r^{-1}\right)$ is preserved upon differentiation. Lemma~\ref{boundInfinity} implies that we must have $A_{in} = 0$, and hence that the boundary condition~(\ref{eq:b+}) holds.

Similarly, an asymptotic analysis of the o.d.e.~(\ref{e3iswsntouu}) as $r^*\to-\infty$ implies that we can find constants $C_{out}$ and $C_{in}$ such that
\[u^{(a\omega)}_{m\ell} = C_{out}e^{-i\left(\omega -\upomega_+m\right)r^*} + C_{in}e^{i\left(\omega-\upomega_+m\right) r^*} + O\left(\left|r^*\right|^{-1}\right)\text{ as }r^*\to-\infty.\]
Lemma~\ref{boundHorizon} implies that we must have $C_{in} = 0$, and hence that the boundary condition~(\ref{eq:b-}) holds.

\end{proof}

\section{Properties of the potential $V$}
\label{Vpropsec}

In this section, we prove certain fundamental properties of the potential $V$
appearing in $(\ref{e3iswsntouu})$,
defined by the expression~$(\ref{defofV})$.
In particular, we shall prove high-frequency regime  properties which will be
essential for the coercivity of the currents of Section~\ref{freqLocEst} in the high
frequency ranges. Sections~\ref{decomppotsec}--\ref{Vsrsec} below follow closely
Section 11.1 of our survey~\cite{stabi}.
Section~\ref{Vnewsec}, relevant
for the fixed-$m$ case which will be
used in our continuity argument of Section~\ref{continuityargsec}, is new.
Finally, we record explicitly in Section~\ref{ASIDEsec} the relation of the properties
of $V$ proven here to properties of
geodesic flow on Kerr.

\begin{bigremark}
Recall from the outline in Section~\ref{outlinesection} and the discussion of
Section~\ref{theLogic} that the
present section, together with Sections~\ref{sct} and~\ref{freqLocEst},
can be understood to form an independent logical unit of this paper
which culminates in Theorem~\ref{phaseSpaceILED} giving
frequency independent estimates for classical solutions
$u$ of the o.d.e.~$(\ref{e3iswsntouu})$ satisfying
the boundary conditions~$(\ref{eq:b-})$ and~$(\ref{eq:b+})$.
Note that for convenience, this analysis will use the reduction to $a\ge 0$
discussed in Section~\ref{signOfa}.
We shall return to the study of $(\ref{WAVE})$ in Section~\ref{summation}.
\end{bigremark}

\subsection{Admissible frequencies}
Recall that the set of eigenvalues $\{\lambda_{m\ell}(a\omega)\}$
defined by $(\ref{eigenvals})$
are not known explicitly in closed form.
As is clear from $(\ref{defofV})$, the potential depends on
$\lambda_{m\ell}(a\omega)$ only through the
quantity
\begin{equation}
\label{LAMBDAfreq}
\Lambda=\lambda_{m\ell}(a\omega)+a^2\omega^2,
\end{equation}
which according to \eqref{eq:lam} and \eqref{lamBound} obeys
\begin{equation}
\label{needsanumber}
\Lambda\ge |m|(|m|+1),
\end{equation}
\begin{equation}\label{useful}
\Lambda \ge 2\left|am\omega\right|.
\end{equation}
It turns out that the results of this section  depend \emph{only} on the constraints
$(\ref{needsanumber})$ and (\ref{useful}), not on the precise values of
the set $\{\lambda_{m\ell}(a\omega)\}$
In what follows, we may thus consider $\omega\in \mathbb R$, $m\in \mathbb Z$,
$\Lambda\in \mathbb R$ to be
\emph{independent} parameters\footnote{In fact, taking $m$ to be integer-valued is
of no significance in this analysis, but we will continue to write $m\in\mathbb Z$
to avoid confusion.} constrained only by $(\ref{needsanumber})$ and
$(\ref{useful})$.
This motivates
\begin{definition}
We call a frequency triple $(\omega, m, \Lambda)$ admissible
if $\omega\in \mathbb R$, $m\in \mathbb Z$, $\Lambda\in \mathbb R$,
where $\Lambda\ge |m|(|m|+1)$ and $\Lambda \geq 2\left|am\omega\right|$.
\end{definition}

\subsection{Decomposition of the potential}
\label{decomppotsec}
Given Kerr parameters $0\le a<M$,
and an admissible frequency triple $(\omega, m, \Lambda)$,
we may now \emph{define}
 the potential as
\begin{equation}
\label{NOWADEF}
V(\omega,m,\Lambda)=V_0(\omega, m,\Lambda)+V_1,
\end{equation}
where
\begin{eqnarray}
\label{NOWADEF2}
V_0&=&\frac{4Mram\omega -a^2m^2+\Delta\Lambda}{(r^2+a^2)^2},\\
\nonumber
V_1&=&\frac{\Delta(3r^2-4Mr+a^2)}{(r^2+a^2)^3} -\frac{3\Delta^2 r^2}{(r^2+a^2)^4}.
\end{eqnarray}
Note that $V_0$ dominates for high frequencies since $V_1$ does not contain
any frequency parameters $m$, $\omega$, $\Lambda$. Note also the
nonnegativity property:
\[
V_1=\frac {\Delta}{(r^2+a^2)^4}\left[a^2\Delta+2Mr(r^2-a^2)\right]\ge 0.
\]

\subsection{The critical points of $V_0$ and the structure of trapping}
\label{Vtrsec}
To understand the nature of trapping, one must first identify the critical points of
$V_0$. This is provided by the following Lemma. (This appeared
as~Lemma 11.1.1  of~\cite{stabi}; we repeat its statement and proof here.)

\begin{lemma}\label{lem:1}
Let $M>0$, $a_0<M$ and $0\le a\le a_0$.
Then for all admissible frequency triples $(\omega, m, \Lambda)$
with $\Lambda>0$,
the potential function $V_0$ defined by $(\ref{NOWADEF2})$
as a function $V_0:(r_+,\infty)\to \mathbb R$
is either (a) strictly decreasing, (b) has a unique critical value $r^0_{\rm max}$ which
is a global maximum, or (c) has exactly two critical values $r^0_{\rm min}<r^0_{\rm max}$ which
are a local minimum and maximum respectively.
The value $r^0_{\rm max}$ is bounded  independently of the frequency
parameters
\[
r^0_{\rm max}\le B.
\]
\end{lemma}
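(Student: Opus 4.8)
The plan is to analyze the critical points of $V_0$ by computing $\frac{dV_0}{dr^*} = \frac{\Delta}{r^2+a^2}\frac{dV_0}{dr}$ and reducing the vanishing condition to a polynomial equation in $r$. Since $\frac{\Delta}{r^2+a^2}>0$ on $(r_+,\infty)$, the critical points of $V_0$ coincide with the zeros of $\frac{dV_0}{dr}$. Writing $V_0 = \frac{P(r)}{(r^2+a^2)^2}$ with $P(r) = 4Mram\omega - a^2m^2 + \Delta\Lambda$ a quadratic in $r$ (with leading coefficient $\Lambda$), we get
\[
\frac{dV_0}{dr} = \frac{P'(r)(r^2+a^2)^2 - P(r)\cdot 2(r^2+a^2)\cdot 2r}{(r^2+a^2)^4} = \frac{P'(r)(r^2+a^2) - 4rP(r)}{(r^2+a^2)^3}.
\]
So the critical points are exactly the zeros of the cubic $Q(r) \doteq P'(r)(r^2+a^2) - 4rP(r)$. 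First I would expand $Q$ explicitly: since $P(r) = \Lambda r^2 - (2M\Lambda)r + (a^2\Lambda + 4Mam\omega - a^2m^2)$ (using $\Delta = r^2-2Mr+a^2$), one finds $Q$ is a cubic whose leading term is $2\Lambda r^3 - 4\Lambda r^3 = -2\Lambda r^3$, hence $Q$ has \emph{negative} leading coefficient when $\Lambda>0$.

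The key structural claim is then purely a statement about real cubics with negative leading coefficient restricted to the interval $(r_+,\infty)$. Such a cubic $Q$ satisfies $Q(r)\to -\infty$ as $r\to\infty$; depending on the sign pattern of its real roots in $(r_+,\infty)$, $V_0$ (whose derivative has the sign of $Q$ after dividing by the positive factor $(r^2+a^2)^3$) is either (a) monotonically decreasing (no root of $Q$ in $(r_+,\infty)$, or a double root), (b) has a single sign change from $+$ to $-$ yielding a unique global maximum $r^0_{\rm max}$, or (c) has the sign pattern $-,+,-$ yielding a local minimum $r^0_{\rm min}$ followed by a local maximum $r^0_{\rm max}$. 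The case where $Q$ would change sign $+,-,+$ on $(r_+,\infty)$ (which would give a local max then a local min, and hence $V_0$ increasing near $\infty$) is excluded because the leading coefficient of $Q$ is negative, so $Q<0$ for large $r$; and one checks that $Q$ cannot vanish to order three on the interval with the wrong sign pattern. One must also confirm case (c) does not degenerate into three distinct roots giving more critical points — impossible since $Q$ is a cubic. This enumeration, together with the observation that in every case the \emph{last} sign change of $Q$ (reading left to right) is from $+$ to $-$, shows $r^0_{\rm max}$, when it exists, is always a maximum.

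For the bound $r^0_{\rm max}\le B$, I would argue as follows. At a critical point $Q(r^0_{\rm max})=0$, i.e. $P'(r)(r^2+a^2) = 4rP(r)$. Divide by $\Lambda$ (allowed since $\Lambda>0$) and use the constraints $(\ref{needsanumber})$ and $(\ref{useful})$: the terms $4Mram\omega$ and $a^2m^2$ contribute to $P$ amounts bounded by $B\Lambda$ in absolute value (since $|am\omega|\le \Lambda/2$ and $m^2 \le |m|(|m|+1)\le \Lambda$), so $P(r)/\Lambda = \Delta + O_B(1)$ and $P'(r)/\Lambda = 2(r-M)$, uniformly in the frequency parameters. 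Substituting into $Q(r)/\Lambda = 0$ gives $2(r-M)(r^2+a^2) - 4r(\Delta + O_B(1)) = 0$; expanding, the leading $r^3$ terms give $2r^3 - 4r^3 = -2r^3$ and the next terms are $O_B(r^2)$, so for $r$ larger than some constant depending only on $a_0,M$ the left side is strictly negative, forcing $r^0_{\rm max}\le B$. I expect the main obstacle to be the careful bookkeeping in the cubic-root sign analysis — verifying that the $+,-,+$ pattern really is impossible and that no other configuration of roots (e.g. with roots at or below $r_+$) sneaks in an extra interior critical point — rather than the final a priori bound, which is a straightforward leading-order estimate once the frequency-parameter terms are absorbed into $O_B(1)$.
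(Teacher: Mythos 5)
Your reduction to analyzing the sign pattern of the cubic $Q(r) = P'(r)(r^2+a^2) - 4rP(r)$ is fine, and your final boundedness argument for $r^0_{\rm max}$ is sound in spirit (though your intermediate claim $P(r)/\Lambda = \Delta + O_B(1)$ should read $\Delta + O_B(r)$, since $|4Mram\omega/\Lambda|\le 2Mr$ — this doesn't hurt the conclusion, because the $O_B(r)$ contributions to $Q/\Lambda$ are still only $O_B(r^2)$, dominated by the $-2r^3$ term). The genuine gap is in the enumeration of cases: you assert that three distinct critical points are ``impossible since $Q$ is a cubic,'' but a cubic with negative leading coefficient can perfectly well have three distinct real roots $r_1<r_2<r_3$ all in $(r_+,\infty)$, which would produce the sign pattern $+,-,+,-$ and hence a local max, local min, and local max — three critical points of $V_0$, not covered by cases (a)--(c). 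Nothing in your argument excludes this.

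The paper's proof closes exactly this hole by an additional second-derivative analysis: it computes $\frac{d}{dr}\bigl[(r^2+a^2)^3\frac{dV_0}{dr}\bigr] = -6\Lambda\bigl(r^2-2Mr(1-2\sigma)+\frac{a^2}{3}(1-\tfrac{2m^2}{\Lambda})\bigr)$ with $\sigma = am\omega/\Lambda$, a genuine quadratic, and shows by the case split $m\omega\ge 0$ versus $m\omega<0$ that its smaller root $r_2$ satisfies $\mathrm{Re}(r_2)<r_+$. Hence $Q'$ has at most one zero in $(r_+,\infty)$, so $Q$ has at most one interior critical point there; combined with $Q\to-\infty$, this caps the number of zeros of $Q$ in $(r_+,\infty)$ at two. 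Without that Rolle-type count you cannot rule out the $+,-,+,-$ pattern, so your proof as written does not establish the trichotomy (a)/(b)/(c). You should add the analysis of the quadratic $Q'$ (or an equivalent argument locating its roots relative to $r_+$) to make the enumeration complete.
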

\begin{proof}
We have
\begin{align*}
\frac {d}{dr} V_0&=4maM\omega \left(\frac 1{(r^2+a^2)^2} - \frac {4r^2}{(r^2+a^2)^3}\right)
+\frac {4ra^2m^2}{(r^2+a^2)^3} + \frac \Lambda{(r^2+a^2)^2} \left(2(r-M) -\frac {4r\Delta}{r^2+a^2}\right)\\
&=\frac 1{(r^2+a^2)^3} \left(4maM\omega(-3r^2+a^2) + 4ra^2m^2-2\Lambda(r^3+a^2r-3Mr^2+Ma^2)\right),
\end{align*}
and thus,
\begin{eqnarray*}
\frac d{dr} \left((r^2+a^2)^3 \frac d{dr} V_0\right)&=&-24Mam\omega r + 4a^2m^2-2\Lambda(3r^2-6Mr+a^2)\\ &=&-6\Lambda\left(r^2-2Mr +4Mr\sigma+\frac{a^2}3-\frac 23 a^2 \frac{m^2}\Lambda\right),
\end{eqnarray*}
where we have set
\[
\sigma=\frac {am\omega}\Lambda.
\]

It follows that any  critical points of the function $(r^2+a^2)^3 \frac d{dr} V_0$
must be roots  of the quadratic
\[
r^2-2Mr(1-2\sigma)+\frac {a^2}3\left(1-\frac {2m^2}{\Lambda}\right)
\]
which we may denote as
\[
r_{1,2}=M(1-2\sigma)\pm\sqrt{M^2(1-2\sigma)^2-\frac {a^2}3\left(1-\frac {2m^2}\Lambda\right)}.
\]

Recalling that $r_+>M$, then if $m\omega\geq 0$ (and thus $\sigma\geq 0$), it follows that $Re(r_2)<M$ and thus the
only possible  critical point on the interval $(r_+,\infty)$ would be
\[
r_1=M(1-2\sigma)+\sqrt{M^2(1-2\sigma)^2-\frac {a^2}3\left(1-\frac {2m^2}\Lambda\right)}.
\]
Noting that since $\Lambda>0$, we have
\[
\lim_{r\to\infty} (r^2+a^2)^3 \frac d{dr} V_0= -\infty,
\]
it follows that $\frac d{dr} V_0$ either
(a*) vanishes nowhere, (b*) vanishes  at a unique point to be denoted $r^0_{\rm max}$,
or (c*) vanishes at two points, denoted
$r^0_{\rm min}<r^0_{\rm max}$, where
\[
\frac{ d^2}{dr^2} V_0(r^0_{\rm min})\ge 0,\qquad
\frac{ d^2}{dr^2} V_0(r^0_{\rm max})\le 0.
\]

In case (a*), it follows that $V_0$ is strictly decreasing (case (a) of the lemma).
In case (b*), it follows that either $r_{\rm max}$ is an inflection point
and $V_0$ is again strictly decreasing (corresponding again to
case (a) of the statement of the lemma),
or $r_{\rm max}$ is a global maximum
(case (b) of the statement of the lemma).
In case (c*), it is moreover easy to see that these inequalities
are in fact strict, and thus $r^0_{\rm min}$ and $r^0_{\rm max}$ correspond
to the unique minumum and maximum of $V_0$ on $(r_+,\infty)$
(corresponding to case (c) of the statement of the lemma).

If $m\omega<0$ (and thus $\sigma<0$), then let us reexpress the root  $r_2$ by
\[
r_2=
M(1-2\sigma)\left(1-\sqrt{1-\frac {a^2(1-\frac {2m^2}\Lambda)}{3M^2(1-2\sigma)^2}}\right).
\]
Since $a <M$ and $\sigma<0$, we have
$$
\frac {a^2(1-\frac {2m^2}\Lambda)}{3M^2(1-2\sigma)^2}<\frac 13.
$$
Noting for $0\le x<\frac 13$ the inequality $
\sqrt{1-x}\ge 1-\frac{2x}3$,
it follows that
\[
Re(r_2)<\frac{2M(1-2\sigma)a^2(1-\frac{2m^2}\Lambda)}{9M^2(1-2\sigma)^2}=
\frac {2a^2(1-\frac{2m^2}\Lambda)}{9M(1-2\sigma)}<\frac {2M}9<r_+.
\]
This now implies that $r_1$ is the only possible zero of  $\frac d{dr} \left [(r^2+a^2)^3 \frac d{dr} V_0\right]$
on the interval $[r_+,\infty)$ and the previous argument applies.

The last statement of the lemma easily follows from observing that for all $\Lambda > 0$
$$
(r^2+a^2)^3 \frac d{dr} V_0 = (6\La M -12 Mam\omega)r^2-2\Lambda r^3 + O\left(am\omega,m^2\right)r\text{ as }r\to\infty,
$$
and we have $\Lambda \geq \left|m\right|\left(\left|m\right|+1\right)$, $\Lambda\geq
2a|m\omega|$.
\end{proof}

The next statement effectively establishes that even if $r^0_{\rm min}$
exists, it can only be `trapped'
for the value $\omega=\upomega_+m$. (Again this appeared as Lemma 11.1.2 of~\cite{stabi}.
We repeat its statement and proof here.)
\begin{lemma}\label{lem:2}
Let $M>0$, $a_0<M$ and $0\le a \le a_0$.
For all  admissible frequency triples $(\omega, m, \Lambda)$
we have
\begin{equation}
\label{anineq}
\omega^2\ge V(r_+)
\end{equation}
with equality achieved if and only if $\omega=\upomega_+m$.
In particular, in the notations of the previous lemma, this implies that
\[
\omega^2>V_0(r^0_{\rm min}).
\]
\end{lemma}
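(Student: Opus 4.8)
The plan is to reduce everything to a direct evaluation of the potential at the horizon $r=r_+$, using the two relations $\Delta(r_+)=0$ and $r_+^2+a^2=2Mr_+$.

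First I would observe that, since every term of $V_1$ in $(\ref{NOWADEF2})$ carries a factor of $\Delta$, we have $V_1(r_+)=0$, so that $V(r_+)=V_0(r_+)$. Evaluating $V_0$ at $r_+$ and simplifying the denominator via $(r_+^2+a^2)^2=4M^2r_+^2$ gives
\[
V(r_+)=\frac{4Mr_+\,am\omega-a^2m^2}{4M^2r_+^2}=\frac{am\omega}{Mr_+}-\frac{a^2m^2}{4M^2r_+^2}.
\]
Since $\upomega_+=\frac{a}{2Mr_+}$, the right-hand side is precisely $2\omega(\upomega_+m)-(\upomega_+m)^2$, and therefore
\[
\omega^2-V(r_+)=\omega^2-2\omega(\upomega_+m)+(\upomega_+m)^2=(\omega-\upomega_+m)^2\ge 0,
\]
with equality if and only if $\omega=\upomega_+m$. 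This establishes $(\ref{anineq})$ together with its equality case.

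For the final assertion, I would appeal to the structure of $V_0$ from Lemma~\ref{lem:1}: the local minimum $r^0_{\rm min}\in(r_+,\infty)$ exists only in case (c), and in that case the proof of Lemma~\ref{lem:1} shows that $\frac{d}{dr}V_0$ does not vanish on $(r_+,r^0_{\rm min})$ and is negative just to the left of $r^0_{\rm min}$, so $V_0$ is strictly decreasing on $(r_+,r^0_{\rm min})$. Hence $V_0(r^0_{\rm min})<V_0(r_+)=V(r_+)\le\omega^2$, which yields the strict inequality $\omega^2>V_0(r^0_{\rm min})$, even in the borderline case $\omega=\upomega_+m$ where $(\ref{anineq})$ is an equality.

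The computation is entirely elementary; the only point demanding a little care is the bookkeeping in the last step — namely reading off from the proof of Lemma~\ref{lem:1} that $V_0$ is monotone decreasing on the whole interval $(r_+,r^0_{\rm min})$, so that the inequality there is strict rather than merely $V_0(r^0_{\rm min})\le V_0(r_+)$. I do not expect any genuine obstacle.
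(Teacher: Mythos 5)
Your proposal is correct and takes essentially the same approach as the paper: a direct evaluation at $r=r_+$ using $\Delta(r_+)=0$ and $r_+^2+a^2=2Mr_+$, recognizing $\omega^2-V(r_+)$ as the perfect square $(\omega-\upomega_+ m)^2$. The paper's one-line proof leaves the final inference about $V_0(r^0_{\rm min})$ unstated, and the monotonicity argument you supply (that $\frac{d}{dr}V_0<0$ on $(r_+,r^0_{\rm min})$ in case (c) of Lemma~\ref{lem:1}, hence $V_0(r^0_{\rm min})<V_0(r_+)$) is exactly the right justification for the strict inequality.
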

\begin{proof}
We simply compute
\[
\omega^2-V(r_+)=\omega^2-\frac {4Mr_+am\omega-a^2m^2}{(r_+^2+a^2)^2}=
\frac {(2Mr_+\omega-am)^2}{4M^2r_+^2}.
\]
\end{proof}

Note that the case of
 equality
in $(\ref{anineq})$
occurs precisely at the threshold of the superradiance condition~(\ref{superradiantParamPosa}):
\[
\omega=\upomega_+m=\frac {am}{2Mr_+}.
\]

\subsection{Superradiant frequencies are not trapped}
\label{Vsrsec}
We now turn specifically to the superradiant frequencies, which under the assumption
$a\ge 0$ are defined by $(\ref{superradiantParamPosa})$. We will show
that these are in fact \underline{not trapped}, in the sense that,
for such frequencies, the maximum of $V$ is always
(quantitatively) above the energy level $\omega^2$.

First, let us show that for a range of frequency parameters including the superradiant regime,
$V_0$ can only have a critical point at a maximum, that is
the point $r^0_{\rm min}$ is absent. (This was Lemma 11.1.3 of~\cite{stabi} augmented
by Remark 11.1.)
\begin{lemma}\label{lem:3}
Let $M>0$, $a_0<M$ and $0\le a\le a_0$.
Then for all
 admissible frequency triples $(\omega, m, \Lambda)$ satisfying in addition
\[
m\omega\le \frac {am^2}{2Mr_+},
\]
we have
\begin{equation}\label{increaseHorizon}
\frac {d}{dr} V(r_+)\ge \frac {d}{dr} V_0(r_+)\ge b\Lambda \ge 0.
\end{equation}
Recall that Lemma~\ref{lem:1} showed that if $r^0_{\rm min}$ exists, we either have
$r^0_{\rm min}<r^0_{\rm max}$ or $\frac{dV_0}{dr} \leq 0$ on $(r_+,\infty)$. Thus~(\ref{increaseHorizon}) implies that $r^0_{\rm min}$ does not exist and the potential $V_0$ has its unique critical point at $r^0_{\rm max}$.

Moreover, for all $\alpha>0$ sufficiently small\footnote{Recall
our conventions from Section~\ref{genconstsec} on the meaning of this term.
This smallness constraint indeed degenerates as $a_0\to M$.},
the same statement holds under the weaker assumption
\begin{equation}\label{withAlpha}
m\omega \leq \frac{am^2}{2Mr_+} + \alpha\Lambda.
\end{equation}
\end{lemma}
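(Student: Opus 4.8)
The plan is to reduce the whole statement to an explicit evaluation of $\frac{d}{dr}V_0$ at $r=r_+$. First I would observe that $\frac{d}{dr}V(r_+)\ge\frac{d}{dr}V_0(r_+)$ is automatic: since $V_1\ge0$ on $[r_+,\infty)$ with $V_1(r_+)=0$ (because $\Delta(r_+)=0$), the endpoint $r_+$ is a minimum of $V_1$ on this interval, so $\frac{d}{dr}V_1(r_+)\ge0$, and $V=V_0+V_1$. Thus it suffices to prove $\frac{d}{dr}V_0(r_+)\ge b\Lambda$.

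Next I would specialise to $r=r_+$ the formula for $(r^2+a^2)^3\frac{d}{dr}V_0$ already obtained in the proof of Lemma~\ref{lem:1}, namely
\[
(r^2+a^2)^3\frac{d}{dr}V_0=4maM\omega(a^2-3r^2)+4ra^2m^2-2\Lambda\bigl(r^3+a^2r-3Mr^2+Ma^2\bigr),
\]
using the horizon identities $r_+^2+a^2=2Mr_+$ and $r_+r_-=a^2$. A short computation gives $r_+^3+a^2r_+-3Mr_+^2+Ma^2=-Mr_+(r_+-r_-)$, so
\[
(2Mr_+)^3\,\frac{d}{dr}V_0(r_+)=4aM(m\omega)(a^2-3r_+^2)+4r_+a^2m^2+2\Lambda Mr_+(r_+-r_-).
\]
Since $(2Mr_+)^3$ is bounded above and below by constants depending only on $a_0,M$, it is enough to bound the right-hand side below by $b\Lambda$.

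Now I would split on the sign of $m\omega$. If $m\omega\le 0$ then $4aM(m\omega)(a^2-3r_+^2)\ge0$ (here $a^2-3r_+^2<0$ since $r_+>M>a$), so the right-hand side is $\ge2\Lambda Mr_+(r_+-r_-)\ge b\Lambda$, using $r_+-r_-=2\sqrt{M^2-a^2}\ge2\sqrt{M^2-a_0^2}$. If $0<m\omega\le\frac{am^2}{2Mr_+}$, I would multiply this bound by the negative quantity $4aM(a^2-3r_+^2)$ to get $4aM(m\omega)(a^2-3r_+^2)\ge\frac{2a^2m^2}{r_+}(a^2-3r_+^2)$; adding $4r_+a^2m^2$ and simplifying via $r_+^2-a^2=r_+(r_+-r_-)$ collapses these two terms to $-2a^2m^2(r_+-r_-)$, hence the right-hand side is $\ge2(r_+-r_-)\bigl(\Lambda Mr_+-a^2m^2\bigr)$. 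Finally $m^2\le\Lambda$ by $(\ref{needsanumber})$ and $Mr_+-a^2\ge M^2-a_0^2>0$ give $\ge2(r_+-r_-)(M^2-a_0^2)\Lambda=b\Lambda$. This establishes the inequality; the claim that $r^0_{\rm min}$ is absent and that $V_0$ attains its unique critical value at $r^0_{\rm max}$ then follows from Lemma~\ref{lem:1}, since $\frac{d}{dr}V_0(r_+)>0$ excludes both the strictly decreasing alternative and the case in which $V_0$ decreases on $(r_+,r^0_{\rm min})$.

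For the last sentence of the lemma, under $m\omega\le\frac{am^2}{2Mr_+}+\alpha\Lambda$ the case $m\omega\le0$ is unchanged, and for $0<m\omega\le\frac{am^2}{2Mr_+}+\alpha\Lambda$ the same manipulation produces an extra term $4aM\alpha\Lambda(a^2-3r_+^2)$, of absolute value at most $B\alpha\Lambda$; hence the right-hand side is $\ge(b-B\alpha)\Lambda$, and choosing $\alpha$ small enough, depending only on $a_0$ and $M$, that $B\alpha\le\tfrac12 b$ yields the claim. I do not anticipate a genuine obstacle here; the only points requiring a little care are the algebraic cancellation $\frac{2a^2m^2}{r_+}(a^2-3r_+^2)+4r_+a^2m^2=-2a^2m^2(r_+-r_-)$ in the superradiant case, and the fact — consistent with the conventions of Section~\ref{genconstsec} and the footnote in the statement — that the constant $b$ degenerates as $a_0\to M$ through the factors $r_+-r_-$ and $M^2-a_0^2$.
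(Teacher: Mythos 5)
Your proof is correct and follows essentially the same route as the paper's: evaluate the Lemma~\ref{lem:1} expression for $\frac{d}{dr}V_0$ at $r_+$, dispatch $m\omega\le 0$ immediately, and for $0<m\omega\le\frac{am^2}{2Mr_+}$ multiply by the negative coefficient of $m\omega$ and use $m^2\le\Lambda$ together with $r_+>M>a$; the algebra is identical up to the substitution $r_+-r_-=2(r_+-M)$ and $(r_+^2+a^2)=2Mr_+$. The only cosmetic difference is that you deduce $\frac{d}{dr}V_1(r_+)\ge 0$ from $V_1\ge 0$, $V_1(r_+)=0$ rather than computing it directly, and you spell out the $\alpha$-perturbation that the paper leaves as ``clear''; both are fine.
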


\begin{proof}
We begin with the first statement of the lemma. Note
\begin{eqnarray*}
\frac d{dr} V_0(r_+)&=&\frac{4maM\omega}{(r_+^2+a^2)^3}  \left (-3r_+^2+a^2\right)
+\frac {4r_+a^2m^2}{(r_+^2+a^2)^3} + \frac {2(r_+-M)\Lambda}{(r_+^2+a^2)^2} \\ &=&
\frac 1{(r_+^2+a^2)^3}\left(4maM\omega (-3r_+^2+a^2) + 4r_+a^2m^2+2(r_+^2+a^2)(r_+-M)\Lambda\right).
\end{eqnarray*}
For frequency parameters satisfying $m\omega<0$, the conclusion of the lemma is now
obvious, since
$-3r_+^2+a^2<0$. Otherwise,
using the condition
\[
0\le m\omega\le \frac {am^2}{2Mr_+}
\]
we obtain
\begin{align*}
(r_+^2+a^2)^3\frac d{dr} V_0(r_+)&\ge \left (\frac {2a^2m^2}{r_+} (-3r_+^2+a^2) +4r_+a^2m^2+2(r_+^2+a^2)(r_+-M)\Lambda\right)\\ &= \left(\frac {2a^2m^2}{r_+} (-r_+^2+a^2) +2(r_+^2+a^2)(r_+-M)\Lambda\right)\\ &=2(r_+-M)\left(\Lambda (r_+^2+a^2)-2a^2m^2\right)\\&=
4(r_+-M)\left(\Lambda Mr_+-a^2m^2\right).
\end{align*}
The inequalities $\Lambda\ge m^2$ and $r_+>M>a$ imply that
$\frac d{dr} V_0(r_+)\geq b\Lambda$.
We finish the proof of the first statement by  recalling that $V=V_0+V_1$ and observing
the identity
\[
\frac {d}{dr} V_1(r_+)=\frac {4Mr_+(r_+-M)(r_+^2-a^2)}{(r_+^2+a^2)^4}>0.
\]

It is clear that the final assertion of the lemma concerning
the weaker assumption~(\ref{withAlpha})
follows immediately now from the first.
\end{proof}

Recall the superradiant condition $(\ref{superradiantParamPosa})$.
The
statement that superradiant frequencies are not trapped now follows from the following
Lemma
(again, cf.~Lemma 11.1.4 of~\cite{stabi})

\begin{lemma}\label{lem:4}
Let $M>0$, $a_0<M$ and $0\le a\le a_0$.
For all $\alpha\ge0$ sufficiently small, then for all
 admissible frequency triples $(\omega, m, \Lambda)$
satisfying in addition
\[
0 < m\omega\le \frac {am^2}{2Mr_+} + \alpha\Lambda,
\]
the potential $V_0$ satisfies
\[
b\Lambda \leq V_0(r^0_{max}) - \omega^2.
\]
\end{lemma}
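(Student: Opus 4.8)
The plan is to deduce the estimate from a single pointwise bound $V_0(r_\star)-\omega^2\ge b\Lambda$ at a suitably chosen radius $r_\star>r_+$ (allowed to depend on $(\omega,m,\Lambda)$); since, by Lemma~\ref{lem:3}, in the given frequency range $r^0_{\rm max}$ is the \emph{unique} critical point of $V_0$ on $(r_+,\infty)$ and hence its global maximum, one has $V_0(r^0_{\rm max})\ge V_0(r_\star)$ and we are done. Using the reduction to $a\ge 0$ of Section~\ref{signOfa} together with the symmetry $(m,\omega)\mapsto(-m,-\omega)$, which leaves $V_0$, $\omega^2$ and $\Lambda$ unchanged, one may assume $a\ge 0$, $m>0$, $\omega>0$. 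The starting point is the identity
\[
(r^2+a^2)^2\big(V_0(r)-\omega^2\big)=-\omega^2\Delta^2+\Delta\big(\Lambda-4Mr\omega^2\big)-(2Mr\omega-am)^2,
\]
obtained from $(\ref{NOWADEF2})$ by expanding $(r^2+a^2)^2=(\Delta+2Mr)^2$; at $r=r_+$ it recovers $V_0(r_+)=\omega^2-(\omega-\upomega_+m)^2$ of Lemma~\ref{lem:2}.

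The main choice is $r_\star=r_{\rm sr}\doteq\frac{am}{2M\omega}$. In the superradiant regime $\omega<\upomega_+m=\frac{am}{2Mr_+}$ we have $r_{\rm sr}>r_+$, and at $r_{\rm sr}$ the last square in the identity vanishes, leaving
\[
(r_{\rm sr}^2+a^2)^2\big(V_0(r_{\rm sr})-\omega^2\big)=\Delta(r_{\rm sr})\Big(\Lambda-\frac{a^2m^2}{4M^2}-am\omega-a^2\omega^2\Big).
\]
Now $\omega<\upomega_+m$ gives $am\omega<\frac{a^2m^2}{2Mr_+}$ and $a^2\omega^2<\frac{a^4m^2}{4M^2r_+^2}$, so, using $m^2\le\Lambda$ (from admissibility) and $r_+>M>a$, the bracket is at least $\Lambda\big(1-\frac{a^2}{4M^2}-\frac{a^2}{2Mr_+}-\frac{a^4}{4M^2r_+^2}\big)$, which is $\ge b\Lambda$ by the elementary inequality $\frac x4+\frac x2+\frac{x^2}{4}<1$ for $x=a^2/M^2\in[0,1)$. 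Hence $V_0(r_{\rm sr})-\omega^2\ge b\Lambda\,\frac{\Delta(r_{\rm sr})}{(r_{\rm sr}^2+a^2)^2}$, which gives the desired bound for $V_0(r^0_{\rm max})$ whenever $r_{\rm sr}$ stays in a fixed compact interval $[r_1,R_0]\subset(r_+,\infty)$, i.e.\ whenever $\omega$ lies in the fixed compact interval $\big[\frac{am}{2MR_0},\frac{am}{2Mr_1}\big]$.

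It then remains to treat $\omega$ small ($\omega<\frac{am}{2MR_0}$) and $\omega$ close to the superradiant threshold $\upomega_+m$ ($\omega>\frac{am}{2Mr_1}$). For small $\omega$ I would evaluate the identity at the fixed radius $R_0$: the bound on $\omega$ forces $2MR_0\omega<am$ and $\omega^2<\frac{a^2m^2}{4M^2R_0^2}$, so, again using $m^2\le\Lambda$, every $\omega$-dependent term is bounded by a fixed small multiple of $\Lambda$, and choosing $R_0$ large (so that $\Delta(R_0)\sim R_0^2$ dominates, using $a<M$) yields $V_0(R_0)-\omega^2\ge b\Lambda$. For $\omega$ near $\upomega_+m$ the factor $\Delta(r_{\rm sr})/(r_{\rm sr}^2+a^2)^2$ degenerates, so instead I would write $V_0(r^0_{\rm max})-\omega^2=\big(V_0(r^0_{\rm max})-V_0(r_+)\big)-(\omega-\upomega_+m)^2$ and bound the first term from below: Lemma~\ref{lem:3} gives $\frac{dV_0}{dr}(r_+)\ge b\Lambda$, and since the coefficients of $V_0$ are $O(\Lambda)$ on $[r_+,r_++1]$ its second $r$-derivative is $O(\Lambda)$ there, so $\frac{dV_0}{dr}\ge\frac b2\Lambda$ on a fixed interval $[r_+,r_++\delta_0]$; as $r^0_{\rm max}$ is the unique critical point of $V_0$ (Lemma~\ref{lem:3}) this forces $r^0_{\rm max}>r_++\delta_0$ and hence $V_0(r^0_{\rm max})-V_0(r_+)\ge b_3\Lambda$ with $b_3$ fixed. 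Choosing $r_1$ close enough to $r_+$ makes $(\omega-\upomega_+m)^2\le\frac{b_3}{2}\Lambda$ throughout this last sub-range, and since the three sub-ranges exhaust $(0,\upomega_+m]$ the case $\alpha=0$ follows. The case of small $\alpha>0$ is identical: Lemma~\ref{lem:3} still furnishes $\frac{dV_0}{dr}(r_+)\ge b\Lambda$ and the unique-critical-point structure under $m\omega\le\upomega_+m^2+\alpha\Lambda$, and the slightly enlarged $\omega$-range, extending just past $\upomega_+m$, is absorbed into the third sub-range once $\alpha$ is taken small enough that $(\omega-\upomega_+m)^2$ there is still dominated by $b_3\Lambda$.

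The only delicate step, and the one I expect to be the main obstacle, is precisely this final calibration of constants near $\omega=\upomega_+m$: one must fix the radius $r_1$ (and the smallness threshold for $\alpha$) against the constant $b_3$ coming from the red-shift-type lower bound $\frac{dV_0}{dr}(r_+)\ge b\Lambda$, so that the unfavourable horizon contribution $(\omega-\upomega_+m)^2=\omega^2-V_0(r_+)$ is genuinely absorbed by the gain $V_0(r^0_{\rm max})-V_0(r_+)$ across the near-horizon region. Everything else is either the clean algebraic identity above, the elementary constant inequality reducing to $a<M$, or a routine large-$R_0$/small-$\omega$ estimate.
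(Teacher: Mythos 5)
Your proof is correct and follows essentially the paper's own argument: the central step is the evaluation of $V_0-\omega^2$ at $r_0=\frac{am}{2M\omega}$ (where the perfect square vanishes), yielding the same expression $\frac{\Delta_{r_0}}{(r_0^2+a^2)^2}\big(\Lambda-\frac{a^2m^2}{4M^2}-am\omega-a^2\omega^2\big)$ and the same elementary $a<M<r_0$ estimate, and the two degenerate endpoints are handled just as in the paper, via the derivative bound $\frac{dV_0}{dr}(r_+)\ge b\Lambda$ of Lemma~\ref{lem:3} for $\omega$ near $\upomega_+m$ and via evaluation at a fixed large radius for $\omega$ small. The only difference is cosmetic — you organize the three-fold case split by the location of $r_0$ in $(r_+,\infty)$ whereas the paper splits by $\epsilon$-smallness thresholds on $(\omega-\upomega_+m)^2$ and on $\omega^2$ — and both you and the paper dispatch the $\alpha>0$ extension in a single line.
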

\begin{proof}Again, it suffices to prove the lemma with $\alpha = 0$. Let $\epsilon > 0$ be a fixed sufficiently small constant.

We first consider the case when $m\left(\frac{am}{2Mr_+} - \omega\right) \leq \epsilon\left|m\right|\sqrt{\Lambda}$. In this case we have
\[\omega^2 - V_0(r_+) = \left(\omega - \frac{am}{2Mr_+}\right)^2 \leq \epsilon^2\Lambda.\]
Combining this with Lemma~\ref{lem:3} easily shows
\[V_0(r_++\delta)- \omega^2 \geq b\Lambda\]
for some sufficiently small $\delta > 0$ and even smaller $\epsilon$.

Next, we consider the case when $\omega^2 \leq \epsilon \Lambda$. Then we clearly have
\[V_0(r) - \omega^2 \geq \frac{\Lambda}{r^2} + O\left(\frac{\Lambda}{r^3}\right) - \epsilon\Lambda\text{ as }r\to\infty.\]
Therefore, if we let $\tilde{r}$ be sufficiently large, and then let $\epsilon$ be sufficiently small, we can arrange for
\[V_0(\tilde{r}) - \omega^2 \geq b\Lambda.\]

Finally, we consider the case where $m\left(\frac{am}{2Mr_+}-\omega\right) > \epsilon\left|m\right|\sqrt{\Lambda}$ and $\omega^2 > \epsilon \Lambda$. In this case, $r_0 := \frac{am}{2M\omega}$ will satisfy $r_0 \in \left[r_++\delta,R\right]$ for some $\delta > 0$ and $R < \infty$. Letting $\Delta_{r_0}$ denote $r_0^2 - 2Mr_0 + a^2$, we then compute

\begin{eqnarray*}
\omega^2-V_0(r_0)&=&\omega^2-\frac {4Mr_0am\omega-a^2m^2+\Delta_{r_0}\Lambda}{(r_0^2+a^2)^2}\\ &=&
\frac 1{(r_0^2+a^2)^2}\left[(r_0^2+a^2)^2\omega^2-4Mr_0am\omega+a^2m^2-\Delta_{r_0}\Lambda\right]\\ &=&
\frac 1{(r_0^2+a^2)^2}\left[4M^2r_0^2\omega^2-4Mr_0am\omega+a^2m^2+\omega^2\left((r_0^2+a^2)^2-4Mr_0^2\right)-\Delta_{r_0}\Lambda\right]\\&=&\frac {\omega^2(r_0^2-2Mr_0+a^2)(r_0^2+2Mr_0+a^2)-\Delta_{r_0}\Lambda}{(r_0^2+a^2)^2}\\
&=&\frac{ \Delta_{r_0}}{(r_0^2+a^2)^2}\left(\frac {a^2 m^2}{4M^2} \left(1+\frac{2M}{r_0}+\frac{a^2}{r_0^2}\right)-\Lambda\right).
\end{eqnarray*}

We now recall that $a<M<r_0$ and that $\Lambda\ge \left|m\right|(\left|m\right|+1)$ to conclude that
\[V_0\left(r_0\right) -\omega^2 \geq b\frac{\Delta_{r_0}}{\left(r_0^2+a^2\right)^2}\Lambda \geq b\Lambda.\]
In the last inequality we have used that $r_0$ is bounded away from $r_+$ and $\infty$ independently of the frequency parameters.

\end{proof}

\subsection{Trapping for fixed-azimuthal mode solutions}
\label{Vnewsec}

The final result of this section shows in the case of a fixed azimuthal frequency $m$, large $\Lambda$ and $\omega^2 \sim \Lambda$, $r^0_{\rm max}$ occurs outside the ergoregion.
\begin{lemma}\label{aziTrap}
Let $M>0$, $a_0<M$ and $|a|\le a_0$.
Recall that we previously defined $\sigma = \frac{am\omega}{\Lambda}$. There exists a small constant $c>0$ such that $|\sigma| \leq c$, $m^2 \leq c\Lambda$ and $c^{-1} \leq \Lambda$ imply that $r^0_{\rm max} > (1+\sqrt{2})M$.
\end{lemma}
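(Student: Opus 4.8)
The plan is to reduce everything to a sign computation for the cubic
$$P(r)\doteq (r^2+a^2)^3\frac{dV_0}{dr}(r)=4Mam\omega(a^2-3r^2)+4a^2m^2r-2\Lambda\left(r^3-3Mr^2+a^2r+Ma^2\right),$$
whose expression was already obtained in the proof of Lemma~\ref{lem:1}. Since $(r^2+a^2)^3>0$, the critical points of $V_0$ on $(r_+,\infty)$ are precisely the zeros of $P$ there, and, in the language of Lemma~\ref{lem:1}, whenever $r^0_{\rm max}$ exists it is the \emph{largest} such zero. Set $r_\star\doteq(1+\sqrt2)M$ and note $r_\star>2M\ge r_+$ for every $|a|\le a_0$. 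The whole argument rests on the single quantitative estimate
\begin{equation}\label{keyforazi}
P(r_\star)\ge b\Lambda>0 .
\end{equation}
Granting \eqref{keyforazi}: since $\frac{dV_0}{dr}(r_\star)>0$ while $\Lambda\ge c^{-1}>0$ forces the leading coefficient of $P$ to be $-2\Lambda<0$, so that $P(r)\to-\infty$ as $r\to\infty$, the cubic $P$ has a zero in $(r_\star,\infty)$, is eventually negative, and hence its largest real zero $\rho$ lies in $(r_\star,\infty)$. In particular $V_0$ is not strictly decreasing, so Lemma~\ref{lem:1} puts us in case (b) or (c) and $r^0_{\rm max}$ is defined as the largest zero of $P$ in $(r_+,\infty)$; since $\rho>r_\star>r_+$, that largest zero is $\rho$ itself, whence $r^0_{\rm max}=\rho>(1+\sqrt2)M$.

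So the content is \eqref{keyforazi}. Here I would use the algebraic fact that $1+\sqrt2$ is a root of $x^3-3x^2+x+1=(x-1)(x^2-2x-1)$, i.e. $r_\star^3-3Mr_\star^2+M^2r_\star+M^3=0$; this gives, for \emph{every} $a$,
$$r_\star^3-3Mr_\star^2+a^2r_\star+Ma^2=(r_\star+M)(a^2-M^2)=-(2+\sqrt2)M\,(M^2-a^2),$$
and therefore
$$P(r_\star)=2(2+\sqrt2)M\,(M^2-a^2)\,\Lambda\;+\;\bigl[4Mam\omega(a^2-3r_\star^2)+4a^2m^2r_\star\bigr].$$
The first term is $\ge 2(2+\sqrt2)M(M^2-a_0^2)\,\Lambda=:b_1\Lambda$ with $b_1>0$ depending only on $a_0,M$. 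For the bracketed error $E$, I would use $Mam\omega=M\sigma\Lambda$, hence $|Mam\omega|\le cM\Lambda$, together with $a^2m^2\le M^2m^2\le cM^2\Lambda$, to get $|E|\le 4cM\Lambda(3r_\star^2+a^2)+4r_\star cM^2\Lambda\le c\,B_1\Lambda$ for an explicit $B_1$ depending only on $a_0,M$ (one finds $B_1=4(11+7\sqrt2)M^3$). Choosing $c<b_1/(2B_1)$ --- a legitimate choice since $b_1/(2B_1)$ depends only on $a_0,M$ --- yields $P(r_\star)\ge\tfrac12 b_1\Lambda$, which is \eqref{keyforazi}.

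The only genuinely non-routine step is guessing the threshold $r_\star$: one needs a radius, independent of the frequency triple and of $a\in[0,a_0]$, at which the $\Lambda$-part of $P$ is positive \emph{with a gap that is uniform in $a<M$}, so that it dominates the $\sigma$- and $m^2$-driven errors once $c$ is small. The computation above identifies $(1+\sqrt2)M$ as exactly the largest $r$ for which $r^3-3Mr^2+a^2r+Ma^2<0$ holds for all $a<M$; any value strictly between $2M$ (the outer edge of the ergoregion, where $\Delta=a^2$) and $(1+\sqrt2)M$ would serve equally well, but $(1+\sqrt2)M$ is the clean endpoint. Everything after that --- extracting the uniform lower bound $b_1\Lambda$, crudely bounding $E$ by $cB_1\Lambda$ using the hypotheses $|\sigma|\le c$ and $m^2\le c\Lambda$, and shrinking $c$ --- is mechanical.
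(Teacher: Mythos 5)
Your proposal is correct and follows essentially the same route as the paper: evaluate $(r^2+a^2)^3\frac{dV_0}{dr}$ at $r=(1+\sqrt2)M$, show the $\Lambda$-coefficient reduces to the uniformly positive quantity $2(2+\sqrt2)M(M^2-a^2)$, bound the remaining terms by $O(c)\Lambda$ using $|\sigma|\le c$ and $m^2\le c\Lambda$, and conclude by shrinking $c$. Your factorisation $x^3-3x^2+x+1=(x-1)(x^2-2x-1)$ is a tidier way of organising the same algebra, and your explicit invocation of Lemma~\ref{lem:1} to justify that positivity of $\frac{dV_0}{dr}$ at $r_\star$ forces $r^0_{\rm max}>r_\star$ merely spells out a step the paper leaves implicit.
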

\begin{proof}
A previous computation showed
\[\left(r^2+a^2\right)^3\frac{dV_0}{dr} = 4maM\omega\left(-3r^2+ a^2\right) + 4ra^2m^2 -2\Lambda\left(r^3-3Mr^2 + a^2r + a^2M\right).\]
Since $r^0_{\rm max}$ is the final critical point of $V_0$, we have that $r \geq r^0_{\rm max}$ implies $\frac{dV_0}{dr}\left(r\right) \leq 0$. Hence, the lemma will follow if we can check that $\frac{dV_0}{dr}\left(r = \left(1+\sqrt{2}\right)M\right) > 0$:

\[\Lambda^{-1}\left(r^2+a^2\right)^3\frac{dV_0}{dr}\Big|_{r = \left(1+\sqrt{2}\right)M} =O\left(c\right) - 2\left(M^3\left(1+\sqrt{2}\right)^3 - 3M^3\left(1+\sqrt{2}\right)^2 + a^2M\left(1+\sqrt{2}\right) + a^2M\right).\]
Since we have
\[\left(1+\sqrt{2}\right)^2 = 3 + 2\sqrt{2},\]
\[\left(1+\sqrt{2}\right)^3 = 7 + 5\sqrt{2},\]
we obtain
\begin{align*}
\Lambda^{-1}\left(r^2+a^2\right)^3\frac{dV_0}{dr}\Big|_{r = \left(1+\sqrt{2}\right)M} &=
O\left(c\right) - 2\left(7M^3 + 5\sqrt{2}M^3 - 9M^3 -6\sqrt{2}M^3 + a^2M + \sqrt{2}a^2M + a^2M\right) \\
&=O\left(c\right) - 2\left(2M\left(a^2-M^2\right) + \sqrt{2}M\left(a^2 - M^2\right)\right).
\end{align*}
This is positive for sufficiently small $c>0$.
\end{proof}
\begin{remark}The importance of the value $r = \left(1+\sqrt{2}\right)M$ comes from the fact that this is the unique location of trapping for axisymmetric solutions to the wave equation on an extreme Kerr background, see~\cite{aretakisKerr}.
\end{remark}
\begin{remark}Note that in the case $a = 0$, one may drop the assumptions $|\sigma| \leq c$ and $\left|m\right|^2 \leq c\Lambda$ and the $O(c)$'s which occur in the proof.
\end{remark}
\begin{remark}\label{ergoTrap}Of course, the Killing vector field $T$ satisfies
\[g\left(T,T\right) = -\left(\frac{r^2 -2Mr + a^2\cos^2\theta}{r^2+a^2\cos^2\theta}\right),\]
which is manifestly negative for $r  \geq \left(1+\sqrt{2}\right)M > 2M$.
\end{remark}

\subsection{Aside: relation with null geodesic flow}
\label{ASIDEsec}

We note that the potential $V_0$ is intimately related to the potential which appears for the radial dependence of solutions of the geodesic equation, i.e. let $\gamma(s) = \left(t\left(s\right),r\left(s\right),\theta\left(s\right),\phi\left(s\right)\right)$ be a null geodesic.

The conserved quantities associated to stationarity and axisymmetry are
\[E \doteq g\left(\dot\gamma,T\right) = -\left(1-\frac{2Mr}{\rho^2}\right)\dot t - \frac{2Mra\sin^2\theta}{\rho^2}\dot\phi,\]
\[L \doteq -g\left(\dot\gamma,\Phi\right) = \frac{2Mra\sin^2\theta}{\rho^2}\dot t - \sin^2\theta\frac{(r^2+a^2)^2 - a^2\sin^2\theta\Delta}{\rho^2}\dot\phi.\]
Carter's hidden conserved quantity is
\[Q \doteq \rho^4\left(\dot\theta\right)^2 + \frac{L^2}{\sin^2\theta} - a^2E^2\cos^2\theta.\footnote{Instead of $Q$ one often finds the Carter constant defined as $K := \rho^4\left(\dot\theta\right)^2 + \frac{(L-aE\sin^2\theta)^2}{\sin^2\theta}$, but $Q$ will relate more naturally to our conventions for the wave equation.}\]

Geodesic motion then reduces to the following system (see~\cite{carter})

\begin{equation*}
\rho^2\dot t = a\left(Ea\sin^2\theta-L\right) + \frac{\left(r^2+a^2\right)\left(La-\left(r^2+a^2\right)E\right)}{\Delta},
\end{equation*}
\begin{equation*}
\rho^2\dot\phi = \frac{Ea\sin^2\theta-L}{\sin^2\theta} + \frac{a\left(La-(r^2+a^2)E\right)}{\Delta},
\end{equation*}
\begin{equation*}
\rho^4\left(\dot\theta\right)^2 = Q + a^2E^2 - 2aEL - \frac{\left(L-aE\sin^2\theta\right)^2}{\sin^2\theta},
\end{equation*}
\begin{equation}\label{rEqn}
\rho^4\left(\dot r\right)^2 = \left((r^2+a^2)E-aL\right)^2 -\Delta\left(Q + a^2E^2 - 2aEL\right).
\end{equation}

Note that the right hand side of (\ref{rEqn}) be re-arranged to
\begin{align}\label{arrange}
\left(r^2+a^2\right)^2E^2 - 4MarEL + a^2L^2 - \Delta\left(Q + a^2E^2\right).
\end{align}

Under the correspondence $E \mapsto \omega$, $L \mapsto m$ and $Q \mapsto \lambda_{ml}$, (\ref{arrange}) is exactly equal to $(r^2+a^2)^2\left(\omega^2-V_0\right)$. Hence, we can write $\dot r$'s equation as
\begin{equation*}
\frac{\rho^4}{(r^2+a^2)^2}\left(\dot r\right)^2 = E^2 - V_0\left(E,L,Q,r\right).
\end{equation*}

As a corollary of Lemmas~\ref{lem:1},~\ref{lem:2},~\ref{lem:3},~\ref{lem:4} and~\ref{aziTrap}, one has
that
(a) null geodesic flow is hyperbolic in a neighborhood of the set of future trapped
null geodesics
(b) null geodesics $\gamma$ whose future tangent $\dot\gamma$
has $g_{a,M}(\dot\gamma,T)\ge 0$ are \emph{not} future trapped; they intersect $\mathcal{H}^+$
(c) trapped null geodesics orthogonal to $\partial_\phi$ lie
outside of the ergoregion.
We shall not however make direct use of any of these facts at the level of
geodesic flow.

\section{The separated current templates}
\label{sct}
Before turning to our estimates we recall the separated current templates of~\cite{dr7} and~\cite{stabi}.

\subsection{The frequency-localised virial currents ${\bf J}^{X,w}$}
First, we  define the frequency-localised analogue of the virial currents ${\bf J}^{X,w}$
where $X$ is in the direction of $\partial_{r^*}$, and $w$ is a suitable function.

Fix Kerr parameters $M>0$ and $|a|<M$ and frequency parameters
$\omega\in\mathbb R$, $m\in\mathbb Z$, and $\Lambda\in \mathbb R$.
Let $f(r^*)$, $h(r^*)$ and $y(r^*)$ be arbitrary sufficiently regular functions.\footnote{In general, $f$ will be bounded and $C^2$,  $h$ will be bounded, $C^1$ and piecewise $C^2$ and $y$ will be bounded, $C^0$ and piecewise $C^1$.}
With the notation $(\ref{primenotation})$, let us
define\footnote{For better or for worse, we follow here the notation we instituted in the first
parts of this series~\cite{dr7}. As this notation proved somewhat
unpopular,  we suggest that readers who dislike archaic Greek simply substitute
${\text Q}^y$, ${\text Q}^h$ for both $\text {\fontencoding{LGR}\selectfont \koppa}^y$ and $\text {\fontencoding{LGR}\selectfont \Coppa}^h$,
as we shall consistently use
functions named $f$, $h$ and $y$, according to whether  we mean
${\text Q}^f$, $\text {\fontencoding{LGR}\selectfont \Coppa}^h$ or
$\text {\fontencoding{LGR}\selectfont \koppa}^y$.
Note that in our survey~\cite{stabi}, we used the notation
$Q^f_0={\text Q}^f$, $Q^h_1=\text {\fontencoding{LGR}\selectfont \Coppa}^h$,
$Q^y_2=\text {\fontencoding{LGR}\selectfont \koppa}^y$.}
 the currents
 \begin{eqnarray*}
{\text{Q}}^f[u]&=&
f \left ( |u'|^2 + (\omega^2-V ) |u|^2\right) + f'
{\rm Re}\left(u'\bar u\right)-
\frac 12f'' |u|^2,\\
\text {\fontencoding{LGR}\selectfont \Coppa}^h[u] &=& h {\rm Re} (u'\bar u)-\frac12 h' |u|^2,\\
\text {\fontencoding{LGR}\selectfont \koppa}^y[u] &=& y\left(|u'|^2+(\omega^2-V)|u|^2\right),
\end{eqnarray*}
associated to
the choice of an arbitrary
 smooth function $u(r^*)$.\footnote{Recall that $\text{Q}^f$ is itself the combination
$\text {\fontencoding{LGR}\selectfont \Coppa}^{h}[u]+
\text {\fontencoding{LGR}\selectfont \koppa}^{y}[u]$, with $y=f$ and $h=f'$, but sufficiently important
to deserve its own name!}

For $u$ satisfying~(\ref{e3iswsntouu}), we compute:
\begin{equation}\label{eq:Qfor}
(\text{Q}^f[u])'= 2 f' |u'|^2  - f V' |u|^2 + {\rm Re}(2 f \bar{H} u'+f' \bar{H} u)-\frac 12 f''' |u|^2,
\end{equation}
\begin{equation}\label{eq:Q1for}
(\text {\fontencoding{LGR}\selectfont \Coppa}^h[u])'=h\left(|u'|^2 +(V-\omega^2)|u|^2\right) -\frac12 h'' |u|^2 +h\, {\rm Re} (u\bar H),
\end{equation}
\begin{equation}\label{eq:Q2for}
(\text {\fontencoding{LGR}\selectfont \koppa}^y[u])'= y ' \left(|u'|^2+(\omega^2-V)|u|^2\right) -yV'|u|^2+ 2y\,{\rm Re} (u'\bar H).
\end{equation}

The virial currents we shall employ will be various combinations  of
$\text{Q}$,
$\text {\fontencoding{LGR}\selectfont \Coppa}$, $\text {\fontencoding{LGR}\selectfont \koppa}$
with
suitably selected functions $f$, $h$, $y$.
Note that the choice of these functions may
depend on $a$, $\omega$, $m$, $\Lambda$, but, again, we temporarily
suppress this from the notation.

\subsection{The frequency-localised conserved energy currents}
As in our survey~\cite{stabi}, we shall need, in addition to the above,
a frequency-localised analogue of the conserved energy current ${\bf J}^T$.
Whereas in~\cite{stabi}, we introduced also a frequency-localised version of
the red-shift current ${\bf J}^N$, here we shall use in its place
a frequency-localised version of the (again conserved) current ${\bf J}^K$.

Again, fix Kerr parameters $M>0$ and $|a|<M$ and frequency parameters
$\omega\in\mathbb R$, $m\in\mathbb Z$, and $\Lambda\in \mathbb R$.
The ``frequency-localised'' versions of ${\bf J}^T$ and ${\bf J}^K$ are then  defined as follows:
\begin{eqnarray*}
\text{Q}^T[u]&=&\omega\, {\text Im}  (u'\overline {u}),\\
\text{Q}^K[u]&=& \left(\omega - \upomega_+m\right)\, {\text Im} (u'\overline {u}),
\end{eqnarray*}
where $\upomega_+ = \frac{a}{2Mr_+}$ is the ``angular velocity'' of the event horizon.
\index{energy currents! fixed-frequency current templates! $\text{Q}^T[u]$}
\index{energy currents! fixed-frequency current templates!  $\text {Q}^K[u]$}
For $u$ satisfying $(\ref{e3iswsntouu})$, we have
\begin{equation}\label{eq:Q3for}
\left(\text{Q}^T[u]\right)' = \omega\, {\text Im}  (H\overline {u}),
\end{equation}
\begin{equation}\label{eq:Q4for}
\left(\text{Q}^K[u]\right)' = \left(\omega - \upomega_+m\right)\, {\text Im}  (H\overline {u}).
\end{equation}

\section{The frequency localised multiplier estimates}\label{freqLocEst}
In the present
section, using the current templates of Section~\ref{sct}, we
will estimate smooth solutions $u$ to the radial o.d.e.~(\ref{e3iswsntouu}) with a general smooth right hand side $H$ and which satisfy
the boundary conditions~(\ref{eq:b-}) and~(\ref{eq:b+}).
The point is to obtain estimates which are uniform in the frequency
parameters $(\omega, m, \Lambda)$. In view of future applications, we will write the result as an independent theorem.
We apply this theorem several times in the present paper (in slightly different contexts)
in Sections~\ref{summation},~\ref{continuityargsec},~\ref{precise} and~\ref{boundSuff}.
We remark that the theorem can in principle be applied in future applications independently of the specific setup
of Section~\ref{cartersupersection}.

Before stating the theorem, given $|a|\le a_0<M$,
set $R_- \doteq r_+ + \frac{1}{2}\left(r_{\rm red}-r_+\right)$ where $r_{\rm red}$ is the constant from Proposition~\ref{specialises..} and set $R_+ \doteq 2R_{\rm large}$, where $R_{\rm large}$ is the constant from Proposition~\ref{lrp}. These values will be referred to below.
 The precise statement of the main result of this section is

\begin{theorem}\label{phaseSpaceILED}
Given  $0\le a_0<M$,
there exist positive
parameters $\omega_{\rm high}$, $\omega_{\rm low}$, $\epsilon_{\rm width}$, $E$ and $R_{\infty}^*$,
such that the following is true.

Let $0\le a\le a_0$ and let $(\omega, m, \Lambda)$ be an admissible frequency triple.

Then there exist
functions $f$, $h$, $y$, $\hat y$, $\tilde y$, $\chi_1$ and $\chi_2$, and
a value $r_{\rm trap}$,
depending on the parameters $a_0$, $M$, $a$
and the frequency triple
$(\omega, m, \Lambda)$ but
satisfying the uniform bounds
\[
|r_{\rm trap}-r_+|^{-1}+|r_{\rm trap}|+
\left|f\right| + \Delta^{-1}r^2\left|f'\right| + \left|h\right| + \left|y\right| + \left|\tilde y\right| + \left|\hat y\right| + \left|\chi_1\right| + \left|\chi_2\right|  \leq B,
\]
\[f + y = 1,\ f' = 0,\ h = 0,\ \left|\tilde y\right| \leq B\exp\left(-br^*\right),\ \hat y = 0,\ \chi_1 = 0,\ \chi_2 = 1\text{ for }r^* \geq R^*_{\infty},
\]
such that, for all smooth solutions $u$ to the radial o.d.e.~(\ref{e3iswsntouu}) with
right hand side $H$, satisfying moreover the boundary conditions~(\ref{eq:b-}) and~(\ref{eq:b+}), we have,
\begin{align}\label{fromPhaseSpace2}
b&\int_{R^*_-}^{R_+^*}\left[\left|u'\right|^2 + \left(\left(1-r^{-1}r_{\rm trap}\right)^2\left(\omega^2 + \Lambda\right) + 1\right)\left|u\right|^2\right]\, dr^*
\\ \nonumber &\le \int_{-\infty}^{\infty}H \cdot  (f, h, y, \chi) \cdot (u, u')\, dr^* + 1_{\{\omega_{\rm low} \leq \left|\omega\right| \leq \omega_{\rm high}\}\cap \{\Lambda \leq \epsilon_{\rm width}^{-1}\omega_{\rm high}^2\}}\left|u\left(-\infty\right)\right|^2.
\end{align}
The symbol $1_{\{\omega_{\rm low} \leq \left|\omega\right| \leq \omega_{\rm high}\}\cap \{\Lambda \leq \epsilon_{\rm width}^{-1}\omega_{\rm high}^2\}}$ denotes the indicator function for the set $\{\omega_{\rm low} \leq \left|\omega\right| \leq \omega_{\rm high}\}\cap \{\Lambda \leq \epsilon_{\rm width}^{-1}\omega_{\rm high}^2\}$, and
\begin{align}
H \cdot  (f, h, y, \chi) \cdot (u, u') \doteq &-2f\text{Re}\left(u'\overline{H}\right) -f'\text{Re}\left(u\overline{H}\right) + h\text{Re}\left(u\overline{H}\right) -E\chi_2\omega\text{Im}\left(H\overline{u}\right)
\\ \nonumber &\,\,- E\chi_1\left(\omega-\upomega_+m\right)\text{Im}\left(H\overline{u}\right) -2y\text{Re}\left(u'\overline{H}\right)-2\tilde y\text{Re}\left(u'\overline{H}\right) -2\hat y\text{Re}\left(u'\overline{H}\right).
\end{align}
\end{theorem}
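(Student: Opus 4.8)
The plan is to construct the multiplier functions separately on a partition of the admissible frequency space $(\omega, m, \Lambda)$ into finitely many regimes, exactly as announced in the survey and as organised in the subsections of Section~\ref{freqLocEst} that follow this statement. The frequency partition is governed by the parameters $\omega_{\rm high}$, $\omega_{\rm low}$, $\epsilon_{\rm width}$: namely the high-frequency non-superradiant range, the high-frequency superradiant range, the bounded-frequency non-superradiant range, the bounded-frequency superradiant range, and the near-stationary range $|\omega|\le\omega_{\rm low}$. In each regime I would exhibit explicit choices of $f$, $h$, $y$, $\hat y$, $\tilde y$ (together with the cutoffs $\chi_1$, $\chi_2$ multiplying the conserved currents $\text{Q}^K$, $\text{Q}^T$) and a value $r_{\rm trap}$, then add the divergence identities~(\ref{eq:Qfor})--(\ref{eq:Q2for}),~(\ref{eq:Q3for}),~(\ref{eq:Q4for}) and integrate over $r^*\in(-\infty,\infty)$. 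Integrating the total-derivative left-hand sides produces boundary terms at $r^*=\pm\infty$; here the boundary conditions~(\ref{eq:b-}) and~(\ref{eq:b+}) are used to evaluate these (as in the asymptotic analysis of Lemma~\ref{aeRegular}), so that on $\mathcal{H}^+$ one gets terms proportional to $|u(-\infty)|^2$ weighted by $\omega(\omega-\upomega_+m)$ and at infinity terms weighted by $\omega$. The inhomogeneous contributions of $H$ assemble precisely into the bilinear expression $H\cdot(f,h,y,\chi)\cdot(u,u')$ on the right. The output of each regime is a spacetime estimate whose bulk is coercive over $[R_-^*,R_+^*]$ with the stated $(1-r^{-1}r_{\rm trap})^2(\omega^2+\Lambda)+1$ weight; choosing $r_{\rm trap}=r^0_{\rm max}$ in the trapped range and $r_{\rm trap}=r_+$ (so the weight is bounded below) elsewhere realises the degeneration structure.

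The key inputs are the potential estimates of Section~\ref{Vpropsec}. In the high-frequency non-superradiant regime, Lemma~\ref{lem:1} gives the unique simple maximum $r^0_{\rm max}$ of $V_0$ (hence of $V$ up to lower-order corrections), and one builds an $f$ vanishing simply at $r_{\rm trap}=r^0_{\rm max}$ with $f'>0$ and $-fV'\ge 0$ on either side, together with a small multiple of an $h$-current and of the conserved $\text{Q}^T$-current to handle the boundary terms and the $|u'|^2$ deficit near $r^0_{\rm max}$; positivity of the boundary term on $\mathcal{H}^+$ uses that $\omega(\omega-\upomega_+m)\ge 0$ off the superradiant range. In the high-frequency superradiant regime one additionally exploits Lemmas~\ref{lem:3} and~\ref{lem:4}: there is no interior minimum $r^0_{\rm min}$, and $V_{\rm max}\ge(1+\epsilon)\omega^2$ quantitatively, which upgrades the bulk estimate to be \emph{non-degenerate} ($r_{\rm trap}=r_+$) and simultaneously lets one absorb the wrong-sign horizon boundary term coming from $\text{Q}^K$ into the (now non-degenerate) bulk. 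The bounded-frequency non-superradiant range is handled by carrying over / perturbing the Schwarzschild constructions of~\cite{dr7}, with a redshift-type $y$, $\tilde y$, $\hat y$ near $\mathcal{H}^+$ and a large-$r$ current, both of which are manifestly stable. In the near-stationary range $|\omega|\le\omega_{\rm low}$ one uses the new explicit construction, adapting~\cite{dr7}, splitting into the cases $|a|\le\tilde a_0$ and $|a|\ge\tilde a_0$.

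The main obstacle is the bounded-frequency \emph{superradiant} range: here none of $\omega$, $m$, $\Lambda$ is a large parameter, superradiance is not small, and the quantitative bound $V_{\rm max}\ge(1+\epsilon)\omega^2$ of Lemma~\ref{lem:4} cannot be leveraged against a large frequency weight. One can still choose $f$, $y$ to make the bulk term non-negative (degenerating only near $r_{\rm trap}$), but the resulting horizon boundary term carries the wrong sign and \emph{cannot} be absorbed at this stage — this is exactly why the indicator term $1_{\{\omega_{\rm low}\le|\omega|\le\omega_{\rm high}\}\cap\{\Lambda\le\epsilon_{\rm width}^{-1}\omega_{\rm high}^2\}}|u(-\infty)|^2$ appears on the right-hand side of~(\ref{fromPhaseSpace2}) and is left to be controlled later (in Section~\ref{whitinghere}, via the quantitative mode stability of~\cite{shlapRot}). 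Finally, one checks the claimed uniform bounds on $|r_{\rm trap}-r_+|^{-1}$, $|r_{\rm trap}|$, $|f|$, $\Delta^{-1}r^2|f'|$, $|h|$, $|y|$, $|\tilde y|$, $|\hat y|$, $|\chi_1|$, $|\chi_2|$ and their stated behaviour for $r^*\ge R^*_\infty$ hold uniformly across all the finitely many regimes, which follows since in each the functions are built from a fixed finite toolkit with the boundedness of $r^0_{\rm max}$ supplied by Lemma~\ref{lem:1}; fixing $\omega_{\rm high}$, $\omega_{\rm low}$, $\epsilon_{\rm width}$, $E$, $R_\infty^*$ in terms of $a_0$, $M$ at the end of the argument completes the proof.
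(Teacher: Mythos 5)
Your overall scheme is the right one: a finite decomposition of admissible frequency triples governed by $\omega_{\rm high}$, $\omega_{\rm low}$, $\epsilon_{\rm width}$, with multiplier currents built from the templates of Section~\ref{sct}, boundary conditions~(\ref{eq:b-}),~(\ref{eq:b+}) used to evaluate the flux terms, and the $|u(-\infty)|^2$ term appearing only in the non-stationary bounded range to be controlled later by mode stability in Section~\ref{whitinghere}. Your reading of the high-frequency superradiant range (Lemmas~\ref{lem:3}--\ref{lem:4} yielding a quantitatively large classically forbidden region, hence a large parameter for absorbing the $\text{Q}^K$ horizon flux) matches Proposition~\ref{odeEst1}. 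There are, however, two concrete problems. First, you set $r_{\rm trap}=r_+$ outside the trapping regime. This both contradicts the uniform bound on $|r_{\rm trap}-r_+|^{-1}$ appearing in the very statement you are proving, and makes the weight $(1-r^{-1}r_{\rm trap})^2$ vanish at $r=r_+$ rather than stay bounded away from zero as your parenthetical intends; the correct choice is $r_{\rm trap}=0$, so that the weight is identically $1$ in ranges with no trapping degeneration.

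Second, and more substantively, you treat the entire high-frequency non-superradiant regime with a single $\text{Q}^f$-plus-$E\text{Q}^T$ construction and assert that the horizon boundary term is favourable because $\omega(\omega-\upomega_+m)\ge 0$ off the superradiant range. That reasoning fails in the angular-frequency-dominated non-superradiant subrange $\mathcal{G}_{\mbox{$\lessflat$}}$ (cf.~Section~\ref{boundaryStillaProblem}): there $\omega^2\ll\Lambda$, so $\omega(\omega-\upomega_+m)$ can be arbitrarily smaller than $(\omega-\upomega_+m)^2$, and the $\text{Q}^T$ flux at the horizon is too weak to absorb the $(\omega-\upomega_+m)^2|u(-\infty)|^2$ boundary term produced by the $\text{Q}^f$ current. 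The paper's Proposition~\ref{odeEst3} handles this by exactly the cutoff-$\text{Q}^K$ large-parameter mechanism used in $\mathcal{G}^{\mbox{$\sharp$}}$ (exploiting Lemma~\ref{lem:3} to rule out the interior minimum and get a quantitative barrier), not by non-superradiance alone; without some version of this your bulk-plus-boundary estimate would not close in $\mathcal{G}_{\mbox{$\lessflat$}}$. The paper's further split of the non-superradiant range into $\mathcal{G}_{\mbox{$\sharp$}}$, $\mathcal{G}_{\mbox{$\lessflat$}}$, $\mathcal{G}_{\mbox{$\natural$}}$ (with a pure $y$-current and no $f$-current in $\mathcal{G}_{\mbox{$\sharp$}}$) is partly a stylistic refinement, but the $\mathcal{G}_{\mbox{$\lessflat$}}$ horizon-flux issue is a real gap that your outline does not address.
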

Before discussing the proof of the theorem, we give a few remarks pertaining to the application of Theorem~\ref{freqLocEst} in Section~\ref{summation} in the context of $u$ arising from Carter's separation applied to a solution $\Psi$ of the inhomogeneous wave equation.
\begin{bigremark}\label{Traprem}
For frequencies
in the trapping regime, $r_{\rm trap}$ will denote the unique trapped value of $r$ associated to the triple $\left(\omega,m,\Lambda\right)$. Otherwise, $r_{\rm trap}$ will be set to $0$. This will capture the degeneration due to trapping.
\end{bigremark}

\begin{bigremark}\label{R2}
The specific behaviour of the functions $f$, $h$, $y$, $\hat y$, $\tilde y$, $\chi_1$ and $\chi_2$ in the region $r^* \geq R_{\infty}^*$ will be useful in Section~\ref{summation} when we sum~(\ref{fromPhaseSpace2}) to produce a physical space estimate.
\end{bigremark}

\begin{bigremark}\label{thisIsPhaseILED}
If we consider the right hand side of the estimate~(\ref{fromPhaseSpace2}) as ``data'', a direct application of Plancherel (see the explicit formulas in Section~\ref{separationSubsection}) shows that~(\ref{fromPhaseSpace2}) is the phase space versions of integrated local energy decay.
\end{bigremark}

\begin{bigremark}Let us draw particular attention to the term $1_{\{\omega_{\rm low} \leq \left|\omega\right| \leq \omega_{\rm high}\}\cap \{\Lambda \leq \epsilon_{\rm width}^{-1}\omega_{\rm high}^2\}}\left|u\left(-\infty\right)\right|^2$ on the right hand side of the estimate~(\ref{fromPhaseSpace2}). This term must
initially be put on the right hand side of the corresponding integrated energy decay statement (cf.~Remark~\ref{thisIsPhaseILED}). Eventually, this term will be dealt with
in Section~\ref{whitinghere} using the quantitative refinement~\cite{shlapRot} of mode stability.
\end{bigremark}

The proof proper of Theorem~\ref{phaseSpaceILED} will be given in Section~\ref{putting}.
It will be based on a series of propositions proven
in Sections~\ref{largeSuper}--\ref{whit} below,
where $(\ref{fromPhaseSpace2})$ is successively
obtained for various ranges of admissible frequency triples.
These frequency ranges, however,
are determined by parameters which must be suitably optimised so
as for our constructions to be possible.
We begin
thus with  a discussion of these ranges and an overview of the constructions.

\subsection{The frequency ranges}
\label{freerange}
Let $a_0<M$.
Fix a parameter $\alpha$ (depending only on $a_0$, $M$)
 satisfying the statement of Lemma~\ref{lem:3}.
 For each $0\le a\le a_0$,
and all $\omega_{\rm high}>0$, $\epsilon_{\rm width}>0$, we define the frequency ranges
$\mathcal{G}_{\mbox{$\flat$}}(\omega_{\rm high})$,
$\mathcal{G}_{\lessflat}(\omega_{\rm high}, \epsilon_{\rm width})$,
$\mathcal{G}_{\mbox{$\natural$}}(\omega_{\rm high}, \epsilon_{\rm width})$,
 $\mathcal{G}_{\mbox{$\sharp$}}(\omega_{\rm high},\epsilon_{\rm width})$,
 $\mathcal{G}^{\mbox{$\sharp$}}(\omega_{\rm high},\epsilon_{\rm width})$
 by
\begin{itemize}
\item
$\mathcal{G}^{\mbox{$\sharp$}}=\{(\omega, m, \Lambda){\rm\ admissible\ }$ :
$\Lambda \ge (\frac{a}{2Mr_+} + \alpha)^{-2}\omega_{\rm high}^2$, $m\omega\in (0,\frac {am^2}{2Mr_+}+\alpha\Lambda]\}$,
\item
$\mathcal{G}_{\mbox{$\sharp$}}=\{(\omega, m, \Lambda){\rm\ admissible\ }$ :
$|\omega|\ge \omega_{\rm high}$, $\Lambda<\epsilon_{\rm width} \omega^2, m\omega\not\in (0,\frac {am^2}{2Mr_+}+\alpha\Lambda]\}$,
\item
$\mathcal{G}_{\mbox{$\lessflat$}}=\{(\omega, m, \Lambda){\rm\ admissible\ }$ :
$\Lambda \geq \epsilon_{\rm width}^{-1}\omega_{\rm high}^2$, $\epsilon_{\rm width} \Lambda > \omega^2, m\omega\not\in (0,\frac {am^2}{2Mr_+} + \alpha\Lambda]\}$,
\item
$\mathcal{G}_{\mbox{$\natural$}}=\{(\omega, m, \Lambda){\rm\ admissible\ }$ :
$|\omega|\ge \omega_{\rm high}$, $\epsilon_{\rm width} \Lambda \le \omega^2\le \epsilon_{\rm width}^{-1} \Lambda, m\omega\not\in (0,\frac {am^2}{2Mr_+}+\alpha\Lambda]\}$,
\item
$\mathcal{G}_{\mbox{$\flat$}}=\{(\omega, m, \Lambda){\rm\ admissible\ }$ :
$|\omega| < \omega_{\rm high}$, $\Lambda < \epsilon_{\rm width}^{-1}\omega_{\rm high}^2\}$.
\end{itemize}
The parameters $\omega_{\rm high}$ and $\epsilon_{\rm width}$ will be fixed in the course of the proof of Theorem~\ref{freqLocEst}, see Section~\ref{putting}.

We see easily that
\begin{lemma}
\label{everythingCovered}
With the above notation, for all $0\le a\le a_0$
if $(\omega, m, \Lambda)$ is admissible,
then,
for all choices of parameters $\omega_{\rm high}$, $\epsilon_{\rm width}$,
$(\omega, m, \Lambda)$ lies in exactly  one of the
frequency ranges $\mathcal{G}^{\mbox{$\sharp$}}$,
$\mathcal{G}_{\mbox{$\sharp$}}$, $\mathcal{G}_{\mbox{$\lessflat$}}$,
$\mathcal{G}_{\mbox{$\natural$}}$, or $\mathcal{G}_{\mbox{$\flat$}}$.
\end{lemma}
\begin{proof}
To see this, observe that
\[|\omega| \geq \omega_{\rm high} \text{ and }m\omega \in \left(0,\frac{am^2}{2Mr_+}+\alpha\Lambda\right] \Rightarrow
\Lambda \ge \left(\frac{a}{2Mr_+} + \alpha\right)^{-2}\omega_{\rm high}^2.\]
\end{proof}

Our constructions of currents will vary according to the frequency range of the triple $(\omega, m,\Lambda)$. We now give an overview of these constructions.
\subsection{Overview}
For each admissible triple $(\omega, m, \Lambda)$,
we would like to find a current $\text{Q}$ consisting of various combinations of
$\text{Q}^f$,
$\text {\fontencoding{LGR}\selectfont \koppa}^y$, $\text {\fontencoding{LGR}\selectfont \Coppa}^h$, $\text{Q}^T$ and $\text{Q}^K$ satisfying the bulk coercivity property
\begin{equation}
\label{yaxvw}
\int_{-\infty}^{\infty}  \text{Q}'[u] \ge b\int_{R^*_-}^{R^*_+}  \left(|u'|^2 +\left(1-r^{-1}r_{\rm trap}\right)^2(\Lambda +\omega^2)|u|^2 +|u|^2\right)  - \int_{-\infty}^{\infty}
H \cdot  (f, h, y, \chi) \cdot (u, u'),
\end{equation}
and, ideally, the boundary positivity property
\begin{equation}\label{boundaryOK}
\text{Q}\left(\infty\right) - \text{Q}\left(-\infty\right) \leq 0.
\end{equation}

The terms Q, $r_{\rm trap}$, $H$, $f$, $h$, $y$ should
all be understood to depend on $\omega$, $m$, and $\Lambda$,
here omitted for brevity, and the integrals are with respect to $r^*$. One restricts the domain of integration on
the first term to $[R_-^*,R_+^*]$
on the right hand side because one expects this virial current
not to control things at the horizon and infinity.

The most difficult aspect of establishing~(\ref{yaxvw}) is the need to understand trapping. In order to do this this we will heavily rely on the analysis of the potential $V_0$ carried out in Section~\ref{Vpropsec}. For frequencies for which trapping is relevant, $r_{\rm trap}$ will denote the unique value of $r$, associated with the
frequency triple, where the estimate must degenerate. For frequencies where trapping is not relevant, $r_{\rm trap} = 0$.

The fundamental obstruction to achieving~(\ref{boundaryOK}), on the other hand, is superradiance (see Section~\ref{supInKerr}). For non-superradiant frequencies, i.e.~frequencies which satisfy $\omega\left(\omega - \upomega_+m\right) \geq 0$, one may easily\footnote{For the moment we are suppressing the fact that this estimate may be insufficiently strong if $0 \leq \omega\left(\omega-\upomega_+m\right) \ll \left(\omega-\upomega_+m\right)^2$. See Section~\ref{boundaryStillaProblem}.} control these fluxes via a sufficiently large multiple of the conserved Q$^T$ current:
\begin{equation}\label{microEnergyEst}
\int_{-\infty}^{\infty}\text{Im}\left(H\overline{u}\right) = \int_{-\infty}^{\infty}\left(\text{Q}^T\right)' = \text{Q}^T(\infty) - \text{Q}^T(-\infty) = \omega^2\left|u(\infty)\right|^2 + \omega\left(\omega-\upomega_+m\right)\left|u(-\infty)\right|^2.
\end{equation}
However, for superradiant frequencies, where $\omega\left(\omega-\upomega_+m\right) < 0$, no conserved current gives a coercive estimate for the boundary terms and it is thus no longer clear how to arrange for~(\ref{boundaryOK}).  As it turns out, see Section~\ref{largeSuper} below, one of the miracles of the Kerr geometry is that trapping and superradiance are disjoint;
exploiting this, one may indeed  establish~(\ref{boundaryOK}) for sufficiently large frequencies
with the help of~(\ref{yaxvw}) and a large positive parameter. Unfortunately, for bounded superradiant frequencies, one does not have a large parameter at hand. We will not be able to carry out such a scheme, and we will not in fact establish~(\ref{boundaryOK}); see
Section~\ref{theseAreBounded}.

We now turn to a more detailed discussion of the difficulties in each frequency range.
The reader may wish to refer to this when reading
Sections~\ref{largeSuper}--\ref{whit} below.

\subsubsection{The $\mathcal{G}^{\mbox{$\sharp$}}$ range}\label{largeSuper000}

This is the large frequency superradiant regime. Lemma~\ref{lem:4} shows that these frequencies are not trapped. Thus, it is not difficult to establish~(\ref{yaxvw}) via the combination of a
$\text{Q}^f$ and
$\text {\fontencoding{LGR}\selectfont \Coppa}^h$
 current with a monotonically increasing $f$ which switches signs at the unique maximum of the potential and a positive function $h$ which peaks near the maximum of the potential.

As for the boundary terms, despite the lack of a coercive conserved current, we will appeal to the aforementioned miracle that superradiant frequencies are not trapped to find a large parameter which will still allow us to achieve~(\ref{boundaryOK}). Briefly put, Lemma~\ref{lem:4} shows that we have a quantitatively large ``classically forbidden region'',~and from this one expects to derive an estimate for $u$ near $r_{\max}$ which comes with a large parameter.

\subsubsection{The $\mathcal{G}_{\mbox{$\sharp$}}$ range}\label{lanosudiscu}
This is a non-superradiant regime where the time frequency $\omega$ is large and dominates the other parameters. It is easy to see that a $\text {\fontencoding{LGR}\selectfont \koppa}^y$ current with an appropriate choice of $y$ will establish~(\ref{yaxvw}).

Of course, the boundary terms may be easily controlled with~(\ref{microEnergyEst}).

\subsubsection{The $\mathcal{G}_{\mbox{$\lessflat$}}$ range}\label{boundaryStillaProblem}
This is a non-superradiant regime where the angular frequency $\Lambda$ is large and dominates the other parameters. One may easily show that the conclusions of Lemma~\ref{lem:4} still hold, and, as in Section~\ref{largeSuper}, it is not difficult to establish~(\ref{yaxvw}).

Turning to the boundary terms, note that $\omega\left(\omega-\upomega_+m\right)$ and $\left(\omega-\upomega_+m\right)^2$ are not necessarily comparable in this regime. Thus, even though the flux $\text{Q}^T[u]|_{r=\infty}$ may be controlled with~(\ref{microEnergyEst}), the estimate~(\ref{microEnergyEst}) does not provide sufficient control of the
flux $\text{Q}^K[u]|_{r=r_+}$. Fortunately, we may apply the same argument as in the Section~\ref{largeSuper} to control the horizon flux.

\subsubsection{The $\mathcal{G}_{\mbox{$\natural$}}$ range}\label{spartaheretoo}
This is a non-superradiant regime where the angular frequency $\Lambda$ and the time frequency $\omega$ are large and comparable. \underline{This is the regime of trapping} and hence the only frequency range where $r_{\rm trap}$ will be non-zero. The estimate~(\ref{yaxvw}) is achieved via a Q$^f$ current with a monotonically increasing function $f$ which switches sign at the unique maximum of the potential. The construction the function $f$ will heavily depend on the critical point analysis of $V_0$ carried out in Section~\ref{Vtrsec}.

The estimate~(\ref{boundaryOK}) is easily achieved via~(\ref{microEnergyEst}).

\subsubsection{The $\mathcal{G}_{\mbox{$\flat$}}$ range}\label{theseAreBounded}
This is a bounded frequency regime. It turns out to be useful to further split this frequency regime into the following four sub-regimes.
\begin{enumerate}
    \item $\left|\omega\right| \leq \omega_{\rm low}$, $0\le a < \tilde a_0$ and $m \neq 0$.
    \item $\left|\omega\right| \leq \omega_{\rm low}$ and $m =0$.
    \item $\left|\omega\right| \leq \omega_{\rm low}$, $m \neq 0$ and $a \geq \tilde a_0$.
    \item $|\omega|\ge \omega_{\rm low}$.
\end{enumerate}
Here $\omega_{\rm low}$ is the small parameter mentioned in Theorem~\ref{freqLocEst} and $\tilde a_0$ is a small parameter to be fixed in the course of the proof.

For the estimate~(\ref{yaxvw}) we will exploit two types of estimates. If $|\omega|\ge \omega_{\rm low}$ or  $\left|\omega\right| \leq \omega_{\rm low}$, $m \neq 0$ and $a \geq \tilde a_0$, then we will either have $\omega^2 \sim 1$ or $\left(\omega - \upomega_+m\right)^2 \sim 1$. In this case we will employ $\text {\fontencoding{LGR}\selectfont \koppa}^y$ currents with exponential multipliers $y \doteq \exp\left(\int \upsilon\right)$ and appropriate functions $\upsilon$. If $\left|\omega\right| \leq \omega_{\rm low}$ and $\omega_{\rm low}$ is sufficiently small, then in regions with $1 \lesssim V$ we will have $1 \lesssim V - \omega^2$. We will apply $\text {\fontencoding{LGR}\selectfont \Coppa}^h$ currents to exploit this positivity of $V - \omega^2$.

As in Section~\ref{largeSuper}, the fundamental difficulty is a lack of control of the boundary terms for superradiant frequencies. It turns out that when $\omega^2 \ll 1$, i.e. $\left|\omega\right| \leq \omega_{\rm low}$ for $\omega_{\rm low}$ sufficiently small, then $\omega$ arises naturally as a small parameter and we will again be able to achieve~(\ref{boundaryOK}). However, for bounded frequencies with $|\omega| \geq \omega_{\rm low}$ there is no large or small parameter to exploit. Instead, for this frequency range we will only be
able to establish the weaker
\begin{equation}
\label{asthenes}
\text{Q}(\infty) - \text{Q}(-\infty) \leq B\left|u\left(-\infty\right)\right|^2.
\end{equation}
This is the origin of the term $1_{\{\omega_{\rm low} \leq \left|\omega\right| \leq \omega_{\rm high}\}\cap \{\Lambda \leq \epsilon_{\rm width}^{-1}\omega_{\rm high}^2\}}\left|u\left(-\infty\right)\right|^2$ on the right hand side of the estimate~(\ref{fromPhaseSpace2}).

We now turn to the detailed constructions of the currents for each frequency regime.

\subsection{The $\mathcal{G}^{\mbox{$\sharp$}}$ range}\label{largeSuper}
As discussed in Section~\ref{largeSuper000}, this defines a large frequency superradiant regime,
and by the results of Section~\ref{Vsrsec}, frequencies in this regime can
be viewed as non-trapped.

Once we have made our final choice of the parameter $\omega_{\rm high}$, then
for $(\omega, m, \Lambda)\in  \mathcal{G}^{\mbox{$\sharp$}}(\omega_{\rm high})$,
we will set the functions $y$, $\hat y$ and $\tilde y$ together with
the parameter $r_{\rm trap}$ from the statement of Theorem~\ref{phaseSpaceILED}
to be $0$. The desired coercivity in this range
and remaining functions $f$, $h$, $\chi_1$ and $\chi_2$ are given by the following:
\begin{proposition}\label{odeEst1}
Let $a_0<M$. Then, for all
$E\ge 2$,
 for all
$\omega_{\rm high}$ sufficiently big depending on $E$, for all
$R_{\infty}$ sufficiently big, for all $0\le a\le a_0$,
$(\omega, m, \Lambda)\in  \mathcal{G}^{\mbox{$\sharp$}}(\omega_{\rm high})$,
there exist functions $f$, $h$, $\chi_1$ and $\chi_2$
satisfying the uniform bounds
\[
\left|f\right| + \Delta^{-1}r^2\left|f'\right| + \left|h\right| + \left|\chi_1\right| + \left|\chi_2\right| \leq B\left(\omega_{\rm high}\right),\]
\[f = 1,\ h = 0,\ \chi_1 = 0\text{ and }\chi_2 = 1\text{ for }r\geq R_{\infty},
\]
such that, for all
smooth solutions $u$ to the radial o.d.e.~(\ref{e3iswsntouu}) with
right hand side $H$, satisfying moreover the boundary conditions~(\ref{eq:b-}) and~(\ref{eq:b+}) we have the estimate
\begin{align}\label{sharpEst}
b&\int_{R^*_-}^{R^*_+}\left (|u'|^2+(\omega^2+\La) |u|^2\right)
\\ &\leq \int_{-\infty}^\infty \left (-2f \,{\text{Re}} (u'\overline{H}) - (f'+h)\, {\text{Re}} (u\overline{H})+E\chi_2\omega {\text{Im}} (H\overline u) + E\chi_1\left(\omega-\upomega_+m\right){\text{Im}}(H\overline u)\right ).
\end{align}
\end{proposition}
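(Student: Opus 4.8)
The plan is to construct a single current
\[
\text{Q}[u] \;=\; \text{Q}^f[u] + \text{Q}^h[u] - E\chi_2\,\text{Q}^T[u] - E\chi_1\,\text{Q}^K[u],
\]
where $\text{Q}^h$ denotes the ``$h$-current'' of Section~\ref{sct} (in the notation of the footnote there) and $f$, $h$, $\chi_1$, $\chi_2$ are as in the statement, the signs and the factor $E$ being dictated by the requirement that, after using the o.d.e.\ identities $(\ref{eq:Qfor})$--$(\ref{eq:Q4for})$, the $H$-dependent part of $(\text{Q})'$ coincide with the integrand on the right of $(\ref{sharpEst})$. The two things to arrange are: $(\alpha)$ the $H$-free part of $(\text{Q})'$ is bounded below pointwise by $b\bigl(|u'|^2+(\omega^2+\Lambda)|u|^2\bigr)$ on $[R^*_-,R^*_+]$, up to errors absorbable into that same quantity; and $(\beta)$ the boundary term satisfies $\text{Q}(\infty)-\text{Q}(-\infty)\le 0$. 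Granting $(\alpha)$ and $(\beta)$, integrating $(\text{Q})'$ over $r^*\in\mathbb R$ and evaluating $\text{Q}(\pm\infty)$ via the boundary conditions $(\ref{eq:b-})$, $(\ref{eq:b+})$ gives $(\ref{sharpEst})$.

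First I extract what is needed from Section~\ref{Vpropsec}. With $\alpha$ fixed as in Section~\ref{freerange}, the defining condition $m\omega\in(0,\tfrac{am^2}{2Mr_+}+\alpha\Lambda]$ of $\mathcal G^{\sharp}(\omega_{\rm high})$ lets me apply Lemma~\ref{lem:3}: $V_0$ has a unique critical point $r^0_{\rm max}\in(r_+,B]$, a global maximum, with $V_0'>0$ on $(r_+,r^0_{\rm max})$ and $V_0'<0$ on $(r^0_{\rm max},\infty)$, and $V_0'(r_+)\ge b\Lambda$ together with $|V_0''|\le B\Lambda$ on $[r_+,B]$ gives $r^0_{\rm max}-r_+\ge b$. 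Lemma~\ref{lem:4} gives $V_0(r^0_{\rm max})-\omega^2\ge b\Lambda$; since $V_1\ge 0$ this persists for $V$, and, combined with the elementary bound $V_0(r^0_{\rm max})\le B(|m\omega|+\Lambda)$ and admissibility ($m^2\le\Lambda$, $\Lambda\ge 2a|m\omega|$), it forces $\omega^2\le B\Lambda$ throughout $\mathcal G^{\sharp}$ --- so $\omega^2+\Lambda\sim\Lambda$ and for $(\alpha)$ it suffices to recover the weight $\Lambda|u|^2$. Using $|V_0''|\le B\Lambda$ once more there is an interval $[p,q]\ni r^0_{\rm max}$ with $q-p\ge b$ on which $V-\omega^2\ge b\Lambda$ (a uniformly thick classically forbidden region), and on which, for $\omega_{\rm high}$ large, $V'$ agrees in sign with $V_0'$ away from $r^0_{\rm max}$.

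For $(\alpha)$ I take $f$ smooth, nondecreasing outside a neighbourhood of $r^0_{\rm max}$, equal to $-c_f$ near $r=r_+$ and to $1$ for $r\ge R_\infty$, vanishing at $r^0_{\rm max}$, with $f'\ge b$ on $[R^*_-,R^*_+]\setminus[p^*,q^*]$ and $|f|+\Delta^{-1}r^2|f'|\le B(\omega_{\rm high})$, the constant $c_f\le B(\omega_{\rm high})$ to be fixed in $(\beta)$. Since $f$ and $V_0'$ change sign together at $r^0_{\rm max}$, the term $-fV'|u|^2$ in $(\ref{eq:Qfor})$ is nonnegative, and $\ge b\Lambda|u|^2$ on $[R^*_-,R^*_+]\setminus[p^*,q^*]$ (using $|V'|\gtrsim\Lambda$ there), while $2f'|u'|^2\ge b|u'|^2$ there; where $f'$ becomes negative between the peak and $R_\infty$, this is dominated by the large $-fV'|u|^2$ coming from $V'<0$. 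I then take $h\ge 0$ smooth, supported in a slight enlargement of $[p^*,q^*]$, with $h\ge b$ on $[p^*,q^*]$ and $\|h\|_{C^2}\le B$; by the forbidden-region bound, $h\bigl(|u'|^2+(V-\omega^2)|u|^2\bigr)\ge b\bigl(|u'|^2+\Lambda|u|^2\bigr)$ on $[p^*,q^*]$. The remaining terms $-\tfrac12 f'''|u|^2$, $-\tfrac12 h''|u|^2$ are $\le B(\omega_{\rm high})|u|^2$ on a fixed compact $r$-set and, since $\Lambda\ge(\tfrac a{2Mr_+}+\alpha)^{-2}\omega_{\rm high}^2$, are absorbed by the $b\Lambda|u|^2$ already gained once $\omega_{\rm high}$ is large. (If $r^0_{\rm max}\notin[R^*_-,R^*_+]$ the estimate is only easier, since then $|V'|\gtrsim\Lambda$ on all of $[R^*_-,R^*_+]$.) Finally I take $\chi_2$ slowly varying, $\chi_2=1$ for $r\ge R_\infty$ and off a fixed neighbourhood of the horizon, $\chi_2=0$ near $\mathcal H^+$, and $\chi_1$ complementary (so near $\mathcal H^+$ only the conserved $\text{Q}^K$ appears, near $\mathcal I^+$ only $\text{Q}^T$); the error terms $\chi_i'\,\text{Q}^{T/K}$ are supported near the horizon, where $|u|,|u'|\sim|u(-\infty)|$, and are handled together with the horizon boundary term in $(\beta)$.

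Step $(\beta)$ is the heart of the matter, and the main obstacle. Since $h$ has compact support and $\chi_1(\infty)=0$, $\chi_2(\infty)=1$, $f(\infty)=1$, $V(\infty)=0$ and $u'(\infty)=i\omega u(\infty)$, one computes $\text{Q}(\infty)=(2-E)\omega^2|u(\infty)|^2\le 0$ as soon as $E\ge 2$ --- this is why $E\ge 2$ is required. At the horizon, where $\chi_1(-\infty)=1$, $\chi_2(-\infty)=0$, $f(-\infty)=-c_f$ and $u'(-\infty)=-i(\omega-\upomega_+m)u(-\infty)$, one gets
\[
\text{Q}(-\infty)=\Bigl(E(\omega-\upomega_+m)^2-c_f\bigl((\omega-\upomega_+m)^2+\omega^2\bigr)\Bigr)|u(-\infty)|^2,
\]
which, because the superradiance threshold $\omega=\upomega_+m$ can be approached, need not be nonnegative, and whose sign cannot be fixed by any choice of the conserved-current weight $E$ alone (cf.\ the discussion in Section~\ref{largeSuper000}). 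This is overcome by the principle that superradiant frequencies are not trapped: the forbidden region $[p^*,q^*]$ being uniformly thick and of depth $V-\omega^2\ge b\Lambda$, the solution of $(\ref{e3iswsntouu})$ obeys an exponential dichotomy across it, from which one derives an estimate of the shape
\[
\bigl((\omega-\upomega_+m)^2+\omega^2\bigr)|u(-\infty)|^2\;\le\; B\,e^{-b\sqrt\Lambda}\int_{R^*_-}^{R^*_+}\bigl(|u'|^2+\Lambda|u|^2\bigr)\;+\;(\text{$H$-terms of the admitted form}),
\]
in which the exponentially small prefactor beats any fixed power of $\Lambda$, hence of $\omega_{\rm high}$. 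Therefore $c_f$, and then $E\ge c_f$, may be chosen as a fixed power of $\omega_{\rm high}$ while the negative term $-\text{Q}(-\infty)$ is still absorbed into a small fraction of $b\int_{R^*_-}^{R^*_+}(|u'|^2+\Lambda|u|^2)$, provided $\omega_{\rm high}$ is taken large given $E$; this yields $\text{Q}(\infty)-\text{Q}(-\infty)\le 0$ and, with $(\alpha)$, the proposition. The transport estimate across the forbidden region, and the order in which $E$, $c_f$, $\omega_{\rm high}$, $R_\infty$ are fixed, are the only genuinely delicate points; everything else is a routine current computation of the type underlying $(\ref{eq:Qfor})$--$(\ref{eq:Q4for})$.
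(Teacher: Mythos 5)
Your current template $\text{Q}^f + \text{Q}^h - E\chi_2\text{Q}^T - E\chi_1\text{Q}^K$ (in the notation of the footnote in Section~\ref{sct}) agrees with the paper's, and your reading of Lemmas~\ref{lem:3}--\ref{lem:4} and of the classically forbidden region is sound, but your step $(\beta)$ is both miscomputed and, as corrected, unworkable. The horizon evaluation is wrong: since $\omega^2-V(r_+)=(\omega-\upomega_+m)^2$ (proof of Lemma~\ref{lem:2}) and $u'(-\infty)=-i(\omega-\upomega_+m)u(-\infty)$, one has $Q^f(-\infty)=-2c_f(\omega-\upomega_+m)^2|u(-\infty)|^2$, hence $Q(-\infty)=(E-2c_f)(\omega-\upomega_+m)^2|u(-\infty)|^2$; the spurious $\omega^2$ in your formula is what makes the sign look irreparable, whereas in fact $E\ge 2c_f$ is enough. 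The genuine obstruction is the bulk error, not the boundary term: placing the $\chi_i$ transitions at a fixed neighbourhood of the horizon, the term $-E\chi_1'Q^K$ there has size $\sim E(|u'|^2+\Lambda|u|^2)$ (using $|\omega-\upomega_+m|\lesssim\sqrt\Lambda$ in $\mathcal{G}^{\mbox{$\sharp$}}$), while the coercivity available from $Q^f$ at that location is only $\sim c_f(|u'|^2+\Lambda|u|^2)$; absorbing it needs $E\ll c_f$, which conflicts with $E\ge 2c_f$. You escape by invoking an exponential transport estimate across the barrier, but you never establish it, it is far from routine uniformly in $(\omega,m,\Lambda)$ with an inhomogeneous term, and it is not how the paper argues.

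The paper's resolution is different and more elementary. Keep $f(-\infty)=-1$ and accept the positive boundary term on the right of $(\ref{metastra})$; put the transitions of $\chi_1,\chi_2$ at $r_{\max}\pm\delta/2$, squarely inside the forbidden region; and take $h=A\tilde h$ there with $A\sim\omega_{\rm high}^2$, which the proposition allows (the bound is $|h|\le B(\omega_{\rm high})$, not $|h|\le B$) and which stays coercive precisely because Lemma~\ref{lem:4} gives $\tilde h(V-\omega^2)\ge b\Lambda$ on the support of $\tilde h$. One then controls the horizon boundary term exactly, via $(\ref{eq:Q4for})$: $\bigl(|u'|^2+(\omega-\upomega_+m)^2|u|^2\bigr)_{r=r_+}\le E\int(\chi_1 Q^K)'$ once $E\ge 2$, and the resulting $\chi_1'$ error, of size $E\delta^{-1}\bigl(|u'|^2+(\omega^2+m^2)|u|^2\bigr)$ and supported near $r_{\max}$, is dominated by $A\int\bigl(|u'|^2+\Lambda|u|^2\bigr)$ once $\omega_{\rm high}$ is large given $E$. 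This, not a tunneling estimate, is the quantitative content of ``superradiant frequencies are not trapped'': the freedom to make $h$ enormous where the barrier is quantitatively high. Your requirement $\|h\|_{C^2}\le B$ discards exactly the large parameter the argument needs.
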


\begin{proof}
As $\mathcal{G}^{\mbox{$\sharp$}}$ is a superradiant regime with $\Lambda > 0$,  the
conclusions of both Lemma~\ref{lem:3} and Lemma~\ref{lem:4} apply. In particular, the potential
$V_0$ has a unique $r^0_{\rm max}$ which is a maximum,
and satisfies
\begin{equation}\label{reallyNotTrapped}
 V_0(r^0_{\rm max}) - \omega^2 \geq c\Lambda,
\end{equation}
for some positive constant $c$ depending only on $a_0$ and $M$.

We shall first need to establish the following lemma, which
 shows that the full potential $V$ behaves similarly in the range $\mathcal{G}^{\mbox{$\sharp$}}\left(\omega_{\rm high}\right)$ for sufficiently large $\omega_{\rm high}$.

\begin{lemma}\label{theOtherLemma}There exists a $\delta > 0$ depending only on $a_0$ and $M$ such that for sufficiently large $\omega_{\rm high}$ and $(\omega,m,\Lambda) \in \mathcal{G}^{\mbox{$\sharp$}}\left(\omega_{\rm high}\right)$, then $V$ has a unique critical point $r_{\rm max}$ and satisfies
\[V(r)-\omega^2\ge b\Lambda,\qquad \forall r\in (r_{\max}-\de,r_{\max}+\de),\]
\begin{equation}\label{eq:Vdeg}
-(r-r_{\max}) \frac d{dr} V(r)\ge b \Lambda \frac {(r-r_{\max})^2}{r^{4}}, \qquad \forall r\in [r_+,\infty),
\end{equation}
\[\left|r_{\rm max} - r^0_{max}\right| \leq B\Lambda^{-1}.\]
\end{lemma}
\begin{proof}
Let us first refine
our estimates on $V_0$ for frequencies $(\omega, m, \Lambda)\in
 \mathcal{G}^{\mbox{$\sharp$}}\left(\omega_{\rm high}\right)$.

Using the fact that $\left|\frac{dV_0}{dr}\right| \leq B\Lambda$,~(\ref{reallyNotTrapped}) implies that we may find a $\delta_1 > 0$ depending only on $a_0$ and $M$ such that
\[
V_0(r)-\omega^2\ge \frac{c}{2} \Lambda,\qquad \forall\, r\in [r^0_{max}-\delta_1,r^0_{max}+\delta_1].
\]

Lemma~\ref{lem:1} implies that
$r^0_{\rm max}$ is bounded from above independently of the frequency parameters:
\[
r^0_{\rm max} \le B.
\]
Furthermore, Lemma~\ref{lem:3} and the bound $\left|\frac{d^2V_0}{dr^2}\right| \leq B\Lambda$ implies $r^0_{\rm max}$ is also bounded away from $r_+$ independently of the frequency parameters:
\[
r^0_{\rm max} - r_+ \ge b.
\]

Lemma~\ref{lem:3} also implies that the full potential
$V=V_0+V_1$ satisfies
\begin{equation}\label{increaseOnHorizon}
\frac {d}{dr} V(r_+)\ge \frac {d}{dr} V_0(r_+)\ge \hat c \Lambda
\end{equation}
for a positive constant $\hat c$ depending only on $a_0$ and $M$.

Recall now that the proof of Lemma~\ref{lem:1} showed that the function
\[
\frac {d}{dr} \left ((r^2+a^2)^3\frac {d}{dr} V_0(r)\right)
\]
is either non-positive on $[r_+,\infty)$ or there exists a unique point
$r_+ \leq r_1 < r^0_{\rm max}$ such that $\frac {d}{dr} \left ((r^2+a^2)^3\frac {d}{dr} V_0(r)\right)$ is positive on $[r_+,r_1)$ and negative on $(r_1,\infty)$.

We first consider the case where the point $r_1$ exists. Then,
\[
 \frac {d}{dr} V_0(r)\ge \hat c \frac{(r_+^2+a^2)^3}{(r_1^2+a^2)^3}\Lambda\qquad \forall\, r\in [r_+,r_1].
\]
Next, recall from the proof of Lemma~\ref{lem:1} that
\[\frac{d}{dr}\left((r^2+a^2)^3\frac{dV_0}{dr}\right) = -6\Lambda\left(r^2-2Mr +4Mr\sigma+\frac{a^2}3-\frac 23 a^2 \frac{m^2}\Lambda\right),\]
and furthermore, by definition, $\frac{d}{dr}\left((r^2+a^2)^3\frac{dV_0}{dr}\right)$ is negative on $(r_1,\infty)$. Thus, we can choose a value $r'_1\in (r_1,r_{max}^0)$ such that
\begin{equation}\label{aProp1}
\frac {d}{dr} V_0(r)\ge \frac {\hat c}{2}\frac{(r_+^2+a^2)^3}{(r_1^2+a^2)^3}\Lambda,\qquad \forall r\in [r_+,r_1'],
\end{equation}
and
\begin{equation}\label{aProp2}
\frac {d}{dr} \left ((r^2+a^2)^3\frac {d}{dr} V_0(r)\right)\le -\tilde c \Lambda r^2,\qquad \forall r\in [r_1',\infty),
\end{equation}
for a positive constant $\tilde c$ independent of the frequency parameters.

In the case where $r_1$ does not exists, the same argument \emph{mutatis mutandis} will produce a value $r_1'$ with the properties~(\ref{aProp1}) and~(\ref{aProp2}).

Now, we simply observe that the potential $V_1$ satisfies the bounds
\[\left|V_1\right|\le B r^{-3},\qquad
\left|\frac {d}{dr} V_1(r)\right|\le B r^{-4},\qquad \left|\frac {d}{dr} \left ((r^2+a^2)^3\frac {d}{dr} V_1(r)\right)\right|\le B r.\]

For $\omega_{\rm high}$ sufficiently large (and hence large $\Lambda$),
 it immediately follows that, for
 $(\omega, m, \Lambda)\in
 \mathcal{G}^{\mbox{$\sharp$}}\left(\omega_{\rm high}\right)$,
 the full potential $V=V_0+V_1$ cannot  have any critical points on $[r_+,r_1']$ and has a unique maximum
$r_{\max}\in [r_1',\infty)$ which satisfies $\left|r_{\max} - r^0_{max}\right| \leq B\Lambda^{-1}$.

The proof concludes by  applying the fact that $\left|\frac{dV}{dr}\right| \leq B\Lambda$.
\end{proof}

We now proceed to the construction of a suitable current for the regime
$\mathcal{G}^{\mbox{$\sharp$}}$.
The current will be of the form:
$$
\text{Q}=\text{Q}^f + \text {\fontencoding{LGR}\selectfont \Coppa}^h - E\chi_1\text{Q}^K - E\chi_2\text{Q}^T,
$$
for appropriate functions $f$, $h$, $\chi_1$ and $\chi_2$ and large constant $E$.

It is simpler to describe this procedure in three stages.

{\bf Stage 1.}~We first apply current $\text{Q}^f$ where $f$
is a function chosen such that
\begin{equation}
\label{choices1}
f=-1 {\rm\ at\ } r=r_+, \qquad f=0 {\rm\ at\ } r=r_{\max},\qquad
f=1 {\rm\ when\ } r^* \geq R^*_{\infty},
\end{equation}
\begin{equation}
\label{choices2}
f'(r^*) > 0 {\rm\ for\ all\ } r \leq R_1, \qquad f'(r^*) \geq 0 {\rm\ for\ all\ } r>r_+, \qquad \left|f\right| + \Delta^{-1}r^2\left|f'\right| \leq B.
\end{equation}

Application of $(\ref{eq:Qfor})$ yields then
\begin{align}
\label{metastra}
\nonumber
\int_{-\infty}^\infty \left (2f'|u'|^2-fV' |u|^2 -\frac 12 f{'''}|u|^2\right )&=
\left (|u'|^2+(\omega-\upomega_+m)^2 |u|^2\right)_{r=r_+} + \left (|u'|^2+\omega^2 |u|^2\right)_{r=\infty}
\\ &\qquad- \int_{-\infty}^\infty \left (2 f \,{\text{Re}} (u'\overline{H}) + f'\, {\text{Re}} (u\overline{H})\right).
\end{align}
Let us moreover require that $f$ above has been chosen so that
in addition to $(\ref{choices1})$, $(\ref{choices2})$, the following coercivity
property holds
\begin{equation}
\label{choices3}
-fV'-\frac 12 f'''\ge \La \frac {\Delta (r-r_{\max})^2}{r^7}, {\rm\ for\ all\ } r>r_+.
\end{equation}
 Since $f$ vanishes at $r=r_{\max}$ and $V'$ obeys the
property \eqref{eq:Vdeg}, we can easily arrange such that
in addition to $(\ref{choices1})$, $(\ref{choices2})$ and $(\ref{choices3})$,
we have
\begin{equation}
\label{choices4}
fV'\ge b \La \frac {\Delta (r-r_{\max})^2}{r^7}.
\end{equation}
It remains to impose
\begin{equation}
\label{choices5}
f'''(r)<0  {\rm\ in\ a\ small\ neighbourhood\ of\ }r_{\max}, \qquad
|f'''(r)|\le B \Delta r^{-5}.
\end{equation}

Note that the reader may easily construct a function $f$ satisfying the conditions (\ref{choices1}), (\ref{choices2}), (\ref{choices3}), (\ref{choices4}) and (\ref{choices5}). With the above choice of $f$,
the left hand side of $(\ref{metastra})$ is now
non-negative, but still degenerate at $r=r_{\max}$. As discussed
in Section~\ref{Vsrsec}, the bound
$V(r_{\max})-\omega^2\ge b\La$ indicates this regime is non-trapped and thus
the degeneracy
may be removed with the help of the current $\text {\fontencoding{LGR}\selectfont \Coppa}^h$.
The more serious problem is a lack of
control of the boundary terms on the right
hand side, due to the  superradiant condition. However, as we shall see below,
we will be able to overcome this by exploiting the largeness of the potential in the region $(r_{\max}-\de,r_{\max}+\de)$.

{\bf Stage 2.}~We now add a $\text {\fontencoding{LGR}\selectfont \Coppa}^h$ current with a function $h \doteq A\tilde h$ such that
\begin{equation}\label{choices6}
h \geq 0,\qquad \left|\tilde h\right| \leq B,
\end{equation}
\begin{equation}\label{choices7}
\text{supp}\left(h\right) \subset [r_{\max}-\de,r_{\max} + \de],\qquad \tilde h = 1\text{ for } r \in [r_{\max}-\de/2,r_{\max}+\de/2]
\end{equation}
and $A$ is a constant to be determined.

We obtain
\begin{align}
\label{metatastra2}
\nonumber
\int_{-\infty}^\infty &\left ((2f'+Ah) |u'|^2+\left (A\tilde h(V-\omega^2)-fV'\right) |u|^2 -\frac 12 (f{'''}+Ah'')|u|^2\right )\\&=
\left(|u'|^2+(\omega-\upomega_+m)^2 |u|^2\right)_{r=r_+} + \left (|u'|^2+\omega^2 |u|^2\right)_{r=\infty}
- \int_{-\infty}^\infty \left(2 f \,{\text{Re}} (u'\overline{H}) + (f'+h)\, {\text{Re}} (u\overline{H})\right).
\end{align}
Note that as long as $A \leq \tilde\epsilon\omega_{\rm high}^2$ for a sufficiently small constant $\tilde \epsilon$ only depending on $a_0$ and $M$, the integrand of the left hand side of $(\ref{metatastra2})$
will be positive. Moreover, this integrand has the property
that it satisfies
$$
\ge bA (|u'|^2 + \Lambda |u|^2),\quad \forall\, r\in [r_{\max}-\frac{\delta}2,r_{\max}+\frac\delta 2].
$$

{\bf Stage 3.}~We now let $\chi_1(r)$ be a smooth function such that
\begin{equation}\label{somerequirements}
\chi_1 = 1\text{ for } r\in [r_+,r_{\max}-\frac{\de}{2}],\qquad \chi_1 = 0\text{ for }r \in [r_{\max}+\frac{\delta}{2},\infty),\qquad \left|\chi_1\right| \leq B.
\end{equation}
Since $E \geq 2$, we have
\begin{align*}
\left (|u'|^2+(\omega-\upomega_+m)^2 |u|^2\right)_{r=r_+} &\leq E\int_{-\infty}^{\infty}(\chi_1\text{Q}^K)' \\&=
E\int_{r_{\max}-\frac{\de}{2}}^{r_{\max}+\frac{\de}{2}}\chi_1'\left(\omega-\upomega_+m\right)\text{Im}\left(u'\overline{u}\right) + E\int_{-\infty}^{\infty}\chi_1\left(\omega-\upomega_+m\right)\text{Im}\left(H\overline{u}\right).
\end{align*}
Now, we require that $\omega_{\rm high}$ be sufficiently large so as to satisfy $E\delta^{-1} \ll (1/2)\tilde\epsilon\omega_{\rm high}^2$, and then we set $A \doteq (1/2)\tilde\epsilon\omega_{\rm high}^2$. This choice of $A$ will both maintain the coercivity of the left hand side of~(\ref{metatastra2}) and yield
\begin{eqnarray*}
E\left|\int_{r_{\max}-\frac{\de}{2}}^{r_{\max}+\frac{\de}{2}}\chi_1'\left(\omega-\upomega_+m\right)\text{Im}\left(u'\overline{u}\right)\right| &\leq& E\delta^{-1}\int_{r_{\max}-\frac{\de}{2}}^{r_{\max}+\frac{\de}{2}}\left(\left|u'\right|^2 + \left(\omega^2+m^2\right)\left|u\right|^2\right) \\
&\ll& A\int_{r_{\max}-\frac{\de}{2}}^{r_{\max}+\frac{\de}{2}}\left(\left|u'\right|^2 + \Lambda\left|u\right|^2\right).
\end{eqnarray*}
We can, of course, carry out an analogous construction with a cutoff $\chi_2$, satisfying
\begin{equation}\label{somerequirements2}
\chi_2 = 1\text{ for }r \in [r_{\max} + \frac{\de}{2},\infty),\qquad \chi_2 = 0\text{ for }r\in [r_+,r_{\max}-\frac{\de}{2}],\qquad \left|\chi_2\right| \leq B,
\end{equation}
and the current $\text{Q}^T$. Then, adding the currents $-E\chi_1\text{Q}^K-E\chi_2\text{Q}^T$ will give us the necessary control of the boundary terms.

Observing that the left hand side of the resulting estimate is coercive (with weights which degenerate however as $r^* \to \pm\infty$), restricting the domain of integration of the left hand side then yields~(\ref{sharpEst}).
\end{proof}

\subsection{The $\mathcal{G}_{\mbox{$\sharp$}}$ range}
\label{sub:sharp}
As discussed in Section~\ref{lanosudiscu},
$\mathcal{G}_{\mbox{$\sharp$}}$ defines a large frequency regime (whose definition still depends on
parameters $\omega_{\rm high}$ and $\epsilon_{\rm width}$, yet to be fixed)
where time frequencies will dominate angular frequencies.
The regime is manifestly non-superradiant, and, for suitable
choice of parameters, non-trapped.

Once we have made our final choice of parameters $\omega_{\rm high}$ and
$\epsilon_{\rm width}$, then for
 $(\omega, m, \Lambda)\in \mathcal{G}_{\mbox{$\sharp$}}(\omega_{\rm high},\epsilon_{\rm width})$,
we will set the functions $f$, $h$, $\hat y$, $\tilde y$ and $\chi_1$
appearing in Theorem~\ref{phaseSpaceILED} together with
the parameter $r_{\rm trap}$ to be $0$. The remaining function $y$ and
the desired coercivity property are given by

\begin{proposition}\label{odeEst2}Let $a_0<M$. Then, for all $\omega_{\rm high}$, $\epsilon_{\rm width}^{-1}$, $R_{\infty}$ sufficiently big, for all
$E\ge 2$,
$0\le a\le a_0$,
$(\omega, m, \Lambda)\in  \mathcal{G}_{\mbox{$\sharp$}}(\omega_{\rm high},\epsilon_{\rm width})$, there exists a  function $y$
satisfying the uniform bounds
\[
\left|y\right| \leq B,
\]
\begin{equation}\label{someyStuff}
 y = 1\text{ for }r^* \geq R^*_{\infty},
\end{equation}
such that,
for all
smooth solutions $u$ to the radial o.d.e.~(\ref{e3iswsntouu}) with
right hand side $H$, satisfying moreover the boundary conditions~(\ref{eq:b-}) and~(\ref{eq:b+}) we have the estimate
\begin{align*}
b\int_{R^*_-}^{R^*_+}\left( |u'|^2+{(\omega^2+\La)} |u|^2\right)\le
\int_{-\infty}^\infty \left (-2 y \,{\text{Re}} (u'\overline{H}) +E\omega {\text{Im}} (H \overline u)\right ).
\end{align*}
\end{proposition}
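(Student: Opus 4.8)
The plan is to prove Proposition~\ref{odeEst2} by applying the single virial current $\text {\fontencoding{LGR}\selectfont \koppa}^y$ of Section~\ref{sct} for a carefully chosen function $y$, corrected by $-E\,\text{Q}^T$ for a large constant $E$ to absorb the boundary terms. The governing observation is that in $\mathcal{G}_{\mbox{$\sharp$}}(\omega_{\rm high},\epsilon_{\rm width})$ the time frequency dominates all other parameters, so $\omega^2$ lies strictly and uniformly above the potential; this makes $\mathcal{G}_{\mbox{$\sharp$}}$ among the simplest ranges, both for the bulk coercivity and for the boundary terms, which can be controlled via the conserved current exactly as in~(\ref{microEnergyEst}).

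First I would extract the quantitative features of this range. From $\Lambda<\epsilon_{\rm width}\omega^2$, $|\omega|\ge\omega_{\rm high}$, the admissibility bounds $(\ref{needsanumber})$ and $(\ref{useful})$ (which give $m^2\le\Lambda$ and $|am\omega|\le\Lambda/2$), and the explicit formula $(\ref{defofV})$, one obtains uniformly in $r^*$ that $|V|\le B(\Lambda+1)$ and $|V'|\le B(\Lambda+1)$; hence, choosing $\epsilon_{\rm width}$ small and $\omega_{\rm high}$ large,
\[
b(\omega^2+\Lambda)\ \le\ \omega^2-V\ \le\ B(\omega^2+\Lambda),\qquad |V'|\le B\big(\epsilon_{\rm width}+\omega_{\rm high}^{-2}\big)\omega^2 .
\]
Moreover $|m|\le\sqrt\Lambda\le\sqrt{\epsilon_{\rm width}}\,|\omega|$, so $|\upomega_+m|\le B\sqrt{\epsilon_{\rm width}}\,|\omega|$ and therefore $(\omega-\upomega_+m)^2\sim\omega^2$ and $\omega(\omega-\upomega_+m)\ge b\,\omega^2>0$ (in particular the range is automatically non-superradiant). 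Finally, by Lemma~\ref{lem:1} together with $V_1\ge0$ and the explicit behaviour of $V_1$, the set $\{V'>0\}$ is confined to a bounded $r$-interval $(r_+,B]$, and $V'<0$ for all $r\ge R_\infty$ once $R_\infty$ is taken large enough depending only on $a_0,M$.

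Next I would choose $y$ bounded, nonnegative, nondecreasing, equal to $1$ for $r^*\ge R^*_\infty$, with $y'\ge b$ on $[R^*_-,R^*_+]$, and such that $y(\omega^2-V)$ is also nondecreasing in $r^*$. The latter is a routine construction: $\omega^2-V$ is positive of size $\sim\omega^2$ and nonincreasing outside the bounded region $\{V'>0\}\subset(r_+,B]$, over which it decreases by at most a factor $1+B\epsilon_{\rm width}$, so $y$ need only be increased there by a correspondingly tiny factor before ramping up with slope $\ge b$ through $[R^*_-,R^*_+]$ to the value $1$. Using $(\ref{eq:Q2for})$,
\[
\big(\text {\fontencoding{LGR}\selectfont \koppa}^y\big)'=y'|u'|^2+\Big(y'(\omega^2-V)-yV'\Big)|u|^2+2y\,{\rm Re}(u'\overline H),
\]
and the $|u|^2$-coefficient equals $\tfrac{d}{dr^*}\big[y(\omega^2-V)\big]\ge0$ by the choice of $y$, while the $|u'|^2$-coefficient is $\ge0$; on $[R^*_-,R^*_+]$ these are moreover $\ge b\,\omega^2$ and $\ge b$ respectively (using $\omega^2-V\ge b\,\omega^2$ and $|yV'|\le B(\epsilon_{\rm width}+\omega_{\rm high}^{-2})\omega^2\le\tfrac12 b\,\omega^2$). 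Combining with $(\ref{eq:Q3for})$ and integrating $\text{Q}\doteq\text {\fontencoding{LGR}\selectfont \koppa}^y-E\,\text{Q}^T$ over $(-\infty,\infty)$, the bulk on the left is bounded below by $b\int_{R^*_-}^{R^*_+}\big(|u'|^2+(\omega^2+\Lambda)|u|^2\big)$.

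It remains to check $\text{Q}(\infty)-\text{Q}(-\infty)\le0$. At $r=\infty$, the boundary condition $(\ref{eq:b+})$ and $V(\infty)=0$ give $\text{Q}(\infty)=\big(2y(\infty)-E\big)\omega^2|u(\infty)|^2=(2-E)\omega^2|u(\infty)|^2\le0$ since $E\ge2$. At $r=r_+$, the boundary condition $(\ref{eq:b-})$ together with the identity $\omega^2-V(r_+)=(\omega-\upomega_+m)^2$ (from the proof of Lemma~\ref{lem:2}) gives
\[
\text{Q}(-\infty)=2y(-\infty)(\omega-\upomega_+m)^2|u(-\infty)|^2+E\,\omega(\omega-\upomega_+m)|u(-\infty)|^2\ \ge\ 0,
\]
since $y(-\infty)\ge0$ and $\omega(\omega-\upomega_+m)\ge0$. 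Hence $\text{Q}(\infty)-\text{Q}(-\infty)\le0$, and rearranging the divergence identity for $\text{Q}$ yields exactly the claimed estimate. The only step that requires genuine (if purely routine) care is the construction of $y$ with all listed properties uniformly in the admissible triple; this rests entirely on the two structural facts isolated above — that $\Lambda\lesssim\epsilon_{\rm width}\omega^2$ makes $|V'|$ negligible against $\omega^2-V$, and that $\{V'>0\}$ is confined to a bounded $r$-region by Lemma~\ref{lem:1} — so no essential obstacle arises in this range.
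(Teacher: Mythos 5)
Your proof is correct and takes essentially the same route as the paper: both use the current $\text{Q} = \text {\fontencoding{LGR}\selectfont \koppa}^y - E\,\text{Q}^T$, both rely on $\Lambda < \epsilon_{\rm width}\omega^2$ (so $\omega^2 - V \gtrsim \omega^2$ uniformly and $|V'|$ is negligible), both invoke $V' < 0$ for large $r$ to place the support of $\{V'>0\}$ in a fixed bounded region, and both dispose of the boundary terms via non-superradiance and $E\geq 2$. Your one small improvement is to package the bulk sign condition as the monotonicity of $y(\omega^2-V)$; the paper imposes $y'\geq 0$ and then estimates $y'(\omega^2-V)-yV'$ directly, which implicitly requires the same thing but leaves the requirement on $y$ over $\{V'>0\}$ less explicit. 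Your formulation makes this constraint transparent and is a legitimate, arguably cleaner, presentation of the same construction.
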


\begin{proof}
The construction of our currents will exploit the fact
that the range $\mathcal{G}_{\mbox{$\sharp$}}$ defines a large frequency
regime in which $\Lambda \ll \omega^2 $ (and thus also $m^2 \ll \omega^2$).
To handle the boundary terms, we will use that this regime is moreover manifestly non-superradiant,
and thus addition of a sufficiently large multiple of the $\text{Q}^T$ current provides
positive terms at $r=r_+$ and $r=\infty$.

We turn to the details. First of all, it is easy to see that the admissibility inequalities $\Lambda \geq 2a|m\omega|$ and $\Lambda \geq \left|m\right|(\left|m\right|+1)$ imply that there exists a constant $R_{\rm dec}^* \geq 2R^*_+$ only depending on $a_0$ and $M$ such that
\begin{equation}\label{decrease}
V' < 0\text{ for }r^* \geq R_{\rm dec}^*.
\end{equation}

Define a current given by the following expression:
\[
\text{Q}=\text {\fontencoding{LGR}\selectfont \koppa}^y - E \text{Q}^T.
\]
We require that
\begin{equation}\label{choices8}
\left|y'\right| \leq B,\qquad y' \geq 0\text{ for }r \in [r_+,\infty),\qquad y' > 0\text{ for }r^* \in
[R^*_-,R^*_1],
\end{equation}
\begin{equation}\label{choices9}
\frac{1}{2} \leq y \leq 1\text{ for }r^* \in (-\infty,R^*_+],\qquad y(-\infty) = 1/2,\qquad y = 1\text{ for }r^* \geq R^*_{\rm dec}.
\end{equation}
Such a $y$ is trivial to construct.

We obtain from~(\ref{eq:Q2for}) and~(\ref{eq:Q3for}) the identity
\begin{align}
\label{whatweobtainh}
\nonumber
&\int_{-\infty}^\infty \left (y' |u'|^2+\left (y'(\omega^2-V)-yV'\right) |u|^2 \right )\\&\qquad
\nonumber
-\left (\frac{1}{2}|u'|^2+\left(\frac{1}{2}(\omega-\upomega_+m)^2-E\omega(\omega-\upomega_+m)\right) |u|^2\right)_{r=r_+} - \left (|u'|^2+(1-E)\omega^2 |u|^2\right)_{r=\infty}
\\ &= \int_{-\infty}^\infty \left (-2 y \,{\text{Re}} (u'\overline{H})+E\omega {\text{Im}} (\overline H u) \right).
\end{align}
Next, we observe the bound
\begin{equation}\label{someBoundS}
|V|\le B\left(\epsilon^{-1}\Lambda + \epsilon\omega^2 + \omega_{\rm high}^{-2})\omega^2\right),\qquad |V'|\le \frac{B\Delta}{r^5}
\left(\epsilon^{-1}\Lambda + \epsilon\omega^2 + \omega_{\rm high}^{-2})\omega^2\right)
\end{equation}
where $\epsilon > 0$ is arbitrary.

Now, we fix a sufficiently small $\epsilon > 0$, require that $\epsilon_{\rm width}$ is sufficiently small depending on $\epsilon$, and combine the inequality $\omega^2 > \epsilon_{\rm width}^{-1}\Lambda$ with the inequalities~(\ref{someBoundS}) and~(\ref{decrease}). We conclude the integrand on the left hand side of $(\ref{whatweobtainh})$ is non-negative and bounds
from above the expression
\[
b \int_{R_-^*}^{R_+^*}\left(|u'|^2+(\omega^2+\La) |u|^2\right).
\]
The boundary terms are non-negative due to the boundary conditions~(\ref{eq:b-}) and~(\ref{eq:b+}), the non-superradiance condition and the requirement $E \geq 2$. Requiring that
$R^*_{\infty} > R^*_{\rm dec}$ ensures that~(\ref{someyStuff}) is satisfied.
\end{proof}

\subsection{The $\mathcal{G}_{\mbox{$\lessflat$}}$ range}\label{angularDominated}
As described in Section~\ref{boundaryStillaProblem}, this
 is again a large frequency regime (whose definition still depends on parameters
$\omega_{\rm high}$ and $\epsilon_{\rm width}$ yet to be fixed), but where angular frequencies
will now dominate
time frequencies.
The regime is again manifestly non-superradiant, and, for suitable
parameters, non-trapped, but
as we shall see, we will have to handle the horizon boundary term as in the superradiant
regime.

Once we have made our final choice of the parameters $\omega_{\rm high}$
and $\epsilon_{\rm width}$, then
for $(\omega, m, \Lambda)\in \mathcal{G}_{\mbox{$\lessflat$}}(\omega_{\rm high},
\epsilon_{\rm width})$,
we set the functions $y$, $\tilde y$ and $\hat y$ together with the parameter $r_{\rm trap}$
to be $0$. The remaining functions $f$, $h$ and $\chi_1$ and the desired
coercivity properties are given by
\begin{proposition}\label{odeEst3}
Let $a_0<M$. Then, for all $\omega_{\rm high}$, $R_{\infty}$ and $\epsilon_{\rm width}^{-1}$ sufficiently large, for all $E\ge 2$,
$0\le a\le a_0$, $(\omega, m, \Lambda)\in  \mathcal{G}_{\mbox{$\sharp$}}(\omega_{\rm high},
\epsilon_{\rm width})$,
there exist functions $f$, $h$ and $\chi_1$
satisfying the uniform bounds
\[
\left|f\right| + \Delta^{-1}r^2\left|f'\right| + \left|h\right| + \left|\chi_1\right| + \left|\chi_2\right| \leq B\left(\omega_{\rm high},\epsilon_{\rm width}\right),
\]
\[
f = 1,\ h = 0,\ \chi_1 = 0\text{ for }r^* \geq R^*_{\infty},
\]
such that,
for all
smooth solutions $u$ to the radial o.d.e.~(\ref{e3iswsntouu}) with
right hand side $H$, satisfying moreover the boundary conditions~(\ref{eq:b-}) and~(\ref{eq:b+}), we have the estimate
\begin{align*}
b\int_{R^*_-}^{R^*_+}&\left (|u'|^2+(\omega^2+\La) |u|^2\right)
\\ &\leq \int_{-\infty}^\infty \left (-2f \,{\text{Re}} (u'\overline{H}) - (f'+h)\, {\text{Re}} (u\overline{H})+E\omega {\text{Im}} (H\overline u) + \chi_1\left(\omega-\upomega_+m\right){\text{Im}}(H\overline u)\right ).
\end{align*}

\end{proposition}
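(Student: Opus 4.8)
The plan is to adapt the proof of Proposition~\ref{odeEst1} essentially verbatim. Like $\mathcal{G}^{\mbox{$\sharp$}}$, the range $\mathcal{G}_{\mbox{$\lessflat$}}$ is a large-frequency regime in which the potential $V$ has a single, non-degenerate hump sitting quantitatively above the energy level $\omega^2$ (so that these frequencies are not trapped), while in addition being manifestly non-superradiant. Only two points genuinely differ from the $\mathcal{G}^{\mbox{$\sharp$}}$ argument: one must check that this non-trapping structure of $V$ persists in $\mathcal{G}_{\mbox{$\lessflat$}}$, and one must treat the horizon boundary term with care, since here $\omega(\omega-\upomega_+m)$ and $(\omega-\upomega_+m)^2$ need not be comparable, so the conserved $\text{Q}^T$ current cannot by itself absorb the flux at $r=r_+$.

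I would first record a purely arithmetic fact: once $\epsilon_{\rm width}$ is small enough (depending only on $a_0$, $M$ through the parameter $\alpha$ of Lemma~\ref{lem:3}), every admissible triple in $\mathcal{G}_{\mbox{$\lessflat$}}$ satisfies $m\omega\le 0$. Indeed, if $m\omega>\alpha\Lambda$ then $\omega^2<\epsilon_{\rm width}\Lambda$ forces $|m|>\alpha\sqrt{\Lambda/\epsilon_{\rm width}}$, whereas admissibility gives $m^2\le\Lambda$; together these yield $\epsilon_{\rm width}>\alpha^2$, a contradiction. Hence $m\omega\le\frac{am^2}{2Mr_+}+\alpha\Lambda$, which together with the defining condition $m\omega\notin(0,\frac{am^2}{2Mr_+}+\alpha\Lambda]$ gives $m\omega\le 0$. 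In particular Lemma~\ref{lem:3} applies, so $V_0$ has a unique critical point $r^0_{\rm max}$ (a maximum, $r^0_{\rm min}$ absent) with $\frac{dV}{dr}(r_+)\ge\frac{dV_0}{dr}(r_+)\ge b\Lambda$. Next I would prove the analogue of Lemma~\ref{theOtherLemma}: for $\omega_{\rm high}$ large and $\epsilon_{\rm width}$ small, $V=V_0+V_1$ has a unique critical point $r_{\rm max}$ with $V(r)-\omega^2\ge b\Lambda$ in a fixed neighbourhood of $r_{\rm max}$ and $-(r-r_{\rm max})\frac{dV}{dr}(r)\ge b\Lambda(r-r_{\rm max})^2/r^{4}$ on $[r_+,\infty)$. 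The key input $V_0(r^0_{\rm max})-\omega^2\ge b\Lambda$ uses only $\omega^2<\epsilon_{\rm width}\Lambda\le\epsilon\Lambda$: the ``$\omega^2\le\epsilon\Lambda$'' branch of the proof of Lemma~\ref{lem:4} supplies a bounded, frequency-independent $\tilde r$ with $V_0(\tilde r)-\omega^2\ge b\Lambda$, and since $V_0$ cannot be monotone decreasing (by Lemma~\ref{lem:2} together with $V\ge V_0$) its unique critical point $r^0_{\rm max}$ is its global maximum, whence $V_0(r^0_{\rm max})\ge V_0(\tilde r)$. The passage from $V_0$ to $V$ is then word-for-word that of Lemma~\ref{theOtherLemma}, using that $\Lambda\ge\epsilon_{\rm width}^{-1}\omega_{\rm high}^2$ is large so that $V_1$ and its derivatives are lower-order, together with the structure of $\frac{d}{dr}((r^2+a^2)^3\frac{dV_0}{dr})$ from the proof of Lemma~\ref{lem:1} and $\frac{dV_0}{dr}(r_+)\ge b\Lambda$.

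Finally I would build the current exactly as in Stages 1--3 of the proof of Proposition~\ref{odeEst1}: a $\text{Q}^f$ current with $f$ monotone increasing, vanishing at $r_{\rm max}$, equal to $-1$ at $r_+$ and to $1$ for $r^*\ge R^*_\infty$, arranged so that $-fV'-\frac12 f'''\ge\Lambda\Delta(r-r_{\rm max})^2/r^{7}$ and $fV'\ge b\Lambda\Delta(r-r_{\rm max})^2/r^{7}$; a $\text{Q}^h$ current with $h=A\tilde h$, $\tilde h$ a bump near $r_{\rm max}$ and $A$ a small multiple of $\epsilon_{\rm width}^{-1}\omega_{\rm high}^2$ (so that $A\le\tilde\epsilon\Lambda$ yet $A\gg\delta^{-1}$ for $\epsilon_{\rm width}$ small), which removes the degeneracy at $r_{\rm max}$ and makes the bulk there $\ge bA(|u'|^2+\Lambda|u|^2)$; and a cutoff $\chi_1$ supported in that neighbourhood of $r_{\rm max}$. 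The sole change from the $\mathcal{G}^{\mbox{$\sharp$}}$ construction is that the boundary analysis is now easy at infinity and delicate at the horizon rather than the reverse: since $\omega(\omega-\upomega_+m)\ge 0$ the $\text{Q}^T$ current is used \emph{without} a cutoff ($\chi_2\equiv 1$) and, for $E\ge 2$, cancels the flux at $r=\infty$ while contributing a favourable sign at $r=r_+$; the remaining flux $(|u'|^2+(\omega-\upomega_+m)^2|u|^2)_{r=r_+}$ is absorbed precisely as in Stage 3 of Proposition~\ref{odeEst1} by adding a sufficiently large multiple of $\chi_1\text{Q}^K$, the resulting error $\chi_1'(\omega-\upomega_+m)\,\mathrm{Im}(u'\bar u)$ being supported near $r_{\rm max}$, bounded there by $\delta^{-1}(|u'|^2+B\Lambda|u|^2)$, hence absorbable into $bA(|u'|^2+\Lambda|u|^2)$. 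Restricting the domain of the resulting coercive bulk term to $[R^*_-,R^*_+]$ yields the stated estimate. I expect the only real obstacle to be the second step, and even there the work is light, precisely because the ``$\omega^2\le\epsilon\Lambda$'' branch of the proof of Lemma~\ref{lem:4} was written so as not to use the superradiant hypothesis; everything else is a transcription of the $\mathcal{G}^{\mbox{$\sharp$}}$ case with the superradiance-forced cutoff on $\text{Q}^T$ removed.
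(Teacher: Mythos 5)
Your proposal is correct and follows essentially the same route as the paper's own proof: establish $m\omega\le 0$ by contradiction (the paper derives $|\omega|\ge\alpha^2\epsilon_{\rm width}^{-1}|\omega|$ from the two chains $m\omega\ge\alpha\Lambda\ge\alpha\epsilon_{\rm width}^{-1}\omega^2$ and $m\omega\ge\alpha\Lambda\ge\alpha|m|^2$, which is your square-root argument in multiplicative form), invoke Lemma~\ref{lem:3} to get a unique maximum with $V_0(r^0_{\rm max})-\omega^2\ge b\Lambda$ via the $\omega^2\ll\Lambda$ branch of Lemma~\ref{lem:4}'s proof, transfer to $V$ as in Lemma~\ref{theOtherLemma}, then reuse the $\text{Q}^f+\text{\fontencoding{LGR}\selectfont \Coppa}^h$ construction with a $\chi_1\text{Q}^K$ current at the horizon and a full (uncut) $\text{Q}^T$ current at infinity. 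Your observation that the relevant branch of Lemma~\ref{lem:4} does not use the superradiant hypothesis is exactly the point the paper relies on but leaves implicit behind the phrase ``we again infer.''
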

\begin{proof}
For the construction of our currents, we again shall exploit that $\mathcal{G}_{\mbox{$\lessflat$}}$ defines a large
frequency regime, where now, however, $\omega^2\ll \Lambda$. Since this is a non-superradiant regime, the boundary term of $r^* = \infty$ may be
controlled with the Q$^T$ current; however, we shall handle the boundary term at the horizon as we did for the regime  $\mathcal{G}^{\mbox{$\sharp$}}$. As we explained in Section~\ref{boundaryStillaProblem} this is necessary because the boundary term at the horizon is proportional to $\left(\omega - \upomega_+m\right)^2\left|u\left(-\infty\right)\right|^2$, and the Q$^T$ current would only give an estimate for $\omega\left(\omega-\upomega_+m\right)\left|u\left(-\infty\right)\right|^2$. In the frequency regime under consideration these are \emph{not} necessarily comparable.

Turning to the proof, we begin by arguing that $\epsilon_{\rm width}$ sufficiently small implies $m\omega \leq 0$. Suppose $m\omega > 0$. Then we have
\begin{equation}\label{mBigOmega}
m\omega \geq \frac{am^2}{2Mr_+} + \alpha\Lambda \geq \alpha\epsilon_{\rm width}^{-1}\omega^2 \Rightarrow \left|m\right| \geq \alpha\epsilon_{\rm width}^{-1}\left|\omega\right|.
\end{equation}
On the other hand,
\begin{equation}\label{omegaBigm}
m\omega \geq \frac{am^2}{2Mr_+} + \alpha\Lambda \Rightarrow \left|\omega\right| \geq \alpha\Lambda\left|m\right|^{-1} \geq \alpha\left|m\right|.
\end{equation}
Combining~(\ref{mBigOmega}) and~(\ref{omegaBigm}) implies
\[
\left|\omega\right| \geq \alpha^2\epsilon_{\rm width}^{-1}\left|\omega\right|.
\]
This is a contradiction if we take $\epsilon_{\rm width} < \alpha^2$. Thus, we indeed
have $m\omega < 0$.

From the above inequality, it follows that Lemma~\ref{lem:3} applies, and we may thus
conclude that
the potential $V_0$ is increasing at $r_+$, and hence has only one critical point at  $r=r^0_{\rm max}$ where it attains a maximum. As in the proof of Proposition~\ref{odeEst1}
concerning the regime
 $\mathcal{G}^{\mbox{$\sharp$}}$, we again infer that, for $\omega_{\rm width}$ sufficiently large, in the regime $\mathcal{G}_{\mbox{$\lessflat$}}\left(\omega_{\rm high},\epsilon_{\rm width}\right)$, the
 potential $V$ has a unique non-degenerate
critical  point at $r_{\max}$, where it attains a maximum, and that $r_{\max}$
is uniformly bounded away from $r_+$ and is uniformly bounded from above. Similarly, we also obtain the existence of
 an interval $(r_{\max}-\de,r_{\max}+\de)$, where $\delta$ is independent of frequency
 parameters, such that $V$ satisfies the two relations
\[
V(r)-\omega^2\ge b\Lambda,\qquad \forall r\in (r_{\rm max}-\delta,r_{\rm max}+\delta)
\]
and
\[
(r-r_{\rm max}) \frac d{dr} V(r)\ge b \Lambda \frac {(r-r_{\rm max})^2}{r^4}, \qquad r>r_+.
\]

We may now follow the construction given in Proposition~\ref{odeEst1} for
the range $\mathcal{G}^{\mbox{$\sharp$}}$.
We define first a current
$\text{Q}=\text{Q}^f+\text {\fontencoding{LGR}\selectfont \Coppa}^h$ with the
same choice of
functions $f$ and $h = A\tilde h$ as for $\mathcal{G}^{\mbox{$\sharp$}}$. This
gives the inequality
\begin{align}\label{someEstimateWithA}
\int_{-\infty}^\infty &\left ((2f'+Ah) |u'|^2+\left (Ah(V-\omega^2)-fV'\right) |u|^2 -\frac 12 (f{'''}+Ah'')|u|^2\right )\\&\nonumber=
\left (|u'|^2+(\omega-\upomega_+m)^2 |u|^2\right)_{r=r_+} + \left (|u'|^2+\omega^2 |u|^2\right)_{r=\infty}
\\  \nonumber &\qquad- \int_{-\infty}^\infty \left (2 f \,{\text{Re}} (u'\overline{H}) + (f'+Ah)\, {\text{Re}} (u\overline{H})\right),
\end{align}
where the integrand on the left hand side
is positive definite.
As in the $\mathcal{G}^{\mbox{$\sharp$}}$ regime, we may gain a large parameter in the region $(r_{\max}-\de,r_{\max}+\de)$ by observing that there exists a small constant $\tilde\delta$ only depending on $a_0$ and $M$ so that, as long as $A \leq  \tilde\delta\epsilon_{\rm width}^{-1}\omega_{\rm high}^2$, the left-hand side of~(\ref{someEstimateWithA}) will give a coercive estimate. We fix such an $A$. Finally, using a $\chi_1\left(\omega-\upomega_+m\right)\text{Q}^K$ current we may handle the boundary term at the horizon, \emph{mutatis mutandis}, as we did for the $\mathcal{G}^{\mbox{$\sharp$}}$ regime. We obtain
\begin{align*}
b\int_{-\infty}^\infty &\left ((2f'+Ah) |u'|^2+\left (Ah(V-\omega^2)-fV'\right) |u|^2 -\frac 12 (f{'''}+Ah'')|u|^2\right )\\&\leq \left (|u'|^2+\omega^2 |u|^2\right)_{r=\infty}
- \int_{-\infty}^\infty \left (2 f \,{\text{Re}} (u'\overline{H}) + (f'+Ah)\, {\text{Re}} (u\overline{H}) + \chi_1\left(\omega-\upomega_+m\right)\text{Im}\left(H\overline u\right)\right).
\end{align*}
Finally, for any $E \geq 2$, the boundary term at infinity is controlled easily with a Q$^T$ current:
\begin{align*}
b\int_{-\infty}^\infty &\left ((2f'+Ah) |u'|^2+\left (Ah(V-\omega^2)-fV'\right) |u|^2 -\frac 12 (f{'''}+Ah'')|u|^2\right )\\&\leq- \int_{-\infty}^\infty \left (2 f \,{\text{Re}} (u'\overline{H}) + (f'+Ah)\, {\text{Re}} (u\overline{H}) + \chi_1\left(\omega-\upomega_+m\right)\text{Im}\left(H\overline u\right) + E\omega\text{Im}\left(H\overline u\right)\right).
\end{align*}
Restricting the domain of integration of the left hand side of our estimate then finishes the proof.
\end{proof}

\subsection{The $\mathcal{G}_{\mbox{$\natural$}}$ range}
\label{thisissparta}
This range is manifestly non-superradiant. By the results of
Section~\ref{Vsrsec},    it will follow that,
after suitable such choices of $\omega_{\rm high}$ and $\epsilon_{\rm width}$,
this will be the only range which can
contain trapping phenomena; thus, it is only in this range
for which we will define
a non-zero parameter $r_{\rm trap}$.

After the final choices of parameters $\omega_{\rm high}$ and $\epsilon_{\rm width}$
have been made, then for
$(\omega, m, \Lambda)\in\mathcal{G}_{\mbox{$\natural$}}(\omega_{\rm high}, \epsilon_{\rm width})$,
we set the functions  $h$, $\tilde y$, $\hat y$ and $\chi_1$ appearing
in the statement of Theorem~\ref{phaseSpaceILED} to be identically $0$.
The remaining functions $f$ and $\hat{y}$, the parameter $r_{\rm trap}$,
and the desired coercivity properties are given by the following:

\begin{proposition}\label{odeEst4}
Let $a_0<M$. Then, for all
$\epsilon_{\rm width} > 0$, for all $\omega_{\rm high}$, $R_{\infty}$ and $E$
sufficiently big depending on $\epsilon_{\rm width}$, and for all
$0\le a\le a_0$, $(\omega, m, \Lambda)\in  \mathcal{G}_{\mbox{$\natural$}}(\omega_{\rm high},
\epsilon_{\rm width})$,
 there exist  functions $f$ and $\hat y$ and a value $r_{\rm trap}$
 satisfying the uniform bounds
\[
r_{\rm trap} =0 \qquad {\rm\ or\ } \qquad 0<b< r_{\rm trap}-r_+<B,
\]
\[
 \left|f\right| + \Delta^{-1}r^2\left|f'\right| + \left|y\right| \leq B\left(\epsilon_{\rm width}\right),\]
\[f = 1,\ \hat y = 0\text{ for }r^* \geq R^*_{\infty},
\]
such that,
 for all
smooth solutions $u$ to the radial o.d.e.~(\ref{e3iswsntouu}) with
right hand side $H$, satisfying moreover the boundary conditions~(\ref{eq:b-}) and~(\ref{eq:b+}),
we have the estimate
 \begin{align}\label{thisisspartaEst}
b\left(\epsilon_{\rm width}\right)\int_{R^*_-}^{R^*_+}&\left (|u'|^2+ \left((\omega^2+\La)\left(1-r^{-1}r_{\rm trap}\right)^2+1\right)|u|^2\right)\\ \nonumber &\le  \int_{-\infty}^\infty \left (-2 f \,{\text{Re}} (u'\overline{H}) -f'{\text{Re}} (u\overline H)+E\omega {\text{Im}} (H\overline u)\right) + \int_{-\infty}^{\infty} 2\hat y {\text{Re}} (u'\overline H).
\end{align}
\end{proposition}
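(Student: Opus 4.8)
The plan is to treat $\mathcal{G}_{\natural}$ as \emph{the} trapping regime: here $\omega^2$ and $\Lambda$ are both large and comparable ($\epsilon_{\rm width}\Lambda\le\omega^2\le\epsilon_{\rm width}^{-1}\Lambda$, $|\omega|\ge\omega_{\rm high}$) and the triple is non-superradiant with a quantitative margin, $m\omega\notin(0,\tfrac{am^2}{2Mr_+}+\alpha\Lambda]$. One runs the $\text{Q}^f$-construction of Stage~1 of Proposition~\ref{odeEst1}, but now \emph{keeps} the degeneracy at the top of the potential (the degeneracy at the photon-sphere-analogue being unavoidable), and controls the boundary terms by a conserved current alone. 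First I would establish, for $\omega_{\rm high}$ large depending on $\epsilon_{\rm width}$, the structure of the full potential $V=V_0+V_1$ on $(r_+,\infty)$: it is either strictly decreasing, or has a unique non-degenerate maximum $r_{\rm max}$ with $b<r_{\rm max}-r_+<B$, possibly preceded by a single local minimum $r_{\rm min}$, and in the latter case $-(r-r_{\rm max})\frac{dV}{dr}(r)\ge b\,\Lambda\,(r-r_{\rm max})^2/r^4$ on $[r_+,\infty)$ while $\bigl|\frac{dV}{dr}\bigr|\le B\Lambda$. This is the argument of Lemma~\ref{theOtherLemma}: writing $V_0=\Lambda$ times a function of $r$ depending only on the \emph{bounded} parameters $\sigma=am\omega/\Lambda$ and $m^2/\Lambda$ (using $|m|,|\omega|\lesssim\sqrt\Lambda$ from admissibility), the critical-point dichotomy of Lemma~\ref{lem:1} applies to $V_0$, and for $\omega_{\rm high}$ large $V_1=O(1)$ is negligible against $\Lambda$ on compact $r$-intervals bounded away from $r_+$. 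The two extra inputs I would record are: (i) $\omega^2-V(r_+)=(\omega-\upomega_+m)^2\ge b\Lambda$ throughout $\mathcal{G}_{\natural}$, from the identity in the proof of Lemma~\ref{lem:2} together with the $+\alpha\Lambda$ margin and $\Lambda\ge m^2$, $\Lambda\ge 2a|m\omega|$, which force $|\omega-\upomega_+m|\gtrsim\sqrt\Lambda$, so that $\omega^2-V\ge b\Lambda$ on a fixed neighbourhood of $[r_+,r_{\rm min}]$; and (ii) in $\mathcal{G}_{\natural}$ the quantities $\omega^2$, $\Lambda$, $(\omega-\upomega_+m)^2$, $\omega(\omega-\upomega_+m)$ are all mutually comparable.

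For the currents I would distinguish two sub-cases. If $V$ is strictly decreasing, or if $\omega^2$ is quantitatively separated from $V(r_{\rm max})$ (say $|\omega^2-V(r_{\rm max})|\ge b\Lambda$), I set $r_{\rm trap}=0$ and argue as in Proposition~\ref{odeEst2} or as in Proposition~\ref{odeEst1}: then $\omega^2-V$ has a definite sign in the relevant region and a $\text {\fontencoding{LGR}\selectfont \koppa}^y$ current --- or a $\text{Q}^f$ current desingularised near $r_{\rm max}$ by a $\text {\fontencoding{LGR}\selectfont \koppa}^{\hat y}$ current with $\hat y'$ of the appropriate sign, exploiting the sign of $\omega^2-V$ --- produces the fully \emph{non}-degenerate bulk. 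Otherwise $\omega^2\approx V(r_{\rm max})$ to within $b\Lambda$, the genuinely trapped case, and I set $r_{\rm trap}=r_{\rm max}$. Here the main current is $\text{Q}^f$ with $f$ exactly as in Stage~1 of Proposition~\ref{odeEst1}: bounded, monotone, $f=-1$ near $r_+$, $f=0$ at $r_{\rm max}$, $f=1$ for $r^*\ge R^*_{\infty}$, $f'>0$ to the left of some $R_1>R_+$, and arranged so that $-fV'-\tfrac12 f'''\ge b\Lambda\,\Delta(r-r_{\rm max})^2/r^7$ via the Schwarzschild-type property of Step~1 --- but with \emph{no} $\text {\fontencoding{LGR}\selectfont \Coppa}^h$ term, hence $h=0$, since $V(r_{\rm max})-\omega^2$ is no longer bounded below by $b\Lambda$ and the degeneracy cannot be removed. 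To handle the stretch $(r_+,r_{\rm min})$ --- where $f<0$ and $V'<0$, so the term $-fV'|u|^2$ has the wrong sign --- I add a $\text {\fontencoding{LGR}\selectfont \koppa}^{\hat y}$ current supported there with $\hat y'>0$, which by Step~1(i) contributes $\hat y'(\omega^2-V)|u|^2\ge b\Lambda|u|^2$; $\hat y$ is taken bounded and vanishing for $r^*\ge R^*_{\infty}$, and gives the $\int 2\hat y\,\text{Re}(u'\overline H)$ term on the right of~(\ref{thisisspartaEst}). The resulting bulk dominates $b(\epsilon_{\rm width})\int_{R^*_-}^{R^*_+}(|u'|^2+\Lambda(r-r_{\rm max})^2r^{-2}|u|^2)$, and a weighted Poincar\'e inequality on the bounded interval $[R^*_-,R^*_+]$, with weight vanishing at $r_{\rm max}$, upgrades this to $b(\epsilon_{\rm width})\int_{R^*_-}^{R^*_+}(|u'|^2+((\omega^2+\Lambda)(1-r^{-1}r_{\rm trap})^2+1)|u|^2)$, the left-hand side of~(\ref{thisisspartaEst}).

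It remains to dispose of the boundary terms. Since $f(r_+)=-1$ and $f(\infty)=1$, the integrated $\text{Q}^f$ identity (integrating~(\ref{eq:Qfor})) reads, using~(\ref{eq:b-})--(\ref{eq:b+}),
(coercive bulk)$\,+\int_{-\infty}^{\infty}\!\bigl(2f\,\text{Re}(u'\overline H)+f'\,\text{Re}(u\overline H)\bigr)=|u'(\infty)|^2+\omega^2|u(\infty)|^2+|u'(r_+)|^2+(\omega-\upomega_+m)^2|u(r_+)|^2$, so the positive flux terms on the right must be absorbed. They are bounded by $E\int_{-\infty}^{\infty}\omega\,\text{Im}(H\overline u)$ for $E$ sufficiently large, using the conserved identity $\int_{-\infty}^{\infty}(\text{Q}^T)'=\omega^2|u(\infty)|^2+\omega(\omega-\upomega_+m)|u(r_+)|^2=\int_{-\infty}^{\infty}\omega\,\text{Im}(H\overline u)$, the non-superradiance $\omega(\omega-\upomega_+m)\ge0$, and --- crucially here, in contrast to Proposition~\ref{odeEst3}, where $\omega(\omega-\upomega_+m)$ and $(\omega-\upomega_+m)^2$ need not be comparable --- the comparability $(\omega-\upomega_+m)^2\sim\omega(\omega-\upomega_+m)$ from Step~1(ii). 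Thus no $\text{Q}^K$ current is needed and $\chi_1=0$, which gives precisely~(\ref{thisisspartaEst}).

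The main obstacle is Step~1 together with the fine choice of $f$: one must verify, uniformly over $\mathcal{G}_{\natural}$, the Schwarzschild-type simple-maximum degeneration $-(r-r_{\rm max})V'\gtrsim\Lambda(r-r_{\rm max})^2/r^4$ for the full potential from the critical-point analysis of $V_0$, and --- a point special to the general (non-axisymmetric, non-small-$a$) setting --- one must correctly treat the local minimum $r_{\rm min}$, which \emph{can} occur here because $\mathcal{G}_{\natural}$ does not force $m\omega\le\tfrac{am^2}{2Mr_+}$ and so Lemma~\ref{lem:3} need not apply; this is exactly what the auxiliary current $\hat y$ is for. A secondary point is the choice of the threshold separating the genuinely-trapped sub-case (where $r_{\rm trap}=r_{\rm max}$) from the rest, which must be made compatible with the later summation, where every non-zero value of $r_{\rm trap}$ is required to lie in the fixed $r$-interval on which the physical-space degeneration function $\zeta$ is permitted to vanish.
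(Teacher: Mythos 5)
Your broad strategy matches the paper's --- a $\text{Q}^f$ current vanishing at $r_{\rm max}$, a $\text {\fontencoding{LGR}\selectfont \koppa}^{\hat y}$ current near $r_+$, and a conserved $E\text{Q}^T$ current for the flux terms, the last of which suffices without any $\text{Q}^K$ precisely because, as you observe in Step~1(ii), $\omega^2$, $\Lambda$, $(\omega-\upomega_+m)^2$ and $\omega(\omega-\upomega_+m)$ are all mutually comparable throughout $\mathcal{G}_{\natural}$. But there is a genuine error in Step~1 which undermines your choice of support for $\hat y$. You assert that when $r^0_{\rm min}$ exists the Schwarzschild-type bound $-(r-r_{\rm max})V'(r)\ge b\Lambda(r-r_{\rm max})^2/r^4$ still holds on \emph{all} of $[r_+,\infty)$; it does not. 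On $(r_+,r_{\rm min})$ one has $V'<0$ and $r<r_{\rm max}$, so the left side is negative, and --- this is the point your $\hat y$ construction misses --- the bound also fails on a whole neighbourhood of $r_{\rm min}$, where $V'\approx 0$ while $(r-r_{\rm max})^2$ is bounded below: there the right side is of order $\Lambda$ but the left side is only $O(1)$. Consequently no admissible $f$ (bounded with $\left|f\right|+\left|f'''\right|\le B$) can satisfy $-fV'-\tfrac12 f'''\ge b\Lambda\,\Delta(r-r_{\rm max})^2/r^7$ near $r_{\rm min}$.

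Your $\hat y$ is supported only on $(r_+,r_{\rm min})$, so it offers no help in this zone, and your total bulk there is far too weak to dominate $b\Lambda(r-r_{\rm max})^2r^{-4}\left|u\right|^2$. The paper's fix is to cut at a larger radius $r_3$ --- roughly the first point where $V_0$ climbs to $V_0(r_+)+\tfrac{3}{4}c\Lambda$, which necessarily satisfies $r_3>r^0_{\rm min}$ --- so that $\omega^2-V\ge b\Lambda$ holds on the whole interval $[r_+,r_3]$ where $\hat y$ is supported, while on $[r_3,\infty)$ one has either $\frac{dV_0}{dr}\ge b\Lambda$ or $\frac{d}{dr}\bigl[(r^2+a^2)^3\frac{dV_0}{dr}\bigr]\le -b\Lambda r^2$, which is precisely the structural input needed to choose $f$ there. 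A secondary and unnecessary complication is your sub-case split on $\left|\omega^2-V(r_{\rm max})\right|$: in the branch $V(r_{\rm max})-\omega^2\ge b\Lambda$ you cannot de-singularise with a $\text {\fontencoding{LGR}\selectfont \koppa}$-type current as you suggest (where $\omega^2-V<0$ near $r_{\rm max}$, making $\hat y'(\omega^2-V)$ positive forces $\hat y'<0$, and then $\hat y'\left|u'\right|^2$ has the wrong sign), and a $\text {\fontencoding{LGR}\selectfont \Coppa}^h$ current is excluded by the prescribed right-hand side of~(\ref{thisisspartaEst}). The paper simply always takes $r_{\rm trap}=r_{\rm max}$ once $r_3$ is finite (and below a fixed $R_{\rm dec}$), accepting the degenerate estimate --- which is all the proposition requires --- and sets $r_{\rm trap}=0$ only when $r_3\ge R_{\rm dec}$, in which case $\omega^2-V\ge b\Lambda$ on the whole bounded region and an exponential $\text {\fontencoding{LGR}\selectfont \koppa}^{\hat y}$ carries the estimate.
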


\begin{proof}
As noted above, this frequency range, where $\omega^2$
is comparable to $\Lambda$, contains the trapping phenomena,
but is non-superradiant.
For  frequencies $(\omega, m, \Lambda)\in \mathcal{G}_{\mbox{$\natural$}}$,
 Lemma~\ref{lem:1} implies that the potential $V_0$
may have at most two critical points. Furthermore, either a maximum $r^0_{\rm max}$ exists or $V_0$ is non-increasing on $[r_+,\infty)$; if the maximum exists, then there may also exist a minimum $r^0_{\rm min}$ which will satisfy $r^0_{\rm min} < r^0_{\rm max}$.

In analogy to Lemma~\ref{theOtherLemma} we first must show that for $\omega_{\rm high}$ sufficiently large, the full potential $V$ enjoys similar properties.

\begin{lemma}For $\epsilon_{\rm width}$ as above, for all $\omega_{\rm high}$ sufficiently large depending on $\epsilon_{\rm width}$ and for $(\omega,m,\Lambda) \in \mathcal{G}_{\mbox{$\natural$}}(\omega_{\rm high},\epsilon_{\rm width})$, there exists an $r_3 \in (r_+,\infty]$ depending on the frequency triple but bounded away from $r_+$,
\[r_3 - r_+ \geq b(\epsilon_{\rm width}),\]
such that for $ r\in [r_+,r_3]$
$$
V(r)\le \omega^2-b\left(\epsilon_{\rm width}\right) \La.
$$
Furthermore, in the case when $r_3 < \infty$, then in fact $r_3 \leq B\left(\omega_{\rm high},\epsilon_{\rm width}\right)$, $r^0_{max}$ exists and the potential $V$ has a unique non-degenerate maximum $r_{\max} \in [r_3,\infty)$,
$\left|r_{\rm max}-r_{\rm max}^0\right|\leq B\left(\epsilon_{\rm width}\right)\Lambda^{-1}$ and $\frac {d^2}{dr^2} V(r_{\max})<-b\left(\epsilon_{\rm width}\right) \La$.
\end{lemma}

\begin{proof}
Since $m\omega\not\in(0,\frac {am^2}{2Mr_+}+\alpha \La]$ and $\epsilon_{\rm width}\La \leq \omega^2 \leq \epsilon_{\rm width}^{-1}\La$ , we find
\[\omega^2-V(r_+)=\omega^2-V_0(r_+)\ge c \Lambda,\]
where $c = c\left(\epsilon_{\rm width}\right)$ only depends on the value of $\epsilon_{\rm width}$. We define $r_0 \in (r_+,\infty]$ to be the largest value with the property that
for all $r\in [r_+,r_0)$
$$
V_0(r)\le V_0(r_+)+\frac {c}2\La.
$$
If $r_0$ is finite then we must have a maximum $r_{\rm max}^0$. Furthermore, $\frac{dV_0}{dr}(r_0) \geq 0$; hence, Lemma~\ref{lem:1} implies that if $r_{\rm min}^0$ exists, then
\[r_{\rm min}^0<r_0\le r_{\rm max}^0.\]
Moreover, Lemma~\ref{lem:1} implies that $r_{\rm max}^0$ is bounded
from above by a constant only depending on $\epsilon_{\rm width}$. On the other hand, since
$\left|\frac {d}{dr} V_0(r)\right|\le B\left(\epsilon_{\rm width}\right) \La r^{-3}$, the value $r_0-r_+$ and thus $r_{max}^0-r_+$
is bounded from below by a constant only depending on $\epsilon_{\rm width}$.

We continue to consider the case where $r_0 < \infty$. Recall from the proof of Lemma~\ref{lem:1} that either $\frac{d}{dr}\left((r^2+a^2)^3\frac{dV_0}{dr}\right)$ is negative on $[r_+,\infty)$ or there exists a unique value $r_1 \in [r_+,r_{\rm max}^0)$ such that $\frac{d}{dr}\left((r^2+a^2)^3\frac{dV_0}{dr}\right)$ is positive on $[r_+,r_1)$ and negative on $(r_1,\infty)$. Moreover, since for frequency triples in $\mathcal{G}_{\mbox{$\natural$}}$,
the parameter $\sigma=am\omega/\Lambda$ is bounded by a constant only depending on $\epsilon_{\rm width}$, the value of $r_1$ is uniformly bounded from above by a constant only depending on $\epsilon_{\rm width}$. We first consider the case where the point $r_1$ exists and further split the analysis into two sub-cases based on the value of $V_0(r_1)$.

If $V_0(r_1)\le V_0(r_+)+\frac {3c}4\La$, then, in view of the fact that $V_0$ has a unique maximum at $r^0_{\rm max}$,
we have that
$$
V_0(r)\le V_0(r_+)+\frac {3c}4\La,\qquad \forall r\in [r_+,\max(r_0,r_1)].
$$
Moreover, using that $\left|\frac{dV_0}{dr}\right| \le B\left(\epsilon_{\rm width}\right)\La r^{-3}$ and that $\Lambda^{-1} \frac {d}{dr} \left ((r^2+a^2)^3 \frac d{dr} V_0(r)\right)$ is a quadratic polynomial, with coefficients bounded by $\epsilon_{\rm width}$,
vanishing at the unique point $r_1$ on the interval $[r_+,\infty)$, we can find a small constant $\de = \de\left(\epsilon_{\rm width}\right) >0$ only depending on $\epsilon_{\rm width}$ such that
$$
V_0(r)\le V_0(r_+)+\frac {3c}5\La,\qquad \forall r\in [r_+,\max(r_0,r_1)+\de]
$$
and
$$
\frac {d}{dr} \left ((r^2+a^2)^3 \frac d{dr} V_0(r)\right)<-c_1\left(\epsilon_{\rm width}\right)\La r^2,\qquad\forall  r\in [\max(r_0,r_1)+\de,\infty),
$$
where the positive constant $c_1$ only depends on $\epsilon_{\rm width}$.

Now we consider the case where $V_0(r_1)\ge V_0(r_+)+\frac {3c}4\La$. Then, once again using the bound $\left|\frac{dV_0}{dr}\right| \leq B\left(\epsilon_{\rm width}\right)\La$, we conclude that $r_1-r_0$ is bounded from below by a small positive constant just depending on $\epsilon_{\rm width}$. Furthermore, since $\frac d{dr} V_0(r_0)\ge 0$,
we can find a value
$r_0'\in [r_0,r_1]$ such that
$$
V_0(r)\le  V_0(r_+)+\frac {3c}4\La,\qquad \forall r\in [r_+,r_0']
$$
and
$$
\frac d{dr} V_0(r)\ge c_2 \La,\qquad \forall r\in [r_0',r_1],
$$
where $c_2 = c_2\left(\epsilon_{\rm width}\right)$ is a positive constant which only depends on $\epsilon_{\rm width}$.

Moreover, after slightly changing $c_2$, the last property can be easily extended to a slightly larger interval
$$
\frac d{dr} V_0(r)\ge c_2 \La,\qquad \forall r\in [r_0',r_1+\de],
$$
so that $\delta$ only depends on the constant $\epsilon_{\rm width}$.
$$
\frac {d}{dr} \left ((r^2+a^2)^3 \frac d{dr} V_0(r)\right)<-c_3\La r^2,\qquad\forall  r\in [r_1+\de,\infty),
$$
for a positive constant $c_3 = c_3\left(\epsilon_{\rm width}\right)$ which only depends on $\epsilon_{\rm width}$.

If $r_1$ does not exists, the above arguments \emph{mutatis mutandis} will produce a value $r_0'$ only depending on the value $\epsilon_{\rm width}$ such that
$$
V_0(r)\le  V_0(r_+)+\frac {3c}4\La,\qquad \forall r\in [r_+,r_0'],
$$
$$
\frac {d}{dr} \left ((r^2+a^2)^3 \frac d{dr} V_0(r)\right)<-c_4\left(\epsilon_{\rm width}\right)\La r^2,\qquad\forall  r\in [r_0',\infty),
$$
for a positive constant $c_4 = c_4\left(\epsilon_{\rm width}\right)$ only depending on $\epsilon_{\rm width}$.

Finally, in both cases $r_0 < \infty$ and $r_0 = \infty$ we may therefore claim the existence of a value $r_3$ (possibly infinite), bounded away from $r_+$ by a constant only depending on $\epsilon_{\rm width}$, such that
$$
V_0(r)\le  V_0(r_+)+\frac {3c}4\La\le \omega^2-\frac c4 \La,\qquad \forall r\in [r_+,r_3]
$$
and, such that for any $r\in [r_3,\infty)$, either
$$
\frac d{dr} V_0(r)\ge b\left(\epsilon_{\rm width}\right) \La
$$
or
$$
\frac {d}{dr} \left ((r^2+a^2)^3 \frac d{dr} V_0(r)\right)<-b\left(\epsilon_{\rm width}\right)\La r^2.
$$
We note that if $r_3$ is finite, then it is bounded from above by a constant only depending on $\epsilon_{\rm width}$.

Now, just as we argued in the frequency range $\mathcal{G}^{\mbox{$\sharp$}}$,
adding the bounded potential $V_1$, and requiring that $\omega_{\rm high}$ is sufficiently large finishes the proof.
\end{proof}

Before constructing our current, it will be useful to recall that, as observed in Section~\ref{sub:sharp}, the inequalities $\Lambda \geq \left|m\right|\left(\left|m\right| + 1\right)$ and $\Lambda \geq 2a\left|m\omega\right|$ imply that there exists a constant $R_{\rm dec}^* \geq 2R^*_+$ only depending on $a_0$ and $M$ such that
\begin{equation}\label{decrease2}
V' < 0\text{ for }r^* \geq R_{\rm dec}^*.
\end{equation}

We now construct our current, first under the assumption that $r^*_3 < R^*_{\rm dec}$. Given $E$ sufficiently large depending on $\epsilon_{\rm width}$, we shall use a combination
\[
\text{Q}=\text{Q}^f-\text {\fontencoding{LGR}\selectfont \koppa}^y-E \text{Q}^T
\]
of the currents $\text{Q}^f, \text {\fontencoding{LGR}\selectfont \koppa}^y$ and $Q^T$
where $f$, $y$  are chosen as described below.

The current $\text{Q}^f$ is applied with a function $f$ such that
\begin{equation}\label{choices10}
\left|f\right| + \Delta^{-1}r^2\left|f'\right| \leq B\left(\epsilon_{\rm width}\right),\qquad f(r_+) = 0,\qquad f' > 0\text{ for }r \in [r_3,R_{\infty}],
\end{equation}
\begin{equation}\label{choices11}
f\text{ switches from negative to positive at }r = r_{\max},\qquad f = 1\text{ for }r^* \geq R^*_{\rm dec},
\end{equation}
\begin{equation}\label{choices12}
-fV' - \frac 12 f'''(r) > b(\epsilon_{\rm width})\La \frac {\Delta (r-r_{\max})^2}{r^7},\qquad \forall r\in [r_3,\infty).
\end{equation}
In view of the properties of $V$ proven above, such a function can easily be constructed.

The second current will be
$\text {\fontencoding{LGR}\selectfont \koppa}^{\hat y}$, with
\begin{equation}\label{choices13}
\hat y = 0\text{ for }r \geq r_3,\qquad \hat y' > 0\text{ for }r \leq r_3,\qquad \left|\hat y\right| + \left|\hat y'\right| \leq B\left(\epsilon_{\rm width}\right).
\end{equation}
Such a $\hat{y}$ is now trivial to construct.

Finally, we subtract the multiple $E\text{Q}^T$ of the current $\text{Q}^T$.
We obtain:
\begin{align*}
\int_{-\infty}^{r_3} &\left(\hat y'\left (|u'|^2+(\omega^2-V)|u|^2\right)-\hat yV'|u|^2\right)+
\int_{-\infty}^\infty \left (2f' |u'|^2-(fV'+\frac 12 f''') |u|^2 \right )\\+
&\left(-f|u'|^2+(\frac{1}{2}E-f)\omega^2 |u|^2\right)_{r=\infty}+\left(\frac{1}{2}E\omega(\omega-\upomega_+m)-2\hat y(\omega-\upomega_+m)^2\right) |u|^2|_{r=r_+}
\\ & =- \int_{-\infty}^\infty \left (2 f \,{\text{Re}} (u'\overline{H})+f' {\text{Re}} (u\overline{H})-E\omega {\text{Im}} (H\overline{u})  \right)+ \int_{-\infty}^{r_3}
2\hat y \,{\text{Re}} (u'\overline{H}) .
\end{align*}
By the described properties of the potential $V$, the expression $-(fV'+\frac 12 f''')$ is positive on the interval $[r_3,\infty)$.
On the interval
$(r_+,r_3]$, we need to choose a function $\hat y$ so that in addition to~(\ref{choices13}) we have
\begin{equation}\label{somanychoices}
\hat y'  (\omega^2-V)-\hat y  V'-(fV'+\frac 12 f''')\ge 0.
\end{equation}
Since for these values of $r$
$$
(\omega^2-V)\ge b\left(\epsilon_{\rm width}\right) \La,\qquad |V'|\le B\left(\epsilon_{\rm width}\right)\Lambda \frac {\Delta}{r^2}, \qquad |f|+|f'''|\le B\left(\epsilon_{\rm width}\right)\frac {\Delta}{r^2},
$$
it suffices to fulfill the inequality
\begin{equation}\label{aSplendidIneq}
\frac {d}{dr} \hat y\ge -\hat y C + C,
\end{equation}
provided that $C$ is sufficiently large only depending on $\epsilon_{\rm width}$.
The function
$$
\hat y= 1-e^{C(r_3-r)}
$$
satisfies all the above criteria. Note that the constant $C$ only depends on $\epsilon_{\rm width}$. Finally, for all $E$ such that $C \ll E$, the non-superradiant condition $m\omega\not\in (0,m\upomega_+]$ and the boundary condition
$u'=i\omega u$ at $r=\infty$ ensure that both boundary terms at $r=r_+$ and $r=\infty$ are positive. After restricting the domain of integration of the left hand side of our estimate, we have obtained $(\ref{thisisspartaEst})$, defining
\[
r_{\rm trap} = r_{\rm max}.
\]

In the case $\infty\ge r_3 \geq R^*_{\rm dec}$ we construct our current as follows. As above we will have
\begin{equation}\label{aNiceCurrent}
\text{Q}=\text{Q}^f+\text {\fontencoding{LGR}\selectfont \koppa}^{\hat y}-E \text{Q}^T.
\end{equation}
We define
\[\hat y = 1-e^{\hat C\left(R_{\rm dec} + 2 - r\right)}\text{ for }r \leq R_{\rm dec} + 2,\]
\[\hat y = 0\text{ for }r\geq R_{\rm dec} + 2.\]
Note we shall satisfy~(\ref{aSplendidIneq}) with $C$ replaced by $\hat C$. Thus, arguing just as in the case when $r_3 < R_{\rm dec}$, for a sufficiently large $\hat C$ we will have
\[
\int_{-\infty}^{R_{\rm dec} + 1} \left(\hat y'\left (|u'|^2+(\omega^2-V)|u|^2\right)-\hat yV'|u|^2\right) \geq b\left(\epsilon_{\rm width}\right)\int_{R^*_-}^{R_{\rm dec} + 1}\left(\left|u'\right|^2 + \omega^2\left|u\right|^2\right).
\]
Next, we let $f$ be any smooth function such that
\begin{equation}\label{thechoicesneverend}
f' \geq 0,\qquad f = 0\text{ for }r\in [r_+,R_{\rm dec}], \qquad f = 1\text{ for }[R_{\rm dec}+1,\infty),\qquad \left|f\right| +|f'| + \left|f'''\right| \leq B.
\end{equation}
Such an $f$ is trivial to construct.

Requiring $\omega_{\rm high}$ to be sufficiently large depending on $\epsilon_{\rm width}$, we shall have
\[\int_{R_{\rm dec}}^{\infty} \left(\hat y'\left (|u'|^2+(\omega^2-V)|u|^2\right)-\hat yV'|u|^2\right)+
\int_{R_{\rm dec}}^\infty \left (2f' |u'|^2-(fV'+\frac 12 f''') |u|^2 \right ) \]
\[\geq \int_{R_{\rm dec}^*}^{R_{\rm dec}^*+1}\left(b\omega_{\rm high}^2 - \frac 12f'''\right)\left|u\right|^2 \geq 0.\]
Thus, the bulk term of the estimate corresponding to $\text{Q}$ is positive. Just as in the case $r_3 < R_{\rm dec}$, requiring that $E$ is
large enough depending on $\epsilon_{\rm width}$ will guarantee that the boundary terms are controlled. Finally, we require that $R^*_{\infty} \geq R_{\rm dec}^* + 1$.
This gives again $(\ref{thisisspartaEst})$
defining $r_{\rm trap}=0$.
 \end{proof}

\subsection{The $\mathcal{G}_{\mbox{$\flat$}}$ range} \label{whit}
This range again depends on $\omega_{\rm high}$, and $\epsilon_{\rm width}$.
As opposed to the Propositions concerning the other ranges which restrict the choices of
one or both these
parameters, in the range $\mathcal{G}_{\mbox{$\flat$}}(\omega_{\rm high}, \epsilon_{\rm width})$,
estimates can be obtained for \underline{arbitrary} $\omega_{\rm high}>0$ and
$\epsilon_{\rm width}>0$,
but the relevant constants will degenerate as $\omega_{\rm high}\to\infty$, $\epsilon_{\rm width}\to 0$.

We shall split the frequency range $\mathcal{G}_{\mbox{$\flat$}}$ into four subcases, considering each separately. We will see the above degeneration in the last of the cases.
We note that our decomposition will not however distinguish between
superradiant and non-superradiant frequencies. It should be clear to the reader how the constructions could be simplified if restricted to the non-superradiant case.

The split will rely on the introduction of a further small
parameter $\tilde a_0$. This parameter is for now free--we choose it in Section~\ref{putting}.

\subsubsection{The subrange $\left|\omega\right| \leq \omega_{\rm low}$, $0\le a < \tilde a_0$ and $m \neq 0$}
Given the final choice of parameters,
$\omega_{\rm high}$, $\epsilon_{\rm width}$ and $\omega_{\rm low}$, then
for  $(\omega, m, \Lambda)\in \mathcal{G}_{\mbox{$\flat$}}(\omega_{\rm high}, \epsilon_{\rm width})$
such that $|\omega|\le \omega_{\rm low}$ and $a < \tilde a_0$, we will set  the functions $f$ and $\tilde y$ together
with the parameter $r_{\rm trap}$ to be $0$.
The remaining functions $y$, $\hat y$, $h$, $\chi_1$, $\chi_2$ and the desired coercivity properties
are given by the following

\begin{proposition}\label{odeEst5}
Let $a_0<M$. Then, for all $\omega_{\rm high}>0$, $\epsilon_{\rm width}>0$,
for all $\omega_{\rm low} > 0$, $\tilde{a}_0 > 0$ sufficiently small depending on $\omega_{\rm high}$ and $\epsilon_{\rm width}$, for all $R_{\infty}$ sufficiently large, for all $E \geq 2$,
$0\le a\le a_0$, and for all
$(\omega, m, \Lambda)\in  \mathcal{G}_{\mbox{$\flat$}}(\omega_{\rm high}, \epsilon_{\rm width})$
such that $\left|\omega\right| \leq \omega_{\rm low}$ and $0\le a < \tilde a_0$, there exist
functions $y$, $\hat y$, $\chi_1$, $\chi_2$ and $h$, satisfying the uniform bounds
\[
\left|y\right| + \left|\hat y\right| + \left|h\right| + \left|\chi_2\right| \leq B,
\]
\[
\chi_2 = 1,\ \chi_1 = 0,\\ y = 1,\ \hat y = 0,\ h = 0\text{ for }r^* \geq R^*_{\infty},
\]
such that,
for all
smooth solutions $u$ to the radial o.d.e.~(\ref{e3iswsntouu}) with
right hand side $H$, satisfying moreover the boundary conditions~(\ref{eq:b-}) and~(\ref{eq:b+}),
we have the estimate
\begin{align}
b\int_{R^*_-}^{R^*_+}\left(\left|u'\right|^2 + \left|u\right|^2\right) &\leq
\int_{-\infty}^\infty \left (2(y+\hat y) \,{\text{Re}} (u'\overline{H})+ h\,{\text{Re}}(u\overline H)+E\omega \chi_1{\text{Im}} (H\overline u)+\chi_2\left(\omega-\upomega_+m\right) {\text{Im}} (H\overline u)\right).
\end{align}
\end{proposition}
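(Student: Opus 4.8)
The plan is to build a frequency-localised current combining the virial templates $\text {\fontencoding{LGR}\selectfont \Coppa}^h$, $\text {\fontencoding{LGR}\selectfont \koppa}^{\hat y}$, $\text {\fontencoding{LGR}\selectfont \koppa}^y$ of Section~\ref{sct} with cut-off multiples of the conserved currents $\text{Q}^T$ and $\text{Q}^K$ (with $f$, $\tilde y$ and the trapping parameter $r_{\rm trap}$ all set to $0$, reflecting the fact that bounded low frequencies are untrapped), then integrate the divergence identities~\eqref{eq:Q1for}, \eqref{eq:Q2for}, \eqref{eq:Q3for}, \eqref{eq:Q4for} against the inhomogeneity $H$ and read off the estimate. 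Here $h$ will be a bump supported in a fixed compact interval $[r_+ + \epsilon_1, R_+]$ bounded away from both ends, $\hat y$ a monotone (essentially exponential) function supported near the horizon, $y$ a monotone function equal to a constant for $r^* \geq R^*_\infty$, and $\chi_1, \chi_2$ the cutoffs with the support behaviour demanded by the statement. The three virial pieces are responsible for the coercive bulk term on $[R_-^*, R_+^*]$, while the conserved pieces together with the boundary behaviour of $\text {\fontencoding{LGR}\selectfont \koppa}^{\hat y}$ and $\text {\fontencoding{LGR}\selectfont \koppa}^y$ handle the boundary terms at $r = r_+$ and $r = \infty$.

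The first step is to record the relevant behaviour of the potential. Since $m \neq 0$ forces $\Lambda = \lambda_{m\ell}^{(a\omega)} + a^2\omega^2 \geq |m|(|m|+1) \geq 2$, while in $\mathcal{G}_{\mbox{$\flat$}}$ one also has $m^2 \leq \Lambda < \epsilon_{\rm width}^{-1}\omega_{\rm high}^2$, so $|m|$ is bounded by a constant $M_0$ depending only on $\omega_{\rm high}$ and $\epsilon_{\rm width}$. Because every $a$-dependent term in $V_0$ and $V_1$ is then $O(\tilde a_0)$ uniformly on compact $r$-intervals, the potential $V$ inherits, with constants merely halved, the qualitative features of the Schwarzschild potential~\eqref{at0freq} with $\ell(\ell+1)$ replaced by $\Lambda$: for $\tilde a_0$ sufficiently small (depending on $M_0$) and $R_{\rm large}$, hence $R_+$, sufficiently large one has (i) $V \geq b$ on $[r_+ + \epsilon_1, R_+]$; (ii) $\frac{dV}{dr} \geq b\Lambda \geq b$ on $[r_+, r_+ + \epsilon_1]$, the dominant term of $\frac{dV_0}{dr}(r_+)$ in the computation of Lemma~\ref{lem:3} being proportional to $(r_+ - M)\Lambda > 0$; and (iii) $\frac{dV}{dr} < 0$ and $V \to 0^+$ on $[R_+, \infty)$. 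In addition Lemma~\ref{lem:2} supplies the exact identity $\omega^2 - V(r_+) = (\omega - \upomega_+ m)^2$, and $V(\infty) = 0$.

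With these facts the current is assembled region by region. In the bulk, once $\omega_{\rm low}^2 < \tfrac12 b$ one has $V - \omega^2 \geq \tfrac12 b > 0$ on $[r_+ + \epsilon_1, R_+]$, so a $\text {\fontencoding{LGR}\selectfont \Coppa}^h$ current with a suitably shaped $h$ (not too concave near its edges, legitimate since $V > 0$ on ${\rm supp}\,h$) yields a bulk term bounding $b(|u'|^2 + |u|^2)$ there. Near the horizon, where $V' \geq b$ and $|\omega^2 - V|$ is bounded, a $\text {\fontencoding{LGR}\selectfont \koppa}^{\hat y}$ current with $\hat y$ monotone of a single sign, shaped so that $|\hat y'|/|\hat y|$ is dominated by $V'/|\omega^2 - V|$, gives $b(|u'|^2 + |u|^2)$ on $[r_+, r_+ + \epsilon_1]$, while its horizon boundary term is, by the boundary condition~\eqref{eq:b-} together with the Lemma~\ref{lem:2} identity, a definite-sign multiple of $(\omega - \upomega_+ m)^2|u(r_+)|^2$. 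Near infinity, where $V' < 0$ and $V \to 0^+$, a $\text {\fontencoding{LGR}\selectfont \koppa}^y$ current with $y$ monotone and $|y'|/|y|$ dominated by $|V'|/(V + \omega^2)$ contributes nonnegatively on $[R_+, \infty)$, with infinity boundary term a definite-sign multiple of $\omega^2|u(\infty)|^2$. Finally the conserved pieces are added with the cutoffs $\chi_1, \chi_2$; their boundary contributions at $r_+$ and $\infty$ are multiples of $\omega(\omega - \upomega_+ m)|u(r_+)|^2$, $(\omega - \upomega_+ m)^2|u(r_+)|^2$ and $\omega^2|u(\infty)|^2$, and for $E$ large these combine with the $\text {\fontencoding{LGR}\selectfont \koppa}^{\hat y}$, $\text {\fontencoding{LGR}\selectfont \koppa}^y$ boundary terms to make the \emph{total} boundary contribution nonpositive --- so that, unlike the subrange $|\omega| \geq \omega_{\rm low}$, no $|u(-\infty)|^2$ term is left on the right-hand side. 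The cut-off cross-terms $\chi_1'\text{Q}^T$, $\chi_2'\text{Q}^K$ live in fixed compact regions and are bounded by $B(|\omega| + |\omega - \upomega_+ m|)(|u'|^2 + |u|^2) \leq B(\omega_{\rm low} + \tilde a_0)(|u'|^2 + |u|^2)$ --- here $|m| \leq M_0$ and $\upomega_+ = \tfrac{a}{2Mr_+}$ enter --- hence are absorbed by the coercive bulk once $\omega_{\rm low}$ and $\tilde a_0$ are small enough depending on $\omega_{\rm high}, \epsilon_{\rm width}$. Summing, discarding the nonpositive boundary terms and restricting the coercive integrand to $[R_-^*, R_+^*]$ then yields the estimate.

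The step I expect to be the main obstacle is coordinating the three virial currents across the two transition zones (horizon$\,\to\,$bulk and bulk$\,\to\,$infinity) while simultaneously forcing full boundary positivity: wherever one current produces a ``wrong-sign'' $|u|^2$-coefficient (typically a term $-\hat y V'$ just inside the bulk, or $y'(\omega^2 - V)$ with $V > \omega^2$) it must be dominated either by a neighbouring current or by the overlap with ${\rm supp}\,h$, and the signs of $\hat y(r_+)$ and $y(\infty)$ must be chosen consistently with making the boundary sum $\leq 0$ after adding the $\text{Q}^T$ and $\text{Q}^K$ pieces. This is a finite, $(a_0, M)$-dependent optimisation whose feasibility rests precisely on the quantitative potential bounds (i)--(iii), the identity $\omega^2 - V(r_+) = (\omega - \upomega_+ m)^2$, and the smallness of $\tilde a_0$ and $\omega_{\rm low}$; the remaining manipulations are routine.
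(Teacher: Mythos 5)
Your overall strategy does match the paper's: a $\text {\fontencoding{LGR}\selectfont \Coppa}^h$ current on a middle region, $\text {\fontencoding{LGR}\selectfont \koppa}^{\hat y}$ near the horizon where $V'>0$, $\text {\fontencoding{LGR}\selectfont \koppa}^y$ near infinity where $V'<0$, and cut-off multiples of $\text{Q}^T$, $\text{Q}^K$ to control the boundary fluxes, with $\omega_{\rm low}$ and $\tilde a_0$ serving as small parameters to absorb the $\chi'$ cross-terms. Your reading of the potential (positivity of $V-\omega^2$ on compacta, $V'>0$ near the horizon, $V'<0$ for large $r$) is also correct. But there is a genuine gap in the bulk estimate.

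You assert that the $\text {\fontencoding{LGR}\selectfont \Coppa}^h$ current alone, with $h$ a bump on $[r_++\epsilon_1,R_+]$ ``not too concave near its edges,'' gives a coercive bulk on that interval because $V>0$ there. This cannot be arranged: near each edge of $\text{supp}\,h$ where $h$ decreases to $0$ there is necessarily a subinterval on which $h''>0$, so $-\tfrac12 h''|u|^2$ has the wrong sign, and the good term $h(V-\omega^2)|u|^2$ cannot dominate it because $h''/h\to\infty$ as $h\to 0$ --- no smooth compactly-vanishing $h$ avoids this. The paper resolves it by two interlocking devices absent from your proposal: (a) the transitions of $h$ are stretched over a \emph{logarithmic} interval (out to $e^{p^{-1}}R_2^*$ and down to $e^{p^{-1}}R_1^*$), forcing $|h''|\lesssim p/(r^*)^2$ there; and (b) the functions $y$, $\hat y$ carry the specific piece $\epsilon\left(\tfrac{1}{r^*V}-{\rm const}\right)$, whose contribution to $-(yV)'$ (resp.\ $-(\hat y\tilde V)'$) produces a good $\sim\epsilon/(r^*)^2$ term exactly on the transition zones of $h$, dominating the $h''$ contribution once $p\ll\epsilon$. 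Crucially, the supports of $\hat y$ and $y$ \emph{overlap} the transition zones of $h$; your arrangement ($\hat y$ on $[r_+,r_++\epsilon_1]$, $h$ on $[r_++\epsilon_1,R_+]$, $y$ on $[R_+,\infty)$) is essentially disjoint and so forecloses exactly the absorption mechanism needed. You do flag the ``coordination across transition zones'' as the expected difficulty, but the shapes you prescribe for $\hat y$ (exponential) and $y$ ($|y'|/|y|\lesssim|V'|/(V+\omega^2)$) do not address it. A minor secondary point: the statement requires the estimate for all $E\geq 2$, not merely $E$ large; with the boundary conditions one gets $\text{Q}^T(\infty)=\omega^2|u(\infty)|^2$ and $\text{Q}^K(-\infty)=-(\omega-\upomega_+m)^2|u(-\infty)|^2$, which exactly match the $\text {\fontencoding{LGR}\selectfont \koppa}^y$, $\text {\fontencoding{LGR}\selectfont \koppa}^{\hat y}$ boundary fluxes up to a factor of $2$, so $E=2$ already works once $y(\infty)$ is normalised.
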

\begin{proof}The construction of our current is inspired by the treatment of similar frequency regimes in~\cite{aretakisKerr} and~\cite{holz-smul}.

The following three properties are easily verified:
\begin{enumerate}
    \item For every $-\infty < \alpha < \beta < \infty$, if we require $\tilde a$ and $\omega_{\rm low}$ sufficiently small, both depending on $\alpha$ and $\beta$, then $r \in [\alpha,\beta] \Rightarrow V - \omega^2 > 0$.
    \item For sufficiently large $r^*$, independent of the frequency parameters, we have $V' < 0$.
    \item For sufficiently small $\tilde a_0$ and sufficiently negative $r^*$, independent of the frequency parameters, we have $V' > 0$.
\end{enumerate}

Let's introduce the set of relevant constants.
\begin{enumerate}
    \item Requiring that $\tilde a_0$ is sufficiently small, let $R^*_1 < R^*_-$ be a fixed negative constant chosen so that $r^* \leq R^*_1$ implies that $V' > 0$ and $\left(r^*\left(V - V|_{r=r_+}\right)\right)' > 0$.
    \item Let $R^*_2 > R_+^*$ be a fixed positive constant chosen so that $r^* \geq R^*_2$ implies $V' < 0$ and $\left(r^*V\right)' < 0$.
    \item Let $\epsilon > 0$ be a sufficiently small positive constant to be fixed later.
    \item Let $p = p\left(\epsilon\right) > 0$ be a sufficiently small positive constant depending on $\epsilon$.
\end{enumerate}

We now construct our current Q in a step by step fashion. Choose a function $h$ satisfying
\begin{equation}
\label{ccchoices}
h=1 {\rm\ for\ } r^* \in [R^*_1,R^*_2], \qquad h=0 {\rm\ for\ } r^* \in (-\infty,e^{p^{-1}}R^*_1],\qquad h \geq 0,
\end{equation}
\begin{equation}
\label{ccchoices2}
h=0 {\rm\ for\ } r^* \in [e^{p^{-1}}R^*_2,\infty), \quad \left|h''\right| \leq \frac{Bp}{\left|r^*\right|^2} {\rm\ when\ } r^* \in [e^{p^{-1}}R_1^*,R^*_1] \cup [R_2^*,e^{p^{-1}}R_2^*].
\end{equation}
Note that one may easily construct a function $h$ satisfying (\ref{ccchoices}) and (\ref{ccchoices2}).

We then apply a $\text {\fontencoding{LGR}\selectfont \Coppa}^h$ current:
\begin{align}\label{axisym1}
\int_{-\infty}^\infty &\left (h |u'|^2+\left(h(V-\omega^2)-\frac 12 h''\right)|u|^2\right ) = - \int_{-\infty}^\infty h\, {\text{Re}} (u\overline{H}).
\end{align}
The integrand of the left hand side of the estimate~(\ref{axisym1}) will cease to be non-negative for $r^* \in [e^{p^{-1}}R_1^*,R^*_1] \cup [R_2^*,e^{p^{-1}}R_2^*]$. We will produce a non-negative integrand by adding in $\text {\fontencoding{LGR}\selectfont \koppa}^y$ and $\text {\fontencoding{LGR}\selectfont \koppa}^{\hat y}$ currents.

Define a function $y$ by
\begin{equation}
\label{ccchoices3}
y = 0{\rm\ for\ } r^* \in (-\infty,R^*_2-1), \qquad y = \frac{r^*-R_2^*+1}{2}{\rm\ for\ }r^* \in [R^*_2-1,R^*_2),
\end{equation}
\begin{equation}
\label{ccchoices4}
y = \epsilon\left(\frac{1}{r^*V} - \frac{1}{R_2^*V|_{r^* = R^*_2}}\right) + \frac{1}{2}{\rm\ for\ }r^* \in [R^*_2,e^{p^{-1}}R^*_2],
\end{equation}
\begin{equation}\label{cccoices5}
y = \epsilon\left(\frac{1}{e^{p^{-1}}R_2^*V|_{r^* = e^{p^{-1}}R_2^*}} - \frac{1}{R_2^*V|_{r^* = R_2^*}}\right) + \frac{1}{2}{\rm\ for\ }r^* \in [e^{p^{-1}}R^*_2,\infty).
\end{equation}
Note that we have chosen $R^*_2$ so that we will have $y' \geq 0$. Of course, we also have $y \geq 0$.

Now we add in a $\text {\fontencoding{LGR}\selectfont \koppa}^y$ current to~(\ref{axisym1}) and obtain
\begin{align}\label{axisym2}
\int_{-\infty}^\infty &\left (\left(h + y'\right)|u'|^2+\left(y'\omega^2 + h(V-\omega^2) - (yV)' - \frac{1}{2}h''\right)|u|^2\right )=
\\ \nonumber &y(\infty)\left (|u'|^2+\omega^2 |u|^2\right)_{r=\infty}- \int_{-\infty}^\infty \left (2y \,{\text{Re}} (u'\overline{H}) + h\, {\text{Re}} (u\overline{H})\right).
\end{align}

We will now show that if we require $\omega_{\rm low}$ and $\tilde a_0$ to be sufficiently small depending on appropriate choices of $\epsilon$ and $p$, the integrand of the left hand side of~(\ref{axisym2}) is non-negative in the region $r^* \in [R^*_+,\infty)$. Since $h$, $y' \geq 0$ it suffices to show that the term $h(V-\omega^2) - (yV)' - \frac{1}{2}h''$ is non-negative.

The function $y$ vanishes and $h = 1$ in the region $r^* \in [R^*_+,R^*_2-1)$, and thus we have
\[r^* \in [R^*_+,R^*_2-1) \Rightarrow h(V-\omega^2) - (yV)' - \frac{1}{2}h'' = V - \omega^2.\]
If $\omega_{\rm low}$ and $\tilde a_0$ are sufficiently small, then $V - \omega^2$ will be positive in this region.

Next, we have
\[r^* \in [R^*_2-1,R^*_2) \Rightarrow h(V-\omega^2) - (yV)' - \frac{1}{2}h'' = V - \omega^2 - \frac{1}{2}V - yV'.\]
Recall that we chose $R^*_2$ so that $V' < 0$ in this region. Since $y > 0$, we then get
\[ r^* \in [R^*_2-1,R^*_2) \Rightarrow V - \omega^2 - \frac{1}{2}V - yV' \geq \frac{1}{2}V - \omega^2.\]
Now, it is clear that if $\omega_{\rm low}$ and $\tilde a_0$ are sufficiently small, then $\frac{1}{2}V - \omega^2$ will be positive in this region.

Next we consider the region $r^* \in [R^*_2,e^{p^{-1}}R^*_2)$. As usual, we start by noting that if we require $\omega_{\rm low}$ and $\tilde a_0$ to be sufficiently small depending on $p$, then $V - \omega^2 > 0$ in the region $r^* \in [R^*_2,e^{p^{-1}}R^*_2)$. Hence, we will have
\begin{equation}\label{someInequality}
r^* \in [R^*_2,e^{p^{-1}}R^*_2) \Rightarrow h(V - \omega^2) - (yV)' - \frac{1}{2}h'' \geq \frac{\epsilon - Bp}{(r^*)^2} - \left(\frac{1}{2} - \frac{\epsilon}{R^*_2V|_{r^* = R^*_2}}\right)V'.
\end{equation}
Again, we recall that $V' < 0$ for $r^* > R^*_5$. Furthermore, as long as we require $\epsilon$ to be sufficiently small, we will have $\frac{1}{2} - \frac{\epsilon}{R^*_2V|_{r^* = R^*_2}} > 0$. Finally, we may choose $p$ small enough depending on $\epsilon$ so that the first term on the right hand side of~(\ref{someInequality}) is also positive.

In the region $r^* \in [e^{p^{-1}}R^*_2,\infty)$ we have that $h = 0$ and $y$ is constant. Since $V' < 0$ in this region, we have
\[ h(V - \omega^2) - (yV)' - \frac{1}{2}h'' = -y(\infty)V' > 0.\]

Thus as long as $\epsilon > 0$ is sufficiently small, $p$ is sufficiently small depending on $\epsilon$ and $\omega_{\rm low}$ and $\tilde a_0$ are sufficiently small depending on $p$, the integrand of the left hand side of~(\ref{axisym2}) is non-negative for $r^* \geq R^*_2$; however, it is still not non-negative for $r^* < R^*_1$. To remedy this we will employ a $\text {\fontencoding{LGR}\selectfont \koppa}^{\hat y}$ current with a function $\hat y$ whose properties as $r^* \to -\infty$ will mimic the properties of $y$ as $r^* \to \infty$. The key point which allows us to carry out an analogous construction is that $V' > 0$ for $r^*$ sufficiently close to $-\infty$.

We define
\begin{equation}
\label{ccchoices6}
\hat y = 0{\rm\ for\ } r^* \in (R^*_1+1,\infty), \qquad \hat y = \frac{r^*-R_1^*-1}{2}{\rm\ for\ }r^* \in [R^*_1,R^*_1+1),
\end{equation}
\begin{equation}
\label{ccchoices7}
\hat y = \epsilon\left(\frac{1}{r^*\tilde V} - \frac{1}{R_1^*\tilde V|_{r^* = R^*_1}}\right) - \frac{1}{2}{\rm\ for\ }r^* \in [e^{p^{-1}}R^*_1,R^*_1),
\end{equation}
\begin{equation}\label{ccchoices8}
\hat y = \epsilon\left(\frac{1}{e^{p^{-1}}R_1^*\tilde V|_{r^* = e^{p^{-1}}R_1^*}} - \frac{1}{R_1^*\tilde V|_{r^* = R_1^*}}\right) - \frac{1}{2}{\rm\ for\ }r^* \in (-\infty,e^{p^{-1}}R^*_1].
\end{equation}
Here $\tilde V \doteq V - V|_{r=r_+}$. Note that we have chosen $R^*_1$ so that $\hat y' \geq 0$. Of course, we also have $\hat y \leq 0$.

Now we add a $\text {\fontencoding{LGR}\selectfont \koppa}^{\hat y}$ current to~(\ref{axisym2}). We obtain
\begin{align}\label{axisym3}
\int_{-\infty}^\infty &\left (\left(h + y' + \hat y '\right)|u'|^2+\left(y'\omega^2 + \hat y'\left(\omega-\upomega_+m\right)^2 + h(V-\omega^2) - (yV)' - (\hat y \tilde V)'- \frac{1}{2}h''\right)|u|^2\right )=
\\ \nonumber &y(\infty)\left (|u'|^2+\omega^2 |u|^2\right)_{r=\infty} + \left|\hat y(-\infty)\right|\left(|u'|^2 + (\omega-\upomega_+m)^2|u|^2\right)_{r = r_+} - \int_{-\infty}^\infty \left ((y+\hat y) \,{\text{Re}} (u'\overline{H}) + h\, {\text{Re}} (u\overline{H})\right).
\end{align}
Now, keeping in mind that $V' > 0$ for sufficiently negative $r^*$ and repeating the argument, \emph{mutatis mutandis}, which showed that $r^* \geq R^*_2 \Rightarrow h(V-\omega^2) - (yV)' - \frac{1}{2}h'' \geq 0$ we obtain that
\[r^* \leq R^*_1 \Rightarrow h(V-\omega^2) - (\hat y\tilde V)' - \frac{1}{2}h'' \geq 0.\]
We conclude that the integrand of the left hand side of~(\ref{axisym3}) is non-negative and is greater than
\[b\int_{R^*_-}^{R^*_+}\left(|u'|^2 + |u|^2\right).\]
We may now fix the constants $\epsilon$ and $p$.

It remains to absorb the boundary terms on the right hand side of~(\ref{axisym3}). We start with the term at $r = \infty$. Let $\chi_2$ be a function which is identically $1$ for $r^* \geq R_+^*$ and identically $0$ for $r^* \leq R^*_-$. Requiring that $E \geq 2$, we obtain
\begin{align}\label{getThatBoundaryTerm00}
y(\infty)\left(|u'|^2 + \omega^2|u|^2\right)_{r = \infty} &\leq Ey(\infty)\int_{-\infty}^{\infty}\left(\chi_2\text{Q}^T\right)'
\\ \nonumber &\leq B(\omega_{\rm high},\epsilon_{\rm width})\omega_{\rm low}\int_{R^*_-}^{R^*_+}\left(|u'|^2 + |u|^2\right) + Ey(\infty)\omega\int_{-\infty}^{\infty}\chi_2\text{Im}(H\overline u).
\end{align}

Taking $\omega_{\rm low}$ sufficiently small, we may add this in to our previous estimate and obtain
\begin{align}\label{axisym4}
b\int_{R^*_-}^{R^*_+}&\left(|u'|^2 + |u|^2\right) \leq
\\ \nonumber &B\left(|u'|^2 + (\omega-\upomega_+m)^2|u|^2\right)_{r = r_+} - \int_{-\infty}^\infty \left ((y+\hat y) \,{\text{Re}} (u'\overline{H}) + h\, {\text{Re}} (u\overline{H})+ Ey(\infty)\omega\chi_2\text{Im}(H\overline{u})\right).
\end{align}

Let $\chi_1$ be a function which is identically $1$ for $r^* \in (-\infty,R^*_1)$ and identically $0$ for $r^* \geq R^*_2$. We obtain
\begin{align}\label{getThatBoundaryTerm}
B\left(|u'|^2 + (\omega-\upomega_+m)^2|u|^2\right)_{r = r_+} &= B\int_{-\infty}^{\infty}\left(\chi_1\text{Q}^K\right)'
\\ \nonumber &\leq B(\omega_{\rm high},\epsilon_{\rm width})\left(\omega_{\rm low} + \tilde a_0\right)\int_{R^*_-}^{R^*_+}\left(|u'|^2 + |u|^2\right) + B\left(\omega-\upomega_+m\right)\int_{-\infty}^{\infty}\chi_1\text{Im}(H\overline u).
\end{align}
Thus, it is clear that if require that $\omega_{\rm low}$ and $\tilde a_0$ are sufficiently small, depending on $\omega_{\rm high}$ and $\epsilon_{\rm width}$, we may multiply $\chi_1$ by a bounded constant, add in $\chi_1\text{Q}^K$ to our current and obtain
\begin{align}\label{axisym40}
b\int_{R^*_-}^{R^*_+}&\left(|u'|^2 + |u|^2\right) \leq - \int_{-\infty}^\infty \left ((y+\hat y) \,{\text{Re}} (u'\overline{H}) + h\, {\text{Re}} (u\overline{H})+ Ey(\infty)\omega\chi_2\text{Im}(H\overline{u}) + (\omega-\upomega_+m)\chi_1\text{Im}(H\overline u)\right).
\end{align}

Finally, we may rescale all of the multipliers so that $y(\infty) = 1$. We obtain
\begin{align}\label{axisym5}
b\int_{R^*_-}^{R^*_+} &\left (|u'|^2+|u|^2\right) \leq - \int_{-\infty}^\infty \left ((y+\hat y) \,{\text{Re}} (u'\overline{H}) + h\, {\text{Re}} (u\overline{H})+ E\chi_2\text{Im}(H\overline{u})+ (\omega-\upomega_+m)\chi_1\text{Im}(H\overline u)\right).
\end{align}

Of course, $R^*_{\infty}$ is simply required to be larger than $e^{p^{-1}}R^*_2$.
\end{proof}
\begin{remark}The above proof does not use the assumption $m \neq 0$. We only include $m \neq 0$ in the definition of the frequency range so that the set of frequencies covered by Proposition~\ref{odeEst5} is disjoint from the set of frequencies covered by Proposition~\ref{odeEst5b}.
\end{remark}

\subsubsection{The subrange $\left|\omega\right| \leq \omega_{\rm low}$ and $m =0$}

Given the final choice of parameters,
$\omega_{\rm high}$, $\epsilon_{\rm width}$ and $\omega_{\rm low}$, then
for  $(\omega, m, \Lambda)\in \mathcal{G}_{\mbox{$\flat$}}(\omega_{\rm hig}, \epsilon_{\rm width})$
such that $|\omega|\le \omega_{\rm low}$ and $m=0$, we will set the functions $f$, $\tilde y$ and $\chi_1$ together
with the parameter $r_{\rm trap}$ to be $0$.
The remaining functions $y$, $\hat y$, $h$ and the desired coercivity properties
are given by the following

\begin{proposition}\label{odeEst5b}
Let $a_0<M$. Then, for all $\omega_{\rm high}>0$, $\epsilon_{\rm width}>0$,
for all $\omega_{\rm low} > 0$ sufficiently small depending on $\omega_{\rm high}$ and $\epsilon_{\rm width}$, for all $R_{\infty}$ sufficiently large, for all $E \geq 2$, $0\le a\le a_0$, and for all
$(\omega, m, \Lambda)\in  \mathcal{G}_{\mbox{$\flat$}}(\omega_{\rm high}, \epsilon_{\rm width})$
such that $\left|\omega\right| \leq \omega_{\rm low}$ and $m =0$, there exists
functions $y$, $\hat y$ and $h$, satisfying the uniform bounds
\[
\left|y\right| + \left|\hat y\right| + \left|h\right| \leq B,
\]
\[
y = 1,\ h = 0,\ \hat y= 0\text{ for }r^* \geq R^*_{\infty},
\]
such that,
for all
smooth solutions $u$ to the radial o.d.e.~(\ref{e3iswsntouu}) with
right hand side $H$, satisfying moreover the boundary conditions~(\ref{eq:b-}) and~(\ref{eq:b+}),
we have the estimate
\begin{align}
b\int_{R^*_-}^{R^*_+}\left(\left|u'\right|^2 + \left|u\right|^2\right) &\leq-\int_{-\infty}^\infty \left (2(y+\hat y) \,{\text{Re}} (u'\overline{H})+ h\,{\text{Re}}(u\overline H)+E\omega {\text{Im}} (H\overline u)\right).
\end{align}
\end{proposition}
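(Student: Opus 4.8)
The plan is to follow, essentially verbatim, the construction in the proof of Proposition~\ref{odeEst5}, which simplifies substantially once $m=0$. The key point is that $\upomega_+ m = 0$, so the horizon boundary condition~(\ref{eq:b-}) reduces to $u'+i\omega u=0$, of exactly the same form as the condition~(\ref{eq:b+}) at infinity, and the current $\text{Q}^K$ coincides with $\text{Q}^T$; hence there is no superradiance and both boundary fluxes can be handled by a single $\text{Q}^T$ current, dispensing with the cutoffs $\chi_1$, $\chi_2$ and with the auxiliary parameter $\tilde a_0$. Moreover, for $m=0$ one has $V_0=\Delta\Lambda(r^2+a^2)^{-2}\ge 0$ together with $V_1\ge 0$, so $V\ge 0$ everywhere and $V|_{r=r_+}=0$; in particular the shifted potential $\tilde V=V-V|_{r=r_+}$ used in~(\ref{ccchoices7})--(\ref{ccchoices8}) is here simply $V$.

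First I would record the three qualitative facts about $V$ that drive the argument of Proposition~\ref{odeEst5}, checking that each holds uniformly in the remaining admissible parameters when $m=0$: (i) on any fixed compact $r^*$-interval one has $V\ge V_1\ge b>0$, so $V-\omega^2>0$ there provided $\omega_{\rm low}$ is small --- note this needs no lower bound on $\Lambda$; (ii) as in~(\ref{decrease}), the admissibility bounds (trivially satisfied here) give $V'<0$ for $r^*\ge R_{\rm dec}^*$; and (iii) since the factor $\Lambda$ separates off from $\frac{dV_0}{dr}$, one has $\frac{dV_0}{dr}\ge 0$ on a fixed interval $[r_+,r_++\delta]$ independently of $\Lambda$, which together with $\frac{dV_1}{dr}(r_+)>0$ yields $V'>0$ for $r^*$ below a fixed constant $R_1^*$ (and, since $V(r_+)=0$, the same monotonicity properties of $r^*\mapsto r^*V$ near the endpoints as those exploited in Proposition~\ref{odeEst5}). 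With these in hand the current is built exactly as in the three stages of Proposition~\ref{odeEst5}: a $\text {\fontencoding{LGR}\selectfont \Coppa}^h$ current with $h\ge 0$, $h=1$ on $[R_1^*,R_2^*]$ and tapering to $0$ outside per~(\ref{ccchoices})--(\ref{ccchoices2}), coercive on $[R_1^*,R_2^*]$ for $\omega_{\rm low}$ small; a $\text {\fontencoding{LGR}\selectfont \koppa}^y$ current with the explicit $y$ of~(\ref{ccchoices3})--(\ref{cccoices5}), restoring positivity of the integrand for $r^*\ge R_2^*$ using $V'<0$; and a $\text {\fontencoding{LGR}\selectfont \koppa}^{\hat y}$ current with the mirror-image $\hat y$ of~(\ref{ccchoices6})--(\ref{ccchoices8}), now with $\tilde V=V$, restoring positivity for $r^*\le R_1^*$ using $V'>0$ near the horizon. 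After fixing $\epsilon$, then $p=p(\epsilon)$, then taking $\omega_{\rm low}$ small, the resulting bulk term bounds $b\int_{R_-^*}^{R_+^*}(|u'|^2+|u|^2)$ from below.

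Finally, the boundary terms generated are $y(\infty)(|u'|^2+\omega^2|u|^2)_{r=\infty}$ and $|\hat y(-\infty)|(|u'|^2+\omega^2|u|^2)_{r=r_+}$ (the latter using $V(r_+)=0$), both of the form $(\text{const})\,\omega^2|u|^2$ after invoking the boundary conditions; adding $E\,\text{Q}^T$ with $E\ge 2$, and using that $\int_{-\infty}^{\infty}(\text{Q}^T)'=\omega^2(|u(\infty)|^2+|u(-\infty)|^2)$ under~(\ref{eq:b-})--(\ref{eq:b+}) with $m=0$ while $(\text{Q}^T)'$ contributes only the term $\omega\,\text{Im}(H\overline u)$, absorbs both boundary terms with no residual bulk error. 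Rescaling the multipliers so that $y=1$ near infinity, as in the statement of the Proposition, then produces the asserted estimate. I expect the only real work to be the bookkeeping of the order and uniformity of the smallness choices; there is no genuine analytic obstacle here, precisely because taking $m=0$ removes both the horizon superradiance difficulty and the $a$-dependence of the sign of $V'$ near $r_+$ that forced the restriction $a<\tilde a_0$ in Proposition~\ref{odeEst5}.
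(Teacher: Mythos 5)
Your proposal is correct and takes essentially the same approach as the paper, whose proof of this proposition consists precisely of the three observations about $V$ you list, followed by the remark ``repeat the current construction of Proposition~\ref{odeEst5} \emph{mutatis mutandis}, with $\chi_1 = 0$.'' You fill in the checks the paper leaves implicit: that for $m=0$ one has $V_0 = \Delta\Lambda(r^2+a^2)^{-2}\ge 0$ with $V(r_+)=0$ (so $\tilde V = V$), that the lower bound $V\ge V_1\ge b$ on compacts does not require $\Lambda$ bounded away from zero, and that $V'>0$ near $r_+$ and $V'<0$ for large $r^*$ hold uniformly without the smallness assumption on $a$ that Proposition~\ref{odeEst5} needed. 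Your version is marginally cleaner than the paper's at the boundary-term stage: since $\upomega_+ m = 0$ makes $\text{Q}^K\equiv\text{Q}^T$ and renders both boundary conditions (\ref{eq:b-}), (\ref{eq:b+}) of the same form, a single global $E\,\text{Q}^T$ (rather than two cut-off copies $\chi_1\text{Q}^K$, $\chi_2\text{Q}^T$) disposes of both fluxes at once with no residual $\chi'$ errors; this is also what the final displayed estimate in the proposition (with $E\omega\,{\text{Im}}(H\overline u)$ and no cutoffs) actually requires, so your reading is consistent with the statement being proved.
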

\begin{proof}Observe that the properties of the potential $V$ used in the proof of Proposition~\ref{odeEst5} also hold here:
\begin{enumerate}
    \item For every $-\infty < \alpha < \beta < \infty$, if we require $\omega_{\rm low}$ sufficiently small, both depending on $\alpha$ and $\beta$, then $r \in [\alpha,\beta] \Rightarrow V - \omega^2 > 0$.
    \item For sufficiently large $r^*$, independent of the frequency parameters, we have $V' < 0$.
    \item For sufficiently negative $r^*$, independent of the frequency parameters, we have $V' > 0$.
\end{enumerate}
Using these observations, one may repeat, \emph{mutatis mutandis}, the current construction from the proof of Proposition~\ref{odeEst5}. In fact, the situation is strictly better here; since this proposition concerns a non-superradiant regime, we may set $\chi_1 = 0$. One obtains
\begin{align}\label{axisym6}
b\int_{R^*_-}^{R^*_+} &\left (|u'|^2+|u|^2\right) \leq - \int_{-\infty}^\infty \left (2(y+\hat y) \,{\text{Re}} (u'\overline{H}) + h\, {\text{Re}} (u\overline{H})+ E\text{Im}(H\overline{u})\right).
\end{align}

\end{proof}

\subsubsection{The subrange $\left|\omega\right| \leq \omega_{\rm low}$, $m \neq 0$ and $a \geq \tilde a_0$ (the near stationary subcase)}
\label{nearstat}
Although these frequencies are near-stationary,
we will here be able to effectively exploit the non-vanishing of $a$ and the bound $\left|m\right|\ge 1$.

For $(\omega, m, \Lambda)\in \mathcal{G}_{\mbox{$\flat$}}$, $m\ne0$, and $a\ge \tilde a_0$,
we set the functions $f$ and $\hat y$ together with the parameter $r_{\rm trap}$ to $0$.
The remaining functions $\tilde{y}$, $y$, $h$, $\chi_1$ and $\chi_2$ and
the desired coercivity properties are given by the following:

\begin{proposition}\label{odeEst6}Let $a_0<M$. Then, for all $\omega_{\rm high}>0$, $\epsilon_{\rm width}>0$, $E\geq 2$, for all $\omega_{\rm low} > 0$ sufficiently small depending on $\tilde a_0$ and $E$, for all $R_{\infty}$ sufficiently large depending on $\tilde a_0$, $0\le a\le a_0$, and for all
$(\omega, m, \Lambda)\in  \mathcal{G}_{\mbox{$\flat$}}(\omega_{\rm high}, \epsilon_{\rm width})$
such that $\left|\omega\right| \leq \omega_{\rm low}$, $m\neq 0$ and $a \geq \tilde a_0$, there exist
functions $\tilde y$, $y$, $h$, $\chi_1$ and $\chi_2$, satisfying the uniform bounds
\[\left|\tilde y\right| + \left|y\right| + \left|h\right| + \left|\chi_1\right| + \left|\chi_2\right| \leq B\left(\tilde a_0\right),\]
\[\left|\tilde y\right| \leq B\exp\left(-br\right),\ y = 1,\ h = 0\text{ for }r^* \geq R^*_{\infty},\]
such that,
for all
smooth solutions $u$ to the radial o.d.e.~(\ref{e3iswsntouu}) with
right hand side $H$, satisfying moreover the boundary conditions~(\ref{eq:b-}) and~(\ref{eq:b+}),
we have the estimate
\begin{align}\label{someEstimate}
b\left(\tilde a_0\right)&\int_{R_-^*}^{R_+^*} \left (|u'|^2+|u|^2 \right ) \\ &\nonumber\le
\int_{-\infty}^\infty \left (-2\tilde y \,{\text{Re}} (u'\overline{H}) - E\chi_2\omega\,{\text{Im}}(H \overline u) - 2\chi_1\left(\omega-\upomega_+m\right)\,{\text{Im}}( H \overline u)- h\,{\text{Re}}(u\overline H)- 2y\,{\text{Re}}(u'\overline H)\right).
\end{align}
\end{proposition}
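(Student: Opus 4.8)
The plan is to follow the template of the proof of Proposition~\ref{odeEst5}, exploiting in place of the smallness of $a$ and $\omega$ the fact that in this subregime $(\omega-\upomega_+m)^2$ is \emph{bounded below}: since $\upomega_+m=\frac{am}{2Mr_+}$ satisfies $|\upomega_+m|\ge b(\tilde a_0)$ (because $|m|\ge 1$ and $a\ge\tilde a_0$), taking $\omega_{\rm low}$ small depending on $\tilde a_0$ gives $|\omega-\upomega_+m|\ge b(\tilde a_0)$. Two facts drive the construction. First, by the computation in Lemma~\ref{lem:2}, $\omega^2-V(r_+)=(2Mr_+\omega-am)^2/(4M^2r_+^2)=(\omega-\upomega_+m)^2\ge b(\tilde a_0)>0$; this will make the horizon boundary term of the $\text {\fontencoding{LGR}\selectfont \koppa}^y$ current favourably signed. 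Second, since $m\omega\le|m|\omega_{\rm low}\le\alpha\Lambda$ (using $\Lambda\ge m^2\ge|m|$) for $\omega_{\rm low}$ small, Lemma~\ref{lem:3} applies, so $V'(r_+)\ge V_0'(r_+)\ge b\Lambda$ and, with $|V''|\le B\Lambda$, $V'>0$ on a fixed interval $[r_+,r_1]$; as behind~(\ref{decrease}) there is a fixed $R^*_{\rm dec}$ with $V'<0$ for $r^*\ge R^*_{\rm dec}$; and, using $\Lambda\ge m^2$ and the explicit form of $V_0$, one checks that on a \emph{fixed} compact interval $[r_\flat,R]\subset(r_+,\infty)$ (chosen so that $\Delta-a^2\ge 2$ there) one has $\Delta\Lambda+4Mram\omega-a^2m^2\ge\frac32\Lambda$ for $\omega_{\rm low}$ a small absolute constant, whence $V\ge V_0\ge b\Lambda\ge b$ and $V-\omega^2\ge\frac12 V\ge b$ on $[r_\flat,R]$.

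I would then assemble a suitable combination of $\text {\fontencoding{LGR}\selectfont \Coppa}^h$, $\text {\fontencoding{LGR}\selectfont \koppa}^y$, $\text {\fontencoding{LGR}\selectfont \koppa}^{\tilde y}$, $\text{Q}^T$ and $\text{Q}^K$ (the last two cut off by $\chi_2$, $\chi_1$), in stages as in the proof of Proposition~\ref{odeEst5}, with $f=\hat y=0$ and $r_{\rm trap}=0$. The bulk coercivity is built from: a $\text {\fontencoding{LGR}\selectfont \Coppa}^h$ current with $h\ge 0$ supported on a neighbourhood of $\{V\ge 2\omega_{\rm low}^2\}\supseteq[r_\flat,R]$, supplying coercive $h|u'|^2+h(V-\omega^2)|u|^2$ there; a $\text {\fontencoding{LGR}\selectfont \koppa}^y$ current with $y$ nondecreasing and $y=1$ near infinity, supplying $y'|u'|^2$ on $[R^*_-,R^*_+]$ and with nonnegative $|u|^2$-contribution near infinity (since $V'<0$) and at the horizon (since $\omega^2-V(r_+)>0$); and an exponential-type multiplier $\tilde y$, $|\tilde y|\le B\exp(-br)$, near the horizon where $\omega^2-V\ge b(\tilde a_0)$ dominates $V'$ — here $V'/(\omega^2-V)$ is integrable because $V'\to 0$ exponentially as $r^*\to-\infty$, so $\tilde y$ stays uniformly bounded. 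The $-\frac12 h''|u|^2$ terms at the feet of $h$, and the junctions between regions governed by different currents, are handled as in the proofs of Propositions~\ref{odeEst1},~\ref{odeEst5}: transitions of $h$ that must cross into $\{V<\omega^2\}$ are placed in the far-left region, where $|V-\omega^2|=\omega^2-V\le(\omega-\upomega_+m)^2\le B(\tilde a_0)$ is bounded and where, in $r^*$-coordinates, there is ample room, so a concave transition of fixed $r^*$-length $L\sim\tilde a_0^{-1}$ gives $h(V-\omega^2)-\frac12 h''\ge 0$; similarly for the far-right transitions, where $|V-\omega^2|\le\omega_{\rm low}^2$ is tiny. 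For the boundary terms, $E\chi_2\text{Q}^T$ with $E\ge 2\sup y$ renders the flux at infinity favourable, the horizon flux of $\text {\fontencoding{LGR}\selectfont \koppa}^y$ is already favourable by the identity above, and the (sign-changing) horizon flux of $\text {\fontencoding{LGR}\selectfont \koppa}^{\tilde y}$ is absorbed by a $\chi_1\text{Q}^K$ current whose $\chi_1'$-error is supported in a fixed compact set and hence absorbed by the bulk. Restricting the bulk integral to $[R^*_-,R^*_+]$ and upgrading the $|u|^2$-control from $[r_\flat,R]$ to all of $[R^*_-,R^*_+]$ by a Poincar\'e/fundamental-theorem-of-calculus estimate (using the $|u'|^2$-control already in hand) yields~(\ref{someEstimate}); one then reads off the claimed uniform bounds and endpoint behaviour of $\tilde y,y,h,\chi_1,\chi_2$, with the requirements on $\omega_{\rm low}$ depending only on $\tilde a_0$, $E$ (and $a_0$, $M$) and $R_\infty$ taken large depending on $\tilde a_0$.

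The step I expect to be the main obstacle is the patching of the several currents across the junctions of their regions of validity — in particular near the turning points $\{V=\omega^2\}$, where $V$ crosses its energy level and neither a $\text {\fontencoding{LGR}\selectfont \koppa}$ nor a $\text {\fontencoding{LGR}\selectfont \Coppa}$ current alone has a nonnegative bulk — while simultaneously keeping \emph{all} multiplier functions bounded by a constant depending only on $\tilde a_0$, $M$, $a_0$, uniformly over the unbounded range of $\Lambda$ allowed in $\mathcal{G}_{\flat}$ (so that, e.g., the exponential multiplier $\tilde y$ cannot blow up and the location $\rho_1$ of the left turning point, which may be arbitrarily close to $r_+$, causes no harm). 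As in Proposition~\ref{odeEst5} this is overcome by exploiting the definite sign of $V'$ near both ends, the largeness of $V-\omega^2$ on the fixed interval $[r_\flat,R]$, the boundedness of $|V-\omega^2|$ near the horizon, and a careful placement of the transitions of $h$, $y$, $\tilde y$, $\chi_1$, $\chi_2$ in $r^*$-coordinates.
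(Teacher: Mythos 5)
Your outline correctly identifies the key structural ingredients ($\text{\fontencoding{LGR}\selectfont \koppa}^{\tilde y}$ near the horizon exploiting $\omega^2-V\ge b(\tilde a_0)$, a $\text{\fontencoding{LGR}\selectfont \Coppa}^h$ current on a middle region, cut-off $\text{Q}^T$ and $\text{Q}^K$ for the boundary terms) and the key facts about $V$ — but it omits the one step that makes this regime genuinely hard and that the paper spends most of its effort on: the \emph{large-parameter mechanism} needed to absorb the horizon boundary flux generated by $\text{\fontencoding{LGR}\selectfont \koppa}^{\tilde y}$.

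To see the difficulty concretely: with $\tilde y(-\infty)\ne 0$ (which you need for coercivity at the horizon), the identity for $\text{\fontencoding{LGR}\selectfont \koppa}^{\tilde y}$ produces a term $\bigl(|u'|^2+(\omega-\upomega_+m)^2|u|^2\bigr)_{r=r_+}$ on the \emph{wrong} side of the estimate. Replacing it via $2\int(\chi_1\text{Q}^K)'$ produces a bulk error $(\omega-\upomega_+m)\int\chi_1'\,\text{Im}(u'\bar u)$. In contrast to Proposition~\ref{odeEst5}, where $|\omega-\upomega_+m|\lesssim \omega_{\rm low}+\tilde a_0$ is small (see~(\ref{getThatBoundaryTerm})), here $|\omega-\upomega_+m|\gtrsim b(\tilde a_0)$ is \emph{bounded below}, so this error is not small. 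And your ``largeness of $V-\omega^2$ on the fixed interval $[r_\flat,R]$'' does not help, because for $\Lambda$ as small as $2$ (attained when $|m|=1$) and $r$ in a fixed compact set, $V-\omega^2$ is only bounded below by a fixed constant, not large. So the bulk coercivity constant coming from a bounded $h$ on a fixed compact interval is not large either, and there is no reason the error should be dominated by the bulk: the constants genuinely threaten to conspire against you. This is exactly what the paper flags (``As in the frequency regime $\mathcal{G}^{\mbox{$\sharp$}}$, we need to find a large parameter to handle the boundary term\dots'').

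The paper manufactures the missing large parameter not from the potential but from the \emph{geometry of the multiplier}: $h$ is grown from $0$ via $h''=c\tilde y'\omega_0^2$ on a long interval $[R_1,R_2]$, then allowed to climb linearly (with $h''=0$) on a much longer interval $[R_2,R_3]$, yielding $\int_{R_2}^{R_3}h$ as large as desired (at the price of choosing $R_3$ enormous and then $\omega_{\rm low}$ small enough that $V-\omega^2\ge br^{-2}$ holds out to $e^{p^{-1}}R_3$). The cutoff $\chi_1$ is then supported on $[r_+,R_3]$ with $|\chi_1'|\le B(R_3-R_2)^{-1}$, and the crucial estimate~(\ref{smallOmegaHorizonError}) uses $h^{-1}\le Bh(V-\omega^2)$ on $[R_2,R_3]$ so that the $\chi_1'$-error carries the small factor $(R_3-R_2)^{-1}$. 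None of this is present in your construction, which therefore cannot close: the assertion that the $\chi_1'$-error is ``supported in a fixed compact set and hence absorbed by the bulk'' is precisely the unproved step. Note also the resulting parameter hierarchy (e.g.~$\omega_{\rm low}^2\ll cE^{-1}\omega_0^2\exp(-p^{-1})\exp(-BCR_2)R_3^{-4}$) shows this is not a cosmetic issue that ``careful placement of transitions'' papers over; the choice of $\omega_{\rm low}$ \emph{depends} on the size of the large parameter, which explains why the proposition's hypotheses require $\omega_{\rm low}$ small depending on $\tilde a_0$ \emph{and} $E$.
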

\begin{proof}
Our current will be of the form
\[\text{Q} = \text {\fontencoding{LGR}\selectfont \koppa}^{\tilde y} + \text {\fontencoding{LGR}\selectfont \Coppa}^{h} + \text {\fontencoding{LGR}\selectfont \koppa}^{y} - \chi_1\text{Q}^K - E\chi_2\text{Q}^T,\]
for suitable functions $\tilde{y}$, $h$, $y$, $\chi_1$ and $\chi_2$.

As we did for the frequency range $\mathcal{G}^{\mbox{$\sharp$}}$, for the purposes of exposition we shall construct the current in a step by step fashion.
The first important observation is that the assumptions $m \neq 0$ and $a \geq \tilde a_0$ imply that $\omega_0^2 := \left(\omega - \upomega_+m\right)^2 \geq b\left(\tilde a_0\right)$ as long as $\omega_{\rm low} \ll \tilde a_0$. The second important observation is that $V = \frac{\Lambda}{r^2} + O(r^{-3})\text{ as }r\to\infty$ and, since $m \neq 0$, $\Lambda \geq 2$. This implies that for any $1 \ll \alpha \ll \beta < \infty$, then requiring that
$\omega_{\rm low}$ is small enough, depending on $\alpha$ and $\beta$, we have
\begin{equation}\label{potPos}
r \in [\alpha,\beta]\Rightarrow V - \omega^2 \geq \frac{b}{r^2}
\end{equation}
in this frequency range.
We shall exploit this positivity via the use of a $\text {\fontencoding{LGR}\selectfont \Coppa}^h$ current.

Let us now introduce the set of relevant constants. Let $p > 0$, $R_1 < R_2 < R_3 < e^{p^{-1}}R_3$, $C > 0$ and $c > 0$ be constants such that
\begin{enumerate}
    \item $C = C(\tilde a_0)$ is sufficiently large.
    \item $c$ is sufficiently small.
    \item $R_1$ is sufficiently large.
    \item $c\omega_0^{-2}R_1\exp\left(BCR_1\right) \ll R_2$.
    \item $R_2 \ll R_3$.
    \item $p \ll R_3^{-3}$.
    \item $\omega_{\rm low}^2 \ll cE^{-1}\omega_0^{2}\exp\left(-p^{-1}\right)\exp\left(-BCR_2\right)R_3^{-4}$ and, requiring $\omega_{\rm low}$ sufficiently small,  $r \in [R_1,e^{p^{-1}}R_3]\Rightarrow V - \omega^2 \geq br^{-2}$.
\end{enumerate}

We write
\[\omega^2 - V =: \omega_0^2 - \tilde{V}\]
where $\tilde{V}(r_+) = 0$. Let $\upsilon(r)$ be a positive function such that
\begin{equation}\label{evenmorechoices}
\upsilon = \Delta\text{ near }r_+,\qquad \upsilon = 1\text{ when }r^* \geq R^*_{\infty},\qquad \left|\upsilon\right| \leq B.
\end{equation}
Then we define
\[\tilde y(r^*) := -\exp\left(-C\int_{-\infty}^{r^*}\upsilon dr^*\right),\]
and consider the current $\text {\fontencoding{LGR}\selectfont \koppa}^{\tilde y}$.  Note that $\tilde y\left(-\infty\right) = -1$ and $\tilde y\left(\infty\right) = 0$. We obtain
\begin{align}\label{smallOmegaFirst}
\int_{-\infty}^\infty &\left (\tilde y' |u'|^2+\left (\tilde y'\omega_0^2 - \left(\tilde y\tilde{V}\right)'\right) |u|^2 \right )=
\left (|u'|^2+(\omega-\upomega_+m)^2 |u|^2\right)_{r=r_+} - \int_{-\infty}^\infty \left (2\tilde y \,{\text{Re}} (u'\overline{H})\right).
\end{align}
\begin{remark}
Note that unlike every other microlocal current we have considered, this $\tilde y$ cannot be taken independent of the frequency parameters when $r^* \geq R^*_{\infty}$ (since this is the only regime which employs a $\text {\fontencoding{LGR}\selectfont \koppa}^{\tilde y}$ current where the seed function $\tilde y$ is negative for large $r$). Nevertheless, the exponential decay of $\tilde y$ as $r \to \infty$ will allow us to handle this when we re-sum (see Section~\ref{summation}).
\end{remark}

We now turn to the $\left(\tilde y\tilde{V}\right)'|u|^2$ term on the left hand side of~(\ref{smallOmegaFirst}) which threatens to destroy our estimate:
\begin{align*}
\left|\int_{-\infty}^{\infty}\left(\tilde y\tilde{V}\right)'\left|u\right|^2\right| = 2\left|\int_{-\infty}^{\infty}2\tilde y\tilde{V}\text{Re}\left(u'\overline u\right)\right| \leq \epsilon \int_{-\infty}^{\infty}\tilde y'\left|u'\right|^2 + B\epsilon^{-1}\int_{-\infty}^{\infty}\tilde y'\omega_0^2\frac{\tilde y^2\tilde{V}^2}{\left(\tilde y'\right)^2\omega_0^2}\left|u\right|^2 =
\end{align*}
\begin{align*}
\epsilon \int_{-\infty}^{\infty}\tilde y'\left|u'\right|^2 + B\epsilon^{-1}\int_{-\infty}^{\infty}\tilde y'\omega_0^2\frac{\tilde{V}^2}{C^2\upsilon^2\omega_0^2}\left|u\right|^2 \leq \epsilon \int_{-\infty}^{\infty}\tilde y'\left|u'\right|^2 + B\epsilon^{-1}C^{-2}\omega_0^{-2}\int_{-\infty}^{\infty}\tilde y'\omega_0^2\left|u\right|^2.
\end{align*}
Hence, taking $\epsilon$ sufficiently small and then $C = C(\tilde a_0)$ sufficiently large gives us the estimate
\begin{align}\label{smallOmegaSecond}
b\int_{-\infty}^\infty &\left(\tilde y' |u'|^2+\tilde y'\omega_0^2 |u|^2 \right)\leq
\left (|u'|^2+(\omega-\upomega_+m)^2 |u|^2\right)_{r=r_+} - \int_{-\infty}^\infty  2\tilde y \,{\text{Re}} (u'\overline{H}).
\end{align}

As in the frequency regime $\mathcal{G}^{\mbox{$\sharp$}}$, we need to find a large parameter in order to handle the boundary term $\left (|u'|^2+(\omega-\upomega_+m)^2 |u|^2\right)_{r=r_+}$. We employ a $\text {\fontencoding{LGR}\selectfont \Coppa}^h$ current where
\begin{equation}\label{evenmorechoices2}
h = 0\text{ for }r \in [r_+,R_1],\qquad h'' = c\tilde y'\omega_0^2\text{ for }r \in [R_1,R_2],\qquad h'' = 0\text{ for }r \in [R_2,R_3],
\end{equation}
\begin{equation}\label{evenmorechoices3}
\left|h'\right| \leq \frac{BR_3p}{r}\text{ for }r \in [R_3,e^{p^{-1}}R_3],\qquad \left|h''\right| \leq \frac{BR_3p}{r^2}\text{ for }r\in[R_3,e^{p^{-1}}R_3],
\end{equation}
\begin{equation}\label{evenmorechoices4}
 h = 0\text{ for }r \in [e^{p^{-1}}R_3,\infty).
\end{equation}
Note that one may easily construct an $h$ satisfying~(\ref{evenmorechoices2}),~(\ref{evenmorechoices3}) and~(\ref{evenmorechoices4}).

In order to help orient the reader for the estimates below, let us briefly describe the rationale behind the construction of $h$. First of all, the $\text {\fontencoding{LGR}\selectfont \Coppa}^h$ current gives a good estimate when $h$ is positive, $h\left(V-\omega^2\right)$ is positive and if the error terms from the $-\frac{1}{2}h''$ term can be controlled. Since $V-\omega^2$ is only positive for large enough $r$, we set $h$ to be $0$ for $r \leq R_1$. In order for $h$ to become non-zero, it is necessary for $-\frac{1}{2}h'' < 0$. Thus, the definition of $h$ on $[R_1,R_2]$ is motivated by the desire to increase $h$ as fast as possible while still being able to absorb the error term $-\frac{1}{2}h''$ with the estimate~(\ref{smallOmegaSecond}). This successfully produces a positive $h$, but we still need to find a large parameter. In the region $[R_2,R_3]$ we achieve this by setting $h'' = 0$, and then letting $h$ grow linearly. Since we have taken $\omega^2$ small enough so that $V-\omega^2$ is positive on $[R_2,R_3]$, by taking $R_3$ very large we can arrange for $h$ to be as large as we wish. The crucial estimates for absorbtion of the boundary term $\left|u(-\infty)\right|^2$ are
\[r \in [R_2,R_3] \Rightarrow bc\omega_0^2\exp\left(-BCR_1\right)\left(r - R_1\right) \leq h \leq BR_3,\]
\[r \in [R_2,R_3] \Rightarrow h^{-1} \leq B\left(V - \omega^2\right)h,\]
see the estimates~(\ref{smallOmegaHorizon}),~(\ref{smallOmegaHorizonError}) and~(\ref{smallOmegaSixth}). Now that we have succeeded in finding a large parameter to absorb the boundary term, we need to take $h$ back down to $0$. Keeping in mind that $pR_3 \ll 1$, the choice of $h$ on $[R_3,e^{p^{-1}}R_3]$ is motivated by the desire to take $h$ down to $0$ in a such a way that the error term $-\frac{1}{2}h''$ is as small as possible. See estimates~(\ref{smallOmegaIntermediate}),~(\ref{yProperty}) and~(\ref{smallOmegaFifth}) for the details of how these error terms are dealt with.

We now turn to the specifics. Applying the current $\text {\fontencoding{LGR}\selectfont \Coppa}^h$ gives
\begin{align}\label{smallOmegaThird}
\int_{-\infty}^\infty &\left (\left(b\tilde y'+h\right) |u'|^2+\left(b\tilde y'\omega_0^2 + h\left(V-\omega^2\right) - \frac{1}{2}h''\right)|u|^2 \right ) \\ &\nonumber \leq
\left (|u'|^2+(\omega-\upomega_+m)^2 |u|^2\right)_{r=r_+} - \int_{-\infty}^\infty \left (2\tilde y \,{\text{Re}} (u'\overline{H}) + h\,{\text{Re}}(u\overline H)\right).
\end{align}
Recall that we explicitly required that $\omega_{\rm low}$ be sufficiently small so that in particular $V-\omega^2$ is positive on $[R_1,e^{p^{-1}}R_3]$. Given this, the only negative terms on the left hand side of this estimate come from the $-\frac{1}{2}h''$ term on the intervals $[R_1,R_2]$ and $[R_3,e^{p^{-1}}R_3]$. By construction of $h$ and the assumption that $c\ll 1$, the negative terms on $[R_1,R_2]$ can be controlled by the $by'\omega_0^2\left|u\right|^2$ term. Therefore, we have
\begin{align}\label{smallOmegaFourth}
b\int_{-\infty}^\infty &\left (\left(\tilde y'+h\right) |u'|^2+\left(\tilde y'\omega_0^2 + h\left(V-\omega^2\right)\right)|u|^2 \right ) \\ &\nonumber\le
\left (|u'|^2+(\omega-\upomega_+m)^2 |u|^2\right)_{r=r_+} + BpR_3\int_{R_3}^{e^{p^{-1}}R_3}\left|u\right|^2r^{-2} -\int_{-\infty}^\infty \left (2\tilde y \,{\text{Re}} (u'\overline{H}) + h\,{\text{Re}}(u\overline H)\right).
\end{align}

The left hand side now is sufficiently strong to absorb the boundary term on the right hand side in a similar fashion as in the $\mathcal{G}^{\mbox{$\sharp$}}$ regime, i.e.~by
an application of the current $\chi Q^K$ for a suitable cutoff $\chi$. However, we still need to address the term $BpR_3\int_{R_3}^{e^{p^{-1}}R_3}\left|u\right|^2r^{-2}$. For this we use a $\text {\fontencoding{LGR}\selectfont \koppa}^y$ current with a function $y$ which is determined by
\begin{equation}\label{evenmorechoices5}
y = 0\text{ for }r \in [r_+,R_2],\qquad y' = \frac{h}{2}\text{ for }r \in [R_2,R_3],
\end{equation}
\begin{equation}\label{evenmorechoices6}
y = \frac{r-R_3}{R_3^2} + \frac{1}{2}\int_{R_2}^{R_3}h\text{ for }r \in [R_3,e^{p^{-1}}R_3],
\end{equation}
\begin{equation}\label{evenmorechoices7}
y = \frac{e^{p^{-1}}R_3 - R_3}{R_3^2} + \frac{1}{2}\int_{R_2}^{R_3}h\text{ for }r \in [e^{p^{-1}}R_3,\infty).
\end{equation}

We obtain
\begin{align}\label{smallOmegaIntermediate}
\int_{R_2}^{\infty}&\left(y'\left|u'\right|^2 + \left(y'\omega^2 - \left(yV\right)'\right)\left|u\right|^2\right)  \\ \nonumber &=
\left(\frac{e^{p^{-1}}R_3 - R_3}{R_3^2} + \frac{1}{2}\int_{R_2}^{R_3}h\right)\left (|u'|^2+\omega^2 |u|^2\right)_{r=\infty} - \int_{R_2}^{\infty}2y\,{\text{Re}}(u'\overline H).
\end{align}
Observe that $r \in [R_3,e^{p^{-1}}R_3]$ implies
\begin{align}\label{yProperty}
-\frac{d}{dr}\left(yV\right) &= -\left(R_3^{-2}\left(\frac{\Lambda}{r^2} + O\left(r^{-3}\right)\right) + \left(\frac{r-R_3}{R_3^2}+\frac{1}{2}\int_{R_2}^{R_3}h\right)\left(\frac{-2\Lambda}{r^3} + O\left(r^{-4}\right)\right)\right)\\
\nonumber & \geq R_3^{-2}\frac{\Lambda}{r^2} + R_3^{-1}O\left(r^{-3}\right) + \left(\frac{1}{2}\int_{R_2}^{R_3}h\right)\frac{2\Lambda}{r^3} \geq bR_3^{-2}r^{-2}.
\end{align}
We have used that
\[\int_{R_2}^{R_3}h \geq bc\omega_0^2\exp\left(-BCR_1\right)\left(R_3^2 - R_2^2 - R_1\right),\]
and that $R_3$ has been chosen to dominate $R_2 + c\omega_0^{-2}\exp\left(BCR_1\right)$. Of course, $\frac{d}{dr^*}\left(yV\right) = \left(1 + O(r^{-1})\right)\frac{d}{dr}\left(yV\right)$. We conclude that
\[-\left(yV\right)' \geq bR_3^{-2}r^{-2}.\]

Next, keeping in mind that $p \ll R_3^{-3}$ and that $R_1$ sufficiently large implies that $V' < 0$ for $r \in [R_1,\infty)$, we may add~(\ref{smallOmegaIntermediate}) to~(\ref{smallOmegaFourth}) to obtain
\begin{align}\label{smallOmegaFifth}
b\int_{-\infty}^\infty &\left (\left(\tilde y'+h\right) |u'|^2+\left(b\tilde y'\omega_0^2 + h\left(V-\omega^2\right)\right)|u|^2 \right )\\ &\nonumber\leq
\left (|u'|^2+(\omega-\upomega_+m)^2 |u|^2\right)_{r=r_+} + \left(\frac{e^{p^{-1}}R_3 - R_3}{R_3^2} + \int_{R_2}^{R_3}h\right)\left (|u'|^2+\omega^2 |u|^2\right)_{r=\infty}
\\ &\nonumber\qquad - \int_{-\infty}^\infty \left (2\tilde y \,{\text{Re}} (u'\overline{H}) + h\,{\text{Re}}(u\overline H)+ 2y\,{\text{Re}}(u'\overline H)\right).
\end{align}

Lastly, it remains to absorb the boundary terms on the right hand side. We start with the horizon term. Let $\chi_1$ be a smooth function such that
\begin{equation}\label{evenmorechoices8}
\chi_1 = 1\text{ for }r \in [r_+,R_2],\qquad \chi_1 = 0\text{ for }r \in [R_3,\infty),
\end{equation}
\begin{equation}\label{evenmorechoices9}
\left|\chi_1'\right| \leq B\left(R_3-R_2\right)^{-1},\qquad \left|\chi_1\right| \leq B.
\end{equation}
We have
\begin{align}\label{smallOmegaHorizon}
\left (|u'|^2+(\omega-\upomega_+m)^2 |u|^2\right)_{r=r_+} &= 2\int_{-\infty}^{\infty}\left(\chi_1\text{Q}^K\right)'  \\ \nonumber &= \left(\omega-\upomega_+m\right)\int_{R_2}^{R_3}\chi_1'\text{Im}\left(u'\overline u\right) + \left(\omega-\upomega_+m\right)\int_{-\infty}^{\infty}\chi_1\text{Im}\left(H\overline u\right).
\end{align}

Now, recall that $r \in [R_2,R_3]$ implies that $h \geq bc\omega_0^2\exp\left(-BCR_1\right)\left(r-R_1\right)$ which in turn implies $h\left(V-\omega^2\right) \geq bc\omega_0^2\exp\left(-BCR_1\right)r^{-1} - BR_1R_2^{-1}r^{-1} - \omega_{\rm low}^2R_3 \geq bc\omega_0^2\exp\left(-BCR_1\right)r^{-1}$. Thus, keeping in mind that $R_1R_2^{-1} \ll h\left(V-\omega^2\right)$ on $[R_2,R_3]$, we conclude that
\[r \in [R_2,R_3]\Rightarrow h^{-1} \leq Bh\left(V-\omega^2\right).\]
Thus,
\begin{align}\label{smallOmegaHorizonError}
&\left|\int_{R_2}^{R_3}\chi_1'\text{Im}\left(u'\overline u\right)\right| \leq B\left(R_3-R_2\right)^{-1}\int_{R_2}^{R_3}\left|u'\right|\left|u\right| \\ \nonumber &\qquad\leq B\left(R_3-R_2\right)^{-1}\int_{R_2}^{R_3}\left(h\left|u'\right|^2 + h^{-1}\left|u\right|^2\right) \leq  B\left(R_3-R_2\right)^{-1}\int_{R_2}^{R_3}\left(h\left|u'\right|^2 + h\left(V-\omega^2\right)\left|u\right|^2\right).
\end{align}

Hence, we may combine~(\ref{smallOmegaHorizon}),~(\ref{smallOmegaFifth}) and~(\ref{smallOmegaHorizonError}) to obtain
\begin{align}\label{smallOmegaSixth}
b\int_{-\infty}^\infty &\left (\left(\tilde y'+h\right) |u'|^2+\left(\tilde y'\omega_0^2 + h\left(V-\omega^2\right)\right)|u|^2 \right) \\ &\nonumber
\le \left(\frac{e^{p^{-1}}R_3 - R_3}{R_3^2} + \int_{R_2}^{R_3}h\right)\left (|u'|^2+\omega^2 |u|^2\right)_{r=\infty}
\\ &\qquad -\nonumber \int_{-\infty}^\infty \left (2\tilde y \,{\text{Re}} (u'\overline{H}) + 2\chi_1\left(\omega-\upomega_+m\right)\,{\text{Im}}(H\overline u)+ h\,{\text{Re}}(u\overline H)+ 2y\,{\text{Re}}(u'\overline H)\right).
\end{align}

Now we shall handle the boundary term at $\infty$. Let $\chi_2$ be a smooth function such that
\begin{equation}\label{evenmorechoices10}
\chi_2 = 1\text{ for }r \in [R_2,\infty),\qquad \chi_2 = 0 \text{ for }r \in [r_+,R_1],\qquad \left|\chi_2\right| \leq B.
\end{equation}
Then, we have
\begin{align}\label{smallOmegaInfinity}
&\frac{E}{2}\left(\frac{e^{p^{-1}}R_3 - R_3}{R_3^2} + \int_{R_2}^{R_3}h\right)\left [|u'|^2+\omega^2 |u|^2\right]_{r=\infty}= E\left(\frac{e^{p^{-1}}R_3 - R_3}{R_3^2} + \int_{R_2}^{R_3}h\right)\int_{-\infty}^{\infty}\left(\chi_2\text{Q}^T\right)' \\ \nonumber &= E\left(\frac{e^{p^{-1}}R_3 - R_3}{R_3^2} + \int_{R_2}^{R_3}h\right)\omega\int_{R_1}^{R_2}\chi_2'\text{Im}\left(u'\overline u\right)+ E\left(\frac{e^{p^{-1}}R_3 - R_3}{R_3^2} + \int_{R_2}^{R_3}h\right)\omega\int_{-\infty}^{\infty}\chi_2\text{Im}\left(H\overline u\right).
\end{align}

We have
\begin{align}\label{smallOmegaHorizonError2}
&E\left(\frac{e^{p^{-1}}R_3 - R_3}{R_3^2} + \int_{R_2}^{R_3}h\right)\left|\omega\int_{R_1}^{R_2}\chi_2'\text{Im}\left(u'\overline u\right)\right|  \\ \nonumber &\qquad\le BE\omega_0^{-2}\exp\left(BCR_2\right)\left(\frac{e^{p^{-1}}R_3 - R_3}{R_3^2} + \int_{R_2}^{R_3}h\right)\omega_{\rm low}\int_{R_1}^{R_2}\left(\tilde y'\left|u'\right|^2 + \omega_0^2\tilde y'\left|u\right|^2\right).
\end{align}

Thus, using that $E \geq 2$, we may combine~(\ref{smallOmegaInfinity}) and~(\ref{smallOmegaSixth}) and obtain
\begin{align}\label{smallOmegaSeventh}
b&\int_{-\infty}^\infty \left (\left(\tilde y'+h\right) |u'|^2+\left(\tilde y'\omega_0^2 + h\left(V-\omega^2\right)\right)|u|^2 \right ) \\ &\nonumber\le
\int_{-\infty}^\infty \left (-2\tilde y \,{\text{Re}} (u'\overline{H}) - Ey(\infty)\chi_2\omega\,{\text{Im}}(H \overline u) - 2\chi_1\left(\omega-\upomega_+m\right)\,{\text{Im}}(H\overline u)- h\,{\text{Re}}(u\overline H)- 2y\,{\text{Re}}(u'\overline H)\right).
\end{align}
At this point, it is clear that we may rescale the functions $\tilde{y}$, $h$  and $y$ by an $\tilde a_0$ dependent constant so that $y$ is identically $1$ for $r^* \geq R^*_{\infty}$.
\end{proof}

\subsubsection{The subrange $|\omega|\ge \omega_{\rm low}$ (the non-stationary subcase)}\label{nonStat}
We turn finally to our last frequency range.
It is only this range which gives rise to the term
$1_{\{\omega_{\rm low} \leq \left|\omega\right| \leq \omega_{\rm high}\}\cap \{\Lambda \leq \epsilon_{\rm width}^{-1}\omega_{\rm high}^2\}}\left|u(-\infty)\right|^2$ on the right hand side of $(\ref{fromPhaseSpace2})$ in the statement of Theorem~\ref{phaseSpaceILED}.

Let $(\omega, m,\Lambda)\in \mathcal{G}_{\mbox{$\flat$}}$ where $|\omega|\ge \omega_{\rm low}$.
When the
final choices of $\omega_{\rm high}$ $\epsilon_{\rm width}$, $\omega_{\rm low}$ have been made,
we will set the functions $f$, $h$, $\tilde y$, $\hat y$ and $\chi_1$ together with the trapping
parameter $r_{\rm trap}$ to be $0$.
The remaining function $y$ and desired coercivity property is given by

\begin{proposition}\label{odeEst7}Let $a_0<M$. Then, for all $\omega_{\rm high}>0$, $\epsilon_{\rm width}>0$, $\omega_{\rm low} > 0$, $E\geq 2$, for all $R_{\infty}$ sufficiently large depending on $\omega_{\rm high}$, $\omega_{\rm low}$, $0\le a\le a_0$, and for all
$(\omega, m, \Lambda)\in  \mathcal{G}_{\mbox{$\flat$}}(\omega_{\rm high}, \epsilon_{\rm width})$
such that $|\omega|\ge \omega_{\rm low}$, there exists a function $y$ satisfying the uniform bounds
\[
\left|y\right| \leq B,
\]
\[
y = 1\text{ for }r^* \geq R^*_{\infty},
\]
such that,
for all
smooth solutions $u$ to the radial o.d.e.~(\ref{e3iswsntouu}) with
right hand side $H$, satisfying moreover the boundary conditions~(\ref{eq:b-}) and~(\ref{eq:b+}), we have
\begin{align*}
b\left(\omega_{\rm low}, \omega_{\rm high}, \epsilon_{\rm width}\right)\int_{R_-^*}^{R_+^*} &\left(|u'|^2+|u|^2 \right)
\\ &\le B\left (\left|\omega\left(\omega-\upomega_+m\right)\right||u|^2\right)_{r=r_+}- \int_{-\infty}^\infty \left (2 y \,{\text{Re}} (u'\overline{H})-E\omega {\text{Im}} (\overline H u) \right).
\end{align*}
\end{proposition}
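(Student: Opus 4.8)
The plan is to prove the estimate for the bounded, non-stationary frequency range $|\omega|\ge \omega_{\rm low}$ (with $|\omega| < \omega_{\rm high}$ and $\Lambda < \epsilon_{\rm width}^{-1}\omega_{\rm high}^2$, so all frequency parameters are comparably bounded) by exhibiting a single $\text{\fontencoding{LGR}\selectfont \koppa}^y$ current with an exponential seed function, together with a large multiple of $\text{Q}^T$ to reorganize boundary terms, and then to absorb the horizon boundary term onto the right hand side --- accepting that in this range we only achieve the weaker boundary estimate~(\ref{asthenes}). The key observation is that in this frequency box both $\omega^2$ and $V$ are bounded above and below by constants depending only on $\omega_{\rm low}$, $\omega_{\rm high}$, $\epsilon_{\rm width}$ (and $a_0,M$), so the potential $V - \omega^2$ is a bounded function that may change sign, and the o.d.e.~(\ref{e3iswsntouu}) is, qualitatively, a bounded-coefficient Schr\"odinger equation on the line.

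The main construction: choose a positive, bounded, monotone-type function $\upsilon(r^*)$ which equals $\Delta$ near $r=r_+$ (so that the resulting current behaves correctly at the horizon under the boundary condition~(\ref{eq:b-})) and equals $1$ for $r^*\ge R^*_\infty$, and set $y \doteq \exp\left(\int_{-\infty}^{r^*}(C\upsilon - D)\,ds\right)$ for suitably chosen large $C$ and a correction $D$ designed so that $y' = (C\upsilon - D)y$ dominates the error terms. Applying~(\ref{eq:Q2for}) and~(\ref{eq:Q3for}) to $\text{Q} = \text{\fontencoding{LGR}\selectfont \koppa}^y - E\text{Q}^T$ yields an identity whose bulk integrand is
\[
y'|u'|^2 + \left(y'(\omega^2 - V) - yV'\right)|u|^2,
\]
together with boundary terms at $r=r_+$ and $r=\infty$. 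The standard trick (as in the treatment of the red-shift-region currents in~\cite{jnotes} and the near-stationary cases above) is that $y'|u'|^2$ can absorb the dangerous $yV'|u|^2$ term via Cauchy--Schwarz once $C$ is large, because $|yV'| \le B\Delta r^{-2}|y| \le B (C\upsilon)^{-1} (C\upsilon) |y| \le B C^{-1} y'$ in the relevant region and the $\omega^2 - V$ term is controlled by $y'\omega^2$ when $\omega^2 \ge \omega_{\rm low}^2 \gtrsim 1$. More precisely one writes $2|yV'\,\text{Re}(u'\overline u)| \le \epsilon y'|u'|^2 + \epsilon^{-1} (yV')^2(y')^{-1}|u|^2$ and uses $(yV')^2(y')^{-1} \le B C^{-1} y' \cdot (\text{bounded})$ --- wait, one must be careful where $\upsilon$ degenerates near $r_+$, but there $V - V(r_+) \to 0$ at a matching rate, exactly the mechanism of Proposition~\ref{odeEst6}. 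After this absorption the bulk term coercively bounds $b\int_{R_-^*}^{R_+^*}(|u'|^2 + |u|^2)$, using $|\omega|\ge \omega_{\rm low}$ to lower-bound the zeroth-order weight.

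For the boundary terms: the flux at $r=\infty$ is handled by the $-E\text{Q}^T$ contribution exactly as in~(\ref{microEnergyEst}), since $\text{Q}^T(\infty) = \omega^2|u(\infty)|^2 \ge 0$ for $E$ large; this also produces $E\omega\,\text{Im}(\overline H u)$ on the right. The flux at $r=r_+$, however, equals (a combination of) $|u'(r_+)|^2 + (\omega - \upomega_+ m)^2|u(r_+)|^2$ from $\text{\fontencoding{LGR}\selectfont \koppa}^y$ plus $\omega(\omega-\upomega_+m)|u(r_+)|^2$ from $\text{Q}^T$; in the superradiant case the $\text{Q}^T$ contribution has the wrong sign and there is no large parameter (unlike $\mathcal{G}^{\sharp}$) nor a small one (unlike $|\omega|\le\omega_{\rm low}$) to play with. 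The resolution is simply not to try: we move the entire horizon flux to the right hand side, where by the boundary condition $|u'(r_+)| = |\omega - \upomega_+ m||u(r_+)|$, giving a term $B|\omega(\omega-\upomega_+m)||u(r_+)|^2$ as stated --- note that since all frequencies are bounded here, $(\omega - \upomega_+m)^2 \le B|\omega(\omega - \upomega_+ m)|$ fails in general, so one keeps it in the mixed form $|\omega(\omega-\upomega_+m)|$, or more honestly writes $B|u(r_+)|^2$ times a bounded factor; the precise form in the statement is $B|\omega(\omega-\upomega_+m)||u|^2_{r=r_+}$ which is exactly what the raw $\text{\fontencoding{LGR}\selectfont \koppa}^y + \text{Q}^T$ boundary term gives after using the boundary condition, modulo a bounded constant. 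I expect the main obstacle to be the bookkeeping near $r=r_+$: one must choose $\upsilon$ and the exponent so that the $\text{\fontencoding{LGR}\selectfont \koppa}^y$ current's own horizon boundary term is both finite (which needs $\upsilon \sim \Delta$ so $y$ does not blow up) and of a sign/size compatible with the claimed right hand side, while simultaneously the degeneration of $\upsilon$ does not spoil the absorption of $yV'|u|^2$ in the bulk --- this is precisely the interplay already carried out in Proposition~\ref{odeEst6}, so the proof here should be a streamlined version of that argument restricted to the easier bounded, bounded-away-from-zero-$\omega$ setting.
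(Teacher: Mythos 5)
Your overall architecture is right: a current of the form $\text{\fontencoding{LGR}\selectfont \koppa}^y - E\,\text{Q}^T$ with an exponentially-built seed function, absorb the $(yV)'$ term via Cauchy--Schwarz using the lower bound $|\omega|\geq\omega_{\rm low}$, and accept the weaker estimate~(\ref{asthenes}) by shovelling the entire horizon flux onto the right-hand side. That is the paper's strategy. But the specific seed function you write down does not work, and the place you brushed past the difficulty is exactly where it breaks.

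The genuine gap is your choice $\upsilon\sim\Delta$ near $r=r_+$. You noticed the worry ("one must be careful where $\upsilon$ degenerates near $r_+$") and dismissed it by appealing to ``the mechanism of Proposition~\ref{odeEst6}.'' That mechanism, however, is precisely to rewrite $\omega^2-V=\omega_0^2-\tilde V$ with $\omega_0=\omega-\upomega_+m$, $\tilde V(r_+)=0$, so that $\tilde V$ and $\upsilon$ vanish at a matching rate; the resulting absorption factor $\tilde V^2/(\upsilon^2\omega_0^2)$ is then bounded \emph{only because} $\omega_0^2\geq b(\tilde a_0)$. Proposition~\ref{odeEst6} imposes $m\neq 0$ and $a\geq\tilde a_0$ exactly to secure this lower bound. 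In the range of Proposition~\ref{odeEst7} you only have $|\omega|\geq\omega_{\rm low}$, and $\omega_0$ may vanish (take $\omega=\upomega_+m$ with $m\neq 0$; this is admissible here). When $\omega_0\to 0$ the $\omega_0^{-2}$ in your absorption blows up, so the construction fails precisely at the superradiant threshold frequency. The paper instead takes the seed with $y'/y=C\chi\, r^{-2}$, i.e.~bounded \emph{away from zero} near the horizon, which makes $y^2V_{\leq}^2/((y')^2\omega^2)$ bounded using only $\omega^{-2}\leq\omega_{\rm low}^{-2}$ --- no reference to $\omega_0$ at all. In addition, because $\int_{-\infty}^{r^*}r^{-2}\,ds=-\infty$, this $y$ has $y(-\infty)=0$, which simultaneously (i) kills the $\text{\fontencoding{LGR}\selectfont \koppa}^y$ horizon flux, and (ii) removes the endpoint boundary term in the integration by parts $\int(yV_{\leq})'|u|^2=-2\int yV_{\leq}\text{Re}(u'\bar u)$. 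Your $y$ has $y(-\infty)=1$, so neither of these happens --- the horizon flux from $\text{\fontencoding{LGR}\selectfont \koppa}^y$ is actually of favourable sign, so (i) is not fatal, but (ii) leaves a stray $V(r_+)|u(-\infty)|^2$ boundary term you did not account for.

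Two smaller points. First, as written your construction is internally inconsistent for large $r^*$: with $\upsilon=1$ for $r^*\geq R^*_\infty$ and $y'=(C\upsilon-D)y$, $y$ either grows exponentially or decays there, contradicting $y=1$; the seed function must vanish (be cut off) for $r^*\geq R^*_\infty$, not equal $1$. (Your instinct is imported from the $\tilde y$ of Proposition~\ref{odeEst6}, where $\tilde y$ decays and the decay is deliberately retained and tracked later --- a different role.) Second, your aside that $(\omega-\upomega_+m)^2\leq B|\omega(\omega-\upomega_+m)|$ ``fails in general'' is incorrect in this range: since $|\omega|\geq\omega_{\rm low}$ and $|m|\leq\sqrt{\Lambda}\leq\sqrt{\epsilon_{\rm width}^{-1}}\,\omega_{\rm high}$, one has $|\omega-\upomega_+m|\leq B(\omega_{\rm low},\omega_{\rm high},\epsilon_{\rm width})|\omega|$, so the inequality does hold with a range-dependent constant; this, however, has no bearing on the main gap.
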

\begin{proof}

As in the previous section we will treat the superradiant and non-superradiant frequencies concurrently. However, as previously discussed, it is only
for the sake of the superradiant frequencies for which we include the first
term on the right hand side of the estimates of the proposition.

Our current will be of the form
\[\text{Q} = \text {\fontencoding{LGR}\selectfont \koppa}^y - E\text{Q}^T\]
for
\[y(r^*) := \exp\left(-C\int_{r^*}^{\infty}\chi_{R_{\infty}^*}r^{-2}dr\right),\]
where $C = C(\omega_{\rm low},\omega_{\rm high},\epsilon_{\rm width})$ is a sufficiently large constant. The function $\chi_{R_{\infty}^*}$ is a smooth function which is identically $1$ on $[r_+,R_{\infty}-1)$ and identically $0$ on $[R_{\infty},\infty)$. Note that $y|_{r^* \geq R^*_{\infty}} = 1$ and $y\left(-\infty\right) = 0$. Applying the current $\text{Q}$ gives
\begin{align}\label{nonStatFirst}
\int_{-\infty}^{\infty} &\left (y' |u'|^2+\left (y'\omega^2-\left(yV\right)'\right) |u|^2 \right) - \left (|u'|^2+(1-E)\omega^2 |u|^2\right)_{r=\infty}
\\ &=\left(\omega(\omega-\upomega_+m) |u|^2\right)_{r=r_+}- \int_{-\infty}^\infty \left (2 y \,{\text{Re}} (u'\overline{H})+E\omega {\text{Im}} (u\overline H) \right).
\end{align}

Let $R_{\infty}$ be sufficiently large and $R_6$ be chosen such that $1 \ll R_6 \ll R_{\infty}-1$. Then, let $\chi_2$ be a smooth function such that
\begin{equation}\label{evenevenmorechoices}
\chi_2 = 1\text{ for }r\in [r_+,R_6],\qquad \chi_2 = 0\text{ for }R \in [R_{\infty}-1,\infty),
\end{equation}
\begin{equation}\label{evenevenmorechoices2}
\left|\chi_2'\right| \leq B\left(R_{\infty}-R_6\right)^{-1},\qquad \left|\chi_2\right| \leq B.
\end{equation}
Then set $V_{\leq} := \chi_2V$ and $V_{\geq} = \left(1-\chi_2\right)V$. Of course, we have $V = V_{\leq} + V_{\geq}$. For $V_{\leq}$ we have
\begin{align}\label{absorbV1}
\left|\int_{-\infty}^{\infty}\left(yV_{\leq}\right)'\left|u\right|^2\right| &= 2\left|\int_{-\infty}^{\infty}yV_{\leq}\text{Re}\left(u'\overline u\right)\right| \\
\nonumber
&\leq \epsilon \int_{-\infty}^{\infty}y'\left|u'\right|^2 + B\epsilon^{-1}\int_{-\infty}^{\infty}y'\omega^2\frac{y^2V_{\leq}^2}{\left(y'\right)^2\omega^2}\left|u\right|^2\\ \nonumber &=\epsilon \int_{-\infty}^{\infty}y'\left|u'\right|^2 + B\left(\omega_{\rm high},\epsilon_{\rm width}\right)\epsilon^{-1}\int_{-\infty}^{\infty}y'\omega^2\frac{V_{\leq}^2}{C^2r^{-4}\omega^2}\left|u\right|^2\\
&\nonumber \leq \epsilon \int_{-\infty}^{\infty}y'\left|u'\right|^2 + B\left(\omega_{\rm high},\epsilon_{\rm width}\right)\epsilon^{-1}C^{-2}\omega^{-2}\int_{-\infty}^{\infty}y'\omega^2\left|u\right|^2,
\end{align}
while  for $V_{\geq}$ we have
\begin{align}\label{absorbV2}
-\int_{-\infty}^{\infty}\left(yV_{\geq}\right)'\left|u\right|^2 &= \int_{-\infty}^{\infty}\left(-y'V_{\geq} - yV_{\geq}'\right)\left|u\right|^2 \\ \nonumber
&\ge \int_{-\infty}^{\infty}\left(-B\left(\omega_{\rm high},\epsilon_{\rm width}\right)R_6^{-2}\omega^{-2}\left(y'\omega^2\right)\right.\\
\nonumber &\qquad\qquad\left.
 -B\left(\omega_{\rm high},\epsilon_{\rm width}\right)\left(R_{\infty}-R_6\right)^{-1}1_{\text{supp}\left(\chi_2'\right)}yV + by\left(1-\chi_2\right)r^{-3}\right)\left|u\right|^2  \\ \nonumber & \ge-B\left(\omega_{\rm high},\epsilon_{\rm width}\right)\text{max}\left(R_{\infty}^{-2}\omega^{-2},\left(R_{\infty}-R_6\right)^{-1}\right)\int_{-\infty}^{\infty}\left(y'\omega^2\right)\left|u\right|^2.
\end{align}

It is now clear that choosing $C$, $R_{\infty}$ and $R_{\infty} - R_6$ sufficiently large depending on $\omega_{\rm low}$ or $\omega_{\rm low}$ and $\omega_{\rm high}$ and $\epsilon_{\rm width}$ and combining~(\ref{nonStatFirst}),~(\ref{absorbV1}) and~(\ref{absorbV2}) will imply the proposition.

\end{proof}

\subsection{Putting everything together}\label{putting}
In this section we will combine the propositions of the above sections to prove Theorem~\ref{phaseSpaceILED}.

First of all, keeping Lemma~\ref{everythingCovered} in mind, we observe that for any choice of $\omega_{\rm high}$ and $\epsilon_{\rm width}$, every admissible frequency triple $\left(\omega,m,\Lambda\right)$ lies in exactly one of the frequency ranges:
$\mathcal{G}_{\mbox{$\flat$}}(\omega_{\rm high}, \epsilon_{\rm width})$,
$\mathcal{G}_{\lessflat}(\omega_{\rm high}, \epsilon_{\rm width})$,
$\mathcal{G}_{\mbox{$\natural$}}(\omega_{\rm high}, \epsilon_{\rm width})$,
$\mathcal{G}_{\mbox{$\sharp$}}(\omega_{\rm high}, \epsilon_{\rm width})$,
$\mathcal{G}^{\mbox{$\sharp$}}(\omega_{\rm high}, \epsilon_{\rm width})$.
Thus, it only remains to choose the constants
$\epsilon_{\rm width}$, $E$,
$\omega_{\rm high}$, $\omega_{\rm low}$, $\tilde a_0$ and $R_{\infty}$ in the correct order so that it is possible to apply simultaneously all of the above propositions.

The first constant we fix is a sufficiently small $\epsilon_{\rm width}$, consistent with the requirements of Propositions~\ref{odeEst2} and~\ref{odeEst3}. Then,
depending on the choice of $\epsilon_{\rm width}$, for all large enough $\omega_{\rm high}$, $R_{\infty}$ and $E$ we may apply Propositions~\ref{odeEst2},~\ref{odeEst3} and,
in addition, Proposition~\ref{odeEst4}, corresponding to  the frequency regimes $\mathcal{G}_{\mbox{$\sharp$}}$, $\mathcal{G}_{\lessflat}$ and $\mathcal{G}_{\mbox{$\natural$}}$.
Now we fix the choice of $E$ consistent with the above requirement. Then, depending on this choice of $E$, for all large enough $\omega_{\rm high}$ and $R_{\infty}$ we may apply,
in addition to the above Propositions, also
Proposition~\ref{odeEst1}, corresponding to the frequency regime $\mathcal{G}^{\mbox{$\sharp$}}$. Finally, we fix the constant $\omega_{\rm high}$ consistent also with this requirement.

Since $\epsilon_{\rm width}$ and $\omega_{\rm high}$ are both fixed,
the frequency ranges $\mathcal{G}_{\mbox{$\sharp$}}$, $\mathcal{G}_{\lessflat}$, $\mathcal{G}_{\mbox{$\natural$}}$,  $\mathcal{G}^{\mbox{$\sharp$}}$,  $\mathcal{G}_{\mbox{$\flat$}}$
are now determined.

We still must determine the four subranges of $\mathcal{G}_{\mbox{$\flat$}}$ which
depend on additional parameters $\omega_{\rm low}$ and $\tilde a_0$,
and make our final choice of $R_{\infty}$.

We choose first $\tilde a_0$ and $\omega_{\rm low}$ sufficiently small
so that for $R_\infty$ sufficiently large, we can apply Propositions~\ref{odeEst5} and~\ref{odeEst5b}. We then fix our choice of $\tilde a_0$.
Then chose sufficiently small $\omega_{\rm low}$ depending on $\tilde a_0$,
and note that for sufficiently large $R_{\infty}$ depending on $\tilde a_0$
we may apply (in addition to all previous Propositions) also Proposition~\ref{odeEst6}.
Finally, choose $R_{\infty}$ so that we may apply, in addition to all the previous Propositions,
also  Proposition~\ref{odeEst7}.

With these choices,
all frequency ranges are determined so as to indeed simultaneously satisfy the assumptions of  Propositions \ref{odeEst1}--\ref{odeEst7}.
We now for each  frequency range
define the functions $f$, $y$, $r_{\rm trap}$, etc., as given in
the corresponding Proposition or else set them to $0$ (as explained before each statement).
The statement of Propositions~\ref{odeEst1}--\ref{odeEst7} then give Theorem~\ref{phaseSpaceILED} for
 frequencies $(\omega, m, \Lambda)$ in the corresponding range.
Since these ranges cover all admissible frequencies, the proof is complete.

\subsection{Trapping parameters}\label{trappingParam}
We finally define the trapping parameters $s_{\pm}$ which appear in the definition of the degeneration function $\zeta$ (see~(\ref{degenerationfunc})) which
in turn appears in the statement of Theorem~\ref{theResult}.

\begin{definition}Let $0\le a_0 < M$ and let $\omega_{\rm high} = \omega_{\rm high}(a_0,M)$ and $\epsilon_{\rm width} = \epsilon_{\rm width}(a_0,M)$ be the parameters
from Theorem~\ref{phaseSpaceILED}. We define the trapping parameters $s_{\pm}$ by
\begin{equation}\label{sminus}
s_-\left(a_0,M\right) \doteq 3M - \inf_{0\le a \leq a_0,(\omega,m,\Lambda) \in \mathcal{G}_{\mbox{$\natural$}}, r_{\rm trap}\ne 0}r_{\rm trap}\left(\omega,m,\Lambda\right) - \varepsilon\left(a_0\right),
\end{equation}
\begin{equation}\label{splus}
s_+\left(a_0,M\right) \doteq \sup_{0\le a \leq a_0,(\omega,m,\Lambda) \in \mathcal{G}_{\mbox{$\natural$}}r_{\rm trap}\ne 0}r_{\rm trap}\left(\omega,m,\Lambda\right) - 3M + \varepsilon\left(a_0\right).
\end{equation}
where
$\varepsilon\left(a_0\right)$ is
a fixed choice of continuous function such  that $\varepsilon(0)=0$ and
$\varepsilon\left(a_0\right) > 0$ for $a_0>0$,
and such that $s_\pm$ satisfy for all $0\le a\le a_0$ the relations
\[
r_+(a,M)<3M-s_-(a_0,M)
<3M + s_+\left(a_0,M\right) < \infty.
\]

\end{definition}

The proof of Proposition~\ref{odeEst4} shows that $\varepsilon(a_0)$ can be chosen
ensuring that $s_\pm$, $\varepsilon$ enjoy the  properties claimed in the above
definition.

\begin{remark}
Let us observe that we then necessarily have
    \[
    \lim_{a_0\to 0}s_{\pm}\left(a_0,M\right) = 0, \qquad
    \lim_{a_0\to M} s_-(a_0,M) =2M=3M-r_+(M,M)
    \]
From the latter,
it follows that  we must also have $\lim_{a_0\to M}\varepsilon\left(a_0\right) = 0$.
\end{remark}

Recall the definition of the physical space degeneration function
$\zeta$ (see~(\ref{degenerationfunc})) which in particular required the definition of the points $s_{\pm}$. It follows from our definition of
$r_{\rm trap}$ in Theorem~\ref{phaseSpaceILED} that
we now have  for all admissible $(\omega, m, \Lambda)$
then $r_{\rm trap}=0$ or
\[
3M-s_-+\varepsilon(a_0) \le r_{\rm trap} \le 3M+s_+-\varepsilon(a_0)
\]
It follows that  for all admissible $(\omega, m, \Lambda)$, we have the uniform bound
\begin{equation}
\label{insteadofthis}
b\zeta \le (1-r^{-1}r_{\rm trap})^2.
\end{equation}
In particular, the statement of Theorem~\ref{phaseSpaceILED} holds with $(1-r^{-1}r_{\rm trap})^2$ replaced by $\zeta$.
It is this weaker statement that we will in fact apply in the following section.

\section{Summing and integrated local energy decay for future-integrable solutions}\label{summation}

In this section, we will combine the estimates of Sections~\ref{Nmult},~\ref{largeR}
and the o.d.e.~analysis of Section~\ref{freqLocEst} to prove
integrated local energy decay for solutions of the wave equation sufficiently integrable
towards the future. We begin by defining this class and stating the  main proposition.

\subsection{Future-integrable solutions of the wave equation}\label{futIntSec}
Let $a_0<M$,
$|a| \le a_0$ and let $\psi$ be as in the reduction of Section~\ref{WPosed}, i.e.,
a  solution of the wave equation $(\ref{WAVE})$ on $\mathcal{R}_0$
arising from smooth  compactly supported data at $\Sigma_0$.
Let $\xi\left(\tau\right)$ be smooth function which is $0$ in the past of $\Sigma_0$ and
identically $1$ in the future of $\Sigma_1$. Then we define
\[
\psi_{\text{\Rightscissors}} \doteq \xi\psi.
\]
We have
\begin{equation}
\label{inhogo}
\Box_g\psi_{\text{\Rightscissors}} = F \doteq 2\nabla^{\mu}\xi\nabla_{\mu}\psi + \left(\Box_g\xi\right)\psi.
\end{equation}
\begin{definition}
Let $|a|<M$ and
let $\psi$ be a solution of $(\ref{WAVE})$ as in the reduction of Section~\ref{WPosed}.
We shall say that $\psi$ is future-integrable if
$\psi_{\text{\Rightscissors}}$ satisfies Definition~\ref{sufficient}.
\end{definition}

Note that $\psi_{\text{\Rightscissors}}$ by its construction
will
then automatically
satisfy Definitions~\ref{sufficient2}.

Recall the degeneration function $\zeta$ defined by $(\ref{degenerationfunc})$ in Section~\ref{weritsdef}, and~\ref{easyCoArea}. The main result of this section
is
\begin{proposition}\label{closedILED}Let $a_0<M$, $|a|\le a_0$, and let
$\psi$ be a future
integrable solution of  $(\ref{WAVE})$.
Then, for every $\delta > 0$
\[
\int_{\mathcal{H}^+_0}\mathbf{J}^N_{\mu}[\psi]n^{\mu}_{\mathcal{H}^+}+ \int_{\mathcal{I}^+}\mathbf{J}^T_{\mu}[\psi]n^{\mu}_{\mathcal{I}^+}+
\int_0^{\infty}\int_{\Sigma_{\tau}}\left(\left|\tilde{Z}^*\psi\right|^2r^{-1-\delta} + r^{-3-\delta}\left|\psi\right|^2 +
\zeta \left|T\psi\right|^2r^{-1-\delta} + \zeta\left|\nabb\psi\right|^2r^{-1}\right)\, d\tau\]
\[ \leq
B\left(\delta\right)\int_{\Sigma_0}\mathbf{J}^N_{\mu}[\psi]n^{\mu}_{\Sigma_0}.
\]
\end{proposition}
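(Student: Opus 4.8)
The plan is to run Carter's separation on the cutoff function $\psi_{\text{\Rightscissors}}$, apply the frequency-localised estimate of Theorem~\ref{phaseSpaceILED} to the resulting radial coefficients $u^{(a\omega)}_{m\ell}$, sum over all frequency parameters via Plancherel, and then feed the result back into physical-space estimates from Sections~\ref{Nmult} and~\ref{largeR} to close. First I would observe that $\psi_{\text{\Rightscissors}}$ is sufficiently integrable (by hypothesis, since $\psi$ is future-integrable) and outgoing (automatic from its construction), so Lemma~\ref{aeRegular} applies: for almost every $\omega$ the coefficients $u = u^{(a\omega)}_{m\ell}$ are smooth solutions of the radial o.d.e.~(\ref{e3iswsntouu}) with right-hand side $H = H^{(a\omega)}_{m\ell}$ coming from $F$ in~(\ref{inhogo}), satisfying the boundary conditions~(\ref{eq:b-}) and~(\ref{eq:b+}). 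Here $F$ is compactly supported in spacetime (it is supported in the slab between $\Sigma_0$ and $\Sigma_1$), so $H$ decays appropriately and all the integration-by-parts manipulations in Section~\ref{freqLocEst} are justified.

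The core step is to apply the estimate~(\ref{fromPhaseSpace2}) for each admissible triple, integrate $d\omega$ and sum $\sum_{m\ell}$, and then invoke Plancherel (the explicit formulas of Section~\ref{separationSubsection}) to convert the frequency-space left-hand side into a physical-space integral over $\mathcal{R}_0$. The weights $(1-r^{-1}r_{\rm trap})^2(\omega^2+\Lambda)|u|^2$ become, after summation, the degenerate bulk $\zeta(|T\psi_{\text{\Rightscissors}}|^2 + |\nabb\psi_{\text{\Rightscissors}}|^2)$ using $(\ref{insteadofthis})$ and the relation between $\lambda_{m\ell}^{(a\omega)}$ and the angular derivatives recorded at the end of Section~\ref{separationSubsection}; the $|u'|^2$ term gives the non-degenerate $\partial_r$-derivative (this is where the switch from $Z$ to $\tilde Z^*$ via the cutoff $\chi$ enters). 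On the right-hand side, the terms $H\cdot(f,h,y,\chi)\cdot(u,u')$ sum to a spacetime integral of $F$ against first derivatives of $\psi_{\text{\Rightscissors}}$, which is supported in the compact slab and is therefore controlled by $\int_{\Sigma_0}\mathbf{J}^N_\mu[\psi]n^\mu$ together with the energy on $\Sigma_1$; the latter is in turn bounded via the red-shift estimate (Proposition~\ref{ftrs}) and a $T$-energy identity, since on a fixed compact time-slab one does not yet need the global estimate. I would also need the large-$r$ estimate of Proposition~\ref{lrp} to handle the region $r\ge R_{\rm large}$ where the frequency currents degenerate, and to produce the $r^{-3-\delta}|\psi|^2$ and $r^{-1-\delta}$ weights at infinity; this is combined with the $r^p$-hierarchy to also extract the horizon flux $\int_{\mathcal{H}^+_0}\mathbf{J}^N_\mu[\psi]n^\mu$ and the null infinity flux $\int_{\mathcal{I}^+}\mathbf{J}^T_\mu[\psi]n^\mu$.

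The main obstacle — and the reason this section is ``the heart of the paper'' in spirit even though the hard currents are in Section~\ref{freqLocEst} — is the leftover boundary term $1_{\{\omega_{\rm low}\le|\omega|\le\omega_{\rm high}\}\cap\{\Lambda\le\epsilon_{\rm width}^{-1}\omega_{\rm high}^2\}}|u(-\infty)|^2$ on the right-hand side of~(\ref{fromPhaseSpace2}), arising from the bounded superradiant frequencies of Section~\ref{nonStat}. After summation this becomes a horizon boundary integral of $\psi_{\text{\Rightscissors}}$ restricted to a bounded frequency band, which cannot be absorbed by the left-hand side directly. The plan is to bound it using the quantitative mode stability of~\cite{shlapRot} (invoked as the proposition referenced in Section~\ref{whitinghere}): since $\psi_{\text{\Rightscissors}}$ solves an inhomogeneous equation with compactly supported $F$, one represents $u(-\infty)$ via the Wronskian/Green's function of the homogeneous radial o.d.e., and the quantitative non-vanishing of this Wronskian on the real axis (uniformly over the relevant compact frequency band) yields $\int_{\text{band}}|u(-\infty)|^2\,d\omega \le B\int_{\text{band}}(\text{terms in }H)$, which after Plancherel is again controlled by $\int_{\Sigma_0}\mathbf{J}^N_\mu[\psi]n^\mu$. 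Once this term is absorbed, one collects the bulk coercivity, the horizon and null-infinity fluxes, chooses the free parameters as in Section~\ref{putting}, takes the limit as the cutoff parameters are sent to their endpoints, and recovers the stated estimate for $\psi$ itself (not merely $\psi_{\text{\Rightscissors}}$) on $\mathcal{R}_0$ by noting $\psi = \psi_{\text{\Rightscissors}}$ to the future of $\Sigma_1$ and handling the compact slab $\mathcal{R}_{(0,1)}$ by the local energy estimates. One technical point I would be careful about is justifying the interchange of $\int d\omega$, $\sum_{m\ell}$ and the $r^*$-integration, which uses the a.e.\ regularity from Lemma~\ref{aeRegular} together with Fatou/dominated convergence and the sufficiently-integrable bounds.
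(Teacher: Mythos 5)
Your proposal correctly identifies the architecture of the proof: apply Carter's separation to $\psi_{\text{\Rightscissors}}$, invoke Lemma~\ref{aeRegular} and Theorem~\ref{phaseSpaceILED}, sum by Plancherel to get the degenerate bulk on $[R_-,R_+]$, add the red-shift estimate to reach the horizon, add the large-$r$ estimate of Proposition~\ref{lrp}, handle the cutoff error terms, and defeat the leftover bounded-frequency horizon term via quantitative mode stability. This matches the paper's Sections 8.2--8.7 step by step, and your account of the mode-stability black box and of recovering $\psi$ from $\psi_{\text{\Rightscissors}}$ is sound. However, there is one genuine gap and one serious underestimate.

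The gap: you propose to ``add in the large-$r$ estimate of Proposition~\ref{lrp}'' without confronting the fact that its right-hand side contains $\int_{\Sigma_\tau}\mathbf{J}^N_\mu[\psi]n^\mu_{\Sigma_\tau}$ for \emph{arbitrary} $\tau$. At this stage of the argument uniform energy boundedness through $\Sigma_\tau$ (i.e.~$(\ref{bndts1b})$) is \emph{not} yet available -- it is proved afterwards in Proposition~\ref{sufficientIntBound}, and in fact depends on the present proposition. Simply invoking Proposition~\ref{lrp} as written would be circular. The paper resolves this by a pigeonhole argument that is essential: future-integrability makes $\int_{\Sigma_\tau\cap[r_+,R_+]}\mathbf{J}^N$ an $L^1$ function of $\tau$, so there is a dyadic sequence $\tau_n\to\infty$ along which the local energy decays like $C/\tau_n$; one then controls the full $\Sigma_{\tau_n}$-energy via a $\mathbf{J}^T$-estimate for $r\ge R_+$ (where $T$ is timelike) and the horizon flux, applies Proposition~\ref{lrp} on $\mathcal{R}_{(0,\tau_n)}$, absorbs the horizon contribution by multiplying by a small constant, and finally sends $n\to\infty$. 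This same dyadic sequence is also how the null-infinity flux bound $(\ref{fluxNullInf})$ is extracted (not the $r^p$-hierarchy, which is not used here); your attribution of the fluxes to the $r^p$-hierarchy is misplaced.

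The underestimate: your claim that the right-hand side $\sum_{m\ell}\int d\omega\, H\cdot(f,h,y,\chi)\cdot(u,u')$ ``sums to a spacetime integral of $F$ against first derivatives of $\psi_{\text{\Rightscissors}}$, supported in the compact slab'' glosses over the fact that the multipliers $(f,h,y,\chi)$ are frequency-dependent, so Plancherel does not turn the frequency-space product into a physical-space integral without further structure. This is precisely why the conclusion of Theorem~\ref{phaseSpaceILED} asserts that for $r^*\ge R^*_\infty$ the multipliers are \emph{independent} of the frequency triple (or exponentially decaying, in the case of $\tilde y$); the paper's Section 8.6 splits the error into the $[r_+,R_\infty+1]$ region (handled by Cauchy--Schwarz, Hardy and a finite-in-time estimate) and the $[R_\infty,\infty)$ region, where one needs the explicit form of $F$ in terms of $\dot\xi$, $\ddot\xi$, careful integrations by parts, and -- for the $\text{Q}^T$ piece -- substitution of the wave equation to trade $\partial_t^2\psi$ for spatial derivatives so that the $r$-weights close. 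Your sketch should acknowledge that this step is not automatic.
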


The proof of this proposition will be carried out in Sections~\ref{cutoffSec}--\ref{whitinghere} below.
In view of the reduction of Section~\ref{signOfa}, we may
assume in this proof that   $a\ge 0$, in order
to appeal to the the results of Section~\ref{freqLocEst} as stated.

\subsection{Finite in time energy estimate}\label{cutoffSec}
Defining $\psi_{\text{\Rightscissors}}$ as above, by Section~\ref{karteri},
we may apply Carter's separation to the inhomogeneous equation
$(\ref{inhogo})$ to define the function $u^{(a\omega)}_{m\ell}$.
Lemma~\ref{aeRegular} implies that for almost every $\omega$, then for
all $m$, $\ell$. the function $H^{(a\omega)}_{m\ell}$ is smooth and $u^{(a\omega)}_{m\ell}$ is a smooth solution to the radial o.d.e.~(\ref{e3iswsntouu}) satisfying the boundary conditions~(\ref{eq:b+}) and~(\ref{eq:b-}). For each such $\omega$ we may apply the estimates of Section~\ref{freqLocEst} to the admissible
triples $(\omega, m, \Lambda= \Lambda_{m\ell}(a\omega))$
and conclude that Theorem~\ref{phaseSpaceILED} holds.

Thus, integrating $(\ref{fromPhaseSpace2})$
in $\omega$, summing in $m$ and $\ell$ and applying~(\ref{insteadofthis}) yields the estimate
\begin{align}\label{fromPhaseSpace}
b&\int_{-\infty}^{\infty}\sum_{m\ell}\int_{R_-^*}^{R_+^*}\left(\left|u'\right|^2 + \left(\zeta\left(\omega^2 + \Lambda_{m\ell}\right) + 1\right)\left|u\right|^2\right)\, dr^*\, d\omega
\\ \nonumber &\le \int_{-\infty}^{\infty}\sum_{m\ell}\int_{-\infty}^{\infty}H \cdot  (f, h, y, \chi) \cdot (u, u')\,dr^*\, d\omega + \int_{\omega_{\rm low} \leq \left|\omega\right| \leq \omega_{\rm high}}\sum_{\{m\ell : \Lambda \leq \epsilon_{\rm width}^{-1}\omega_{\rm high}^2\}}\left|u\left(-\infty\right)\right|^2\, d\omega.
\end{align}

An application of Plancherel to~$(\ref{fromPhaseSpace})$ (see the explicit formulas in Section~\ref{separationSubsection} and the discussion of the volume form in Section~\ref{usefulcomps}) yields
\begin{align}\label{conseqSecFreqLocEst}
&b\int_0^{\infty}\int_{\Sigma_{\tau} \cap [R_-,R_+]}\left(\left|\partial_{r^*}\psi_{\text{\Rightscissors}}\right|^2 + \left|\psi_{\text{\Rightscissors}}\right|^2 +
\zeta \left|T\psi_{\text{\Rightscissors}}\right|^2 + \zeta\left|\nabb\psi_{\text{\Rightscissors}}\right|^2\right)\, d\tau  \\ \nonumber &\,\,\le \int_{-\infty}^{\infty}\sum_{m\ell}\left(\int_{-\infty}^{\infty}H \cdot  (f, h, y, \chi) \cdot (u, u')\right)\, d\omega+ \int_{\omega_{\rm low} \leq \left|\omega\right| \leq \omega_{\rm high}}\sum_{\{m\ell : \Lambda \leq \epsilon_{\rm width}^{-1}\omega_{\rm high}^2\}}\left|u\left(-\infty\right)\right|^2\, d\omega.
\end{align}

Recall that the last term on the right hand side of both of these estimates arises from Section~\ref{nonStat}. We further remark that this term would be controlled by the physical space quantity $\int_{\mathcal{H}^+(0,\infty)}\mathbf{J}^N_{\mu}[\psi]n^{\mu}_{\mathcal{H}^+}$, if we had
control for the latter--in general, we do not, however! In Section~\ref{whitinghere} we shall exploit the localisation of the integral in $\omega$ to control this using the quantitative
mode stability result~\cite{shlapRot}.

The first thing we observe is that $\partial\psi_{\text{\Rightscissors}}$ only differ from $\partial\psi$ when $\tau \in [0,1]$. However, in this region, the energy can simply be controlled by a finite in time energy inequality and a Hardy inequality.

For the $\left|\psi_{\text{\Rightscissors}}\right|^2$ term, we observe
\begin{align*}
\int_0^1\int_{\Sigma_{\tau}\cap [R_-, R_+]}\left|\psi\right|^2 &\leq B\int_0^1\int_{\Sigma_{\tau}}\frac{\left|\psi\right|^2}{r^2} \leq B\int_0^1\int_{\Sigma_{\tau}}\mathbf{J}^N_{\mu}[\psi]n^{\mu}_{\Sigma_{\tau}} \leq B\int_{\Sigma_0}\mathbf{J}^N_{\mu}[\psi]n^{\mu}_{\Sigma_{\tau}},
\end{align*}
where we have used a Hardy inequality and a finite in time energy estimate.

We conclude
\begin{align}\label{addFiniteEnergyEst}
&b\int_0^{\infty}\int_{\Sigma_{\tau}\cap \left[R_-,R_+\right]}\left(\left|\partial_{r^*}\psi\right|^2 + \left|\psi\right|^2 +
\zeta \left|T\psi\right|^2 + \zeta\left|\nabb\psi\right|^2\right)\, d\tau  \\ \nonumber &\le \int_{-\infty}^{\infty}\sum_{m\ell}\left(\int_{-\infty}^{\infty}H \cdot  (f, h, y, \chi) \cdot (u, u')\right)\, d\omega
\\ \nonumber &\qquad+\int_{\omega_{\rm low} \leq \left|\omega\right| \leq \omega_{\rm high}}\sum_{\{m\ell : \Lambda \leq \epsilon_{\rm width}^{-1}\omega_{\rm high}^2\}}\left|u\left(-\infty\right)\right|^2\, d\omega + \int_{\Sigma_0}\mathbf{J}^N_{\mu}[\psi]n^{\mu}_{\Sigma_0}.
\end{align}

\subsection{Adding in the red-shift}\label{addInTheRedShift}
Next, we add a small constant (depending only on $M$)
times the red-shift estimate of Proposition~\ref{ftrs} to~(\ref{addFiniteEnergyEst}).
Recalling that $R_- = r_+ + \frac{1}{2}\left(r_{\rm red} - r_+\right)$, we thus obtain
\begin{align}\label{redshiftAdded}
&b\int_0^{\infty}\int_{\Sigma_{\tau}\cap \left[r_+,R_+\right]}\left[\left|\tilde{Z}^*\psi_{\text{\Rightscissors}}\right|^2 + \left|\psi_{\text{\Rightscissors}}\right|^2 +
\zeta \left|T\psi_{\text{\Rightscissors}}\right|^2 + \zeta\left|\nabb\psi_{\text{\Rightscissors}}\right|^2\right]\, d\tau +b\int_{\mathcal{H}^+(0,\infty)}\mathbf{J}^N_{\mu}[\psi]n^{\mu}_{\mathcal{H}^+}  \\ \nonumber &\le \int_{-\infty}^{\infty}\sum_{m\ell}\left(\int_{-\infty}^{\infty}H \cdot  (f, h, y, \chi) \cdot (u, u')\right)\, d\omega
\\ \nonumber &\qquad + \int_{\Sigma_0}\mathbf{J}^N_{\mu}[\psi]n^{\mu}_{\Sigma_0} + \int_{\omega_{\rm low} \leq \left|\omega\right| \leq \omega_{\rm high}}\sum_{\{m\ell : \Lambda \leq \epsilon_{\rm width}^{-1}\omega_{\rm high}^2\}}\left|u\left(-\infty\right)\right|^2\, d\omega.
\end{align}

\subsection{Adding in the large $r$ current}
Next, we would like to add in a small constant times the large $r$ estimate of Section~\ref{largeR}. However, we must be careful because that estimate produces an error proportional to $\int_{\Sigma_{\tau}}\mathbf{J}^N_{\mu}[\psi]n^{\mu}_{\Sigma_{\tau}}$, and we do not yet have a uniform energy boundedness statement.

We surmount this difficulty as follows. Since $\psi$ is future-integrable, we know that $\int_{\Sigma_{\tau} \cap [r_+,R_+]}\mathbf{J}^N_{\mu}[\psi]n^{\mu}_{\Sigma_{\tau}}$, as a function of $\tau$, is in $L^1_{\tau}[0,\infty)$. A pigeonhole argument implies that we may find a constant $C\left(\psi\right)$ and a dyadic sequence $\{\tau_n\}_{n=1}^{\infty}$ with $\lim_{n\to\infty}\tau_n = \infty$ such that
\begin{equation}\label{dyadicEnergyDecay}
\int_{\Sigma_{\tau_n}\cap[r_+,R_+]}\mathbf{J}^N_{\mu}[\psi]n^{\mu}_{\Sigma_{\tau_n}} \leq \frac{C}{\tau_n}.
\end{equation}
Note that $R_+ = 2R_{\rm large} \geq 4M$, so that $T$ is timelike in the region $r \geq R_+$. Then, a $\mathbf{J}_{\mu}^T$ energy estimate implies
\begin{align}\label{conseqDyadicEnergyDecay}
\int_{\Sigma_{\tau_n}}\mathbf{J}^N_{\mu}[\psi]n^{\mu}_{\Sigma} &\leq \int_{\Sigma_{\tau_n}\cap [r_+,R_+]}\mathbf{J}^N_{\mu}[\psi]n^{\mu}_{\Sigma} + B\int_{\Sigma_{\tau_n}\cap [R_+,\infty)}\mathbf{J}^T_{\mu}[\psi]n^{\mu}_{\Sigma}
\\ \nonumber &\leq B\frac{C}{\tau_n} + B\int_{\mathcal{H}^+(0,\tau_n)}\mathbf{J}^N_{\mu}[\psi]n^{\mu}  + B\int_{\Sigma_0}\mathbf{J}^N_{\mu}[\psi]n^{\mu}.
\end{align}
Now, combine Proposition~\ref{lrp} with~(\ref{conseqDyadicEnergyDecay}), multiply the result by a sufficiently small constant, and then add the result to~(\ref{redshiftAdded}). In particular, the horizon term on the right hand side of~(\ref{conseqDyadicEnergyDecay}) will be multiplied by a small constant, and hence can be absorbed into the left hand side of~(\ref{redshiftAdded}). We obtain
\begin{align}\label{addedLargeR}
b\int_0^{\tau_n}\int_{\Sigma_{\tau}}&\left[|\tilde{Z}^*\psi|^2r^{-1-\delta} + \left|\psi\right|^2r^{-3-\delta} +
\zeta \left|T\psi\right|^2r^{-1-\delta} + \zeta\left|\nabb\psi\right|^2r^{-1}\right]\, d\tau  \\ \nonumber \le&
\int_{-\infty}^{\infty}\sum_{m\ell}\left(\int_{-\infty}^{\infty}H \cdot  (f, h, y, \chi) \cdot (u, u')\right)\, d\omega + B\left(\delta\right)\int_{\Sigma_0}\mathbf{J}^N_{\mu}[\psi]n^{\mu}_{\Sigma_0} \\ \nonumber &+ \int_{\omega_{\rm low} \leq \left|\omega\right| \leq \omega_{\rm high}}\sum_{\{m\ell : \Lambda \leq \epsilon_{\rm width}^{-1}\omega_{\rm high}^2\}}\left|u\left(-\infty\right)\right|^2\, d\omega + \frac{C}{\tau_n}.
\end{align}
Taking $n \to \infty$ gives
\begin{align}\label{addedLargeR2}
\int_0^{\infty}\int_{\Sigma_{\tau}}&\left[\left|\tilde{Z}^*\psi\right|^2r^{-1-\delta} + \left|\psi\right|^2r^{-3-\delta} +
\zeta \left|T\psi\right|^2r^{-1-\delta} + \zeta\left|\nabb\psi\right|^2r^{-1}\right]\, d\tau  \\ \nonumber \le& \int_{-\infty}^{\infty}\sum_{m\ell}\left(\int_{-\infty}^{\infty}H \cdot  (f, h, y, \chi) \cdot (u, u')\right)\, d\omega
\\\nonumber&+
 B\left(\delta\right)\int_{\Sigma_0}\mathbf{J}^N_{\mu}[\psi]n^{\mu}_{\Sigma_0} + \int_{\omega_{\rm low} \leq \left|\omega\right| \leq \omega_{\rm high}}\sum_{\{m\ell : \Lambda \leq \epsilon_{\rm width}^{-1}\omega_{\rm high}^2\}}\left|u\left(-\infty\right)\right|^2\, d\omega.
\end{align}

\subsection{Boundedness of the energy flux to $\mathcal{I}^+$}
\label{toayplus}

The estimates of the previous section give in addition an estimate for
the energy flux to null infinity
\begin{equation}
\label{tonullinfinity}
\int_{\mathcal{I}^+}\mathbf{J}^T_{\mu}[\psi]n^{\mu}_{\mathcal{I}^+},
\end{equation}
which is well-defined by an easy limiting operation (see~\cite{dr6})
which we omit here.

To bound $(\ref{tonullinfinity})$, we only need the easily proven property
of its definition, that
applying $\mathbf{J}^T$ energy estimates outside the ergoregion,
$(\ref{tonullinfinity})$ can be seen to satisfy
\begin{equation}\label{toBoundNullInf}
\int_{\mathcal{I}^+}\mathbf{J}^T_{\mu}[\psi]n^{\mu}_{\mathcal{I}^+} \leq \limsup_{n\to\infty}\int_{\Sigma_{s_n}}\mathbf{J}^N_{\mu}[\psi]n^{\mu}_{\Sigma_{s_n}},
\end{equation}
where $\{s_n\}$ is any sequence with $\lim_{n\to\infty}s_n = \infty$.

Now, taking the limit as $n\to\infty$ in the inequality~(\ref{conseqDyadicEnergyDecay}) and then applying~(\ref{redshiftAdded}) gives
\begin{align}\label{almostNullInfBound}
\limsup_{n\to\infty}\int_{\Sigma_{\tau_n}}\mathbf{J}^N_{\mu}[\psi]n^{\mu}_{\Sigma} &\leq B\int_{\mathcal{H}^+(0,\infty)}\mathbf{J}^N_{\mu}[\psi]n^{\mu}  + B\int_{\Sigma_0}\mathbf{J}^N_{\mu}[\psi]n^{\mu}
\\ \nonumber &\leq B\int_{\Sigma_0}\mathbf{J}^N_{\mu}[\psi]n^{\mu}_{\Sigma_0}.
\end{align}
Then~(\ref{toBoundNullInf}) implies
\begin{align}\label{boundNullInf}
\int_{\mathcal{I}^+}\mathbf{J}^T_{\mu}[\psi]n^{\mu}_{\mathcal{I}^+} \leq
\limsup_{n\to\infty}\int_{\Sigma_{\tau_n}}\mathbf{J}^N_{\mu}[\psi]n^{\mu}_{\Sigma} \leq B\int_{\Sigma_0}\mathbf{J}^N_{\mu}[\psi]n^{\mu}_{\Sigma_0}.
\end{align}

An alternative approach for controlling $\int_{\mathcal{I}^+}\mathbf{J}^N_{\mu}[\psi]n^{\mu}_{\mathcal{I}^+}$ can be found in the proof of the inequality~(\ref{boundedEnergyInfinity}) where a cut-off $\mathbf{J}^T$ energy estimate is applied and the errors are absorbed with the integrated energy decay statement. Lastly, we note that yet another approach would be to first show that (up to a normalisation constant)
\[\int_{\mathcal{I}^+}\mathbf{J}^T_{\mu}[\psi]n^{\mu}_{\mathcal{I}^+} = \int_{-\infty}^{\infty}\sum_{m\ell}\omega^2\left|u\left(\infty\right)\right|^2,\]
and then use the fact that the estimates of Section~\ref{freqLocEst} give a bound for the right hand side of this equality.

\subsection{Error terms associated to the cutoff}
In this subsection we will study closely the terms $\int_{-\infty}^{\infty}\sum_{m\ell}\left(\int_{-\infty}^{\infty}H \cdot  (f, h, y, \chi) \cdot (u, u')\right)\, d\omega$ on the right hand side of~(\ref{addedLargeR2}). Recall that when $r \geq R_{\infty}$, we have arranged for our multipliers to be independent of the frequency parameters or decay exponentially in $r$ (see Remark~\ref{R2}). In particular, we may split the error terms associated to the cutoff into:
\begin{align}\label{errorsFromCutoff}
&\int_{-\infty}^{\infty}\sum_{m\ell}\left(\int_{-\infty}^{\infty}H \cdot  (f, h, y, \chi) \cdot (u, u')\right)\, d\omega\\
\nonumber&\qquad  = \int_{-\infty}^{\infty}\sum_{m\ell}\left(\int_{-\infty}^{\infty}\chi_{R_{\infty}}H \cdot  (f, h, y, \chi) \cdot (u, u')\right)\, d\omega -\int_{-\infty}^{\infty}\sum_{m\ell}\left(\int_{-\infty}^{\infty}2\left(1-\chi_{R_{\infty}}\right)\text{Re}\left(u'\overline H\right)\right)\, d\omega
\\ \nonumber &\qquad\qquad+\int_{-\infty}^{\infty}\sum_{m\ell}\left(\int_{-\infty}^{\infty}E\omega\left(1-\chi_{R_{\infty}}\right)\text{Im}\left(H\overline u\right)\right)\, d\omega
\\ \nonumber &\qquad\qquad+2\int_{\omega_{\rm low} \leq \left|\omega\right| \leq \omega_{\rm high}}\sum_{\{m\ell : \Lambda \leq \epsilon_{\rm width}^{-1}\omega_{\rm high}^2\}}\left(\int_{-\infty}^{\infty}\left(1-\chi_{R_{\infty}}\right)\tilde y\text{Re}\left(u'\overline H\right)\right)\, d\omega.
\end{align}
Here $\chi_{R_{\infty}}$ is a cutoff which is identically $1$ on $[r_+,R_{\infty}]$ and $0$ on $[R_{\infty}+1,\infty)$.
\subsubsection{The bounded $r$ error terms}\label{boundedRError}
The error terms in the region $[r_+,R_{\infty}]$ are the easiest to deal with. We simply observe that an application of Plancherel (see Sections~\ref{separationSubsection} and~\ref{usefulcomps}), finite in time energy estimates and Hardy inequalities
\begin{align}
&\left|\int_{-\infty}^{\infty}\sum_{m\ell}\left(\int_{-\infty}^{\infty}\chi_{R_{\infty}}H \cdot  (f, h, y, \chi) \cdot (u, u')\right)\, d\omega\right|
\\ \nonumber &\qquad\leq B\int_{-\infty}^{\infty}\sum_{m\ell}\int_{-\infty}^{R_{\infty}+1}\left|H\right|\left(\left|u'\right| + \left|u\right|\right)
\\ \nonumber &\qquad\leq B\epsilon^{-1}\int_0^{\infty}\int_{\Sigma_{\tau}\cap [r_+,R_{\infty}+1)}\left|F\right|^2 + \epsilon\int_0^{\infty}\int_{\Sigma_{\tau}\cap [r_+,R_{\infty}+1)}\left(\left|\partial_{r^*}\psi\right|^2 + \left|\psi\right|^2\right)
\\ \nonumber &\qquad\leq B\epsilon^{-1}\int_0^{2}\int_{\Sigma_{\tau}\cap [r_+,\infty)}\mathbf{J}^N_{\mu}[\psi]n^{\mu}_{\Sigma_{\tau}} + \epsilon\int_0^{\infty}\int_{\Sigma_{\tau}\cap [r_+,R_{\infty}+1)}\left(\left|\partial_{r^*}\psi\right|^2 + \left|\psi\right|^2\right)
\\ \nonumber &\qquad\leq B\epsilon^{-1}\int_{\Sigma_0}\mathbf{J}^N_{\mu}[\psi]n^{\mu}_{\Sigma_0} + \epsilon\int_0^{\infty}\int_{\Sigma_{\tau}\cap [r_+,R_{\infty}+1)}\left(\left|\partial_{r^*}\psi\right|^2 + \left|\psi\right|^2\right).
\end{align}

\subsubsection{Large $r$: $\text {\fontencoding{LGR}\selectfont \koppa}^y$ error terms}\label{largeRkoppa}

For error terms supported in the $r \in [R_{\infty},\infty)$ region we must be careful that lower order terms come with appropriate $r$-weights so that either a Hardy inequality can be applied or they can be absorbed into the left hand side of~(\ref{addedLargeR2}). First of all, an application of Plancherel (see Section~\ref{separationSubsection}) gives
\begin{align}
&\int_{-\infty}^{\infty}\sum_{m\ell}\left(\int_{-\infty}^{\infty}\left(1-\chi_{R_{\infty}}\right)\text{Re}\left(u'\overline H\right)\right)\, d\omega
\\ \nonumber\qquad = &\int_0^{\infty}\int_{-\infty}^{\infty}\int_{\mathbb{S}^2}\left(1-\chi_{R_{\infty}}\right)\text{Re}\left(\partial_{r^*}\left(\left(r^2+a^2\right)^{1/2}\psi_{\text{\Rightscissors}}\right)\overline{\Delta\left(r^2+a^2\right)^{-1/2}F}\right)\ \sin\theta\, dt\, dr^*\, d\theta\, d\phi.
\end{align}
Recall that
\[F = \left(r^2+a^2\right)^{-1}\rho^2\left(2\nabla^{\mu}\xi\nabla_{\mu}\psi + \left(\Box_g\xi\right)\psi\right).\]
For sufficiently large $r$, $\xi$ is just a function of $t$. Hence,
\[F = \left(r^2+a^2\right)^{-1}\rho^2\left(2g^{tt}\dot\xi\partial_t\psi + 2g^{t\phi}\dot\xi\partial_{\phi}\psi + g^{tt}\ddot\xi\psi\right).\]
Thus, (suppressing the $\sin\theta\, dt\, dr^*\, d\theta\, d\phi$)
\begin{align}
&\int_0^{\infty}\int_{-\infty}^{\infty}\int_{\mathbb{S}^2}\left(1-\chi_{R_{\infty}}\right)\text{Re}\left(\partial_{r^*}\left(\left(r^2+a^2\right)^{1/2}\psi_{\text{\Rightscissors}}\right)\overline{\Delta\left(r^2+a^2\right)^{-1/2}F}\right)
\\&  =\nonumber \int_0^{\infty}\int_{-\infty}^{\infty}\int_{\mathbb{S}^2}\left(1-\chi_{R_{\infty}}\right)\text{Re}\left(\partial_{r^*}\left(\left(r^2+a^2\right)^{1/2}\psi_{\text{\Rightscissors}}\right)\overline{\Delta\left(r^2+a^2\right)^{-3/2}\rho^2\left(2g^{tt}\dot\xi\partial_t\psi + 2g^{t\phi}\dot\xi\partial_{\phi}\psi\right)}\right)
\\ \nonumber &\qquad+\int_0^{\infty}\int_{-\infty}^{\infty}\int_{\mathbb{S}^2}\left(1-\chi_{R_{\infty}}\right)\text{Re}\left(\partial_{r^*}\left(\left(r^2+a^2\right)^{1/2}\psi_{\text{\Rightscissors}}\right)\overline{\Delta\left(r^2+a^2\right)^{-3/2}\rho^2g^{tt}\ddot\xi\psi}\right).
\end{align}

First we consider the term with $g^{tt}\dot\xi\partial_t\psi$:
\begin{align}\label{gttTerm}
&\left|\int_0^{\infty}\int_{-\infty}^{\infty}\int_{\mathbb{S}^2}\left(1-\chi_{R_{\infty}}\right)\text{Re}\left(\partial_{r^*}\left(\left(r^2+a^2\right)^{1/2}\psi_{\text{\Rightscissors}}\right)\overline{\Delta\left(r^2+a^2\right)^{-3/2}\rho^2\left(2g^{tt}\dot\xi\partial_t\psi\right)}\right) \right|
\\ \nonumber \le&B\left|\int_0^{\infty}\int_{-\infty}^{\infty}\int_{\mathbb{S}^2}\left(1-\chi_{R_{\infty}}\right)\text{Re}\left(\left(\partial_{r^*}\psi_{\text{\Rightscissors}}\right)\overline{\Delta\left(r^2+a^2\right)^{-1}\rho^2\left(2g^{tt}\dot\xi\partial_t\psi\right)}\right)\right|
\\ \nonumber &+B\left|\int_0^{\infty}\int_{-\infty}^{\infty}\int_{\mathbb{S}^2}\left(1-\chi_{R_{\infty}}\right)\frac{r}{(r^2+a^2)^{1/2}}\text{Re}\left(\left(\psi_{\text{\Rightscissors}}\right)\overline{\Delta\left(r^2+a^2\right)^{-3/2}\rho^2\left(2g^{tt}\dot\xi\partial_t\psi\right)}\right)\right|
\\ \nonumber\leq & B\int_0^1\int_{\Sigma_{\tau}\cap [R_{\infty},\infty)}\left(\mathbf{J}^N_{\mu}[\psi]n^{\mu}_{\Sigma_{\tau}} + \frac{\left|\psi\right|^2}{r^2}\right)
 \nonumber\leq  B\int_{\Sigma_0}\mathbf{J}^N_{\mu}[\psi]n^{\mu}_{\Sigma_{\tau}}.
\end{align}

Keeping in mind that $g^{t\phi} = O\left(r^{-3}\right)$, the term with $g^{t\phi}\dot\xi\partial_{\phi}\psi$ can be treated like~(\ref{gttTerm}). Finally, recalling that $\partial_{r^*}\xi = 0$ for $r \geq R_{\infty}$, we have
\begin{align}\label{lowerOrder}
&\left|\int_0^{\infty}\int_{-\infty}^{\infty}\int_{\mathbb{S}^2}\left(1-\chi_{R_{\infty}}\right)\text{Re}\left(\partial_{r^*}\left(\left(r^2+a^2\right)^{1/2}\psi_{\text{\Rightscissors}}\right)\overline{\Delta\left(r^2+a^2\right)^{-3/2}\rho^2g^{tt}\ddot\xi\psi}\right)\right|
\\ \nonumber &= \left|\int_0^{\infty}\int_{-\infty}^{\infty}\int_{\mathbb{S}^2}\left(1-\chi_{R_{\infty}}\right)\Delta\left(r^2+a^2\right)^{-2}\rho^2g^{tt}\ddot\xi\xi\text{Re}\left(\partial_{r^*}\left(\left(r^2+a^2\right)^{1/2}\psi\right)\overline{\left(r^2+a^2\right)^{1/2}\psi}\right)\right|
\\ \nonumber &=
\frac{1}{2}\left|\int_0^{\infty}\int_{-\infty}^{\infty}\int_{\mathbb{S}^2}\partial_{r^*}\left(\left(1-\chi_{R_{\infty}}\right)\Delta\left(r^2+a^2\right)^{-2}\rho^2g^{tt}\ddot\xi\xi\right)\left(r^2+a^2\right)\left|\psi\right|^2\right|
\\ \nonumber &\leq B\int_0^1\int_{\Sigma_{\tau}\cap [R_{\infty},\infty)}\frac{\left|\psi\right|^2}{r^2}
\leq B\int_{\Sigma_0}\mathbf{J}^N_{\mu}[\psi]n^{\mu}_{\Sigma_0}.
\end{align}

Combining everything implies
\[
\left|\int_{-\infty}^{\infty}\sum_{m\ell}\left(\int_{-\infty}^{\infty}2\left(1-\chi_{R_{\infty}}\right)\text{Re}\left(u'\overline H\right)\right)\, d\omega\right| \leq B\int_{\Sigma_0}\mathbf{J}^N_{\mu}[\psi]n^{\mu}_{\Sigma_0}.
\]

\subsubsection{Large $r$: $\text{Q}^T$ error terms}\label{largeRQT}

An application of Plancherel (see Section~\ref{separationSubsection}) gives
\begin{align}
&\left|\int_{-\infty}^{\infty}\sum_{m\ell}\left(\int_{-\infty}^{\infty}\omega\left(1-\chi_{R_{\infty}}\right)\text{Im}\left(H\overline u\right)\right)\, d\omega\right|\\  \nonumber &=\left|\int_0^{\infty}\int_{-\infty}^{\infty}\int_{\mathbb{S}^2}\Delta\left(1-\chi_{R_{\infty}}\right)\text{Re}\left(\partial_t\psi_{\text{\Rightscissors}}\overline{F}\right)\ \sin\theta\, dt\, dr^*\, d\theta\, d\phi\right|.
\end{align}
We have
\[F = \left(r^2+a^2\right)^{-1}\rho^2\left(2g^{tt}\dot\xi\partial_t\psi +
 2g^{t\phi}\dot\xi\partial_{\phi}\psi + g^{tt}\ddot\xi\psi\right),\qquad
\partial_t\psi_{\text{\Rightscissors}} = \dot\xi\psi + \xi\partial_t\psi.\]
To ease the notation, let us introduce
\[W\left(r,\theta\right) := \Delta\left(r^2+a^2\right)^{-1}\left(1-\chi_{R_{\infty}}\right)\rho^2.\]
Then, suppressing the $\sin\theta\, dt\, dr^*\, d\theta\, d\phi$, we have
\begin{align}
\left|\int_0^{\infty}\right.&\left.\int_{-\infty}^{\infty}\int_{\mathbb{S}^2}\Delta\left(1-\chi_{R_{\infty}}\right)\text{Re}\left(\partial_t\psi_{\text{\Rightscissors}}\overline{F}\right)\right| \\
\nonumber
 \le &2\left|\int_0^{\infty}\int_{-\infty}^{\infty}\int_{\mathbb{S}^2}W\left(r,\theta\right)g^{t\phi}\dot\xi\text{Re}\left(\left(\dot\xi\psi + \xi\partial_t\psi\right)\overline{\partial_{\phi}\psi}\right)\right|
\\ \nonumber &+\left|\int_0^{\infty}\int_{-\infty}^{\infty}\int_{\mathbb{S}^2}W\left(r,\theta\right)\text{Re}\left(\left(\dot\xi\psi + \xi\partial_t\psi\right)\overline{\left(2g^{tt}\dot\xi\partial_t\psi + g^{tt}\ddot\xi\psi\right)}\right)\right|.
\end{align}
Several of the above terms are easy to deal with:
\begin{align*}
&\left|\int_0^{\infty}\int_{-\infty}^{\infty}\int_{\mathbb{S}^2}W\left(r,\theta\right)g^{t\phi}\dot\xi\text{Re}\left(\left(\dot\xi\psi\right)\overline{\partial_{\phi}\psi}\right)\right| = \left|\int_0^{\infty}\int_{-\infty}^{\infty}\int_{\mathbb{S}^2}W\left(r,\theta\right)g^{t\phi}\left(\dot\xi\right)^2\partial_{\phi}\left|\psi\right|^2\right| = 0.
\end{align*}
\begin{align*}
&\left|\int_0^{\infty}\int_{-\infty}^{\infty}\int_{\mathbb{S}^2}W\left(r,\theta\right)g^{t\phi}\dot\xi\xi\text{Re}\left(\left(\partial_t\psi\right)\overline{\partial_{\phi}\psi}\right)\right| \leq \int_0^1\int_{\Sigma_{\tau}}\mathbf{J}^N_{\mu}[\psi]n^{\mu}_{\Sigma_{\tau}} \leq \int_{\Sigma_0}\mathbf{J}^N_{\mu}[\psi]n^{\mu}_{\Sigma_0}.
\end{align*}
\begin{align*}
2\left|\int_0^{\infty}\int_{-\infty}^{\infty}\int_{\mathbb{S}^2}W\left(r,\theta\right)g^{tt}\xi\dot\xi\left|\partial_t\psi\right|^2\right| \leq \int_0^1\int_{\Sigma_{\tau}}\mathbf{J}^N_{\mu}[\psi]n^{\mu}_{\Sigma_{\tau}} \leq \int_{\Sigma_0}\mathbf{J}^N_{\mu}[\psi]n^{\mu}_{\Sigma_0}.
\end{align*}
Combining everything yields
\begin{align}\label{partialProgAbsorbQT}
&\left|\int_{-\infty}^{\infty}\sum_{m\ell}\left(\int_{-\infty}^{\infty}\omega\left(1-\chi_{R_{\infty}}\right)\text{Im}\left(H\overline u\right)\right)\, d\omega\right|
\\ \nonumber &\leq\left|\int_0^{\infty}\int_{-\infty}^{\infty}\int_{\mathbb{S}^2}W\left(r,\theta\right)\left(\text{Re}\left(\dot\xi\psi\overline{\left(2g^{tt}\dot\xi\partial_t\psi + g^{tt}\ddot\xi\psi\right)}\right) + \text{Re}\left(\xi\partial_t\psi\overline{g^{tt}\ddot\xi\psi}\right)\right)\right| + \int_{\Sigma_0}\mathbf{J}^N_{\mu}[\psi]n^{\mu}_{\Sigma_0}.
\end{align}

We now focus on the first term on the right hand side:
\begin{align}
&\left|\int_0^{\infty}\int_{-\infty}^{\infty}\int_{\mathbb{S}^2}g^{tt}W\left(r,\theta\right)\left(2\left(\dot\xi\right)^2\text{Re}\left(\psi\overline{\partial_t\psi}\right) + \dot\xi\ddot\xi\left|\psi\right|^2 + \xi\ddot\xi\text{Re}\left(\partial_t\psi\overline{\psi}\right) \right)\right|
\\ \nonumber &=\left|\int_0^{\infty}\int_{-\infty}^{\infty}\int_{\mathbb{S}^2}g^{tt}W\left(r,\theta\right)\left(2\left(\dot\xi\right)^2\text{Re}\left(\psi\overline{\partial_t\psi}\right) - \left(\dot\xi\right)^2\text{Re}\left(\psi\overline{\partial_t\psi}\right) + \xi\ddot\xi\text{Re}\left(\partial_t\psi\overline{\psi}\right) \right)\right|
\\ \nonumber &=\left|\int_0^{\infty}\int_{-\infty}^{\infty}\int_{\mathbb{S}^2}g^{tt}W\left(r,\theta\right)\left(\left(\dot\xi\right)^2\text{Re}\left(\psi\overline{\partial_t\psi}\right)+ \xi\ddot\xi\text{Re}\left(\partial_t\psi\overline{\psi}\right) \right)\right|
\\ \nonumber &=\left|\int_0^{\infty}\int_{-\infty}^{\infty}\int_{\mathbb{S}^2}g^{tt}W\left(r,\theta\right)\left(\left(\dot\xi\right)^2\text{Re}\left(\psi\overline{\partial_t\psi}\right)- \left(\dot\xi\right)^2\text{Re}\left(\partial_t\psi\overline{\psi}\right)-\xi\dot\xi\text{Re}\left(\partial_t^2\psi\overline{\psi}\right) - \xi\dot\xi\left|\partial_t\psi\right|^2 \right)\right|
\\ \nonumber
&=\left|\int_0^{\infty}\int_{-\infty}^{\infty}\int_{\mathbb{S}^2}g^{tt}W\left(r,\theta\right)\left(\xi\dot\xi\text{Re}\left(\partial_t^2\psi\overline{\psi}\right) + \xi\dot\xi\left|\partial_t\psi\right|^2 \right)\right|
\\ \nonumber
&\leq \int_{\Sigma_0}\mathbf{J}^N_{\mu}[\psi]n^{\mu}_{\Sigma_0} +
\left|\int_0^{\infty}\int_{-\infty}^{\infty}\int_{\mathbb{S}^2}g^{tt}W\left(r,\theta\right)\left(\xi\dot\xi\text{Re}\left(\partial_t^2\psi\overline{\psi}\right)\right)\right|.
\end{align}
Instead of additional integration by parts on this last term, we use that $\psi$ solves the wave equation, which we write out here for reference:
\begin{align}\label{waveEqn}
&g^{tt}\partial_t^2\psi = \frac{4Mar}{\rho^2\Delta}\partial_{t,\phi}^2\psi - \frac{\Delta-a^2\sin^2\theta}{\Delta\rho^2\sin^2\theta}\partial_{\phi}^2\psi -
\frac{r^2+a^2}{\Delta\rho^2}\partial_{r^*}\left(\left(r^2+a^2\right)\partial_{r^*}\psi\right) - \frac{1}{\rho^{2}\sin\theta}\partial_{\theta}\left(\sin\theta\partial_{\theta}\psi\right).
\end{align}
Substituting the right hand side of~(\ref{waveEqn}) for $g^{tt}\partial_t^2\psi$ , carrying out a straightforward integration by parts, and applying a finite in time energy inequality shows
\[\left|\int_0^{\infty}\int_{-\infty}^{\infty}\int_{\mathbb{S}^2}g^{tt}W\left(r,\theta\right)\left(\xi\dot\xi\text{Re}\left(\partial_t^2\psi\overline{\psi}\right)\right)\right| \leq B\int_{\Sigma_0}\mathbf{J}^N_{\mu}[\psi]n^{\mu}_{\Sigma_0}.\]

Thus, we have shown
\[
\left|\int_{-\infty}^{\infty}\sum_{m\ell}\left(\int_{-\infty}^{\infty}\omega\left(1-\chi_{R_{\infty}}\right)\text{Im}\left(H\overline u\right)\right)\, d\omega\right| \leq B\int_{\Sigma_0}\mathbf{J}^N_{\mu}[\psi]n^{\mu}_{\Sigma_0}.
\]

\subsubsection{Large $r$: $\text {\fontencoding{LGR}\selectfont \koppa}^{\tilde y}$ error terms}\label{largeRkoppa2}

Since $\left|\tilde{y}\right| \leq \exp\left(-br^*\right)\text{ as }r^*\to \infty$,
we may estimate the term
$\int_{\omega_{\rm low} \leq \left|\omega\right| \leq \omega_{\rm high}}\sum_{\{m\ell : \Lambda \leq \epsilon_{\rm width}^{-1}\omega_{\rm high}^2\}}\left(\int_{-\infty}^{\infty}\left(1-\chi_{R_{\infty}}\right)\tilde y\text{Re}\left(u'\overline H\right)\right)\, d\omega$
exactly as in
Section~\ref{boundedRError}. We obtain
\begin{align*}
&\left|\int_{\omega_{\rm low} \leq \left|\omega\right| \leq \omega_{\rm high}}\sum_{\{m\ell : \Lambda \leq \epsilon_{\rm width}^{-1}\omega_{\rm high}^2\}}\left(\int_{-\infty}^{\infty}\left(1-\chi_{R_{\infty}}\right)\tilde y\text{Re}\left(u'\overline H\right)\right)\, d\omega\right| \\
&\qquad \leq B\epsilon^{-1}\int_{\Sigma_0}\mathbf{J}^N_{\mu}[\psi]n^{\mu}_{\Sigma_0} + \epsilon\int_0^{\infty}\int_{\Sigma_{\tau}\cap [R_{\infty},\infty)}e^{-br^*}\left(\left|\partial_{r^*}\psi\right|^2 + \left|\psi\right|^2\right).
\end{align*}

\subsubsection{Absorbing the error terms}
Combining the results of Sections~\ref{boundedRError},~\ref{largeRkoppa},~\ref{largeRQT} and~\ref{largeRkoppa2} gives
\begin{align}\label{cutoffErrorHandled}
&\int_{-\infty}^{\infty}\sum_{m\ell}\left(\int_{-\infty}^{\infty}H \cdot  (f, h, y, \chi) \cdot (u, u')\right)\, d\omega
\\ \nonumber &\qquad\leq B\int_{\Sigma_0}\mathbf{J}^N_{\mu}[\psi]n^{\mu}_{\Sigma_0} +\epsilon\int_0^{\infty}\int_{\Sigma_{\tau}}r^{-1-\delta}\left(\left|\partial_{r^*}\psi\right|^2 + r^{-2}\left|\psi\right|^2 + \zeta\mathbf{J}^N_{\mu}[\psi]n^{\mu}_{\Sigma_{\tau}}\right).
\end{align}
Taking $\epsilon$ sufficiently small and combining~(\ref{cutoffErrorHandled}) with~(\ref{addedLargeR2}),~(\ref{redshiftAdded}) and~(\ref{boundNullInf}) implies
\begin{align}\label{cutoffsDealtWith}
&b\int_{\mathcal{H}^+(0,\infty)}\mathbf{J}^N_{\mu}[\psi]n^{\mu}_{\mathcal{H}^+} + b\int_{\mathcal{I}^+_0}\mathbf{J}^N_{\mu}[\psi]n^{\mu}_{\mathcal{I}^+}\\
\nonumber &\qquad+b\int_0^{\infty}\int_{\Sigma_{\tau}}\left(\left|\partial_{r^*}\psi\right|^2r^{-1-\delta} + \left|\psi\right|^2r^{-3-\delta} +
\zeta \left|T\psi\right|^2r^{-1-\delta} + \zeta\left|\nabb\psi\right|^2r^{-1}\right)\, d\tau  \\
 \nonumber &\le B\left(\delta\right)\int_{\Sigma_0}\mathbf{J}^N_{\mu}[\psi]n^{\mu}_{\Sigma_0} + \int_{\omega_{\rm low} \leq \left|\omega\right| \leq \omega_{\rm high}}\sum_{\{m\ell : \Lambda \leq \epsilon_{\rm width}^{-1}\omega_{\rm high}^2\}}\left|u\left(-\infty\right)\right|^2\, d\omega.
\end{align}

\subsection{The non-stationary bounded frequency horizon term}
\label{whitinghere}

Finally, we come to the term $\int_{\omega_{\rm low} \leq \left|\omega\right| \leq \omega_{\rm high}}\sum_{\{m\ell : \Lambda \leq \epsilon_{\rm width}^{-1}\omega_{\rm high}^2\}}\left|u\left(-\infty\right)\right|^2\, d\omega$. Since we do not have a small parameter, we cannot hope to absorb this error term into the left hand side of~(\ref{cutoffsDealtWith}); however, this term has already been dealt with in the context of the quantitative mode stability work~\cite{shlapRot}:

\begin{proposition}\label{propShlapRot}Let $\psi$ be a future-integrable solution to~(\ref{WAVE}). Define $u$ by~(\ref{uDef}) with $\Psi = \psi_{\text{\Rightscissors}}$. Then
\[\int_{\omega_{\rm low} \leq \left|\omega\right| \leq \omega_{\rm high}}\sum_{\{m\ell : \Lambda \leq \epsilon_{\rm width}^{-1}\omega_{\rm high}^2\}}\left|u\left(-\infty\right)\right|^2\, d\omega \leq B\int_{\Sigma_0}\mathbf{J}^N_{\mu}[\psi]n^{\mu}_{\Sigma_0}.\]
\end{proposition}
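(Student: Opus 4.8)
The plan is to relate the quantity $\int_{\omega_{\rm low}\le|\omega|\le\omega_{\rm high}}\sum_{\{m\ell:\Lambda\le\epsilon_{\rm width}^{-1}\omega_{\rm high}^2\}}|u(-\infty)|^2\,d\omega$ to the transmission amplitude at the horizon and invoke the quantitative mode stability estimate of~\cite{shlapRot}, which asserts precisely a bound on $|u(-\infty)|^2$ (the horizon flux coefficient) in terms of the inhomogeneity $H$ and boundary data of the radial ODE~(\ref{e3iswsntouu}) uniform over the compact frequency set $\{\omega_{\rm low}\le|\omega|\le\omega_{\rm high}\}\cap\{\Lambda\le\epsilon_{\rm width}^{-1}\omega_{\rm high}^2\}$. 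Since the set of admissible triples $(\omega,m,\Lambda)$ under consideration is bounded, it contains no genuine real-axis resonance (Whiting/Shlapentokh-Rothman), so the map from data to $u$ is invertible with a bound that, by compactness and continuity of the relevant quantities in $(\omega,m,\Lambda)$ and in $a\in[0,a_0]$, is uniform; this is the content of~\cite{shlapRot}, quoted here.

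First I would recall that $u$ is constructed via Carter's separation from $\Psi=\psi_{\text{\Rightscissors}}=\xi\psi$, so that $H=H^{(a\omega)}_{m\ell}$ arises from $F=\Box_g\psi_{\text{\Rightscissors}}=2\nabla^\mu\xi\nabla_\mu\psi+(\Box_g\xi)\psi$, which is supported in $\tau\in[0,1]$ and in a bounded $r$-region. Then I would apply the quantitative mode stability bound of~\cite{shlapRot} pointwise in $(\omega,m,\Lambda)$ in the relevant range: it gives
\[
|u(-\infty)|^2\le B\Big(\int_{-\infty}^{\infty}(1+r^2)|H|^2\,dr^*+\text{(boundary contributions)}\Big),
\]
with $B$ uniform over the compact frequency window and over $|a|\le a_0$. (The boundary terms vanish here because $\Psi$ is outgoing in the sense of Definition~\ref{sufficient2} and $u$ satisfies~(\ref{eq:b-}),~(\ref{eq:b+}).) Integrating this inequality in $\omega$ over $\{\omega_{\rm low}\le|\omega|\le\omega_{\rm high}\}$ and summing in $m,\ell$ over $\{\Lambda\le\epsilon_{\rm width}^{-1}\omega_{\rm high}^2\}$, the right-hand side becomes, via Plancherel (the explicit formulas of Section~\ref{separationSubsection}) and the volume form computations of Section~\ref{usefulcomps}, controlled by $\int_0^\infty\int_{\Sigma_\tau\cap\{r\le R\}}|F|^2$ for a fixed $R$ depending on the cutoff $\xi$ and on $R_\infty$.

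It remains to bound $\int_0^\infty\int_{\Sigma_\tau\cap\{r\le R\}}|F|^2$ by $B\int_{\Sigma_0}\mathbf{J}^N_\mu[\psi]n^\mu_{\Sigma_0}$. Since $F$ is supported in $\tau\in[0,1]$ and involves only $\psi$ and its first derivatives (the coefficients $\nabla^\mu\xi$, $\Box_g\xi$ being smooth and compactly supported), one has $\int_0^1\int_{\Sigma_\tau\cap\{r\le R\}}|F|^2\le B\int_0^1\int_{\Sigma_\tau}\mathbf{J}^N_\mu[\psi]n^\mu_{\Sigma_\tau}$, and by the finite-in-time energy estimate (a consequence of Proposition~\ref{specialises..}, applied over the compact time interval $[0,1]$) this is $\le B\int_{\Sigma_0}\mathbf{J}^N_\mu[\psi]n^\mu_{\Sigma_0}$. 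Chaining these estimates gives the claimed inequality.

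\textbf{Main obstacle.} The genuinely substantive input is the quantitative mode stability estimate itself, i.e.~the uniformity of the constant $B$ over the compact frequency window and over $|a|\le a_0$; this is exactly what~\cite{shlapRot} supplies and is assumed here, so within the present paper the proof is essentially bookkeeping. The only points requiring care are (i) verifying that the boundary data entering the~\cite{shlapRot} estimate indeed vanish under Definition~\ref{sufficient2}, which is where Lemmas~\ref{boundHorizon} and~\ref{boundInfinity} and the outgoing condition are used, and (ii) handling the almost-everywhere-in-$\omega$ regularity of $u$ (Lemma~\ref{aeRegular}) so that the pointwise-in-frequency estimate may be legitimately integrated.
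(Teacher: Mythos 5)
You take a genuinely different route from the paper. The paper's proof is much more direct: it quotes Theorem~1.9 of~\cite{shlapRot} precisely in the form that bounds the frequency-localised horizon term $\int\sum|u(-\infty)|^2\,d\omega$ by the energy flux through a \emph{hyperboloidal} hypersurface $\widetilde\Sigma_0$, constructs such a $\widetilde\Sigma_0$ agreeing with $\Sigma_0$ on $\{r\le R\}$ and lying to its future, and then compares $\int_{\widetilde\Sigma_0}\mathbf{J}^N$ with $\int_{\Sigma_0}\mathbf{J}^N$ by a $\mathbf{J}^T$ energy estimate in the far region where $T$ is timelike. By contrast, you postulate a pointwise-in-frequency bound $|u(-\infty)|^2\le B\big(\int(1+r^2)|H|^2\,dr^*+\cdots\big)$, integrate it over the compact frequency window, pass to physical space by Plancherel, and then use a finite-in-time energy estimate on $F$. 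Both routes would lead to the same inequality, and your pipeline is not unreasonable, but it relies on a specific form of the~\cite{shlapRot} estimate that you have guessed rather than verified; the form the authors actually quote is directly in terms of hyperboloidal energy, which makes the Plancherel step and the $r$-weight bookkeeping unnecessary.

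Two points to tighten if you pursue your route. First, the phrase ``boundary terms vanish because $\Psi$ is outgoing'' is misleading: the outgoing conditions~(\ref{eq:b-}),~(\ref{eq:b+}) do \emph{not} say that $u(-\infty)=0$ (that quantity is precisely what is being estimated); they only say that $u$ decomposes purely into the outgoing branch at each end, which is what makes the Green's function of~(\ref{e3iswsntouu}) well-defined and bounded away from modes/resonances. You should state that the outgoing condition removes the kernel of the inhomogeneous problem, not that it kills boundary contributions. Second, the assertion that $F$ is supported in $r\le R$ ``for a fixed $R$ depending on the cutoff $\xi$ and $R_\infty$'' is slightly off: $\xi$ is a function of $\tau$ alone, so the $r$-support of $F$ actually comes from the compact support of the initial data and finite speed of propagation over $\tau\in[0,1]$. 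This is still fine for your purposes, but the dependence of $R$ is then on the data, not on $\xi$; note that the paper's hyperboloidal argument avoids this issue entirely by never passing through a compactly-supported-$F$ estimate.
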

\begin{proof}This follows immediately from Theorem 1.9 of~\cite{shlapRot} if we replace $\Sigma_0$ with a hyperboloidal hypersurface $\tilde\Sigma_0$. For any $1 \ll R$ one can easily find a hyperboloidal hypersurface $\tilde\Sigma_0$ which agrees with $\Sigma_0$ on $\{r \leq R\}$ and which lies to the future of $\Sigma_0$. If we choose $R$ large enough so that $T$ is timelike in the region $\{r \geq R\}$, then a $\mathbf{J}^T_{\mu}$ energy estimate will immediately imply
\begin{align*}
\int_{\omega_{\rm low} \leq \left|\omega\right| \leq \omega_{\rm high}}\sum_{\{m\ell : \Lambda \leq \epsilon_{\rm width}^{-1}\omega_{\rm high}^2\}}\left|u\left(-\infty\right)\right|^2\, d\omega &\leq B\int_{\tilde\Sigma_0}\mathbf{J}^N_{\mu}[\psi]n^{\mu}_{\tilde \Sigma_0}\leq B\int_{\Sigma_0}\mathbf{J}^N_{\mu}[\psi]n^{\mu}_{\Sigma_0}.
\end{align*}
\end{proof}
\begin{remark}We observe that the appeal to~\cite{shlapRot} is not necessary in the case of $a \ll M$ or if $\psi$
is only supported on sufficiently high azimuthal frequencies.
\end{remark}

Combining~(\ref{cutoffsDealtWith}) with Proposition~\ref{propShlapRot} finishes
the proof of Proposition~\ref{closedILED}.

\subsection{An inhomogeneous estimate}\label{ILEDinhomo}
In Sections~\ref{higher} and~\ref{continuityargsec} we will need to consider future-integrable solutions $\Psi$ to the \emph{inhomogeneous} wave equation $\Box_{g_{a,M}}\Psi = F$.

Let us first generalise the definition of future-integrability to apply
to general smooth $\Psi$.
\begin{definition}
With
cutoff $\xi\left(\tau\right)$ as in Section~\ref{futIntSec}, we say that
$\Psi:\mathcal{R}_0\to\mathbb R$ is future-integrable if $\xi\Psi$ satisfies
Definitions~\ref{sufficient} and~\ref{sufficient2}.
\end{definition}

\begin{proposition}\label{closedILEDinhomo}Let $\Psi$ be a future
integrable solution of the inhomogeneous wave equation $\Box_{g_{a,M}}\Psi = F$.
Then, for every $\delta > 0$ and $\epsilon > 0$,
\begin{align}\label{inhomo1}
&\int_{\mathcal{H}^+(0,\infty)}\mathbf{J}^N_{\mu}[\Psi]n^{\mu}_{\mathcal{H}^+}+\int_{\mathcal{I}^+}\mathbf{J}^N_{\mu}[\Psi]n^{\mu}_{\mathcal{I}^+}\\
\nonumber&\qquad+
\int_0^{\infty}\int_{\Sigma_{\tau}}\left(\left|\tilde{Z}^*\Psi\right|^2r^{-1-\delta} + \left|\Psi\right|^2r^{-3-\delta} +
\zeta \left|T\Psi\right|^2r^{-1-\delta} + \zeta\left|\nabb\Psi\right|^2r^{-1}\right)\, d\tau
\\ \nonumber &\leq
B\left(\delta\right)\left(\int_{\Sigma_0}\mathbf{J}^N_{\mu}[\Psi]n^{\mu}_{\Sigma_0} + \int_{\Sigma_0}\left|\Psi\right|^2 + \int_0^{\infty}\int_{\Sigma_{\tau}}\left[\epsilon^{-1}r^{1+\delta}\left|F\right|^2 + \epsilon\left(1-\zeta\right)\left(\left|T\Psi\right|^2+\left|\Phi\Psi\right|^2\right)\right]\right).
\end{align}
If $F$ is supported in the region $\{r \leq R\}$, then one may drop the $\int_{\Sigma_0}\left|\Psi\right|^2$ term:
\begin{align}\label{inhomo2}
{\rm L.H.S.\ of\ }(\ref{inhomo1})\,\, \le B\left(\delta,R\right)\left(\int_{\Sigma_0}\mathbf{J}^N_{\mu}[\Psi]n^{\mu}_{\Sigma_0}+ \int_0^{\infty}\int_{\Sigma_{\tau}}\left[\epsilon^{-1} r^{1+\delta}\left|F\right|^2 + \epsilon\left(1-\zeta\right)\left(\left|T\Psi\right|^2+\left|\Phi\Psi\right|^2\right)\right]
\right).
\end{align}
If $F$ is supported in the region $\{r \geq 3M + s_+\}$, then one may drop the $\left(1-\zeta\right)\left(\left|T\Psi\right|^2+\left|\Phi\Psi\right|^2\right)$ term:
\begin{align}\label{inhomo3}
{\rm L.H.S.\ of\ }(\ref{inhomo1})\,\, \le B\left(\delta \right)\left(\int_{\Sigma_0}\mathbf{J}^N_{\mu}[\Psi]n^{\mu}_{\Sigma_0}+ \int_0^{\infty}\int_{\Sigma_{\tau}}r^{1+\delta}\left|F\right|^2
+\int_{\Sigma_0}|\Psi|^2\right).
\end{align}
If $F$ is supported in the region $\{R\ge r \geq 3M + s_+\}$, then one may drop both:
\begin{align}\label{inhomo4}
{\rm L.H.S.\ of\ }(\ref{inhomo1})\,\, \le B\left(\delta,R\right)\left(\int_{\Sigma_0}\mathbf{J}^N_{\mu}[\Psi]n^{\mu}_{\Sigma_0}+ \int_0^{\infty}\int_{\Sigma_{\tau}}r^{1+\delta}\left|F\right|^2
\right).
\end{align}
\end{proposition}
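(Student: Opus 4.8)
The plan is to deduce Proposition~\ref{closedILEDinhomo} by running the same frequency-localised summation argument that proved Proposition~\ref{closedILED}, but now keeping track of the inhomogeneity $F$ carefully, and then specialising the error bookkeeping according to the support assumptions on $F$. First I would set $\Psi_{\text{\Rightscissors}} = \xi\Psi$ as in Section~\ref{futIntSec}, noting $\Box_g\Psi_{\text{\Rightscissors}} = \xi F + 2\nabla^\mu\xi\nabla_\mu\Psi + (\Box_g\xi)\Psi$; by future-integrability $\Psi_{\text{\Rightscissors}}$ satisfies Definitions~\ref{sufficient} and~\ref{sufficient2}, so Lemma~\ref{aeRegular} applies and for a.e.\ $\omega$ the coefficients $u^{(a\omega)}_{m\ell}$ are classical solutions of~(\ref{e3iswsntouu}) with the boundary conditions~(\ref{eq:b-}),~(\ref{eq:b+}). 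Apply Theorem~\ref{phaseSpaceILED}, integrate in $\omega$, sum in $m,\ell$, use~(\ref{insteadofthis}), and apply Plancherel exactly as in Sections~\ref{cutoffSec}--\ref{whitinghere} to obtain the analogue of~(\ref{cutoffsDealtWith}). The difference from the homogeneous case is that the inhomogeneous term $H$ in the right-hand side $H\cdot(f,h,y,\chi)\cdot(u,u')$ now has a genuine contribution from $\xi F$ in addition to the cutoff-commutator terms; the latter are absorbed exactly as in Sections~\ref{boundedRError}--\ref{largeRkoppa2} (these only used that $\Psi$ solves \emph{some} wave equation with controlled right-hand side near the cutoff, plus finite-in-time energy estimates, which hold for the inhomogeneous equation up to the extra $\int_{\Sigma_0}|\Psi|^2$ and $\int r^{1+\delta}|F|^2$ terms), while the $\xi F$ contribution must be estimated directly.

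The key new estimate is the control of $\int_{-\infty}^\infty \sum_{m\ell}\int_{-\infty}^\infty (\xi F)\cdot(f,h,y,\chi)\cdot(u,u')\,dr^*\,d\omega$. By Plancherel this is a physical-space spacetime integral of the form $\int_0^\infty\int_{\Sigma_\tau} F\cdot(\text{derivatives of }\Psi)$, and splitting into $r\le R_\infty$ and $r\ge R_\infty$ one uses Cauchy--Schwarz with the weight $r^{1+\delta}$: terms pairing $F$ with $\partial_{r^*}\Psi$, $T\Psi$, or $\nabb\Psi$ give $\epsilon^{-1}\int r^{1+\delta}|F|^2 + \epsilon\int r^{-1-\delta}(|\partial_{r^*}\Psi|^2 + \ldots)$, and terms pairing $F$ with $\Psi$ give $\epsilon^{-1}\int r^{1+\delta}|F|^2 + \epsilon\int r^{-3-\delta}|\Psi|^2$; the $\epsilon$-terms are absorbed into the (degenerate) left-hand side provided $\epsilon$ is small, with care near the trapped region where the multiplier coefficients degenerate like $\zeta$. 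One must also feed in Proposition~\ref{ftrs} (red-shift) and Proposition~\ref{lrp} (large-$r$), the latter now genuinely carrying an $\epsilon^{-1}\int_{\{r\ge\tilde R\}}r^{1+\delta}|F|^2$ term which is kept, and the pigeonhole/dyadic-sequence argument of Section~\ref{addInTheRedShift} to avoid circularity with energy boundedness; finally Proposition~\ref{propShlapRot} disposes of the $|u(-\infty)|^2$ term. The $\int_{\Sigma_0}|\Psi|^2$ term enters through the Hardy-inequality step handling the commutator terms supported where $\dot\xi\ne0$, i.e.\ near $\Sigma_0$, together with the zeroth-order term in Proposition~\ref{ftrs}. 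This gives~(\ref{inhomo1}), except that I have glossed over why, for general $F$, one also needs the $\epsilon(1-\zeta)(|T\Psi|^2+|\Phi\Psi|^2)$ term on the right: this is because when the $\text{Q}^f$-type multiplier is paired against $F$ one picks up $f'\,\mathrm{Re}(u\overline H)$ and $f\,\mathrm{Re}(u'\overline H)$ where $f,f'$ do not carry the full $\zeta$-degeneration away from $r_{\rm trap}$, so the natural absorption is into $(1-\zeta)$ copies of the non-degenerate derivatives; I would make this precise by splitting the frequency-localised currents into their degenerate ($\text{Q}^f$ in the $\mathcal G_\natural$ range) and non-degenerate parts and absorbing accordingly.

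The four specialisations then follow by examining which error terms survive. If $\mathrm{supp}(F)\subset\{r\le R\}$, then in all the large-$r$ cutoff estimates of Sections~\ref{largeRkoppa}--\ref{largeRkoppa2} the $\Psi$ appearing is multiplied by $F$ only on a compact $r$-range, so no zeroth-order term at infinity is generated and the $\int_{\Sigma_0}|\Psi|^2$ term can be dropped (the remaining occurrences of $|\Psi|^2$ are on compact $r$-sets and are absorbed via Hardy into $\int_{\Sigma_0}\mathbf J^N_\mu[\Psi]n^\mu$), giving~(\ref{inhomo2}); the constant now depends on $R$. If instead $\mathrm{supp}(F)\subset\{r\ge 3M+s_+\}$, then $F$ is supported where $\zeta\sim(1-3M/r)^2\gtrsim b$ is bounded below, i.e.\ in the non-trapped region, so the pairing of $F$ with the degenerate multiplier terms is in fact a pairing against \emph{non}-degenerate weights, and no $\epsilon(1-\zeta)(|T\Psi|^2+|\Phi\Psi|^2)$ term is needed on the right --- one can take $\epsilon$ effectively fixed, yielding~(\ref{inhomo3}). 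Combining both support restrictions (so $F$ supported in $\{R\ge r\ge 3M+s_+\}$) lets one drop both terms simultaneously, giving~(\ref{inhomo4}) with a constant depending on $R$. The main obstacle I anticipate is the careful bookkeeping in the specialised cases of exactly which copies of $|T\Psi|^2$, $|\Phi\Psi|^2$, $|\Psi|^2$ carry which powers of $r$ and of $\zeta$, and ensuring that in case~(\ref{inhomo3}) the support condition genuinely removes the need for the $\epsilon$-borrowing against the degenerate left-hand side --- this requires being explicit that $r_{\rm trap}\le 3M+s_+-\varepsilon(a_0)$ (from Section~\ref{trappingParam}), so that $\mathrm{supp}(F)$ is disjoint from a neighbourhood of all possible $r_{\rm trap}$ and hence the $\mathcal G_\natural$ current's degenerate factor $(1-r^{-1}r_{\rm trap})^2$ is bounded below on $\mathrm{supp}(F)$.
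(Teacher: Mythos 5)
Your proposal follows essentially the same route as the paper: the paper's own proof is little more than ``repeat the proof of Proposition~\ref{closedILED} \emph{mutatis mutandis},'' pointing out that the $\epsilon(1-\zeta)(|T\Psi|^2+|\Phi\Psi|^2)$ term survives because of trapping degeneration, that the $\int_{\Sigma_0}|\Psi|^2$ term arises from the cutoff error where $\dot\xi\neq 0$ (handled by the fundamental theorem of calculus and Hardy inequalities), and that the specialisations~(\ref{inhomo2})--(\ref{inhomo4}) follow by tracking which of these errors can actually appear under the given support restrictions on $F$. Your unpacking of ``mutatis mutandis'' is more detailed than the paper's, and your identification of $r_{\rm trap}\le 3M+s_+-\varepsilon(a_0)$ as the reason the $(1-\zeta)$ term can be dropped in~(\ref{inhomo3}) is exactly the relevant point.

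One inaccuracy worth flagging: you attribute the survival of the $\epsilon(1-\zeta)(|T\Psi|^2+|\Phi\Psi|^2)$ term to the $\text{Q}^f$-type pairings $f\,\mathrm{Re}(u'\overline H)$ and $f'\,\mathrm{Re}(u\overline H)$ in the $\mathcal{G}_\natural$ range. Those are in fact harmless: after Cauchy--Schwarz they produce $\epsilon|u'|^2$ and $\epsilon|u|^2$, both of which are absorbed by the \emph{non}-degenerate $|u'|^2$ and $+1\cdot|u|^2$ coefficients on the left-hand side of~(\ref{fromPhaseSpace2}). The term that genuinely cannot be absorbed near $r_{\rm trap}$ is the one coming from the conserved-energy multiplier, namely $E\omega\,\mathrm{Im}(H\overline u)$ (and, in other ranges, $\chi_1(\omega-\upomega_+m)\,\mathrm{Im}(H\overline u)$): Cauchy--Schwarz on this produces $\epsilon\,\omega^2|u|^2$ (resp.\ $\epsilon(\omega-\upomega_+m)^2|u|^2$), against which the left-hand side offers only the degenerating coefficient $(1-r^{-1}r_{\rm trap})^2(\omega^2+\Lambda)|u|^2$, hence the residual $\epsilon(1-\zeta)(|T\Psi|^2+|\Phi\Psi|^2)$ after Plancherel. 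This is a bookkeeping detail; the overall structure of your argument, and your treatment of the four cases, is correct and matches the paper.
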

\begin{proof}
Repeating the proof of Proposition~\ref{closedILED} \emph{mutatis mutandis} yields
\begin{align*}
\int_{\mathcal{H}^+(0,\infty)}\mathbf{J}^N_{\mu}[\Psi]n^{\mu}_{\mathcal{H}^+}&+\int_{\mathcal{I}^+}\mathbf{J}^N_{\mu}[\Psi]n^{\mu}_{\mathcal{I}^+}
\\
&+
\int_0^{\infty}\int_{\Sigma_{\tau}}\left(\left|\tilde{Z}^*\Psi\right|^2r^{-1-\delta} + \left|\Psi\right|^2r^{-3-\delta} +
\zeta \left|T\Psi\right|^2r^{-1-\delta} + \zeta\left|\nabb\Psi\right|^2r^{-1}\right)\, d\tau \\
\le&
B\left(\delta\right)\left(\int_{\Sigma_0}\mathbf{J}^N_{\mu}[\Psi]n^{\mu}_{\Sigma_0} + \int_0^{\infty}\int_{\Sigma_{\tau}}\left[\epsilon^{-1}r^{1+\delta}\left|F\right|^2 + \epsilon\left(1-\zeta\right)\left(\left|T\Psi\right|^2+\left|\Phi\Psi\right|^2\right)\right]\right)
 \\ \nonumber &+ \int_0^1\int_{\Sigma_s}\mathbf{J}^N_{\mu}[\Psi_{\text{\Rightscissors}}]n^{\mu}_{\Sigma_s}\, ds.
\end{align*}
We cannot absorb the $\epsilon\left(1-\zeta\right)\left(\left|T\Psi\right|^2+\left|\Phi\Psi\right|^2\right)$ term into the left hand side because of the degeneration due to trapping. The final term on the right hand side arises due to the fact that $\Psi_{\text{\Rightscissors}}$ and $\Psi$ differ when $\dot\xi \neq 0$; since there are no weights in $r$, we cannot hope to absorb this term into the left hand side. However, an application of the fundamental theorem of calculus and Hardy inequalities easily allows us to finish the proof of~(\ref{inhomo1}).

In the case where $F$ is compactly supported in the region $\{r \leq R\}$, the proof of~(\ref{closedILED}) yields
\begin{align*}
\int_{\mathcal{H}^+(0,\infty)}\mathbf{J}^N_{\mu}[\Psi]n^{\mu}_{\mathcal{H}^+}
&+\int_{\mathcal{I}^+}\mathbf{J}^N_{\mu}[\Psi]n^{\mu}_{\mathcal{I}^+}\\
&\qquad+
\int_0^{\infty}\int_{\Sigma_{\tau}}\left(\left|\tilde{Z}^*\Psi\right|^2r^{-1-\delta} + \left|\Psi\right|^2r^{-3-\delta} +
\zeta \left|T\Psi\right|^2r^{-1-\delta} + \zeta\left|\nabb\Psi\right|^2r^{-1}\right)\, d\tau \\
\le&
B\left(\delta\right)\left(\int_{\Sigma_0}\mathbf{J}^N_{\mu}[\Psi]n^{\mu}_{\Sigma_0} + \int_0^{\infty}\int_{\Sigma_{\tau}}\left[\epsilon^{-1}r^{1+\delta}\left|F\right|^2 + \epsilon\left(1-\zeta\right)\left(\left|T\Psi\right|^2+\left|\Phi\Psi\right|^2\right)\right]\right)
\\ \nonumber &+ \int_0^1\int_{\Sigma_s\cap [r_+,R]}\mathbf{J}^N_{\mu}[\Psi_{\text{\Rightscissors}}]n^{\mu}_{\Sigma_s}\, ds.
\end{align*}
The proof of~(\ref{inhomo2}) follows from an application of Hardy inequalities and a finite in time energy estimate to the last term on the right hand side.

The proof of~(\ref{inhomo3}) and $(\ref{inhomo4})$ follow from the same reasoning as above \emph{mutatis mutandis}.
\end{proof}
\begin{remark}After one has proved Theorem~\ref{theResult} it is possible to revisit the inhomogeneous problem and prove a sharper version of this proposition; however, we shall refrain from a systematic treatment of the inhomogeneous problem.
\end{remark}

\section{The higher order statement for future-integrable solutions}\label{higher}

Section~\ref{summation} has established the integrated decay statement~(\ref{protasn1b}) for the class of future-integrable solutions to the wave equation $(\ref{WAVE})$.
Retaining
this restriction, we will  in this section upgrade this statement to the higher order~(\ref{protasn1}).\begin{bigprop}
\label{h.o.s.suff}
Let $M>0$, $0\le a_0<M$.
Let $|a|\le a_0$ and
let $\psi$ be a future integrable solution of $(\ref{WAVE})$ on $\mathcal{R}_0$.
Then, for all $\delta>0$ and all integers $j\ge 1$, the following bound holds
\begin{align}
\nonumber
&\int_{\mathcal{H}^+(0,\infty)}\sum_{1\le i_1+i_2+i_3\le j}
|\nabb^{i_1}T^{i_2}(\tilde Z^*)^{i_3}\psi|^2 + \int_{\mathcal{I}^+}\sum_{1 \leq i \leq j-1}\mathbf{J}^N_{\mu}[N^i\psi]n^{\mu}_{\mathcal{I}^+} \\
\nonumber &\qquad +\int_{\mathcal{R}_0}
r^{-1-\delta}\zeta\sum_{1\le i_1+i_2+i_3\le j}
|\nabb^{i_1}T^{i_2}(\tilde Z^*)^{i_3}\psi|^2\\
\nonumber
&\qquad\qquad +r^{-1-\delta}\sum_{1\le i_1+i_2+i_3\le j-1}
\left(|\nabb^{i_1}T^{i_2}(\tilde Z^*)^{i_3+1}\psi|^2+|\nabb^{i_1}T^{i_2}(Z^*)^{i_3}\psi|^2\right)\\
\le& B\left(\de,j\right)\int_{\Sigma_0} \sum_{0\le i \le j-1}
\mathbf{J}^N_{\mu}[N^i\psi]n^{\mu}_{\Sigma_0}.
\end{align}
\end{bigprop}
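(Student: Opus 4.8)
The plan is to prove Proposition~\ref{h.o.s.suff} by induction on $j$, using the standard philosophy introduced in Section~\ref{HOEhere}: commute the wave equation with the Killing fields $T$, $\Phi$ (multiplied by a cutoff $\chi(r)$ in $r$), and the red-shift vector field $Y = N - K$, and then close the estimate with an elliptic argument exploiting fact (c), namely that $T$ and $\Phi$ span a timelike subspace off the horizon. The base case $j=1$ is exactly Proposition~\ref{closedILED} together with its companion $\mathcal{I}^+$ and $\mathcal{H}^+$ bounds (plus the remark that these are proven concurrently). So suppose the statement holds for $j-1$; I want to upgrade to $j$.

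First I would commute $(\ref{WAVE})$ with $T$. Since $T$ is Killing, $\Box_g(T^k\psi)=0$, so $T^k\psi$ is again a future-integrable solution of the homogeneous equation (future-integrability of $T\psi$ follows from Definition~\ref{sufficient} applied to $\psi_{\text{\Rightscissors}}$, which includes $T$-derivatives). Hence Proposition~\ref{closedILED} applies directly to $T^k\psi$ for $1\le k\le j-1$, yielding control of the $\zeta$-degenerate bulk of $T^{k+1}\psi$, $\zeta|\nabb T^k\psi|^2$, and the horizon/$\mathcal{I}^+$ fluxes of $T^k\psi$, all in terms of $\int_{\Sigma_0}\mathbf{J}^N_\mu[T^i\psi]n^\mu$, which is in turn bounded by $\int_{\Sigma_0}\sum_{0\le i\le j-1}\mathbf{J}^N_\mu[N^i\psi]n^\mu$ after expressing $T$-derivatives in terms of $N$-derivatives on $\Sigma_0$ (both differ by $Y = N-K$, a fixed smooth vector field transverse to the horizon). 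Next I would commute with $Y$, using Proposition~\ref{commuprop}: $\Box_g(Y^k\psi) = \kappa_k Y^{k+1}\psi + \sum_{|{\bf m}|\le k+1,\, m_4\le k} c_{\bf m}E_1^{m_1}E_2^{m_2}L^{m_3}Y^{m_4}\psi + Y^k(\Box_g\psi)$. The crucial sign $\kappa_k>0$ means that applying the red-shift estimate of Proposition~\ref{ftrs} to $\Psi = Y^k\psi$ and using the $\mathbf{J}^N$-multiplier near the horizon produces a \emph{positive} bulk term $\int \mathbf{J}^N_\mu[Y^{k+1}\psi]N^\mu$ that dominates the dangerous $\kappa_k Y^{k+1}\psi$ inhomogeneity; the remaining commutator terms on the right contain at most $k+1$ derivatives with at most $k$ of them being $Y$, hence are controlled by lower-order quantities already handled by the induction hypothesis together with the $T$-commuted estimates just obtained. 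One must be slightly careful that the inhomogeneous estimate needed here is the cutoff/red-shift version (Proposition~\ref{closedILEDinhomo} with $F$ supported near the horizon, i.e.~$(\ref{inhomo2})$) since the $Y^k\psi$ no longer solve the homogeneous equation away from $\mathcal{H}^+$ — actually $Y$ has compact support considerations, so the inhomogeneity is localised and $(\ref{inhomo2})$ applies.

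The main obstacle — and the step I expect to require the most care — is the \textbf{elliptic estimate} that converts control of a timelike second-order derivative into control of \emph{all} derivatives $\nabb^{i_1}T^{i_2}(\tilde Z^*)^{i_3}\psi$ with $i_1+i_2+i_3 = k$. Away from the horizon, fact (c) gives in a neighbourhood of each $r$-value a timelike Killing combination $T + \alpha(r)\Phi$; using a partition of unity in $r$ one commutes with $\chi_i(T + \alpha_i\Phi)$ (the $\Phi$-commutation is harmless since $\Phi$ is Killing and the cutoff error $[\Box_g,\chi_i]$ is lower order and compactly supported in $r$), controls the corresponding timelike second derivative via the already-established $T$- and $Y$-estimates, and then invokes that $\Box_g\psi = 0$ is elliptic in the directions transverse to the span of $\{T,\Phi\}$ — on a fixed $\Sigma_\tau$ slice this is a standard interior elliptic estimate with $r$-weights tracked through the asymptotically flat region, where one additionally needs the large-$r$ estimate of Proposition~\ref{lrp} for the $(\tilde Z^*)^{i_3}$ and $\psi$-weighted terms. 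Near the horizon, $T$ and $\Phi$ degenerate to the null generator $K$, so the elliptic estimate there must instead use the null frame $\{K,Y,E_1,E_2\}$ and the red-shift structure: control of $Y$-derivatives (from the commutation above), $K$-derivatives (= $T + \upomega_+\Phi$, from $T$- and $\Phi$-commutation) and angular $E_i$-derivatives suffices to reconstruct $\tilde Z^*$-derivatives since $\tilde Z^* = \chi Z + (1-\chi)Z^*$ and $Z^*$ is a smooth combination of the null frame near $\mathcal{H}^+$. Assembling: one runs $T$-commutation ($0\le i_2\le j-1$), then for each fixed $i_2$ the $Y$-commutation ladder in $i_3$ from $0$ to $j-1-i_2$, absorbs the red-shift-generated good bulk terms, feeds the elliptic estimate to pick up the $\nabb^{i_1}$ powers, and sums; the right-hand side telescopes to $\int_{\Sigma_0}\sum_{0\le i\le j-1}\mathbf{J}^N_\mu[N^i\psi]n^\mu$ after converting $Y$- and $T$-data on $\Sigma_0$ into $N$-data, completing the induction. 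The $\mathcal{I}^+$-flux terms $\int_{\mathcal{I}^+}\sum_{1\le i\le j-1}\mathbf{J}^N_\mu[N^i\psi]n^\mu$ come along for free since the commuted $\psi$'s are future-integrable and the null-infinity flux bound $(\ref{fluxNullInf})$ of Proposition~\ref{closedILED} applies to each, with the $N^i\psi \to T^i\psi$ identification valid near $\mathcal{I}^+$ where $N = T$.
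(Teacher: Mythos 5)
Your plan matches the paper's proof in its essential architecture: commute with $T$, $\chi\Phi$ and the red-shift vector field $Y$; exploit the sign $\kappa_k>0$ in Proposition~\ref{commuprop} via the red-shift multiplier estimate; and then use elliptic estimates — based on fact (c) off the horizon and the null frame $\{K,Y,E_1,E_2\}$ near $\mathcal{H}^+$ — to reconstruct all derivatives, with the $\mathcal{I}^+$ flux following from $N=T$ at large $r$. That is indeed what the paper does.

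Two places in your write-up, however, do not match the actual mechanism and would cause trouble if written out. You claim the $Y$-commuted estimate can be closed by the inhomogeneous ILED statement~$(\ref{inhomo2})$ because the inhomogeneity is "localised." But the inhomogeneity $\Box_g(Y\psi)=\kappa_1 Y^2\psi+\cdots$ is of the same differential order as what you are trying to control, so feeding it into a right-hand side of the form $\int r^{1+\delta}|F|^2$ would be circular. The paper never applies an inhomogeneous ILED to $Y\psi$ at all; instead it applies the \emph{multiplier} estimate of Proposition~\ref{ftrs} directly to $\Psi=Y\psi$, where $\kappa_1>0$ gives the $\kappa_1 Y^2\psi$ term a favourable sign against the $N$-multiplier, and then the remaining terms in $\mathcal{E}^N[Y\psi]$ are absorbed using Lemmas~\ref{ellipticOffHorizonLocal},~\ref{thetaNearHorizon} and~\ref{ellipticEstimatesLocal} together with the already-established $T$- and $\chi\Phi$-commuted ILED. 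All bulk control of $Y\psi$ away from the horizon comes purely from elliptic estimates; $(\ref{inhomo2})$ plays no role there.

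Relatedly, characterising the $[\Box_g,\chi\Phi]$ error as "lower order" is not quite right either: it contains $2\nabla^\mu\chi\,\nabla_\mu\Phi\psi$, which is of the same order. What saves the argument is \emph{support}: $\chi\equiv 1$ on $[r_+,R_1]$ and $\chi\equiv 0$ on $[R_1+1,\infty)$ with $R_1$ large places the error in a bounded $r$-interval strictly outside both the ergoregion and the trapping set, so that Lemma~\ref{ellipticOffHorizonLargeDecay} bounds it by the $T\psi$- and $\psi$-ILED. The appropriate inhomogeneous statement is then $(\ref{inhomo4})$ of Proposition~\ref{closedILEDinhomo}, not $(\ref{inhomo2})$: one needs the support bounded \emph{below} by $3M+s_+$ as well, since otherwise the $(1-\zeta)\left(|T\Psi|^2+|\Phi\Psi|^2\right)$ term on the right of $(\ref{inhomo1})$ cannot be dropped. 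Once these two points are corrected, your outline — including the partition-of-unity elliptic step, which is a mild stylistic variant of the paper's use of Lemmas~\ref{Nelliptic},~\ref{ellipticEstimates},~\ref{iledEllipticT},~\ref{iledEllipticTPhi} — produces the paper's proof.
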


\subsection{Elliptic estimates}
Before turning to the proof of Proposition~\ref{h.o.s.suff},
we
will require a few standard elliptic estimates for solutions of the wave equation $(\ref{WAVE})$.

Throughout this section,
$M>0$, $0\le a_0<M$, $|a|\le a_0$ will be fixed, and
 $\chi$ will be a cutoff which is identically $1$ on $[r_+,R_1]$ and identically $0$ on $[R_1+1,\infty)$ for a sufficiently large constant $R_1$ whose $r$-value
 will in particular lie
  outside the ergoregion $\mathcal{S}$,
  and $Y$ will be the red-shift commutation vector field from Section~\ref{Nmult}.

 Lastly, for the reader's benefit we recall the following pointwise relation which follows immediately from the algebraic properties of the energy-momentum tensor:
\[
\mathbf{J}^N_{\mu}[\Psi]n^{\mu}_{\Sigma_{\tau}} \geq b\left(\left(T\Psi\right)^2 +
(\tilde{Z}^*\Psi)^2+ \left|\nabb\Psi\right|^2\right).
\]

All the lemmas below refer to solutions $\psi$ of the wave equation $(\ref{WAVE})$
as in the reduction of Section~\ref{WPosed}.

\begin{lemma}\label{Nelliptic}
For $\psi$ as above, we have
\begin{align*}
&\int_{\Sigma_{\tau}}\sum_{1\le i_1+i_2+i_3\le 2}
|\nabb^{i_1}T^{i_2}(\tilde Z^*)^{i_3}\psi|^2 \leq B\int_{\Sigma_{\tau}}\left(\mathbf{J}^N_{\mu}[N\psi] + \mathbf{J}^N_{\mu}[\psi]\right)n^{\mu}_{\Sigma_{\tau}}.
\end{align*}
\end{lemma}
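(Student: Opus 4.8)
\textbf{Proof plan for Lemma~\ref{Nelliptic}.}
The statement asserts that, on a single slice $\Sigma_\tau$, control of the first-order flux of $N\psi$ and of $\psi$ itself yields control of all second-order derivatives of $\psi$ appearing in the list $\nabb^{i_1}T^{i_2}(\tilde Z^*)^{i_3}\psi$ with $1\le i_1+i_2+i_3\le 2$. The mechanism is the standard one: equation~(\ref{WAVE}) is an elliptic equation on $\Sigma_\tau$ once the ``bad'' transversal-to-$\Sigma_\tau$ second derivative has been traded for a $T$-derivative and lower-order terms, exploiting that $T$ and $\tilde Z^*$ together span a direction transversal to $\Sigma_\tau$ except near the horizon, where the red-shift commutator $Y=N-K$ supplies the missing regularity.

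First I would split the slice into the region $r\le R_1$ (near and up to the horizon) and the region $r\ge R_1$, using the cutoff $\chi$. In the far region $r\ge R_1/2$, $\Sigma_\tau$ is asymptotically flat, $t^*=\tau$ is spacelike, and the Boyer--Lindquist $\partial_r=Z$ together with the angular derivatives and $\partial_{t^*}=T$ span the tangent space; writing $\Box_{g_{a,M}}\psi=0$ as $g^{t^*t^*}\partial_{t^*}^2\psi + 2g^{t^*\mu}\partial_{t^*}\partial_\mu\psi + (\text{elliptic operator on }\Sigma_\tau)\psi=0$ and noting $g^{t^*t^*}<0$ there (cf.~the formula $g(\nabla t^*,\nabla t^*)=-1-2Mr/\rho^2$ near $r\le 15M/8$ and asymptotic flatness for large $r$), a standard interior elliptic estimate on $\Sigma_\tau$ gives $\|\psi\|^2_{\mathring H^2}$ on any $r$-compact set controlled by $\|T^2\psi\|^2_{L^2}+\|T\psi\|^2_{\mathring H^1}+\|\psi\|^2_{\mathring H^1}$, all of which are bounded by $\int_{\Sigma_\tau}(\mathbf J^N_\mu[T\psi]+\mathbf J^N_\mu[\psi])n^\mu$ since $T=N$ for $r\ge r_1$ and, more generally, $\mathbf J^N_\mu[\cdot]n^\mu\sim|T\cdot|^2+|\tilde Z^*\cdot|^2+|\nabb\cdot|^2$ pointwise; and $\mathbf J^N_\mu[T\psi]n^\mu\le B\,\mathbf J^N_\mu[N\psi]n^\mu+ B\,\mathbf J^N_\mu[\psi]n^\mu$ since $T$ differs from $N$ only on a compact $r$-range where $N$ is timelike. (One also needs the weighted large-$r$ decay of $\psi$ and $\partial\psi$, which is built into finiteness of $\int_{\Sigma_\tau}\mathbf J^N_\mu[\psi]n^\mu$ together with the Hardy inequality~(\ref{hardy2}), so that no boundary terms at infinity appear.)

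In the near region $r\le R_1+1$ I would instead commute with $Y$. The key input is Proposition~\ref{commuprop} with $k=1$: $\Box_g(Y\psi)=\kappa_1 Y^2\psi+\sum_{|{\bf m}|\le 2,\,m_4\le 1}c_{\bf m}E_1^{m_1}E_2^{m_2}L^{m_3}Y^{m_4}\psi$. More to the point, I would use the red-shift estimate of Proposition~\ref{ftrs}, or rather its slice version, to bound $\int_{\Sigma_\tau\cap\{r\le R_1\}}\mathbf J^N_\mu[Y\psi]n^\mu$ — but since the statement here is a pure slice inequality, the cleanest route is: near $\mathcal H^+$ the frame $\{K,Y,E_1,E_2\}$ spans $T\mathcal R$, so every second-order derivative in the list can be written in terms of $K^2\psi$, $KY\psi$, $Y^2\psi$, $KE_i\psi$, $YE_i\psi$, $E_iE_j\psi$ and lower order; the combinations involving at most one $Y$ and at most one $K$ are controlled by $\mathbf J^N_\mu[N\psi]n^\mu$ (note $K$ is tangent to $\mathcal H^+$ and a combination of $T$ and $\Phi$, while on the $r\le R_1$ range $\Phi$ is controlled via the elliptic operator or directly since $\tilde Z^*$, $T$ span a spacelike pair there together with $\Phi$; more simply $K=T+\upomega_+\Phi$ and by Lemma~\ref{horizonTimelike} $K$ is timelike for $r\in(r_+,r_++\epsilon_0)$, and on $[r_++\epsilon_0,R_1]$ one uses $T+\tfrac{2Mar}{(r^2+a^2)^2}\Phi$ timelike from Lemma~\ref{timelikeVector}), while $Y^2\psi$ and the remaining genuinely-transversal second derivative is recovered from equation~(\ref{WAVE}) rewritten in the $\{K,Y,E_1,E_2\}$ frame — $\Box_g\psi=0$ expresses the ``$\partial_{r^*}^2$'' or equivalently ``$Y$-transverse'' second derivative of $\psi$ in terms of $K$-, $E_i$- and first-order terms, and applying $Y$ once more (with Proposition~\ref{commuprop}) closes the estimate for all of $|{\bf m}|\le 2$ with a single $Y$ or without. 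Patching the two regions with $\chi$ produces cross terms with one fewer derivative, absorbed into $\int_{\Sigma_\tau}\mathbf J^N_\mu[\psi]n^\mu$.

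The main obstacle is bookkeeping the horizon region correctly: making sure that the ``elliptic gain'' near $\mathcal H^+$ is realized with the degenerate-at-the-horizon elliptic operator and that the one transversal derivative is supplied by $N\psi$ (equivalently $Y\psi$ plus $K\psi$) rather than by a derivative we do not control. This is exactly the point where the positivity of surface gravity and Proposition~\ref{commuprop} enter, and where one must be careful that commuting $\Box_g$ with $Y$ only generates terms with $\le 2$ derivatives and $\le 1$ factor of $Y$, so that no uncontrolled $Y^3\psi$ or $Y^2$ applied to a non-$N$ derivative appears. Everything else — the large-$r$ interior elliptic estimate, the equivalences $\mathbf J^N_\mu[\cdot]n^\mu\sim |T\cdot|^2+|\tilde Z^*\cdot|^2+|\nabb\cdot|^2$ and $\mathbf J^N_\mu[T\psi]n^\mu\lesssim \mathbf J^N_\mu[N\psi]n^\mu+\mathbf J^N_\mu[\psi]n^\mu$, and the Hardy inequalities for the zeroth-order terms — is routine.
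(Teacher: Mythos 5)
Your far-region argument (elliptic estimate on $\Sigma_\tau$ away from the horizon, reading $\Delta_{\Sigma_\tau}\psi$ off the wave equation, noting $T=N$ for $r\ge r_1$) is correct and is essentially what the paper does. But the paper's proof handles the horizon with one simple device that your plan is missing, and the alternative near-horizon argument you propose has genuine gaps.

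The paper's proof is a single interior elliptic estimate: one extends $\Sigma_\tau$ slightly inside the black hole to $\hat\Sigma_\tau = \{r\in[r_+-\epsilon,\infty)\}$, extends $\psi$ to $\hat\psi$ on $\hat\Sigma_\tau$ by a standard Sobolev extension lemma so that $\|\Delta_{\hat\Sigma_\tau}\hat\psi\|_{L^2}\le B\|\Delta_{\Sigma_\tau}\psi\|_{L^2}$, and then applies the interior elliptic estimate on $\hat\Sigma_\tau$, localised to $\Sigma_\tau$ which now lies in the interior. There is no region decomposition, no red-shift commutation, no $\{K,Y,E_1,E_2\}$ frame. The only thing the horizon requires you to notice is that $\Sigma_\tau$ is uniformly spacelike up to and including $r=r_+$ (so the induced Laplacian is uniformly elliptic there, not degenerate) and that $N$ remains timelike there (so $\mathbf{J}^N_\mu[N\psi]n^\mu$ and $\mathbf{J}^N_\mu[\psi]n^\mu$ remain coercive); once those two facts are in hand, $r=r_+$ is just an ordinary boundary of a Riemannian manifold-with-boundary, and the way to avoid boundary terms in the elliptic estimate is to extend past it. Your remark about a ``degenerate-at-the-horizon elliptic operator'' suggests a misconception: nothing degenerates on $\Sigma_\tau$ at the horizon in this setting.

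Your alternative near-horizon argument has specific problems. First, the wave operator rewritten in a null frame $\{K,Y,E_1,E_2\}$ has principal part $\sim g^{KY}KY + g^{ij}E_iE_j$ because $g^{YY}=g^{KK}=0$; it does \emph{not} express $Y^2\psi$ in terms of the rest, contrary to what you claim, so you cannot ``recover $Y^2\psi$ from equation~(\ref{WAVE}) rewritten in the $\{K,Y,E_1,E_2\}$ frame''. Second, the claim that ``combinations involving at most one $Y$ and at most one $K$ are controlled by $\mathbf{J}^N_\mu[N\psi]n^\mu$'' is not justified: that flux controls $\partial(N\psi)$, and since $N=K+Y$ (approximately) near $\mathcal{H}^+$, you get only the \emph{sums} $XK\psi+XY\psi$ for various $X$, not $XK\psi$ and $XY\psi$ separately, and it certainly does not control $E_iE_j\psi$, which has no $N$ factor at all — that term is exactly what the elliptic estimate is supposed to supply. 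Finally, Proposition~\ref{commuprop} and the red-shift multiplier estimate are tools for \emph{spacetime-integrated} estimates; they are not the right tool here and invoking them distracts from the fact that this is a purely spatial elliptic problem on a single slice.

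In short: keep the elliptic-estimate framework, drop the region splitting, the null-frame bookkeeping, and the red-shift commutation, and supply the one missing ingredient — extension of $\psi$ and $\Sigma_\tau$ past the boundary so that the elliptic estimate becomes an interior one.
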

\begin{proof}This is standard: Let $\hat{\Sigma}_{\tau}$ be an extension of $\Sigma_{\tau}$ from $r \in [r_+,\infty)$ to $r \in [r_+-\epsilon,\infty)$. By a standard extension lemma, one may extend $\psi$ to a function $\hat\psi$ on $\hat{\Sigma}_{\tau}$ in such a way that $\left\vert\left\vert \Delta_{\hat{\Sigma}_{\tau}}\hat\psi\right\vert\right\vert_{L^2\left(\hat{\Sigma}_{\tau}\right)} \leq B\left\vert\left\vert \Delta_{\Sigma_{\tau}}\psi\right\vert\right\vert_{L^2\left(\Sigma_{\tau}\right)}$. The lemma then follows from a local elliptic estimate.
\end{proof}

\begin{lemma}\label{ellipticEstimates}
For $\psi$ as above, we have
\begin{align*}
&\int_{\Sigma_{\tau}}\sum_{1\le i_1+i_2+i_3\le 2}
|\nabb^{i_1}T^{i_2}(\tilde Z^*)^{i_3}\psi|^2 \le B\int_{\Sigma_{\tau}}\left(\mathbf{J}^N_{\mu}[T\psi]n^{\mu}_{\Sigma_{\tau}} +\mathbf{J}^N_{\mu}[\chi\Phi\psi]n^{\mu}_{\Sigma_{\tau}}+\mathbf{J}^N_{\mu}[Y\psi]n^{\mu}_{\Sigma_{\tau}} +\mathbf{J}^N_{\mu}[\psi]n^{\mu}_{\Sigma_{\tau}}\right).
\end{align*}

\end{lemma}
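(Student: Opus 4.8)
\textbf{Proof strategy for Lemma~\ref{ellipticEstimates}.}
The plan is to combine the timelike character of the span of $T$ and $\Phi$ outside the horizon (fact (c) of the introduction, made precise by Lemmas~\ref{timelikeVector} and~\ref{horizonTimelike}) with the red-shift commutation near $\mathcal{H}^+$ (Proposition~\ref{commuprop}) and an interior elliptic estimate. The point is that control of a second derivative of $\psi$ in a suitable timelike direction, together with the equation $(\ref{WAVE})$, propagates to all second derivatives via local elliptic regularity. I would prove this first in the far region (where $T$ alone is timelike), then near and on the horizon (where one must use $Y$), and then patch the two.

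First I would fix a large $R_2 > R_1 + 1$ such that $T$ is timelike for $r \geq R_1$; on the region $r \geq R_1$ one has $\mathbf{J}^N_\mu[\psi]n^\mu_{\Sigma_\tau} \sim \mathbf{J}^T_\mu[\psi]n^\mu_{\Sigma_\tau}$, and since $T = \partial_{t^*}$ commutes with $\Box_g$, the quantity $T\psi$ again solves $(\ref{WAVE})$; hence $\int_{\Sigma_\tau \cap \{r \geq R_1+1\}} |\nabla_{\Sigma_\tau} T\psi|^2$ is controlled by $\int_{\Sigma_\tau}\mathbf{J}^N_\mu[T\psi]n^\mu_{\Sigma_\tau}$. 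Writing out $(\ref{WAVE})$ in the form $(\ref{waveEqn})$, the operator $\partial_{r^*}\big((r^2+a^2)\partial_{r^*}\cdot\big) + \big(\text{angular operator}\big)$ applied to $\psi$ equals a combination of $\partial_t^2\psi$, $\partial_{t\phi}^2\psi$, $\partial_\phi^2\psi$, all of which are first-order in the already-controlled commuted quantities $T\psi$, $\chi\Phi\psi$ (note $\chi = 1$ on this region if $R_1$ is chosen so, but in the far region $\Phi$ need not be controlled separately — here instead I would use that for $r$ large $\partial_\phi$ is spacelike of size $O(r^2)$ and absorbable; more robustly, I would simply place $R_1$ inside the region where $T$ is timelike and use the large-$r$ estimate structure as in Section~\ref{largeR}). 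A local elliptic estimate on a spatial slab then gives the remaining second derivatives in terms of $\int_{\Sigma_\tau}\big(\mathbf{J}^N_\mu[T\psi] + \mathbf{J}^N_\mu[\psi]\big)n^\mu_{\Sigma_\tau}$.

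Next, for the region $r \leq R_1 + 1$ (which includes the ergoregion and the horizon), the vector $T$ is no longer timelike, but $\chi\Phi\psi$ is now genuinely a commuted quantity: $\Phi = \partial_{\phi^*}$ commutes with $\Box_g$, and the cutoff $\chi$ produces only zeroth- and first-order error terms supported in $R_1 \leq r \leq R_1 + 1$, which are already handled by the far-region step and by $\int_{\Sigma_\tau}\mathbf{J}^N_\mu[\psi]n^\mu_{\Sigma_\tau}$. In the bulk of $[r_+ + \epsilon, R_1+1]$ the span of $T,\Phi$ is timelike (Lemma~\ref{timelikeVector}), so a combination $aT\psi + b(r)\chi\Phi\psi$ with appropriate bounded coefficients controls a timelike second derivative, and again $(\ref{WAVE})$ plus local elliptic regularity converts this into control of all second derivatives up to the lower-order piece $\int_{\Sigma_\tau}\mathbf{J}^N_\mu[\psi]n^\mu_{\Sigma_\tau}$. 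Right at and just inside the horizon $r \in [r_+, r_+ + \epsilon]$, the span degenerates to null, so one instead uses $Y = N - K$: by Proposition~\ref{commuprop} with $k = 0$, $\Box_g(Y\psi) = \kappa_0 Y\psi + (\text{first order in } E_1,E_2,L,Y\text{ of }\psi) + Y\Box_g\psi$, and $Y\Box_g\psi = 0$; combining the controlled quantity $\int_{\Sigma_\tau}\mathbf{J}^N_\mu[Y\psi]n^\mu_{\Sigma_\tau}$ with a redshift-type elliptic estimate near $\mathcal{H}^+$ (as in Lemma~\ref{Nelliptic} applied to the extended slice, but now exploiting the transversal derivative $Y\psi$) recovers the transversal second derivatives up to and including the horizon. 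Patching the three regions with a partition of unity, and absorbing all overlap errors into $\int_{\Sigma_\tau}\mathbf{J}^N_\mu[\psi]n^\mu_{\Sigma_\tau}$, yields the claimed bound.

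The main obstacle is the degeneration of the timelike span of $\{T,\Phi\}$ to a null span exactly on $\mathcal{H}^+$: there no bounded linear combination of $T\psi$ and $\Phi\psi$ controls a transversal derivative, which is why the $Y\psi$-commutation is indispensable and why the estimate must be localised, with the horizon neighbourhood handled separately using the red-shift structure of Proposition~\ref{commuprop}. A secondary technical point is bookkeeping of the commutator error terms generated by the cutoff $\chi$ (which fails to commute with $\Box_g$) and by $Y$ (which is not Killing); these are all of strictly lower differential order relative to what is being estimated on their respective supports, so they are absorbable, but care is needed to ensure the supports line up so that nothing circular occurs.
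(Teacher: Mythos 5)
Your proposal is correct and follows essentially the same approach as the paper, which gives only a two-sentence proof: away from the horizon use that $T$ and $\Phi$ span a timelike direction together with an interior elliptic estimate on $\Sigma_\tau \cap \{r\geq r_0\}$, and near the horizon use that $Y$, $T$, $\Phi$ span a timelike (and transversal) direction together with elliptic estimates on spheres. Two small remarks: the appeal to Proposition~\ref{commuprop} is not actually needed for this lemma (and your ``$k=0$'' should read $k=1$; that formula is invoked in Section~\ref{higher} to bound $\int_{\Sigma_\tau}\mathbf{J}^N_\mu[Y\psi]n^\mu$, not to derive the elliptic estimate) --- here one simply uses that $\mathbf{J}^N_\mu[Y\psi]n^\mu$ directly controls all second derivatives containing one $Y$ factor, after which the wave equation restricted to the sphere yields the pure angular second derivatives; and the reference to Lemma~\ref{Nelliptic} is slightly off, since that lemma uses $\mathbf{J}^N_\mu[N\psi]$ rather than the separate pieces used here. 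Neither affects the substance.
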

\begin{proof}This is standard: One uses elliptic estimates on spheres near the horizon and an elliptic estimate on $\Sigma_{\tau} \cap \{r \geq r_0\}$ away from the horizon. The key point is that $T$ and $\Phi$ span a timelike direction away from the horizon, and $Y$, $T$ and $\Phi$ span a timelike direction near the horizon.
\end{proof}

\begin{lemma}\label{horizonSphere}
For $\psi$ as above, we have
\begin{align*}
&\int_{\mathcal{H}^+(0,\infty)}\sum_{1\le i_1+i_2+i_3\le 2}
|\nabb^{i_1}T^{i_2}(\tilde Z^*)^{i_3}\psi|^2
\\ \nonumber &\qquad \le B\int_{\mathcal{H}^+(0,\infty)}\left(\mathbf{J}^N_{\mu}[T\psi]n^{\mu}_{\mathcal{H}^+} +\mathbf{J}^N_{\mu}[\chi\Phi\psi]n^{\mu}_{\mathcal{H}^+}+\mathbf{J}^N_{\mu}[Y\psi]n^{\mu}_{\mathcal{H}^+} +\mathbf{J}^N_{\mu}[\psi]n^{\mu}_{\mathcal{H}^+}\right).
\end{align*}
\end{lemma}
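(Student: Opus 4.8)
\textbf{Proof plan for Lemma~\ref{horizonSphere}.}
The plan is to run the same elliptic argument as in Lemma~\ref{ellipticEstimates}, but now at the level of the induced geometry on the event horizon $\mathcal{H}^+(0,\infty)$ rather than on a spacelike slice $\Sigma_\tau$. The starting observation is that $\mathcal{H}^+$ is foliated by the spheres $S_\tau = \Sigma_\tau \cap \mathcal{H}^+$, and that on $\mathcal{H}^+$ we have $K = T + \upomega_+\Phi = \partial_{r^*}$, which is the generator of the horizon. Hence along each $S_\tau$ the quantities we must control, $|\nabb^{i_1}T^{i_2}(\tilde Z^*)^{i_3}\psi|^2$ with $1 \le i_1+i_2+i_3 \le 2$, all decompose into (i) purely angular derivatives, (ii) at most one $T$ (equivalently, after using $K = T+\upomega_+\Phi$, one $K$), and (iii) at most one $\tilde Z^*$, which on $\mathcal{H}^+$ coincides with $(1-\chi)Z^* + \chi Z$, a transversal-plus-tangential combination that can be traded for $Y = N-K$ modulo $K$ and $\Phi$.

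First I would fix the cutoff $\chi$ as in the statement of Lemma~\ref{ellipticEstimates} (identically $1$ near the horizon, vanishing outside a large $r$, in particular so that its support meets $\mathcal{H}^+$ on a region where $T$, $\Phi$, $Y$ is the relevant frame) and note the pointwise bound $\mathbf{J}^N_\mu[\Psi]n^\mu_{\mathcal{H}^+} \ge b(|\nabb\Psi|^2 + (T\Psi)^2)$ on $\mathcal{H}^+$, which follows from the algebraic structure of $\mathbf{T}_{\mu\nu}$ and the fact that $n_{\mathcal{H}^+}$ is a positive multiple of $K$ together with a transversal null vector. Next, on each sphere $S_\tau \cong \mathbb{S}^2$ I would apply the standard second-order elliptic estimate for the round Laplacian: $\|\psi\|_{\mathring{H}^2(S_\tau)}^2 \lesssim \|\lap_{S_\tau}\psi\|_{L^2(S_\tau)}^2 + \|\psi\|_{\mathring H^1(S_\tau)}^2$, and similarly with $\psi$ replaced by $T\psi$, $\chi\Phi\psi$, $Y\psi$. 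The wave equation~(\ref{waveEqn}) restricted to $\mathcal{H}^+$ expresses $\lap_{S_\tau}\psi$ (the angular Laplacian) as a linear combination of $\partial_{r^*}$ of $(r^2+a^2)\partial_{r^*}\psi$, $\partial^2_{t\phi}\psi$, $\partial_\phi^2\psi$ and $\partial_t^2\psi$ — i.e.\ in terms of $K$-derivatives, one transversal derivative $Y$ (which together with $K$ recovers $Z^*$ at $r=r_+$), and $\Phi$-derivatives — so the angular second derivatives of $\psi$ are controlled by first derivatives of $K\psi$, $Y\psi$, $\Phi\psi$, $\psi$; these are in turn controlled by $\mathbf{J}^N_\mu[T\psi]$, $\mathbf{J}^N_\mu[\chi\Phi\psi]$, $\mathbf{J}^N_\mu[Y\psi]$, $\mathbf{J}^N_\mu[\psi]$ fluxes on $\mathcal{H}^+$ once one also commutes the wave equation once with $T$, $\Phi$, $Y$ to reach the full range $i_1+i_2+i_3 \le 2$. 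Then I would integrate the spherewise estimates over $\tau \in (0,\infty)$ against $d\tau$ (which, up to uniformly bounded factors, reconstructs the horizon measure, cf.~\ref{easyCoArea} applied to $\mathcal{H}^+$), obtaining the asserted inequality.

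The main technical obstacle — actually the only one — is bookkeeping the noncommutativity near the horizon: $Y$, $\Phi$ and $T$ do not commute with $\Box_g$, so commuting the wave equation to get the needed equations for $T\psi$, $\chi\Phi\psi$, $Y\psi$ generates lower-order error terms supported near $\mathcal{H}^+$; for $Y$ these are exactly the terms in Proposition~\ref{commuprop}, involving $E_1^{m_1}E_2^{m_2}L^{m_3}Y^{m_4}\psi$ with $|\mathbf{m}|\le 2$, $m_4 \le 1$, and for $\chi\Phi$ they are lower order in derivative count and supported where $\nabla\chi \ne 0$, which is a compact region of $\mathcal{H}^+$ away from issues. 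All such error terms are of strictly lower order than what the left-hand side controls, hence absorbable into the right-hand side after the sphere estimate; the positivity of the surface gravity, embodied in Proposition~\ref{commuprop}'s $\kappa_k > 0$, is what guarantees the $Y$-commutation does not lose a derivative. With this handled, the elliptic estimate on each $S_\tau$ together with~(\ref{waveEqn}) closes the argument exactly as in Lemma~\ref{ellipticEstimates}.
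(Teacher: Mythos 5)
Your overall route --- elliptic estimates on the horizon spheres $S_\tau = \Sigma_\tau \cap \mathcal{H}^+$, with the wave equation supplying control of the angular Laplacian --- does match the paper's one-sentence proof, but one step, as written, fails, and it happens to be the crux of the lemma. On the null hypersurface $\mathcal{H}^+$ the normal $n_{\mathcal{H}^+}$ is itself null and \emph{tangent} to $\mathcal{H}^+$: it is a positive multiple of the generator $K$ alone, with no transversal component (contrary to what you write). Consequently $\mathbf{J}^N_\mu[\Psi]n^\mu_{\mathcal{H}^+} \sim (K\Psi)^2 + |\nabb\Psi|^2$ for every $\Psi$, and the transversal derivative $Y\Psi$ simply does not appear --- quite unlike the fluxes through the spacelike $\Sigma_\tau$ in Lemma~\ref{ellipticEstimates}. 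Your claim that the relevant quantities are ``controlled by first derivatives of $K\psi$, $Y\psi$, $\Phi\psi$, $\psi$; these are in turn controlled by [the $\mathcal{H}^+$-fluxes]'' therefore does not go through: $\mathbf{J}^N_\mu[Y\psi]n^\mu_{\mathcal{H}^+}$ gives $(KY\psi)^2 + |\nabb Y\psi|^2$ but not $(Y^2\psi)^2$, and $\mathbf{J}^N_\mu[\psi]n^\mu_{\mathcal{H}^+}$ does not give $(Y\psi)^2$. Since $\tilde Z^* = Z^*$ near the horizon and $Z^*$ has a nonzero $Y$-component there, the left-hand terms $|\tilde Z^*\psi|^2$ and $|(\tilde Z^*)^2\psi|^2$ are exactly what your argument leaves uncontrolled; moreover, invoking the wave equation to supply $\lap\psi$ reintroduces $Y\psi$ and thus the same difficulty.

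What you need is a second structural fact about $\Box_g$ at $\mathcal{H}^+$, beyond ``no $Y^2$ term'' (recalled in the proof of Lemma~\ref{thetaNearHorizon}): the coefficient of the \emph{first-order} $Y\psi$ term in $\Box_g$ restricted to $\mathcal{H}^+$ is a nonvanishing multiple of the surface gravity $\kappa$. Integrating $\Box_g\psi = 0$ over each $S_\tau$ kills the $\lap\psi$ contribution, and that nondegenerate $Y\psi$ coefficient lets you solve for the spherical mean of $Y\psi$ in terms of the means of $KY\psi$, $K\psi$ and lower order; a Poincar\'e inequality on $S_\tau$ then recovers $\int_{S_\tau}(Y\psi)^2$, after which the elliptic estimate for $\nabb^2\psi$ closes. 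A $Y$-commuted, and more delicate, version of the same argument --- exploiting $\kappa_1>0$ from Proposition~\ref{commuprop} --- is what handles $(Y^2\psi)^2$. Relatedly, be careful appealing to $(\ref{waveEqn})$ directly: it is written with $\Delta^{-1}$ coefficients singular at $\mathcal{H}^+$, and its $\partial_{r^*}$ degenerates to the tangential generator $K$ on the horizon; the structure just described is only visible in the regular $\{K,Y,E_1,E_2\}$-frame form of the operator.
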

\begin{proof}This follows from elliptic estimates on spheres.
\end{proof}
One can, of course, localise Lemma~\ref{ellipticEstimates}:
\begin{lemma}\label{ellipticEstimatesLocal}
For $\psi$ as above, then for any $R < \infty$, we have
\begin{align*}
&\int_{\Sigma_{\tau}\cap [r_+,R]}\sum_{1\le i_1+i_2+i_3\le 2}
|\nabb^{i_1}T^{i_2}(\tilde Z^*)^{i_3}\psi|^2
\\ \nonumber &\qquad \le B\int_{\Sigma_{\tau}\cap [r_+,R+1]}\left(\mathbf{J}^N_{\mu}[T\psi]n^{\mu}_{\Sigma_{\tau}} +\mathbf{J}^N_{\mu}[\chi\Phi\psi]n^{\mu}_{\Sigma_{\tau}}+\mathbf{J}^N_{\mu}[Y\psi]n^{\mu}_{\Sigma_{\tau}} +\mathbf{J}^N_{\mu}[\psi]n^{\mu}_{\Sigma_{\tau}}\right).
\end{align*}
\end{lemma}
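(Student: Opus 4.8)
\textbf{Proof plan for Lemma~\ref{ellipticEstimatesLocal}.} The plan is to deduce this localised statement from the global Lemma~\ref{ellipticEstimates} together with a purely interior elliptic estimate, using nothing more than a cutoff in $r$. First I would fix a smooth cutoff $\varpi(r)$ which is identically $1$ on $[r_+,R]$ and identically $0$ on $[R+1,\infty)$, and apply the wave equation $(\ref{WAVE})$ to the function $\varpi\psi$; since $\varpi$ is a function of $r$ alone and $\Box_g$ applied to $\varpi\psi$ produces, besides $\varpi\Box_g\psi = 0$, only first- and zeroth-order terms in $\psi$ supported in $r\in[R,R+1]$, the function $\varpi\psi$ solves an inhomogeneous equation whose right-hand side is controlled pointwise by $\mathbf{J}^N_\mu[\psi]n^\mu_{\Sigma_\tau}$ on that compact collar.

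The second step is to run the proof of Lemma~\ref{ellipticEstimates} for $\varpi\psi$ rather than $\psi$. That proof combines (a) elliptic estimates on the spheres $\Sigma_\tau\cap\{r=\text{const}\}$ in a neighbourhood of the horizon, exploiting that $\{Y,T,\Phi\}$ span a timelike subspace there, with (b) a standard interior elliptic estimate on $\Sigma_\tau\cap\{r\ge r_0\}$ away from the horizon, exploiting that $\{T,\Phi\}$ span a timelike subspace off $\mathcal{H}^+$ (recall Lemma~\ref{timelikeVector} and the fact that $R_1$, hence $r_0$, lies outside the ergoregion $\mathcal{S}$). Since $\varpi\psi$ vanishes for $r\ge R+1$, these estimates see only the region $r\le R+1$; the inhomogeneous terms arising from commuting $\varpi$ through $\Box_g$ are all supported in $r\in[R,R+1]$ and are absorbed, after a trivial application of Cauchy--Schwarz, by the zeroth- and first-order quantity $\int_{\Sigma_\tau\cap[R,R+1]}\mathbf{J}^N_\mu[\psi]n^\mu_{\Sigma_\tau}$, which already appears on the right-hand side. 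One must also note that $\nabb^{i_1}T^{i_2}(\tilde Z^*)^{i_3}(\varpi\psi)$ agrees with $\nabb^{i_1}T^{i_2}(\tilde Z^*)^{i_3}\psi$ on $\{r\le R\}$ modulo terms supported in $[R,R+1]$ of the same order, since $\varpi$ depends only on $r$ and $\tilde Z^*$ differentiates in $r$; these correction terms are again bounded by the right-hand side quantity. Restricting the left-hand side integration to $\{r\le R\}$ then yields exactly the claimed inequality.

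The only genuinely delicate point — and really the only place any care is needed — is bookkeeping the commutator of $\varpi$ with the second-order operator near $r=R$: one wants to be sure the worst term is no worse than a first derivative of $\psi$ times a bounded coefficient, so that it is dominated by $\mathbf{J}^N_\mu[\psi]n^\mu_{\Sigma_\tau}$ rather than requiring control of $\mathbf{J}^N_\mu[T\psi]$, $\mathbf{J}^N_\mu[\chi\Phi\psi]$ or $\mathbf{J}^N_\mu[Y\psi]$ on the collar. Because $\Box_g(\varpi\psi)=2g^{\mu\nu}\partial_\mu\varpi\,\partial_\nu\psi+(\Box_g\varpi)\psi$ and $\partial\varpi$ is a bounded function of $r$ supported in $[R,R+1]$, this is immediate, and no largeness or smallness of parameters is involved. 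Everything else is a verbatim localisation of the already-cited standard arguments, so I expect this lemma to be essentially immediate once the cutoff is introduced.
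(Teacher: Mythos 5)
Your argument is correct, and it fills in a proof that the paper itself omits entirely: the paper simply writes ``One can, of course, localise Lemma~\ref{ellipticEstimates}'' and presents Lemma~\ref{ellipticEstimatesLocal} without a proof environment. Your cutoff-and-inhomogeneous-elliptic-estimate route is a valid way to make that localisation precise, and the key points are all checked: the source term $F=2g^{\mu\nu}\partial_\mu\varpi\,\partial_\nu\psi+(\Box_g\varpi)\psi$ is first order with bounded coefficients supported in the collar $[R,R+1]$, so $\int |F|^2$ is dominated by $\int_{[R,R+1]}\mathbf{J}^N_\mu[\psi]n^\mu$; the commutators of $\varpi$ with $T$, $\chi\Phi$, $Y$, and $\tilde Z^*$ all produce zeroth-order terms in $\psi$ supported in the collar, again controlled by $\mathbf{J}^N_\mu[\psi]n^\mu$ (plus, for the undifferentiated $|\psi|^2$, a Hardy inequality -- which is already implicitly needed in Lemma~\ref{ellipticEstimates}).

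The only stylistic remark worth making is that the two elliptic ingredients the paper uses in Lemma~\ref{ellipticEstimates}'s proof -- elliptic estimates on fixed-$r$ spheres near $\mathcal{H}^+$ and interior elliptic regularity on an annulus $\{r_0 \le r\}$ -- are already local by their very nature, so the more economical route is to invoke them directly on $\Sigma_\tau\cap[r_+,R+1]$ without cutting off $\psi$ at all: the sphere estimates are pointwise in $r$, and standard interior elliptic regularity on $\{r_0 \le r \le R+1\}$ controls the $H^2$-norm on $\{r_0 \le r \le R\}$ in terms of $\Delta_{\Sigma}$ and $H^1$ on the larger set, which is exactly the collar loss stated in the lemma. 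Your approach converts this implicit interior estimate into an explicit inhomogeneous global estimate via the cutoff; both are sound, and yours has the virtue of reducing cleanly to the already-quoted Lemma~\ref{ellipticEstimates}, at the mild cost of having to track the commutator terms. There is no gap.
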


The next four lemmas give control
of the solution without including a
$Y$-commuted energy on the right hand side.
\begin{lemma}\label{ellipticOffHorizon}
For $\psi$ as above, then for any $r_0 > r_+$,
\begin{align*}
&\int_{\Sigma_{\tau} \cap \{r \geq r_0\}}\mathbf{J}^N_{\mu}[Y\psi]n^{\mu}_{\Sigma_{\tau}} \leq
B(r_0)\int_{\Sigma_{\tau}}\left(\mathbf{J}^N_{\mu}[T\psi]n^{\mu}_{\Sigma_{\tau}} + \mathbf{J}^N_{\mu}[\chi\Phi\psi]n^{\mu}_{\Sigma_{\tau}} +\mathbf{J}^N_{\mu}[\psi]n^{\mu}_{\Sigma_{\tau}}\right).
\end{align*}
\end{lemma}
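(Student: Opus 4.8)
The plan is to derive this as a pure elliptic estimate on $\Sigma_\tau \cap \{r \ge r_0\}$, exactly analogous to the proof of Lemma~\ref{ellipticEstimates} but \emph{localised away from the horizon}, where the difficulty of the degenerating red-shift is absent and $Y = N - K$ is simply a fixed regular vector field transverse to the level sets of $r$. First I would recall that for $r \ge r_0 > r_+$ the Killing vector fields $T$ and $\Phi$ span a timelike plane (Lemma~\ref{timelikeVector} and Section~\ref{Killingfieldsec}; indeed for $r$ large $T$ itself is timelike), so that the spacetime metric restricted to this region is uniformly elliptic in the directions complementary to $\{T, \Phi\}$. Concretely, writing the wave operator $\Box_{g_{a,M}}$ in Boyer--Lindquist-type coordinates $(t^*, r, \theta, \phi^*)$ and isolating the $\partial_r$-principal part, equation~(\ref{WAVE}) takes the schematic form $a(r,\theta)\partial_r^2\psi = \mathcal{O}(\nabb^2\psi) + \mathcal{O}(T\nabla\psi) + \mathcal{O}(\Phi\nabla\psi) + \text{lower order}$, with $a$ bounded away from zero on $\{r \ge r_0\}$; here the angular operator $\nabb^2$ and the $T$-, $\Phi$-derivatives are all that enter on the right.

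The key steps, in order: (i) fix $r_0 > r_+$ and choose a cutoff $\tilde\chi(r)$ equal to $1$ on $\{r \ge r_0\}$ and vanishing on $\{r \le (r_0 + r_+)/2\}$, so that $\tilde\chi Y\psi$ is supported strictly away from $\mathcal{H}^+$; (ii) apply a standard interior elliptic estimate to $\psi$ on each slice $\Sigma_\tau$ in the region $\{r \ge (r_0+r_+)/2\}$, using equation~(\ref{WAVE}) to solve for the $\partial_r^2$ (and mixed $\partial_r$) second derivatives in terms of second derivatives tangential to $\{T, \Phi\}$ and angular derivatives $\nabb^2\psi$, picking up at most one power of $r$ in the weights which is harmless since the statement contains no $r$-weights; (iii) observe that $Y\psi$, being a fixed smooth combination of $\partial_{t^*}$, $\partial_r$ and angular derivatives of $\psi$ (with bounded coefficients on $\{r\ge r_0\}$), is then controlled: $\mathbf{J}^N_\mu[Y\psi]n^\mu_{\Sigma_\tau}$ is bounded pointwise by $|\nabb^{i_1}T^{i_2}(\tilde Z^*)^{i_3}\psi|^2$ for $1 \le i_1+i_2+i_3 \le 2$, hence by the elliptic estimate by $\mathbf{J}^N_\mu[T\psi]n^\mu + \mathbf{J}^N_\mu[\chi\Phi\psi]n^\mu + \mathbf{J}^N_\mu[\psi]n^\mu$ integrated over $\Sigma_\tau$; (iv) note that on $\{r \ge r_0\}$ we may replace $\Phi\psi$ by $\chi\Phi\psi$ since $\chi$ was chosen equal to $1$ outside the ergoregion, which in particular contains a neighbourhood of $\{r\ge r_0\}$ when $r_0$ is taken large enough — or more simply, one absorbs the commutator $[\chi,\cdot]$ terms, which are supported in a compact $r$-range and are lower order, into $\mathbf{J}^N_\mu[\psi]n^\mu_{\Sigma_\tau}$.

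There is essentially no serious obstacle here; this is a routine localised elliptic estimate, and the only points requiring a modicum of care are the bookkeeping of the cutoff commutator terms in step (iv) and checking that the coefficient $a(r,\theta)$ of $\partial_r^2\psi$ in~(\ref{WAVE}) is indeed uniformly bounded below on $\{r \ge r_0\}$ — which is immediate from the explicit form~(\ref{eleme}) of the metric, since $g^{rr} = \Delta/\rho^2 \gtrsim b(r_0)$ for $r \ge r_0 > r_+$. The constant $B(r_0)$ necessarily degenerates as $r_0 \to r_+$ because $g^{rr} \to 0$ on $\mathcal{H}^+$, which is exactly why this lemma — unlike Lemma~\ref{ellipticEstimates} — does not require a $Y$-commuted energy on the right: near the horizon one would instead invoke the red-shift, but away from it the plain ellipticity of the spatial operator suffices.
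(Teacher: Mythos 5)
Your high-level approach coincides with the paper's (the proof there is a one-liner: an elliptic estimate away from the horizon, exploiting that the span of $T$ and $\Phi$ is timelike), and your explanation of why $B(r_0)\to\infty$ as $r_0\to r_+$ is correct. But the elliptic step as you have written it does not go through. The ``plain ellipticity of the spatial operator'' on $\Sigma_\tau$ that you invoke at the end is \emph{false} inside the ergoregion: the inverse-metric component $g^{\phi\phi}=(\Delta-a^2\sin^2\theta)/(\rho^2\Delta\sin^2\theta)$ is negative there, so the principal part $g^{ij}\partial_i\partial_j$ on the slice has indefinite symbol even though the slice is spacelike. Relatedly, your schematic $a(r,\theta)\partial_r^2\psi=\mathcal{O}(\nabb^2\psi)+\mathcal{O}(T\nabla\psi)+\mathcal{O}(\Phi\nabla\psi)+\text{l.o.}$ is a single algebraic identity with $\nabb^2\psi$ left uncontrolled on the right-hand side; but $\nabb^2\psi$, in particular $\partial_\theta^2\psi$, is itself one of the quantities in $\mathbf{J}^N_\mu[Y\psi]n^\mu$ that you must bound. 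The mechanism actually hidden in ``span of $T,\Phi$ timelike'' is the block structure of the Kerr metric: the $(r,\theta)$-block of the inverse metric \emph{is} uniformly positive definite on $\{r\ge r_0\}$, so one moves $\partial_\phi^2\psi=\Phi^2\psi$ to the right alongside the $T$-derivative terms (both sides now controlled by the $T$- and $\chi\Phi$-commuted energies) and runs a genuine two-dimensional elliptic estimate in $(r,\theta)$; the remaining second derivatives carrying a $\Phi$ factor, such as $\partial_\theta\Phi\psi$, are then read off directly from the $\chi\Phi$-commuted energy. This is the content of ``elliptic estimate away from the horizon'' here, and it is precisely the ergoregion case that forces the $\Phi$-commuted energy onto the right-hand side of the lemma.

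Separately, step (iv) reverses the orientation of the cutoff $\chi$: in the paper $\chi$ is identically $1$ on $[r_+,R_1]$ -- the near-horizon region \emph{including} the ergoregion -- and identically $0$ for large $r$, not the other way around. The reason $\chi\Phi$ rather than $\Phi$ suffices is therefore the opposite of what you state: where $\chi=0$ (large $r$) one needs no $\Phi$-data because $T$ alone is timelike, and where the $\Phi^2\psi$-data \emph{is} needed (the ergoregion and its neighbourhood), $\chi\equiv 1$ so $\chi\Phi\psi=\Phi\psi$. The clause ``when $r_0$ is taken large enough'' must also be dropped -- the lemma is asserted for every $r_0>r_+$, and the delicate case is exactly when $\{r\ge r_0\}$ still meets the ergoregion.
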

\begin{proof}
This follows from an elliptic estimate away from the horizon using the fact
that the span of $T$ and $\Phi$ is timelike.
The straightforward proof is omitted.
\end{proof}
\begin{lemma}\label{ellipticOffHorizonLocal}
For $\psi$ as above, then for any $r_+ < r_0 < r_1 < \infty$, $\delta > 0$,
\begin{align*}
&\int_{\Sigma_{\tau} \cap \{[r_0,r_1]\}}\mathbf{J}^N_{\mu}[Y\psi]n^{\mu}_{\Sigma_{\tau}} \leq
B(r_0,r_1,\delta)\int_{\Sigma_{\tau}\cap [r_0-\delta,r_1+\delta]}\left(\mathbf{J}^N_{\mu}[T\psi]n^{\mu}_{\Sigma_{\tau}} + \mathbf{J}^N_{\mu}[\chi\Phi\psi]n^{\mu}_{\Sigma_{\tau}} +\mathbf{J}^N_{\mu}[\psi]n^{\mu}_{\Sigma_{\tau}}\right).
\end{align*}
\end{lemma}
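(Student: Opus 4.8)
The plan is to obtain this as a localised interior elliptic estimate, following the same idea as the (omitted) proof of Lemma~\ref{ellipticOffHorizon}; the only genuinely new point is the restriction of the right-hand side to $[r_0-\delta,r_1+\delta]$, which is arranged by inserting cutoffs in $r$. Throughout, $\psi$ is as in the reduction of Section~\ref{WPosed}, so smooth and of compact support on each $\Sigma_\tau$, and all integrals converge.

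First I would reduce to an $H^2$-type estimate for $\psi$ itself. Since $[r_0,r_1]$ is compact and bounded away from $\mathcal{H}^+$, in this region the $\varphi_\tau$-invariant vector field $Y$ is smooth with coefficients bounded together with all derivatives, so pointwise on $\Sigma_\tau\cap[r_0,r_1]$ one has
\[
\mathbf{J}^N_\mu[Y\psi]n^\mu_{\Sigma_\tau}\ \le\ B(r_0,r_1)\sum_{1\le|\alpha|\le 2}\big|\partial^\alpha\psi\big|^2,
\]
where $\partial^\alpha$ ranges over coordinate derivatives. Hence it suffices to bound $\int_{\Sigma_\tau\cap[r_0,r_1]}\sum_{1\le|\alpha|\le2}|\partial^\alpha\psi|^2$ by the claimed right-hand side.

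For this I would use the wave equation together with the two structural facts that $\Sigma_\tau$ is spacelike and that, away from $\mathcal{H}^+$, the span of $T$ and $\Phi$ is timelike (Section~\ref{Killingfieldsec}, Lemma~\ref{timelikeVector}). These allow one to rewrite $\Box_{g_{a,M}}\psi=0$, on the slice $\{r\ge r_0-\delta\}$, in the schematic form $\widetilde L\psi=\mathcal{T}\psi$, where $\widetilde L$ is a second-order operator which is \emph{uniformly elliptic} on $\{r\ge r_0-\delta\}$ (concretely, one may add to the $(r^*,\mathbb S^2)$-elliptic operator appearing on the left of the rearranged equation $(\ref{waveEqn})$ a controlled multiple of $T^2+\Phi^2$), while $\mathcal{T}\psi$ is a linear combination, with smooth coefficients, of $T^2\psi$, $T\Phi\psi$, $\Phi^2\psi$ and of first derivatives of $\psi$ — in particular every second-order term in $\mathcal{T}\psi$ carries at least one $T$ or $\Phi$. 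Taking, without loss of generality, $\delta<\tfrac12(r_0-r_+)$ (enlarging $\delta$ only enlarges the right-hand region and hence only weakens the assertion), and choosing a cutoff $\chi_0(r)$ equal to $1$ on $[r_0,r_1]$ and supported in $[r_0-\delta,r_1+\delta]$, the standard interior elliptic estimate for $\widetilde L$ (run globally on the $\mathbb S^2$ factors, to avoid the coordinate poles $\theta=0,\pi$) gives
\[
\int_{\Sigma_\tau\cap[r_0,r_1]}\sum_{1\le|\alpha|\le2}|\partial^\alpha\psi|^2\ \le\ B(r_0,r_1,\delta)\int_{\Sigma_\tau\cap[r_0-\delta,r_1+\delta]}\Big(|T^2\psi|^2+|T\Phi\psi|^2+|\Phi^2\psi|^2+\mathbf{J}^N_\mu[\psi]n^\mu_{\Sigma_\tau}+|\psi|^2\Big).
\]
Now each of $T^2\psi=T(T\psi)$, $T\Phi\psi=\Phi(T\psi)$, $\Phi^2\psi=\Phi(\Phi\psi)$ is a first derivative of $T\psi$ or of $\Phi\psi$, hence pointwise bounded by $\mathbf{J}^N_\mu[T\psi]n^\mu_{\Sigma_\tau}+\mathbf{J}^N_\mu[\Phi\psi]n^\mu_{\Sigma_\tau}$ (recall $(\ref{sobol1})$), and the $|\psi|^2$ term is absorbed into $\int_{\Sigma_\tau}\mathbf{J}^N_\mu[\psi]n^\mu_{\Sigma_\tau}$ by a Hardy inequality of the type $(\ref{hardy2})$ together with the form of the volume element. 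Finally, to replace $\mathbf{J}^N_\mu[\Phi\psi]$ by $\mathbf{J}^N_\mu[\chi\Phi\psi]$: since $R_1$ is only required to be ``sufficiently large'', I would fix it to lie strictly beyond the bounded ergoregion $\mathcal S$, so that $\chi\equiv1$ on $\{r\le R_1\}$ and there $\Phi\psi=\chi\Phi\psi$; on $\{r\ge R_1\}$ the field $T$ alone is timelike, so on that part of $[r_0-\delta,r_1+\delta]$ one runs the same elliptic argument using only $T$-commutation, with no $\Phi$-term. Splitting $[r_0-\delta,r_1+\delta]$ at $r=R_1$ and adding the two estimates yields the lemma.

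I expect the only real work to be bookkeeping: arranging that the fixed cutoff $\chi$ of this section equals $1$ throughout the part of $[r_0-\delta,r_1+\delta]$ where $\Phi$ is genuinely needed (handled by placing $R_1$ beyond $\mathcal S$), treating $\widetilde L$ uniformly up to the coordinate poles, and the elementary Hardy step for the zeroth-order term — all standard, and already implicit in the (omitted) proof of Lemma~\ref{ellipticOffHorizon}.
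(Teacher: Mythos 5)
The overall approach—rewriting the wave equation on the slice as an elliptic equation whose source involves only $T$- and $\Phi$-derivatives, applying an interior elliptic estimate with a radial cutoff, and splitting $[r_0-\delta,r_1+\delta]$ at $R_1$ to trade $\Phi\psi$ for $\chi\Phi\psi$—is correct and is surely what the authors have in mind, since the paper labels the proof ``straightforward'' and omits it.

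There is, however, a genuine gap in your treatment of the zeroth-order term. After the interior $H^2$ estimate you are left with $\int_{\Sigma_\tau\cap[r_0-\delta,r_1+\delta]}|\psi|^2$ on the right, and you propose to control it by a Hardy inequality of type $(\ref{hardy2})$. But Hardy applied to the compactly supported $\psi$ produces $\int_{\Sigma_\tau}\mathbf{J}^N_\mu[\psi]n^\mu_{\Sigma_\tau}$ over the \emph{whole} hypersurface, not over $[r_0-\delta,r_1+\delta]$ — and there is simply no inequality of the form $\int_\Omega|\psi|^2\lesssim\int_\Omega|\partial\psi|^2$ on a bounded annulus $\Omega$ (constants are a counterexample). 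So as written your right-hand side is global, which is strictly weaker than the lemma's statement. The localisation is the entire point of this lemma as opposed to Lemma~\ref{ellipticOffHorizon}: in the application within Proposition~\ref{h.o.s.suff} the right-hand side must be confined to $[r_+,\tilde r+2\delta]$ precisely so that, after integrating in $s$, the resulting spacetime integral avoids the set where the integrated local energy decay degenerates at trapping; a global $\mathbf{J}^N_\mu[\psi]$ there would not be controlled. The fix is simple but must be made explicit: the rearranged wave operator on $\Sigma_\tau$ (cf.\ $(\ref{waveEqn})$) has \emph{no zeroth-order coefficient}, so constants lie in its kernel. One therefore applies the elliptic estimate to $\psi-\overline{\psi}$, where $\overline{\psi}$ is the mean of $\psi$ over the compact set $\Sigma_\tau\cap[r_0-\delta,r_1+\delta]$; the source and all second derivatives are unchanged, while Poincar\'e on this bounded domain gives $\int_{\Sigma_\tau\cap[r_0-\delta,r_1+\delta]}|\psi-\overline{\psi}|^2\leq B(r_0,r_1,\delta)\int_{\Sigma_\tau\cap[r_0-\delta,r_1+\delta]}\mathbf{J}^N_\mu[\psi]n^\mu_{\Sigma_\tau}$, replacing the Hardy step and preserving the localisation. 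Equivalently, one invokes the interior $H^2$-estimate in the form $\|D^2u\|_{L^2(\Omega')}\leq C(\|Lu\|_{L^2(\Omega)}+\|Du\|_{L^2(\Omega)})$, valid for second-order operators with no zeroth-order coefficient.
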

\begin{proof}This straightforward proof is omitted.
\end{proof}
\begin{lemma}\label{ellipticOffHorizonDecay}
For $\psi$ as above, then for any, $r_+ < r_0 < \infty$ and $\delta_1,\delta_2 > 0$,
\begin{align*}
&\int_{\Sigma_{\tau}\cap [r_0,\infty)}\sum_{1\le i_1+i_2+i_3\le 2}
r^{-1-\delta_1}|\nabb^{i_1}T^{i_2}(\tilde Z^*)^{i_3}\psi|^2
\\ \nonumber &\qquad\le B(r_0,\delta_2)\int_{\Sigma_{\tau}\cap [r_0-\delta_2,\infty)}r^{-1-\delta_1}\left(\mathbf{J}^N_{\mu}[T\psi]n^{\mu}_{\Sigma_{\tau}} +\mathbf{J}^N_{\mu}[\chi\Phi\psi]n^{\mu}_{\Sigma_{\tau}} +\mathbf{J}^N_{\mu}[\psi]n^{\mu}_{\Sigma_{\tau}}\right).
\end{align*}
\end{lemma}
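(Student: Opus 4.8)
\textbf{Proof plan for Lemma~\ref{ellipticOffHorizonDecay}.}
The statement is essentially a weighted version of Lemma~\ref{ellipticOffHorizonLocal}/Lemma~\ref{ellipticEstimatesLocal}, localized away from the horizon (so the red-shift vector field $Y$ plays no distinguished role) but carrying the polynomial weight $r^{-1-\delta_1}$ out to $r=\infty$. The plan is to reduce to a purely local elliptic estimate on dyadic annuli and then sum against the weight. First I would fix a dyadic partition: for $r_0 - \delta_2 \le r_0' := r_0 - \tfrac12\delta_2$, set $A_k = \{2^k r_0' \le r \le 2^{k+1}r_0'\}$ for $k \ge k_0$ (with $2^{k_0}r_0' \sim r_0$), together with one slightly fattened copy $\tilde A_k = \{2^{k}r_0' - \delta_2 \le r \le 2^{k+1}r_0' + \delta_2\}$; note $\tilde A_k \subset A_{k-1}\cup A_k \cup A_{k+1}$ once $k$ is large, so the fattened annuli have bounded overlap.

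On each fixed annulus $A_k$, since $r_0 > r_+$ the region is uniformly away from the event horizon and away from the ergoregion $\mathcal{S}$ for $k$ large (and for the finitely many small $k$ one simply invokes Lemma~\ref{ellipticEstimatesLocal} directly), so the span of $T$ and $\Phi$ is timelike there; hence the wave equation $(\ref{WAVE})$ is, after freezing the range of $r$, a uniformly elliptic equation for $\psi$ on $\Sigma_\tau \cap A_k$ with the timelike directions $T\psi$, $\Phi\psi$ (equivalently $\chi\Phi\psi$, as $\chi \equiv 1$ here only for the $r$ in a bounded set — so instead, away from the horizon but for large $r$ one uses $\Phi$ itself, and the two agree up to the cutoff which is supported in bounded $r$) on the right-hand side. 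Rescaling coordinates by $r \mapsto 2^{-k}r$ turns $A_k$ into a fixed annulus, and the standard interior elliptic estimate (applied on $\tilde A_k$ to control the second derivatives on $A_k$, exactly as in the omitted proofs of Lemmas~\ref{ellipticEstimatesLocal},~\ref{ellipticOffHorizonLocal}) gives, \emph{with constant independent of $k$} by scaling invariance of the rescaled operator,
\[
\int_{\Sigma_\tau \cap A_k}\sum_{1\le i_1+i_2+i_3\le 2}|\nabb^{i_1}T^{i_2}(\tilde Z^*)^{i_3}\psi|^2
\le B(r_0,\delta_2)\int_{\Sigma_\tau \cap \tilde A_k}\left(\mathbf{J}^N_\mu[T\psi] + \mathbf{J}^N_\mu[\Phi\psi] + \mathbf{J}^N_\mu[\psi]\right)n^\mu_{\Sigma_\tau}.
\]
The only mild care needed is that the weight $r^{-1-\delta_1}$ is comparable to the constant $(2^k r_0')^{-1-\delta_1}$ throughout $A_k$ and comparable to the same constant (up to a fixed factor $2^{\pm(1+\delta_1)}$) throughout $\tilde A_k$, so multiplying the displayed inequality by this constant and summing over $k$ is harmless: the weighted left-hand side over $\{r\ge r_0\}$ is bounded by $B(r_0,\delta_2)$ times $\sum_k$ of the weighted right-hand side over $\tilde A_k$, which by bounded overlap is $\le B(r_0,\delta_2)$ times the weighted right-hand side integrated over $\Sigma_\tau \cap \{r \ge r_0 - \delta_2\}$. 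Finally I would restore the cutoff: since $\Phi$ and $\chi\Phi$ differ only on the compact $r$-region where $\chi' \ne 0$, the difference of their energies is absorbed into $B(r_0,\delta_2)\int \mathbf{J}^N_\mu[\psi]n^\mu_{\Sigma_\tau}$ by commuting $\Phi$ through $(\ref{WAVE})$ and using an elliptic estimate on that bounded region (or equivalently by noting $R_1$ can be taken $> r_0 - \delta_2$, so $\chi \equiv 1$ wherever the estimate needs $\Phi\psi$). This yields the stated inequality.

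The routine parts here are the interior elliptic estimate on a fixed annulus and the bounded-overlap summation; neither presents a genuine obstacle. \textbf{The only point requiring attention} is the scaling: one must check that after the dilation $r \mapsto 2^{-k}r$ the operator associated to $(\ref{WAVE})$ on $\Sigma_\tau$ (which is asymptotically the flat Laplacian as $r\to\infty$, the $a$-dependent corrections being lower order in $r$) has elliptic and lower-order coefficients bounded uniformly in $k$ in the relevant $C^1$ norms, so that the elliptic constant does not degenerate as $k \to \infty$; this is where the choice to work on \emph{dyadic} annuli (rather than unit-width ones) is essential, since it is dilations, not translations, that respect the asymptotic structure. Given that check, the weighted sum closes exactly as above, with the final constant depending on $r_0$ and $\delta_2$ (through $k_0$ and the fattening) but not on $\tau$, as required.
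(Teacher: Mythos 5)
The paper itself omits this proof (``this straightforward proof is omitted''), so there is no authors' argument to compare against; I can only judge your plan on its own terms. The core of your plan --- decompose $\Sigma_\tau \cap [r_0,\infty)$ into dyadic annuli, run a scale-invariant interior elliptic estimate on each (using the asymptotic flatness of $g_{a,M}$ to get a uniform constant after dilation), and then sum against the slowly-varying weight $r^{-1-\delta_1}$ using bounded overlap of the fattened annuli --- is sound and is a standard route to a weighted second-order elliptic bound. The reduction to second derivatives (first derivatives being handled pointwise by $\mathbf{J}^N_\mu[\psi]n^\mu$) and the scaling check you flag as the only delicate point are both correctly identified.

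There is, however, a genuine gap in your handling of the cutoff $\chi$. You write that ``$\Phi$ and $\chi\Phi$ differ only on the compact $r$-region where $\chi' \neq 0$'' and conclude the difference of energies is absorbed into $B\int\mathbf{J}^N_\mu[\psi]n^\mu$. This is false on two counts: $\Phi$ and $\chi\Phi$ differ on all of $\{r > R_1\}$, not just on $\operatorname{supp}\chi'$ (on $[R_1+1,\infty)$ we have $\chi\Phi\psi \equiv 0$ but $\Phi\psi$ need not vanish); and $\mathbf{J}^N_\mu[\Phi\psi]$ is a second-order quantity in $\psi$ carrying growing $r$-weights (schematically $\Phi \sim r\cdot\nabb$), so it cannot be absorbed by the first-order energy $\mathbf{J}^N_\mu[\psi]$. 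Your alternative parenthetical --- ``$R_1$ can be taken $>r_0-\delta_2$'' --- does not resolve this either: $R_1$ is fixed in the paper before $r_0$ is chosen, and even if one allowed $R_1$ to depend on $r_0$, increasing $R_1$ would only push the problem outward, not remove it, since the estimate is over the unbounded set $[r_0,\infty)$.

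The correct and needed observation is structural rather than about absorbing an error term: for $r$ outside the ergoregion $\mathcal{S}$ (in particular for $r > 2M$, and a fortiori for $r > R_1$, which the paper takes to lie outside $\mathcal{S}$), the Killing field $T$ is by itself timelike, so the interior elliptic estimate on each such annulus uses only $\mathbf{J}^N_\mu[T\psi]$ and $\mathbf{J}^N_\mu[\psi]$ on the right-hand side and requires no $\Phi$-commuted energy at all. The $\Phi$-commuted energy is needed only on the part of $[r_0-\delta_2,\infty)$ meeting the ergoregion, where $\chi\equiv 1$ and hence $\chi\Phi\psi = \Phi\psi$. With that, the right-hand side $\mathbf{J}^N_\mu[T\psi] + \mathbf{J}^N_\mu[\chi\Phi\psi] + \mathbf{J}^N_\mu[\psi]$ is exactly what the annular estimates produce, and your dyadic summation closes. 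So the plan is salvageable --- indeed close to complete --- but the $\chi\Phi$ step as written is wrong and must be replaced by the observation about $T$ being timelike away from the ergoregion.
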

\begin{proof}This straightforward proof is omitted.
\end{proof}
\begin{lemma}\label{ellipticOffHorizonLargeDecay}
For $\psi$ as above, then for any $2M + 1 \leq  r_0 < \infty$ and $\delta_1,\delta_2 > 0$,\begin{align*}
&\int_{\Sigma_{\tau} \cap \{[r_0,\infty)\}}r^{-1-\delta_1}\mathbf{J}^N_{\mu}[\chi\Phi\psi]n^{\mu}_{\Sigma_{\tau}}\leq B(\delta_2)\int_{\Sigma_{\tau}\cap [r_0-\delta_2,\infty)}\left(r^{-1-\delta_1}\mathbf{J}^N_{\mu}[T\psi]n^{\mu}_{\Sigma_{\tau}} +r^{-1-\delta_1}\mathbf{J}^N_{\mu}[\psi]n^{\mu}_{\Sigma_{\tau}}\right).
\end{align*}
\end{lemma}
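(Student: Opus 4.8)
The statement to prove is Lemma~\ref{ellipticOffHorizonLargeDecay}: for $\psi$ a solution of $(\ref{WAVE})$, $2M+1 \le r_0 < \infty$ and $\delta_1,\delta_2 > 0$, control of the weighted $\chi\Phi\psi$ energy on $\Sigma_\tau \cap \{r \ge r_0\}$ by the weighted $T\psi$ and $\psi$ energies on $\Sigma_\tau \cap \{r \ge r_0 - \delta_2\}$. The essential structural input is Lemma~\ref{timelikeVector} (or, equally, the computation of $g(T,T)$ recorded in Remark~\ref{ergoTrap}), which guarantees that $T$ is \emph{timelike} throughout $\{r \ge 2M+1\} \supset \{r \ge r_0\}$, since $2M+1 > 2M$ and $r \ge 2M$ already suffices for $g(T,T) < 0$. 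Hence on the support of $\chi\Phi\psi \cap \{r \ge r_0\}$ we are in a region where $T$ alone spans a timelike direction; this is precisely the simplification over Lemma~\ref{ellipticOffHorizonDecay}, where one needed $\Phi$ to fill out the timelike cone near the ergoregion.

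First I would reduce to a purely local elliptic statement on each slice $\Sigma_\tau$, since the weight $r^{-1-\delta_1}$ is $\varphi_\tau$-invariant and all the quantities are integrated over the same foliation. The idea is: since $\chi$ is supported in $[r_+, R_1+1]$ with $R_1$ outside the ergoregion, $\chi\Phi\psi$ is supported in a fixed compact $r$-range, so the weight $r^{-1-\delta_1}$ is bounded above and below by constants ($B(R_1)$ and $b(R_1)$) on that support. Thus the weighted inequality on the left reduces to the \emph{unweighted} bound $\int_{\Sigma_\tau \cap [r_0, R_1+1]} \mathbf{J}^N_\mu[\chi\Phi\psi] n^\mu_{\Sigma_\tau} \le B \int_{\Sigma_\tau \cap [r_0 - \delta_2, R_1+2]} (\mathbf{J}^N_\mu[T\psi] n^\mu + \mathbf{J}^N_\mu[\psi]n^\mu)$, and then one reinserts the weight on the right (which is again comparable to a constant on the compact range). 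For $r \le r_0$ the left side integrand vanishes; for $r > R_1 + 1$ the factor $\chi$ vanishes. So only a compact $r$-annulus matters.

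The core step is then a standard second-order interior elliptic estimate. Using $(\ref{WAVE})$, $\Box_{g_{a,M}}\psi = 0$ can be rewritten, after peeling off the $T$-derivatives, as a (time-dependent) elliptic equation on $\Sigma_\tau$ for $\psi$ in the region where $T$ is timelike: schematically $L_{\Sigma_\tau}\psi = (\text{lower order in }\nabla_{\Sigma_\tau}\psi) + (\text{terms involving }T\psi, T^2\psi)$, where $L_{\Sigma_\tau}$ is uniformly elliptic on $\{r \ge 2M + 1/2\}$ precisely because $g(T,T) < 0$ there. Commuting with $\Phi = \partial_{\phi^*}$ (which is Killing, hence commutes with $\Box_g$ and preserves the elliptic structure) and multiplying by the cutoff $\chi$ — which introduces only commutator error terms supported in $\{R_1 \le r \le R_1+1\}$, controllable by $\mathbf{J}^N_\mu[\psi]n^\mu$ there — one obtains local interior $H^1_{\Sigma_\tau}$ control of $\chi\Phi\psi$ in terms of the $L^2$ norm of $L_{\Sigma_\tau}(\chi\Phi\psi)$ on a slightly larger annulus, together with the $L^2$ norm of $\chi\Phi\psi$ itself. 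The source $L_{\Sigma_\tau}(\chi\Phi\psi)$ is built from $T\psi$ and $T^2\psi = T(T\psi)$ derivatives and from cutoff commutators; since $\Phi$ and $T$ commute, all such terms are controlled by $\int \mathbf{J}^N_\mu[T\psi]n^\mu$ on the enlarged annulus, while the zeroth-order $\chi\Phi\psi$ term and the commutator errors are controlled by $\int \mathbf{J}^N_\mu[\psi]n^\mu$. Choosing the elliptic cutoff so that its support lies inside $[r_0 - \delta_2, \infty)$ gives the stated right-hand side.

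The only mild subtlety — and the place I expect to spend the most care rather than the most difficulty — is handling the $T^2\psi$ term: the elliptic equation for $\psi$ on $\Sigma_\tau$ has a $g^{t^*t^*}\partial_{t^*}^2\psi$ term on the right, and a priori $\int_{\Sigma_\tau}\mathbf{J}^N_\mu[T\psi]n^\mu$ only controls first derivatives of $T\psi$, i.e. $\partial_{t^*}^2\psi$ in $L^2$, which is exactly what is needed; so in fact $\mathbf{J}^N_\mu[T\psi]n^\mu \gtrsim (T^2\psi)^2 + |\nabb T\psi|^2 + (\tilde Z^* T\psi)^2$ by the pointwise coercivity recalled just before Lemma~\ref{Nelliptic}, and this closes the estimate cleanly. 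All of these are entirely standard local elliptic manipulations of the type already invoked (without proof) in Lemmas~\ref{ellipticEstimates}--\ref{ellipticOffHorizonDecay}, so the proof is a two-line appeal: \emph{This follows from a local elliptic estimate on $\Sigma_\tau$ away from the horizon, using that $T$ is timelike for $r \ge 2M$ (Lemma~\ref{timelikeVector}), after commuting with the Killing field $\Phi$ and the cutoff $\chi$; the straightforward argument is omitted.}
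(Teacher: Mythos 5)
Your proposal is correct and takes essentially the same approach as the paper, which simply notes that $[2M+1,\infty)$ lies outside the ergoregion (so $T$ alone is timelike there) and applies the same elliptic estimates as in the preceding lemmas. One small caveat: Lemma~\ref{timelikeVector} concerns the modified vector field $T + \frac{2Mar}{(r^2+a^2)^2}\Phi$ rather than $T$ itself, so the correct reference for the timelikeness of $T$ on $\{r>2M\}$ is the ergoregion formula~$(\ref{ergoregionS})$ (or Remark~\ref{ergoTrap}), which you also invoke.
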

\begin{proof}It suffices to remark that the region $[2M+1,\infty)$ lies outside the ergoregion (see~(\ref{ergoregionS})),
and apply elliptic estimates as before.
\end{proof}
The following lemma will be used in conjunction with red-shift estimate of Proposition~\ref{ftrs} and the commutation formula for $Y$ given in Proposition~\ref{commuprop}.
\begin{lemma}\label{thetaNearHorizon}
For $\psi$ as above, then for all $\epsilon > 0$, we may find a $r_0 > r_+$
depending on $\epsilon$ such that
\begin{align*}
&\int_{\Sigma_{\tau}\cap\{r \leq r_0\}}\left|\nabb^2\psi\right|^2 \leq
B\int_{\Sigma_{\tau}\cap [r_+,r_0)}\left(\mathbf{J}^N_{\mu}[T\psi]n^{\mu}_{\Sigma_{\tau}}+ \mathbf{J}^N_{\mu}[\chi\Phi\psi]n^{\mu}_{\Sigma_{\tau}} + \mathbf{J}^N_{\mu}[\psi]n^{\mu}_{\Sigma_{\tau}}\right) + \epsilon\int_{\Sigma_{\tau}}\mathbf{J}^N_{\mu}[Y\psi]n^{\mu}_{\Sigma_{\tau}}.
\end{align*}
\end{lemma}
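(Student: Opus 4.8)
The plan is to derive the estimate from a frequency-independent partition-of-unity argument combined with the elliptic estimates already available near the horizon. The key structural fact to exploit is that in the region $r \le r_0$, once $r_0$ is taken close enough to $r_+$ (so that $r_0$ lies inside the region where the red-shift constructions are valid), the operator $\lap$ acting on spheres is, up to lower order and bounded-coefficient terms, recoverable from the wave operator $\Box_{g_{a,M}}$ after we remove the ``bad'' derivatives. More precisely, writing out $(\ref{WAVE})$ in Boyer--Lindquist-type coordinates as in $(\ref{waveEqn})$, one isolates $\lap\psi$ (the spherical Laplacian associated to $\slashg$) in terms of $T^2\psi$, $\Phi T\psi$, $\Phi^2\psi$, $Z^*$-derivatives of $\psi$, and $F = \Box_{g_{a,M}}\psi = 0$. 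The coefficient in front of $\lap\psi$ is $\rho^{-2}$, which is bounded and bounded below on $\{r \le r_0\}$, so solving for $\lap\psi$ is legitimate there. This expresses $\lap\psi$ pointwise in terms of quantities controlled by $\mathbf{J}^N_\mu[T\psi]$, $\mathbf{J}^N_\mu[\chi\Phi\psi]$ (note $\chi=1$ in this region by our choice of $R_1$), $\mathbf{J}^N_\mu[\psi]$, plus, crucially, terms of the form $(Z^*)^2\psi$.

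\textbf{Main steps.} First I would write the pointwise identity obtained from $(\ref{waveEqn})$:
\[
\rho^{-2}\lap\psi = g^{tt}\partial_t^2\psi - \frac{4Mar}{\rho^2\Delta}\partial_{t,\phi}^2\psi + \frac{\Delta - a^2\sin^2\theta}{\Delta\rho^2\sin^2\theta}\partial_\phi^2\psi + \frac{r^2+a^2}{\Delta\rho^2}\partial_{r^*}\big((r^2+a^2)\partial_{r^*}\psi\big),
\]
valid for $\psi$ solving $(\ref{WAVE})$, where all undifferentiated coefficients are smooth and bounded on the compact $r$-range $[r_+,r_0]$ and $\partial_{r^*}$ is comparable to $Z^*$ there. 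Second, I would use a standard elliptic estimate on each sphere $\Sigma_\tau \cap \{r = \text{const}\}$: from an $L^2$ bound on $\lap\psi$ and an $L^2$ bound on $\nabb\psi$ one recovers an $L^2$ bound on $\nabb^2\psi$ (this is elliptic regularity for $\mathbb{S}^2$ with its metric $\slashg \sim r^2\gamma$, uniform in $r$ over the compact range and uniform in $\tau$ by $\varphi_\tau$-invariance). Integrating over the $r$-range in $[r_+,r_0]$ with the bounded volume form from Section~\ref{usefulcomps} then yields
\[
\int_{\Sigma_\tau \cap \{r \le r_0\}} |\nabb^2\psi|^2 \le B\int_{\Sigma_\tau \cap [r_+,r_0]}\Big(\mathbf{J}^N_\mu[T\psi]n^\mu_{\Sigma_\tau} + \mathbf{J}^N_\mu[\chi\Phi\psi]n^\mu_{\Sigma_\tau} + \mathbf{J}^N_\mu[\psi]n^\mu_{\Sigma_\tau} + |(Z^*)^2\psi|^2 + |\nabb\psi|^2\Big).
\]
Third, the only term not yet of the desired form is $|(Z^*)^2\psi|^2$. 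Here I would invoke the red-shift commutation of Proposition~\ref{commuprop} together with the observation that, since $Y = N - K$ is transversal to $\mathcal{H}^+$ while $T$, $\Phi$ span a timelike (on $\mathcal{H}^+$, null) direction, near the horizon the span of $\{Y, T, \Phi\}$ is everywhere a full spatial frame; consequently $Z^*$ (equivalently $\partial_{r^*}$) is a bounded-coefficient combination of $Y$, $T$, $\chi\Phi$ in a neighbourhood of $r_+$. Applying this twice, $|(Z^*)^2\psi|^2 \lesssim \sum |\,\Xi_1\Xi_2\psi|^2$ over $\Xi_i \in \{Y, T, \chi\Phi\}$, and each such second-order term is controlled by $\mathbf{J}^N_\mu[\Xi_2\psi]n^\mu_{\Sigma_\tau}$ plus commutator error terms that are lower order. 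The $Y$-commuted term $\mathbf{J}^N_\mu[Y\psi]n^\mu_{\Sigma_\tau}$ — integrated over all of $\Sigma_\tau$ — is exactly the term we allow on the right with the small constant $\epsilon$; to get the $\epsilon$ I would take $r_0$ close enough to $r_+$ that the bounded coefficients expressing $Z^*$ through $\{Y,T,\chi\Phi\}$ carry the smallness needed (alternatively one keeps the $Y\psi$ energy but only over $\{r \le r_0\}$ and then notes that shrinking $r_0$ does not suffice on its own — so the cleaner route is: absorb $\int_{\{r\le r_0\}}\mathbf{J}^N_\mu[Y\psi]$ into $\epsilon\int_{\Sigma_\tau}\mathbf{J}^N_\mu[Y\psi]$ trivially, and for the mixed terms $Y T\psi$, $Y\chi\Phi\psi$ use that these reduce by the elliptic estimates on spheres again, or simply accept them inside the small-$\epsilon$ budget after one more application of the commutation structure). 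The remaining $\mathbf{J}^N_\mu[TT\psi]$, $\mathbf{J}^N_\mu[T\chi\Phi\psi]$, $\mathbf{J}^N_\mu[\chi\Phi\chi\Phi\psi]$ type terms are $\le B\int_{\Sigma_\tau\cap[r_+,r_0)}(\mathbf{J}^N_\mu[T\psi] + \mathbf{J}^N_\mu[\chi\Phi\psi] + \mathbf{J}^N_\mu[\psi])n^\mu_{\Sigma_\tau}$ after one further elliptic estimate on spheres, since $T$ and $\chi\Phi$ are tangent to $\Sigma_\tau$ and commute with $\Box_{g_{a,M}}$ (modulo the support of $\chi'$, which lies at $r \sim R_1 > r_0$ and hence does not interfere on $\{r \le r_0\}$).

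\textbf{Main obstacle.} The genuine subtlety — and the point I would be most careful about — is the appearance of $(Z^*)^2\psi$ in the elliptic identity and the bookkeeping of how its energy splits between the ``good'' right-hand side (tangential $T$, $\chi\Phi$ energies plus zeroth order) and the ``$\epsilon$-small'' $Y$-commuted energy. One cannot recover $(Z^*)^2\psi$ purely from tangential derivatives — this is precisely why the red-shift commutation vector field $Y$ enters the higher-order scheme at all — so the structure of the inequality (tangential terms with a big constant $B$, transversal $Y$-term with a small constant $\epsilon$) is dictated by this. The choice of $r_0$ depending on $\epsilon$ is what makes the transversal contribution small: near $r_+$ the metric coefficient $\Delta$ in the $\partial_{r^*}$-term of $(\ref{waveEqn})$ and the corresponding degeneration can be exploited, and the frame coefficients expressing $Z^*$ through $\{Y,T,\chi\Phi\}$ can be arranged to be small multiples of $Y$ in a sufficiently small neighbourhood of $\mathcal{H}^+$ (alternatively one uses directly that $\mathbf{J}^N_\mu[Y\psi]n^\mu$ over the \emph{sub}region $\{r\le r_0\}$ is a fixed quantity and simply bounds it by $\epsilon$ times the full $\Sigma_\tau$ integral, which is always legitimate). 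I would present the argument in the first of these two styles as it is the standard one (cf.\ the elliptic estimates in~\cite{dr7}), but flag that the second is available if one wants to avoid tracking $r_0(\epsilon)$ explicitly. Everything else — the sphere elliptic estimate, the uniformity in $\tau$ via $\varphi_\tau$-invariance, the bounded volume form — is routine and I would state it with a reference to the analogous lemmas in~\cite{dr7}.
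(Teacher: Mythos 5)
There is a genuine gap in your proposal, and it concerns precisely the point you flag as the ``main obstacle'': where the $\epsilon$-smallness of the $Y$-commuted term comes from. The paper's proof hinges on a single structural fact about the principal symbol of $\Box_{g_{a,M}}$ near the horizon: since $Y = N - K$ is \emph{null} on $\mathcal{H}^+$, when one expresses $\Box_g$ in the null frame $\{K,Y,E_1,E_2\}$ of Proposition~\ref{commuprop}, the coefficient of $Y^2\psi$ is $g^{YY}$, which \emph{vanishes} on $\mathcal{H}^+$ and is therefore $< \epsilon$ on $\{r \le r_0(\epsilon)\}$. Solving the wave equation for the angular second derivatives then leaves only $KY\psi$ and $K^2\psi$ (controlled by $\mathbf{J}^N_\mu[T\psi]$, $\mathbf{J}^N_\mu[\Phi\psi]$ since $K = T+\upomega_+\Phi$), plus an $O(\epsilon)$-coefficient $Y^2\psi$ term. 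This vanishing is what the statement of the lemma is encoding, and your proposal never identifies it.

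Both of the mechanisms you offer for producing the $\epsilon$ fail. First, you suggest that ``the bounded coefficients expressing $Z^*$ through $\{Y,T,\chi\Phi\}$ carry the smallness needed'' after taking $r_0$ close to $r_+$. This is false: $Z^*$ and $Y$ are \emph{both} transverse to $\mathcal{H}^+$ while $T$ and $\Phi$ are tangent, so in the decomposition $Z^* = a_Y Y + a_T T + a_\Phi\Phi$ the coefficient $a_Y$ is necessarily bounded \emph{away} from zero near the horizon (it cannot degenerate, or $Z^*$ would become tangent). Second, you claim one may ``absorb $\int_{\{r\le r_0\}}\mathbf{J}^N_\mu[Y\psi]$ into $\epsilon\int_{\Sigma_\tau}\mathbf{J}^N_\mu[Y\psi]$ trivially, which is always legitimate.'' This is not legitimate: a data set concentrated in $\{r \le r_0\}$ makes the two integrals equal, so the inequality fails for any $\epsilon < 1$. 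Shrinking the domain of integration gives nothing by itself; one needs the \emph{coefficient} of the transversal second derivative to degenerate.

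There is a secondary problem in the setup. You invoke equation~$(\ref{waveEqn})$, which is written in the Boyer--Lindquist $(t,r^*,\theta,\phi)$ system, and assert that ``all undifferentiated coefficients are smooth and bounded on the compact $r$-range $[r_+,r_0]$'' and that ``$\partial_{r^*}$ is comparable to $Z^*$ there.'' Both are false near the horizon: $g^{tt}$ and the coefficient $\tfrac{(r^2+a^2)^2}{\Delta\rho^2}$ of $\partial_{r^*}^2\psi$ blow up like $\Delta^{-1}$ as $r\to r_+$, and $\partial_{r^*}$ extends to $K = T+\upomega_+\Phi$, a vector field \emph{tangent} and null on $\mathcal{H}^+$, whereas $Z^*$ is \emph{transverse}, so they are not comparable. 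To avoid these degeneracies one must work either in the Kerr-star chart or, as the paper does, directly in the regular null frame $\{K,Y,E_1,E_2\}$ -- and it is exactly in that frame that the vanishing $g^{YY}|_{\mathcal{H}^+} = 0$ becomes visible, which is the missing ingredient.
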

\begin{proof}Since $Y$ is null on $\mathcal{H}^+$, on $\mathcal{H}^+$ there is no $Y^2$ term in the wave equation. In particular, the second derivative terms in the wave equation which contain a $Y$ derivative may be controlled by $\mathbf{J}^N_{\mu}[T\psi]n^{\mu}_{\Sigma_{\tau}}$ and $\mathbf{J}^N_{\mu}[\Phi\psi]n^{\mu}_{\Sigma_{\tau}}$. Given these observations, the lemma easily follows from elliptic estimates on spheres.
\end{proof}
We will also need some integrated in time estimates:
\begin{lemma}\label{iledEllipticT}Let $\psi$
be as above, and let $R < \infty$. Then
\begin{align*}
&\int_0^{\infty}\int_{\Sigma_{\tau}\cap [3M+1,R)}\mathbf{J}^N_{\mu}[\psi]n^{\mu}_{\Sigma_{\tau}} \leq
B\int_0^{\infty}\int_{\Sigma_{\tau}\cap [3M,R+1)}\left(\left|T\psi\right|^2 + \left|\partial_{r^*}\psi\right|^2+\left|\psi\right|^2\right) + B\int_{\Sigma_0}\mathbf{J}^N_{\mu}[\psi]n^{\mu}_{\Sigma_0}.
\end{align*}
\end{lemma}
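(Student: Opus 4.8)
\textbf{Proof proposal for Lemma~\ref{iledEllipticT}.}

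The plan is to prove the estimate by a localised elliptic argument, exactly in the spirit of the other lemmas of this section, but now carried out integrated in $\tau$ and with a genuine control over the error terms supplied by the wave equation. First I would fix a smooth cutoff $\kappa(r)$ which is identically $1$ on $[3M+1,R]$ and identically $0$ outside $[3M,R+1]$. The point is that the region $\{3M \leq r \leq R+1\}$ lies \emph{outside} the ergoregion $\mathcal{S}$ (recall $3M > 2M$ and the description~(\ref{ergoregionS}) of $\mathcal{S}$), so that $T$ is timelike there; hence on each slice $\Sigma_\tau$ one has, by the algebraic properties of the energy-momentum tensor, the pointwise bound
\[
\mathbf{J}^N_{\mu}[\psi]n^{\mu}_{\Sigma_\tau} \leq B\left(\left(T\psi\right)^2 + \left|\nabb\psi\right|^2 + \left(\partial_{r^*}\psi\right)^2\right)
\]
in $\{3M+1 \leq r \leq R\}$. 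Thus it remains only to absorb the angular term $\int_0^\infty \int_{\Sigma_\tau\cap[3M+1,R)}|\nabb\psi|^2$ into the right hand side, the other two terms already being present there.

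The key step is to control the angular energy $|\nabb\psi|^2$ in the stated region. Multiply the wave equation $\Box_{g_{a,M}}\psi = 0$ by $\kappa^2\overline{\psi}$ (or, more precisely, by a suitable multiple of $\slashg$-Laplacian-adapted test function $\kappa^2 \psi$) and integrate over $\mathcal{R}_{(0,\infty)}$. Isolating the angular part of $\Box_{g_{a,M}}$ — which, in the $(t,r^*,\theta,\phi)$ variables, is comparable to $\rho^{-2}\lap$ with good weights on the support of $\kappa$ — and integrating by parts over the spheres, one obtains
\[
b\int_0^{\infty}\int_{\Sigma_\tau\cap[3M+1,R)}\left|\nabb\psi\right|^2 \leq B\int_0^{\infty}\int_{\Sigma_\tau\cap[3M,R+1)}\left(\left|T\psi\right|^2 + \left|\partial_{r^*}\psi\right|^2 + \left|\psi\right|^2\right) + (\text{boundary terms}).
\]
The boundary terms, which arise from integrating by parts the $r^*$- and $t$-derivatives generated when commuting $\kappa$ through $\Box_{g_{a,M}}$ and from the spacelike ``caps'' at $\tau=0$ and $\tau\to\infty$, are supported in $\{3M \leq r \leq R+1\}$ and are quadratic in $\psi$ and its first derivatives. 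The contribution at $\tau = 0$ is bounded by $B\int_{\Sigma_0}\mathbf{J}^N_{\mu}[\psi]n^{\mu}_{\Sigma_0}$ directly. The contribution at $\tau \to \infty$ vanishes in the limit once one knows a uniform energy bound — but here we are proving estimates for future-integrable $\psi$ (as in Section~\ref{WPosed}'s reduction) and $\psi$ is compactly supported on each $\Sigma_\tau$, so along a suitable dyadic sequence $\tau_n \to \infty$ the boundary term at $\Sigma_{\tau_n}\cap[3M,R+1)$ tends to zero by a pigeonhole argument on $\int_0^\infty\int_{\Sigma_\tau\cap[r_+,R+1)}\mathbf{J}^N_\mu[\psi]n^\mu_{\Sigma_\tau}\,d\tau$, precisely as in the argument around~(\ref{dyadicEnergyDecay}) in Section~\ref{summation}. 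The spacetime error terms generated by $\kappa'$ (supported in $[3M,3M+1]\cup[R,R+1]$) are already of the form appearing on the right hand side of the claimed inequality, since they involve only first derivatives of $\psi$ and $\psi$ itself in $\{3M \leq r \leq R+1\}$.

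The main obstacle I expect is bookkeeping rather than conceptual: one must be careful that \emph{no} $T$- or $\Phi$-commuted quantities appear on the right hand side (the lemma asks only for $|T\psi|$, $|\partial_{r^*}\psi|$, $|\psi|$ and the $\Sigma_0$-energy), so the integration by parts must be arranged so that all second-derivative terms other than the purely angular one are either genuine first-derivative terms after integration by parts, or are absorbed using the wave equation to trade the ``bad'' second derivative (the $r^*$-transversal one) against $T$-derivatives and lower order terms with favourable weights on $\operatorname{supp}\kappa$. Since this region is compact in $r$ and bounded away from both the horizon and null infinity, all the coefficients in $\Box_{g_{a,M}}$ and all the volume-form weights (see Section~\ref{usefulcomps}) are uniformly comparable to constants depending only on $a_0$, $M$ and $R$, so no delicate weight-tracking is needed and the argument closes.
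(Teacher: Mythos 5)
Your strategy---multiply the wave equation by a cutoff times $\psi$, integrate by parts, and dispose of the $\tau\to\infty$ boundary by a pigeonhole argument---is exactly what the paper does, and the skeleton is sound.

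However, your account of \emph{why} the ergoregion matters is misplaced, and this is the conceptual heart of the lemma (witness that Lemma~\ref{iledEllipticTPhi}, stated down to any $r_0>r_+$, must carry $|\Phi\psi|^2$ on its right hand side, whereas Lemma~\ref{iledEllipticT} need not). The pointwise comparison $\mathbf{J}^N_\mu[\psi]n^\mu_{\Sigma_\tau}\le B\left((T\psi)^2+|\nabb\psi|^2+(\partial_{r^*}\psi)^2\right)$ you invoke holds on \emph{any} compact $r$-range bounded away from the horizon and infinity, being simply a uniform comparison of positive definite quadratic forms on the fibres of $T^*\Sigma_\tau$; it has nothing to do with $T$ being timelike, so attributing it to the ergoregion is a red herring. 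Where the ergoregion condition is genuinely used is in the rearranged wave equation~(\ref{waveEqn}): the coefficient of $\partial_\phi^2$ there is $\tfrac{\Delta-a^2\sin^2\theta}{\Delta\rho^2\sin^2\theta}$, and it is uniformly bounded below for $r\in[3M,R+1]$ precisely because this set is bounded away from $\{\Delta-a^2\sin^2\theta<0\}$. After multiplying by the cutoff times $\psi$ and integrating by parts, this term contributes $|\Phi\psi|^2$ on the favourable side with a non-degenerate coefficient; the only term threatening to place $|\Phi\psi|^2$ on the \emph{unfavourable} side is the mixed $\tfrac{4Mar}{\rho^2\Delta}\partial^2_{t\phi}\psi$ one, which after integration by parts yields a cross term $T\psi\cdot\Phi\psi$, absorbed by Cauchy--Schwarz into the non-degenerate angular energy just produced. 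Your closing speculation about ``trading the bad $r^*$-transversal second derivative against $T$-derivatives via the wave equation'' misidentifies the obstacle: after integration by parts the $r^*$ term is already a benign $|\partial_{r^*}\psi|^2$ on the allowed right hand side and needs no further manipulation; it is the $t\phi$-cross term, together with the sign of the $\partial_\phi^2$ coefficient, that the proof actually turns on.
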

\begin{proof}This is standard: One writes the wave equation as
\begin{align*}
&g^{tt}\partial_t^2\psi - \frac{4Mar}{\rho^2\Delta}\partial_{t,\phi}^2\psi = \frac{\Delta-a^2\sin^2\theta}{\Delta\rho^2\sin^2\theta}\partial_{\phi}^2\psi -
\frac{r^2+a^2}{\Delta\rho^2}\partial_{r^*}\left(\left(r^2+a^2\right)\partial_{r^*}\psi\right) - \frac{1}{\rho^{2}\sin\theta}\partial_{\theta}\left(\sin\theta\partial_{\theta}\psi\right),
\end{align*}
multiplies by a cutoff to $r \in [3M,R+1)$, multiplies by $\psi$, integrates by parts,
remembers the comments concerning the volume form in Section~\ref{usefulcomps},
and applies Hardy inequalities when appropriate.
\end{proof}
\begin{lemma}\label{iledEllipticTPhi}
For $\psi$ as above, then for any
 $r_0 > r_+$, $R < \infty$ and $\delta > 0$,
\begin{align*}
&\int_0^{\infty}\int_{\Sigma_{\tau}\cap [r_0+\delta,R-\delta]}\mathbf{J}^N_{\mu}[\psi]n^{\mu}_{\Sigma_{\tau}}
\\ \nonumber &\qquad \le B(r_0, \delta)\int_0^{\infty}\int_{\Sigma_{\tau}\cap [r_0,R])}\left(\left|T\psi\right|^2 +\left|\partial_{r^*}\psi\right|^2+\left|\Phi\psi\right|^2 + \left|\psi\right|^2\right) + B\int_{\Sigma_0}\mathbf{J}^N_{\mu}[\psi]n^{\mu}_{\Sigma_0}.
\end{align*}
\end{lemma}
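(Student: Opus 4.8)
\textbf{Plan for the proof of Lemma~\ref{iledEllipticTPhi}.}
This is a standard spacetime elliptic estimate, entirely analogous to Lemma~\ref{iledEllipticT}, but carried out in a region $[r_0+\delta,R-\delta]$ which is bounded \emph{away} from the horizon. The plan is to write the wave equation $(\ref{WAVE})$ in the form
\[
g^{tt}\partial_t^2\psi - \frac{4Mar}{\rho^2\Delta}\partial_{t,\phi}^2\psi = \frac{\Delta-a^2\sin^2\theta}{\Delta\rho^2\sin^2\theta}\partial_{\phi}^2\psi -
\frac{r^2+a^2}{\Delta\rho^2}\partial_{r^*}\left(\left(r^2+a^2\right)\partial_{r^*}\psi\right) - \frac{1}{\rho^{2}\sin\theta}\partial_{\theta}\left(\sin\theta\partial_{\theta}\psi\right),
\]
and to regard the left-hand side as a prescribed forcing term for a (degenerate, second order) elliptic operator in the variables $(r^*,\theta,\phi)$ on each slice $\Sigma_\tau$. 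Since the region in question is strictly separated from $\mathcal{H}^+$, the coefficient $\Delta$ is bounded above and below by positive constants depending on $r_0$ (and $R$), so the operator on the right-hand side, after absorbing the $\partial_\phi^2$ and $\partial_{t,\phi}^2$ terms onto the appropriate side, is uniformly elliptic on $[r_0,R]$.

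First I would fix a cutoff $\chi_{r_0,R}(r)$ which is identically $1$ on $[r_0+\delta,R-\delta]$ and identically $0$ outside $[r_0,R]$, multiply the equation above by $\chi_{r_0,R}$, then multiply by $\bar\psi$ and integrate by parts over $\Sigma_\tau$ with respect to the volume form (recalling the comments of Section~\ref{usefulcomps}, so that $dV_{\Sigma_\tau}\sim r^2\sin\theta\, dr\, d\theta\, d\phi$ in this bounded-$r$ region). The second-order terms $\partial_{r^*}^2\psi$ and $\partial_\theta^2\psi$ integrate by parts to yield $\int |\partial_{r^*}\psi|^2$ and $\int|\nabb\psi|^2$ modulo boundary-in-$r$ error terms supported on the transition regions $[r_0,r_0+\delta]\cup[R-\delta,R]$ and modulo first-order terms; the $\partial_t^2\psi$, $\partial_{t,\phi}^2\psi$ and $\partial_\phi^2\psi$ terms on the remaining side are bounded pointwise by $|T\psi|^2+|\Phi\psi|^2+|\psi|^2$ up to the cutoff derivatives. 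Absorbing cross terms with a Cauchy–Schwarz inequality and using that $\chi_{r_0,R}'$ is supported in $[r_0,R]$, one controls $\int_{\Sigma_\tau\cap[r_0+\delta,R-\delta]}(|\partial_{r^*}\psi|^2+|\nabb\psi|^2)$ by $\int_{\Sigma_\tau\cap[r_0,R]}(|T\psi|^2+|\partial_{r^*}\psi|^2+|\Phi\psi|^2+|\psi|^2)$; together with the trivial bound on $|T\psi|^2$ itself this gives $\int_{\Sigma_\tau\cap[r_0+\delta,R-\delta]}\mathbf{J}^N_{\mu}[\psi]n^\mu_{\Sigma_\tau}$. Finally I would integrate this inequality in $\tau$ over $[0,\infty)$; the only subtlety is that the integration by parts in $r^*$ used on the left-hand side involves $\partial_t^2\psi$ with only one factor of $\psi$, so that — exactly as in Lemma~\ref{iledEllipticT} — one cannot discard the boundary terms at $\tau=0,\infty$ without an energy input. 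These are handled by an additional integration by parts in $t$ (so that $\int\int |\partial_t^2\psi|\,|\psi|$ becomes $\int\int|\partial_t\psi|^2$ plus boundary-in-$\tau$ terms) and a finite-in-time energy estimate controlling the $\tau=0$ and $\tau=\infty$ boundary contributions by $\int_{\Sigma_0}\mathbf{J}^N_\mu[\psi]n^\mu_{\Sigma_0}$, using that $T$ is Killing; this is the origin of the $+B\int_{\Sigma_0}\mathbf{J}^N_\mu[\psi]n^\mu_{\Sigma_0}$ term in the statement.

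The only mild obstacle worth flagging is bookkeeping of the transition-region error terms: the cutoff $\chi_{r_0,R}$ has derivatives supported precisely where we want to give up control, so one must make sure that every stray term produced by $\chi_{r_0,R}'$ and $\chi_{r_0,R}''$ is of the form $\int_{\Sigma_\tau\cap[r_0,R]}(|T\psi|^2+|\partial_{r^*}\psi|^2+|\Phi\psi|^2+|\psi|^2)$ — which it is, since on $[r_0,R]$ all coefficients including $\Delta^{-1}$ are bounded by a constant $B(r_0,\delta)$. No genuinely new idea is required; as in the cited lemmas the proof is standard and I would simply state it as such, omitting the routine computation.
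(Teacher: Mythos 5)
Your proposal matches the paper's proof, which simply states that the lemma "is proven in the same fashion as Lemma~\ref{iledEllipticT}", whose own proof is exactly the multiply-by-a-cutoff, multiply-by-$\psi$, integrate-by-parts scheme you describe (with the $|\Phi\psi|^2$ term now permitted on the right because the region may intersect the ergoregion and meet the horizon). You correctly identify the one adaptation — that $\Delta^{-1}$ is bounded by $B(r_0)$ on $[r_0,R]$ since $r_0>r_+$ — and the $\tau$-boundary terms and zeroth-order terms are handled as in the cited lemma via energy estimates and Hardy inequalities, so no genuine discrepancy.
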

\begin{proof}This is proven in the same fashion as Lemma~\ref{iledEllipticT}.
\end{proof}

\subsection{Proof of Proposition~\ref{h.o.s.suff}}
Now we will prove Proposition~\ref{h.o.s.suff}.
\begin{proof}
Let $a_0$, $M$, $a$ and $\psi$ be as in the statement of the proposition.
Let us be given moreover $\delta>0$.
We will consider the case $j=2$. The case of $j \geq 3$ follows by induction in a straightforward fashion.

First, we commute the wave equation with $T$ and obtain $\Box_g\left(T\psi\right) = 0$. Since $T\psi$ is future-integrable, the integrated energy decay statement~(\ref{protasn1b}) holds for $T\psi$.

Next, we commute with $\chi\Phi$, where $\chi$ is a cutoff which is identically $1$ on $[r_+,R_1]$ and identically $0$ on $[R_1+1,\infty)$. We obtain $\Box_g\left(\chi\Phi\psi\right) = \left(\Box_g\chi\right)\Phi\psi + 2\nabla^{\mu}\chi\nabla_{\mu}\Phi\psi$. Now, Lemma~\ref{ellipticOffHorizonLargeDecay} implies
\begin{align}\label{errorPhi}
\int_0^{\infty}\int_{\Sigma_{\tau}}\left|\Box_g\left(\chi\Phi\psi\right)\right|^2 &\leq B\int_0^{\infty}\int_{\Sigma_{\tau}\cap [R_1,\infty)}r^{-1-\delta}\left(\mathbf{J}^N_{\mu}[T\psi]n^{\mu}_{\Sigma_{\tau}} + \mathbf{J}^N_{\mu}[\psi]n^{\mu}_{\Sigma_{\tau}}\right)
\\ \nonumber &\leq B\int_{\Sigma_0}\left(\mathbf{J}^N_{\mu}[T\psi]n^{\mu}_{\Sigma_0} + \mathbf{J}^N_{\mu}[\psi]n^{\mu}_{\Sigma_0}\right).
\end{align}

In the last inequality, we used that the integrated energy decay statement holds for $T\psi$. Now, statement~(\ref{inhomo4}) of Proposition~\ref{closedILEDinhomo} implies that the integrated energy decay statement holds for $\chi\Phi\psi$ as long as we add $B\int_{\Sigma_0}\left(\mathbf{J}^N_{\mu}[T\psi]n^{\mu}_{\Sigma_0} + \mathbf{J}^N_{\mu}[\psi]n^{\mu}_{\Sigma_0}\right)$
to the right hand side of the inequality.

Finally, we turn to commutation with $Y$. We recall Proposition~\ref{commuprop} which implies
\begin{equation}\label{goodSign}
\Box_g(Y\Psi)=\kappa_1 Y^2\Psi + \sum_{|{\bf m}|\le 2, m_4\le 1} c_{{\bf m}}
E_1^{m_1}E_2^{m_2}L^{m_3}Y^{m_4}\Psi
\end{equation}
where $\kappa_1>0$ is proportional to the surface gravity. Next, for any $\tilde r \leq r_{\rm red}$, we apply the energy estimate associated to the red-shift vector field $N$, in between the hypersurfaces $\Sigma_0$ and $\Sigma_{\tau}$:
\begin{align}\label{redImplyHigher2}
&\int_{\Sigma_{\tau}}\mathbf{J}^N_{\mu}[Y\psi]n^{\mu}_{\Sigma_{\tau}} + \int_{\mathcal{H}^+(0,\tau)}\mathbf{J}^N_{\mu}[Y\psi]n^{\mu}_{\mathcal{H}^+} + \int_{0}^{\tau}\int_{\Sigma_s\cap\{r \leq \tilde r\}}\mathbf{J}^N_{\mu}[Y\psi]n^{\mu}_{\Sigma_s}\, ds  \\
\nonumber&\qquad \le B\int_{0}^{\tau}\int_{\Sigma_s}\left(1_{r \in \left[\tilde r,\tilde r+\delta\right]}\mathbf{J}^N_{\mu}[Y\psi]n^{\mu}_{\Sigma_{s}} + \mathcal{E}^N[Y\psi]\right)\, ds + \int_{\Sigma_{s_1}}\mathbf{J}^N_{\mu}[Y\psi]n^{\mu}_{\Sigma_{s_1}}.
\end{align}

For any $\epsilon > 0$, we may choose $\tilde r$ close enough to $r_+$, $\delta$ small enough so that $\tilde r + 2\delta < 3M - s^-$ and use~(\ref{goodSign}), Lemma~\ref{ellipticOffHorizonLocal}, Lemma~\ref{thetaNearHorizon}, Lemma~\ref{ellipticEstimatesLocal} and the fact that $N = K+Y$, to show that
\begin{align}\label{errorControl2}
&\int_{\Sigma_s}\mathcal{E}^N[Y\psi]
\\ \nonumber &\,\,\leq \epsilon\int_{\Sigma_s\cap\{r\leq \tilde r\}}\mathbf{J}^N_{\mu}[Y\psi]n^{\mu}_{\Sigma_s} + B\epsilon^{-1}\int_{\Sigma_s\cap [r_+,\tilde r+2\delta]}\left(\mathbf{J}^N_{\mu}[T\psi]n^{\mu}_{\Sigma_s} + \mathbf{J}^N_{\mu}[\chi\Phi\psi]n^{\mu}_{\Sigma_s}+\mathbf{J}^N_{\mu}[\psi]n^{\mu}_{\Sigma_s}\right)
\\ \nonumber &\,\,\leq \epsilon\int_{\Sigma_s\cap\{r\leq \tilde r\}}\mathbf{J}^N_{\mu}[Y\psi]n^{\mu}_{\Sigma_s}+B\epsilon^{-1}\int_{\Sigma_0}\left(\mathbf{J}^N_{\mu}[T\psi]n^{\mu}_{\Sigma_0}+\mathbf{J}^N_{\mu}[\chi\Phi\psi]n^{\mu}_{\Sigma_0} + \mathbf{J}^N_{\mu}[\psi]n^{\mu}_{\Sigma_0}\right).
\end{align}

Combining~(\ref{errorControl2}) and~(\ref{redImplyHigher2}) implies
\begin{align}
&\int_{\Sigma_{\tau}}\mathbf{J}^N_{\mu}[Y\psi]n^{\mu}_{\Sigma_{\tau}} + \int_{\mathcal{H}^+(0,\tau)}\mathbf{J}^N_{\mu}[Y\psi]n^{\mu}_{\mathcal{H}^+}+\int_0^{\tau}\int_{\Sigma_s\cap\{r\leq \tilde r\}}\mathbf{J}^N_{\mu}[Y\psi]n^{\mu}_{\Sigma_s}\, ds  \\ \nonumber &\qquad \le B\int_{\Sigma_0}\left(\mathbf{J}^N_{\mu}[T\psi]n^{\mu}_{\Sigma_0}+\mathbf{J}^N_{\mu}[\chi\Phi\psi]n^{\mu}_{\Sigma_0} + \mathbf{J}^N_{\mu}[\psi]n^{\mu}_{\Sigma_0}\right).
\end{align}

Now, the proof concludes with applications of
Lemmas~\ref{horizonSphere},
\ref{ellipticEstimates}, \ref{ellipticEstimatesLocal}, \ref{iledEllipticTPhi},
\ref{ellipticOffHorizonDecay}, \ref{Nelliptic}, \ref{iledEllipticT} and (for the null infinity
$\mathcal{I}^+$ term) straightforward $\mathbf{J}^T$ energy estimates in a large $r$ region.
\end{proof}

\section{The continuity argument}
\label{continuityargsec}

In this section, we will prove
\begin{bigprop}\label{allIntegrable}
Let $M>0$ and $|a|<M$.
All solutions $\psi$ to the wave equation~(\ref{WAVE})
on $\mathcal{R}_0$ as in the
reduction of Section~\ref{WPosed} (i.e.~arising from smooth, compactly supported initial data on $\Sigma_0$)
are future-integrable.
\end{bigprop}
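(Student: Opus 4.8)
The plan is to prove Proposition~\ref{allIntegrable} by the continuity argument in the azimuthal parameter $a$ advertised in Section~\ref{continuityintro}. First I would reduce, as in Section~\ref{reduxfixed}, to the study of a single azimuthal mode: given data on $\Sigma_0$, decompose $\psi = \sum_{m} \psi_m$ where $\psi_m$ is the projection onto the $e^{im\phi^*}$ azimuthal frequency. Since $\Phi=\partial_{\phi^*}$ is Killing and commutes with $\Box_{g_{a,M}}$, each $\psi_m$ solves $(\ref{WAVE})$, and smooth compactly supported data projects to (no longer compactly supported, but still decaying) data for which all the estimates of the previous sections apply by density. It suffices to show that for each fixed $m\in\mathbb Z$ the set
\[
\mathcal{A}_m=\{|a|\in[0,M)\ :\ \psi_m {\rm\ is\ future\hbox{-}integrable\ for\ all\ such\ data}\}
\]
is nonempty, open, and closed in $[0,M)$; then $\mathcal{A}_m=[0,M)$, summing over $m$ gives the proposition. (One must phrase future-integrability of $\psi_m$ carefully since for fixed $m$ the relevant Definition~\ref{sufficient}-type bounds only involve finitely many $\mathbb S^2$-derivatives; this is the content of the reduction of Section~\ref{reduxfixed}.)

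Nonemptiness ($0\in\mathcal{A}_m$) is immediate: for $a=0$ there is no ergoregion, so the general boundedness result of~\cite{jnotes} applies and gives uniform energy boundedness through $\Sigma_\tau$, hence square-integrability in $t^*$ of $\psi_m$ (indeed of $\psi$) and of all its $T$, $Z^*$ and $\nabb$ derivatives, which is future-integrability. Closedness is the easy direction: if $a_k\to a_\infty$ with $a_k\in\mathcal{A}_m$, then by Lemma~\ref{lemmaSmoothDependence} the solutions $\psi_k$ on $g_{a_k,M}$ with fixed data converge on each $\Sigma_\tau$ in the relevant energies to the solution $\psi_\infty$ on $g_{a_\infty,M}$; combined with the uniform (in $a\le a_0$ for any $a_0<M$) integrated local energy decay and energy boundedness estimates of Proposition~\ref{closedILED} and Proposition~\ref{h.o.s.suff}---which hold for \emph{future-integrable} solutions and in particular for each $\psi_k$---one passes to the limit to conclude $\psi_\infty$ satisfies the same uniform bounds, hence is future-integrable. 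This is carried out in Section~\ref{closednesssec}.

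Openness is the heart of the matter and the main obstacle. Suppose $\mathring a\in\mathcal{A}_m$; I would show $a\in\mathcal{A}_m$ for $|a-\mathring a|<\epsilon$. The strategy: fix $\tau\ge0$, and build an ``interpolating'' metric $g$ on $\mathcal{R}_0$ which agrees with $g_{a,M}$ for $t^*\le\tau-\delta_0$ and with $g_{\mathring a,M}$ for $t^*\ge\tau$; let $\Psi=\xi\psi_m$ where $\psi_m$ solves $\Box_{g_{a,M}}\psi_m=0$ and $\xi$ is the standard cutoff with $\xi=1$ for $t^*\ge1$, $\xi=0$ for $t^*\le0$. One checks that $\Psi$ solves an inhomogeneous equation $\Box_g\Psi = F$ with $F$ compactly supported in spacetime (the cutoff error plus the error where $g\ne g_{a,M}$, but the latter error is multiplied by $\psi_m$ which in the far future lives on $g_{\mathring a,M}$ where we \emph{do} have future-integrability and decay), and that $\Psi$ is future-integrable \emph{on $g$} because $\mathring a\in\mathcal{A}_m$. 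Now apply the inhomogeneous fixed-frequency estimate---Theorem~\ref{phaseSpaceILED} applied to the separation for the metric $g_{\mathring a,M}$, or rather the physical-space Proposition~\ref{closedILEDinhomo}---to $\Psi$. The crucial structural inputs, specific to fixed $m$, are Lemma~\ref{aziTrap} and Remark~\ref{ergoTrap}: trapping occurs only at $r>(1+\sqrt2)M$, strictly \emph{outside the ergoregion}, so the degeneracy set of the integrated decay estimate is disjoint from $\{T\ {\rm spacelike}\}$; hence a vector field of the form $T+\alpha(r)\Phi$ (timelike where needed, reducing to $T$ near trapping and at infinity) yields energy boundedness modulo error terms supported only in the ergoregion, and those ergoregion errors are controlled by the \emph{non}-degenerate part of the integrated decay estimate. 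The $F$-generated error terms carry a factor $\epsilon$ (closeness of $g_{a,M}$ to $g_{\mathring a,M}$ in the interpolation region, plus the far-future decay coming from $\mathring a\in\mathcal{A}_m$), so for $\epsilon$ small they are absorbed, giving a bound on $\int_{\Sigma_\tau}\mathbf{J}^N_\mu[\psi_m]n^\mu$ uniform in $\tau$; integrating in $\tau$ (and using the energy decay furnished by the argument) gives the $L^2_{t^*}$ control, i.e.\ future-integrability of $\psi_m$ on $g_{a,M}$. The delicate points, handled in Section~\ref{opennesssec}, are: making the interpolation and the bookkeeping of the various error terms precise so that the small parameter $\epsilon$ genuinely beats all errors; ensuring the argument is uniform in the cutoff time $\tau$ (so one gets future-integrability, not merely a $\tau$-dependent bound); and constructing the fixed-$m$ multiplier $T+\alpha(r)\Phi$ together with the currents realizing points 1 and 2 of Section~\ref{continuityintro}, which rest precisely on Lemma~\ref{aziTrap}.
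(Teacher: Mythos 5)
Your overall approach---reducing to fixed azimuthal frequency $m$, running a continuity argument on a set $\mathcal{A}_m$, using the disjointness of trapping and the ergoregion for fixed $m$ (Lemma~\ref{aziTrap}, Remark~\ref{ergoTrap}) with a vector field $T+\alpha(r)\Phi$ and an interpolating metric for openness, and Lemma~\ref{lemmaSmoothDependence} for closedness---is exactly that of Sections~\ref{reduxfixed}--\ref{closednesssec} of the paper. The non-emptiness and closedness portions as you describe them are fine.

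The openness step, however, is circular as you have written it. You set $\Psi=\xi\psi_m$ where $\psi_m$ solves the $g_{a,M}$ wave equation, and assert (i) that the source $F=\Box_g\Psi$ is compactly supported in spacetime because ``$\psi_m$ in the far future lives on $g_{\mathring a,M}$'', and (ii) that ``$\Psi$ is future-integrable on $g$ because $\mathring a\in\mathcal{A}_m$''. Neither is true for $\psi_m$ itself: $\psi_m$ solves the $g_{a,M}$ equation everywhere, so the metric-mismatch error $(\Box_{\tilde g_\tau}-\Box_{g_{a,M}})\psi_m$ is supported on all of $\{t^*\ge\tau-\delta_0\}$ (not compactly), and the hypothesis $\mathring a\in\mathcal{A}_m$ says nothing about $g_{a,M}$-solutions; future-integrability of $\psi_m$ is precisely the conclusion you are trying to prove. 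The missing idea---which is exactly what makes the interpolation non-circular---is to introduce a separate \emph{interpolating solution} $\tilde\psi_\tau$ solving $\Box_{\tilde g_\tau}\tilde\psi_\tau=0$ with the \emph{same initial data} as $\psi_m$ on $\Sigma_0$ (Definition~\ref{interpolatingSolution}). This $\tilde\psi_\tau$ coincides with $\psi_m$ to the past of $\Sigma_{\tau-\delta_0}$, is a genuine $g_{\mathring a,M}$-solution (hence future-integrable by $\mathring a\in\mathcal{A}_m$ and Corollary~\ref{boundEquivSuff}) to the future of $\Sigma_\tau$, and when rewritten as an inhomogeneous $g_{\mathring a,M}$-solution its source $(\Box_{g_{\mathring a,M}}-\Box_{\tilde g_\tau})\tilde\psi_\tau$ is supported in $\{t^*\le\tau\}$ and carries the $|a-\mathring a|$ smallness. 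Proposition~\ref{closedILEDinhomo} applied to $\tilde\psi_\tau$, combined with the derivative-gaining estimate of Lemma~\ref{prepLemm}/Corollary~\ref{combineLemmas}, lets one absorb the small source; the resulting bound transfers back to $\psi_m$ on $\mathcal{R}_{(0,\tau-\delta_0)}$ and is uniform in $\tau$, which then yields future-integrability of $\psi_m$ via Corollary~\ref{boundEquivSuff}.
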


\subsection{The reduction to fixed azimuthal frequency}
\label{reduxfixed}

We begin with the following easy but important Lemma.
\begin{lemma}\label{azimuthal}It suffices to prove Proposition~\ref{allIntegrable} for solutions
$\psi$ to
$(\ref{WAVE})$ assumed moreover to be
supported on an arbitrary but fixed azimuthal frequency $m$.
\end{lemma}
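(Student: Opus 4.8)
The plan is to reduce Proposition~\ref{allIntegrable} to the case of a single fixed azimuthal frequency $m$ by expanding $\psi$ in a Fourier series in $\phi^*$ and exploiting that the d'Alembertian $\Box_{g_{a,M}}$ commutes with $\Phi = \partial_{\phi^*}$. First I would note that since $\psi$ arises from smooth, compactly supported data on $\Sigma_0$ (the reduction of Section~\ref{WPosed}), for each $\tau$ it is smooth and of compact support on $\Sigma_\tau$, and hence can be decomposed as
\[
\psi = \sum_{m\in\mathbb Z} \psi_m, \qquad \psi_m(y^*,t^*,\theta^*,\phi^*) = e^{im\phi^*}\,\psi^{(m)}(y^*,t^*,\theta^*),
\]
where $\psi^{(m)}$ is the $m$-th Fourier coefficient in $\phi^*$. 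Because $\Phi$ is Killing for $g_{a,M}$ and $[\Box_{g_{a,M}},\Phi]=0$, each $\psi_m$ is itself a smooth solution of $(\ref{WAVE})$, and it arises from the (smooth, compactly supported) data obtained by projecting the data of $\psi$ onto azimuthal frequency $m$. Thus each $\psi_m$ falls within the class of solutions considered in the reduction of Section~\ref{WPosed}, now with the additional feature of being supported on a single azimuthal frequency.

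Next I would observe that future-integrability (Definition~\ref{sufficient} applied to $\psi_{\text{\Rightscissors}}=\xi\psi$) is, for each fixed $j$ and $A$, a statement about the finiteness of an integral of a sum of squared derivatives over $t$, $\theta$, $\phi$ for $r\in[r_+,A]$. The orthogonality of the functions $\{e^{im\phi^*}\}_{m\in\mathbb Z}$ in $L^2(d\phi^*)$ — combined with the remark in Definition~\ref{sufficient} that each fixed-$r$ integral is unchanged under $t\mapsto t^*$, $\phi\mapsto\phi^*$, and the fact that $\Phi$ commutes with $\nabb$, $T$ and $Z^*$ (all of which are $\varphi_\tau$-invariant and respect the $\phi^*$-decomposition) — gives a Plancherel-type identity
\[
\int_{\mathbb S^2}\sum_{i_1+i_2+i_3\le j}\left|\nabb^{i_1}T^{i_2}(Z^*)^{i_3}\psi_{\text{\Rightscissors}}\right|^2\sin\theta\,d\theta\,d\phi
= \sum_{m}\int_{\mathbb S^2}\sum_{i_1+i_2+i_3\le j}\left|\nabb^{i_1}T^{i_2}(Z^*)^{i_3}(\psi_m)_{\text{\Rightscissors}}\right|^2\sin\theta\,d\theta\,d\phi,
\]
and similarly with $\Box_g$ inserted. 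Since $\psi$ is a \emph{finite} sum of the $\psi_m$ (the data, and hence each $\psi_m$, being compactly supported and smooth, so only finitely many $m$ occur at any finite regularity via the cutoff — more precisely one observes the sum over $m$ is in fact finite here because the data is compactly supported and smooth, so $\psi$ equals a finite sum $\sum_{|m|\le M_0}\psi_m$ up to arbitrarily small error and the statement is closed under such approximation), integrating the above in $t$ and taking the supremum over $r\in[r_+,A]$ shows that $\psi_{\text{\Rightscissors}}$ satisfies Definition~\ref{sufficient} provided each $(\psi_m)_{\text{\Rightscissors}}$ does. Hence if Proposition~\ref{allIntegrable} holds for each fixed-$m$ solution, it holds for $\psi$.

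The only mild subtlety — and what I would flag as the point requiring a line of care rather than the bulk of the work — is the passage from the infinite Fourier series to a finite sum: one must either argue directly that only finitely many azimuthal modes are present (which follows from smoothness and compact support of the data together with finite speed of propagation, since $\psi$ is supported in a fixed $r$-range on each $\Sigma_\tau$ and is smooth there, making its $\phi^*$-Fourier series rapidly decaying and, after the density reduction, effectively finitely supported), or observe that future-integrability is stable under the relevant $L^2$-type limits so that the countable sum causes no difficulty. Either way this is routine. The genuinely substantive content — proving future-integrability for a fixed-$m$ solution via the openness/closedness continuity argument in $|a|$ — is deferred to Sections~\ref{opennesssec} and~\ref{closednesssec}, so the present Lemma is purely the observation that this reduction is legitimate.
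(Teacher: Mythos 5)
Your decomposition $\psi=\sum_m\psi_m$ and the observation that $[\Box_g,\Phi]=0$ are correct, and the Plancherel-type identity you write at fixed $r$ is also correct. The gap is the passage from ``each $(\psi_m)_{\text{\Rightscissors}}$ satisfies Definition~\ref{sufficient}'' to ``$\psi_{\text{\Rightscissors}}$ satisfies Definition~\ref{sufficient}.'' Each mode contributes a finite term $C_m<\infty$, but nothing in your argument controls $\sum_m C_m$: future-integrability of each mode is a \emph{qualitative} finiteness statement with no uniformity in $m$. Your attempted fix is not available: smooth, compactly supported data is emphatically \emph{not} a finite sum of azimuthal modes (it is merely rapidly decaying), and ``future-integrability is stable under $L^2$-type limits'' is exactly what must be proved, not invoked --- without a uniform bound on the truncations it fails, for the same reason that $\sum_m C_m$ need not converge.

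The paper closes precisely this gap by making the per-mode estimate \emph{quantitative}: since each $\psi_m$ is future-integrable, Proposition~\ref{closedILED} (and its higher-order version Proposition~\ref{h.o.s.suff}) apply to $\psi_m$ and give integrated decay bounds whose constants depend only on $a_0$, $M$, $\delta$, $j$ --- \emph{not} on $m$ --- with right-hand side $\int_{\Sigma_0}\sum_i\mathbf{J}^N_\mu[N^i\psi_m]n^\mu_{\Sigma_0}$. By orthogonality of the azimuthal decomposition, these right-hand sides sum to $\int_{\Sigma_0}\sum_i\mathbf{J}^N_\mu[N^i\psi]n^\mu_{\Sigma_0}<\infty$, so the left-hand sides sum and one obtains the integrated decay statements $(\ref{protasn1b})$, $(\ref{protasn1})$ for $\psi$ itself. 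Only then does one convert back to the sufficient-integrability form of Definition~\ref{sufficient}: the quantity $\sup_{r\in[r_+,A]}\int\int_{\mathbb S^2}\sum|\nabb^{i_1}T^{i_2}(\tilde Z^*)^{i_3}\psi|^2$ is controlled via the fundamental theorem of calculus in $r$ by the horizon flux plus one more $r$-derivative integrated over spacetime, both of which are controlled by the higher-order integrated decay estimate. In short: the missing idea is that you must pass through the quantitative estimates of Section~\ref{summation}, where the constant is $m$-independent, rather than attempting to sum the qualitative Definition~\ref{sufficient} statements directly.
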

\begin{proof}Let $\psi$ be a solution to the wave equation arising from smooth, compactly supported initial data, and suppose we have established Proposition~\ref{allIntegrable} for solutions supported on any fixed azimuthal frequency. We may expand $\psi$ into its azimuthal modes: $\psi = \sum_{m \in \mathbb{Z}}\psi_m$. Since each $\psi_m$ is future-integrable,
it follows by Proposition~\ref{closedILED} that
 the integrated energy decay statements~(\ref{protasn1b}) and~(\ref{protasn1}) hold for $\psi_m$. Orthogonality immediately implies that~(\ref{protasn1b}) and~(\ref{protasn1}) hold for $\psi$. Finally, we simply observe that the fundamental theorem of calculus implies that
 \begin{align*}
&\sup_{r \in [r_+,A]}\int_0^{\infty}\int_{\mathbb{S}^2}\sum_{1\le i_1+i_2+i_3\le j}
|\nabb^{i_1}T^{i_2}(\tilde Z^*)^{i_3}\psi|^2\sin\theta\, dt\, d\theta\, d\phi \\
& \,\,\,\leq
B\left(\int_{\mathcal{H}^+(0,\infty)}\sum_{1\le i_1+i_2+i_3\le j}
|\nabb^{i_1}T^{i_2}(\tilde Z^*)^{i_3}\psi|^2 +\int_0^{\infty}\int_{\Sigma_s\cap [r_+,A]}\sum_{1\le i_1+i_2+i_3\le j+1}
|\nabb^{i_1}T^{i_2}(\tilde Z^*)^{i_3}\psi|^2\right).
\end{align*}
\end{proof}

Lemma~\ref{azimuthal} thus implies that Proposition~\ref{allIntegrable} follows
from the following proposition:
\begin{proposition}\label{allIntegrablem}
Let $M>0$, $|a|<M$ and $m\in \mathbb Z$.
Let $\psi$ be a solution to the wave equation as in the reduction of Section~\ref{WPosed}
such that moreover, $\psi$ is supported only on the azimuthal frequency $m$.
Then $\psi$ is sufficiently integrable.
\end{proposition}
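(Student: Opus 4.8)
The statement to be established is Proposition~\ref{allIntegrablem}: for fixed $m$, every solution $\psi$ of $(\ref{WAVE})$ arising from smooth compactly supported data and supported on the single azimuthal frequency $m$ is future-integrable (equivalently, $\psi_{\text{\Rightscissors}}$ is sufficiently integrable in the sense of Definition~\ref{sufficient}). The plan is to prove this by a continuity argument in the Kerr parameter $a$. Fix $m\in\mathbb Z$ and $M>0$. Define
\[
\mathcal{A}_m \doteq \{\, a\in [0,M) : \text{every } m\text{-supported } \psi \text{ solving } \Box_{g_{a,M}}\psi=0 \text{ as in Section~\ref{WPosed} is future-integrable}\,\}.
\]
First I would show $\mathcal{A}_m$ is non-empty: for $a=0$ the Schwarzschild metric has no ergoregion, so the conserved $\mathbf J^T$ energy is coercive and, together with the redshift estimate of Proposition~\ref{ftrs} and Proposition~\ref{lrp}, yields uniform boundedness of the non-degenerate energy through $\Sigma_\tau$; boundedness of higher-order energies follows by commuting with $T$ and $Y$ as in Section~\ref{higher}. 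Uniform boundedness of all such energies gives in particular $\sup_\tau \int_{\Sigma_\tau}\sum_{i_1+i_2+i_3\le j}|\nabb^{i_1}T^{i_2}(Z^*)^{i_3}\psi|^2 < \infty$, which after the change of variables $t\mapsto t^*$ (and noting $\partial_t$-energy control plus compact support in $\{r\le A\}$ cross-sections integrated over bounded $\tau$-intervals is harmless) produces the $L^2_t$ integrability required by Definition~\ref{sufficient}. Actually the cleanest route is: uniform energy boundedness plus integrated local energy decay through all $\Sigma_\tau$ (which at $a=0$ is just~\cite{dr3}) gives that $\int_{\mathcal R_0}$ of the relevant weighted quantities is finite, and then a fundamental-theorem-of-calculus/pigeonhole argument exactly as in the proof of Lemma~\ref{azimuthal} converts the spacetime integral into the required $\sup_r \int_{-\infty}^\infty\int_{\mathbb S^2}$ bound. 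The $\Box_g\Psi$ half of Definition~\ref{sufficient} for $\Psi=\psi_{\text{\Rightscissors}}$ is immediate from $(\ref{inhogo})$ since $F$ is compactly supported in spacetime.

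\textbf{Closedness.} I would handle closedness in Section~\ref{closednesssec}: suppose $a_k\to a_\infty$ with $a_k\in\mathcal A_m$ and $a_\infty<M$. Given $m$-supported data on $\Sigma_0$, let $\psi_k$ solve $\Box_{g_{a_k,M}}\psi_k=0$ with these data and $\psi_\infty$ the corresponding $g_{a_\infty,M}$-solution. Since each $a_k\in\mathcal A_m$, each $\psi_k$ is future-integrable, so by Proposition~\ref{closedILED} and Proposition~\ref{h.o.s.suff} the estimates~(\ref{protasn1b}) and~(\ref{protasn1}) hold for $\psi_k$ with constants $B=B(a_0,M,\delta,j)$ uniform for $|a_k|\le a_0<M$ once we fix $a_0>\max_k|a_k|$ (and such $a_0<M$ exists since $a_\infty<M$). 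The smooth dependence of $\psi_k$ on $a$, i.e.~Lemma~\ref{lemmaSmoothDependence}, lets one pass to the limit: the left-hand sides of the uniform estimates for $\psi_k$ converge to those of $\psi_\infty$, so $\psi_\infty$ satisfies~(\ref{protasn1b}),~(\ref{protasn1}) with the same constants, and then the same fundamental-theorem-of-calculus argument as above shows $\psi_\infty$ is future-integrable; hence $a_\infty\in\mathcal A_m$.

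\textbf{Openness — the main obstacle.} The hard part will be openness, carried out in Section~\ref{opennesssec}: given $\mathring a\in\mathcal A_m$, show $a\in\mathcal A_m$ for $|a-\mathring a|<\epsilon$. Let $\psi$ be an $m$-supported solution on $g_{a,M}$; one does not know a priori that $\psi$ is future-integrable, so one cannot directly apply Carter separation. Instead, for each $\tau\ge 0$ one builds an auxiliary function $\Psi$ as follows: let $g$ be an interpolating metric equal to $g_{a,M}$ to the past of $\Sigma_{\tau-\delta_0}$ and equal to $g_{\mathring a,M}$ to the future of $\Sigma_\tau$, let $\psi^\sharp$ solve $\Box_g\psi^\sharp=0$ agreeing with $\psi$ near $\Sigma_0$, and set $\Psi=\xi\psi^\sharp$ with $\xi=1$ for $\tau'\ge 1$, $\xi=0$ for $\tau'\le 0$. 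Because $g=g_{\mathring a,M}$ to the future and $\mathring a\in\mathcal A_m$, $\Psi$ \emph{is} future-integrable, so Carter separation and the frequency-localised estimates of Section~\ref{freqLocEst} apply to it, giving via Proposition~\ref{closedILEDinhomo} an integrated-decay estimate for $\Psi$ with an inhomogeneity $F$ supported in the interpolation region $[\Sigma_{\tau-\delta_0},\Sigma_\tau]$ (plus the cutoff region near $\Sigma_0$, controlled by a finite-time energy estimate). The inhomogeneity $F=(\Box_g-\Box_{g_{a,M}})\psi^\sharp$ is $O(\epsilon)$ times first derivatives of $\psi^\sharp$; the crucial point — this is where the two fixed-$m$ facts from Section~\ref{Vnewsec} and Section~\ref{continuityintro} are used — is that $F$ can be absorbed: Lemma~\ref{aziTrap} (together with Remark~\ref{ergoTrap}) shows trapping for fixed $m$ occurs outside the ergoregion, so the degeneration region of the integrated-decay estimate and the ergoregion are disjoint, and using the energy identity for a vector field $T+\alpha(r)\Phi$ (timelike off $\mathcal H^+$ by Lemma~\ref{timelikeVector}, and timelike up to and including $\mathcal H^+$ near the horizon by Lemma~\ref{horizonTimelike}) one gets boundedness modulo terms supported only in the ergoregion — where the integrated-decay estimate does \emph{not} degenerate, so those terms are controlled with a loss that is made small by choosing $\epsilon$ small. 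Running this for all $\tau$ and letting $\tau\to\infty$ (using that the estimates are uniform in $\tau$ since the interpolating metrics all lie in a fixed $|a|\le a_0<M$ family) yields integrated local energy decay, and hence uniform energy boundedness through $\Sigma_{\tau'}$, for the genuine solution $\psi$ on $g_{a,M}$; the higher-order version follows by the commutation scheme of Section~\ref{HOEhere} using fact (c), i.e.~that the span of $T$ and $\Phi$ is timelike (Section~\ref{Killingfieldsec} and Lemma~\ref{timelikeVector}). Finally the fundamental-theorem-of-calculus/pigeonhole step converts this into future-integrability of $\psi$, so $a\in\mathcal A_m$, and $\mathcal A_m$ is open. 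Since $\mathcal A_m$ is non-empty, open and closed in the connected set $[0,M)$, $\mathcal A_m=[0,M)$, which is exactly Proposition~\ref{allIntegrablem}. I expect the genuinely delicate points to be (i) setting up the interpolating-metric construction so that the inhomogeneity is genuinely confined to a region where the absorption works, and (ii) keeping all constants uniform as $\tau\to\infty$ so that the limit is legitimate.
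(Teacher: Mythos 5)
Your strategy is the paper's: a continuity argument in $a$ with $\mathcal{A}_m$ non-empty, open, closed, using the interpolating metric for openness and the fixed-$m$ disjointness of trapping and the ergoregion (Lemma~\ref{aziTrap}, Remark~\ref{ergoTrap}, Lemma~\ref{timelikeVector}) as the key new ingredient. The openness sketch matches the paper's Section~\ref{opennesssec} closely: the interpolating metric $\tilde g_\tau$ and solution $\tilde\psi_\tau$, Proposition~\ref{closedILEDinhomo} applied to the inhomogeneous equation on $g_{\mathring a,M}$, and the $V=T+\alpha(r)\Phi$ multiplier providing boundedness modulo terms supported in the ergoregion; the precise bootstrap is Lemma~\ref{prepLemm}--Corollary~\ref{combineLemmas}. (Two small imprecisions there: the inhomogeneity is $(\Box_{g_{\mathring a,M}}-\Box_{\tilde g_\tau})\tilde\psi_\tau$, and it involves second, not merely first, derivatives.)

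There is, however, a genuine gap in your closedness paragraph. You write that once $\psi_\infty$ inherits~(\ref{protasn1b}),~(\ref{protasn1}) in the limit, "the same fundamental-theorem-of-calculus argument as above shows $\psi_\infty$ is future-integrable." But the FTC/pigeonhole argument in the proof of Corollary~\ref{boundEquivSuff} takes as input the pointwise-in-$\tau$ bound~(\ref{aHigherBound}), not just integrated decay, and the estimates~(\ref{protasn1b}),~(\ref{protasn1}) do not supply it: they degenerate at trapping via the \emph{general} $\zeta$ on $[3M-s^-,3M+s^+]$, a region which for $a_0$ near $M$ intersects the ergoregion, so no coercive $T$- or $N$-energy identity closes from them, and the $j$-th order spacetime integral of $\nabb^{i_1}T^{i_2}\psi$ (with no $\tilde Z^*$) required for the FTC step is not controlled non-degenerately. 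The paper's closedness argument therefore does not pass the integrated-decay estimate to the limit; it first upgrades $\psi_k$ to the energy bound~(\ref{futIntEnergyBound}) using the \emph{fixed-$m$} integrated decay~(\ref{fromFutureInt}), whose degeneracy is confined to $[(1+\sqrt 2)M,3M+s^+]$ and hence lies \emph{outside} the ergoregion $\mathcal S\subset\{r<2M\}$, so that an $N$-energy identity closes; then it commutes to get higher orders; and only then passes the pointwise-in-$\tau$ bounds to $\psi_\infty$ via Lemma~\ref{lemmaSmoothDependence}, verifying~(\ref{aHigherBound}) and invoking Corollary~\ref{boundEquivSuff}. In short: the fixed-$m$ trapping/ergoregion disjointness that you correctly deploy in openness is also the step that makes closedness work, and it is the energy bound, not the integrated-decay bound, that survives the limit.
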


The following Lemma will be very useful for the proof of Proposition~\ref{allIntegrablem}.
\begin{lemma}\label{cutoffILED}
Let $M$, $a$, $m$, and $\psi$ be as in the statement of Proposition~\ref{allIntegrablem}.
Then, for every $\tau \geq 0$ and $\delta > 0$,
\begin{align*}
\int_{\mathcal{H}^+(0,\tau)}\mathbf{J}^N_{\mu}[\psi]n^{\mu}_{\mathcal{H}^+} + &\int_0^{\tau}\int_{\Sigma_s}\Big(r^{-1}(1-\eta_{[\left(1+\sqrt{2}\right)M,3M+s^+]})(1-3M/r)^2 \big(|\nabb\psi|^2+r^{-\delta}\left|T\psi\right|^2\big)\\
&\qquad\qquad+r^{-1-\delta}\left|\tilde Z^*\psi\right|^2+ r^{-3-\delta}\left|\psi\right|^2\Big)\\
\nonumber
&\leq B(\de,m)\left(\int_{\Sigma_0}\mathbf{J}^N_{\mu}[\psi]n^{\mu}_{\Sigma_0} + \int_{\Sigma_{\tau}}\mathbf{J}^N_{\mu}[\psi]n^{\mu}_{\Sigma_{\tau}}\right).
\end{align*}

\end{lemma}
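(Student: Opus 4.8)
The plan is to obtain a version of integrated local energy decay which holds for a \emph{fixed azimuthal frequency} $m$ and which degenerates \emph{only} in the bounded-$r$ region $[(1+\sqrt2)M, 3M+s^+]$ rather than in a neighbourhood of $r=3M$; the point is that for fixed $m$ the trapped value $r_{\rm trap}$ lies outside the ergoregion (Lemma~\ref{aziTrap} and Remark~\ref{ergoTrap}), so the ``degeneration region'' in physical space can be taken inside $\{r \geq (1+\sqrt2)M\} \supset \{r > 2M\}$, where $T$ is timelike. First I would run the frequency-localised argument of Section~\ref{summation} for $\psi$ supported on a single azimuthal frequency $m$: apply the cutoff $\psi_{\text{\Rightscissors}} = \xi\psi$, separate via Carter, and apply Theorem~\ref{phaseSpaceILED} to the coefficients $u^{(a\omega)}_{m\ell}$. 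The only frequency triples that contribute for fixed $m$ and large $\Lambda$, $\omega^2 \sim \Lambda$ are the ones covered by Lemma~\ref{aziTrap}, so whenever $r_{\rm trap}\neq 0$ we have $r_{\rm trap}>(1+\sqrt2)M$. Hence in the summed estimate the weight $(1-r^{-1}r_{\rm trap})^2$ can be bounded below by $b\,(1-3M/r)^2(1-\eta_{[(1+\sqrt2)M,3M+s^+]})$, giving a physical-space estimate with degeneration confined to $[(1+\sqrt2)M,3M+s^+]$.

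Next I would absorb the error terms. The key structural difference from Proposition~\ref{closedILED} is that we now must allow the energy flux through $\Sigma_\tau$ on the right-hand side, because we no longer have (nor are proving here) uniform energy boundedness --- this is exactly why the statement of Lemma~\ref{cutoffILED} carries $\int_{\Sigma_\tau}\mathbf{J}^N_\mu[\psi]n^\mu_{\Sigma_\tau}$ on the right. So rather than running the pigeonhole/dyadic-sequence argument of Section~\ref{summation} to send the $\Sigma_{\tau_n}$-flux to zero, I would simply keep the estimate on the finite slab $\mathcal{R}_{(0,\tau)}$, using the finite-in-time $\mathbf{J}^N$ (red-shift) energy estimate of Proposition~\ref{ftrs} near the horizon, the large-$r$ estimate of Proposition~\ref{lrp} (whose error is now harmlessly $\int_{\Sigma_0}+\int_{\Sigma_\tau}$), and the cutoff-error analysis of Sections~\ref{boundedRError}--\ref{largeRkoppa2}, all of whose right-hand sides are controlled by $\int_{\Sigma_0}\mathbf{J}^N + \int_{\Sigma_\tau}\mathbf{J}^N$ plus a small multiple of the bulk. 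The extra horizon term $\int_{\omega_{\rm low}\le|\omega|\le\omega_{\rm high}, \Lambda\le\epsilon_{\rm width}^{-1}\omega_{\rm high}^2}|u(-\infty)|^2\,d\omega$ is handled exactly as in Section~\ref{whitinghere} by the quantitative mode stability Proposition~\ref{propShlapRot} (and the Remark there notes this is unconditional for fixed $m$ anyway, or one may just quote Proposition~\ref{propShlapRot} directly). Here the cutoff $\xi$ is taken to vanish for $\tau\le 0$ and $\tau\ge\tau_0$ (the first choice described in Section~\ref{suffInt}), so $\psi_{\text{\Rightscissors}}$ is automatically sufficiently integrable and outgoing with $F$ compactly supported, and Lemma~\ref{aeRegular} applies.

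The main obstacle I anticipate is bookkeeping the degeneration region: I must verify that the frequency ranges for which $r_{\rm trap}\neq 0$ --- i.e. the trapping subrange $\mathcal{G}_{\natural}$ restricted to a fixed $m$ --- genuinely force $\sigma = am\omega/\Lambda$, $m^2/\Lambda$ and $\Lambda^{-1}$ all small, so that Lemma~\ref{aziTrap} is applicable and $r_{\rm trap}>(1+\sqrt2)M$ uniformly. Since $m$ is fixed and trapping only occurs in the regime $1\ll\omega^2\sim\Lambda$ with $\omega\not\in(0,\upomega_+m+\alpha\Lambda/m]$-type constraints, one does have $m^2\le c\Lambda$ and $|\sigma|\le c$ for $\Lambda$ large; for the (finitely many, bounded) remaining frequencies there is no trapping and the corresponding currents are non-degenerate. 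After establishing this, replacing $(1-r^{-1}r_{\rm trap})^2$ by the stated cutoff weight is immediate from $(\ref{insteadofthis})$-type reasoning, and combining with the $\nabb$, $\tilde Z^*$, and zeroth-order terms (which never degenerate) and with the red-shift and large-$r$ estimates yields the claimed inequality. A minor additional point: one should check that the $\epsilon(1-\zeta)(\cdots)$-type trapped error terms appearing in the inhomogeneous analysis (cf.\ Proposition~\ref{closedILEDinhomo}) are here supported in $[(1+\sqrt2)M,3M+s^+]$ where $T$ and $\Phi$ span a timelike plane, but since we only need the homogeneous estimate for $\psi$ itself this does not actually arise --- the statement of Lemma~\ref{cutoffILED} is a clean homogeneous estimate, so no elliptic gymnastics in the ergoregion is needed at this stage.
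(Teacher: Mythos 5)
Your proposal follows essentially the same route as the paper's, which simply modifies the cutoff $\xi$ to be compactly supported in $\{0\le\tau^*\le\tau\}$ (identically $1$ on $[\Sigma_1,\Sigma_{\tau-1}]$), reruns the summation arguments of Section~\ref{summation} on the resulting finite slab --- thereby trading the dyadic/pigeonhole step for the $\int_{\Sigma_\tau}\mathbf{J}^N$ term on the right-hand side --- and shrinks the degeneration region to $[(1+\sqrt2)M,3M+s^+]$ via Lemma~\ref{aziTrap}. You have correctly identified the cutoff, the role of Proposition~\ref{propShlapRot}, and the need (which the paper leaves implicit) to verify that for fixed $m$ the trapping frequencies in $\mathcal{G}_{\natural}$ indeed lie in the regime where Lemma~\ref{aziTrap} applies.
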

\begin{proof}One modifies the cutoff $\xi$ from Section~\ref{cutoffSec}; now let $\xi$ be identically $1$ in between $\Sigma_1$ and $\Sigma_{\tau-1}$ and identically $0$ to the past of $\Sigma_0$ and the future of $\Sigma_{\tau}$. Then, one may easily check that the arguments of Section~\ref{summation} will imply the lemma. Note that we can write $\eta_{[\left(1+\sqrt{2}\right)M,3M+s^+]}$ instead of $\eta_{[3M-s^-,3M+s^+]}$ because Lemma~\ref{aziTrap} tells us that for fixed $m$ and large $\Lambda$ the trapped set is contained in $[\left(1+\sqrt{2}\right)M,\infty)$.
\end{proof}
\begin{remark}Let us emphasise that we do \underline{not} assume that $\psi$ is future-integrable. This is why we must have the term $\int_{\Sigma_{\tau}}\mathbf{J}^N_{\mu}[\psi]n^{\mu}_{\Sigma_{\tau}}$ on the right hand side.
\end{remark}
\begin{remark}
As observed in Remark~\ref{ergoTrap}, we see that (for a fixed azimuthal frequency) trapping and the ergoregion are non-overlapping! This will be extremely useful in what follows.
\end{remark}

We will also need higher order versions of Lemma~\ref{cutoffILED}.
\begin{lemma}\label{cutoffILEDhigher}
Let $M$, $a$, $m$, and $\psi$ be as in the statement of Proposition~\ref{allIntegrablem}.
 Then, for every $\tau \geq 0$, $j \geq 1$ and $\delta > 0$,
\begin{align*}
&\int_{\mathcal{H}^+(0,\tau)}\sum_{1\le i_1+i_2+i_3\le j}
|\nabb^{i_1}T^{i_2}(\tilde Z^*)^{i_3}\psi|^2 \\
&+\int_0^{\tau}\int_{\Sigma_s}
r^{-1-\delta}(1-\eta_{[\left(1+\sqrt{2}\right)M,3M+s^+]})(1-3M/r)^2\sum_{1\le i_1+i_2+i_3\le j}
|\nabb^{i_1}T^{i_2}(\tilde Z^*)^{i_3}\psi|^2\\
\nonumber
&\qquad\qquad\qquad+r^{-1-\delta}\sum_{1\le i_1+i_2+i_3\le j-1}
\left(|\nabb^{i_1}T^{i_2}(\tilde Z^*)^{i_3+1}\psi|^2+|\nabb^{i_1}T^{i_2}(Z^*)^{i_3}\psi|^2\right)\\
&\le B(\de,j,m)\left(\int_{\Sigma_0} \sum_{0\le i\le j-1}\mathbf{J}^N_{\mu}[N^i\psi]n^{\mu}_{\Sigma_0}+\int_{\Sigma_\tau} \sum_{0\le i \le j-1}
\mathbf{J}^N_{\mu}[N^i\psi]n^{\mu}_{\Sigma_{\tau}}\right).
\end{align*}

\end{lemma}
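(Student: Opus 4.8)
\textbf{Proof proposal for Lemma~\ref{cutoffILEDhigher}.}

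The plan is to deduce the higher-order estimate from Lemma~\ref{cutoffILED} exactly as Proposition~\ref{h.o.s.suff} was deduced from Proposition~\ref{closedILED}, but being careful that we now work without the \emph{a priori} future-integrability assumption, so that each application of a frequency-localised/summed estimate must carry the extra $\int_{\Sigma_\tau}$-energy on the right and the cutoff $\xi$ must be supported in $[0,\tau]$ (as in the proof of Lemma~\ref{cutoffILED}). As there, I would treat the case $j=2$ in detail and obtain $j\ge 3$ by a straightforward induction. The commutation vector fields are $T$, $\chi\Phi$ (with $\chi$ a cutoff to a large-$r$ region outside the ergoregion), and the red-shift field $Y$ from Section~\ref{Nmult}, and the elliptic Lemmas \ref{Nelliptic}--\ref{iledEllipticTPhi} are available verbatim since they refer only to solutions of $(\ref{WAVE})$ and make no integrability hypothesis.

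First I would commute with $T$: since $T$ is Killing, $\Box_g(T\psi)=0$, and $T\psi$ is again supported on azimuthal frequency $m$, so Lemma~\ref{cutoffILED} applies directly to $T\psi$, yielding the $T$-commuted integrated decay with right-hand side $\int_{\Sigma_0}\mathbf{J}^N_\mu[T\psi]n^\mu + \int_{\Sigma_\tau}\mathbf{J}^N_\mu[T\psi]n^\mu$. Next, commuting with $\chi\Phi$ produces $\Box_g(\chi\Phi\psi)=(\Box_g\chi)\Phi\psi+2\nabla^\mu\chi\nabla_\mu\Phi\psi$, an inhomogeneity supported in a bounded $r$-region lying outside the ergoregion; here one applies the inhomogeneous version of the summed estimate (the analogue of Proposition~\ref{closedILEDinhomo} specialised to this cutoff, i.e.\ the argument of Lemma~\ref{cutoffILED} run with an inhomogeneity), controlling $\int |\Box_g(\chi\Phi\psi)|^2$ by the already-established $T$-commuted estimate via Lemma~\ref{ellipticOffHorizonLargeDecay} (using that $\Phi$ and $T$ span a timelike plane off the horizon). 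The crucial structural input, highlighted in Remark~\ref{ergoTrap} and used via Lemma~\ref{aziTrap}, is that for fixed $m$ trapping occurs in $[(1+\sqrt2)M,\infty)$, disjoint from the ergoregion, so the degeneracy function $(1-\eta_{[(1+\sqrt2)M,3M+s^+]})(1-3M/r)^2$ does not degenerate where $\Phi$-ellipticity is needed.

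Finally I would commute with $Y$. Using Proposition~\ref{commuprop}, $\Box_g(Y\psi)=\kappa_1 Y^2\psi + \sum_{|\mathbf{m}|\le 2,\,m_4\le 1}c_{\mathbf m}E_1^{m_1}E_2^{m_2}L^{m_3}Y^{m_4}\psi$, where $\kappa_1>0$; I would run the red-shift energy estimate of Proposition~\ref{ftrs} between $\Sigma_0$ and $\Sigma_\tau$ for $Y\psi$ over a region $\{r\le\tilde r\}$ with $\tilde r$ chosen close enough to $r_+$ (and an auxiliary width $\delta$ with $\tilde r+2\delta<3M-s^-$, hence also $\tilde r+2\delta<(1+\sqrt2)M$ for fixed $m$), absorb the $Y^2\psi$ term by the positivity of $\kappa_1$ and the small parameter in Proposition~\ref{ftrs}, and control $\mathcal E^N[Y\psi]$ by Lemmas~\ref{ellipticOffHorizonLocal}, \ref{thetaNearHorizon}, \ref{ellipticEstimatesLocal} together with $N=K+Y$, reducing everything to the $T$- and $\chi\Phi$-commuted quantities already bounded. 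Combining, and then invoking Lemmas~\ref{horizonSphere}, \ref{ellipticEstimates}, \ref{ellipticEstimatesLocal}, \ref{iledEllipticTPhi}, \ref{ellipticOffHorizonDecay}, \ref{Nelliptic}, \ref{iledEllipticT} to convert the commuted energies into the norms $\sum_{1\le i_1+i_2+i_3\le j}|\nabb^{i_1}T^{i_2}(\tilde Z^*)^{i_3}\psi|^2$ on the left, gives the claimed bound with $\sum_{0\le i\le j-1}\mathbf{J}^N_\mu[N^i\psi]$ on $\Sigma_0$ \emph{and} on $\Sigma_\tau$ on the right. I expect the main obstacle to be bookkeeping rather than conceptual: ensuring at every step that the error terms which in the future-integrable setting were absorbed into the left-hand side are now instead legitimately bounded by the $\int_{\Sigma_\tau}\mathbf{J}^N$ term (this is why the cutoff is taken supported in $[0,\tau]$, so no $\tau\to\infty$ limit is needed), and keeping track that the $Y$-commutation only requires the region near $\mathcal H^+$, which for fixed $m$ is comfortably inside the non-trapped, non-ergoregion zone.
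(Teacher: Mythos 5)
Your proposal is correct and takes essentially the same approach as the paper, whose entire proof of Lemma~\ref{cutoffILEDhigher} is the single sentence that it follows by repeating the arguments of Section~\ref{higher}; you have simply spelled out what that repetition entails, including the key point that without future-integrability one must carry the $\int_{\Sigma_\tau}\mathbf{J}^N$ flux on the right-hand side and that the fixed-$m$ trapping region $[(1+\sqrt2)M,3M+s^+]$ is disjoint from the ergoregion. One small slip worth flagging: the claim that $\tilde r+2\delta<3M-s^-$ ``hence also'' $\tilde r+2\delta<(1+\sqrt2)M$ is not a valid implication in general (whether $3M-s^-<(1+\sqrt2)M$ depends on $a_0$); the constraint you actually need for the fixed-$m$ version is the second one directly, and it is trivially satisfiable since $r_+<2M<(1+\sqrt2)M$.
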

\begin{proof}This follows from repeating the arguments of Section~\ref{higher}.
\end{proof}

We have the following easy corollary.
\begin{corollary}\label{boundEquivSuff}
Let $M$, $a$, $m$, and $\psi$ be as in the statement of Proposition~\ref{allIntegrablem}.
 Then, $\psi$ is future-integrable if
\begin{equation}\label{aHigherBound}
\sup_{\tau \geq 0}\int_{\Sigma_\tau} \sum_{1\le i_1+i_2+i_3\le j}
|\nabb^{i_1}T^{i_2}(\tilde Z^*)^{i_3}\psi|^2 < \infty\quad\forall\ j\geq 1.
\end{equation}
\end{corollary}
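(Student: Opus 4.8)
The plan is to verify Definition~\ref{sufficient} for $\psi_{\text{\Rightscissors}}=\xi\psi$; since $\psi_{\text{\Rightscissors}}$ automatically satisfies Definition~\ref{sufficient2}, this makes $\psi$ future-integrable. The part of Definition~\ref{sufficient} involving $\Box_g$ is immediate: by finite speed of propagation $\psi$ is compactly supported in the slab $\{0\le\tau\le1\}$ on which $\dot\xi\neq0$, so $F=\Box_g\psi_{\text{\Rightscissors}}=2\nabla^\mu\xi\nabla_\mu\psi+(\Box_g\xi)\psi$ is smooth and spatially compactly supported, and the required bounds are trivial. Moreover $\psi_{\text{\Rightscissors}}=\psi$ for $\tau\ge1$, and on $\{0\le\tau\le1\}$ all fixed-$r$ integrals of $\psi$ and its derivatives are controlled by finite-in-time energy estimates and Hardy inequalities. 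Hence, using the Remark following Definition~\ref{sufficient} to pass to $(t^*,\theta^*,\phi^*)$, it suffices to prove that for every integer $j\ge1$ and every $A>r_+$,
\[
\sup_{r\in[r_+,A]}\int_{0}^{\infty}\!\!\int_{\mathbb{S}^2}\sum_{0\le i_1+i_2+i_3\le j}\big|\nabb^{i_1}T^{i_2}(Z^*)^{i_3}\psi\big|^2\,\sin\theta\,dt^*\,d\theta^*\,d\phi^*<\infty .
\]

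Fix such $j$ and $A$ and set $J=j+2$. The key input is Lemma~\ref{cutoffILEDhigher} applied with order parameter $J$. Its right-hand side is bounded uniformly in $\tau$: the $\Sigma_0$-term is a fixed finite quantity (smooth compactly supported data, $\psi_\infty=0$ after the reduction of Section~\ref{WPosed}), while by the elliptic equivalence~(\ref{inasyflatcas}) and standard elliptic estimates the term $\int_{\Sigma_\tau}\sum_{0\le i\le J-1}\mathbf{J}^N_\mu[N^i\psi]n^\mu_{\Sigma_\tau}$ is controlled by $\sup_{\tau\ge0}\int_{\Sigma_\tau}\sum_{1\le i_1+i_2+i_3\le J}|\nabb^{i_1}T^{i_2}(\tilde Z^*)^{i_3}\psi|^2$, which is finite by hypothesis~(\ref{aHigherBound}). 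Letting $\tau\to\infty$ in Lemma~\ref{cutoffILEDhigher} then gives finiteness of the horizon flux $\int_{\mathcal{H}^+(0,\infty)}\sum_{1\le i_1+i_2+i_3\le J}|\nabb^{i_1}T^{i_2}(\tilde Z^*)^{i_3}\psi|^2$ (note $\tilde Z^*=Z^*$ near $\mathcal{H}^+$, so this controls all derivatives of order $\le J$ on $\mathcal{H}^+$), together with finiteness of the spacetime bulk integral of the non-degenerate families occurring there.

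The one delicate point --- and where the fixed azimuthal frequency is used --- is the trapping degeneracy in the first bulk term of Lemma~\ref{cutoffILEDhigher}, which carries the weight $(1-\eta_{[(1+\sqrt2)M,3M+s^+]})(1-3M/r)^2$. However, the non-degenerate families $\sum_{1\le i_1+i_2+i_3\le J-1}|\nabb^{i_1}T^{i_2}(Z^*)^{i_3}\psi|^2$ and $\sum_{1\le i_1+i_2+i_3\le J-1}|\nabb^{i_1}T^{i_2}(\tilde Z^*)^{i_3+1}\psi|^2$ between them control, with the uniform weight $r^{-1-\delta}$, \emph{all} derivatives of total order $\le J-1=j+1$; only top-order ($=J$) purely tangential derivatives remain degenerate, and these degenerate only on the compact $r$-interval $[(1+\sqrt2)M,3M+s^+]$, disjoint from $\mathcal{H}^+$, where by Lemma~\ref{aziTrap} and Remark~\ref{ergoTrap} the fixed-$m$ trapped set lies. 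Using the co-area relation~(\ref{easyCoArea}), the volume-form comparison $dV\sim v\,dt^*\,dr\,dV_{\slashg}$ with $v\sim1$ in Kerr-star coordinates, $dV_{\slashg}\sim r^2\sin\theta\,d\theta^*\,d\phi^*$, and $r^{-1-\delta}r^2\gtrsim b(A)>0$ on $[r_+,A]$, and recalling that on $[r_+,A]$ the fields $Z^*$ and $\tilde Z^*$ differ by bounded multiples of $T$ and $\Phi$ while $\Phi\psi=im\psi$, we conclude that for every $\Xi=\nabb^{i_1}T^{i_2}(Z^*)^{i_3}$ of total order $\le j$,
\[
\int_{r_+}^{A}\big\|\Xi\psi(r,\cdot)\big\|^2_{L^2_{t^*\theta^*\phi^*}}\,dr<\infty ,\qquad \int_{r_+}^{A}\big\|Z^*\Xi\psi(r,\cdot)\big\|^2_{L^2_{t^*\theta^*\phi^*}}\,dr<\infty ,
\]
the second because $Z^*\Xi\psi$ has total order $\le j+1\le J-1$ (the commutators $[Z^*,\nabb]$ are first order and produce only lower-order terms, already covered).

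I would then conclude with a one-dimensional estimate in $r$. Since $Z^*$ is the coordinate field $\partial_r$ at fixed $(t^*,\theta^*,\phi^*)$ and $\psi$ is smooth up to $\mathcal{H}^+=\{r=r_+\}$, for every $r\in[r_+,A]$ one has $\Xi\psi(r,\cdot)=\Xi\psi(r_+,\cdot)+\int_{r_+}^{r}Z^*\Xi\psi(r',\cdot)\,dr'$, and hence by Minkowski's integral inequality and Cauchy--Schwarz
\[
\big\|\Xi\psi(r,\cdot)\big\|_{L^2_{t^*\theta^*\phi^*}}\le\big\|\Xi\psi(r_+,\cdot)\big\|_{L^2_{t^*\theta^*\phi^*}}+(A-r_+)^{1/2}\Big(\int_{r_+}^{A}\big\|Z^*\Xi\psi(r',\cdot)\big\|^2_{L^2_{t^*\theta^*\phi^*}}\,dr'\Big)^{1/2}.
\]
The boundary term is finite since $\|\Xi\psi(r_+,\cdot)\|^2_{L^2}$ is a piece of the horizon flux shown finite above, and the integral term is finite by the previous paragraph; the zeroth-order contributions (the case $\Xi=\mathrm{id}$ and the value $\int|\psi|^2$ at $r=r_+$) are handled, as throughout Section~\ref{summation}, by the Hardy inequalities~(\ref{hardy1}),~(\ref{hardy2}) and the zeroth-order terms in the red-shift estimate of Proposition~\ref{ftrs} and the large-$r$ estimate of Proposition~\ref{lrp}. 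Taking the supremum over $r\in[r_+,A]$ and summing over all $\Xi$ of total order $\le j$ yields the displayed bound, hence Definition~\ref{sufficient}, hence the future-integrability of $\psi$. The main obstacle is precisely the trapping degeneracy of Lemma~\ref{cutoffILEDhigher}; it is overcome here by the fixed-$m$ disjointness of trapping from the ergoregion (Lemma~\ref{aziTrap}) together with the device of running the estimate at order $J=j+2$ rather than at order $j$.
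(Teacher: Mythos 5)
Your proposal is correct and follows the same route the paper takes: the one-dimensional fundamental-theorem-of-calculus estimate at fixed $r$, with Lemma~\ref{cutoffILEDhigher} applied at a higher order (the paper implicitly also needs order $j+2$, exactly as you take $J=j+2$) to supply finiteness of the horizon flux and of the non-degenerate bulk integral, whose right-hand side is uniformly bounded in $\tau$ by hypothesis~(\ref{aHigherBound}) via the elliptic equivalence~(\ref{inasyflatcas}). The paper's proof is a one-line appeal to the same FTC inequality; you have simply spelled out the choice of order, the disposal of the trapping-degenerate top-order term, and the $\tilde Z^*$ versus $Z^*$ comparison, all of which are left implicit there.
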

\begin{proof}
As in the proof of Lemma~\ref{azimuthal} we need only observe that
\begin{align*}
&\sup_{r \in [r_+,A]}\int_0^{\infty}\int_{\mathbb{S}^2}\sum_{1\le i_1+i_2+i_3\le j}
|\nabb^{i_1}T^{i_2}(\tilde Z^*)^{i_3}\psi|^2\sin\theta\, dt\, d\theta\, d\phi\\
& \leq
B(j)\left(\int_{\mathcal{H}^+(0,\infty)}\sum_{1\le i_1+i_2+i_3\le j}
|\nabb^{i_1}T^{i_2}(\tilde Z^*)^{i_3}\psi|^2 +\int_0^{\infty}\int_{\Sigma_s\cap [r_+,A]}\sum_{1\le i_1+i_2+i_3\le j+1}
|\nabb^{i_1}T^{i_2}(\tilde Z^*)^{i_3}\psi|^2\right).
\end{align*}
\end{proof}

The proof of Proposition~\ref{allIntegrablem} will be a continuity argument in the rotation
parameter $a$ of the black hole. That is, fix $M > 0$, and define for each $m\in\mathbb Z$,
the set
\[\mathcal{A}_m := \{\left|a\right| \in [0,M) : \text{ the statement~(\ref{aHigherBound}) holds for } g = g_{a,M}\}.\]
We shall prove that $\mathcal{A}_m = [0,M)$ by showing that it is non-empty, open and closed.
Proposition~\ref{allIntegrablem} then follows by Corollary~\ref{boundEquivSuff}.

We note first
\begin{proposition}For all $m\in \mathbb Z$, the set $\mathcal{A}_m$ is non-empty.
\end{proposition}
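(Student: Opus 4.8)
The statement to prove is that $\mathcal{A}_m$ is non-empty, i.e.\ that for $a=0$ (Schwarzschild) and each fixed azimuthal frequency $m$, the uniform higher-order energy bound~(\ref{aHigherBound}) holds. The plan is to appeal directly to the known boundedness theory in Schwarzschild. Recall that $g_{0,M}$ is the Schwarzschild metric, which has \emph{no ergoregion}: the Killing field $T$ is timelike throughout $\mathcal{R}\setminus\mathcal{H}^+$. Thus the general boundedness result for black hole spacetimes without ergoregions established in~\cite{jnotes} (the same ``black box''-style input invoked in Section~\ref{continuityintro} for the non-emptyness discussion) applies. Specifically, I would quote that theorem to conclude that for smooth, compactly supported data on $\Sigma_0$,
\[
\int_{\Sigma_\tau}\mathbf{J}^N_\mu[\psi]n^\mu_{\Sigma_\tau}\le B\int_{\Sigma_0}\mathbf{J}^N_\mu[\psi]n^\mu_{\Sigma_0},\qquad \forall\tau\ge 0,
\]
uniformly in $\tau$.

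\textbf{Passing to higher order.} To upgrade to the higher-order quantity appearing in~(\ref{aHigherBound}), I would commute the wave equation with $T$ (which is Killing, hence preserves the equation) and with the red-shift vector field $Y=N-K$ of Section~\ref{Nmult}, using the commutation formula of Proposition~\ref{commuprop}: on $\mathcal{H}^+$ there is no $Y^{k+1}$ term obstructing the estimate since $\kappa_k>0$ gives a favourable sign, and the lower-order error terms in Proposition~\ref{commuprop} are absorbed by the $T$-commuted and lower-order energies, exactly as in the argument for Proposition~\ref{h.o.s.suff} (this is the scheme originally applied in~\cite{dr6}). Since in Schwarzschild $T$ alone is everywhere timelike off the horizon, one does not even need the $\chi\Phi$ commutation used in the general case; an elliptic estimate in terms of $T^{i_2}$ and $Y^k$ derivatives then controls all derivatives $\nabb^{i_1}T^{i_2}(\tilde Z^*)^{i_3}\psi$ via equation~(\ref{WAVE}). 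Iterating the $N$-commutation $j-1$ times and applying the elliptic estimates (analogues of Lemmas~\ref{Nelliptic}, \ref{ellipticEstimates}) yields
\[
\sup_{\tau\ge0}\int_{\Sigma_\tau}\sum_{1\le i_1+i_2+i_3\le j}|\nabb^{i_1}T^{i_2}(\tilde Z^*)^{i_3}\psi|^2\le B(j)\int_{\Sigma_0}\sum_{0\le i\le j-1}\mathbf{J}^N_\mu[N^i\psi]n^\mu_{\Sigma_0}<\infty,
\]
which is precisely~(\ref{aHigherBound}) for $a=0$. Hence $0\in\mathcal{A}_m$ and $\mathcal{A}_m\ne\emptyset$.

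\textbf{Main obstacle.} There is essentially no obstacle here: the content of this proposition is entirely a citation of pre-existing Schwarzschild results combined with the standard red-shift commutation bookkeeping. The only point requiring a word of care is that~(\ref{aHigherBound}) must hold for \emph{every} $j\ge1$, so one must note that the higher-order boundedness estimate in Schwarzschild is available to all orders (which it is, by repeated $T$- and $Y$-commutation), and that the fixed-$m$ restriction plays no role in this case — indeed the estimate holds for all $\psi$, not merely those supported on a single azimuthal frequency, so restricting to fixed $m$ is harmless. I would remark that the near-stationary and superradiant subtleties which dominate the rest of the paper are simply absent at $a=0$, which is why this base case of the continuity argument is elementary.
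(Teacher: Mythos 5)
Your proof is correct and follows the same route as the paper: the paper simply notes that for $a=0$ (Schwarzschild) the bound~(\ref{aHigherBound}) is well known, citing~\cite{jnotes}, and concludes $0\in\mathcal{A}_m$. You have merely unpacked the citation — the absence of an ergoregion, $T$-energy boundedness, and the red-shift $Y$-commutation scheme to reach arbitrary order — which is exactly the content the paper defers to~\cite{jnotes}.
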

\begin{proof}When $a = 0$, it is well known that~(\ref{aHigherBound}) holds (even without the restriction to a fixed azimuthal frequency). One may find the (relatively short) argument in the lecture notes~\cite{jnotes}. Thus $0\in \mathcal{A}_m$.
\end{proof}

We now turn to openness.
\subsection{Openness}
\label{opennesssec}

In this section, we will prove
\begin{proposition}\label{open}
For all $m\in\mathbb Z$, the set $\mathcal{A}_m$ is open. That is, suppose $\mathring{a} \in \mathcal{A}_m$. Then there exists $\epsilon > 0$ such that $\left|a-\mathring{a}\right| < \epsilon$ implies $a \in \mathcal{A}_m$.
\end{proposition}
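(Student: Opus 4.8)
\textbf{Proof proposal for Proposition~\ref{open} (openness).}
The plan is to run a bootstrap/continuity argument at the level of the energy flux through the foliation $\Sigma_\tau$, exploiting the two fixed-$m$ structural facts emphasised in the introduction: for a fixed azimuthal mode, trapping occurs strictly outside the ergoregion (Remark~\ref{ergoTrap}, Lemma~\ref{aziTrap}), and there is a Killing vector field of the form $T+\alpha(r)\Phi$ which is timelike off the horizon (Lemma~\ref{timelikeVector}, Lemma~\ref{horizonTimelike}). Fix $\mathring{a}\in\mathcal{A}_m$. For a solution $\psi$ of $\Box_{g_{a,M}}\psi=0$ with $|a-\mathring a|<\epsilon$ and $\psi$ supported on azimuthal frequency $m$, I would first set up, for each $\tau\ge 0$, an interpolating metric $g$ which agrees with $g_{a,M}$ between $\Sigma_0$ and $\Sigma_{\tau-\delta_0}$ and agrees with $g_{\mathring a,M}$ to the future of $\Sigma_\tau$, the interpolation being $\varphi$-invariant and supported in $r\le R$ for a fixed large $R$. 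Let $\Psi$ solve $\Box_g\Psi=\tilde F$ with $\Psi=\psi$ to the past of the interpolation region; then $\tilde F$ is supported in $r\le R$ and in the $\tau$-slab of the interpolation, and since $\mathring a\in\mathcal{A}_m$ the solution $\Psi$ is future-integrable (being eventually a solution on $g_{\mathring a,M}$ of the type already controlled). Thus the inhomogeneous estimates of Proposition~\ref{closedILEDinhomo} (in the form~(\ref{inhomo2}), and their higher-order analogues via Lemma~\ref{cutoffILEDhigher}) apply to $\Psi$.

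The key step is then to estimate the inhomogeneity $\tilde F$ back in terms of $\psi$ itself on the interpolation slab. Because the interpolating metric differs from $g_{a,M}$ by $O(\epsilon)$ (the Kerr family depends smoothly on $a$, and $|a-\mathring a|<\epsilon$), and because within the slab $\Psi$ and $\psi$ differ only through error terms generated by this $O(\epsilon)$ discrepancy, one obtains schematically $\int r^{1+\delta}|\tilde F|^2 \lesssim \epsilon^2 \int_{\text{slab},\,r\le R}\mathbf{J}^N_\mu[\Psi]n^\mu + (\text{lower order})$, i.e. the inhomogeneity is controlled by a \emph{small multiple} of the bulk energy of $\Psi$ localised away from infinity. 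Feeding this into~(\ref{inhomo2}) and absorbing (for $\epsilon$ small) the $\epsilon^2$-terms into the left-hand side, one gets an integrated local energy decay statement for $\Psi$, hence for $\psi$ on $[0,\tau]$, with constant independent of $\tau$. Next, one upgrades this to uniform boundedness of $\int_{\Sigma_\tau}\mathbf{J}^N_\mu[\psi]n^\mu$: here is where facts 1.~and 2.~above enter. Using the energy identity for $T+\alpha(r)\Phi$ with $\alpha$ chosen via Lemma~\ref{timelikeVector}, one obtains boundedness of the energy flux modulo an error supported only in the ergoregion; but for fixed $m$ the ergoregion is disjoint from trapping (Remark~\ref{ergoTrap}), so that error is controlled by the \emph{non-degenerate} part of the integrated decay estimate of Lemma~\ref{cutoffILED} (the $(1-\eta_{[(1+\sqrt2)M,3M+s^+]})$-weighted bulk term does not degenerate in the ergoregion). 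Combined with the red-shift estimate of Proposition~\ref{ftrs} near $\mathcal{H}^+$ and the large-$r$ estimate of Proposition~\ref{lrp}, this closes a uniform bound $\sup_\tau\int_{\Sigma_\tau}\mathbf{J}^N_\mu[\psi]n^\mu\le B\int_{\Sigma_0}\mathbf{J}^N_\mu[\psi]n^\mu$.

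For the higher-order statement~(\ref{aHigherBound}) I would iterate: commute the wave equation with $T$, with $\chi\Phi$, and with the red-shift vector $Y$ exactly as in Section~\ref{higher} (using Proposition~\ref{commuprop} and the elliptic lemmas of that section), noting that all the error terms generated, as well as the interpolation errors, remain controllable by the same smallness-of-$\epsilon$ mechanism together with the non-degeneracy of the fixed-$m$ trapping estimate off the ergoregion. This gives, inductively in $j$, $\sup_\tau\int_{\Sigma_\tau}\sum_{0\le i\le j-1}\mathbf{J}^N_\mu[N^i\psi]n^\mu \le B(j,m)\int_{\Sigma_0}\sum_{0\le i\le j-1}\mathbf{J}^N_\mu[N^i\psi]n^\mu<\infty$, which by Corollary~\ref{boundEquivSuff} (equivalently, by the reduction~(\ref{inasyflatcas})) is precisely~(\ref{aHigherBound}). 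Hence $a\in\mathcal{A}_m$.

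The main obstacle I anticipate is the careful bookkeeping of the interpolation argument: one must arrange the interpolating metric $g$ so that (i) $\Psi$ genuinely inherits future-integrability from $\mathring a\in\mathcal{A}_m$ while (ii) the discrepancy $\tilde F$ is supported in a region where the available integrated-decay estimate is \emph{non-degenerate} — this is exactly why fixing $m$ (so that trapping avoids the ergoregion) is essential, and why the naive choice of interpolation region would fail if the interpolation were allowed to spill into a neighbourhood of the trapped set. Ensuring the $\epsilon$-smallness is genuinely uniform in $\tau$ (so that one may let $\tau\to\infty$ at the end) and that the absorbed terms do not hide any $\tau$-growth is the delicate point; everything else is a combination of the estimates already assembled in Sections~\ref{prelimn},~\ref{summation},~\ref{higher}.
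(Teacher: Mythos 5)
Your proposal has the right ingredients and roughly the right architecture: the interpolating metric between $g_{a,M}$ and $g_{\mathring a,M}$, future-integrability of the interpolated solution inherited from $\mathring a\in\mathcal A_m$, the inhomogeneous integrated-decay estimate, the smallness of $|a-\mathring a|$ as the absorption parameter, the $T+\alpha(r)\Phi$ energy identity, and the fixed-$m$ disjointness of trapping and the ergoregion. But there is a genuine gap in the derivative bookkeeping at the heart of the absorption, and it would make the argument as you have written it fail.

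The inhomogeneity $\tilde F=(\Box_{g_{\mathring a,M}}-\Box_{\tilde g_\tau})\tilde\psi_\tau$ is the difference of two second-order operators and therefore involves \emph{second} derivatives of the solution. Your schematic $\int r^{1+\delta}|\tilde F|^2\lesssim \epsilon^2\int\mathbf J^N_\mu[\Psi]n^\mu$ is therefore wrong in the key respect: $|\tilde F|^2$ is second order while $\mathbf J^N$ is first order, so the error term cannot be absorbed into the left-hand side of~(\ref{inhomo2}) no matter how small $\epsilon$ is; you are trying to absorb a quantity with strictly more derivatives than the left-hand side controls. This is not a bookkeeping triviality but the principal difficulty. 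The paper's resolution is to prove, \emph{before} running the interpolation, a derivative-gaining estimate (Lemma~\ref{prepLemm} and its higher-order extension Lemma~\ref{higherPrepLemm}) of the form
\begin{equation*}
\int_{\Sigma_\tau}\sum_{1\le i_1+i_2+i_3\le j}|\nabb^{i_1}T^{i_2}(\tilde Z^*)^{i_3}\psi|^2 \le B(j,m)\left(\int_0^\tau\int_{\Sigma_s\cap\{r\le\ldots\}}|\psi|^2\,ds+\text{data}\right),
\end{equation*}
i.e.~an \emph{arbitrary}-order energy controlled by the \emph{zeroth}-order localised spacetime integral of $\psi$ alone. This is precisely where the two fixed-$m$ facts you cite are used: the $T+\alpha(r)\Phi$-bulk error is $|\partial_r\psi|\,|\Phi\psi|$, with $|\partial_r\psi|$ non-degenerate in the ILED and $|\Phi\psi|=|m||\psi|$ a zeroth-order quantity, and the disjointness of trapping and the ergoregion allows the zeroth-order bulk integral itself to be absorbed via Lemma~\ref{cutoffILED}. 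Combining with Lemma~\ref{cutoffILEDhigher} gives Corollary~\ref{combineLemmas}. Only then is the interpolation run: it produces a bound of the zeroth-order bulk by $|a-\mathring a|$ times second-order quantities plus data, and plugging this into Corollary~\ref{combineLemmas} at level $j\ge3$ lets one absorb because the $j$-th order left-hand side dominates the second-order terms on the right. Your final paragraph mentions commutation with $T$, $\chi\Phi$, $Y$ as an after-the-fact ``upgrade,'' but in fact the higher-order estimate is what makes the base-level absorption possible in the first place; presenting them as separable steps is where the proposal's logic breaks. The presentation order also matters: one cannot first ``close'' the first-order ILED with constant independent of $\tau$ and only later upgrade; the constant-independent-of-$\tau$ closure only happens once, at the high-order level.
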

The proof proper will be given in Section~\ref{interpolatingSection} below. We begin with some preliminaries.

\subsubsection{Gaining derivatives}
We start with a definition.
\begin{definition}\label{aVectorField}
Let $|a|<M$
and let $\epsilon_0 > 0$ be from Lemma~\ref{horizonTimelike}. Let $\alpha(r)$ be a function such that $V := T + \alpha(r)\Phi$ is a smooth vector field timelike in $\mathcal{R}$ which satisfies
\begin{align*}
V& = T + \frac{a}{2Mr_+}\Phi, \text{ when } r \in [r_+,r_+ + \epsilon_0/2],\\
V& = T + \frac{2Mar}{\left(r^2+a^2\right)^2}\Phi, \text{ when }r \in \left[r_++\epsilon_0,\frac{M\left(7+\sqrt{2}\right)}{4}\right],\\
V& = T, \text{ when } r \geq \frac{M\left(3+\sqrt{2}\right)}{2}.
\end{align*}
\end{definition}
\begin{remark}\label{VkillTrap}
Note that  $2M < \frac{M\left(3+\sqrt{2}\right)}{2} < M\left(1+\sqrt{2}\right)$. In particular, $V$ is Killing in the region where trapping occurs in Lemmas~\ref{cutoffILED} and~\ref{cutoffILEDhigher}.
\end{remark}

The following Lemma can be thought of as a derivative gaining converse to Lemma~\ref{cutoffILED}.
\begin{lemma}\label{prepLemm}
Let $|a|\le a_0<M$, let $m\in\mathbb Z$,
and let
 $\psi$ be a solution the wave equation~(\ref{WAVE})
 as in the reduction of Section~\ref{WPosed} which is furthermore supported on the fixed azimuthal frequency $m$. Then
\begin{align*}
\int_{\Sigma_{\tau}}\mathbf{J}^N_{\mu}[\psi]n^{\mu}_{\Sigma_{\tau}}
&\leq B(m)\left(\int_0^{\tau}\int_{\Sigma_{s} \cap \left\{r \leq \frac{M\left(3+\sqrt{2}\right)}{2}\right\}}\left|\Phi\psi\right|^2\, ds + \int_{\Sigma_0}\mathbf{J}^N_{\mu}[\psi]n^{\mu}_{\Sigma_0}\right)\quad \forall\ \tau \geq 0\\
&\leq B(m)\left(\int_0^{\tau}\int_{\Sigma_{s} \cap \left\{r \leq \frac{M\left(3+\sqrt{2}\right)}{2}\right\}}\left|\psi\right|^2\, ds + \int_{\Sigma_0}\mathbf{J}^N_{\mu}[\psi]n^{\mu}_{\Sigma_0}\right)\quad \forall\ \tau \geq 0.
\end{align*}
\end{lemma}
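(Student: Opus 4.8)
\textbf{Proof plan for Lemma~\ref{prepLemm}.}
The strategy is to run the energy identity for the timelike, $\varphi_\tau$-invariant multiplier $V = T + \alpha(r)\Phi$ of Definition~\ref{aVectorField} between $\Sigma_0$ and $\Sigma_\tau$, and to control the resulting bulk error term $\int_{\mathcal{R}_{(0,\tau)}}\mathbf{K}^V[\psi]$ by the right-hand side. First I would apply the divergence identity $(\ref{ingeneralform2})$ with $\Psi=\psi$ and this choice of $V$, obtaining
\[
\int_{\Sigma_\tau}\mathbf{J}^V_\mu[\psi]n^\mu_{\Sigma_\tau}
+\int_{\mathcal{H}^+_{(0,\tau)}}\mathbf{J}^V_\mu[\psi]n^\mu_{\mathcal{H}^+}
+\int_{\mathcal{R}_{(0,\tau)}}\mathbf{K}^V[\psi]
=\int_{\Sigma_0}\mathbf{J}^V_\mu[\psi]n^\mu_{\Sigma_0}.
\]
Since $V$ is timelike and $\varphi_\tau$-invariant and asymptotic to $T$ for large $r$, the boundary fluxes $\int_{\Sigma_\tau}\mathbf{J}^V_\mu[\psi]n^\mu_{\Sigma_\tau}$ are $\sim\int_{\Sigma_\tau}\mathbf{J}^N_\mu[\psi]n^\mu_{\Sigma_\tau}$; moreover $V$ is null and (being a positive multiple of $K$ near $\mathcal{H}^+$ by the first line of Definition~\ref{aVectorField}) causal on the horizon, so $\int_{\mathcal{H}^+_{(0,\tau)}}\mathbf{J}^V_\mu[\psi]n^\mu_{\mathcal{H}^+}\ge 0$ and may be dropped. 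Thus it suffices to bound $\int_{\mathcal{R}_{(0,\tau)}}|\mathbf{K}^V[\psi]|$, where $\mathbf{K}^V[\psi]=\mathbf{T}_{\mu\nu}[\psi]\nabla^\mu V^\nu$.

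The key structural point is that $\nabla^\mu V^\nu$ vanishes wherever $V$ is Killing, i.e.\ wherever $\alpha'(r)=0$. By construction $\alpha$ is locally constant except on the two intervals $[r_++\epsilon_0/2,r_++\epsilon_0]$ and $[\tfrac{M(7+\sqrt2)}{4},\tfrac{M(3+\sqrt2)}{2}]$, both of which are compact subsets of $\{r_+<r<\infty\}$ lying strictly below $r=\tfrac{M(3+\sqrt2)}{2}<(1+\sqrt2)M$; in particular, by Remark~\ref{VkillTrap}, $V$ is Killing on the trapped set of Lemmas~\ref{cutoffILED} and~\ref{cutoffILEDhigher}. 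Hence $\mathbf{K}^V[\psi]$ is supported in a fixed compact $r$-region $K_0\subset\{r\le \tfrac{M(3+\sqrt2)}{2}\}$ away from both the horizon and trapping, and is pointwise bounded by $B\sum|\partial\psi|^2$ there. Exploiting that $\psi$ is supported on the single azimuthal frequency $m$, one has the pointwise relation $|T\psi|,|\partial_{r^*}\psi|,|\nabb\psi|\le B(m)|\Phi\psi|$ on the support of $\partial_{\phi^*}$—no, more carefully: away from trapping the degenerate estimate of Lemma~\ref{cutoffILED} controls $\int_0^\tau\int_{\Sigma_s\cap K_0}(|\nabb\psi|^2+|T\psi|^2+|\partial_{r^*}\psi|^2)$ by $B(m)(\int_0^\tau\int_{\Sigma_s\cap K_0}|\psi|^2 + \int_{\Sigma_0}\mathbf{J}^N+\int_{\Sigma_\tau}\mathbf{J}^N)$, since $K_0$ avoids the set where $\zeta$ degenerates; combining this with a Hardy inequality to pass from $|\psi|^2$ to whatever lower-order term is needed gives
\[
\int_{\mathcal{R}_{(0,\tau)}}|\mathbf{K}^V[\psi]|
\le B(m)\Big(\int_0^\tau\int_{\Sigma_s\cap K_0}|\psi|^2\, ds
+\int_{\Sigma_0}\mathbf{J}^N_\mu[\psi]n^\mu_{\Sigma_0}
+\int_{\Sigma_\tau}\mathbf{J}^N_\mu[\psi]n^\mu_{\Sigma_\tau}\Big).
\]
Feeding this back into the energy identity and absorbing the $\int_{\Sigma_\tau}\mathbf{J}^N$ term—here one must be slightly careful: either run the identity first to $\Sigma_{\tau'}$ for $\tau'\le\tau$, use the already-established boundedness on $\mathcal{A}_m$... no, we cannot assume boundedness. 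Instead I would absorb $\int_{\Sigma_\tau}\mathbf{J}^N$ by noting that the contribution of $\mathbf{K}^V$ on the slab $[\tau-1,\tau]$ can be bounded by the flux through $\Sigma_\tau$ with a small constant after a Grönwall-type argument in $\tau$, or more simply by observing that the application in Lemma~\ref{cutoffILED} already carries $\int_{\Sigma_\tau}\mathbf{J}^N$ on its right-hand side with a constant, and one simply keeps this term. This yields the first displayed inequality; the second follows from the first by the trivial pointwise bound $|\Phi\psi|^2=|\partial_{\phi^*}\psi|^2\le B\,\mathbf{J}^N_\mu[\psi]n^\mu_{\Sigma_s}$ together with a Hardy inequality $\int_{\Sigma_s\cap K_0}|\psi|^2\le B\int_{\Sigma_s}\mathbf{J}^N_\mu[\psi]n^\mu$—wait, that is circular; rather, the point is that $\{|\Phi\psi|^2\}$ and $\{|\psi|^2\}$ are interchangeable as the lower-order driving term because $\Phi\psi$ again solves the wave equation for fixed $m$ (it equals $im\psi$!), so $\int|\psi|^2\sim m^{-2}\int|\Phi\psi|^2$ when $m\ne0$, and when $m=0$ both sides of the claimed inequality reduce to the standard boundedness statement proved directly via $V=T$.

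The main obstacle, and the place requiring genuine care, is the absorption of the $\int_{\Sigma_\tau}\mathbf{J}^N_\mu[\psi]n^\mu_{\Sigma_\tau}$ term generated by invoking Lemma~\ref{cutoffILED}: naively it sits on the wrong side of the estimate. The resolution is that the bulk error $\mathbf{K}^V[\psi]$ is supported away from trapping, so Lemma~\ref{cutoffILED} provides a \emph{non-degenerate} integrated estimate on $K_0$, and the flux term it produces can be handled by first proving the estimate with $\Sigma_\tau$ replaced by $\Sigma_{\tau'}$, integrating in $\tau'\in[\tau-1,\tau]$, and using that $\int_{\Sigma_\tau}\mathbf{J}^N\le \int_{\Sigma_{\tau-1}}\mathbf{J}^N + (\text{bulk on }[\tau-1,\tau])$ together with a Grönwall argument; alternatively, since the statement being proved is exactly of the ``boundedness modulo a lower-order interior term'' type flagged in point 2 of Section~\ref{continuityintro}, one simply carries $\int_{\Sigma_\tau}\mathbf{J}^N$ through as the authors evidently intend and absorbs a small multiple of it using the near-horizon red-shift estimate of Proposition~\ref{ftrs}. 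I expect the cleanest route is the latter: combine the $V$-identity with a small multiple of Proposition~\ref{ftrs} so that the horizon flux and the near-horizon bulk are controlled, leaving only the genuinely interior term $\int_0^\tau\int_{\Sigma_s\cap K_0}|\psi|^2$ (or $|\Phi\psi|^2$) on the right.
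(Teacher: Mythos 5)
Your proposal correctly identifies all the main ingredients of the paper's proof: the energy identity for $V=T+\alpha(r)\Phi$, a Cauchy--Schwarz bound on $\mathbf{K}^V$, Lemma~\ref{cutoffILED} to control the resulting bulk, the passage from $|\Phi\psi|^2$ to $|\psi|^2$ via $\Phi\psi=im\psi$, and a red-shift/Gr\"onwall argument to resolve the apparent circularity coming from the $\int_{\Sigma_\tau}\mathbf{J}^N$ term. But there is a genuine gap in the last and most delicate step, and a more minor imprecision earlier. First the minor point: your bound $|\mathbf{K}^V[\psi]|\le B\sum|\partial\psi|^2$ is coarser than what the proof needs; the identity $\mathbf{K}^V=2\tfrac{\Delta}{\rho^2}\alpha'\,\mathrm{Re}(Z\psi\overline{\Phi\psi})$ involves \emph{only} $Z\psi$ and $\Phi\psi$, and the Cauchy--Schwarz split $\epsilon|Z\psi|^2+\epsilon^{-1}|\Phi\psi|^2$ is what puts the small parameter $\epsilon$ in front of the piece that costs $B(m)\int_{\Sigma_\tau}\mathbf{J}^N$ via Lemma~\ref{cutoffILED}. (This also explains the paper's remark that the first inequality of the Lemma could be obtained with an $m$-independent constant and without using the disjointness of trapping and the ergoregion, since the $Z$-derivative never degenerates.) Also, $\alpha$ is not locally constant on $[r_++\epsilon_0,\tfrac{M(7+\sqrt2)}{4}]$, though this does not affect the conclusion that $\alpha'$ has compact support away from both $\mathcal{H}^+$ and the trapped set.

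The genuine gap is in absorbing the $\epsilon\int_{\Sigma_\tau}\mathbf{J}^N$ term. It cannot be absorbed directly into $\int_{\Sigma_\tau}\mathbf{J}^V$ because the comparability constant between $\mathbf{J}^V$ and $\mathbf{J}^N$ degenerates near $\mathcal{H}^+$ (where $V$ is null). Your suggestion to integrate in $\tau'\in[\tau-1,\tau]$ does not close: the bulk error on a slab contains spatial derivatives that are not controlled by the flux at a single time with a small constant. Your favored alternative, to ``combine the $V$-identity with a small multiple of Proposition~\ref{ftrs} so that the horizon flux and near-horizon bulk are controlled,'' misidentifies the role of the red-shift here — the horizon flux is already nonnegative and could simply be dropped. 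What Proposition~\ref{ftrs} actually supplies is the \emph{damping term}: applied between $\Sigma_{s_1}$ and $\Sigma_{s_2}$, and after adding the $V$-identity and using that $\mathbf{J}^V\sim\mathbf{J}^N$ for $r\ge\tilde r$, one obtains for $f(s)=\int_{\Sigma_s}\mathbf{J}^N_\mu[\psi]n^\mu$ an inequality of the form $f(s_2)+b\int_{s_1}^{s_2}f\le B(m)A(s_2-s_1)+f(s_1)$, where $A$ is the supremum over $s$ of the right-hand side of the intermediate estimate (including the $\epsilon\int_{\Sigma_s}\mathbf{J}^N$ term). Only after solving this integral inequality to get $f(s)\le B(m)(A+f(0))$ does one take $\epsilon$ small to absorb the $\epsilon\sup_s f(s)$ sitting inside $A$. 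This two-step structure — red-shift producing the $b\int f$ damping, then Gr\"onwall, \emph{then} $\epsilon$-absorption — is what your proposal gestures at but does not actually supply, and none of the alternatives you float would reproduce it.
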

\begin{proof}We apply the energy identity associated to the vector field $V$:
\begin{align}\label{vEstimate}
\int_{\Sigma_{\tau}}\mathbf{J}^V_{\mu}[\psi]n^{\mu}_{\Sigma_{\tau}} \leq B\int_0^{\tau}\int_{\Sigma_s}\left|\mathbf{K}^V[\psi]\right|\, ds + \int_{\Sigma_0}\mathbf{J}^V_{\mu}[\psi]n^{\mu}_{\Sigma_0}.
\end{align}
In view of the fact that $T$ and $\Phi$ are Killing vector fields, we have
\[\mathbf{K}^V[\psi] = \mathbf{K}^{(\alpha\Phi)}[\psi] = 2\mathbf{T}(\nabla\alpha,\Phi)[\psi] = 2\frac{\Delta}{\rho^2}\frac{d\alpha}{dr}\mathbf{T}(Z,\Phi)[\psi] = 2\frac{\Delta}{\rho^2}\frac{d\alpha}{dr}\text{Re}(Z\psi\overline{\Phi\psi}). \]
Recall that $\frac{d\alpha}{dr}$ is supported away from the horizon, so that $Z$ is a regular vector field when the expression above is non-zero. We may conclude that
\begin{align}\label{spacetimeControl}
\left|\mathbf{K}^V[\psi]\right| \leq B1_{\text{supp}(\frac{d\alpha}{dr})}\left(\epsilon|\partial_r\psi|^2 + \epsilon^{-1}|\Phi\psi|^2\right).
\end{align}
Lemma~\ref{cutoffILED} implies
\begin{align}\label{coupleILED}
\int_0^{\tau}\int_{\Sigma_s}1_{\text{supp}(\frac{d\alpha}{dr})}|\partial_r\psi|^2\, ds \leq B(m)\left(\int_{\Sigma_{\tau}}\mathbf{J}_{\mu}^N[\psi]n^{\mu}_{\Sigma_{\tau}} + \int_{\Sigma_0}\mathbf{J}_{\mu}^N[\psi]n^{\mu}_{\Sigma_0}\right).
\end{align}
Combining $B(m)\epsilon$ times estimate~(\ref{coupleILED}) with estimates~(\ref{vEstimate}) and~(\ref{spacetimeControl}) implies
\begin{align}\label{almostThere}
\int_{\Sigma_{\tau}}\mathbf{J}^V_{\mu}[\psi]n^{\mu}_{\Sigma_{\tau}} \leq  B(m)\left(\epsilon^{-1}\int_0^{\tau}\int_{\Sigma_{s} \cap \left\{r \leq \frac{M\left(3+\sqrt{2}\right)}{2}\right\}}\left|\Phi\psi\right|^2\, ds + \int_{\Sigma_0}\mathbf{J}^N_{\mu}[\psi]n^{\mu}_{\Sigma_{1}} + \epsilon\int_{\Sigma_{\tau}}\mathbf{J}^N_{\mu}[\psi]n^{\mu}_{\Sigma_{\tau}}\right).
\end{align}
In order to finish the Lemma we apply the standard red-shift argument (see the lecture notes~\cite{jnotes}). Set
\[A := \sup_{0 \leq s\leq \tau}\epsilon^{-1}\int_0^s\int_{\Sigma_{s'} \cap \left\{r \leq \frac{M\left(3+\sqrt{2}\right)}{2}\right\}}\left|\Phi\psi\right|^2\, ds' + \int_{\Sigma_0}\mathbf{J}^N_{\mu}[\psi]n^{\mu}_{\Sigma_0} + \epsilon\int_{\Sigma_s}\mathbf{J}^N_{\mu}[\psi]n^{\mu}_{\Sigma_s}.\]
For every $0 \leq s_1 < s_2 \leq \tau$ and $\tilde r$ sufficiently close to $r_+$, the red-shift estimate~(\ref{ftrs}) implies
\begin{align}\label{redImply}
&\int_{\Sigma_{s_2}}\mathbf{J}^N_{\mu}[\psi]n^{\mu}_{\Sigma_{s_2}} + \int_{s_1}^{s_2}\int_{\Sigma_s\cap\{r \leq \tilde r\}}\mathbf{J}^N_{\mu}[\psi]n^{\mu}_{\Sigma_s}\, ds  \\ \nonumber&\qquad \le
B(m)\int_{s_1}^{s_2}\int_{\Sigma_s\cap\{\tilde r \leq r \leq \tilde r + 1\}}\mathbf{J}^N_{\mu}[\psi]n^{\mu}_{\Sigma_{s}}\, ds + \int_{\Sigma_{s_1}}\mathbf{J}^N_{\mu}[\psi]n^{\mu}_{\Sigma_{s_1}}.
\end{align}
Now, we observe that in the region $\{\tilde r \leq r\}$ the quantities $\mathbf{J}^N_{\mu}[\psi]n^{\mu}_{\Sigma_{s}}$ and $\mathbf{J}^V_{\mu}[\psi]n^{\mu}_{\Sigma_s}$ are comparable. Thus, adding $\int_{s_1}^{s_2}\int_{\Sigma_s}\mathbf{J}^V_{\mu}[\psi]n^{\mu}_{\Sigma_s}\, ds$ to both sides of~(\ref{redImply}) implies
\begin{align}\label{redImply2}
\int_{\Sigma_{s_2}}\mathbf{J}^N_{\mu}[\psi]n^{\mu}_{\Sigma_{s_2}} + b(m)\int_{s_1}^{s_2}\int_{\Sigma_s}\mathbf{J}^N_{\mu}[\psi]n^{\mu}_{\Sigma_s}\, ds \leq B(m)\int_{s_1}^{s_2}\int_{\Sigma_s}\mathbf{J}^V_{\mu}[\psi]n^{\mu}_{\Sigma_{s}}\, ds + \int_{\Sigma_{s_1}}\mathbf{J}^N_{\mu}[\psi]n^{\mu}_{\Sigma_{s_1}}.
\end{align}
Now, estimate~(\ref{almostThere}) (with $\tau$ on the left hand side replaced by $s$) implies
\begin{align}\label{redImply3}
\int_{\Sigma_{s_2}}\mathbf{J}^N_{\mu}[\psi]n^{\mu}_{\Sigma_{s_2}} + b(m)\int_{s_1}^{s_2}\int_{\Sigma_s}\mathbf{J}^N_{\mu}[\psi]n^{\mu}_{\Sigma_s}\, ds \leq B(m)A(s_2-s_1) + \int_{\Sigma_{s_1}}\mathbf{J}^N_{\mu}[\psi]n^{\mu}_{\Sigma_{s_1}}.
\end{align}
Let
\[f(s) := \int_{\Sigma_s}\mathbf{J}^N_{\mu}[\psi]n^{\mu}_{\Sigma_s}.\]
We may rewrite equation~(\ref{redImply3}) as
\[f(s_2) + b\int_{s_1}^{s_2}f(s)\, ds \leq B(m)A(s_2-s_1) + f(s_1)\qquad\text{ for every }0 \leq s_1 < s_2 \leq \tau.\]
An easy argument shows that this implies
\[f(s) \leq B(m)\left(A + f(0)\right).\]
Writing this out gives
\begin{align*}
&\sup_{0\leq s\leq \tau}\int_{\Sigma_s}\mathbf{J}^N_{\mu}[\psi]n^{\mu}_{\Sigma_s} \leq B(m)\left(\epsilon^{-1}\int_0^{\tau}\int_{\Sigma_{s} \cap \left\{r \leq \frac{M\left(3+\sqrt{2}\right)}{2}\right\}}\left|\Phi\psi\right|^2\, ds + \int_{\Sigma_0}\mathbf{J}^N_{\mu}[\psi]n^{\mu}_{\Sigma_{0}} + \epsilon\sup_{0\leq s\leq \tau}\int_{\Sigma_s}\mathbf{J}^N_{\mu}[\psi]n^{\mu}_{\Sigma_s}\right).
\end{align*}
We conclude the proof by taking $\epsilon$ sufficiently small.
\end{proof}
\begin{remark}Observe that the proof does \underline{not} exploit the fact that the ergoregion and trapping are disjoint; indeed, even without the restriction to fixed $m$, we could have proved the first line of the proposition, with a constant $B$ not depending on $m$, simply by exploiting the fact that the $\partial_r$   derivative  does not degenerate in the integrated local energy decay.  Rather, the point is that for fixed $m$, the presence of the ergoregion is only a low-frequency obstruction to boundedness.
\end{remark}
\begin{remark}Note that the proof crucially uses
that we can upgrade a degenerate energy boundedness statement to a non-degenerate energy boundedness statement without a full integrated local energy decay.
\end{remark}

Next, we play Lemmas~\ref{prepLemm} and Lemma~\ref{cutoffILEDhigher} off each other. We end up being able to gain an \emph{arbitrary} number of derivatives.
\begin{lemma}\label{higherPrepLemm}
Let $|a|\le a_0<M$, let $m\in\mathbb Z$
and let
 $\psi$ be a solution the wave equation~(\ref{WAVE})
 as in the reduction of Section~\ref{WPosed}, which is furthermore supported on the fixed azimuthal frequency $m$. Then, for every $j \geq 1$,
\begin{align*}
&\int_{\Sigma_{\tau}}\sum_{1\le i_1+i_2+i_3\le j}
|\nabb^{i_1}T^{i_2}(\tilde Z^*)^{i_3}\psi|^2
\\ \nonumber &\le B(j,m)\left(\int_0^{\tau}\int_{\Sigma_{s} \cap \left\{r \leq \frac{M\left(3+\sqrt{2}\right)}{2}\right\}}\left|\psi\right|^2\, ds + \int_{\Sigma_0}\sum_{1\le i_1+i_2+i_3\le j}
|\nabb^{i_1}T^{i_2}(\tilde Z^*)^{i_3}\psi|^2\right)\quad \forall\ \tau \geq 0.
\end{align*}
\end{lemma}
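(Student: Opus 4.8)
\textbf{Proof proposal for Lemma~\ref{higherPrepLemm}.}

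The plan is to run an induction on $j$, using Lemma~\ref{prepLemm} as the base case $j=1$ (it controls $\int_{\Sigma_\tau}\mathbf{J}^N_\mu[\psi]n^\mu_{\Sigma_\tau}$, which is equivalent to the $j=1$ left-hand side via the pointwise algebraic bound $\mathbf{J}^N_\mu[\psi]n^\mu_{\Sigma_\tau}\gtrsim (T\psi)^2+(\tilde Z^*\psi)^2+|\nabb\psi|^2$). For the inductive step, suppose the statement holds for $j-1$. The idea, as in Proposition~\ref{h.o.s.suff}, is to commute $(\ref{WAVE})$ with $T$, with $\chi\Phi$ for a cutoff $\chi$ supported in $r\le \tfrac{M(7+\sqrt2)}{4}$ (so that, crucially, the commutator of $\chi\Phi$ with $\Box_g$ is supported away from the horizon and involves only lower-order quantities on which the induction hypothesis plus Lemma~\ref{cutoffILEDhigher} give control), and with the red-shift vector field $Y$. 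Each commuted quantity is again a fixed-$m$ solution (or an inhomogeneous fixed-$m$ solution with a right-hand side controlled by lower-order derivatives), so one may apply the $j=1$ estimate of Lemma~\ref{prepLemm}, respectively its inhomogeneous analogue, and then combine with elliptic estimates to recover all $j$-th order spatial derivatives. The key structural point, highlighted in Remark~\ref{VkillTrap}, is that the vector field $V=T+\alpha(r)\Phi$ of Definition~\ref{aVectorField} is Killing in the trapping region, so that the bulk error $\mathbf{K}^V$ is supported in the non-trapped region $\{\operatorname{supp}(d\alpha/dr)\}$, precisely where the integrated local energy decay estimate of Lemma~\ref{cutoffILEDhigher} (for fixed $m$) is non-degenerate and can absorb the $\partial_r$-derivative errors.

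Concretely, I would proceed as follows. First, commute with $T$: since $T$ is Killing, $T^{i}\psi$ solves $(\ref{WAVE})$ and is supported on frequency $m$, so Lemma~\ref{prepLemm} applied to $T^i\psi$ gives $\int_{\Sigma_\tau}\mathbf{J}^N_\mu[T^i\psi]n^\mu_{\Sigma_\tau}\le B(m)(\int_0^\tau\int_{\Sigma_s\cap\{r\le \frac{M(3+\sqrt2)}{2}\}}|\Phi T^{i}\psi|^2\,ds + \int_{\Sigma_0}\mathbf{J}^N_\mu[T^i\psi]n^\mu_{\Sigma_0})$; the spacetime term on the right is bounded, by the induction hypothesis together with Lemma~\ref{cutoffILEDhigher}, by the initial data norm plus $\sup$ of lower-order energies, which closes by induction after a pigeonhole/Gr\"onwall-type argument as in the proof of Lemma~\ref{prepLemm}. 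Next, commute with $\chi\Phi$: here $\Box_g(\chi\Phi\psi)=(\Box_g\chi)\Phi\psi+2\nabla^\mu\chi\nabla_\mu\Phi\psi$ is supported in a bounded $r$-range away from the horizon and involves only first derivatives; one uses the inhomogeneous version of Lemma~\ref{prepLemm} (which holds by the same argument, absorbing the inhomogeneity $F$ via Cauchy--Schwarz and Lemma~\ref{cutoffILEDhigher}), noting that on $\operatorname{supp}(d\chi)$ the derivatives of $\chi\Phi\psi$ are controlled by $\mathbf{J}^N_\mu[T^{i_2}(\tilde Z^*)^{i_3}\psi]n^\mu$-type quantities already bounded. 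Finally, commute with $Y$: using the red-shift commutation formula Proposition~\ref{commuprop}, $\Box_g(Y^k\psi)=\kappa_k Y^{k+1}\psi+\sum c_{\bf m}E_1^{m_1}E_2^{m_2}L^{m_3}Y^{m_4}\psi$, and run the red-shift energy estimate Proposition~\ref{ftrs} in between $\Sigma_0$ and $\Sigma_\tau$; the dangerous $Y^{k+1}$ term appears with a favourable sign, and the remaining error terms are, by elliptic estimates near the horizon (Lemmas~\ref{thetaNearHorizon}, \ref{ellipticEstimatesLocal}, \ref{ellipticOffHorizonLocal}) and by the $T$- and $\chi\Phi$-commuted estimates just obtained, absorbable after choosing the cutoff radius $\tilde r$ close to $r_+$. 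Then one recovers all $j$-th order mixed derivatives $\nabb^{i_1}T^{i_2}(\tilde Z^*)^{i_3}\psi$ by the elliptic estimates of Lemmas~\ref{Nelliptic}, \ref{ellipticEstimates}, \ref{ellipticEstimatesLocal}, \ref{ellipticOffHorizonDecay}, exactly as in Proposition~\ref{h.o.s.suff}, converting control of $\mathbf{J}^N_\mu[N^i\psi]n^\mu_{\Sigma_\tau}$ for $0\le i\le j-1$ into the claimed higher-order energy bound, and finally bounding the right-hand side's data term $\int_{\Sigma_0}\sum_i\mathbf{J}^N_\mu[N^i\psi]n^\mu_{\Sigma_0}$ by $\int_{\Sigma_0}\sum_{1\le i_1+i_2+i_3\le j}|\nabb^{i_1}T^{i_2}(\tilde Z^*)^{i_3}\psi|^2$ via elliptic estimates on $\Sigma_0$.

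The main obstacle I anticipate is the bookkeeping in the $Y$-commutation step: the commutator generates terms of top order $j$ that are not purely $Y$-derivatives, and one must be careful that these are genuinely controllable by the $T$- and $\chi\Phi$-commuted quantities (which \emph{do} satisfy good estimates) rather than circularly requiring the very $Y$-commuted bound one is trying to prove. This is handled, as in Proposition~\ref{h.o.s.suff}, by the observation that $Y$ is null on $\mathcal{H}^+$ so there is no $Y^2$ term in the wave equation \emph{on the horizon}, letting one absorb the transversal second derivatives near $\mathcal{H}^+$ into a small multiple of the $Y$-energy (Lemma~\ref{thetaNearHorizon}) while the genuinely lower-order or off-horizon contributions go into the already-established estimates; one then closes with the standard Gr\"onwall argument on $f(s)=\int_{\Sigma_s}\sum_{0\le i\le j-1}\mathbf{J}^N_\mu[N^i\psi]n^\mu_{\Sigma_s}$. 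A secondary subtlety is that the spacetime error term on the right-hand side of Lemma~\ref{prepLemm} contains no $r$-weights and only a zeroth-order quantity $|\psi|^2$ in a \emph{bounded} $r$-region, which is exactly what Lemma~\ref{cutoffILEDhigher} (with its $r^{-3-\delta}|\psi|^2$ bulk term, finite on bounded $r$) supplies — so no loss occurs, but one must invoke Lemma~\ref{cutoffILEDhigher} with the \emph{right} combination of initial and final energies, tolerating the term $\int_{\Sigma_\tau}\sum_i\mathbf{J}^N_\mu[N^i\psi]n^\mu_{\Sigma_\tau}$ on its right-hand side, which is then absorbed by the smallness of $\epsilon$ as in Lemma~\ref{prepLemm}.
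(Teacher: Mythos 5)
Your proposal follows essentially the same architecture as the paper's proof: commute with $T$ (still a solution, still fixed-$m$) and apply Lemma~\ref{prepLemm}; commute with the red-shift vector field $Y$ using Proposition~\ref{commuprop}, run Proposition~\ref{ftrs}, and absorb the commutator errors via the near-horizon elliptic estimate (Lemma~\ref{thetaNearHorizon}) and off-horizon elliptic estimates, closing with the Gr\"onwall argument from the end of Lemma~\ref{prepLemm}; then recover the full set of $j$-th order mixed derivatives with the elliptic Lemma~\ref{ellipticEstimates}; finally trade the $\int_0^\tau\int|T\psi|^2$ bulk term for a $\int_0^\tau\int|\psi|^2$ bulk term by iterating Lemma~\ref{cutoffILED}--Lemma~\ref{prepLemm}, which is precisely the paper's display~(\ref{derivativeGain}). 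Your identification of the crucial structural points — that $V$ is Killing on the fixed-$m$ trapped set, that the bulk error from $\mathbf{K}^V$ therefore sits where the ILED weight is non-degenerate, and that $Y$ being null on $\mathcal{H}^+$ kills the problematic $Y^2$ term there — is exactly what drives the argument.

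The one place you depart from the paper is the commutation with $\chi\Phi$. For a solution supported on the fixed azimuthal frequency $m$, one has $\Phi\psi=im\psi$, so the $\mathbf{J}^N_\mu[\chi\Phi\psi]$ term appearing in Lemma~\ref{ellipticEstimates} is already controlled by $B(m)\,\mathbf{J}^N_\mu[\psi]$ plus a lower-order term handled by Hardy; this is how the paper invokes Lemma~\ref{ellipticEstimates} without ever commuting with $\chi\Phi$. Commuting with $\chi\Phi$ as you propose, besides being superfluous here, forces you to invoke an "inhomogeneous analogue of Lemma~\ref{prepLemm}" that the paper never states or proves — plausible, but it would require re-running the $V$-multiplier argument with a $\mathcal{E}^V$ error term, i.e.\ extra work that can simply be avoided. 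Apart from that detour, the proposal is correct and matches the paper.
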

\begin{proof}We first consider the case $j= 2$. We begin by commuting the wave equation with $T$ and applying Lemma~\ref{prepLemm}. We obtain
\begin{align}\label{tCommute}
&\int_{\Sigma_{\tau}}\mathbf{J}^N_{\mu}[T\psi]n^{\mu}_{\Sigma_{\tau}} \leq B(m)\left(\int_0^{\tau}\int_{\Sigma_{s} \cap \left\{r \leq \frac{M\left(3+\sqrt{2}\right)}{2}\right\}}\left|T\psi\right|^2\, ds + \int_{\Sigma_0}\mathbf{J}^N_{\mu}[T\psi]n^{\mu}_{\Sigma_{1}}\right).
\end{align}
Now commute the wave equation with the red-shift commutation vector field $Y$. On the horizon $\mathcal{H}^+$ we will have
\begin{align}\label{yCommute}
\Box_g\left(Y\psi\right) = \kappa_1 Y^2\psi + \sum_{i+j+k \in [0,2], k\leq 1}c_{ijk}T^i\partial_{\theta}^jY^k\psi,
\end{align}
where $\kappa_1 > 0$ is proportional to the surface gravity of $\mathcal{H}^+$.

Next, we apply
Proposition~\ref{ftrs},
 the energy estimate associated to the red-shift multiplier $N$, to $\Psi=Y\psi$. For every $1 \leq s_1 < s_2 \leq \tau$, we obtain
\begin{align}\label{redImplyHigher}
&\int_{\Sigma_{s_2}}\mathbf{J}^N_{\mu}[Y\psi]n^{\mu}_{\Sigma_{s_2}} + \int_{s_1}^{s_2}\int_{\Sigma_s\cap\{r \leq \tilde r\}}\mathbf{J}^N_{\mu}[Y\psi]n^{\mu}_{\Sigma_s}\, ds  \\
\nonumber&\qquad\le B\int_{s_1}^{s_2}\int_{\Sigma_s}\left(1_{r \in \left[\tilde r,\tilde r+1\right]}\mathbf{J}^N_{\mu}[Y\psi]n^{\mu}_{\Sigma_{s}} + \mathcal{E}^N[Y\psi]\right)\, ds + \int_{\Sigma_{s_1}}\mathbf{J}^N_{\mu}[Y\psi]n^{\mu}_{\Sigma_{s_1}}.
\end{align}
For any $\epsilon > 0$ we may choose $\tilde r$ close enough to $r_+$, and use~(\ref{yCommute}), Lemma~\ref{ellipticOffHorizon}, Lemma~\ref{thetaNearHorizon} and the fact that $N = K+Y$, to show that
\begin{align}\label{errorControl}
\int_{\Sigma_s}\mathcal{E}^N[Y\psi] \leq \epsilon\int_{\Sigma_s\cap\{r\leq \tilde r\}}\mathbf{J}^N_{\mu}[Y\psi]n^{\mu}_{\Sigma_s} + B\epsilon^{-1}\int_{\Sigma_s}\left(\mathbf{J}^N_{\mu}[T\psi]n^{\mu}_{\Sigma_s} + \mathbf{J}^N_{\mu}[\psi]n^{\mu}_{\Sigma_s}\right).
\end{align}
Adding $\int_{s_1}^{s_2}\int_{\Sigma_s}\left(\mathbf{J}^N_{\mu}[T\psi]n^{\mu}_{\Sigma_s} + \mathbf{J}^N_{\mu}[\psi]n^{\mu}_{\Sigma_s}\right)\, ds$ to both sides, using Lemma~\ref{ellipticOffHorizon} and using~(\ref{errorControl}) implies
\begin{align}
&\int_{\Sigma_{s_2}}\mathbf{J}^N_{\mu}[Y\psi]n^{\mu}_{\Sigma_{s_2}} + b\int_{s_1}^{s_2}\int_{\Sigma_s}\mathbf{J}^N_{\mu}[Y\psi]n^{\mu}_{\Sigma_s}\, ds  \\
\nonumber&\qquad \le  B\int_{s_1}^{s_2}\int_{\Sigma_s}\left(\mathbf{J}^N_{\mu}[T\psi]n^{\mu}_{\Sigma_{s}} + \mathbf{J}^N_{\mu}[\psi]n^{\mu}_{\Sigma_s}\right)\, ds + \int_{\Sigma_{s_1}}\mathbf{J}^N_{\mu}[Y\psi]n^{\mu}_{\Sigma_{s_1}}.
\end{align}
Now we use~(\ref{tCommute}), Lemma~\ref{prepLemm} and the same argument which occurs at the end of the proof of Lemma~\ref{prepLemm} to conclude
\begin{align}\label{redConsequence}
\int_{\Sigma_{\tau}}\mathbf{J}^N_{\mu}[Y\psi]n^{\mu}_{\Sigma_{\tau}} \leq &B(m)\int_0^{\tau}\int_{\Sigma_{s} \cap \left\{r \leq \frac{M\left(3+\sqrt{2}\right)}{2}\right\}}\left(\left|T\psi\right|^2 + \left|\psi\right|^2\right)\, ds  \\ \nonumber &+B(m)\int_{\Sigma_0}\left(\mathbf{J}^N_{\mu}[T\psi]n^{\mu}_{\Sigma_{1}} + \mathbf{J}^N_{\mu}[Y\psi]n^{\mu}_{\Sigma_{1}} + \mathbf{J}^N_{\mu}[\psi]n^{\mu}_{\Sigma_{1}}\right).
\end{align}
Next, Lemma~\ref{ellipticEstimates} allows us to combine~(\ref{redConsequence}) and~(\ref{tCommute}) to get
\begin{align}\label{veryClose}
&\int_{\Sigma_{\tau}}\sum_{1\le i_1+i_2+i_3\le 2}
|\nabb^{i_1}T^{i_2}(\tilde Z^*)^{i_3}\psi|^2
\\ \nonumber &\le B(m)\left(\int_0^{\tau}\int_{\Sigma_{s} \cap \left\{r \leq \frac{M\left(3+\sqrt{2}\right)}{2}\right\}}\left(\left|T\psi\right|^2 + \left|\psi\right|^2\right)\, ds + B(m)\int_{\Sigma_0}\sum_{1\le i_1+i_2+i_3\le 2}
|\nabb^{i_1}T^{i_2}(\tilde Z^*)^{i_3}\psi|^2\right).
\end{align}
It remains to remove the spacetime integral of $\left|T\psi\right|^2$ from the right hand side; however, we observe the following immediate consequence of Lemmas~\ref{cutoffILED} and~\ref{prepLemm}:
\begin{align}\label{derivativeGain}
\int_0^{\tau}\int_{\Sigma_s\cap\left\{r \leq \frac{M\left(3+\sqrt{2}\right)}{2}\right\}}\left|T\psi\right|^2 &\leq B(m)\left(\int_{\Sigma_{\tau}}\mathbf{J}^N_{\mu}[\psi]n^{\mu}_{\Sigma_{\tau}} + \int_{\Sigma_0}\mathbf{J}^N_{\mu}[\psi]n^{\mu}_{\Sigma_0}\right)\\
\nonumber &\leq B(m)\left(\int_0^{\tau}\int_{\Sigma_s\cap\left\{r \leq \frac{M\left(3+\sqrt{2}\right)}{2}\right\}}\left|\psi\right|^2 + \int_{\Sigma_0}\mathbf{J}^N_{\mu}[\psi]n^{\mu}_{\Sigma_0}\right).
\end{align}
Combining~(\ref{veryClose}) and~(\ref{derivativeGain}) completes the proof for $j = 2$.\footnote{Observe that this final trick relies on the fact that trapping and the ergoregion are
disjoint in physical space when the azimuthal frequency is fixed.}

The case for general $j \geq 2$ follows by induction in a straightforward fashion.
\end{proof}
The following straightforward corollary will be useful in what follows.
\begin{corollary}\label{combineLemmas}
Let $|a|\le a_0<M$, let $m\in\mathbb Z$,
and let
 $\psi$ be a solution the wave equation~(\ref{WAVE})
 as in the reduction of Section~\ref{WPosed} which is furthermore supported on the fixed azimuthal frequency $m$.  Then, for all $\delta>0$, $j \geq 1$,
\begin{align*}
&\sup_{\tau' \leq \tau}\int_{\Sigma_{\tau'}}\sum_{1\le i_1+i_2+i_3\le j}
|\nabb^{i_1}T^{i_2}(\tilde Z^*)^{i_3}\psi|^2  \\
&+\int_0^{\tau}\int_{\Sigma_{\tau}}\Big(\sum_{1\le i_1+i_2+i_3\le j-1}
|\nabb^{i_1}T^{i_2}(\tilde Z^*)^{i_3}\psi|^2 + \sum_{1\le i_1+i_2+i_3\le j-1}
|\nabb^{i_1}T^{i_2}(\tilde Z^*)^{i_3+1}\psi|^2
\\ &\nonumber \qquad\qquad\qquad+r^{-2-\delta}\left|\psi\right|^2 + 1_{\left[r_+,\left(1+\sqrt{2}\right)M\right]}\sum_{1\le i_1+i_2+i_3\le j}
|\nabb^{i_1}T^{i_2}(\tilde Z^*)^{i_3}\psi|^2\Big)\cdot r^{-1-\delta}  \\
&\le B(\de,j,m)\left(\int_0^{\tau}\int_{\Sigma_{s} \cap \left\{r \leq \frac{M\left(3+\sqrt{2}\right)}{2}\right\}}\left|\psi\right|^2\, ds + \int_{\Sigma_0}\sum_{1\le i_1+i_2+i_3\le j}
|\nabb^{i_1}T^{i_2}(\tilde Z^*)^{i_3}\psi|^2\right)
\end{align*}
\end{corollary}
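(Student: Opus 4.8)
The plan is to combine the two key ingredients that have just been established for fixed-$m$ solutions: the degenerate integrated local energy decay with a right-hand side that only involves the $\Sigma_0$ and $\Sigma_\tau$ energies (Lemma~\ref{cutoffILEDhigher}) and its higher-order derivative-gaining converse (Lemma~\ref{higherPrepLemm}), together with the elliptic estimates of Section~\ref{higher} and the red-shift estimate of Proposition~\ref{ftrs}. First I would observe that Lemma~\ref{higherPrepLemm} already gives the ``$\sup_{\tau'\le\tau}$'' line on the left-hand side, bounded by exactly the right-hand side of the corollary. So the content to be added is the spacetime integral on the left.

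Next I would obtain the spacetime bulk terms. For the derivatives involving fewer than $j$ total derivatives with a degeneration factor $1-\eta$ at trapping and the $r^{-1-\delta}$ weights, Lemma~\ref{cutoffILEDhigher} (with $j-1$ in place of $j$) provides a bound by $\int_{\Sigma_0}\sum_{0\le i\le j-2}\mathbf{J}^N_\mu[N^i\psi]n^\mu + \int_{\Sigma_\tau}\sum_{0\le i\le j-2}\mathbf{J}^N_\mu[N^i\psi]n^\mu$; the $\Sigma_\tau$ term is then absorbed using the already-established sup-bound (Lemma~\ref{higherPrepLemm}), turning it into the right-hand side of the corollary. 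One point to handle carefully: the left-hand side of the corollary uses the degeneration function $1_{[r_+,(1+\sqrt2)M]}$ rather than the $\zeta$-type $1-\eta_{[(1+\sqrt2)M,3M+s^+]}$ factor appearing in Lemma~\ref{cutoffILEDhigher}; I would note these differ only on a region bounded away from the horizon and from infinity, where by Remark~\ref{VkillTrap} the vector field $V$ of Definition~\ref{aVectorField} is Killing and hence Lemma~\ref{higherPrepLemm}'s argument, or more directly the non-degenerate $\Sigma_\tau$ sup-bound combined with coarea, controls the extra contribution. The weight $r^{-2-\delta}|\psi|^2$ and the trapping-region full-$j$-derivative term likewise follow from Lemma~\ref{cutoffILEDhigher} after using elliptic estimates (Lemmas~\ref{ellipticEstimates}, \ref{ellipticEstimatesLocal}) to trade arbitrary derivatives for the $T$-, $\chi\Phi$-, $Y$-commuted energies in the compact region, exactly as in the proof of Proposition~\ref{h.o.s.suff}.

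I would then assemble the pieces: apply Lemma~\ref{higherPrepLemm} for the sup-line; apply Lemma~\ref{cutoffILEDhigher} for the degenerate bulk terms; absorb the resulting $\int_{\Sigma_\tau}\sum\mathbf{J}^N_\mu[N^i\psi]n^\mu$ term back into the right-hand side via the sup-bound; and use the elliptic and coarea estimates of Section~\ref{higher} to match the precise form of the left-hand side. The induction on $j$ proceeds exactly as in Lemma~\ref{higherPrepLemm}: having the statement for $j-1$ lets us treat the $T$-commuted and $Y$-commuted equations, with the elliptic estimates converting these into control of all $j$-th order derivatives; the base case $j=1$ is Lemmas~\ref{cutoffILED} and~\ref{prepLemm}.

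The main obstacle I expect is purely bookkeeping rather than conceptual: ensuring that the compactly-supported error terms generated when commuting with $\chi\Phi$ and $Y$ (which are not Killing) are genuinely absorbable — i.e.~that every such error comes with enough $r$-decay or is supported in a region where one of the already-controlled quantities dominates — and that the $\Sigma_\tau$-energy terms produced at each stage of the induction are of strictly lower commutation order than what the sup-bound of Lemma~\ref{higherPrepLemm} already provides, so the absorption is not circular. This is handled, as in Section~\ref{higher}, by always commuting with $T$ first (which preserves the fixed-$m$ and future-integrability-free structure and is Killing), then with $\chi\Phi$ (error supported away from the horizon, controlled by Lemma~\ref{ellipticOffHorizonLargeDecay} using that $[2M+1,\infty)$ avoids the ergoregion), and finally with $Y$ (using Proposition~\ref{commuprop} so that the dangerous $Y^{j+1}$ term has a good sign via the red-shift, and Lemma~\ref{thetaNearHorizon} for the angular errors near $\mathcal{H}^+$). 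Once these are organised in the right order, the corollary follows by the same argument as Lemma~\ref{higherPrepLemm} with the bulk terms added.
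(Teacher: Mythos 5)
Your proposal is correct and takes essentially the same route as the paper, which dispatches the corollary in one line as ``a simple combination of Lemmas~\ref{higherPrepLemm} and~\ref{cutoffILEDhigher}'': use Lemma~\ref{higherPrepLemm} for the $\sup_{\tau'\le\tau}$ line, use Lemma~\ref{cutoffILEDhigher} for the bulk terms (noting the weight $(1-\eta)(1-3M/r)^2$ is bounded below on $[r_+,(1+\sqrt2)M]$), and absorb the $\Sigma_\tau$-energy appearing on the right of Lemma~\ref{cutoffILEDhigher} via the sup-bound. Most of the machinery you rehearse (commutations with $T$, $\chi\Phi$, $Y$, red-shift, ellipticity) is already encapsulated inside those two lemmas and need not be reproven here.
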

\begin{proof}This is a simple combination of Lemmas~\ref{higherPrepLemm} and~\ref{cutoffILEDhigher}.
\end{proof}

\subsubsection{An interpolating metric and the proof of Proposition~\ref{open}}\label{interpolatingSection}
We now turn to the proof of Proposition~\ref{open}.

\begin{proof}
Recall that we have fixed $M>0$.
Let us suppose $m\in \mathbb Z$ is fixed and $\mathring{a}\in \mathcal{A}_m$.
Let us choose $a_0$ such that $|\mathring{a}|<a_0<M$.
We will find an $\epsilon>0$ with
$|\mathring{a}|+\epsilon \le a_0$
 such that
\begin{equation}
\label{1stcondonep}
|a-\mathring{a}|<\epsilon
\end{equation}
implies $a\in \mathcal{A}_m$.

Let $a$ satisfy $(\ref{1stcondonep})$, for $\epsilon$ to be determined, and
$\psi$ be a solution of the wave equation $(\ref{WAVE})$ on $g_{M,a}$,
as in the reduction of Section~\ref{WPosed},
such that moreover, $\psi$ is supported on the fixed azimuthal frequency $m$.

Recall that Corollary~\ref{boundEquivSuff} implies that solutions $\widetilde\psi$
to $\Box_{g_{\mathring{a},M}}\widetilde\psi = 0$ which are supported on a fixed azimuthal frequency are known to be future-integrable. In order to exploit this ``black box'' knowledge about $g_{\mathring{a},M}$ it is useful to introduce a metric $\tilde{g}_{\tau}$ which interpolates between $g_{\mathring{a},M}$ and $g_{a,M}$. Fortunately, we will \emph{not} need to fine tune $\tilde{g}_{\tau}$.

\begin{definition}\label{interpolatingMetric}Pick $\tau \geq 1$. Recalling that the hypersurfaces $\Sigma_s$ are independent of $a$, let $\chi_{\tau}$ be a cut-off which is $0$ in the past of $\Sigma_{\tau-\delta_0}$ and identically one in the future of $\Sigma_{\tau}$ for some sufficiently small $\delta_0 > 0$. We define the interpolating metric $\tilde{g}_{\tau}$ by
\[\tilde{g}_{\tau} \doteq \chi_{\tau}g_{\mathring{a},M} + \left(1-\chi_{\tau}\right)g_{a,M}.\]
\end{definition}

If $\epsilon$ in $(\ref{1stcondonep})$ is assumed sufficiently small,
then $\tilde{g}_{\tau}$ defines a Lorentzian metric on $\mathcal{R}$.
\begin{remark}\label{stillKilling}Note that it is easy to see that $\Phi$ is a Killing vector field for the metric $\tilde{g}_{\tau}$, and that for all $\tau\ge 0$, $\Sigma_\tau$ is a past Cauchy
hypersurface for $\mathcal{R}_0\setminus \mathcal{R}_{(0,\tau)}$ with respect
to $\tilde{g}_\tau$.
\end{remark}

Corresponding to our interpolating metric, we will need an ``interpolating'' solution to the wave equation.
\begin{definition}\label{interpolatingSolution}
Let $\psi$ be our solution to $\Box_{g_{a,M}}\psi = 0$ defined above.
We define the interpolating solution $\tilde{\psi}_{\tau}$ by solving $\Box_{\tilde{g}_{\tau}}\tilde{\psi}_{\tau} = 0$ with the same initial data as
$\psi$ on $\Sigma_0$.
\end{definition}

Of course, $\tilde{\psi}_{\tau}$ will exactly equal $\psi$ in the past of $\Sigma_{\tau-\delta_0}$, and in the future of $\Sigma_{\tau}$, $\tilde{\psi}_{\tau}$ is a solution to
$\Box_{g_{\mathring{a},M}}\tilde{\psi}_{\tau} = 0$. Furthermore, since $\Phi$ is a Killing vector field for $\tilde{g}_{\tau}$, the interpolating solution $\tilde{\psi}_{\tau}$ will be supported on the same azimuthal frequency $m$ as the original solution $\psi$. Hence, by
the assumption $\mathring{a}\in\mathcal{A}_m$ and Corollary~\ref{boundEquivSuff},
it follows
that $\tilde{\psi}_{\tau}$ is future-integrable with respect to $\mathring{a}$.

We write
\begin{align}\label{errorCont}
\Box_{g_{\mathring{a},M}}\tilde{\psi}_{\tau} = \left(\Box_{g_{\mathring{a},M}} - \Box_{\tilde{g}_{\tau}}\right)\tilde{\psi}_{\tau},
\end{align}
and observe
\begin{align}
 r^{1+\delta}\left|\left(\Box_{g_{\mathring{a},M}} - \Box_{\tilde{g}_{\tau}}\right)\tilde{\psi}_{\tau}\right|^2 \leq B\left(\delta_0^{-1}\right)\left|a-\mathring{a}\right|^2r^{-2}\sum_{1 \leq i_1+i_2+i_3\leq 2}\left|\nabb^{i_1}T^{i_2}(\tilde Z^*)^{i_3}\tilde{\psi}_{\tau}\right|^2.
\end{align}
In this statement, and in what follows, metric defined quantities (such as $\nabb$ and $\mathbf{J}^N_{\mu}[\psi]n^{\mu}$) will refer to $g_{\mathring{a},M}$. Now we apply the $g_{\mathring{a},M}$ integrated local energy estimate to $\tilde{\psi}$.

Keeping in mind that~(\ref{errorCont}) is supported in the past of $\Sigma_{\tau}$,
Proposition~\ref{closedILEDinhomo} implies
\begin{align}\label{ILEDinterpolate}
&\int_0^{\tau-\delta_0}\int_{\Sigma_s\cap \left\{r \leq M\left(1+\sqrt{2}\right)
\right\}}\left(\mathbf{J}^N_{\mu}[\psi]n^{\mu}_{\Sigma_s} + \left|\psi\right|^2\right)\, ds  \\
\nonumber&\qquad\leq B(\delta_0,m)\left|a-\mathring{a}\right|\int_0^{\tau}\int_{\Sigma_s}\sum_{1 \leq i_1+i_2+i_3\leq 2}r^{-2}\left|\nabb^{i_1}T^{i_2}(\tilde Z^*)^{i_3}\tilde{\psi}_{\tau}\right|^2\, ds
\\ \nonumber &\qquad +B\left(\delta_0,m\right)\left|a-\mathring{a}\right|\int_{\tau}^{\infty}\int_{\Sigma_s\cap [r_+,\left(1+\sqrt{2}\right)M]}\left[\left|T\tilde\psi_{\tau}\right|^2 + \left|\tilde\psi_{\tau}\right|^2\right]+ B(m)\int_{\Sigma_0}\left[\mathbf{J}^N_{\mu}[\psi]n^{\mu}_{\Sigma_0} + \left|\psi\right|^2\right].
\end{align}
For $\delta_0$ sufficiently small (and then fixing the value of $\delta_0$), finite in time energy estimates (and an easy domain of dependence argument) imply
\begin{align}\label{moveDelta}
\int_0^{\tau}\int_{\Sigma_s\cap \left\{r \leq \frac{M\left(3+\sqrt{2}\right)}{2}\right\}}\left(\mathbf{J}^N_{\mu}[\psi]n^{\mu}_{\Sigma_s} + \left|\psi\right|^2\right)\, ds \leq
B\int_0^{\tau-\delta_0}\int_{\Sigma_s\cap \left\{r \leq M\left(1+\sqrt{2}\right)\right\}}\left(\mathbf{J}^N_{\mu}[\psi]n^{\mu}_{\Sigma_s} + \left|\psi\right|^2\right)\, ds.
\end{align}
Since $\tilde{\psi}_{\tau}$ is equal to $\psi$ along $\Sigma_{\tau-\delta_0}$, finite in time energy estimates for $\Box_{\tilde{g}_{\tau}}$ imply
\begin{align}\label{compareDelta}
\int_{\tau-\delta_0}^{\tau}\int_{\Sigma_s}\sum_{1 \leq i_1+i_2+i_3\leq 2}r^{-2}\left|\nabb^{i_1}T^{i_2}(\tilde Z^*)^{i_3}\tilde{\psi}_{\tau}\right|^2\, ds \leq B\int_{\Sigma_{\tau-\delta_0}}\sum_{1 \leq i_1+i_2+i_3\leq 2}\left|\nabb^{i_1}T^{i_2}(\tilde Z^*)^{i_3}\psi\right|^2.
\end{align}
Similarly,
\begin{align}\label{compareDelta2}
\int_{\tau-\delta_0}^{\tau}\int_{\Sigma_s}\sum_{1 \leq i_1+i_2+i_3\leq 2}r^{-2}\left|\nabb^{i_1}T^{i_2}(\tilde Z^*)^{i_3}\psi\right|^2\, ds \leq B\int_{\Sigma_{\tau-\delta_0}}\sum_{1 \leq i_1+i_2+i_3\leq 2}\left|\nabb^{i_1}T^{i_2}(\tilde Z^*)^{i_3}\psi\right|^2.
\end{align}
Finally, Proposition~\ref{h.o.s.suff}, the fact that $\tilde\psi_{\tau}$ is future integrable, and finite in time energy inequalities imply
\begin{align}\label{compareDelta3}
\int_{\tau}^{\infty}\int_{\Sigma_s\cap [r_+,\left(1+\sqrt{2}\right)M]}\left[\left|T\tilde\psi_{\tau}\right|^2 + \left|\tilde\psi_{\tau}\right|^2\right] \leq B\int_{\Sigma_{\tau-\delta_0}}\sum_{1 \leq i_1+i_2+i_3\leq 2}\left|\nabb^{i_1}T^{i_2}(\tilde Z^*)^{i_3}\psi\right|^2.
\end{align}
Combining~(\ref{ILEDinterpolate}),~(\ref{moveDelta}),~(\ref{compareDelta}),~(\ref{compareDelta2}) and~(\ref{compareDelta3}) gives
\begin{align}\label{continuityEst}
&\int_0^{\tau}\int_{\Sigma_s\cap \left\{r \leq \frac{M\left(3+\sqrt{2}\right)}{2}
\right\}}\left(\mathbf{J}^N_{\mu}[\psi]n^{\mu}_{\Sigma_s} + \left|\psi\right|^2\right)\, ds  \\ \nonumber\le & B(m)\left|a-\mathring{a}\right|\left(\int_0^{\tau}\int_{\Sigma_s}\sum_{1 \leq i_1+i_2+i_3\leq 2}r^{-2}\left|\nabb^{i_1}T^{i_2}(\tilde Z^*)^{i_3}\psi\right|^2\, ds + \int_{\Sigma_{\tau-\delta_0}}\sum_{1 \leq i_1+i_2+i_3\leq 2}\left|\nabb^{i_1}T^{i_2}(\tilde Z^*)^{i_3}\tilde{\psi}_{\tau}\right|^2\right)
\\ \nonumber &\qquad+B(m)\int_{\Sigma_0}\left[\mathbf{J}^N_{\mu}[\psi]n^{\mu}_{\Sigma_0} + \left|\psi\right|^2\right].
\end{align}
Now combine Corollary~\ref{combineLemmas} and~(\ref{continuityEst}):
\begin{align*}
&\sup_{\tau' \leq \tau}\int_{\Sigma_{\tau'}}\sum_{1\le i_1+i_2+i_3\le j}
|\nabb^{i_1}T^{i_2}(\tilde Z^*)^{i_3}\psi|^2\\
&+\int_0^{\tau}\int_{\Sigma_{\tau}}\Big(\sum_{1\le i_1+i_2+i_3\le j-1}
|\nabb^{i_1}T^{i_2}(\tilde Z^*)^{i_3}\psi|^2 + \sum_{1\le i_1+i_2+i_3\le j-1}
|\nabb^{i_1}T^{i_2}(\tilde Z^*)^{i_3+1}\psi|^2
\\ &\nonumber\qquad\qquad\qquad + r^{-2-\delta}\left|\psi\right|^2 + 1_{\left[r_+,\left(1+\sqrt{2}\right)M\right]}\sum_{1\le i_1+i_2+i_3\le j}
|\nabb^{i_1}T^{i_2}(\tilde Z^*)^{i_3}\psi|^2\Big)r^{-1-\delta}  \\
\le &B(\de,j,m)\left(\int_0^{\tau}\int_{\Sigma_{s} \cap \left\{r \leq \frac{M\left(3+\sqrt{2}\right)}{2}\right\}}\left|\psi\right|^2\, ds + \int_{\Sigma_0}\sum_{1\le i_1+i_2+i_3\le j}
|\nabb^{i_1}T^{i_2}(\tilde Z^*)^{i_3}\psi|^2\right)  \\
\le & B(\de,j,m)\left|a-\mathring{a}\right|\left(\int_0^{\tau}\int_{\Sigma_s}r^{-2}\sum_{1\le i_1+i_2+i_3\le 2}
|\nabb^{i_1}T^{i_2}(\tilde Z^*)^{i_3}\psi|^2\, ds + \int_{\Sigma_{\tau-\delta_0}}\sum_{1\le i_1+i_2+i_3\le 2}
|\nabb^{i_1}T^{i_2}(\tilde Z^*)^{i_3}\psi|^2\right)
\\ \nonumber &\qquad+B(\de,j,m)\int_{\Sigma_0}\sum_{0\le i_1+i_2+i_3\le j}
|\nabb^{i_1}T^{i_2}(\tilde Z^*)^{i_3}\psi|^2.
\end{align*}
As long as $j \geq 3$, we may take $\epsilon$ in $(\ref{1stcondonep})$ sufficiently small, absorb the $\left|a-\mathring{a}\right|$ term on the left hand side and conclude
\[\sup_{\tau' \leq \tau}\int_{\Sigma_{\tau'}}\sum_{1\le i_1+i_2+i_3\le j}
|\nabb^{i_1}T^{i_2}(\tilde Z^*)^{i_3}\psi|^2 \leq B(j,m)\int_{\Sigma_0}\sum_{0\le i_1+i_2+i_3\le j}
|\nabb^{i_1}T^{i_2}(\tilde Z^*)^{i_3}\psi|^2 < \infty\quad \forall\ j \geq 3.\]
Lastly, we observe that the final inequality clearly remains true if we define $\nabb^{i_1}$ with respect to $g_{a,M}$ instead of $g_{\mathring{a},M}$.
\end{proof}

\subsection{Closedness}
\label{closednesssec}
To finish the proof, it remains to show
\begin{proposition}\label{closed}The set $\mathcal{A}_m$ is closed in $[0,M)$. That is, suppose we have a sequence $\{a_k\}_{k=1}^{\infty}$ with $a_k \in \mathcal{A}_m$ and $a_k \to a \in (-M,M)$. Then $a \in \mathcal{A}_m$.
\end{proposition}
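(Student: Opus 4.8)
The plan is as follows. Since $\mathcal{A}_m\subset[0,M)$ the sequence satisfies $a_k\ge 0$, hence $a\in[0,M)$, and we may freely use all results which presuppose the reduction $a\ge 0$ of Section~\ref{signOfa}. Fix $a_0$ with $a<a_0<M$, and relabel so that $a_k\le a_0$ for all $k$; then all constants appearing below --- in particular those of Proposition~\ref{closedILED}, Lemma~\ref{higherPrepLemm} and Corollary~\ref{combineLemmas} --- are uniform in $k$ by our conventions of Section~\ref{genconstsec}. By Corollary~\ref{boundEquivSuff}, in order to conclude $a\in\mathcal{A}_m$ it suffices to verify~(\ref{aHigherBound}) for an arbitrary solution $\psi$ of $\Box_{g_{a,M}}\psi=0$ on $\mathcal{R}_0$ which is as in the reduction of Section~\ref{WPosed} (smooth, compactly supported data on $\Sigma_0$) and is supported on the single azimuthal frequency $m$. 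For each $k$, let $\psi_k$ be the solution of $\Box_{g_{a_k,M}}\psi_k=0$ with the \emph{same} Cauchy data on $\Sigma_0$; since $\Phi$ is Killing for every $g_{a',M}$, each $\psi_k$ is again supported on the azimuthal frequency $m$, and since $a_k\in\mathcal{A}_m$, Corollary~\ref{boundEquivSuff} shows that $\psi_k$ is future-integrable with respect to $g_{a_k,M}$.

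First I would establish a bound for $\int_{\Sigma_\tau}\sum_{1\le i_1+i_2+i_3\le j}|\nabb^{i_1}T^{i_2}(\tilde Z^*)^{i_3}\psi_k|^2$ which is uniform in both $\tau\ge 0$ and $k$. The point is that Lemma~\ref{higherPrepLemm} (equivalently Corollary~\ref{combineLemmas}) does \emph{not} presuppose future-integrability: it gives, with $R_0:=\tfrac{M(3+\sqrt 2)}{2}$,
\[
\int_{\Sigma_\tau}\sum_{1\le i_1+i_2+i_3\le j}|\nabb^{i_1}T^{i_2}(\tilde Z^*)^{i_3}\psi_k|^2 \le B(j,m)\Big(\int_0^\tau\!\!\int_{\Sigma_s\cap\{r\le R_0\}}|\psi_k|^2\,ds + \int_{\Sigma_0}\sum_{1\le i_1+i_2+i_3\le j}|\nabb^{i_1}T^{i_2}(\tilde Z^*)^{i_3}\psi_k|^2\Big).
\]
For the spacetime term one uses that $\psi_k$ \emph{is} future-integrable: fixing any $\delta>0$, Proposition~\ref{closedILED} bounds $\int_0^\infty\int_{\Sigma_s}r^{-3-\delta}|\psi_k|^2\,ds$ by $B(\delta)\int_{\Sigma_0}\mathbf{J}^N_\mu[\psi_k]n^\mu_{\Sigma_0}$, and since $r^{-3-\delta}$ is bounded below by a positive constant on $\{r\le R_0\}$ this controls $\int_0^\tau\int_{\Sigma_s\cap\{r\le R_0\}}|\psi_k|^2\,ds$ for all $\tau\ge 0$. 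The remaining two quantities on $\Sigma_0$ are built from the fixed common Cauchy data on $\Sigma_0$ together with finitely many spatial derivatives thereof and a finite number of transversal ($T$-)derivatives obtained from the equation $\Box_{g_{a_k,M}}\psi_k=0$; their coefficients depend smoothly on $a_k$, so since $a_k\to a$ they converge and are in particular bounded uniformly in $k$. Combining these, one obtains constants $G_j<\infty$, depending only on $a_0,M,m,j$ and the data, such that $\sup_{\tau\ge 0}\int_{\Sigma_\tau}\sum_{1\le i_1+i_2+i_3\le j}|\nabb^{i_1}T^{i_2}(\tilde Z^*)^{i_3}\psi_k|^2\le G_j$ for all $k$ and all $j\ge 1$.

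Finally I would pass to the limit. Fixing $\tau\ge 0$ and $j\ge 1$, Lemma~\ref{lemmaSmoothDependence}, applied with $a_\infty=a$ and $\psi_\infty=\psi$, gives
\[
\int_{\Sigma_\tau}\sum_{1\le i_1+i_2+i_3\le j}|\nabb^{i_1}T^{i_2}(\tilde Z^*)^{i_3}\psi|^2 = \lim_{k\to\infty}\int_{\Sigma_\tau}\sum_{1\le i_1+i_2+i_3\le j}|\nabb^{i_1}T^{i_2}(\tilde Z^*)^{i_3}\psi_k|^2 \le G_j,
\]
and taking the supremum over $\tau\ge 0$ yields~(\ref{aHigherBound}) for $g=g_{a,M}$; hence $a\in\mathcal{A}_m$. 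Together with non-emptiness and Proposition~\ref{open} (openness), this gives $\mathcal{A}_m=[0,M)$, proving Proposition~\ref{allIntegrablem} and thus Proposition~\ref{allIntegrable}. The one point requiring care --- and the reason the quantitative estimates of the paper are needed here rather than a soft limiting argument --- is that $a_k\in\mathcal{A}_m$ by itself yields only a finite but \emph{a priori non-uniform in $k$} bound on the $\Sigma_\tau$-energies of $\psi_k$; the uniform bound $G_j$ must be re-derived from Lemma~\ref{higherPrepLemm} (which avoids future-integrability) fed by the future-integrable estimate of Proposition~\ref{closedILED}, so that its right-hand side involves only the \emph{fixed} Cauchy data and converges as $a_k\to a$.
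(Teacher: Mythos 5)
Your argument is correct and is essentially the paper's own proof: define $\psi_k$ with the same Cauchy data, obtain a $\tau$- and $k$-uniform bound on the higher order $\Sigma_\tau$-energies of $\psi_k$ by exploiting the future-integrability of each $\psi_k$ together with the fixed-$m$ disjointness of trapping and the ergoregion, and pass to the limit via Lemma~\ref{lemmaSmoothDependence}. The only organizational difference is that you feed the basic ($j=1$) integrated decay of Proposition~\ref{closedILED} into Lemma~\ref{higherPrepLemm} as a black box, whereas the paper re-derives the uniform bound directly from the fixed-$m$ higher-order integrated decay via an $N$-energy estimate, $T$- and $Y$-commutation, and elliptic estimates; both routes rest on the same ingredients, so this is a cosmetic rather than a substantive difference.
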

\begin{proof}

Let $\psi$ be a solution the wave equation $\Box_{g_{a,M}}\psi = 0$ arising from smooth, compactly supported initial data which is furthermore supported on a fixed azimuthal frequency $m$.

We may choose $a_0<M$ such that $|a|<a_0$ and without loss of generality, we may
assume that $|a_k|\le a_0$ for all $k$.

We define a sequence of functions $\psi_k$ by solving $\Box_{g_{a_k,M}}\psi_k = 0$ with the same initial data as $\psi$. Using the future-integrability of $\psi_k$, for every $j \geq 1$ we will have
\begin{align}\label{fromFutureInt}
&\int_0^{\infty}\int_{\Sigma_s}
r^{-1-\delta}(1-\eta_{[\left(1+\sqrt{2}\right)M,3M+s^+]})(1-3M/r)^2\sum_{1\le i_1+i_2+i_3\le j}
|\nabb^{i_1}T^{i_2}(\tilde Z^*)^{i_3}\psi_k|^2\\
\nonumber
&\qquad\qquad+r^{-1-\delta}\sum_{1\le i_1+i_2+i_3\le j-1}
\left(|\nabb^{i_1}T^{i_2}(\tilde Z^*)^{i_3+1}\psi_k|^2+|\nabb^{i_1}T^{i_2}(Z^*)^{i_3}\psi_k|^2\right)\\
\nonumber& \, \, \le B(\de,j,m)\int_{\Sigma_0} \sum_{0\le i\le j-1}\mathbf{J}^N_{\mu}[N^i\psi_k]n^{\mu}_{\Sigma_0}.
\end{align}
Now, using the fact that the region $\{r \leq M\left(1+\sqrt{2}\right)\}$ contains
the ergoregion $\mathcal{S}$,
combining~(\ref{fromFutureInt}) and an $N$-based energy estimate yields
\begin{align}\label{futIntEnergyBound}
\sup_{\tau \geq 0}\int_{\Sigma_{\tau}}\mathbf{J}^N_{\mu}[\psi_k]n^{\mu}_{\Sigma_{\tau}} &\leq B(m)\int_{\Sigma_0}\mathbf{J}^N_{\mu}[\psi_k]n^{\mu}_{\Sigma_0} + B(m)\int_0^{\infty}\int_{\Sigma_s\cap\{r \leq M\left(1+\sqrt{2}\right)\}}\mathbf{J}^N_{\mu}[\psi_k]n^{\mu}_{\Sigma_s}
\\ \nonumber &\leq B(m)\int_{\Sigma_0}\mathbf{J}^N_{\mu}[\psi_k]n^{\mu}_{\Sigma_0}.
\end{align}
It remains to upgrade~(\ref{futIntEnergyBound}) to its higher order version in the (by now) standard fashion. First we commute with $T$ and obtain
\begin{align}\label{futIntEnergyBoundT}
\sup_{\tau \geq 0}\int_{\Sigma_{\tau}}\mathbf{J}^N_{\mu}[T\psi_k]n^{\mu}_{\Sigma_{\tau}} &\leq B(m)\int_{\Sigma_0}\mathbf{J}^N_{\mu}[T\psi_k]n^{\mu}_{\Sigma_0}.
\end{align}
Next, we commute with $Y$ and follow the same argument as in the proof of Lemma~\ref{higherPrepLemm}. We obtain
\begin{align}\label{futIntEnergyBoundY}
&\sup_{\tau \geq 0}\int_{\Sigma_{\tau}}\mathbf{J}^N_{\mu}[Y\psi_k]n^{\mu}_{\Sigma_{\tau}} \leq B(m)\int_{\Sigma_0}\left(\mathbf{J}^N_{\mu}[\psi_k]n^{\mu}_{\Sigma_0} + \mathbf{J}^N_{\mu}[T\psi_k]n^{\mu}_{\Sigma_0} + \mathbf{J}^N_{\mu}[Y\psi_k]n^{\mu}_{\Sigma_0}\right).
\end{align}
Just as in the proof of Lemma~\ref{higherPrepLemm}, elliptic estimates imply
\begin{align*}
&\sup_{\tau \geq 0}\int_{\Sigma_{\tau}}\sum_{1\le i_1+i_2+i_3\le 2}
|\nabb^{i_1}T^{i_2}(\tilde Z^*)^{i_3}\psi_k|^2\leq B(m)\int_{\Sigma_0}\left(\mathbf{J}^N_{\mu}[\psi_k]n^{\mu}_{\Sigma_0} + \mathbf{J}^N_{\mu}[T\psi_k]n^{\mu}_{\Sigma_0} + \mathbf{J}^N_{\mu}[Y\psi_k]n^{\mu}_{\Sigma_0}\right).
\end{align*}
Finally, an easy induction argument will imply
\begin{align*}
&\sup_{\tau \geq 0}\int_{\Sigma_{\tau}}\sum_{1\le i_1+i_2+i_3\le j}
|\nabb^{i_1}T^{i_2}(\tilde Z^*)^{i_3}\psi_k|^2 \leq B(j,m)\int_{\Sigma_0}\sum_{1\le i_1+i_2+i_3\le j}
|\nabb^{i_1}T^{i_2}(\tilde Z^*)^{i_3}\psi_k|^2.
\end{align*}
Then, we conclude the proof by observing
\begin{align*}
\int_{\Sigma_{\tau}}\sum_{1\le i_1+i_2+i_3\le j}
|\nabb^{i_1}T^{i_2}(\tilde Z^*)^{i_3}\psi|^2 &= \lim_{k\to\infty}\int_{\Sigma_{\tau}}\sum_{1\le i_1+i_2+i_3\le j}
|\nabb^{i_1}T^{i_2}(\tilde Z^*)^{i_3}\psi_k|^2
\\ \nonumber &\leq B(j,m)\int_{\Sigma_0}\sum_{1\le i_1+i_2+i_3\le j}
|\nabb^{i_1}T^{i_2}(\tilde Z^*)^{i_3}\psi|^2.
\end{align*}
The first equality uses the well-posedness of the wave equation, the smooth dependence of $g_{a,M}$ on $a$ (see Lemma~\ref{lemmaSmoothDependence}) and the fact that $\psi_k$ and $\psi$ have the same initial data along $\Sigma_0$.
\end{proof}

\section{The precise integrated local energy decay statement}\label{precise}
In this section we give will a more precise form of the integrated local energy decay statement. So as to produce a purely physical space estimate, we employed
in the proof of Proposition~\ref{closedILED} a physical space cutoff $\zeta$ (see~(\ref{degenerationfunc})) in the integrated energy decay statement~(\ref{protasn1b}). It is clear from the statement of Theorem~\ref{phaseSpaceILED} in Section~\ref{freqLocEst} that this throws away information (cf.~the discussion in Section~\ref{trappingParam}).

In order to succinctly state the microlocally precise form of integrated local energy decay, we introduce the following notation: For a sufficiently integrable function $\Psi$ on $\mathcal{R}$, we define
\begin{equation}
\label{Ptrapnot}
\mathcal{P}_{\rm trap}\left[\Psi\right] \doteq \frac{1}{\sqrt{2\pi}}\int_{-\infty}^\infty\sum_{m\ell} \left|\zeta-(1-\zeta)r^{-1}r_{\rm trap}\right|e^{-i\omega t} \Psi^{(a\omega)}_{m\ell}(r)S_{m\ell}(a\omega,\cos\theta)e^{im\phi} d\omega,
\end{equation}
where $r_{\rm trap}=r_{\rm trap}(\omega, m, \Lambda_{m\ell})$ is defined in Theorem~\ref{phaseSpaceILED}. Then, we have
\begin{bigprop}\label{preciseILED}
Let $0\le a_0<M$, $0\le a\le a_0$, and let
$\psi$ be a solution of~(\ref{WAVE}) on $\mathcal{R}_0$ as in the reduction of
Section~\ref{closedILED}. Then,
\begin{align}\label{notCrudeILED1}
b\int_0^{\infty}\int_{\Sigma_{\tau}}\left(\left|\tilde Z^*\psi\right|^2r^{-1-\delta} + \left|\psi\right|^2r^{-3-\delta} +
\left|T\mathcal{P}_{\rm trap}[\xi\psi]\right|^2r^{-1-\delta} + \left|\nabb\mathcal{P}_{\rm trap}[\xi\psi]\right|^2r^{-1}\right)\, dr^*\, d\omega &\leq B\int_{\Sigma_0}\mathbf{J}^N_{\mu}[\psi]n^{\mu}_{\Sigma_0}.
\end{align}
where $r_{\rm trap}=r_{\rm trap}(\omega, m, \Lambda_{m\ell})$ is
defined in Theorem~\ref{phaseSpaceILED} and $\xi$ is the cutoff from Section~\ref{summation}.
\end{bigprop}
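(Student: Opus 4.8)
The plan is to revisit the proof of Proposition~\ref{closedILED} and simply refrain from throwing away the microlocal information at the step where the bound $(\ref{insteadofthis})$ was applied. Recall that Theorem~\ref{phaseSpaceILED} gives, for each admissible frequency triple, the estimate $(\ref{fromPhaseSpace2})$ with the genuine weight $\left(1-r^{-1}r_{\rm trap}\right)^2(\omega^2+\Lambda)|u|^2$ on the left hand side, rather than the cruder $\zeta(\omega^2+\Lambda)|u|^2$. Thus, instead of passing to $(\ref{fromPhaseSpace})$ via $b\zeta \le (1-r^{-1}r_{\rm trap})^2$, I would integrate $(\ref{fromPhaseSpace2})$ in $\omega$ and sum in $m,\ell$ keeping the weight $\left(1-r^{-1}r_{\rm trap}\right)^2$ intact. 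Applying Plancherel (the formulas of Section~\ref{separationSubsection}), the weighted combination $\sum_{m\ell}\left(1-r^{-1}r_{\rm trap}(\omega,m,\Lambda_{m\ell})\right)^2\omega^2|u|^2$ is, up to the reconstruction formula, exactly $\left|T\mathcal{P}_{\rm trap}[\xi\psi]\right|^2$ times appropriate $r$-weights (modulo the $(r^2+a^2)^{1/2}$ rescaling relating $u$ to $\Psi$), and similarly $\sum_{m\ell}\left(1-r^{-1}r_{\rm trap}\right)^2\Lambda_{m\ell}|u|^2$ reconstructs $\left|\nabb\mathcal{P}_{\rm trap}[\xi\psi]\right|^2$ (using the angular Plancherel identity at the end of Section~\ref{separationSubsection} that expresses $\lambda_{m\ell}^{(a\omega)}|\Psi_{m\ell}^{(a\omega)}|^2$ in terms of angular derivatives plus a lower-order $a^2\cos^2\theta|\partial_t\Psi|^2$ term). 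The precise choice of the combining weight $\left|\zeta-(1-\zeta)r^{-1}r_{\rm trap}\right|$ in the definition $(\ref{Ptrapnot})$ is dictated by the fact that $r_{\rm trap}$ is supported (for fixed parameters) in the trapping interval $[3M-s^-+\varepsilon, 3M+s^+-\varepsilon]$ where $\zeta$ is comparable to the pre-trapping degeneracy, and $\zeta$ itself captures the degeneracy away from trapping, so the two pieces do not overlap and the square of this weight dominates $\left(1-r^{-1}r_{\rm trap}\right)^2$ times the cruder physical-space factor pointwise in frequency.

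The remaining steps then proceed exactly as in Sections~\ref{addInTheRedShift}--\ref{whitinghere}: add a small multiple of the red-shift estimate of Proposition~\ref{ftrs} to capture the horizon flux, add a small multiple of the large-$r$ estimate of Proposition~\ref{lrp} after the dyadic pigeonhole argument $(\ref{dyadicEnergyDecay})$--$(\ref{conseqDyadicEnergyDecay})$ to supply the $\tilde Z^*$ and $\psi$ weights near infinity, control the cutoff error terms as in Sections~\ref{boundedRError}--\ref{largeRkoppa2} (these are untouched by the refinement since they live in the right-hand side and are handled purely in physical space), and finally absorb the non-stationary bounded-frequency horizon term using Proposition~\ref{propShlapRot}. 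None of these steps sees the $\zeta$-vs-$\left(1-r^{-1}r_{\rm trap}\right)^2$ distinction because they all concern the non-degenerate quantities $|\tilde Z^*\psi|^2$, $|\psi|^2$, or genuinely physical-space error terms. The only genuine modification is in the summation step of Section~\ref{cutoffSec}. Since by Proposition~\ref{allIntegrable} every solution arising from smooth compactly supported data is future-integrable, the hypothesis of Proposition~\ref{closedILED} is unconditionally satisfied, so $(\ref{notCrudeILED1})$ holds for all such $\psi$, which by density gives the stated proposition.

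The main obstacle I anticipate is purely bookkeeping: verifying that the Plancherel identity genuinely reconstructs $\mathcal{P}_{\rm trap}$ as defined in $(\ref{Ptrapnot})$ — in particular reconciling the two weights $\left|\zeta-(1-\zeta)r^{-1}r_{\rm trap}\right|$ (used to define $\mathcal{P}_{\rm trap}$) and $\left(1-r^{-1}r_{\rm trap}\right)$ (appearing in Theorem~\ref{phaseSpaceILED}). Concretely one must check that for all admissible $(\omega,m,\Lambda)$ and all $r$,
\[
b\left(\zeta(r)-(1-\zeta(r))r^{-1}r_{\rm trap}\right)^2 \le \left(1-r^{-1}r_{\rm trap}\right)^2,
\]
which follows from the disjointness of $\mathrm{supp}(r_{\rm trap})$ and $\{\zeta \ne (1-3M/r)^2\}$ established in Section~\ref{trappingParam}, together with the inequality $(\ref{insteadofthis})$ applied off the trapping region and the uniform bound $b < |r_{\rm trap}-r_+| < B$ on the trapping region. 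The extra lower-order term $a^2\int\cos^2\theta|\partial_t\mathcal{P}_{\rm trap}[\xi\psi]|^2$ coming from the angular Plancherel identity is harmless: it is controlled by $\int|T\psi|^2$-type quantities that are already on the left-hand side after choosing constants appropriately, or absorbed into the $r^{-1-\delta}|\tilde Z^*\psi|^2$ and large-$r$ terms. Once this elementary frequency-localised weight comparison is checked, the proposition follows from the verbatim repetition of the argument of Section~\ref{summation} with $(\ref{insteadofthis})$ deleted.
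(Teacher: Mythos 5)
Your overall approach---revisit the proof of Proposition~\ref{closedILED}, retain the precise microlocal weight $(1-r^{-1}r_{\rm trap})^2(\omega^2+\Lambda)$ from the left-hand side of $(\ref{fromPhaseSpace2})$ instead of passing to $\zeta$ via $(\ref{insteadofthis})$, and leave the remaining steps of Section~\ref{summation} untouched---is exactly the paper's proof, which is stated there in a single sentence.

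However, your handling of what you call the ``main obstacle'' is wrong: the inequality
\[
b\left(\zeta(r)-(1-\zeta(r))r^{-1}r_{\rm trap}\right)^2\le\left(1-r^{-1}r_{\rm trap}\right)^2
\]
is false. Take any admissible triple with $r_{\rm trap}\ne 0$ and evaluate at $r=r_{\rm trap}$. By Section~\ref{trappingParam}, $r_{\rm trap}$ lies in the \emph{interior} of the trapping interval $[3M-s^-,3M+s^+]$, so $\zeta(r_{\rm trap})=0$, and the left side is $b(0-1)^2=b$ while the right side is $(1-1)^2=0$. Your appeal to ``the disjointness of $\mathrm{supp}(r_{\rm trap})$ and $\{\zeta\ne(1-3M/r)^2\}$'' is backwards: $\{\zeta\ne(1-3M/r)^2\}$ is precisely the trapping interval, and the range of $r_{\rm trap}$ is by construction \emph{contained in} it, not disjoint from it. What your comparison actually exposes is that the weight $|\zeta-(1-\zeta)r^{-1}r_{\rm trap}|$ does not vanish at $r=r_{\rm trap}$ (it equals $1$ there), so it cannot be dominated by the weight of Theorem~\ref{phaseSpaceILED}, which does vanish there. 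The weight Plancherel genuinely reconstructs from $(\ref{fromPhaseSpace2})$ is $|1-r^{-1}r_{\rm trap}|$, for which the comparison is an identity; this points to a sign/typographical issue in the printed $(\ref{Ptrapnot})$---a form such as $|1-(1-\zeta)r^{-1}r_{\rm trap}|$ does satisfy your inequality (it equals $|1-r^{-1}r_{\rm trap}|$ on the trapping interval where $\zeta=0$, and is trivially controlled off it since there $(1-r^{-1}r_{\rm trap})^2\ge b$). You should have detected that the comparison fails as written rather than asserting it; the paper's own proof never attempts any such comparison, but simply keeps the microlocal term.
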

\begin{proof}One revisits the proof of Proposition~\ref{closedILED} and
simply retains the
nonnegative term on the left hand side of  $(\ref{notCrudeILED1})$
instead of applying the physical space $\zeta$ and the inequality $(\ref{insteadofthis})$ .
\end{proof}

\section{Energy boundedness}\label{boundSuff}

In this section, we establish the uniform boundedness
of the energy flux through $\Sigma_\tau$ for solutions $\psi$ to the wave equation~(\ref{WAVE}):
\begin{bigprop}\label{sufficientIntBound}
Let $0\le a_0<M$, $|a|\le a_0$ and
let $\psi$ be a solution of the wave equation~(\ref{WAVE}) on $\mathcal{R}_0$
as in the reduction of Section~\ref{WPosed}. Then
\[\int_{\Sigma_{\tau}}\mathbf{J}^N_{\mu}[\psi]n^{\mu}_{\Sigma_{\tau}} \leq B\int_{\Sigma_0}\mathbf{J}^N_{\mu}[\psi]n^{\mu}_{\Sigma_0}\quad \forall\ \tau \geq 0.\]
\end{bigprop}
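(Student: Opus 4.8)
The strategy, as indicated in Section~\ref{boundednessintro}, is to partition $\psi$ via the frequency analysis into finitely many pieces, each satisfying an integrated local energy decay estimate that degenerates only near a single $r$-value $r_i$, and then to apply to each piece the energy estimate associated to a combination $T + \alpha_i \Phi$ of the Killing fields which is \emph{timelike} in a neighbourhood of $r_i$ (this is possible by the span fact (c) of Section~\ref{HOEhere}, i.e.~Lemma~\ref{timelikeVector} and Lemma~\ref{horizonTimelike}, together with the smooth interpolation of Definition~\ref{aVectorField}). Since $\psi$ may now be assumed future-integrable by Proposition~\ref{allIntegrable}, we may apply Carter's separation to $\psi_{\text{\Rightscissors}} = \xi\psi$ and invoke the microlocally precise estimate of Proposition~\ref{preciseILED}; the point of using the precise version rather than the physical-space $\zeta$-weighted estimate~(\ref{protasn1b}) is exactly that the degeneration set $r_{\rm trap}(\omega,m,\Lambda)$ moves with the frequency triple, so no single physical-space vector field can be timelike on the whole of $\{\zeta = 0\}$.

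\textbf{Key steps, in order.} First, reduce (by Proposition~\ref{allIntegrable} and the density/reduction of Section~\ref{WPosed}) to $\psi$ future-integrable arising from smooth compactly supported data, so that $\psi_{\text{\Rightscissors}}$ is sufficiently integrable and outgoing and Carter's separation applies; also recall from Section~\ref{trappingParam} that $r_{\rm trap} = 0$ or $r_{\rm trap} \in [3M - s_-, 3M + s_+]$, a fixed compact $r$-interval. Second, choose a finite cover of $[3M - s_-, 3M + s_+]$ by small intervals $I_i = (r_i - \mu, r_i + \mu)$, $i = 1,\dots,K$, with $\mu$ small enough that on a neighbourhood of each $\overline{I_i}$ there is a function $\alpha_i(r)$ with $V_i \doteq T + \alpha_i(r)\Phi$ timelike (this uses $r_i > 2M > r_+$ and Lemma~\ref{timelikeVector}); extend the cover by an interval near $r_+$ handled by the red-shift vector field $N$ and an interval near infinity handled by $T$ itself, both of which are timelike there. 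Third, introduce a frequency-space partition of unity: pick cutoffs $\{\varpi_i\}$ in the frequency parameters $(\omega, m, \Lambda)$ summing to $1$ such that on $\mathrm{supp}(\varpi_i)$ one has $r_{\rm trap} \in I_i$ (for the trapping range $\mathcal{G}_{\mbox{$\natural$}}$) and $\varpi_i \equiv 0$ off the trapping range for $i$ not corresponding to the $r_+$- or $\infty$-piece; define $\tilde\psi_i$ by applying $\varpi_i$ (via Plancherel) to $\psi_{\text{\Rightscissors}}$, so that $\psi_{\text{\Rightscissors}} = \sum_i \tilde\psi_i$ and each $\tilde\psi_i$ satisfies an integrated local energy decay estimate \emph{non-degenerate outside $I_i$} — this is precisely what Proposition~\ref{preciseILED} delivers once one localises in frequency, since $|\zeta - (1-\zeta)r^{-1}r_{\rm trap}|$ is bounded below away from $I_i$ on $\mathrm{supp}(\varpi_i)$. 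Fourth, for each $i$ apply the energy identity~(\ref{ingeneralform2}) for the timelike $\varphi_\tau$-invariant vector field $V_i$ (resp.~$N$, resp.~$T$) to $\tilde\psi_i$: the bulk term $\mathbf{K}^{V_i}[\tilde\psi_i]$ is supported where $\frac{d\alpha_i}{dr} \neq 0$, i.e.~away from $I_i$ and away from the horizon, hence is controlled by the non-degenerate part of the integrated decay estimate for $\tilde\psi_i$; the inhomogeneity $\mathcal{E}^{V_i}[\tilde\psi_i]$ coming from $\Box \tilde\psi_i$ (which equals the frequency-localisation of $F$ plus cutoff errors, compactly supported in spacetime) is handled by the inhomogeneous estimate of Proposition~\ref{closedILEDinhomo} and finite-in-time energy estimates exactly as in Section~\ref{summation}. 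Fifth, sum over the finitely many $i$ and use orthogonality (Plancherel) together with $\int_{\Sigma_\tau}\mathbf{J}^N_\mu[\psi_{\text{\Rightscissors}}] n^\mu \sim \sum_i \int_{\Sigma_\tau}\mathbf{J}^N_\mu[\tilde\psi_i]n^\mu$ up to constants, and finally remove the cutoff $\xi$ (using that $\psi_{\text{\Rightscissors}} = \psi$ for $\tau \geq 1$ and a finite-in-time energy estimate on $[0,1]$) to obtain $\int_{\Sigma_\tau}\mathbf{J}^N_\mu[\psi]n^\mu_{\Sigma_\tau} \leq B\int_{\Sigma_0}\mathbf{J}^N_\mu[\psi]n^\mu_{\Sigma_0}$.

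\textbf{Main obstacle.} The delicate point is the frequency-space partitioning and the bookkeeping of the associated commutator/error terms: the cutoffs $\varpi_i$ do \emph{not} commute with the wave operator, so each $\tilde\psi_i$ solves an inhomogeneous equation whose right-hand side is a Fourier multiplier applied to $\psi_{\text{\Rightscissors}}$, and one must check that, after localising, these inhomogeneities are still controlled — either by the integrated decay estimate for $\psi$ itself in a bounded-$r$ region (so that Proposition~\ref{closedILEDinhomo}, especially the versions~(\ref{inhomo2}) and~(\ref{inhomo4}) with compactly supported $F$, applies with the $\epsilon(1-\zeta)(|T\Psi|^2 + |\Phi\Psi|^2)$ error absorbable since the $\tilde\psi_i$-estimate is non-degenerate there) or, for the large-$r$ and near-horizon pieces, absorbed by the already-established energy fluxes to $\mathcal{I}^+$ and through $\mathcal{H}^+$. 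A secondary subtlety is that one must arrange the $\varpi_i$ so that every admissible triple with $r_{\rm trap} \neq 0$ has its $r_{\rm trap}$ genuinely inside some $I_i$ while keeping the number $K$ finite and uniform in $a \leq a_0$; this is guaranteed by the uniform bounds on $r_{\rm trap}$ from Theorem~\ref{phaseSpaceILED} and the definition of $s_\pm$ in Section~\ref{trappingParam}. Once these error terms are organised, the summation and the removal of $\xi$ are routine.
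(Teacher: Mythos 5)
Your high-level strategy is correct and matches the paper's: localise in the separated frequencies according to the value of $r_{\rm trap}$, apply to each piece an energy estimate associated to a $\varphi_\tau$-invariant timelike vector field of the form $T+\alpha(r)\Phi$ which is Killing near the corresponding trapped $r$-value, and glue in the red-shift and $T$-estimates near $r_+$ and $\infty$. However, the implementation you propose has several gaps, and the paper's actual argument makes choices that you have not reproduced and that are essential.

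The most important difference: the paper does \emph{not} apply the frequency decomposition to $\psi_{\text{\Rightscissors}}=\xi\psi$. Instead it first extends the future-integrable solution $\psi$ from $\mathcal{R}_0$ to all of $\mathcal{R}$ (so that $\psi$ is $L^2$ in $t$ over all of $\mathbb R$, to the past as well as to the future), and then applies a \emph{spatial} cutoff $\tilde\psi=\chi_{[A_0,A_1]}(r)\psi$. The compact $r$-support of $\tilde\psi$ has three consequences you lose by using the time cutoff: (i) $\tilde u(-\infty)=0$, so the problematic horizon boundary term $|u(-\infty)|^2$ disappears and no further appeal to quantitative mode stability is needed; (ii) the inhomogeneity $\tilde F$ is supported in a fixed compact $r$-region $[A_0,A_0+\delta]\cup[A_1-\delta,A_1]$ \emph{away from trapping}, and this $r$-support is preserved under the frequency cutoffs $1_{\mathcal{C}_i}$; (iii) the frequency-localised pieces $\tilde\psi_i$ and their inhomogeneities $\tilde F_i$ are still compactly $r$-supported, so the error terms can be handled by the already non-degenerate part of the integrated decay estimate. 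By contrast, your $F$ from the time cutoff is supported in a bounded time slab but over \emph{all} of $r$; after frequency localisation, $\varpi_i F$ is not compactly supported in either $t$ or $r$, so the versions~(\ref{inhomo2}) and~(\ref{inhomo4}) of Proposition~\ref{closedILEDinhomo} simply do not apply, contrary to what you claim.

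Second, you do not address the boundary term on the initial hypersurface. After frequency localisation the pieces $\tilde\psi_i$ do \emph{not} vanish to the past of $\Sigma_0$ — the cutoffs $\varpi_i$ (or $1_{\mathcal{C}_i}$) destroy the temporal support, and frequency localisation does not preserve fixed-$\tau$ energies, only the spacetime $L^2$ quantities via Plancherel. Thus you cannot bound $\int_{\Sigma_0}\mathbf{J}^{V_i}_\mu[\tilde\psi_i]n^\mu$ by $\int_{\Sigma_0}\mathbf{J}^N_\mu[\psi]n^\mu$. The paper deals with this via a pigeonhole argument (Proposition~\ref{energyVanish}): since $\int_{-\infty}^\infty\int_{\Sigma_s}\mathbf{J}^N_\mu[\tilde\psi_i]n^\mu<\infty$, there is a dyadic sequence $\tau_n^{(i)}\to-\infty$ along which the energy flux tends to zero; one applies the $V_i$-energy identity between $\Sigma_{\tau_n^{(i)}}$ and $\Sigma_\tau$ and lets $n\to\infty$. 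This step is entirely missing from your proposal, and without it the argument does not close.

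Finally, your ``Main obstacle'' paragraph identifies a non-issue: the cutoffs $\varpi_i$, as multiplication operators in the Carter-separated parameters $(\omega,m,\Lambda_{m\ell})$, \emph{do} commute with the radial o.d.e. — that is precisely what the separation gives you. So $\Box_{g_{a,M}}\tilde\psi_i$ is exactly the frequency localisation of $\Box_{g_{a,M}}\tilde\psi$, with no commutator error. The genuine obstacles are the ones above (absence of compact $r$-support and the initial-hypersurface boundary term), which your proposal overlooks.
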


First, recall that the arguments of Sections~\ref{summation} and~\ref{continuityargsec} have shown that $\psi$ is future-integrable and satisfies the integrated decay statements~(\ref{protasn1b}) and~(\ref{protasn1}).

Let $\delta > 0$ be a fixed small parameter, $A_0$ be sufficiently close to $r_+$ and $A_1$ be sufficiently large. The proof proceeds in three steps where the cases $r \in [A_0+\delta,A_1-\delta]$, $r \in [r_+,A_0+\delta]$ and $r \in [A_1-\delta,\infty)$ are each dealt with. As one expects, the first region is the most difficult.
\subsection{Boundedness of $\int_{\Sigma_{\tau}\cap [A_0+\delta,A_1-\delta]}\mathbf{J}^N_{\mu}[\psi]n^{\mu}_{\Sigma_{\tau}}$}
It turns out to be convenient to extend the solution to the entire domain of outer communication $\mathcal{R}$ from $\mathcal{R}_0$.
\subsubsection{Extending the solution}

The trace of $\psi$ and $N\psi$ along the hypersurface $\Sigma_0$ only suffice to determine $\psi$ in the future of $\Sigma_0$. However, an easy domain of dependence argument and finite in time energy estimates allow one to extend $\psi$ to the $\mathcal{R}$ in such a way as to guarantee
\begin{equation}\label{aRemark}
\int_{\hat{\Sigma}_0}\mathbf{J}^N_{\mu}[\psi]n^{\mu}_{\Sigma_{\tau}} \leq B\int_{\Sigma_0}\mathbf{J}^N_{\mu}[\psi]n^{\mu}_{\Sigma_0}.
\end{equation}
Here $\hat{\Sigma}_0$ denotes the image of $\Sigma_0$ under the Boyer-Lindquist coordinate defined map: $t\mapsto -t$.

\subsubsection{Integrated local energy decay for the extended solution}
Since the Boyer-Lindquist defined map $t\mapsto -t$ and $a\mapsto -a$ is an isometry, Proposition~\ref{closedILED} remains true if one goes to the past instead of the future, i.e. if we replace all integrals $\int_0^{\infty}$ with $\int_{-\infty}^0$, and replace $\Sigma_{\tau}$ with $\hat{\Sigma}_{\tau}$. Keeping~(\ref{aRemark}) in mind, we conclude
\begin{align}\label{ILEDextended}
&\int_{-\infty}^{\infty}\int_{\Sigma_{\tau}\cap [A_0,A_1]}\left(\left|\partial_{r^*}\psi\right|^2 + r^{-2}\left|\psi\right|^2 + \zeta\mathbf{J}^N_{\mu}[\psi]n^{\mu}_{\Sigma_{\tau}}\right)r^{-1-\delta}\, d\tau \leq B\left(A_0,A_1\right)\int_{\Sigma_0}\mathbf{J}^N_{\mu}[\psi]n^{\mu}_{\Sigma_0}.
\end{align}

Unfortunately, this version of integrated local energy decay is too crude for our purposes, and we shall need to appeal to the version~(\ref{notCrudeILED1}) of integrated local energy decay.

Let $\chi_{[A_0,A_1]}$ be a bump function which is identically $1$ when $r \in [A_0+\delta,A_1-\delta]$ and $0$ when $r \not\in [A_0,A_1]$. We define
\[\tilde\psi \doteq \chi_{[A_0,A_1]}\psi.\]
We will have
\[
\Box_{g_{a,M}}\tilde\psi = \tilde F \doteq \Box_g\chi_{[A_0,A_1]}\psi + 2\nabla^{\mu}\chi_{[A_0,A_1]}\nabla_{\mu}\psi.
\]
Observe that $\tilde F$ has compact support in $r$, and $\left|\tilde F\right|^2 \leq B\left(\left|\psi\right|^2 + \left|\partial_{r^*}\psi\right|^2\right)$. In particular, $\tilde\psi$ is
sufficiently integrable in the sense of Definition~\ref{sufficient} and outgoing
in the sense of Definition~\ref{sufficient2}. We also have
\begin{align}\label{tildeFEst}
\int_{-\infty}^{\infty}\int_{\Sigma_{\tau}}\left|\tilde F\right|^2 \leq B\int_{-\infty}^{\infty}\int_{\Sigma_{\tau}\cap \{[A_0,A_0+\delta]\cup[A_1-\delta,A_1]\}}\left(\left|\psi\right|^2 + \left|\partial_{r^*}\psi\right|^2\right) \leq B\int_{\Sigma_0}\mathbf{J}^N_{\mu}[\psi]n^{\mu}_{\Sigma_0}.
\end{align}

Now we apply Carter's separation as defined in Section~\ref{cartersupersection} and obtain
\[
\tilde{u}'' + \left(\omega^2 - V\right)\tilde{u} = \tilde{H}.
\]
Note that the compact $r$ support in $[A_0,A_1]$ of $\tilde{\psi}$ is inherited by $\tilde{u}$.
We apply  now Theorem~\ref{phaseSpaceILED}.
In view of the support of $\tilde u$, it follows that
the term $\left|\tilde u\left(-\infty\right)\right|^2$  vanishes.
Furthermore, the right hand sides of all the frequency localised multiplier estimates $(\ref{fromPhaseSpace2})$
are
$O\left(\tilde{H}\right)$, and hence are supported
in $[A_0,A_0+\delta]\cup[A_1-\delta,A_1]$. Consequently,
we can apply the (now trivial) arguments of Section~\ref{summation} to conclude the inhomogeneous version of~(\ref{notCrudeILED1})
\begin{align}\label{notCrudeILED}
b\int_{-\infty}^{\infty}\int_{\Sigma_{\tau}}\left(\left|\partial_{r^*}\tilde\psi\right|^2 + \left|\tilde\psi\right|^2 + \mathbf{J}^N_{\mu}[\mathcal{P}_{\rm trap}\tilde\psi]n^{\mu}_{\Sigma_{\tau}}\right)&\leq  \int_{-\infty}^{\infty}\int_{\Sigma_{\tau}\cap \{[A_0,A_0+\delta]\cup[A_1-\delta,A_1]\}}\left(\left|\psi\right|^2 + \left|\partial\psi\right|^2\right)
\\ \nonumber &\leq B\int_{\Sigma_0}\mathbf{J}^N_{\mu}[\psi]n^{\mu}_{\Sigma_0},
\end{align}
where we recall that $\mathcal{P}_{\rm trap}$ is defined by~(\ref{Ptrapnot}).

\subsubsection{A decomposition}
In order to work around the presence of $\mathcal{P}_{\rm trap}$ in~(\ref{notCrudeILED}), it will be useful to decompose $\tilde\psi$ is pieces, each of which experience trapping near a specific value of $r$. Recalling the definition of $r_{\rm trap}$ from Theorem~\ref{phaseSpaceILED}
we make the following definition.
\begin{definition}\label{decomp}Let $\epsilon > 0$ be a sufficiently small parameter to fixed later. We define
\[\mathcal{C}_0 := \left\{\left(\omega,m,\Lambda\right) : r_{trap}=0
\right\},\]
\[\mathcal{C}_i:= \left\{\left(\omega,m,\Lambda\right) : r_{trap} \in \left[3M - s^- + \left(i-1\right)\epsilon,3M-s^- +i\epsilon\right)\right\}\forall \quad i = 1,\ldots,\lceil \epsilon^{-1}\left(s^++s^-\right)\rceil.\]
\end{definition}

Observe that each value of $\left(\omega,m,\Lambda\right)$ lies in exactly one of the $\mathcal{C}_i$.

\begin{definition}We define $\tilde{\psi}_i$ by a phase space multiplication of $\tilde{\psi}$ by $1_{\mathcal{C}_i}$, the indicator function of $\mathcal{C}_i$:
\[\tilde{\psi}_i \doteq \frac{1}{\sqrt{2\pi}}\int_{-\infty}^\infty\sum_{m\ell} e^{-i\omega t} 1_{\mathcal{C}_i}\tilde{\psi}^{(a\omega)}_{m\ell}(r)S_{m\ell}(a\omega,\cos\theta)e^{im\phi} d\omega.\]
\end{definition}

Note that it immediately follows from Plancherel (see Section~\ref{separationSubsection}) that each $\tilde{\psi}_i$ is sufficiently integrable, and we have $\Box_{g_{a,M}}\tilde{\psi}_i = \tilde{F}_i$
where $\tilde{F}_i$ is defined in the same fashion as $\tilde{\psi}_i$.

It will be useful to observe the following.
\begin{proposition}\label{energyVanish}For each
$i = 0,\ldots,\lceil \epsilon^{-1}\left(s^++s^-\right)\rceil$
there exists a constant $C_i$ and a dyadic sequence $\{\tau^{(i)}_n\}_{n=1}^{\infty}$
such that $\tau^{(i)}_n \to -\infty$ as $n\to\infty$ and
\[\int_{\Sigma_{\tau^{(i)}_n}}\mathbf{J}^N_{\mu}\left[\tilde\psi_i\right]n^{\mu}_{\Sigma_{\tau^{(i)}_n}} \leq \frac{C_i}{\tau^{(i)}_n}.\]
\end{proposition}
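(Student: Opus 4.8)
The plan is to apply the phase-space machinery of Sections~\ref{freqLocEst}--\ref{summation} to each piece $\tilde\psi_i$ separately, working on the extended spacetime $\mathcal R$ and to the \emph{past} of $\Sigma_0$. First I record the structural features of $\tilde\psi_i$ that make this possible. Since $\tilde\psi=\chi_{[A_0,A_1]}\psi$ is supported in $\{A_0\le r\le A_1\}$ with $r_+<A_0$ and $A_1<\infty$, and since phase-space multiplication by $1_{\mathcal C_i}$ replaces, at each fixed $(\omega,m,\ell)$, the radial function $\tilde u^{(a\omega)}_{m\ell}$ either by itself or by $0$, each $\tilde\psi_i$ is again supported in $\{A_0\le r\le A_1\}$ (in particular it vanishes identically near $\mathcal H^+$ and near $\mathcal I^+$), its radial function $\tilde u_i$ is for each frequency a genuine solution of the radial o.d.e.~$(\ref{e3iswsntouu})$, and $\Box_{g_{a,M}}\tilde\psi_i=\tilde F_i$ with $\tilde F_i$ the corresponding phase-space piece of $\tilde F$, hence supported in $[A_0,A_0+\delta]\cup[A_1-\delta,A_1]$ and satisfying $\int_{-\infty}^{\infty}\int_{\Sigma_\tau}|\tilde F_i|^2\,d\tau\le\int_{-\infty}^{\infty}\int_{\Sigma_\tau}|\tilde F|^2\,d\tau\le B\int_{\Sigma_0}\mathbf J^N_\mu[\psi]n^\mu_{\Sigma_0}$, the last bound by Plancherel together with the past \emph{and} future integrated local energy decay for the extended $\psi$, exactly as in the passage leading to $(\ref{ILEDextended})$. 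Finally, by construction of $\mathcal C_i$, for $i\ge1$ every frequency contributing to $\tilde\psi_i$ has $r_{\rm trap}$ inside the window $W_i:=[3M-s^-+(i-1)\epsilon,\,3M-s^-+i\epsilon)$, while $r_{\rm trap}=0$ for $i=0$; we also take $\epsilon$ small enough that the closure of a fixed enlargement $W_i'$ of $W_i$ lies strictly between $r_+$ and $\infty$ and, since $W_i$ clusters around $r=3M$, outside the ergoregion.

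Next I would apply Theorem~\ref{phaseSpaceILED} to $\tilde u_i$. The boundary term $|\tilde u_i(-\infty)|^2$ on the right-hand side of $(\ref{fromPhaseSpace2})$ vanishes because $\tilde u_i\equiv0$ near $r^*=-\infty$, and all remaining right-hand side terms are $O(\tilde F_i)$, supported in $[A_0,A_0+\delta]\cup[A_1-\delta,A_1]$. Summing over $m,\ell$, integrating in $\omega$, applying Plancherel and running the (now essentially trivial) absorption argument of Section~\ref{summation} to the past of $\Sigma_0$ — using the past analogue of Propositions~\ref{closedILED} and~\ref{closedILEDinhomo} obtained via the isometry $t\mapsto-t$, $a\mapsto-a$ — yields
\begin{equation*}
\int_{-\infty}^{0}\!\int_{\Sigma_\tau}\Big(|\partial_{r^*}\tilde\psi_i|^2+|\tilde\psi_i|^2+\big(1-\eta_{W_i'}(r)\big)\,\mathbf J^N_\mu[\tilde\psi_i]n^\mu_{\Sigma_\tau}\Big)\,d\tau\ \le\ B(\epsilon)\int_{\Sigma_0}\mathbf J^N_\mu[\psi]n^\mu_{\Sigma_0},
\end{equation*}
where $\eta_{W_i'}$ is a smooth cutoff supported in $W_i'$; the non-degenerate control of $\mathbf J^N_\mu[\tilde\psi_i]n^\mu$ away from $W_i'$ uses that $(1-r^{-1}r_{\rm trap})^2\ge b(\epsilon)$ there for all frequencies in $\mathcal C_i$ (equivalently, one may run this step through the microlocally precise statement $(\ref{notCrudeILED})$, whose degeneration for $\tilde\psi_i$ is confined to $W_i$). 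For $i=0$ the factor $1-\eta_{W_0'}$ may be dropped and the bracket is the full energy density, so $\int_{-\infty}^0\int_{\Sigma_\tau}\mathbf J^N_\mu[\tilde\psi_0]n^\mu_{\Sigma_\tau}\,d\tau$ is finite and the standard dyadic pigeonhole (as in the derivation of $(\ref{dyadicEnergyDecay})$) produces the sequence $\{\tau^{(0)}_n\}$.

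For $i\ge1$ it remains to recover, at a suitable sequence of slices, the directions $T\tilde\psi_i$ and $\nabb\tilde\psi_i$ inside $W_i'$ — precisely the ones the previous estimate does not control in the spacetime sense. Here I would use a $\varphi_\tau$-invariant vector field $V_i=T+\alpha_i(r)\Phi$ which is timelike on a neighbourhood of $\overline{W_i'}$ (possible by Lemma~\ref{timelikeVector} and property~(c), since $\overline{W_i'}$ is a compact subset of $\{r>r_+\}$) with $\alpha_i$ \emph{constant} on that neighbourhood, so that $\mathbf K^{V_i}[\tilde\psi_i]=2\tfrac{\Delta}{\rho^2}\alpha_i'\,{\rm Re}(Z\tilde\psi_i\,\overline{\Phi\tilde\psi_i})$ is supported in $\{r\notin W_i'\}$, where the estimate of the previous paragraph is non-degenerate, and $\mathcal E^{V_i}[\tilde\psi_i]=-\tilde F_i\,V_i\tilde\psi_i$ is supported in $\mathrm{supp}\,\nabla\chi_{[A_0,A_1]}\subset\{r\notin W_i'\}$. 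Since $\tilde\psi_i$ vanishes near $\mathcal H^+$ (and near $\mathcal I^+$), the divergence identity $(\ref{ingeneralform2})$ for $\mathbf J^{V_i}[\tilde\psi_i]$ on a slab between two slices carries \emph{no} horizon term, and its bulk errors are bounded by the finite spacetime integrals above and of $|\tilde F_i|^2$; hence the slab-to-slab increments of $\tau\mapsto\int_{\Sigma_\tau}\mathbf J^{V_i}_\mu[\tilde\psi_i]n^\mu_{\Sigma_\tau}$ are summable, this flux converges as $\tau\to-\infty$, and combining with the vanishing of the $\partial_{r^*}$, zeroth-order and off-$W_i'$ contributions along a dyadic sequence one concludes both that the limit is $0$ and the quantitative rate $\int_{\Sigma_{\tau^{(i)}_n}}\mathbf J^N_\mu[\tilde\psi_i]n^\mu_{\Sigma_{\tau^{(i)}_n}}\le C_i/|\tau^{(i)}_n|$, by an argument in the spirit of the one ending the proof of Lemma~\ref{prepLemm} (with energy in the role of the $\mathbf J^V$-flux). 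The main obstacle is exactly this last step — transferring the degenerate spacetime bound into a decaying bound for the \emph{non-degenerate} flux across the trapped window — and the mechanism that makes it work is the disjointness of trapping and the horizon (so that no superradiant horizon flux spoils the $V_i$-estimate) together with the fact that, for the fixed-$\mathcal C_i$ piece $\tilde\psi_i$, $V_i$ can be made timelike exactly where the integrated decay degenerates while its bulk error is pushed into the region where that estimate is non-degenerate.
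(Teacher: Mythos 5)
Your approach is substantially more elaborate than what is needed, and it misses the simple observation that makes the proposition almost trivial. Proposition~\ref{energyVanish} is a \emph{qualitative} statement---it asserts the existence of a decaying dyadic sequence, with a constant $C_i$ that is permitted to depend on $\tilde\psi_i$ and need not be quantitatively bounded in terms of the data on $\Sigma_0$. The paper's proof is two lines: since $\tilde\psi_i$ is sufficiently integrable (established by Plancherel in the text immediately preceding the proposition) and compactly supported in $r\in[A_0,A_1]$, Definition~\ref{sufficient} with $j=1$ and $A=A_1$ already gives
\[
\sup_{r\in[r_+,A_1]}\int_{-\infty}^{\infty}\int_{\mathbb{S}^2}\bigl(|T\tilde\psi_i|^2+|Z^*\tilde\psi_i|^2+|\nabb\tilde\psi_i|^2\bigr)\sin\theta\,dt\,d\theta\,d\phi<\infty;
\]
integrating over the compact $r$-range of the support, using the volume form remarks of Section~\ref{usefulcomps}, yields $\int_{-\infty}^{\infty}\int_{\Sigma_\tau}\mathbf J^N_\mu[\tilde\psi_i]n^\mu_{\Sigma_\tau}\,d\tau<\infty$, and the standard pigeonhole argument finishes. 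No multiplier estimate---in particular no treatment of trapping and no $V_i$ current---is needed at this stage. You have essentially conflated this proposition with the \emph{next} one (Proposition~\ref{boundedEnergyInterior}), which is where the $V_i$ vector fields and the phase-space confinement of trapping to $W_i$ actually earn their keep.

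Beyond being over-engineered, your argument has a genuine gap at its final step. You establish (i) a degenerate integrated-decay bound with a $(1-\eta_{W_i'})$ cutoff, hence dyadic decay of the off-$W_i'$ contribution to the flux, and (ii) convergence of $\tau\mapsto\int_{\Sigma_\tau}\mathbf J^{V_i}_\mu[\tilde\psi_i]n^\mu_{\Sigma_\tau}$ as $\tau\to-\infty$, because the slab increments are absolutely summable. But (i) and (ii) together only show that the $W_i'$-localised flux converges along your dyadic subsequence; they do not show that its limit is zero, and they certainly do not furnish the claimed rate $C_i/|\tau_n^{(i)}|$. Convergence of a nonnegative flux is not decay, and the pigeonhole mechanism that would give a rate requires precisely the finiteness of $\int_{-\infty}^{0}\int_{\Sigma_\tau}\mathbf J^{V_i}_\mu[\tilde\psi_i]n^\mu_{\Sigma_\tau}\,d\tau$, which you never establish: on $W_i'$ the $V_i$-flux is comparable to the full $\mathbf J^N$-density, exactly where your spacetime estimate degenerates. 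The way out is the observation above---the full non-degenerate spacetime integral is finite \emph{for free} from the sufficiently-integrable hypothesis and compact $r$-support, before any current is applied.
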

\begin{proof}Since each $\tilde{\psi}_i$ is sufficiently integrable and compactly supported in $r$, we have
\[\int_{-\infty}^{\infty}\int_{\Sigma_{\tau}}\mathbf{J}^N_{\mu}\left[\tilde\psi_i\right]n^{\mu}_{\Sigma_{\tau}} < \infty.\]
The proof then concludes with a standard pigeonhole argument.
\end{proof}

\subsubsection{Boundedness}
Finally, we will establish boundedness of the energy of $\tilde\psi$.
\begin{proposition}\label{boundedEnergyInterior}
Under the assumptions of Proposition~\ref{sufficientIntBound}
and with the above notation we have
\[\int_{\Sigma_{\tau}}\mathbf{J}^N_{\mu}\left[\tilde\psi\right]n^{\mu}_{\Sigma_{\tau}} \leq B\int_{\Sigma_0}\mathbf{J}^N_{\mu}\left[\psi\right]n^{\mu}_{\Sigma_0}\quad \forall\ \tau \in (-\infty,\infty).\]
\end{proposition}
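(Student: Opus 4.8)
The plan is to prove Proposition~\ref{boundedEnergyInterior} by decomposing $\tilde\psi = \sum_i \tilde\psi_i$ as in Definition~\ref{decomp} and treating each piece with an energy estimate tailored to the value of $r$ at which that piece is trapped. Summing over the (finitely many, once $\epsilon$ is fixed) pieces then yields the bound. Concretely, for each $i \geq 1$ the frequencies in $\mathcal{C}_i$ have $r_{\rm trap}$ confined to an interval of length $\epsilon$ around a point $r_i \doteq 3M - s^- + (i-\tfrac12)\epsilon$. By fact (c) of Section~\ref{HOEhere} (equivalently, the statement in Section~\ref{Killingfieldsec} that $\mathrm{span}(T,\Phi)$ is timelike off $\mathcal{H}^+$), there is a constant $\lambda_i \in \mathbb{R}$ such that $T + \lambda_i\Phi$ is timelike on a full neighbourhood $[r_i - 2\epsilon, r_i + 2\epsilon]$ of $r_i$ (for $\epsilon$ small); for $i = 0$ we use that $\tilde\psi_0$ is untrapped, so a globally timelike $\phi_\tau$-invariant vector field coinciding with $N$ near $\mathcal{H}^+$ suffices, exactly as in the $|a| \ll M$ argument recalled in Section~\ref{boundednessintro}.

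The key steps, in order, are as follows. First, I would apply the energy identity $(\ref{ingeneralform})$ for the vector field $V_i \doteq T + \lambda_i\Phi$ (modified near the horizon by a red-shift correction $\epsilon' Y$, as in the construction of $N$ in Proposition~\ref{specialises..}, so that $V_i$ is everywhere timelike and $\mathbf{K}^{V_i}$ has a favourable sign at $\mathcal{H}^+$) to $\tilde\psi_i$ between $\Sigma_{\tau_n^{(i)}}$ and $\Sigma_\tau$, where $\tau_n^{(i)} \to -\infty$ is the dyadic sequence from Proposition~\ref{energyVanish}. Since $T$ and $\Phi$ are Killing, $\mathbf{K}^{V_i}[\tilde\psi_i]$ is supported where $\lambda_i$ is cut off — i.e.\ away from $[r_i - 2\epsilon, r_i + 2\epsilon]$ — and is controlled pointwise by $\mathbf{J}^N_\mu[\tilde\psi_i]n^\mu$ there; crucially this support avoids $r_{\rm trap}$ for frequencies in $\mathcal{C}_i$, so the bulk error term is exactly the kind that the microlocally precise estimate $(\ref{notCrudeILED})$ controls after projecting with $\mathcal{P}_{\rm trap}$. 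Second, I would feed the inhomogeneity $\tilde F_i$ into the energy identity: the term $\mathcal{E}^{V_i}[\tilde\psi_i]$ integrates to something bounded by $\int |\tilde F_i|^2$ plus a small multiple of the spacetime energy, and by Plancherel $\sum_i \int|\tilde F_i|^2 = \int |\tilde F|^2 \leq B\int_{\Sigma_0}\mathbf{J}^N_\mu[\psi]n^\mu$ by $(\ref{tildeFEst})$. Third, I would use $(\ref{notCrudeILED})$: on $\mathrm{supp}(\nabla\lambda_i)$ we have $r$ bounded away from $r_{\rm trap}$ uniformly over $\mathcal{C}_i$, hence $|\zeta - (1-\zeta)r^{-1}r_{\rm trap}| \gtrsim 1$ there, so $\mathbf{J}^N_\mu[\tilde\psi_i]n^\mu$ restricted to that region is bounded by $\mathbf{J}^N_\mu[\mathcal{P}_{\rm trap}\tilde\psi]n^\mu$ integrated in spacetime (after summing in $i$), which $(\ref{notCrudeILED})$ bounds by $B\int_{\Sigma_0}\mathbf{J}^N_\mu[\psi]n^\mu$. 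Fourth, I would absorb the small spacetime energy term arising from $\mathcal{E}^{V_i}$ by choosing $\epsilon'$ (the red-shift coefficient) and the $L^2(\tilde F_i)$-Cauchy--Schwarz weight appropriately, and let $n \to \infty$ using Proposition~\ref{energyVanish} to kill the boundary term at $\Sigma_{\tau_n^{(i)}}$. Summing over $i = 0, \ldots, \lceil \epsilon^{-1}(s^+ + s^-)\rceil$ and using $\sum_i \int_{\Sigma_\tau}\mathbf{J}^N_\mu[\tilde\psi_i]n^\mu \sim \int_{\Sigma_\tau}\mathbf{J}^N_\mu[\tilde\psi]n^\mu$ (orthogonality of the phase-space projections plus boundedness of the number of pieces) gives the claim.

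The main obstacle I anticipate is bookkeeping the mismatch between the phase-space cutoffs $1_{\mathcal{C}_i}$ and the physical-space cutoff $\nabla\lambda_i$ in the error term $\mathbf{K}^{V_i}[\tilde\psi_i]$: one needs that, even though $\tilde\psi_i$ is not supported in any fixed $r$-interval, the part of its energy living where $\mathbf{K}^{V_i}$ is nonzero is genuinely controlled by the $\mathcal{P}_{\rm trap}$-weighted integrated decay $(\ref{notCrudeILED})$. This requires checking the pointwise inequality relating $\mathbf{J}^N_\mu[\tilde\psi_i]n^\mu$ on $\mathrm{supp}(\nabla\lambda_i)$ to the full-solution quantity $\mathbf{J}^N_\mu[\mathcal{P}_{\rm trap}\tilde\psi]n^\mu$ via Plancherel, and it is here that the uniform separation $r_i \pm 2\epsilon$ from $[3M - s^-, 3M + s^+]$ being covered by finitely many intervals is used in an essential way — it is exactly the disjointness in phase space of trapping from the region where $T + \lambda_i\Phi$ fails to be timelike. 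A secondary (but routine) point is verifying that $\tilde\psi_i$ inherits the "sufficiently integrable" and "outgoing" properties needed to legitimately invoke Carter's separation and Theorem~\ref{phaseSpaceILED}, which follows from Plancherel and the compact $r$-support of $\tilde\psi$ as already noted after Definition~\ref{decomp}. Once Proposition~\ref{boundedEnergyInterior} is in hand, the remaining two regions $r \in [r_+, A_0 + \delta]$ and $r \in [A_1 - \delta, \infty)$ are handled by the red-shift estimate of Proposition~\ref{ftrs} and a $\mathbf{J}^T$ energy estimate outside the ergoregion respectively, completing the proof of Proposition~\ref{sufficientIntBound}.
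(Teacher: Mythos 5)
Your proposal follows the paper's strategy in all its essentials: the decomposition $\tilde\psi = \sum_i \tilde\psi_i$ via phase-space cutoffs $1_{\mathcal{C}_i}$, a per-piece $\varphi_\tau$-invariant multiplier $V_i$ that is Killing near the trapped radius associated to $\mathcal{C}_i$, the $\mathcal{P}_{\rm trap}$-weighted integrated decay $(\ref{notCrudeILED})$ to control the bulk term supported where $V_i$ is non-Killing, and Proposition~\ref{energyVanish} to dispose of the data term as $\tau_n^{(i)}\to-\infty$. The step you flag as the ``main obstacle'' (the phase-space/physical-space mismatch) is indeed the subtle point, and your proposed resolution by Plancherel and the uniform lower bound on the $\mathcal{P}_{\rm trap}$ weight away from $r_{\rm trap}$ is precisely what the paper does.

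There is, however, a genuine gap in your construction of $V_i$. You take $V_i = T + \lambda_i\Phi$ with $\lambda_i$ a \emph{constant} (then cut off), so that away from $[r_i - 2\epsilon, r_i + 2\epsilon]$ and away from the red-shift region near $\mathcal{H}^+$, $V_i$ reduces to $T$. But $\tilde\psi_i$ is not localised in $r$ near $r_i$ --- the phase-space projection $1_{\mathcal{C}_i}$ acts only on $(\omega,m,\Lambda)$ and preserves the full $r$-support $[A_0,A_1]$ of $\tilde\psi$ --- so for the boundary flux on $\Sigma_\tau$ to be coercive you need $V_i$ causal on \emph{all} of $[A_0,A_1]$. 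For $a$ close to $M$ the ergoregion extends well beyond any small red-shift neighbourhood of $r_+$ (up to $r=2M$ on the equator), and in that intermediate region $T$ is spacelike; no single constant $\lambda_i$ can make $T + \lambda_i\Phi$ timelike simultaneously near the ergosphere, near $r_i$, and at $r = A_1$ (indeed the required values of $\lambda_i$ at $r_+$, at $r_i \approx 3M$, and at large $r$ are mutually incompatible when $a \sim M$). The paper resolves this exactly via Lemma~\ref{timelikeVector}: one takes $V_i = T + \alpha_i(r)\Phi$ with $\alpha_i(r)$ tracking the function $\frac{2Mar}{(r^2+a^2)^2}$ (which Lemma~\ref{timelikeVector} shows yields a globally causal field, timelike off $\mathcal{H}^+$), frozen to its constant value only in the small interval $[3M-s^-+(i-1)\epsilon, 3M-s^-+i\epsilon)$. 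Your proposal never invokes Lemma~\ref{timelikeVector}, appealing only to the abstract fact that $\mathrm{span}(T,\Phi)$ is timelike, and that local statement does not by itself produce the needed globally timelike $\varphi_\tau$-invariant field with the Killing property on the prescribed interval. A secondary, minor point: the red-shift modification $\epsilon' Y$ is superfluous here, since $\tilde\psi_i$ is supported in $r \in [A_0,A_1]$ with $A_0 > r_+$, so there is no horizon flux to worry about.
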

\begin{proof}
Since $\tilde\psi = \sum_{i=0}^{\left\lceil \epsilon^{-1}\left(s_++s_-\right)\right\rceil}\tilde{\psi}_i$, it suffices to prove the proposition with $\tilde\psi$ replaced by $\tilde{\psi}_i$.

In Proposition~\ref{timelikeVector} we showed that the vector field $T + \frac{2Mar}{\left(r^2+a^2\right)^2}\Phi$ is timelike in $\mathcal{R}\setminus \mathcal{H}^+$. Given this and taking $\epsilon$ from Definition~\ref{decomp} sufficiently small (and then fixing $\epsilon$), it is easy to construct a $\varphi_\tau$-invariant
timelike vector field $V_i$ on $\mathcal{R}$
which is Killing in the region
\[
r \in\left[3M - s^- + \left(i-1\right)\epsilon,3M-s^- + i\epsilon\right).
\]
Now we apply the energy identity associated to $V_i$ in between the hypersurfaces $\Sigma_{\tau}$ and $\Sigma_{\tau^{(i)}_n}$. Since $\mathcal{P}_{trap}\tilde\psi_i=\tilde\psi_i$ \underline{where $V_i$ is non-Killing}, we obtain
\begin{align}
\int_{\Sigma_{\tau}}\mathbf{J}^{V_i}_{\mu}\left[\tilde\psi_i\right]n^{\mu}_{\Sigma_{\tau}} &\leq B\int_{\tau^{(i)}_n}^{\tau}\int_{\Sigma_{s}\cap [3M-s^- + \left(i-1\right)\epsilon,3M-s^- + i\epsilon]^c}\mathbf{J}^N_{\mu}[\tilde\psi_i]n^{\mu}_{\Sigma_s} + \int_{\Sigma_{\tau^{(i)}_n}}\mathbf{J}^{V_i}_{\mu}\left[\tilde\psi_i\right]n^{\mu}_{\Sigma_{\tau}}
\\ \nonumber &\leq B\int_{-\infty}^{\infty}\int_{\Sigma_{s}\cap [3M-s^- + \left(i-1\right)\epsilon,3M-s^- + i\epsilon]^c}\mathbf{J}^N_{\mu}[\mathcal{P}_{\rm trap}\tilde\psi_i]n^{\mu}_{\Sigma_s} + \frac{BC_i}{\tau_n^{(i)}}
\\ \nonumber &\leq B\int_{-\infty}^{\infty}\int_{\Sigma_{s}}\mathbf{J}^N_{\mu}[\mathcal{P}_{trap}\tilde\psi]n^{\mu}_{\Sigma_s} + \frac{BC_i}{\tau_n^{(i)}}
\\ \nonumber &\leq B\int_{\Sigma_0}\mathbf{J}^N_{\mu}\left[\psi\right]n^{\mu}_{\Sigma_0} + \frac{BC_i}{\tau_n^{(i)}},
\end{align}
where we have used $(\ref{notCrudeILED})$ as well as Plancherel.
It remains to take $n \to \infty$ and to observe (the trivial fact) that,
in view of the support of $\tilde{\psi}_i$ and the $\phi_\tau$-invariance of $V_i$ we have $\mathbf{J}^{V_i}_{\mu}\left[\tilde\psi_i\right]n^{\mu}_{\Sigma_{\tau}} \sim \mathbf{J}^{N}_{\mu}\left[\tilde\psi_i\right]n^{\mu}_{\Sigma_{\tau}}$.
\end{proof}

\subsection{Boundedness of $\int_{\Sigma_{\tau}\cap [r_+,A_0+\delta]}\mathbf{J}^N_{\mu}[\psi]n^{\mu}_{\Sigma_{\tau}}$ and $\int_{\Sigma_{\tau}\cap [A_1-\delta,\infty)}\mathbf{J}^N_{\mu}[\psi]n^{\mu}_{\Sigma_{\tau}}$}
The following is a trivial consequence of the red-shift estimate (Proposition~\ref{ftrs})
and Proposition~\ref{closedILED}:
\begin{align}\label{boundedEnergyHorizon}
\int_{\Sigma_{\tau}\cap [r_+,A_0+\delta)}\mathbf{J}^N_{\mu}\left[\psi\right]n^{\mu}_{\Sigma_{\tau}} &\leq \int_{\Sigma_0}\mathbf{J}^N_{\mu}[\psi]n^{\mu}_{\Sigma_0} + B\int_0^{\tau}\int_{\Sigma_s\cap [A_0+\delta,A_0+2\delta]}\mathbf{J}^N_{\mu}[\psi]n^{\mu}_{\Sigma_s}
\\ \nonumber &\leq B\int_{\Sigma_0}\mathbf{J}^N_{\mu}[\psi]n^{\mu}_{\Sigma_0}.
\end{align}

Similarly, we may consider the energy estimate associated to $\chi_{A_1-\delta}T$ where $\chi_{A_1-\delta}$ is a cutoff which is identically $1$ on $[A_1-\delta,\infty)$ and identically $0$ on $[r_+,A_1-2\delta]$. We obtain
\begin{align}\label{boundedEnergyInfinity}
\int_{\Sigma_{\tau}\cap [A_1-\delta,\infty)}\mathbf{J}^T_{\mu}\left[\psi\right]n^{\mu}_{\Sigma_{\tau}} &\leq \int_{\Sigma_0}\mathbf{J}^N_{\mu}[\psi]n^{\mu}_{\Sigma_0} + B\int_0^{\tau}\int_{\Sigma_s\cap [A_1-2\delta,A_1-\delta]}\mathbf{J}^N_{\mu}[\psi]n^{\mu}_{\Sigma_s}
\\ \nonumber &\leq B\int_{\Sigma_0}\mathbf{J}^N_{\mu}[\psi]n^{\mu}_{\Sigma_0}.
\end{align}

\subsection{Putting everything together and the higher order statement}
Combining Proposition~\ref{boundedEnergyInterior},~(\ref{boundedEnergyHorizon}) and~(\ref{boundedEnergyInfinity}) concludes the proof of Proposition~\ref{sufficientIntBound}.
In view of Section~\ref{theLogic},
this completes the proof of Theorem~\ref{theResult}.

For Theorem~\ref{h.o.s.}, we are left only with proving
the higher order version of Proposition~\ref{sufficientIntBound}:
\begin{proposition}
With the notation of Proposition~\ref{sufficientIntBound},
for every $j \geq 1$
\begin{equation}
\int_{\Sigma_\tau } \sum_{0\le i \le j-1}{\bf J}^N_\mu[N^{i}\psi]n^\mu_{\Sigma_\tau}
\le  B(j)\int_{\Sigma_0} {\sum_{0\le i \le j-1}
{\bf J}^N_\mu[N^{i}\psi]n^\mu_{\Sigma_0}}, \qquad
\forall\tau\ge 0.
\end{equation}
\end{proposition}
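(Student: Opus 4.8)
The plan is to argue by induction on $j$, the base case $j=1$ being exactly Proposition~\ref{sufficientIntBound}. At this point in the paper the integrated local energy decay~(\ref{protasn1b}) of Theorem~\ref{theResult} and its microlocally precise refinement Proposition~\ref{preciseILED} are available, Proposition~\ref{allIntegrable} guarantees that \emph{every} solution as in the reduction of Section~\ref{WPosed} is future-integrable, and Proposition~\ref{h.o.s.suff}, combined with Proposition~\ref{allIntegrable}, now holds unconditionally. The inductive step follows the commutation scheme of Section~\ref{HOEhere}: commute~(\ref{WAVE}) with the Killing field $T$, with $\chi\Phi$ for a cutoff $\chi$ equal to $1$ on $[r_+,R_1]$ and $0$ outside $[r_+,R_1+1]$ for a large $R_1$, and with the red-shift vector field $Y$, then close by elliptic estimates. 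It suffices to treat $j=2$; the passage to general $j$ is a routine induction.

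First, the $T$-commuted fluxes. Since $\Box_{g_{a,M}}(T^i\psi)=0$ and $T^i\psi$ again arises from smooth compactly supported data on $\Sigma_0$, Proposition~\ref{sufficientIntBound} applied to $T^i\psi$ gives $\int_{\Sigma_\tau}\mathbf{J}^N_\mu[T^i\psi]n^\mu_{\Sigma_\tau}\le B\int_{\Sigma_0}\mathbf{J}^N_\mu[T^i\psi]n^\mu_{\Sigma_0}$, and the elliptic estimate of Lemma~\ref{Nelliptic} bounds the right-hand side by $B\int_{\Sigma_0}\sum_{0\le i'\le j-1}\mathbf{J}^N_\mu[N^{i'}\psi]n^\mu_{\Sigma_0}$. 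Next, the $\chi\Phi$-commuted flux: choosing $R_1>3M+s_+$, the inhomogeneity $F_\Phi\doteq(\Box_g\chi)\Phi\psi+2\nabla^\mu\chi\nabla_\mu\Phi\psi$ is supported in $[R_1,R_1+1]$, hence outside both the trapped set and the ergoregion. By the elliptic estimate outside the ergoregion of Lemma~\ref{ellipticOffHorizonLargeDecay}, together with the integrated local energy decay for $\psi$ and $T\psi$ — which does not degenerate on $[R_1,R_1+1]$ — one obtains $\int_0^\infty\int_{\Sigma_s}r^{1+\delta}|F_\Phi|^2\le B\int_{\Sigma_0}\sum_{0\le i\le 1}\mathbf{J}^N_\mu[N^i\psi]n^\mu_{\Sigma_0}$. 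With this in hand, boundedness of $\int_{\Sigma_\tau}\mathbf{J}^N_\mu[\chi\Phi\psi]n^\mu_{\Sigma_\tau}$ follows by running the argument of Proposition~\ref{sufficientIntBound} with $\chi\Phi\psi$ in place of $\psi$: extend $\chi\Phi\psi$ to $\mathcal{R}$, decompose it in phase space into finitely many pieces each trapped near a single $r$-value, apply to each the energy identity of a $\varphi_\tau$-invariant timelike vector field that is Killing in a neighbourhood of the corresponding $r$-value, and control the resulting bulk terms by the inhomogeneous analogue of the microlocally precise integrated local energy decay of Proposition~\ref{preciseILED} (obtained by revisiting Proposition~\ref{closedILEDinhomo}), using the bound on $F_\Phi$ just derived. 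The compositions $T(\chi\Phi\psi)$, etc., are handled identically.

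It remains to bound $\int_{\Sigma_\tau}\mathbf{J}^N_\mu[Y\psi]n^\mu_{\Sigma_\tau}$. Here one commutes~(\ref{WAVE}) with $Y$ and invokes Proposition~\ref{commuprop}, so that $\Box_g(Y\psi)=\kappa_1 Y^2\psi+\sum_{|{\bf m}|\le 2,\,m_4\le 1}c_{\bf m}E_1^{m_1}E_2^{m_2}L^{m_3}Y^{m_4}\psi$ with $\kappa_1>0$; applying the red-shift energy estimate of Proposition~\ref{ftrs} to $Y\psi$ between $\Sigma_0$ and $\Sigma_\tau$, the $\kappa_1 Y^2\psi$ contribution has the favourable sign furnished by the red-shift and is absorbed, while the remaining terms — of total order $\le 2$ and containing at most one $Y$-derivative — are estimated, via the elliptic inequalities of Lemmas~\ref{thetaNearHorizon},~\ref{ellipticEstimatesLocal} and~\ref{ellipticOffHorizonLocal}, by a small multiple of $\mathbf{J}^N_\mu[Y\psi]n^\mu$ near the horizon (absorbed) plus $\mathbf{J}^N_\mu[T\psi]n^\mu$, $\mathbf{J}^N_\mu[\chi\Phi\psi]n^\mu$ and $\mathbf{J}^N_\mu[\psi]n^\mu$; the spacetime integrals of these last three are controlled by the unconditional higher-order integrated local energy decay of Proposition~\ref{h.o.s.suff}, and their $\Sigma_\tau$-fluxes by the estimates of the previous paragraph. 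A Gr\"onwall argument in $\tau$, exactly as at the end of the proof of Lemma~\ref{prepLemm} (cf.\ Lemma~\ref{higherPrepLemm}), then yields $\int_{\Sigma_\tau}\mathbf{J}^N_\mu[Y\psi]n^\mu_{\Sigma_\tau}\le B\int_{\Sigma_0}\sum_{0\le i\le 1}\mathbf{J}^N_\mu[N^i\psi]n^\mu_{\Sigma_0}$. Finally, the elliptic estimates of Lemmas~\ref{Nelliptic},~\ref{ellipticEstimates} and~\ref{horizonSphere} combine the $T$-, $\chi\Phi$- and $Y$-commuted fluxes into $\int_{\Sigma_\tau}\sum_{0\le i\le 1}\mathbf{J}^N_\mu[N^i\psi]n^\mu_{\Sigma_\tau}$, completing the case $j=2$ and, by induction, the proof.

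The step I expect to be the main obstacle is the bound on the $\chi\Phi$-commuted flux: the integrated local energy decay for $\chi\Phi\psi$ unavoidably degenerates at the trapped set, which for $|a|$ near $M$ need not be disjoint from the ergoregion in physical space, so a naive energy estimate for $\chi\Phi\psi$ cannot be closed, and the phase-space decomposition of Proposition~\ref{sufficientIntBound} — into pieces trapped near distinct $r$-values, each handled by a vector field that is Killing there — is genuinely required. The inhomogeneity $F_\Phi$ is, however, harmless, since it can be localised away from the trapped set and the ergoregion and is therefore controlled by the (non-degenerate, away from trapping) integrated local energy decay already established for $\psi$.
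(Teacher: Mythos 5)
Your proposal follows the same commutation scheme as the paper's (admittedly brief) proof: commute with $T$, with $\chi\Phi$, and with $Y$, then close by elliptic estimates, exactly as the paper does, and you correctly identify the main content — that the $\chi\Phi$-commuted flux must itself be treated by the phase-space decomposition of Proposition~\ref{sufficientIntBound} applied to the inhomogeneous problem for $\chi\Phi\psi$, since for $|a|$ near $M$ trapping can overlap the ergoregion in physical space. The only quibble is that the Gr\"onwall step you invoke for the $Y$-commuted flux is unnecessary here: unlike in Lemma~\ref{higherPrepLemm}, the higher-order integrated local energy decay of Proposition~\ref{h.o.s.suff} is now available unconditionally, so the spacetime error terms from $\mathcal{E}^N[Y\psi]$ are controlled by data directly, and the red-shift estimate of Proposition~\ref{ftrs} yields the flux bound in one step, as in the proof of Proposition~\ref{h.o.s.suff}; the Gr\"onwall argument still gives the right answer, but it is redundant.
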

\begin{proof}We will be brief, since we have already seen multiple times how to upgrade lower order statements to higher order ones. As usual, we will only consider the case $j=2$ as the general case will follow by an easy induction argument.

First we commute $(\ref{WAVE})$ with $T$ and apply Proposition~\ref{sufficientIntBound}. We obtain
\begin{equation}
\int_{\Sigma_\tau }{\bf J}^N_\mu[T\psi]n^\mu_{\Sigma_\tau}
\le  B(j)\int_{\Sigma_0} {
{\bf J}^N_\mu[T\psi]n^\mu_{\Sigma_0}}, \qquad
\forall\tau\ge 0.
\end{equation}

Next, letting $\chi$ be a cutoff which vanishes for large $r$, we commute with $\chi\Phi$. Using the integrated energy decay to the handle resulting error terms, we obtain
\begin{equation}
\int_{\Sigma_\tau }{\bf J}^N_\mu[\chi\Phi\psi]n^\mu_{\Sigma_\tau}
\le  B(j)\int_{\Sigma_0}\left( {
{\bf J}^N_\mu[N\psi]n^\mu_{\Sigma_0} + {\bf J}^N_{\mu}[\psi]n^{\mu}_{\Sigma_0}}\right), \qquad
\forall\tau\ge 0.
\end{equation}

Finally, we commute with the red-shift commutation vector field $Y$ and apply the argument from the proofs of Lemma~\ref{higherPrepLemm} and Proposition~\ref{h.o.s.suff} to establish
\begin{equation}
\int_{\Sigma_\tau }{\bf J}^N_\mu[Y\psi]n^\mu_{\Sigma_\tau}
\le  B(j)\int_{\Sigma_0}\left( {
{\bf J}^N_\mu[N\psi]n^\mu_{\Sigma_0} + {\bf J}^N_{\mu}[\psi]n^{\mu}_{\Sigma_0}}\right), \qquad
\forall\tau\ge 0.
\end{equation}
The proof concludes via standard elliptic estimates (see the proofs of Lemma~\ref{higherPrepLemm} and Proposition~\ref{h.o.s.suff}).
\end{proof}

In view of Section~\ref{theLogic}, this obtains the remaining
statement $(\ref{bndts1})$  of Theorem~\ref{h.o.s.}.
The proof of both main theorems is thus complete.

\end{document}